\newcommand*\xor{\oplus}
\DeclareMathOperator{\tensor}{\otimes}     
\DeclareMathOperator{\bigO}{\mathcal{O}}   
\DeclareMathOperator{\bigOmega}{\Omega}    
\newcommand{\MPP}{\mbox{Matched-Path Principle\/}}
\newcommand{\CIP}{\mbox{Contextual-Interpretation Principle\/}}
\newcommand{\SIP}{\mbox{Sequential-Invocation Principle\/}}
\newcommand{\GIP}{\mbox{Growth in Paths\/}}
\newcommand{\Matched}{{\it matched\/}}
\newcommand{\NumConnections}{{\rm NumConnections\/}}
\newtheorem{Construction}{\indent\,Construction}
\def\rdots{{\rotatebox[origin=l]{29}{$\scriptscriptstyle\ldots\mathstrut$}}}
\def\thmhead@plain#1#2#3{%
  \thmname{#1}\thmnumber{\@ifnotempty{#1}{ }\@upn{#2}}%
  \thmnote{ {\the\thm@notefont#3}}}
\let\thmhead\thmhead@plain
\theoremstyle{definition}
\newtheorem*{theorem*}{Theorem}
\newenvironment{Constr}{\par %
                \begin{Construction}}{$\QED$\end{Construction}}
\newcommand{\Unfold}{{\it Unfold\/}}
\newcommand{\Fold}{{\it Fold\/}}
\newcounter{LineNumber}
\newenvironment{BaseCase}{\par\noindent %
                {\em Base case\/}:}{}
\newenvironment{InductionStep}{\par\noindent %
                {\em Induction step\/}:}{}
\newcommand{\ITE}{{\rm ITE\/}}
\newcommand{\BoolOp}[6]{{%
  \begin{array}{cccc}%
       &   & \multicolumn{2}{c}{#2}%
   \\%
       &   & F & T%
   \\%
     #1 & \begin{array}{@{\hspace{0in}}c@{\hspace{0in}}}%
            F \\ T%
          \end{array}%
        & \multicolumn{2}{@{\hspace{0in}}c@{\hspace{0in}}}{%
            \begin{array}{|cc|}%
              \hline%
              #3 & #4 \\%
              #5 & #6 \\%
              \hline%
            \end{array}%
          }%
  \end{array}}}
\newcommand{\sem}[1]{\llbracket {#1} \rrbracket}
\newcommand{\ID}{{\textit{ID}}}
\newcommand{\ADD}{{\textit{ADD}}}
\newcommand{\EQ}{{\textit{EQ}}}
\newcommand{\XOR}{{\textit{XOR}}}
\newcommand{\Op}{{\textit{Op}}}
\newcommand{\QFT}{{\textit{QFT}}}
\newcommand{\GHZ}{{\textit{GHZ}}}
\newcommand{\CNOT}{{\textit{CNOT}}}
\newcommand{\ev}{{\textit{ev}}}
\newcommand{\EV}{{\textit{EV}}}
\newcommand{\bp}{{\textit{bp}}}
\newcommand{\BP}{{\textit{BP}}}
\newcommand{\ZeroBP}{{0_{\BP}}}
\newcommand{\rt}{{\textit{rt}}}
\newcommand{\red}{{\textit{red}}}
\newcommand{\vt}{{\textit{vt}}}
\newcommand{\Cost}{{\textit{Cost}}}
\newcommand{\PairProduct}{{\textit{PairProduct}}}
\newcommand{\HadamardFamily}{{\mathcal{H}}}
\newcommand{\Omit}[1]{}
\newcommand{\twr}[1]{{\color{magenta}{T: #1}}}
\newcommand{\twrchanged}[1]{{\color{cyan}{#1}}}
\newcommand{\changed}[1]{{#1}}
\definecolor{lightgray}{rgb}{0.55,0.52,0.54}
\newcommand{\Supplemental}[2]{#1}      
\newcommand{\OnlySupplemental}[1]{}    
\date{}
\title[CFLOBDDs]{CFLOBDDs: \\ Context-Free-Language Ordered Binary Decision Diagrams}
\author{Meghana Aparna Sistla}
\affiliation{%
\institution{The University of Texas at Austin}
\country{USA}
}
\email{mesistla@utexas.edu}
\author{Swarat Chaudhuri}
\affiliation{%
\institution{The University of Texas at Austin}
\country{USA}
}
\email{swarat@cs.utexas.edu}
\author{Thomas Reps}
\affiliation{%
\institution{University of Wisconsin-Madison}
\country{USA}
}
\email{reps@cs.wisc.edu}
\begin{abstract}

This paper presents a new compressed representation of Boolean functions, called CFLOBDDs (for Context-Free-Language Ordered Binary Decision Diagrams).
They are essentially a plug-compatible alternative to BDDs (Binary Decision Diagrams), and hence useful for representing certain classes of functions, matrices, graphs, relations, etc.\ in a highly compressed fashion.
CFLOBDDs share many of the good properties of BDDs, but---in the best case---the CFLOBDD for a Boolean function can be \emph{exponentially smaller than any BDD for that function}.
Compared with the size of the decision tree for a function, a CFLOBDD---again, in the best case---can give a \emph{double-exponential reduction in size}.
They have the potential to permit applications to (i) execute much faster, and (ii) handle much larger problem instances than has been possible heretofore.

We applied CFLOBDDs in quantum-circuit simulation, and found that for several standard problems the improvement in scalability, compared to BDDs, is quite dramatic.
With a 15-minute timeout, the number of qubits that CFLOBDDs can handle are 65,536 for GHZ, 524,288 for BV; 4,194,304 for DJ; and 4,096 for Grover’s Algorithm, besting BDDs by factors of
128$\times$, 1,024$\times$, 8,192$\times$, and 128$\times$, respectively.
\end{abstract}
\begin{document}
\NiceMatrixOptions{code-for-first-col=\scriptstyle,code-for-first-row=\scriptstyle}

\keywords{Decision diagram, matched paths, best-case double-exponential compression, quantum simulation}

\maketitle{}


\section{Introduction}
\label{Se:Introduction}





Many areas of computer science---such as hardware and software verification, logic synthesis, and equivalence checking of combinatorial circuits---require a space-efficient representation of data, as well as space- and time-efficient operations on data stored in such a representation.
Many of the tasks in the aforementioned areas involve operations on either (i) Boolean functions, or (ii) non-Boolean-valued
functions over Boolean arguments.
In some cases, a level of encoding is involved:
the data of interest could be decision trees, graphs, relations, matrices, circuits, signals, etc., which are encoded as functions of type (i) or (ii).
Binary Decision Diagrams (BDDs) \cite{toc:Bryant86} are one data structure that is widely used for such purposes.
A Boolean function in $B_n = \{ 0,1\}^n \rightarrow \{0,1\}$ is represented in a compressed form as an ROBDD (Reduced Ordered BDD) data structure.
All manipulations of these Boolean functions are carried out using algorithms that operate on ROBDDs. 
ROBDDs are BDDs in which the same variable ordering is imposed on the Boolean variables (``Ordered''), and so-called \emph{don't-care} nodes are removed (``Reduced'').
ROBDDs with non-binary-valued terminals are called Multi-Terminal BDDs (MTBDDs) \cite{dac:CMZFY93,CMU:CS-95-160} or Algebraic Decision Diagrams (ADDs) \cite{iccad:BFGHMPS93}.
We will refer to ROBDDs/MTBDDs/ADDs generically as BDDs from hereon.

\Omit{
In our size comparisons of BDDs versus decision trees and CFLOBDDs, we will consider the version of BDDs in which variable ordering is respected, but don't-care nodes are \emph{not} removed (i.e., plies are not skipped).
In other words, all paths from a root to the sink have length $n$.
This variant is sometimes called \emph{quasi-reduced OBDDs}.
The size of a quasi-reduced OBDD is at most a factor of $n+1$ larger than the size of the corresponding reduced OBDD \cite[Theorem 3.2.3]{Book:Wegener00}.
Throughout the paper, in all theoretical size comparisons, ``BDD'' should be understood to mean ``quasi-reduced OBDD.''\,\footnote{
  The experiments presented in \sectref{evaluation} compare the performance of CFLOBDDs against ROBDDs---specifically, ROBDDs as implemented in CUDD \cite{somenzi2012cudd}.
}
In the best case, a BDD obtains \emph{exponential compression in space}:
the BDD for a Boolean function is exponentially smaller than the decision tree for the function.
}

In the programming-languages community, BDDs are widely used for program analysis and have been used in Datalog interpreters.
\begin{itemize}
  \item
    The SLAM system (later called Static Driver Verifier) was a Microsoft tool for checking temporal properties of device drivers (e.g., that drivers correctly follow API-usage rules) \cite{DBLP:conf/spin/BallR01}.
    BDDs were used in SLAM to represent the abstract transformers of Boolean programs that were abstractions of a driver's source code.
    BDDs allowed the SLAM developers to increase the capabilities of the IFDS framework for interprocedural dataflow analysis \cite{POPL:RHS95} to handle relations over valuations over a Boolean program's Boolean variables \cite{DBLP:conf/paste/BallR01}.
  \item
    The Datalog solver \texttt{bddbddb}, which uses BDDs as the backing representation of relations, was developed by Whaley and Lam to support a variety of program analyses \cite{PLDI:WL04,APLAS:WACL05}.
  \item
  Lhot{\'{a}}k used BDDs in interprocedural program analyses to represent and manipulate collections of large sets, allowing him to use larger programs than previous studies of the factors that affect analysis precision \cite{Thesis:Lhotak06}.
\end{itemize}

In some applications of BDDs, the initial and final BDD structures are of a reasonable size, but there is an ``intermediate swell'' during the computation\Omit{:
temporary BDD structures, which represent intermediate results needed during the computation, grow much larger, and in some cases blow up exponentially.
That is, some intermediate result---the BDD for some function $f$, say---approaches the size of the decision tree for $f$}.
Such a blow-up can cause operations to take a long time, or cause an application to run out of memory.
The size-explosion issue generally limits the use of BDDs to problems involving at most a few hundred Boolean variables.



In this paper, we introduce a new data structure, called \emph{Context-Free-Language Ordered Binary Decision Diagrams} (CFLOBDDs), which are essentially a plug-compatible replacement for BDDs.
CFLOBDDs share many of the good properties of BDDs, but---in the best case---the CFLOBDD for a Boolean function can be \emph{exponentially smaller than any BDD for that function}.
Compared with the size of the decision tree for a function, a CFLOBDD---again, in the best case---can give a \emph{double-exponential reduction in size}.
\Omit{Compared to a decision-tree representation $T_g$ of a Boolean function $g$, a CFLOBDD can provide \emph{double-exponential compression} in space compared to $T_g$.
In contrast, when $g$ is represented using a BDD, it can provide only an \emph{exponential compression} in space compared to $T_g$.
Thus, CFLOBDDs can sometimes provide an \emph{additional exponential degree of compression} than is possible with BDDs.
}
Obviously, not every Boolean function has such a highly compressed representation, but for the ones that do, CFLOBDDs offer much better compression than BDDs, and thus have the potential to permit applications to (ii) execute much faster, and (ii) handle much larger problem instances than has been possible heretofore.

Similar to BDDs, CFLOBDDs can represent functions, matrices, graphs, relations, etc.\ (using either binary-valued or multi-valued terminals, as appropriate), again with the possible advantage of an exponential degree of compression over BDDs.
Even for objects that do not fall into the best-case scenario of perfect double-exponential compression,  CFLOBDDs may provide better compression than BDDs.
Like BDDs, CFLOBDDs are \emph{canonical} (\sectref{Canonicalness}), and operations are performed on them directly (\sectref{cflobdd-algos}):
they are never unfolded to the full decision tree.
Moreover, an implementation can ensure that only a single representative is ever constructed for a given function;
consequently, the test of whether two CFLOBDDs represent equal functions can be performed merely by comparing the values of two pointers.

%
CFLOBDDs are based on the following key insight:

\begin{mdframed}[innerleftmargin = 3pt, innerrightmargin = 3pt, skipbelow=-0.0em]
  A BDD can be considered to be a special form of bounded-size, branching, but non-looping program.
  From that viewpoint, a CFLOBDD can be considered to be a bounded-size, branching, but non-looping program in which a certain form of \emph{procedure call} is permitted.
\end{mdframed}

\smallskip
\noindent
The advantages of this idea are two-fold.
First, whereas a BDD of size $n$ can have at most $2^n$ paths, the ``procedure-call'' mechanism in CFLOBDDs allows a CFLOBDD of size $n$ to have $2^{2^n}$ paths (\sectref{GrowthOfNumberOfPathsWithLevel}).
This difference is what lies behind the potential compression advantage of CFLOBDDs.
Second, even when best-case compression is not possible, such ``procedure calls'' allow there to be additional sharing of structure beyond what is possible in BDDs:
    a BDD can share sub-DAGs, whereas a procedure call in a CFLOBDD shares the ``middle of a DAG.''
    (See \figrefs{walsh1}{walshK}.)

We evaluated CFLOBDDs and BDDs on synthetic benchmarks and for quantum simulation.
We compared the performance in terms of size and execution time:
on problem sizes for which both approaches ran successfully, CFLOBDDs were generally smaller and had lower execution times, particularly at the upper end of the capabilities of BDDs.
Moreover, the improvement that CFLOBDDs bring in scalability is quite dramatic, both for the synthetic benchmarks (\sectref{ResearchQuestionOne}) and for quantum simulation (\sectref{ResearchQuestionTwo}).


Our work makes the following contributions:
\begin{itemize}
  \item
    We introduce a new data structure, called CFLOBDDs, for representing functions, matrices, graphs, relations, and other discrete structures in a highly compressed fashion (\sectref{Overview}).
    In the best case, a CFLOBDD obtains \emph{double-exponential compression in space}:
    i.e., the CFLOBDD for a Boolean function $f$ is double-exponentially smaller than the decision tree for $f$.
  \item
    We present \emph{algorithms} for creating CFLOBDDs and performing operations on them (\sectref{cflobdd-algos}).
    Most operations have low cost:
    For many of the functions of $2^k$ variables for which the CFLOBDD representation is double-exponentially smaller than a decision tree of size $2^{2^k}$, the CFLOBDD can be constructed in time $O(k)$ and space $O(k)$.
    Most unary operations on CFLOBDDs are either constant-time or linear in the size of the argument CFLOBDD.
    The cost of most binary operations is bounded by the product of
    (i) the sizes of the two argument CFLOBDDs, and (ii) the size of the answer CFLOBDD. 
  \item
    We show an \emph{exponential gap} between CFLOBDDs and BDDs (\sectref{efficient-relations}).
  \item
    We give efficient CFLOBDD representations for matrices used in quantum algorithms (\sectref{quantum-algos}).
  \item
    We measured the performance of CFLOBDDs and BDDs on synthetic and quantum-simulation benchmarks (\sectref{evaluation}).
    For several problems, the improvement in scalability enabled by CFLOBDDs is quite dramatic.
    In particular, in the quantum-simulation benchmarks, the number of qubits that could be handled using CFLOBDDs was larger, compared to BDDs, by a factor of
    128$\times$ for GHZ; 1,024$\times$ for BV; 8,192$\times$ for DJ; and 128$\times$ for Grover's algorithm.
\end{itemize}

\paragraph{Organization.}
\sectref{Preliminaries} reviews decision trees and BDDs.
\sectref{Overview} introduces the basic principles underlying CFLOBDDs.
\sectref{Canonicalness} introduces some additional structural invariants that allow us to establish that each Boolean function has a unique, canonical representation as a CFLOBDD.
\sectref{Pragmatics} discusses how some standard techniques---hash-consing 
\cite{Tokyo-TR-74-03:Goto74}, function-caching 
{(or \emph{memo functions} \cite{Edinburgh-MIP-R-29:Michie67}), and reference counting---apply to CFLOBDDs.
\sectref{ADenotationalSemantics} gives a denotational definition of the function that a CFLOBDD represents.
\sectref{cflobdd-algos} presents algorithms for a variety of CFLOBDD operations.
\sectref{efficient-relations} demonstrates an exponential gap between CFLOBDDs and BDDs:
the CFLOBDD for a function $f$ can be exponentially smaller than
any BDD for $f$.
\sectref{quantum-algos} turns to the application of CFLOBDDs for quantum simulation, and shows how CFLOBDDs can represent efficiently some special matrices used in quantum algorithms.
\sectref{evaluation} poses two experimental questions and presents the results of experiments on synthetic and quantum-simulation benchmarks.
\sectref{RelatedWork} discusses related work.
\sectref{conclusion} concludes.
Additional details are provided in
Appendices \sectseqref{additional-notation}{CostOfReduce}.



\section{Preliminaries: A Family of Examples, Decision Trees, and BDDs}
\label{Se:Preliminaries}

The theme of this paper is compressibility of Boolean functions, for which we need a family of examples that are indexed by some parameter.
This section presents the set of \emph{Hadamard matrices} $\HadamardFamily$, which recur in \sectref{Overview} and \sectseqref{efficient-relations}{evaluation}.
It also reviews Boolean functions, decision trees, and BDDs, and shows how decision trees and BDDs can encode the members of $\HadamardFamily$.

\begin{figure}[tb!]
    \centering
    $\begin{array}{c@{\hspace{6ex}}c}
      H_2 = 
        \begin{array}{c@{\hspace{0.25ex}}c}
              & \mbox{\scriptsize{$\quad y_0$}} \\
          \mbox{\scriptsize{$x_0$}}
              & \begin{bNiceArray}{cc}[first-col,first-row]
                       & 0 & 1\\
                     0 & 1 & 1\\
                     1 & 1 & -1\\
                \end{bNiceArray}
        \end{array}
      &
      H_4 = H_2 \tensor H_2
          = \begin{array}{c@{\hspace{0.25ex}}c}
                       & \mbox{\scriptsize{$\quad y_0$}} \\
              \mbox{\scriptsize{$x_0$}}
                       &  \begin{bNiceArray}{cc}[first-col,first-row]
                              & 0   & 1\\
                            0 & H_2 & H_2\\
                            1 & H_2 & -H_2\\
                          \end{bNiceArray}
            \end{array}
         =
         \begin{array}{c@{\hspace{0.25ex}}c}
                       & \mbox{\scriptsize{$\quad y_0 y_1$}} \\
              \mbox{\scriptsize{$x_0 x_1$}}
                       & \begin{bNiceArray}{cccc}[first-col,first-row]
                              & 00 & 01 & 10 & 11\\
                           00 & 1 & 1 & 1 & 1\\
                           01 & 1 & -1 & 1 & -1\\
                           10 & 1 & 1 & -1 & -1\\
                           11 & 1 & -1 & -1 & 1\\
                         \end{bNiceArray}
        \end{array}
    \end{array}$
    \caption{$H_2$ and $H_4$, the first two members of the family of Hadamard matrices $\HadamardFamily = \{ H_{2^i} \mid i \geq 1 \}$.}
    \label{Fi:HadamardMatrices}
\end{figure}

\paragraph{Hadamard Matrices}

The family of Hadamard matrices, $\HadamardFamily = \{ H_{2^i} \mid i \geq 1 \}$, can be defined recursively:
for $i \geq 1$, $H_{2^{i+1}} = H_{2^i} \tensor H_{2^i}$,
with $H_2$ from \figref{HadamardMatrices} as the base case.
where $\tensor$ denotes Kronecker product.\footnote{
  \label{Footnote:KroneckerProduct}
  Others use a different indexing scheme:
  $H_2$ is the same as with our scheme (as is $H_4$), but the recursive definition is $H_{2^{i+1}} = H_2 \tensor H_{2^i}$, for $i \geq 1$.
  Thus, for $i \geq 0$, $H_{2^i}$ is a $2^i \times 2^i$ matrix (and thus has $2^{2i}$ entries).
  In contrast, with our indexing scheme, the matrix we call $H_{2^i}$ is a $2^{2^{i-1}} \times 2^{2^{i-1}}$ matrix, for $i \geq 1$ (and thus has $2^{2^i}$ entries).

  \hspace{1.5ex}
  Put another way, what we call $H_{2^i}$ would conventionally be known as $H_{2^{2^{i-1}}}$.
  Not only do we avoid having to write a doubly superscripted subscript, we will see in \sectref{EncodingHFour} that the recursive rule ``$H_{2^{i+1}} = H_{2^i} \tensor H_{2^i}$'' fits particularly well with the internal structure of CFLOBDDs.
}
\figref{HadamardMatrices} shows $H_2$ and $H_4$, the first two matrices in $\HadamardFamily$.
The \emph{Kronecker product} of two matrices is defined as
\begin{equation*}
  A \tensor B = \left[ \begin{array}{ccc}
                              a_{1,1} & \cdots & a_{1,m} \\
                              \vdots  & \ddots & \vdots  \\
                               a_{n,1} & \cdots & a_{n,m}
                             \end{array}
                    \right] \otimes B
              = \left[ \begin{array}{ccc}
                              a_{1,1}B & \cdots & a_{1,m}B \\
                               \vdots  & \ddots & \vdots   \\
                              a_{n,1}B & \cdots & a_{n,m}B
                             \end{array}
                    \right]
\end{equation*}
Equivalently, $(A \tensor B)_{ii',jj'} = A_{i,j} \times B_{i',j'}$.
If $A$ is $n \times m$ and $B$ is $n' \times m'$, then $A \otimes B$ is $nn' \times mm'$.
\Omit{It is easy to see that the Kronecker product is associative, i.e.,
\begin{center}
  $(A \otimes B) \otimes C = A \otimes (B \otimes C)$.
\end{center}
}

For $i \geq 1$, $H_{2^i}$ is a square matrix of size $2^{2^{i-1}} \times 2^{2^{i-1}}$.
Thus, the number of rows/columns/entries in $H_{2^{i+1}}$ is the \emph{square} of the number of rows/columns/entries in $H_{2^i}$.
For example, $H_4$ is $4 \times 4$ (16 entries);
$H_8$ is $16 \times 16$ (256 entries).
An indexing scheme for $H_{2^i}$ can be defined that uses $2^{i-1} + 2^{i-1}$ $= 2 * 2^{i-1}$ $= 2^i$ Boolean variables.
As shown in \figref{HadamardMatrices}, $H_2$ requires 2 variables---$x_0$ for the row index and $y_0$ for the column index---whereas $H_4$ requires 4 variables---$x_0$ and $x_1$ for the row index, and $y_0$ and $y_1$ for the column index.
In general, $H_{2^i}$ can be treated as a Boolean function of type $\{0,1\}^{2^{i-1}}\times \{ 0,1 \}^{2^{i-1}} \rightarrow \{-1, 1\}$.
Our convention is that $x_0$ and $y_0$ are the most-significant bits of the row and column indexes, respectively;
$x_1$ and $y_1$ are the next-most-significant bits, respectively, etc.

\begin{figure}[tb!]
  \centering
  \begin{subfigure}[t]{0.45\linewidth}
    \centering
    \includegraphics[scale=0.5]{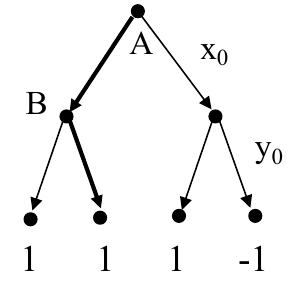}
    \caption{Decision tree for $H_2$}
    \label{Fi:walsh1_dd}
  \end{subfigure}
  \begin{subfigure}[t]{0.25\linewidth}
    \centering
    \includegraphics[scale=0.45]{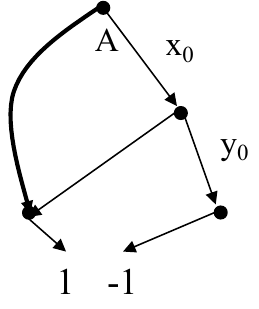}
    \caption{BDD for $H_2$}
    \label{Fi:walsh1_bdd}
  \end{subfigure}
  \begin{subfigure}[t]{0.47\linewidth}
    \includegraphics[width=\linewidth]{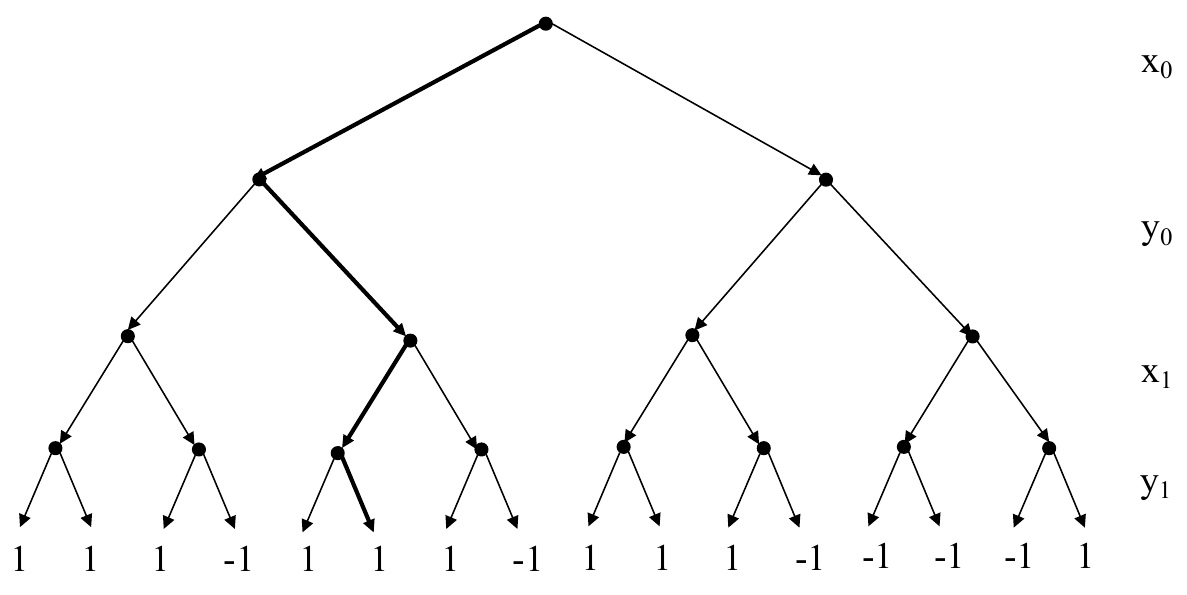}
    \caption{Decision tree for $H_4$}
    \label{Fi:walsh2_dd}
  \end{subfigure}
  \begin{subfigure}[t]{0.25\linewidth}
    \centering
    \includegraphics[scale=0.4]{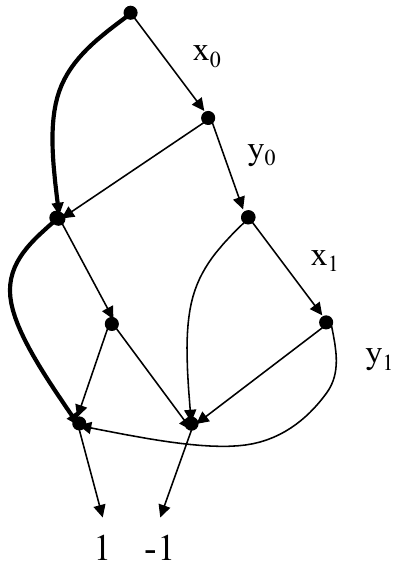}
    \caption{BDD for $H_4$}
    \label{Fi:walsh2_bdd}
  \end{subfigure}
  \begin{subfigure}[t]{0.25\linewidth}
    \centering
    \includegraphics[scale=0.4]{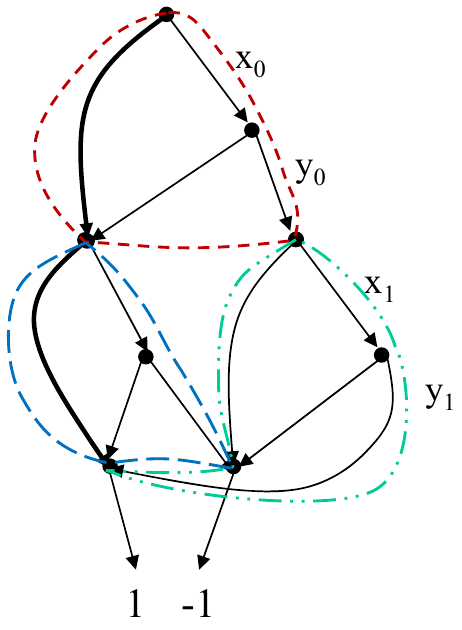}
    \caption{Occurrences of $H_2$ in the BDD for $H_4$}
    \label{Fi:walsh2_bdd_tensor_product}
  \end{subfigure} 
  \caption{\protect \raggedright 
    Decision trees and BDDs for $H_2$ and $H_4$, with plies in interleaved most-significant-bit order---$\langle x_0, y_0 \rangle$ and $\langle x_0, y_0, x_1, y_1 \rangle$, respectively.
    The bold paths show the assignments $[{x_0} \mapsto F, {y_0} \mapsto T]$ (for $H_2[0,1]$) and $[{x_0} \mapsto F, {y_0} \mapsto T, {x_1} \mapsto F, {y_1} \mapsto T]$ (for $H_4[0,3]$), respectively.
  }
  \label{Fi:HadamardDecisionTreesAndBDDs}
\end{figure}

\paragraph{Boolean Functions}
A \emph{Boolean function} over $n$ variables is a function in $\{F,T\}^n \rightarrow \{F,T\}$.
This paper is also concerned with pseudo-Boolean functions: a \emph{pseudo-Boolean} function over $n$ variables and value domain $W$ is a function in $\{F,T\}^n \rightarrow W$.
Because there is little chance of confusion, for brevity, we typically refer to such a function as a ``Boolean function.''
We also use $0$ and $1$ as synonyms for $F$ and $T$, respectively. 

Hadamard matrix $H_{2^i}$ can be considered to be a (pseudo-)Boolean function in $\{0,1\}^{2^i} \rightarrow \{-1, 1\}$, with some convention about how the $2^i$ input variables correspond to bits of the row-index and the column-index of the matrix.

\paragraph{Decision Trees}

A \emph{decision tree} is a tree representation of a Boolean function.
For a Boolean function $B$ in $\{F,T\}^n \rightarrow W$, the decision tree $T_B$ for $B$ is a complete binary tree with $n$ plies and a value from $W$ at each leaf.
$T_B$ comes with a specific ordering on the $n$ Boolean inputs of $B$:
each ply of $T_B$ corresponds to some specific Boolean variable $v$ among $B$'s $n$ Boolean input variables.
$T_B$---and hence $B$---can be evaluated with respect to an input assignment $[v_1 \mapsto b_1, \ldots, v_n \mapsto b_n$] (where $b_1, \ldots, b_n \in \{F,T\}$) by following a root-to-leaf path in $T_B$, returning the value that labels the leaf.
(Note that $v_1$ is not necessarily associated with the ply at the root.
The order used by $T_B$ is fixed, but can be any of the permutations of the sequence $\langle v_1, \ldots, v_n \rangle$.)

\figrefs{walsh1_dd}{walsh2_dd} show two decision trees, with the convention that will be used throughout the paper that at each interior node, the left branch is taken when the current Boolean variable in the assignment has the value $F$ (or $0$);
the right branch is taken for the value $T$ (or $1$). 
\figref{walsh1_dd} shows the decision tree for $H_2$, which has $2$ plies, $3$ interior nodes, and $4$ leaf nodes, using the variable ordering $\langle x_0, y_0 \rangle$.
In \figref{walsh1_dd}, the path highlighted in bold is for the assignment $[{x_0} \mapsto F, {y_0} \mapsto T]$, which corresponds to $H_2[0,1]$ whose value is $1$.
\figref{walsh2_dd} shows the decision tree for $H_4$, which has $4$ plies and $15$ interior nodes, using the interleaved-variable ordering $\langle x_0,y_0,x_1,y_1 \rangle$.
The path in bold is for $[{x_0} \mapsto F, {y_0} \mapsto T, {x_1} \mapsto F, {y_1} \mapsto T]$, which corresponds to $H_4[0,3]$, whose value is $1$.

In \figref{walsh2_dd}, the Kronecker product in the expression $H_4 = H_2 \tensor H_2$ corresponds to \emph{stacking decision trees}.
In essence, the $\langle x_0, y_0 \rangle$ plies correspond to the left occurrence of $H_2$ in``$H_2 \tensor H_2$.''
At each ``leaf'' (the four interior nodes after the $y_0$ ply), there is another copy of $H_2$ in the $\langle x_1, y_1 \rangle$ plies, with the terminal values labeled with the product of the left $H_2$'s value and the right $H_2$'s value.
We can construct a decision tree for each member of $\HadamardFamily$ by repeated stacking, doubling the number of plies each time in accordance with the definition $H_{2^{i+1}} = H_{2^i} \tensor H_{2^i}$.

Boolean functions and decision trees are related by the following fact:
\begin{Obs}\label{Obs:DecisionTreesRepresentBooleanFunctions}
  Consider the sets of (i) Boolean functions in $\{0,1\}^n \rightarrow W$, and (ii) $n$-ply decision trees with leaves labeled by values in $W$, using a variable ordering that is some fixed permutation of $\langle v_1, \ldots, v_n \rangle$.
  These sets can be put into one-to-one correspondence.
\end{Obs}
For each Boolean function $B: \{0,1\}^n \rightarrow W$, create the $n$-ply decision tree $T_B$ in which the value $B(b_1,\ldots,b_n)$ is placed at the end of the path in $T_B$ for the assignment $[v_1 \mapsto b_1, \ldots, v_n \mapsto b_n]$.
Conversely, for each decision tree $T_B$, let $B$ be the function in $\{0,1\}^n \rightarrow W$ for which $B(b_1,\ldots,b_n)$ equals the value $w$ at the end of the path in $T_B$ for the assignment $[v_1 \mapsto b_1, \ldots, v_n \mapsto b_n]$.
Finally, if two decision trees $T_1$ and $T_2$ represent the same Boolean function $B$, then the sequence of leaves in left-to-right order from each tree are equal, and thus $T_1$ and $T_2$ are the same tree (as a mathematical object).
Thus, the $n$-ply decision trees that use a given variable ordering represent the Boolean functions in $\{0,1\}^n \rightarrow W$ uniquely.

\paragraph{BDDs}
A BDD is a compressed representation of a decision tree.
\figref{walsh1_bdd} shows the BDD for $H_2$, using the variable ordering $\langle x_0, y_0 \rangle$.
Again, left branches are for $F$ (or $0$);
right branches are for $T$ (or $1$).
In the $H_2$ matrix, rows $0$ and $1$ are different, and hence the BDD node for $x_0$ is a $\textit{fork\_node}$, which forks to two different substructures.
In row $0$ of the matrix, columns $0$ and $1$ are identical, and hence the $y_0$ ply is skipped in the $F$ branch of $x_0$, with the $F$ branch of $x_0$ leading directly to the terminal value $1$.
Conversely, in row $1$ of the matrix, the columns differ, and hence the BDD node for $y_0$ in the $T$ branch of $x_0$ is a $\textrm{fork\_node}$.
In \figref{walsh1_bdd}, the bold path is for the assignment $[{x_0} \mapsto F, {y_0} \mapsto T]$ for $H_2[0,1]$.
(Only the edge for $x_0 \mapsto F$ is highlighted because the ply for $y_0$ is skipped when $x_0 \mapsto F$.)

\figref{walsh2_bdd} shows the BDD for $H_4$ under the interleaved-variable ordering $\langle x_0,y_0,x_1,y_1 \rangle$.
The bold path is for the assignment $[{x_0} \mapsto F, {y_0} \mapsto T, {x_1} \mapsto F, {y_1} \mapsto T]$, which corresponds to $H_4[0,3]$.
(The path in the BDD only shows $x_0 \mapsto F, x_1 \mapsto F$ because the plies for $y_0$ when $x_0 \mapsto F$, and $y_1$ when $x_0 \mapsto F$ and $x_1 \mapsto F$ are skipped.)

\figref{walsh2_bdd_tensor_product} shows that the Kronecker product $H_4 = H_2 \tensor H_2$ corresponds to \emph{stacking BDDs}---in essence, each terminal of the BDD for the left occurrence of $H_2$ in ``$H_2 \tensor H_2$'' is replaced by a copy of $H_2$.
The BDD for $H_4$ contains three occurrences of $H_2$: one in the $\langle x_0, y_0 \rangle$ plies, and two in the $\langle x_1, y_1 \rangle$ plies.
The leftmost $\langle x_1, y_1 \rangle$ occurrence (blue-dashed outline) accounts for the three occurrences of matrix $H_2$ in the $H_4$ matrix;
the rightmost occurrence (green dashed-double-dotted outline) corresponds to the negated matrix $\,{-}H_2$ in the lower-right corner of $H_4$ (cf.\ \figref{HadamardMatrices}).
Consequently, one can construct a BDD for each member of $\HadamardFamily$ by repeated stacking, doubling the number of plies each time, per $H_{2^{i+1}} = H_{2^i} \tensor H_{2^i}$, but only \emph{tripling} the size with each such stacking operation (e.g., $H_8 = H_4 \tensor H_4$ has three copies of $H_4$, etc.). 
Consequently, the size of the BDD for $H_{2^i}$ is $O(3^i)$.

\paragraph{Discussion}

The decision tree for $H_{2^i}$ has height $2^{i}$, $2^{2^{i}}$ leaves, and $2^{2^{i}}\! - \! 1$ internal nodes.
Thus, the size of the tree is double exponential in $i$.
As observed above, the size of the BDD for $H_{2^i}$ is $O(3^i)$, and hence, compared to decision trees, BDDs achieve \emph{exponential compression} on $\HadamardFamily$.

In contrast, CFLOBDDs employ a different principle than stacking to account for Kronecker product.
Looking ahead, this principle is explained in \sectref{EncodingHFour}, and as we will see when we get to \figref{walshKGeneralCase}, there is a CFLOBDD of size $O(i)$ that encodes $H_{2^i}$.
Consequently, CFLOBDDs achieve \emph{double-exponential} compression on $\HadamardFamily$.
Moreover, in \sectref{Separation:HadamardRelation}, we show that this exponential separation is inherent:
a BDD that represents $H_{2^i}$ requires $\Omega(2^i)$ nodes (\theoref{HadamardSeparation}).

\medskip
In the remainder of the paper, detailed knowledge about BDDs is not essential.
The primary purpose of the material that discusses BDDs is to show that CFLOBDDs offer something new, but that material is tangential to being able to understand the CFLOBDD algorithms that we give.
The paper gives what is essentially a complete account of CFLOBDD operations and invariants, and we hope that it could be read by someone who knows little about BDDs.
Nevertheless, additional knowledge about BDD internals could help readers
appreciate the material in the paper.
For background about how BDDs are implemented, the reader is referred to Brace et al.\ \cite{dac:BRB90}.


\section{CFLOBDDs}
\label{Se:Overview}

CFLOBDDs are a binary decision diagram inspired by BDDs, 
but the two data structures are based on different principles.
A BDD is an acyclic finite-state machine (modulo ply-skipping), whereas a CFLOBDD is a particular kind of \emph{single-entry, multi-exit, non-recursive, hierarchical finite-state machine} (HFSM) \cite{TOPLAS:ABEGRY05}.
This section describes the basic principles of CFLOBDDs, illustrating them via encodings of $H_2$ and $H_4$ with the variable orderings $\langle x_0, y_0 \rangle$ and $\langle x_0, y_0, x_1, y_1 \rangle$, respectively.

\begin{figure}[tb!]
  \centering
  \begin{tabular}{@{\hspace{0ex}}c@{\hspace{4.5ex}}c@{\hspace{0ex}}}
    \includegraphics[scale=0.45]{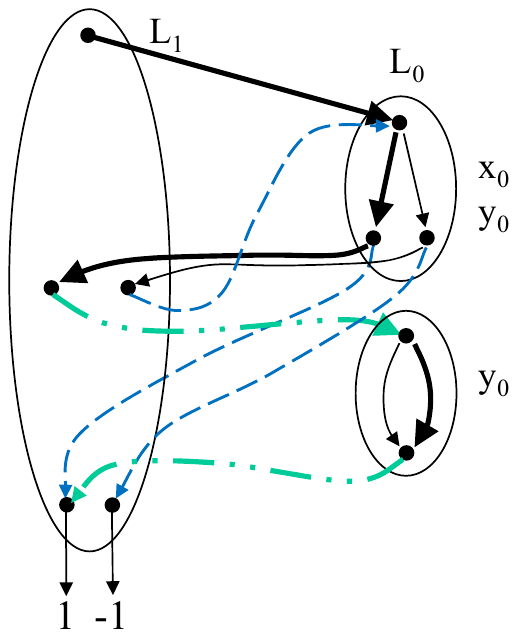}
    &
    \includegraphics[scale=0.42]{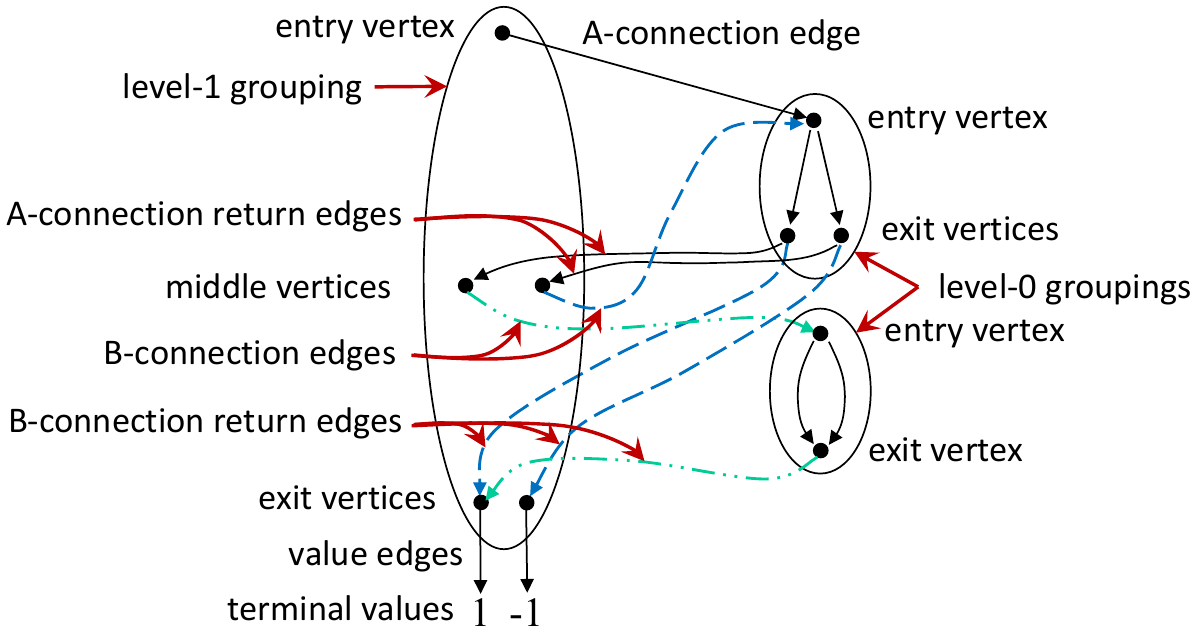}
    \\
    {\small (a)} & {\small (b)}
  \end{tabular}
  \caption{\protect \raggedright 
    (a) CFLOBDD for $H_2$ using the variable ordering $\langle x_0, y_0 \rangle$.
    The bold path is for the assignment $[{x_0} \mapsto F, {y_0} \mapsto T]$ for $H_2[0,1]$.
    (b) Guide to the terminology introduced in \defref{MockCFLOBDD}.
  }
  \label{Fi:walsh1}
\end{figure}

\paragraph{Intuition}
Before discussing the CFLOBDD data structure in detail, we give some intuition about the decomposition principle used in CFLOBDDs. 

Consider a function $f : \{0,1\}^n \rightarrow [1 \dots m]$ over variables $x_0, \dots , x_{n-1}$.
In the classical Shannon decomposition of $f$, one looks at the value of $x_0$ and then derives two co-factors $g_0 = f |_{x_0 = 0}$ and $g_1 = f |_{x_0 = 1}$, both of which are functions over variables $x_1, \ldots, x_{n-1}$.
Functions $g_0$ and $g_1$ can be combined to yield $f$ by the identity $f = \overline{x_0} \cdot g_0 + x_0 \cdot g_1$ (where $\overline{x_0}$ denotes the complement of $x_0$, ``$\cdot$'' denotes logical-and, and ``$+$'' denotes logical-or).
(See \cite[\S4.2]{RDF:4:CFZ96} for a precise definition of the generalization of the Shannon decomposition for MTBDDs.)
The same decomposition can be carried out recursively on $g_0$ and $g_1$, and OBDDs---whether reduced or not---exploit this decomposition by sharing common co-factors that arise in the different plies of the recursive decomposition.

The decomposition used in CFLOBDDs is different.
The number of variables $n$ is assumed to be a power of 2, and at each decomposition level the variables are divided into two halves: $x_0, \dots, x_{n/2-1}$ and $x_{n/2},\dots, x_{n-1}$.\footnote{
  For a Boolean function of $m$ variables that is not a power of 2, one can pad the function with dummy Boolean variables to reach the next higher power of 2.
  Depending on the function, the user may choose to interleave the dummy variables among the ``legitimate'' variables or place them all at the end (or some combination of both).
  By this device, every Boolean function can be represented as a CFLOBDD.
  (See also the discussion in \sectref{CanonicalnessHighLevel} of property (\ref{It:DecisionTreeRepresentedByDecisionTree}).)
}
Let $g_0$ be the function of the first $n/2$ variables that maps them to $[1 \ldots k]$, where $k$ is the number of equivalence classes of residual functions one has after the first $n/2$ variables of $f$ are read.
($k$ equals the number of nodes in the corresponding BDD for $f$ at ply $n/2$.)
For each $i \in [1 \dots k]$, $g_i$ is the appropriate function over the remaining $n/2$ variables, which combined with $g_0$ (based on index $i$), and an appropriate matching of returned values, yields $f$.\footnote{
  We are being deliberately vague about how $g_0, g_1, \ldots, g_k$ are combined, because the details are somewhat complicated.
  See \defrefs{MockCFLOBDD}{CFLOBDD} for the precise definition.
}
The representation allows sharing across all the functions $g_0, g_1, \ldots, g_k$.
Moreover, the divide-the-variables-in-half decomposition is carried out recursively on $g_0, g_1, \ldots, g_k$, with mutual sharing of the decomposed functions that arise at all levels.

Rather than producing a DAG-structured data structure, as one has with BDDs, the divide-the-variables-in-half decomposition leads to a structure that resembles an HFSM (or, alternatively, the interprocedural control-flow graph for a non-recursive, multi-procedure program).

\subsection{Matched Paths}
\label{Se:MatchedPaths}

The CFLOBDD representation of $H_2$ consists of three \emph{groupings}, shown as three ovals in \figref{walsh1}a.\footnote{
  \label{Footnote:NodeGroupingVertexTerminology}
  Groupings are represented in memory as a kind of node structure, but we will use ``nodes'' solely for decision trees and BDDs.
  Groupings are depicted as ovals, and the dots inside will be referred to as ``vertices.''
}
Each CFLOBDD grouping is associated with a given \emph{level}.
The two small ovals are at level $0$ (labeled $L_0$), and the large oval is at level $1$ (labeled $L_1$).
There is an implicit hierarchical structure to the levels, and level-$0$ groupings are said to be \emph{leaves} of the CFLOBDD.
There are only two possible types of level-0 groupings:
\begin{itemize}
  \item
    A level-0 grouping like the one at the upper right in \figref{walsh1}a is called a \emph{fork grouping}.
  \item
    A level-0 grouping like the one at the lower right in \figref{walsh1}a is called a \emph{don't-care grouping}.
\end{itemize}
The vertex at the top of each grouping is the grouping's \emph{entry vertex}.
The entry vertex of a level-$0$ grouping corresponds to a decision point:
left branches are for $F$ (or $0$);
right branches are for $T$ (or $1$). 
The vertices at the bottom of each grouping are called \emph{exit vertices}; those in the middle of the level-$1$ grouping are called \emph{middle vertices}.

In matrix $H_2$, each entry is either $1$ or $-1$.
Each assignment over $\langle x_0, y_0 \rangle$ corresponds to a special kind of path in \figref{walsh1}a that leads to either $1$ or $-1$.
Each such path starts from the entry vertex of the level-$1$ grouping, making ``decisions'' for the next variable in sequence each time the entry vertex of a level-$0$ grouping is encountered.

\figref{walsh1}a illustrates the key principle behind CFLOBDDs---namely, the use of a \emph{matching condition on paths}.
The bold path is for the assignment $[{x_0} \mapsto F, {y_0} \mapsto T]$, which corresponds to $H_2[0,1]$.
The path starts at the level-$1$ grouping's entry vertex and goes
to the entry vertex of the level-$0$ fork grouping via a solid edge (---);
takes the left branch of the fork grouping (corresponding to $x_0 \mapsto F$);
and leaves the fork grouping via a solid edge (---), reaching the leftmost of the
middle
vertices of the level-$1$ grouping.
The path then goes to the entry vertex of the level-$0$ don't-care grouping via a dashed-double-dotted edge (${\color{SeaGreen} -\cdot\cdot\,- }$);
takes the right branch of the don't-care grouping (corresponding to $y_0 \mapsto T$);
and leaves via a dashed-double-dotted edge (${\color{SeaGreen} -\cdot\cdot\,- }$), reaching the leftmost exit vertex of the level-$1$ grouping, which is connected to the terminal value $1$
(the value of $H_2[0,1]$).
A pair of incoming and outgoing edges of a grouping, such as the pairs of black solid edges and green dashed-double-dotted edges in the bold path in \figref{walsh1}a, are said to be \emph{matched}.
The bold path itself is called a \emph{matched path}.
This example illustrates the following principle:

\begin{mdframed}[innerleftmargin = 3pt, innerrightmargin = 3pt, skipbelow=-0.0em]
  $\textbf{\MPP}$.  {\em When a path follows an edge that returns to level $i$ from level $i-1$, it must follow an edge that matches the closest preceding edge from level $i$ to level $i-1$.\/}
\end{mdframed}

Formally, the matched-path principle can be expressed as a condition that---for a path to be \emph{matched}---the word spelled out by the labels on the edges of the path must be a word in a certain context-free language
\cite{kn:Yann90}.
(This idea is the origin of ``CFL'' in ``CFLOBDD''.)
One way to formalize the condition is to label each edge from level $i$ to level $i-1$ with an open-parenthesis symbol of the form ``$(_b$'', where $b$ is an index that distinguishes the edge from all other edges to any entry vertex of any grouping of the CFLOBDD.
(In particular, suppose that there are  $\NumConnections$ such edges, and that the value of $b$ runs from $1$ to $\NumConnections$.)
Each return edge that runs from an exit vertex of the level $i-1$ grouping back to level $i$, and corresponds to the edge labeled ``$(_b$'', is labeled ``$)_b$''.
Each path in a CFLOBDD then generates a string of parenthesis symbols formed by concatenating, in order, the labels of the edges on the path.
(Unlabeled edges in the level-$0$ groupings are ignored in forming this string.)
A path in a CFLOBDD is called a \emph{$\Matched$-path} iff the path's word is in the language $L(\Matched)$ of balanced-parenthesis strings generated by
\begin{equation}
  \label{Eq:MatchedPathGrammar}
  \Matched
  \rightarrow
  \epsilon \mid \Matched~\Matched \mid {(_b}~\Matched~{)_b} \quad 1 \leq b \leq \NumConnections 
\end{equation}
Only $\Matched$-paths that start at the entry vertex of the CFLOBDD's highest-level grouping and end at a terminal value are considered in interpreting a CFLOBDD.

In the figures in the paper, we use black solid (---), {\color{RoyalBlue} blue dashed} ({\textbf {\color{RoyalBlue} --\,--\,--}}), {\color{red} red short-dashed} ({\textbf {\color{red} -\,-\,-\,-}}), {\color{Orchid} purple dashed-dotted} (${\color{Orchid} -\cdot\cdot\,- }$), and {\color{SeaGreen} green dashed-double-dotted} (${\color{SeaGreen} -\cdot\cdot\,- }$) edges, in the indicated colors, rather than attaching explicit labels to edges.
To reduce the number of colors used, we sometimes re-use colors in a given figure;
however, it should still be clear which pairs of edges match.

The matched-path principle allows a given grouping to play multiple roles during the evaluation of a Boolean function.
In particular, the level-$0$ groupings are shared, and thus are used to interpret \emph{different variables at different places in a matched path through a CFLOBDD}.
For example, the level-$0$ fork grouping in \figref{walsh1}a is used to interpret
(i) $x_0$ (when ``called'' via the black solid edge), and
(ii) $y_0$ (when ``called'' via the blue dashed edge, which happens when
$x_0 \mapsto T$).\footnote{
  The term ``call'' is by analogy with how matched paths model the actions of procedure calls in graphs used for interprocedural dataflow analysis \cite{kn:SP81,kn:RHS95}, interprocedural slicing \cite{kn:HRB90}, and model checking hierarchical state machines \cite[\S5]{DBLP:conf/icalp/BenediktGR01}.
}
The edge-matching condition is important because the black solid return edges lead to the level-$1$ grouping's middle vertices, whereas the blue dashed return edges lead to the level-$1$ grouping's exit vertices.

In \figref{walsh1}a, the fork grouping is labeled with $x_0$ and $y_0$, and the don't-care grouping with $y_0$.
However, because the level-$0$ groupings interpret different variables at different places in a matched path, in later diagrams the level-$0$ groupings are generally not labeled with specific variables.
In general, the principle is as follows:

\begin{mdframed}[innerleftmargin = 3pt, innerrightmargin = 3pt, skipbelow=-0.0em]
  $\textbf{\CIP}$.
  {\em A level-$0$ grouping is not associated with a specific Boolean variable.
  Instead, the variable that a level-$0$ grouping refers to is determined by context:
  the $n^{\textit{th}}$ level-$0$ grouping visited along a matched path is used to interpret the $n^{\textit{th}}$ Boolean variable.}
\end{mdframed}


The reader might be worried by the fact that \figref{walsh1}a contains cycles.
That is, if one ignores the ovals in \figref{walsh1}a, as well as the distinctions among solid, dashed, and dashed-double-dotted edges, one is left with a cyclic graph:
there is a cycle that starts at the rightmost middle vertex of the level-$1$ grouping, follows the blue dashed edge ({\textbf {\color{RoyalBlue} --\,--\,--}}) to the entry vertex of the level-$0$ fork-grouping, takes the right branch, and returns along the black solid edge (---) to the rightmost middle vertex of the level-$1$ grouping.
However, that path is excluded from consideration because it is not a matched path:
the solid edge does not match with the preceding dashed edge.

\subsection{CFLOBDD Requirements}
\label{Se:Requirements}

In designing CFLOBDDs, the goal is to meet the following five requirements:

\begin{mdframed}[innerleftmargin = 2pt, innerrightmargin = 2pt, skipbelow=-0.0em]
  \begin{enumerate}[left=0pt .. 1.5\parindent]
    \item
      \label{Req:DTSoundness}
      \emph{Soundness}: Every level-$k$ CFLOBDD represents a decision tree of height $2^k$ and size $2^{2^k}$
    \item
      \label{Req:DTCompleteness}
      \emph{Completeness}: each decision tree of height $2^k$ and size $2^{2^k}$ can be encoded as a level-$k$ CFLOBDD
    \item
      \label{Req:Compression}
      \emph{Best-case double-exponential compression}: in the best case, a decision tree of height $2^k$ and size $2^{2^k}$ can be encoded as a level-$k$ CFLOBDD of size $k$
    \item
      \label{Req:Canonicity}
      \emph{Canonicity}: CFLOBDDs are a canonical representation of Boolean functions
    \item
      \label{Req:ComputationalEfficiency}
      \emph{Computational efficiency}: 
      most operations run in time polynomial in the sizes of (i) the input CFLOBDDs, or (ii) the input CFLOBDDs and the output CFLOBDD
  \end{enumerate}
\end{mdframed}

\noindent
These requirements are similar to those for BDDs, but with double-exponential parameters---rather than single-exponential parameters---in Requirements (\ref{Req:DTSoundness})--(\ref{Req:Compression}).
To satisfy these more stringent requirements, we define a data structure that is quite different from BDDs (see \sectrefs{CFLOBDDsDefined:PartI}{StructuralRestrictions}).

Requirements (\ref{Req:DTSoundness}) and (\ref{Req:Compression}) are established in \sectrefs{CFLOBDDOperationalSemantics}{NoDistinctionProtoCFLOBDDs}, respectively.
Requirements (\ref{Req:DTCompleteness}) and (\ref{Req:Canonicity}) are established in \sectref{Canonicalness} and Appendix \sectref{canonical-proof}.
Requirement (\ref{Req:ComputationalEfficiency}) is addressed in \sectrefsp{Pragmatics}{cflobdd-algos}{CostOfReduce};
in particular, \tableref{algo-list-cflobdds} at the beginning of \sectref{cflobdd-algos} lists the fourteen main operations on CFLOBDDs and the asymptotic running times of the algorithms that we give for the operations.
BDDs enjoy the more desirable property that most operations run in time polynomial in the sizes of the input BDDs, but the same property does not seem possible for CFLOBDDs.
\sectref{CostOfReduce} establishes that the time complexity of a key subroutine used in several of the CFLOBDD operations to maintain canonicity is polynomial in the sizes of the input and output CFLOBDDs.

\subsection{CFLOBDDs Defined, Part I: Basic Structure}
\label{Se:CFLOBDDsDefined:PartI}

Our formal definition of CFLOBDDs is given in two parts:
\defref{MockCFLOBDD} (below) and \defref{CFLOBDD} (\sectref{StructuralRestrictions}).
\defref{MockCFLOBDD} defines the basic structure of CFLOBDDs, whose various elements are depicted in \figref{walsh1}b.
\defref{CFLOBDD} imposes some additional structural invariants to ensure that CFLOBDDs provide a canonical representation of Boolean functions.
Much about CFLOBDDs can be understood just from \defref{MockCFLOBDD}, so we postpone introducing the structural invariants until we address canonicity in \sectref{Canonicalness}.
Where necessary, we distinguish between \emph{mock-CFLOBDDs} (\defref{MockCFLOBDD}) and \emph{CFLOBDDs} (\defref{CFLOBDD}), although we typically drop the qualifier ``mock-'' when there is little danger of confusion.
\figref{walsh1}b illustrates \defref{MockCFLOBDD} using the CFLOBDD that represents Hadamard matrix $H_2$.

\begin{definition}[Mock-CFLOBDD; see \figref{walsh1}b]\label{De:MockCFLOBDD}
    A \emph{mock-CFLOBDD} at level $k$ is a hierarchical structure made up of some number of \emph{groupings}, of which there is one grouping at level $k$, and at least one at each level $0, 1, \ldots, k-1$.
    The grouping at level $k$ is the \emph{head} of the mock-CFLOBDD.
    
    Each grouping $g_i$ at level $0 \le i \le k$ has a unique \emph{entry vertex}, which is disjoint from the set of \emph{exit vertices} of $g_i$.

    If $i = 0$, $g_i$ is either a \emph{fork grouping} or a \emph{don't-care grouping}, as depicted in the upper right and lower right of \figref{walsh1}b, respectively.
    The entry vertex of a level-$0$ grouping corresponds to a decision point:
    left branches are for $F$ (or $0$);
    right branches are for $T$ (or $1$).
    A don't-care grouping has a single exit vertex, and the edges for the left and right branches both connect the entry vertex to the exit vertex.
    A fork grouping has two exit vertices:
    the entry vertex's left and right branches connect the entry vertex to the first and second exit vertices, respectively.
    
    If $i \ge 1$, $g_i$ has a further disjoint set of \emph{middle vertices}.
    We assume that both the middle vertices and the exit vertices are associated with some fixed, known total order (i.e., the sets of middle vertices and exit vertices could each be stored in an array).
    Moreover, $g_i$ has an \emph{A-connection} edge that, from $g_i$'s entry vertex, ``calls'' a level-$i\text{-}1$ grouping $a_{i-1}$, along with a set of matching \emph{return edges};
    each return edge from $a_{i-1}$ connects one of the exit vertices of $a_{i-1}$ to one of the middle vertices of $g_i$.
    In addition, for each middle vertex $m_j$, $g_i$ has a \emph{B-connection} edge that ``calls'' a level-$i\text{-}1$ grouping $b_j$, along with a set of matching \emph{return edges};
    each return edge from $b_j$ connects one of the exit vertices of $b_j$ to one of the exit vertices of $g_i$.

    If $i = k$, $g_k$ has a set of \emph{value edges} that connect each exit vertex of $g_k$ to a \emph{terminal value}.
\end{definition}

\figref{walsh1}b shows where the concepts from \defref{MockCFLOBDD} occur in the CFLOBDD that represents Hadamard matrix $H_2$.

\begin{figure}[tb!]
\begin{tabular}{@{\hspace{0ex}}c@{\hspace{0ex}}c@{\hspace{0ex}}}
  \begin{minipage}{.55\linewidth}
    {
    \LinesNotNumbered
    \setlength{\interspacetitleruled}{0pt}%
    \setlength{\algotitleheightrule}{0pt}%
    \SetInd{0em}{0.75em}
    {\small
    \begin{algorithm*}[H]
    \DontPrintSemicolon
    abstract class Grouping \{\;
    \Indp    
        level: int\;
        numberOfExits: int\;
    \Indm \}\;
    class InternalGrouping extends Grouping \{\;
    \Indp
        AConnection: Grouping\;
        AReturnTuple: ReturnTuple\;
        numberOfBConnections: int\;
        BConnections:\;
        \Indp
          array[1..numberOfBConnections] of Grouping\;
        \Indm
        BReturnTuples:\;
        \Indp
          array[1..numberOfBConnections] of ReturnTuple\;
        \Indm
    \Indm \}\;
    class DontCareGrouping extends Grouping \{\;
    \Indp
        level = 0\;
        numberOfExits = 1\;
    \Indm \}\;
    class ForkGrouping extends Grouping \{\;
    \Indp
        level = 0\;
        numberOfExits = 2\;
    \Indm \}\;
    class CFLOBDD \{\qquad  \tcp{\textrm{Multi-terminal CFLOBDD}}
    \Indp
        grouping: Grouping\;
        valueTuple: ValueTuple\;
    \Indm\}\;
    \end{algorithm*}
    }
    }    
  \end{minipage}
  &
  \begin{tabular}{@{\hspace{0ex}}c@{\hspace{0ex}}}
    \includegraphics[scale=0.45]{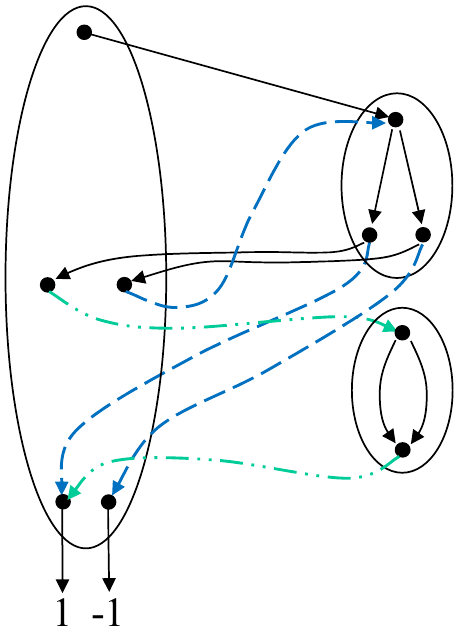}
    \\
    {\small (b)}
    \\
    \includegraphics[scale=0.45]{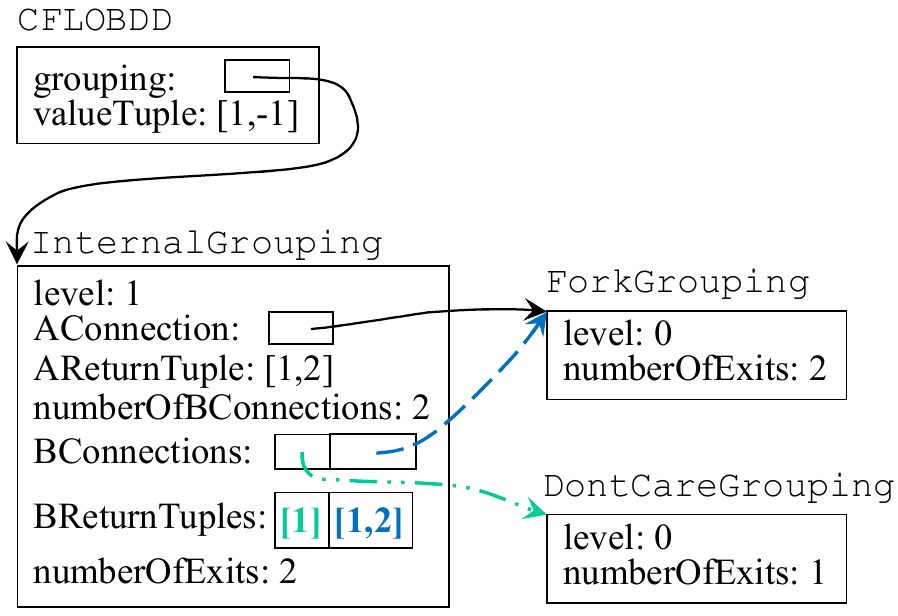}
    \\
    {\small (c)}
  \end{tabular}
  \\
  {\small (a)} & 
\end{tabular}
\caption{\protect \raggedright 
(a) Datatypes for Grouping, InternalGrouping, DontCareGrouping, ForkGrouping, and CFLOBDD.
(b) The CFLOBDD for $H_2$ (repeated from \figref{walsh1}a).
(c) An instance of class \texttt{CFLOBDD} that represents $H_2$. 
}
\label{Fi:ClassDefinitions}
\end{figure}

\subsubsection{An Object-Oriented Pseudo-Code}
In later parts of the paper,
we state algorithms using an object-oriented pseudo-code.
In accordance with the terminology introduced above, the basic classes that are used for representing multi-terminal CFLOBDDs are defined in \figref{ClassDefinitions}a: \texttt{Grouping}, \texttt{InternalGrouping}, \texttt{DontCareGrouping}, \texttt{ForkGrouping}, and \texttt{CFLOBDD}.
More details about the notation used in our pseudo-code can be found in Appendix~\sectref{additional-notation}.

\figref{ClassDefinitions}c shows how the CFLOBDD from \figref{walsh1}a is represented as an instance of class \texttt{CFLOBDD}.
There are no entry, middle, and exit vertices
as such.
Instead, a pointer to a \texttt{Grouping} object serves as the object's entry vertex.
Numbers in the range $[1..\mathtt{numberOfBConnections}]$ serve as middle vertices, and numbers in the range $[1..\mathtt{numberOfExits}]$ serve as exit vertices.
In the level-$1$ \texttt{InternalGrouping} in \figref{ClassDefinitions}c, one can see that a \texttt{ReturnTuple}---which holds a sequence of return-edge targets---is associated with each outgoing \texttt{AConnection} or \texttt{BConnection} edge.
This organization facilitates implementing the matched-path principle:
when a level-$l\text{+}1$ grouping $g_1$ ``calls'' level-$l$ grouping $g_2$, there is an associated \texttt{ReturnTuple} $\textit{rt}_1$ (stored in $g_1$);
a matched path starting at the entry of $g_2$ leads to some exit-vertex index $i$ of $g_2$; and
${\textit{rt}_1}[i]$ holds the target in $g_1$ of the matching return edge.

Similarly, there are no explicit edges in \texttt{DontCareGrouping} and \texttt{ForkGrouping} objects.
Instead, the decision taken at the level-$0$ grouping's entry vertex selects the appropriate exit-vertex index, which is used to index into a \texttt{ReturnTuple} of the ``calling'' level-$1$ \texttt{InternalGrouping}.

\subsubsection{Rationale}
\label{Se:CFLOBDDsRationale}

\defref{MockCFLOBDD}, \figref{walsh1}b, and \figref{ClassDefinitions}a introduce a substantial amount of new terminology.
However, the rationale behind it is really quite simple, and goes back to the $\MPP$.
In particular, each \texttt{InternalGrouping} object $g$ at level $i > 0$ represents a family of matched paths.
A traversal of a matched path from $g$'s entry vertex to an exit vertex of $g$ uses the fields of $g$ (\figref{ClassDefinitions}a) in the following order:
\begin{equation}
  \label{Eq:MatchedPathsInGrouping}
  \begin{array}{rcl}
    \Matched(\textrm{at level $i$)} & = &
      \textrm{AConnection} \quad
      \Matched(\textrm{at level $i\text{-}1$)} \quad
      \textrm{AReturnTuple}[\cdot]
    \\
       &  & \textrm{BConnection} \quad
            \Matched(\textrm{at level $i\text{-}1$)} \quad
            \textrm{BReturnTuples}[\cdot]
  \end{array}
\end{equation}
\Omit{and thus each field of $g$ is essential to the representation.}
Note how \eqref{MatchedPathsInGrouping} mimics the form of the grammar for matched paths from \eqref{MatchedPathGrammar}.

\subsubsection{Inductive Arguments about CFLOBDDs}
\label{Se:InductiveArguments}

To be able to make inductive arguments about CFLOBDDs, it is convenient to introduce one additional bit of terminology:

\begin{definition}[Mock-proto-CFLOBDD]\label{De:MockProtoCFLOBDD}
  A \emph{mock-proto-CFLOBDD} at level $i$ is a grouping at level $i$, together with the lower-level groupings to which it is connected (and the connecting edges).
  In other words, a mock-proto-CFLOBDD has the following recursive structure:
  \begin{itemize}
    \item
      a mock-proto-CFLOBDD at level 0 is either a fork grouping or a don't-care grouping
    \item 
      a mock-proto-CFLOBDD at level $i$ is headed by a grouping at level $i$ whose
      \begin{itemize}
        \item
          A-connection edge and associated return edges ``call'' a level-($i$-1) mock-proto-CFLOBDD, and
        \item
          B-connection edges and their associated return edges ``call'' some number of level-($i$-1) mock-proto-CFLOBDDs.
      \end{itemize}
  \end{itemize}
\end{definition}

The difference between a proto-CFLOBDD and a CFLOBDD is that the exit vertices of a proto-CFLOBDD have not been associated with specific values.
One cannot argue inductively in terms of CFLOBDDs because its constituents are proto-CFLOBDDs, not full-fledged CFLOBDDs.
Thus, to prove that some property holds for a CFLOBDD, there will typically be an inductive argument to establish a property of the proto-CFLOBDD headed by the outermost grouping of the CFLOBDD, with an additional argument about the CFLOBDD's value edges and terminal values.

One example of an inductive argument allows us to establish the number of times $D(i)$ that each matched path in a level-$i$ proto-CFLOBDD reaches a decision vertex---i.e., the entry vertex of a level-$0$ grouping.
In particular, $D(i)$ is described by the following recurrence relation:
\begin{equation}
  \label{Eq:DecisionRecurrence}
  \begin{array}{r@{\hspace{0.75ex}}c@{\hspace{0.75ex}}l  @{\hspace{12ex}}   r@{\hspace{0.75ex}}c@{\hspace{0.75ex}}l}
    D(0) & = & 1
    &
    D(i) & = & D(i-1) + D(i-1),
  \end{array}
\end{equation}
which has the solution $D(i) = 2^i$.


\subsubsection{Soundness and an Operational Semantics}
\label{Se:CFLOBDDOperationalSemantics}

\eqref{DecisionRecurrence} allows us to establish Requirement (\ref{Req:DTSoundness}) from \sectref{Requirements}.
\eqref{DecisionRecurrence} has the solution $D(i) = 2^i$, so each matched path from the entry vertex of a level-$k$ CFLOBDD passes through the entry vertex of a level-$0$ grouping exactly $2^k$ times before reaching a terminal value $v \in V$, for some value domain $V$.
Consequently, each (multi-terminal) CFLOBDD represents a function in $\{ T,F \}^{2^k} \rightarrow V$---i.e., the same set of functions that decision trees represent.

\begin{algorithm}[tb!]
\caption{An operational semantics of CFLOBDDs\label{Fi:OperationalSemantics}}
\SetKwFunction{InterpretCFLOBDD}{InterpretCFLOBDD}
\SetKwFunction{InterpretGrouping}{InterpretGrouping}
\SetKwProg{myalg}{Algorithm}{}{end}
  \myalg{\InterpretCFLOBDD{n, a}}{
  \Input{CFLOBDD n, Assignment a[1..$2^{\textrm{n.grouping.level}}$]}
  \Output{A value in the range of the function represented by n}
  \Begin{
    \Return valueTuple[InterpretGrouping(n.grouping, a)]\;
  }
  }{}
  \SetKwProg{myproc}{SubRoutine}{}{end}
  \myproc{\InterpretGrouping{g, a}}{
  \Input{Grouping g, Assignment a[1..$2^{\textrm{g.level}}$]}
  \Output{An unsigned integer}
  \Begin{
    \lIf{$\textrm{g}$ == ForkGrouping}{\Return 1 + a[1]\tcp*[f]{F$\mapsto$1; T$\mapsto$2}} \label{Li:Interpret:Fork}
    \lIf{$\textrm{g}$ == DontCareGrouping)}{\Return 1\tcp*[f]{F,T$\mapsto$1}} \label{Li:Interpret:DontCare}
    \lIf{$\textrm{g}$ == NoDistinctionProtoCFLOBDD($\textrm{g.level}$)}{\Return 1\tcp*[f]{F,T$\mapsto$1}} \label{Li:Interpret:NoDistinction}
    Assignment $\textrm{a}_\textrm{A}$ = a[1, $2^{\textrm{g.level}-1}$]\;
    Assignment $\textrm{a}_\textrm{B}$ = a[$2^{\textrm{g.level}-1}$ + 1, $2^{\textrm{g.level}}$]\;
    unsigned int i = InterpretGrouping(g.AConnection, $\textrm{a}_\textrm{A}$)\label{Li:Interpret:AConnection}\;
    unsigned int k = InterpretGrouping(g.BConnections[i], $\textrm{a}_\textrm{B}$)\label{Li:Interpret:BConnections}\;  
    \Return g.BReturnTuples[i](k)\;
  }
  }
\end{algorithm}

We can also use the $\CIP$ to obtain an operational semantics for (mock-)CFLOBDDs, given as \algref{OperationalSemantics}.
This algorithm is a divide-order-and-conquer algorithm that specifies how to interpret a given CFLOBDD \texttt{n} with respect to a given Assignment \texttt{a} to the Boolean variables.
(We assume that an Assignment is given as an array of Booleans, whose entries---starting at index-position 1---are the values of the successive variables.)

Subroutine \texttt{InterpretGrouping} performs a recursive traversal over n, following AConnections, BConnections, and return edges.
When a level-$0$ grouping is reached, the value of the current Boolean variable is consulted (\lineref{Interpret:Fork}, in the case of a ForkGrouping), or ignored (\lineref{Interpret:DontCare}, in the case of a DontCareGrouping).
(\Lineref{Interpret:NoDistinction} can be ignored for now;
it is an optimization that is discussed in \sectref{NoDistinctionProtoCFLOBDDs}.)
In \linerefs{Interpret:AConnection}{Interpret:BConnections}, Assignment \texttt{a} is split in half:
the Boolean values in the first half are interpreted during the traversal of \texttt{g}'s AConnection (\lineref{Interpret:AConnection});
the values in the second half are interpreted during the traversal of one of \texttt{g}'s BConnections (\lineref{Interpret:BConnections}),
selected according to the value \texttt{i} obtained in \lineref{Interpret:AConnection} from the call on \texttt{InterpretGrouping()} with \texttt{g}'s AConnection.

\subsubsection{Multiple Middle Vertices and Exit Vertices}
In a Boolean-valued CFLOBDD, the outermost grouping
has at most two exit vertices, and these are mapped to $\{F,T\}$.
In a multi-terminal CFLOBDD, there can be an arbitrary number of exit vertices, which are mapped to values drawn from some finite set of values $V$.
\figref{walsh1}a is a multi-terminal CFLOBDD;
the level-$1$ grouping has two exit vertices that are mapped to $1$ and $-1$.

\begin{figure}[tb!]
    \centering
    \includegraphics[width=0.565\linewidth]{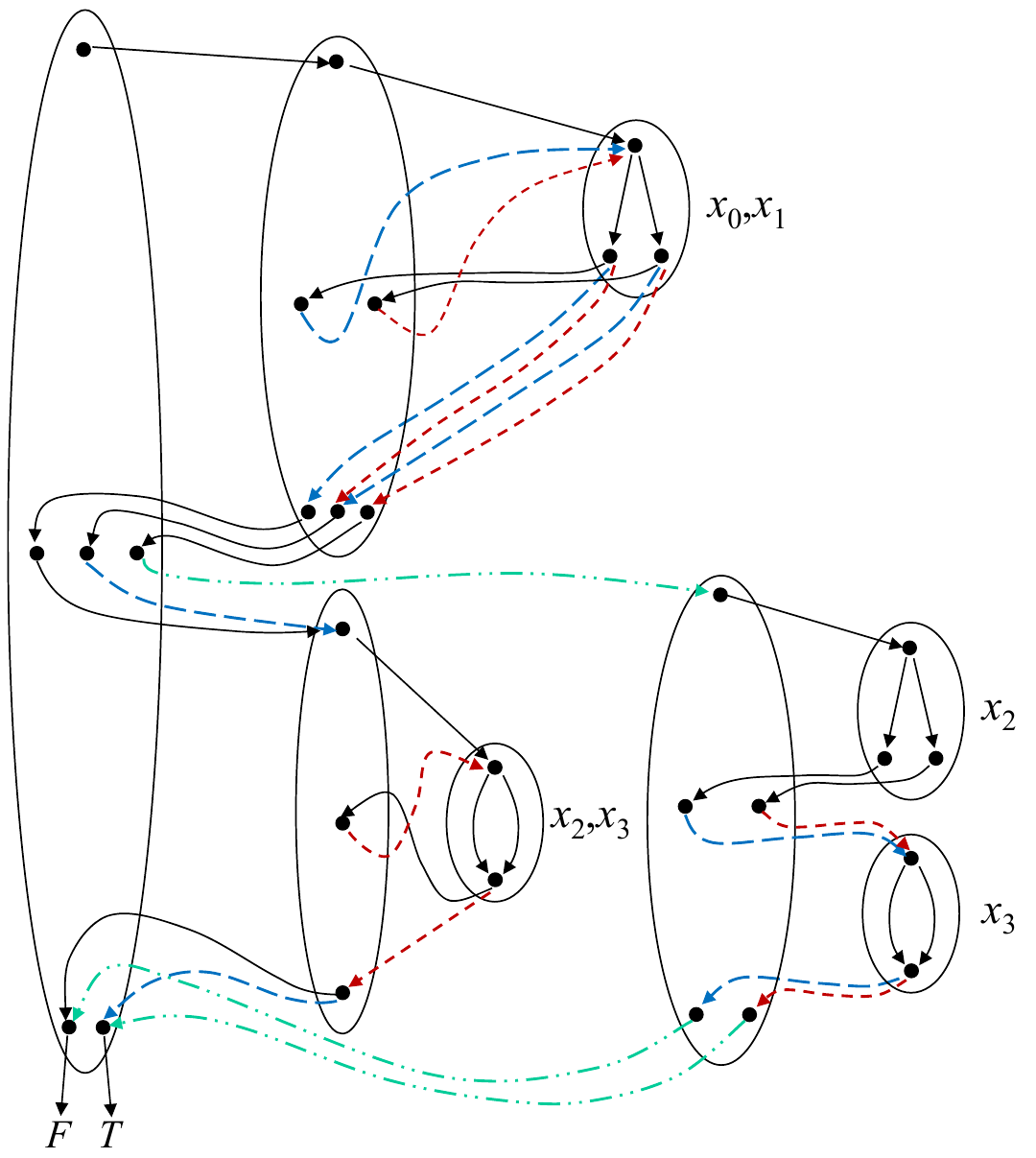}
    \caption{\protect \raggedright 
      CFLOBDD for the Boolean function $\lambda {x_0}{x_1}{x_2}{x_3} . (x_0 \xor x_1) \lor (x_0 \land x_1 \land x_2)$.
      (For clarity, some of the level-0 groupings have been duplicated.)
    }
    \label{Fi:MultipleMiddleVertices}
\end{figure}

Groupings that have more than two exit vertices naturally arise in the interior groupings of CFLOBDDs---even in Boolean-valued CFLOBDDs.
For instance, a level-$i\textrm{--}1$ grouping used as an $A$-connection can more than two exit vertices, in which case the ``calling'' level-$i$ grouping would have more than two middle vertices.
Such multi-terminal groupings can arise in both $A$-connections and $B$-connections.
\figref{MultipleMiddleVertices} shows a Boolean-valued CFLOBDD that contains a level-$1$ grouping that has three exit vertices.
The grouping is the A-connection of the outermost grouping (at level $2$), which thus has three middle vertices.

\begin{figure}
  \centering
  \begin{subfigure}[t]{0.45\linewidth}
    \centering
    \includegraphics[scale=0.5]{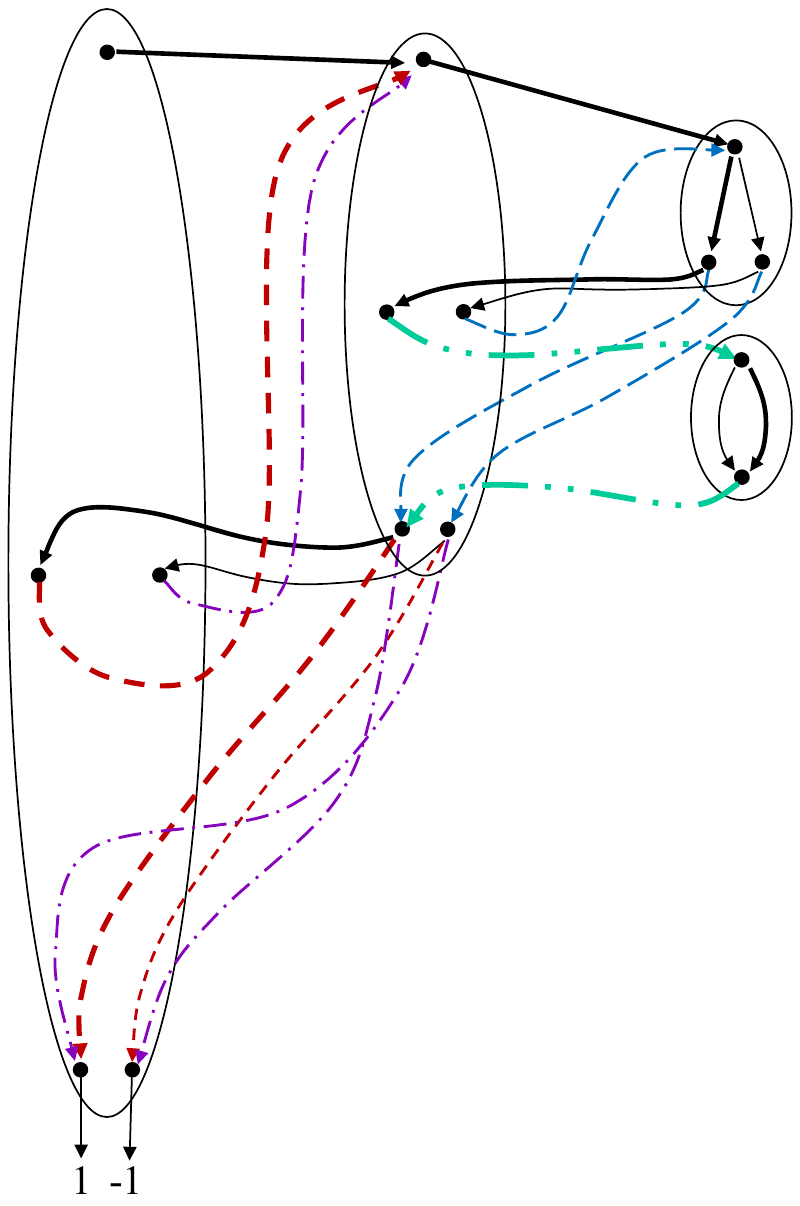}
    \caption{\protect \raggedright 
    The CFLOBDD representation of $H_4$ with the interleaved-variable ordering $\langle x_0, y_0, x_1,y_1 \rangle$.
    The matched path for $[x_0 \mapsto F, y_0 \mapsto T, x_1 \mapsto F, y_1 \mapsto T]$, which corresponds to $H_4[0,3]$, is shown in bold.}
    \label{Fi:walsh4_0_3_path}
  \end{subfigure}
  \quad
  \begin{subfigure}[t]{0.45\linewidth}
    \centering
    \includegraphics[scale=0.5]{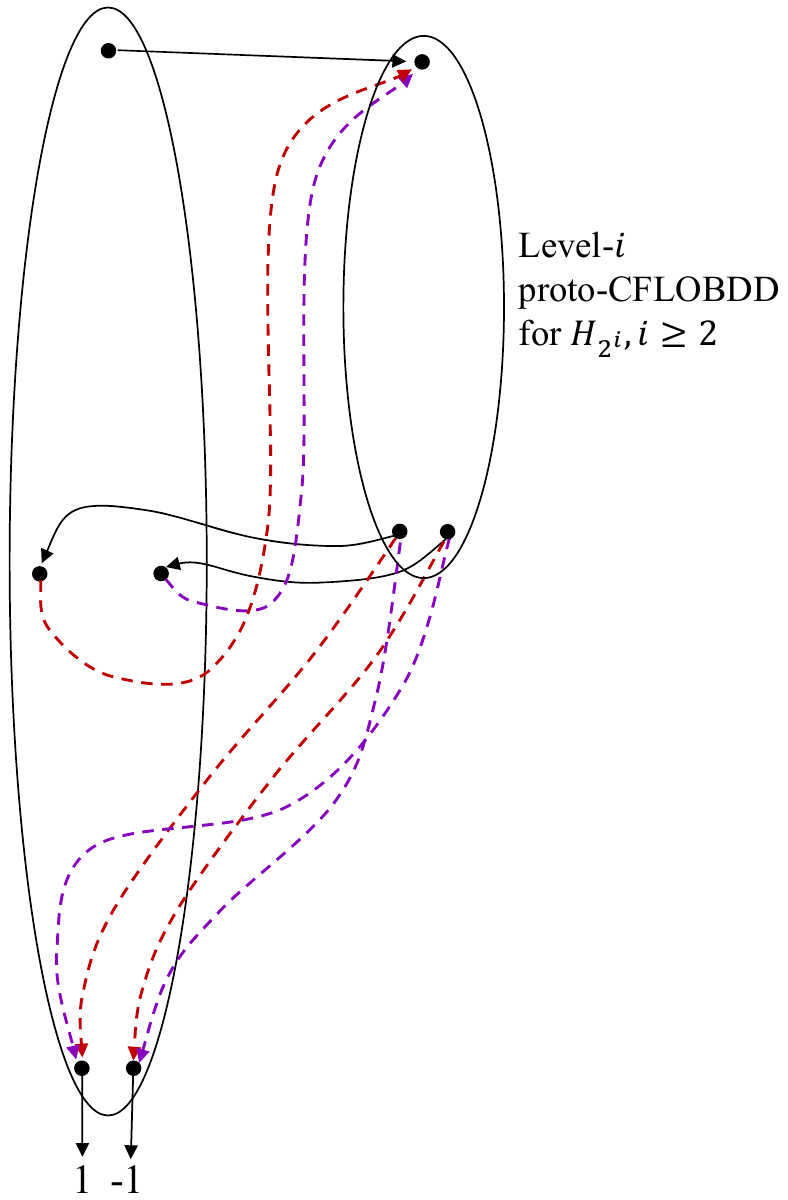}
    \caption{\protect \raggedright 
    Diagram supporting the inductive argument that, with the interleaved-variable ordering, the members of $\HadamardFamily = \{ H_{2^i} \mid i \geq 1 \}$ can be constructed by successively introducing a new outermost grouping at one greater level.
    At each step, the same pattern of ``calls'' is used for the A- and B-connections, and their return edges.
    }
    \label{Fi:walshKGeneralCase}
  \end{subfigure}
  \caption{\protect \raggedright 
  Construction of successively larger members of $\HadamardFamily = \{ H_{2^i} \mid i \geq 1 \}$.
  At level-$(i\textrm{+}1)$, each matched path makes two sequential invocations of the level-$i$ grouping (for $H_{2^i}$), thereby creating $H_{2^{i+1}} = H_{2^i} \tensor H_{2^i}$. 
  }
  \label{Fi:walshK}
\end{figure}

\subsection{Encoding $H_4$ and Other Members of $\HadamardFamily$ with a CFLOBDD}
\label{Se:EncodingHFour}

\figref{walsh4_0_3_path} shows the CFLOBDD representation of Hadamard matrix $H_4$ with the variable ordering $\langle x_0, y_0, x_1,y_1 \rangle$.
In $H_4$, the level-$1$ proto-CFLOBDD is identical to the level-$1$ proto-CFLOBDD in $H_2$ (cf.\ \figref{walsh1}a and \figref{walsh4_0_3_path}).
Moreover, in $H_4$ the A-connection call and both B-connection calls are to the level-$1$ $H_2$ lookalike.

Consider how \figref{walsh4_0_3_path} encodes $H_4[0,3] = 1$.
The value is obtained by evaluating the assignment $[{x_0} \mapsto F, {y_0} \mapsto T, {x_1} \mapsto F, {y_1} \mapsto T]$, following the matched path highlighted in bold.
The path starts from the level-$2$ grouping's entry vertex.
It goes to the level-$1$ grouping's entry vertex, where $[{x_0} \mapsto F, {y_0} \mapsto T]$ is interpreted as for $H_2$---i.e., the first occurrence of $H_2$ in ``$H_2 \tensor H_2$''---in this case, returning to the leftmost middle vertex of the level-$2$ grouping.
At this point, the path follows the red dashed edge back to the level-$1$ grouping's entry vertex, where $[{x_1} \mapsto F, {y_1} \mapsto T]$ is interpreted as for $H_2$---the second occurrence of $H_2$ in ``$H_2 \tensor H_2$.''
The path then follows the matching red dashed return edge to the leftmost exit vertex of the level-$2$ grouping, and reaches terminal value $1$.

To create $H_4$ using $H_2$, we introduced a level-$2$ grouping that makes one A-connection and two B-connection ``calls'' to the level-$1$ $H_2$ lookalike, and 
thus each matched path makes two sequential invocations of $H_2$.
This pattern produces the same effect as the \emph{stacking of plies} in decision trees and BDDs.
However, rather than \emph{tripling} the size of the data structure (as with BDDs---see \figref{walsh2_bdd_tensor_product}),
the ability of CFLOBDDs to reuse parts of a data structure via a ``call'' means that there is only a \emph{constant-size increase} in going from from $H_2$ to $H_4$: one grouping with five vertices and nine edges (one A-connection, two B-connections, and six return edges).

The continuation of this pattern gives an inductive construction of the CFLOBDDs for the other members of $\HadamardFamily$.
Given the level-$i$ CFLOBDD for $H_{2^i}$, $i \ge 2$, $H_{2^{i+1}} = H_{2^i} \tensor H_{2^i}$ is created by introducing a new outermost grouping at level $i + 1$, again with five vertices and nine edges.
(See \figref{walshKGeneralCase}.)
The same pattern of ``calls'' is used for the A- and B-connections and their return edges:
each matched path makes two sequential invocations of the level-$i$ grouping for $H_{2^i}$.
In other words,

\begin{mdframed}[innerleftmargin = 3pt, innerrightmargin = 3pt, skipbelow=-0.0em]
  $\textbf{\SIP}$.
  {\em A Kronecker product $P \tensor Q$ can be represented economically in a CFLOBDD by a grouping at level $i+1$ whose A-connection ``calls'' the level-$i$ proto-CFLOBDD for $P$ and all of whose B-connection ``calls'' are to the level-$i$ CFLOBDD for $Q$.}
\end{mdframed}

\subsection{Reuse of Groupings and Compression of Boolean Functions}
\label{Se:ReuseOfGroupings}

The reason CFLOBDDs can represent certain Boolean functions in a highly compressed fashion is the reuse of groupings that the matched-path and sequential-invocation principles enable.

\subsubsection{Growth of Number of Paths with Level}
\label{Se:GrowthOfNumberOfPathsWithLevel}

Let $P(i)$ be the number of matched paths in a CFLOBDD at level $i$.
Each level-$0$ grouping has two paths, so $P(0) = 2$.
In a grouping $g$ at level $i \ge 1$, each matched path through the A-connection's level-$(i\textrm{--}1)$ proto-CFLOBDD reaches a middle vertex of $g$, where it is routed through the level-$(i\textrm{--}1)$ proto-CFLOBDD of the vertex's $B$-connection.
Let $A_j(i-1)$ be the number of matched paths through $g$'s A-connection proto-CFLOBDD to the $j^{\textit{th}}$ middle vertex of $g$.
Thus, $P(i)$ satisfies the following recurrence equation:
\begin{equation}
  \label{Eq:NumberOfPaths}
  \begin{array}{r@{\hspace{0.75ex}}c@{\hspace{0.75ex}}l@{\hspace{12.0ex}}r@{\hspace{0.75ex}}c@{\hspace{0.75ex}}l}
    P(0) & = & 2
    &
    P(i) & = & \sum_j A_j(i-1) \cdot P(i-1).
\end{array}
\end{equation}
The
total
number of matched paths through $g$'s A-connection proto-CFLOBDD is $P(i-1)$, so $\sum_j A_j(i-1) = P(i-1)$, and hence \eqref{NumberOfPaths} can be rewritten as
\begin{equation}
  \label{Eq:PathSquaring}
  P(i) = P(i-1) \cdot P(i-1),
\end{equation}
which has the solution $P(i) = 2^{2^i}$.

\begin{mdframed}[innerleftmargin = 3pt, innerrightmargin = 3pt, skipbelow=-0.0em]
  $\textbf{\GIP}$.  {\em  The number of matched paths in a CFLOBDD is \emph{squared} with each increase in level by 1 (\eqref{PathSquaring}).
  Consequently, a CFLOBDD at level $i$ has $2^{2^i}$ matched paths.}
\end{mdframed}

\subsubsection{Best-Case Compression: No-Distinction Proto-CFLOBDDs}
\label{Se:NoDistinctionProtoCFLOBDDs}


\begin{figure}
\begin{center}
  \begin{tabular}{@{\hspace{0ex}}c@{\hspace{1.5ex}}c@{\hspace{0ex}}}
    \begin{tabular}{@{\hspace{0ex}}cc@{\hspace{0ex}}}
      \begin{tabular}{c}
        \begin{tabular}{c}
          \rotatebox{90}{\includegraphics[scale=0.45]{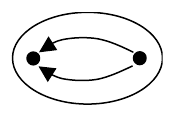}}
          \\
          \begin{minipage}{.23\textwidth}
            (a) Member at level 0
          \end{minipage}
        \end{tabular}
        \\
        \begin{tabular}{c}
          \includegraphics[scale=0.5]{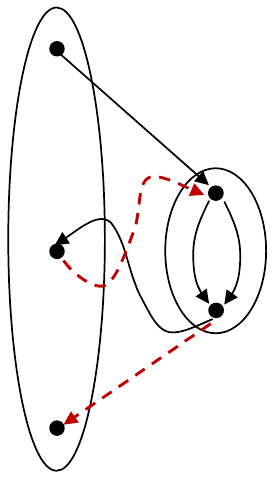}
          \\
          \begin{minipage}{.23\textwidth}
            (b) Member at level 1
          \end{minipage}
        \end{tabular}
      \end{tabular}
      &
      \begin{tabular}{c}
        \includegraphics[scale=0.45]{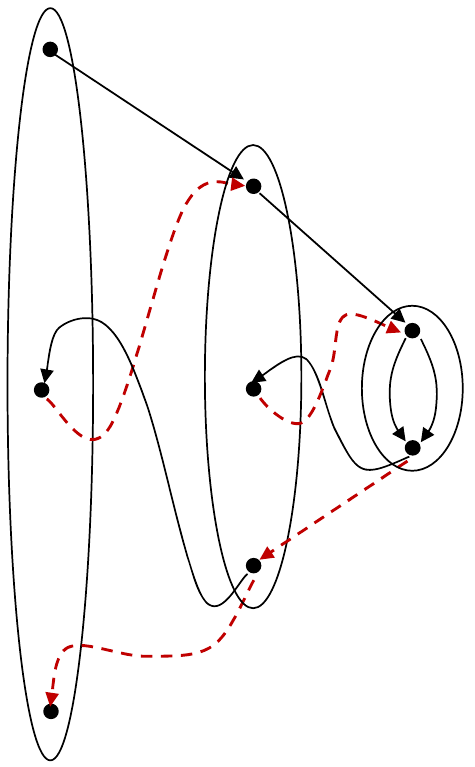}
        \\
        \begin{minipage}{.23\textwidth}
          (c) Member at level 2
        \end{minipage}
      \end{tabular}
    \end{tabular}
    &
    \begin{tabular}{c}
      \includegraphics[height=2.2in]{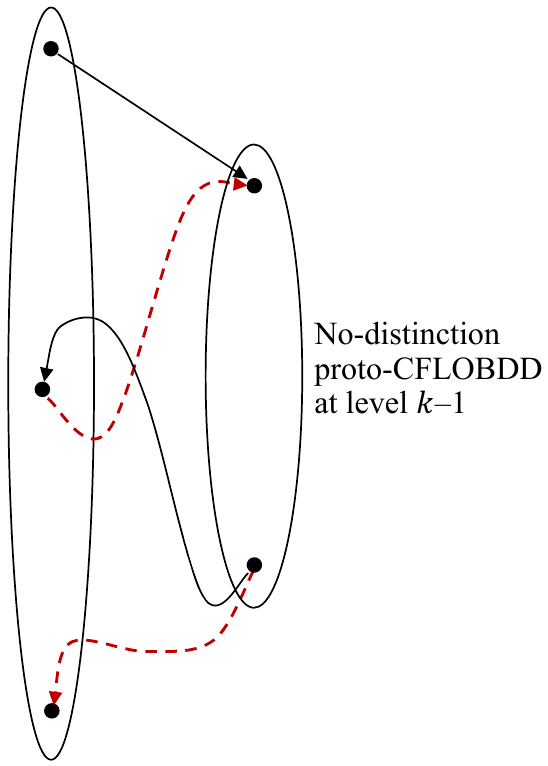}
      \\
      \begin{minipage}{.35\textwidth}
        (d) Illustration of how the no-distinction proto-CFLOBDD at level $k$ is constructed from the one at level $k-1$
      \end{minipage} 
    \end{tabular}
  \end{tabular}
\end{center}
\caption{The family of no-distinction proto-CFLOBDDs.}
\label{Fi:NoDistinctionProtoCFLOBDD}
\end{figure}

\figref{NoDistinctionProtoCFLOBDD}a, \ref{Fi:NoDistinctionProtoCFLOBDD}b, and \ref{Fi:NoDistinctionProtoCFLOBDD}c show the first three members of a family of proto-CFLOBDDs that often arise as sub-structures of CFLOBDDs:
the single-entry/single-exit proto-CFLOBDDs of levels 0, 1, and 2, respectively.
Because every matched path through each of these structures ends up at the unique exit vertex of the highest-level grouping, there is no ``decision'' to be made during each visit to a level-0 grouping.
In essence, as we work our way through such a structure during the interpretation of an assignment, the value assigned to each argument variable makes no difference.

We call this family the \emph{no-distinction proto-CFLOBDD}s.
\figref{NoDistinctionProtoCFLOBDD}d illustrates the structure of a no-distinction proto-CFLOBDD at an arbitrary level $k > 0$, which continues the pattern that one sees in the level-1 and level-2 structures:
the level-$k$ grouping has a single middle vertex, and both its $A$-connection and its one $B$-connection are to the no-distinction proto-CFLOBDD for level $k-1$.
Moreover, because the no-distinction proto-CFLOBDD at level $k$ shares all but one constant-sized grouping with the no-distinction proto-CFLOBDD at level $k-1$, each additional level costs only a constant amount of additional space.
Thus, the no-distinction proto-CFLOBDD at level $k$ is of size $O(k)$, and hence the no-distinction proto-CFLOBDDs exhibit double-exponential compression.

The Boolean-valued CFLOBDD for the constant function $\lambda x_0, x_1, \ldots, x_{2^k-1} . F$ is merely the CFLOBDD in which a value edge connects the (one) exit vertex of the no-distinction proto-CFLOBDD at level $k$ to $F$.
Likewise, in the constant function $\lambda x_0, x_1, \ldots, x_{2^k-1} . T$, the value edge connects the exit vertex of the no-distinction proto-CFLOBDD at level-$k$ to $T$.
Thus, as the number of Boolean variables increases, the best-case growth of CFLOBDDs compares with the growth of decision trees as follows:

\medskip
\noindent
\begin{minipage}{\textwidth}
\centering
  \begin{tabular}{c@{\hspace{2.0ex}}c@{\hspace{1.0ex}}|c@{\hspace{2.0ex}}c@{\hspace{2.0ex}}c@{\hspace{1.0ex}}|@{\hspace{1.0ex}}c@{\hspace{2.0ex}}c@{\hspace{2.0ex}}c@{\hspace{2.0ex}}c}
    Boolean & Number    & \multicolumn{3}{c|@{\hspace{1.0ex}}}{Decision trees} & \multicolumn{4}{c}{CFLOBDDs (best case)} \\
    \cline{3-9}
    vars.   & of paths  & height & \#nodes  & \#edges   & height\footnote{The height of a CFLOBDD is the level of the outermost grouping.} & \#groupings & \#vertices & \#edges \\
    \hline
    1       & 2         & 1      &  3     & 2         & 0      & 1 & 2 & 3 \\
    2       & 4         & 2      &  7     & 6         & 1      & 2 & 5 & 7 \\
    4       & 16        & 4      &  31    & 30        & 2      & 3 & 8 & 11 \\
    8       & 256       & 8      &  511   & 510       & 3      & 4 & 11 & 15 \\
    \vdots  & \vdots    & \vdots & \vdots & \vdots    & \vdots & \vdots & \vdots & \vdots \\
    $2^k$   & $2^{2^k}$ & $2^k$  & $2 \cdot 2^{2^k} - 1$    & $2 \cdot 2^{2^k} - 2$ & $k$  & $k+1$ & $3k + 2$ & $4k + 3$ \\
    \hline
  \end{tabular}
\end{minipage}

\medskip
\noindent
The best-case CFLOBDD size---whether measured in the number of groupings, vertices, or edges---grows linearly with the level of the outermost grouping, which is \emph{logarithmic} in the number of Boolean variables.
In contrast, decision trees grow \emph{exponentially} in the number of Boolean variables.
These observations show that Requirement (\ref{Req:Compression}) from \sectref{Requirements} is met: in the best case, a decision tree of height $2^k$ and size $2^{2^k}$ can be encoded as a level-$k$ CFLOBDD of size $k$.

\paragraph{Remark}
Because the family of no-distinction proto-CFLOBDDs is so compact, in designing CFLOBDDs we did not feel the need to mimic the ``ply-skipping transformation'' of Reduced OBDDs (ROBDDs)~\cite{toc:Bryant86,dac:BRB90}, in which ``don't-care'' nodes are removed from the representation.
In ROBDDs, in addition to reducing the size of the data structure, the chief benefit of ply-skipping is that operations can skip over levels in portions of the data structure in which no distinctions among variables are made.
Essentially the same benefit is obtained by having the algorithms that process CFLOBDDs carry out appropriate special-case processing when no-distinction proto-CFLOBDDs are encountered.
Such processing is carried out, for instance, in \lineref{Interpret:NoDistinction} of \algref{OperationalSemantics}:
in \texttt{InterpretCFLOBDD()}, when Grouping g is the head of a NoDistinctionProtoCFLOBDD, both g and the entire Assignment a can be ignored because g has only a single exit vertex.

Whereas in the best case, the CFLOBDD for a function $f$ can be double-exponentially smaller than the decision tree for $f$, ROBDDs are incapable of such a degree of compression.
\emph{Quasi-reduced BDDs} are the version of BDDs in which don't-care nodes are \emph{not} removed (i.e., plies are not skipped), and thus all paths from the root to a terminal value have length $n$, where $n$ is the number of variables.
The size of a quasi-reduced BDD is at most a factor of $n+1$ larger than the size of the corresponding ROBDD \cite[Thm.\ 3.2.3]{Book:Wegener00}.
Thus, although ROBDDs can give better-than-exponential compression compared to decision trees, what one has is not double-exponential compression:
at best, it is linear compression of exponential compression.
Moreover, in \sectref{efficient-relations} we show that the CFLOBDD for a function $g$ can be exponentially smaller than \emph{any} ROBDD for $g$.

\subsubsection{Asymptotic Best-Case Compression}

Consider a family of functions $F = \{ f_j \mid j \geq 0 \}$, where the $j^{\textit{th}}$ member has $2^j$ Boolean arguments.
The following property is a sufficient condition for the sizes of the CFLOBDDs for members of $F$ to grow linearly in the level $i$, and therefore exhibit double-exponential compression compared to decision trees:
\begin{enumerate}
  \item
    There exists a family of functions $G = \{ g_j \mid j \geq 0 \}$ that grows linearly in the level $i$.\footnote{
      The family of no-distinction proto-CFLOBDDs from \figref{NoDistinctionProtoCFLOBDD} is one such family $G$.
    }
  \item
    There exists a level $m$ such that, for all levels $i \geq m$,
    \begin{enumerate}
      \item
        \label{It:ConstantSizeGrouping}
        the number of vertices in the level-$i$ grouping of $f_i$ is a constant independent of $i$
      \item
        \label{It:ConstantNumberOfCalls}
        the level-$i$ grouping of $f_i$ makes ``procedure calls'' only to (i) the level-$(i\textrm{-}1)$ grouping used in the CFLOBDD for $f_{i\textrm{-}1}$, and (ii) level-$(i\textrm{-}1)$ groupings used in the CFLOBDD for $g_{i\textrm{-}1}$.\footnote{
          Condition~\ref{It:ConstantNumberOfCalls} can be generalized so that $f_i$ can ``call'' the $(i\textrm{-}1)$ groupings used in the CFLOBDDs for some constant number of function families $G_1$, $G_2$, $\ldots$, $G_l$ that each grow linearly in the level $i$.
        }
    \end{enumerate}
\end{enumerate}

\noindent
In such a case, the CFLOBDD for each $f_i$ is double-exponentially smaller than the decision tree for $f_i$---i.e., of size $O(i)$ rather than $O(2^{2^i})$.
As shown in \figref{walshKGeneralCase}, the family of Hadamard matrices $\HadamardFamily$ meets the above conditions.

Moreover, in all cases encountered to date, it is possible to give an explicit algorithm for constructing the $i^{\textit{th}}$ member of $F$, where the algorithm runs in time $O(i)$ and uses at most $O(i)$ space.

\Omit{
Properties (\ref{It:ConstantSizeGrouping}) and (\ref{It:ConstantNumberOfCalls}) can be illustrated using the family of Hadamard matrices $\{ H_{2^i} \mid i \geq 1 \}$, where $H_{2^{j+1}} = H_{2^j} \tensor H_{2^j}$ for $j \geq 1$.
First, consider the second matrix in the family: $H_4 = H_2 \tensor H_2$.
A representation of $H_4$ requires four Boolean variables to index the matrix's elements: $x_0$ and $x_1$ form the row index, and $y_0$ and $y_1$ form the column index.
Our convention will be that $x_0$ and $y_0$ are the most-significant bit of the row and column indexes, respectively.

\begin{figure}[tb!]
    \centering
    \includegraphics[height=3.5in]{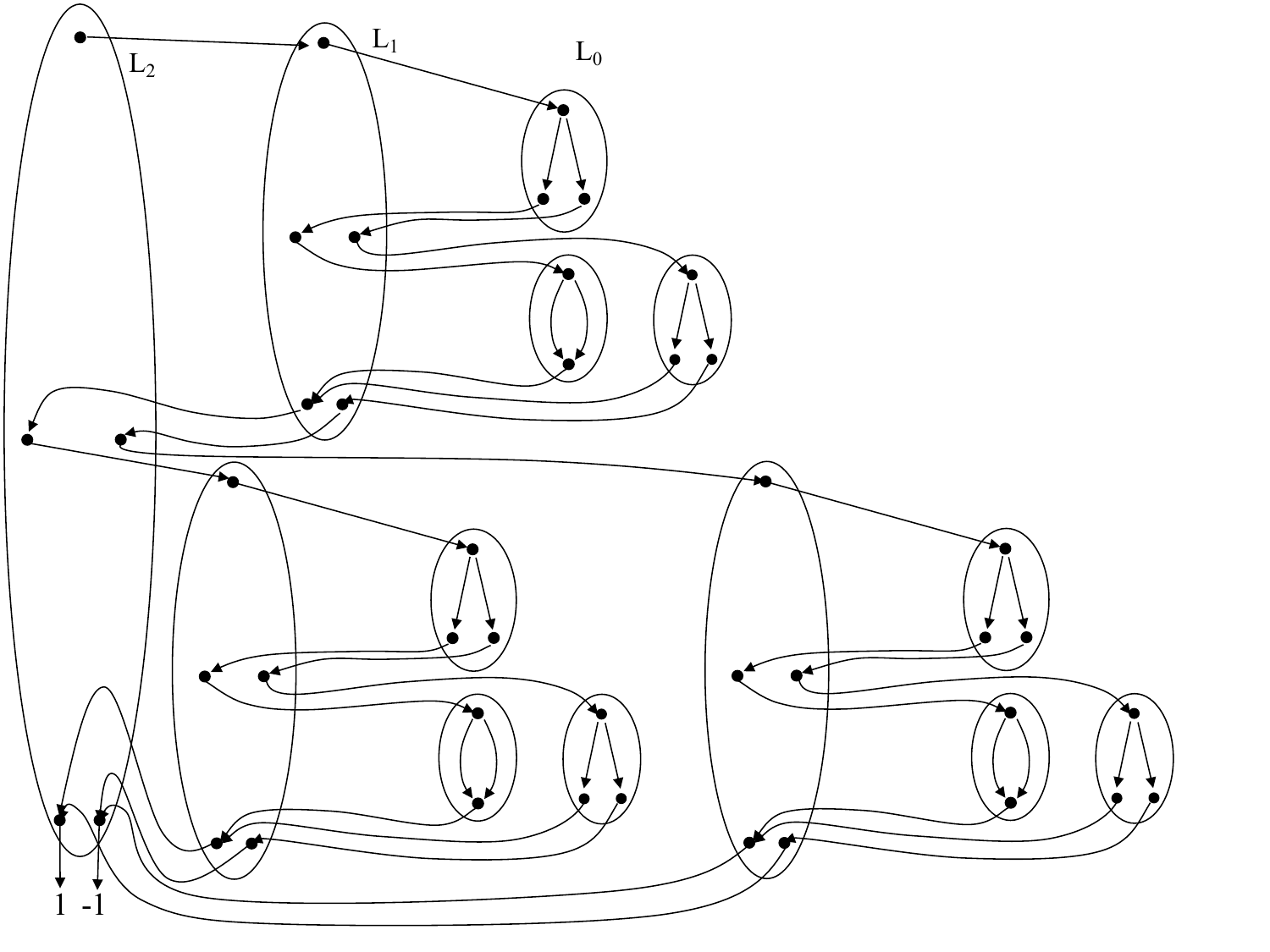}
    \caption{CFLOBDD for $H_4$---with the interleaved-variable ordering $\langle x_0, y_0, x_1,y_1 \rangle$---in fully expanded form.
    }
    \label{Fi:walsh2_expand}
\end{figure}

The CFLOBDD for $H_4$ has $4$ Boolean variables, and hence has $3$ levels: $0$, $1$, and $2$.
\figref{walsh2_expand} shows the ``fully-expanded form'' of the CFLOBDD that represents $H_4$.
(Following up on our procedure-call analogy, the fully-expanded form is what one obtains by cloning ``procedures'' (groupings) so that each copy of each ``procedure'' is called from exactly one place.)
The CFLOBDD in \figref{walsh2_expand} has groupings at three levels:
level $2$---the large oval labeled $L_2$;
level $1$---the three medium-sized ovals, such as $L_1$; and
level $0$---six fork-groupings and three don't-care groupings.
One can see that the same level-$1$ structures (and the level-$0$ structures that they ``call'') appears three times;
it is ``called'' once as $L_2$'s A-connection (for variables $\langle x_0, y_0 \rangle$) and twice as $L_2$'s B-connections (for variables $\langle x_1, y_1 \rangle$).
Note that the return edges from the leftmost B-connection of $L_2$ and the rightmost B-connection of $L_2$ are connected to $L_2$'s exit vertices in opposite orders: $[1,2]$ and $[2,1]$, respectively.

\Omit{
\begin{figure}[tb!]
    \centering
    \begin{subfigure}[t]{0.45\linewidth}
    \centering
    \includegraphics[scale=0.5]{figures/walsh2_path.pdf}
    \caption{The CFLOBDD representation of Hadamard matrix $H_4$ with the interleaved-variable ordering $\langle x_0, y_0, x_1,y_1 \rangle$.
    The \textbf{bold} arrows and arcs highlight the matched path for the assignment $[x_0 \mapsto F, y_0 \mapsto T, x_1 \mapsto F, y_1 \mapsto T]$, which corresponds to $H_4[0,3]$ (with value $1$).}
    \label{Fi:walsh4_0_3_path}
    \end{subfigure}
    \quad
    \begin{subfigure}[t]{0.45\linewidth}
    \centering
    \includegraphics[scale=0.5]{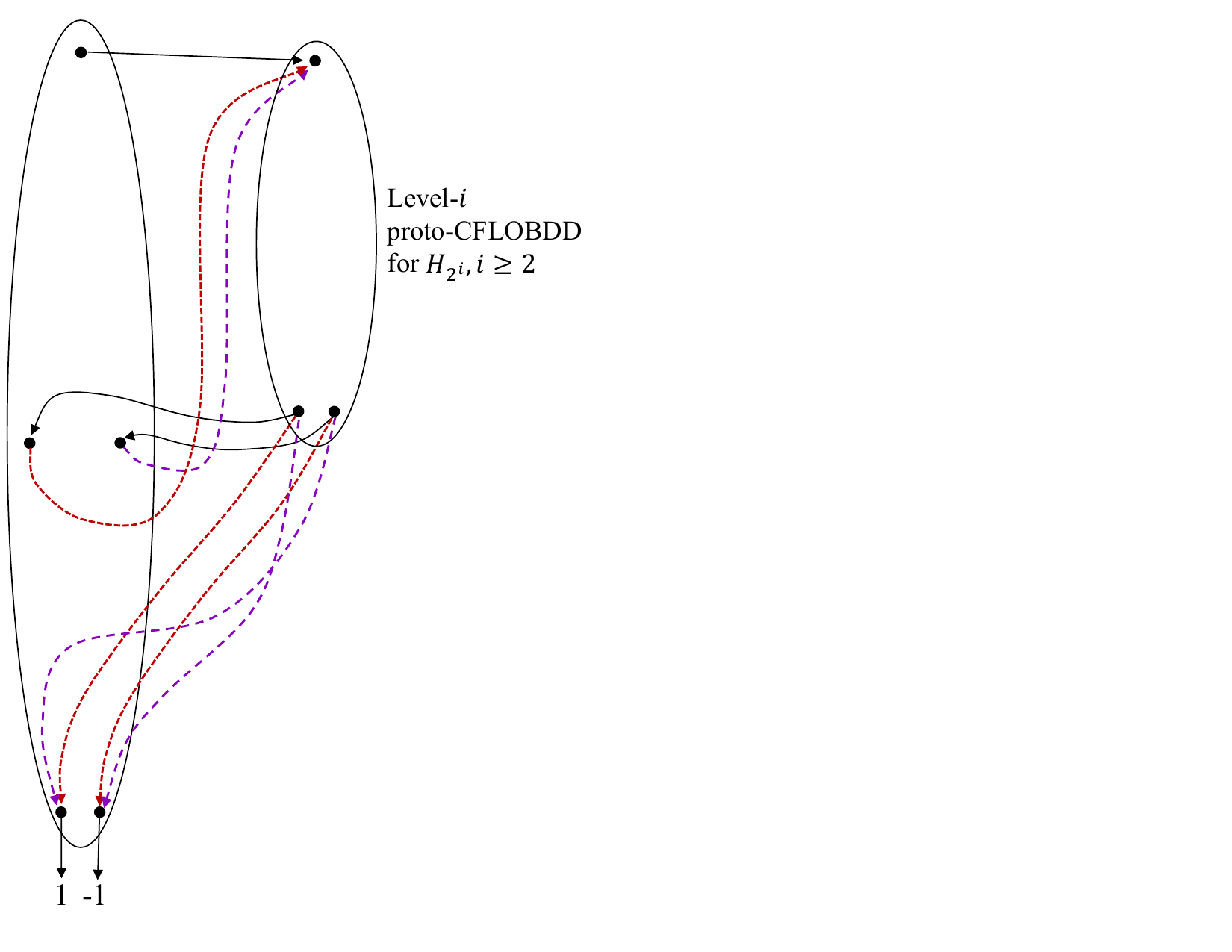}
    \caption{The pattern of ``calls'' for the A-connection, the B-connection, and the return edges in the outermost grouping.
    }
    \label{Fi:walshKGeneralCase}
    \end{subfigure}
    \caption{Diagrams supporting the inductive argument that, with the interleaved-variable ordering, the members of $\HadamardFamily = \{ H_{2^i} \mid i \geq 1 \}$ can be constructed by successively introducing a new outermost grouping at one greater level, but with the same pattern of connections as the previous outermost grouping.
    At level-$(i\textrm{+}1)$, each matched path makes two sequential invocations of the level-$i$ grouping (for $H_{2^i}$), thereby creating the representation of $H_{2^{i+1}} = H_{2^i} \tensor H_{2^i}$. 
    }
    \label{Fi:walshK}
\end{figure}
}

The actual CFLOBDD for $H_4$, shown in \figref{walsh4_0_3_path}, makes use these commonalities to reuse instances of groupings---making ``calls'' to the same procedure---thereby reducing the size of the representation to
two level-$0$ groupings, a single level-$1$ grouping, and the level-$2$ grouping.

Now consider \figref{walshKGeneralCase}, which shows how this pattern is continued to construct the CFLOBDDs for $\{ H_{2^i} \mid i \geq 1 \}$.
\begin{description}
  \item [Property (\ref{It:ConstantSizeGrouping}):]
    \figref{walshKGeneralCase} illustrates how, with the interleaved-variable ordering, the CFLOBDD for the matrix $H_{2^{i+1}}$ can be constructed by essentially cloning the level-$i$ grouping from $H_{2^i}$.
    \Omit{The level-$(i\textrm{+}1)$ grouping (i) uses the same pattern of A-connection, B-connections, and return edges found in the level-$i$ grouping, and (ii) makes just two ``calls'' on the level-$i$ grouping.}
    The size of the outermost grouping is thus constant, independent of $i$.
  \item[Property (\ref{It:ConstantNumberOfCalls}):]
    The Kronecker product in the recursive definition of the Hadamard matrices ($H_{2^{i+1}} = H_{2^i} \tensor H_{2^i}$, for $i \geq 1$) corresponds to the reuse of lower-level groupings that the matched-path principle enables.
    The value of $i$ corresponds to the level of the highest-level grouping---$H_{2^i}$ is represented by a level-$i$ grouping---and the $\tensor$ operator on the right-hand side corresponds to $H_{2^{i+1}}$'s level-$(i\textrm{+}1)$ grouping using $H_{2^i}$'s level-$i$ grouping multiple times.
    \figref{walshKGeneralCase} shows that level-$(i\textrm{+}1)$ grouping makes just two ``calls'' on the level-$i$ grouping.
\end{description}
Consequently, $H_{2^i}$ can be constructed in time $O(i)$ and space $O(i)$ (see also \sectref{HadamardMatrix} and \algref{HAlgo}).
}

No information-theoretic limit is being violated here.
Not all families of functions can be represented with CFLOBDDs in which each level has a constant number of groupings, each of constant size---and thus, not every function over Boolean-valued arguments can be represented in such a compressed fashion.
However, the potential benefit of CFLOBDDs is that, just as with BDDs, there may turn out to be enough regularity in problems that arise in practice that CFLOBDDs stay of manageable size.
Moreover, double-exponential compression (or any kind of super-exponential compression) could allow problems to be completed much faster (due to the smaller-sized structures involved), or allow far larger problems to be addressed than has been possible heretofore.



\Omit{
Recapping the discussion in this section, in the best case, a family of Boolean functions (such as the Hadamard matrices) can be succinctly represented by a family of CFLOBDDs, each member of which exhibits a \textit{double-exponential} compression in the overall size of the structure, compared to the size of the decision tree for the function.
}


\section{Canonicalness}
\label{Se:Canonicalness}

In this section, we impose some further structural restrictions on proto-CFLOBDDs and CFLOBDDs that go beyond the ideas illustrated earlier (\sectref{StructuralRestrictions}).
We then discuss how to establish that CFLOBDDs are a canonical representation of Boolean functions (\sectref{CanonicalnessHighLevel} and Appendix \sectref{canonical-proof}).

\subsection{CFLOBDDs Defined, Part II: Additional Structural Invariants}
\label{Se:StructuralRestrictions}

As described in~\sectref{Overview}, the structure of a mock-CFLOBDD consists of different groupings organized into levels, which are connected by edges in a particular fashion.
In this section, we describe additional \emph{structural invariants} that are imposed on CFLOBDDs, which go beyond the basic hierarchical structure that is provided by the
entry vertex, A-Connection, middle vertices, B-Connections, return edges, and exit vertices of a grouping.

Most of the structural invariants concern the organization of
what we call \emph{return tuples} (following the terminology introduced in \figref{ClassDefinitions}).
For a given $A$-connection edge or $B$-connection edge $c$ from grouping $g_i$ to $g_{i-1}$, the return tuple $rt_c$ associated with $c$ consists of the sequence of targets of return edges from $g_{i-1}$ to $g_i$ that correspond to $c$ (listed in the order in which the corresponding exit vertices occur in $g_{i-1}$).
Similarly, the sequence of targets of value edges that emanate from the exit vertices of the highest-level grouping $g$ (listed in the order in which the corresponding exit vertices occur in $g$) is called the CFLOBDD's \emph{value tuple}.

Return tuples represent mapping functions that map exit vertices at one level to middle vertices or exit vertices at the next greater level.
Similarly, value tuples represent mapping functions that map exit vertices of the highest-level grouping to terminal values.
In both cases, the $i^{\textit{th}}$ entry of the tuple indicates the element that the $i^{\textit{th}}$ exit vertex is mapped to.


Because the middle vertices and exit vertices of a grouping are each arranged in some fixed known order, and hence can be stored in an array, it is often convenient to assume that each element of a return tuple is simply an index into such an array.
For example, in \figref{MultipleMiddleVertices},
\begin{itemize}
  \item
    The return tuple associated with the $1^{\textit{st}}$ $B$-connection of the upper level-$1$ grouping is $[1,2]$.
  \item
    The return tuple associated with the $2^{\textit{nd}}$ $B$-connection of the upper level-$1$ grouping is $[2,3]$.
  \item
    The return tuple associated with the $A$-connection of the level-$2$ grouping is $[1,2,3]$.
  \item
    The value tuple associated with the CFLOBDD is the $2$-tuple $[F,T]$.
\end{itemize}

\subsubsection*{Rationale}
The structural invariants are designed to ensure that---for a given order on the Boolean variables---each Boolean function has a unique, canonical representation as a CFLOBDD.
In reading \defref{CFLOBDD} below, it will help to keep in mind that the goal of the invariants is to force there to be a \emph{unique} way to fold a given decision tree into a CFLOBDD that represents the same Boolean function.
The decision-tree folding method is discussed in \sectref{CanonicalnessHighLevel} and Appendix~\sectref{canonical-proof}, but the main characteristic of the folding method is that it works greedily, left to right.
This directional bias shows up in structural invariants~\ref{Inv:1}, \ref{Inv:2a}, and \ref{Inv:2b}.

We can now complete the formal definition of a CFLOBDD.

\begin{definition}[Proto-CFLOBDD and CFLOBDD]\label{De:CFLOBDD}
    A \emph{proto-CFLOBDD} $n$ is a mock-proto-CFLOBDD (\defrefs{MockCFLOBDD}{MockProtoCFLOBDD}) in which every grouping/proto-CFLOBDD in $n$ satisfies the \emph{structural invariants} given below.  
    In particular, let $c$ be an $A$-connection edge or $B$-connection edge from grouping $g_i$ to $g_{i-1}$, with associated return tuple $rt_c$.
    \begin{enumerate}
      \item
        \label{Inv:1}
        If $c$ is an $A$-connection, then $rt_c$ must map the exit vertices of $g_{i-1}$ one-to-one, and in order, onto the middle vertices of $g_i$:
        Given that $g_{i-1}$ has $k$ exit vertices, there must also be $k$ middle vertices in $g_i$, and $rt_c$ must be the $k$-tuple $[1,2,\ldots,k]$.
        (That is, when $rt_c$ is considered as a map on indices of exit vertices of $g_{i-1}$, $rt_c$ is the identity map.)
      \item
        \label{Inv:2}
        If $c$ is the $B$-connection edge whose source is middle vertex $j+1$ of $g_i$ and whose target is $g_{i-1}$, then $rt_c$ must meet two conditions:
        \begin{enumerate}
          \item
            \label{Inv:2a}
            It must map the exit vertices of $g_{i-1}$ one-to-one (but not necessarily onto) the exit vertices of $g_i$.
            (That is, there are no repetitions in $rt_c$.)
          \item
            \label{Inv:2b}
            It must ``compactly extend'' the set of exit vertices in $g_i$ defined by the return tuples for the previous $j$ $B$-connections:
            Let $rt_{c_1}$, $rt_{c_2}$, $\ldots$, $rt_{c_j}$ be the return tuples for the first $j$ $B$-connection edges out of $g_i$.
            Let $S$ be the set of indices of exit vertices of $g_i$ that occur in return tuples $rt_{c_1}$, $rt_{c_2}$, $\ldots$, $rt_{c_j}$, and let $n$ be the largest value in $S$.
            (That is, $n$ is the index of the rightmost exit vertex of $g_i$ that is a target of any of the return tuples $rt_{c_1}$, $rt_{c_2}$, $\ldots$, $rt_{c_j}$.)
            If $S$ is empty, then let $n$ be $0$.
    
            \hspace*{1.5ex}
            Now consider $rt_c$ ($= rt_{c_{j+1}}$).
            Let $R$ be the (not necessarily contiguous) sub-sequence of $rt_c$ whose values are strictly greater than $n$.
            Let $m$ be the size of $R$.
            Then $R$ must be exactly the sequence $[n+1, n+2, \ldots, n+m]$.
        \end{enumerate}
      \item
        \label{Inv:3}
        While a proto-CFLOBDD may be used as a substructure more than once (i.e., a proto-CFLOBDD may be {\em pointed to\/} multiple times), a proto-CFLOBDD never contains two separate {\em instances\/} of equal proto-CFLOBDDs.\footnote{
          \label{Footnote:CFLOBDDEquality}
          Equality on proto-CFLOBDDs is defined inductively on their hierarchical structure in the obvious manner.
          Two CFLOBDDs are equal when (i) their proto-CFLOBDDs are equal, and (ii) their value tuples are equal.
          \sectref{HashConsing} discusses how hash-consing \cite{Tokyo-TR-74-03:Goto74} can be used to enforce the invariant that only a single representative CFLOBDD/proto-CFLOBDD exists for each equivalence class of CFLOBDD/proto-CFLOBDD values.
          However, when we wish to consider the possibility that \emph{multiple} data-structure instances exist that are equal---as we do shortly in \sectref{CanonicalnessHighLevel}---we say that such structures are ``isomorphic'' or ``equal (up to isomorphism).''  

          \hspace*{1.5ex}
          To reduce clutter,
          our diagrams often show multiple instances of the two kinds of level-0 groupings; in fact, a CFLOBDD can contain at most one copy of each.
        }
      \item
        \label{Inv:4}
        For every pair of $B$-connections $c$ and $c'$ of grouping $g_i$, with associated return tuples $rt_c$ and $rt_{c'}$, if $c$ and $c'$ lead to level $i-1$ proto-CFLOBDDs, say $p_{i-1}$ and $p'_{i-1}$, such that $p_{i-1} = p'_{i-1}$, then the associated return tuples must be different (i.e., $rt_c \neq rt_{c'}$).
    \end{enumerate}

    A \emph{CFLOBDD} at level $k$ is a mock-CFLOBDD at level $k$ for which
    \begin{enumerate}[resume]
      \item
        \label{Inv:5}
        The grouping at level $k$ heads a proto-CFLOBDD.
      \item
        \label{Inv:6}
        The value tuple associated with the grouping at level $k$ maps each exit vertex to a \emph{distinct} value.
    \end{enumerate}
\end{definition}

\begin{figure*}
\centering
\begin{tabular}{c@{\hspace{.33in}}c}
  \includegraphics[height=1.62in]{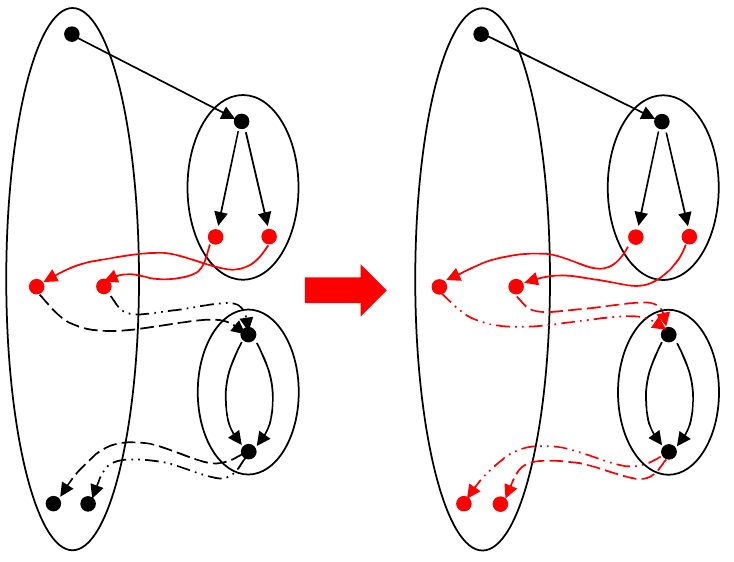}
  &
  \includegraphics[height=1.62in]{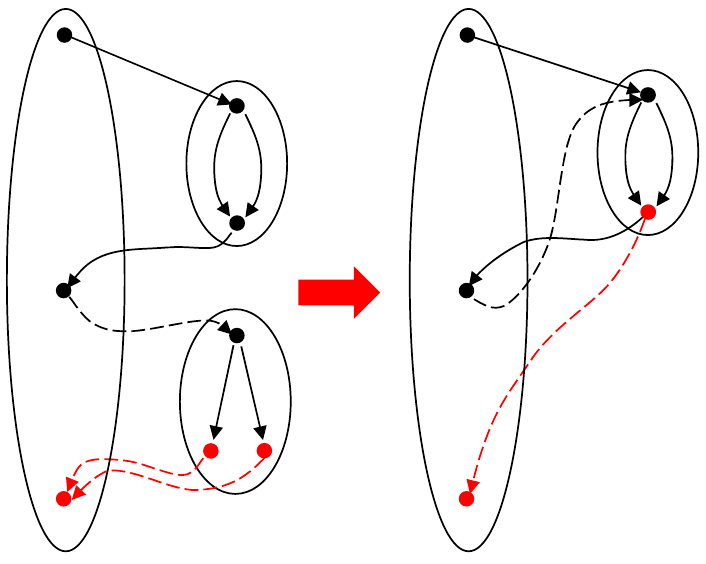}
  \\
  (a) Structural invariant~\ref{Inv:1}
  &
  (b) Structural invariant~\ref{Inv:2a}
  \\
  \includegraphics[height=1.62in]{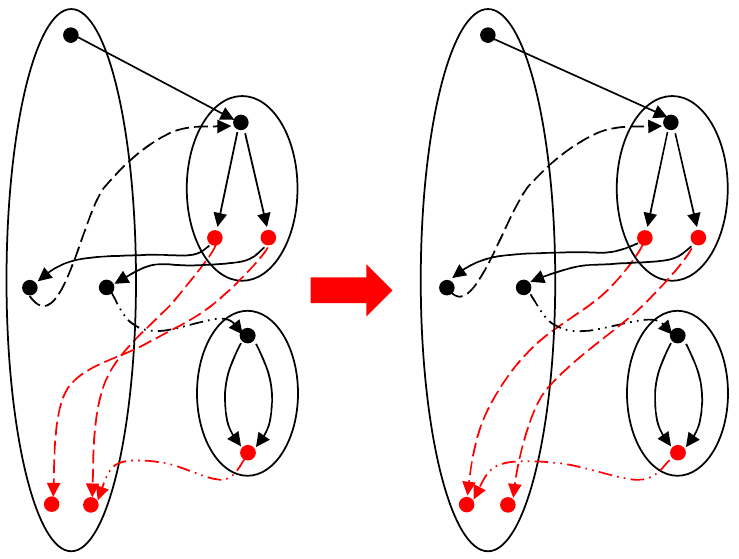}
  &
  \includegraphics[height=1.62in]{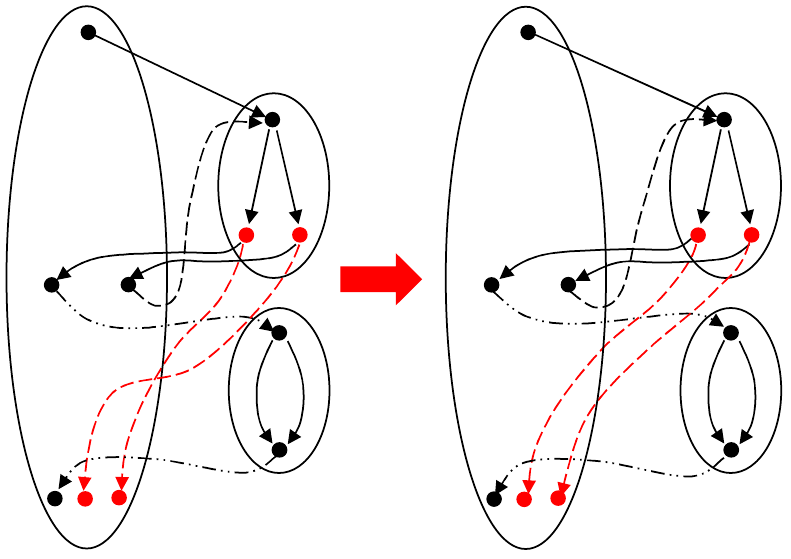}
  \\
  (c) Structural invariant~\ref{Inv:2b}
  &
  (d) Another case of Structural invariant~\ref{Inv:2b}
  \\
  \includegraphics[height=1.62in]{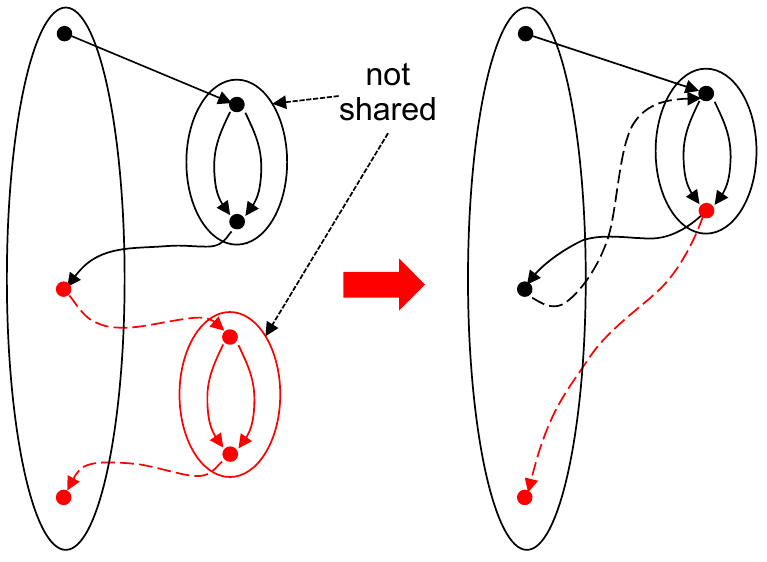}
  &
  \includegraphics[height=1.62in]{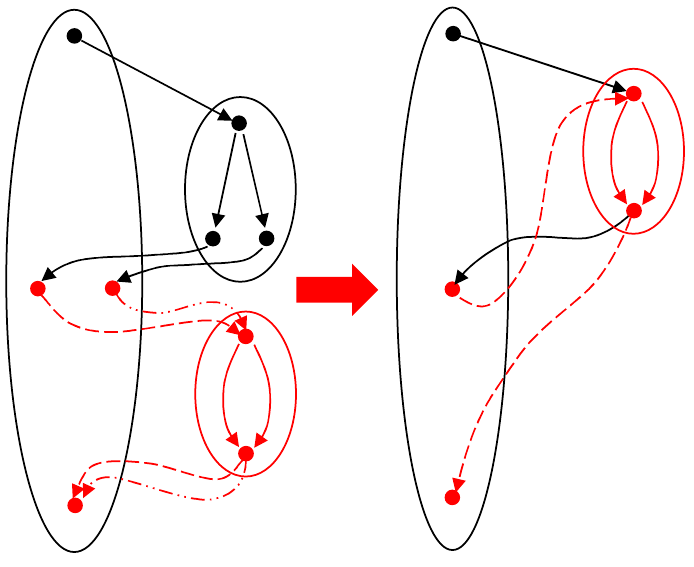}
  \\
  (e) Structural invariant~\ref{Inv:3}
  &
  (f) Structural invariant~\ref{Inv:4}
  \\
  \includegraphics[height=1.75in]{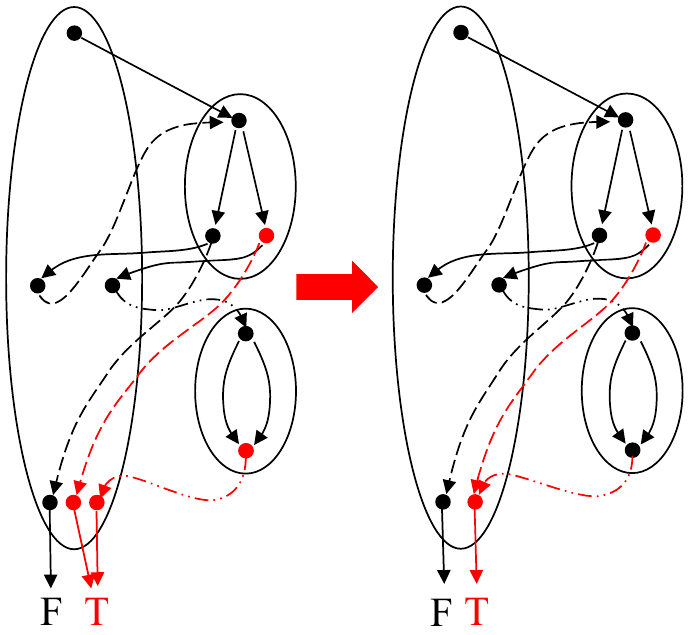}
  \\
  (g) Structural invariant~\ref{Inv:6}
\end{tabular}
\caption{\protect \raggedright 
To the left of each arrow, a mock-proto-CFLOBDD that violates the indicated structural invariant; to the right, a corrected proto-CFLOBDD.
Invariant violations and their rectifications are shown in red. 
}
\label{Fi:StructuralInvariantsIllustrated}
\end{figure*}

\figref{StructuralInvariantsIllustrated} illustrates structural invariants~\ref{Inv:1}, \ref{Inv:2a}, \ref{Inv:2b}, \ref{Inv:3}, \ref{Inv:4}, and~\ref{Inv:6}.
In each case, a mock-proto-CFLOBDD that violates one of the structural invariants is shown on the left, and an equivalent proto-CFLOBDD that satisfies the structural invariants is shown on the right.

The CFLOBDD from \figref{MultipleMiddleVertices} also illustrates the structural invariants. 
\begin{itemize}
  \item
    The level-1 grouping pointed to by the $A$-connection of the level-2 grouping has three exit vertices.
    These are the targets of two return tuples from the uppermost level-0 fork grouping.
    Note that the blue dashed lines in this proto-CFLOBDD correspond to $B$-connection 1 and $rt_1$, whereas the red short-dashed lines correspond to $B$-connection 2 and $rt_2$.

    \hspace*{1.5ex}
    In the case of $rt_1$, the set $S$ mentioned in
    structural invariant~\ref{Inv:2b} is empty;
    therefore, $n = 0$ and $rt_1$ is constrained by
    structural invariant~\ref{Inv:2b} to be $[1,2]$.

    \hspace*{1.5ex}
    In the case of $rt_2$, the set $S$ is $\{1,2\}$, and
    therefore $n = 2$.  The first entry of $rt_2$, namely 2,
    falls within the range $[1..2]$; the second entry of $rt_2$
    lies outside that range and is thus constrained to be $3$.
    Consequently, $rt_2 = [2,3]$.

    \hspace*{1.5ex}
    Also in \figref{MultipleMiddleVertices}, because the level-1
    grouping pointed to by the $A$-connection of the level-2 grouping
    has three exit vertices, these are constrained by structural
    invariant~\ref{Inv:1} to map in order over to the three middle
    vertices of the level-2 grouping; i.e., the corresponding return tuple is $[1,2,3]$.
  \item
    The $B$-connections for the
    first and second middle vertices of the level-2 grouping
    are to the same level-1 grouping; however, the two return tuples
    are different, and thus are consistent with structural
    invariant~\ref{Inv:4}.
\end{itemize}

One artifact of the greedy, left-to-right decision-tree folding method used in \sectref{CanonicalnessHighLevel} and Appendix~\sectref{canonical-proof} is that matched paths through proto-CFLOBDDs (and hence through CFLOBDDs) have a left-to-right bias in the ordering of paths with respect to Boolean-variable-to-Boolean-value
assignments.
This bias is captured in the following proposition.

\begin{Pro}[Lexicographic-Order Proposition]\label{Prop:LexicographicOrder}
Let $ex_C$ be the sequence of exit vertices of proto-CFLOBDD $C$.
Let $ex_L$ be the sequence of exit vertices reached by traversing $C$ on each possible
Boolean-variable-to-Boolean-value assignment, generated in lexicographic
order of assignments.
Let $s$ be the subsequence of $ex_L$ that retains just the leftmost occurrences
of members of $ex_L$ (arranged in order as they first appear in $ex_L$).
Then $ex_C = s$.
\end{Pro}
The proof of \propref{LexicographicOrder} is provided in Appendix~\sectref{prop-LexicographicOrder}.

Earlier in this section, the ``Rationale'' paragraph motivated the structural invariants as enforcing an implicit ``greedy left-to-right folding'' of the corresponding decision tree to create the CFLOBDD, and Figure 8 illustrates the structural invariants from a syntactic/operational viewpoint.
In contrast, \propref{LexicographicOrder} elucidates a semantic consequence of the structural invariants.\footnote{
    \sectref{CanonicalnessHighLevel} gives a high-level overview of the proof that CFLOBDDs are a canonical representation of Boolean functions.
    In the proof of canonicity in \sectref{canonical-proof}, \propref{LexicographicOrder} is used in the proof of \propref{UnfoldFoldReversability}, which establishes property (\ref{It:NoDecisionTreeRepresentedByMultipleCFLOBDDs}) from \sectref{CanonicalnessHighLevel}.
}

\begin{example}
  \propref{LexicographicOrder} can be illustrated using   \figref{MultipleMiddleVertices}.
  If we use numbers to identify exit vertices, $ex_C$ for any grouping $g$ is the sequence $[1..g.\textrm{numberOfExits}]$.
  In the upper level-$1$ grouping in \figref{MultipleMiddleVertices}, $ex_L$ is $[1,2,2,3]$, so $s$ is $[1,2,3]$.
  In the level-$1$ grouping at the lower right, $ex_L$ is $[1,1,2,2]$, so $s$ is $[1,2]$.
  In the level-$2$ grouping, $ex_L$ is $[1,1,1,1,2,2,2,2,2,2,2,2,1,1,2,2]$, so $s$ is $[1,2]$.
\end{example}


\subsection{Canonicity of CFLOBDDs}
\label{Se:CanonicalnessHighLevel}

CFLOBDDs are a canonical representation of functions over Boolean arguments, i.e., each decision tree with $2^{2^k}$ leaves is represented by exactly one isomorphism class of level-$k$ CFLOBDDs.
(The notion of isomorphism of CFLOBDDs was introduced in \footnoteref{CFLOBDDEquality}.)

\begin{theorem}[Canonicity]\label{The:Canonicity}
If $C_1$ and $C_2$ are level-$k$ CFLOBDDs for the same Boolean function over $2^k$ Boolean variables, and $C_1$ and $C_2$ use the same variable ordering, then $C_1$ and $C_2$ are isomorphic.
\end{theorem}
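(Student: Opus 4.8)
The plan is to prove a stronger, purely functional statement by induction on the level, and then deduce \theoref{Canonicity} by a short argument about value tuples. For a proto-CFLOBDD $C$ at level $j$, let $\chi_C$ denote the map sending each assignment in $\{F,T\}^{2^j}$ to the index of the exit vertex of $C$'s head grouping that it reaches; this is well defined by Decision-Tree Soundness, and \propref{LexicographicOrder} guarantees every exit vertex is reached, so $\chi_C$ is onto. The induction hypothesis I would carry is: \emph{if $C_1$ and $C_2$ are level-$j$ proto-CFLOBDDs with $\chi_{C_1}=\chi_{C_2}$, then $C_1$ and $C_2$ are equal} in the inductively-defined sense (cf.\ the footnote to Invariant~\ref{Inv:3}), and hence isomorphic. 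The base case $j=0$ is immediate: the fork grouping has $\chi$-function $F\mapsto1,\,T\mapsto2$ and the don't-care grouping has $F\mapsto1,\,T\mapsto1$; these differ, and each is the unique grouping of its kind, so equal $\chi$-functions force the same level-$0$ grouping.

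For the inductive step ($j\ge1$), write $C_t$ ($t=1,2$) with head grouping $g^{(t)}$, $A$-connection $a^{(t)}$, and, for each middle vertex $i$, $B$-connection $b^{(t)}_i$ with return tuple $rt^{(t)}_i$; by Invariant~\ref{Inv:1} the $A$-return tuple is the identity, so the middle vertices of $g^{(t)}$ correspond one-to-one, in order, to the exit vertices of $a^{(t)}$, and by \algref{OperationalSemantics}, $\chi_{C_t}(a_A,a_B) = rt^{(t)}_i(\chi_{b^{(t)}_i}(a_B))$ where $i=\chi_{a^{(t)}}(a_A)$. The crux is a \emph{recovery lemma}: for a single such $C$, two first-half assignments $a_A,a'_A$ reach the same middle vertex iff $\chi_C(a_A,\cdot)=\chi_C(a'_A,\cdot)$ as functions of $a_B$. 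One direction is trivial. For the converse, suppose $a_A,a'_A$ reach distinct middle vertices $i\ne i'$ but the two restricted functions coincide. If the $B$-connections from $i$ and $i'$ point to the same proto-CFLOBDD, then, since $rt_i,rt_{i'}$ are injective (Invariant~\ref{Inv:2a}) and $\chi_{b_i}$ is onto, $rt_i=rt_{i'}$, contradicting Invariant~\ref{Inv:4}. Otherwise, injectivity of $rt_i,rt_{i'}$ forces $\chi_{b_i}$ and $\chi_{b_{i'}}$ to induce the same partition of second-half assignments, which \propref{LexicographicOrder} upgrades to $\chi_{b_i}=\chi_{b_{i'}}$; by the induction hypothesis $b_i$ and $b_{i'}$ would then be equal, contradicting Invariant~\ref{Inv:3} (two separate equal instances inside $C$).

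Given the recovery lemma, I would finish the inductive step as follows. From $\chi_{C_1}=\chi_{C_2}$, the lemma shows $C_1$ and $C_2$ induce the same equivalence relation on first-half assignments; \propref{LexicographicOrder} says each $\chi_{a^{(t)}}$ numbers the blocks by order of first lexicographic occurrence, and since full assignments are ordered lexicographically with the first-half coordinates taking precedence, this block order is the same for both, so $\chi_{a^{(1)}}=\chi_{a^{(2)}}$ and hence $a^{(1)}=a^{(2)}$ by induction. In particular $g^{(1)}$ and $g^{(2)}$ have equally many middle vertices and identity $A$-return tuples. Fixing, for each $i$, some $a_A$ reaching middle vertex $i$, the equality $rt^{(1)}_i(\chi_{b^{(1)}_i}(a_B)) = \chi_{C_1}(a_A,a_B) = \chi_{C_2}(a_A,a_B) = rt^{(2)}_i(\chi_{b^{(2)}_i}(a_B))$, injectivity of the return tuples, and \propref{LexicographicOrder} give $\chi_{b^{(1)}_i}=\chi_{b^{(2)}_i}$, hence $b^{(1)}_i=b^{(2)}_i$ by induction; surjectivity of $\chi_{b^{(1)}_i}$ then forces $rt^{(1)}_i=rt^{(2)}_i$. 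All structural components of $g^{(1)}$ and $g^{(2)}$ therefore agree, so $C_1=C_2$. For the theorem itself: given level-$k$ CFLOBDDs $D_1,D_2$ for the same function $f$, with proto-CFLOBDD parts $C_t$ and value tuples $v_t$, Invariant~\ref{Inv:6} makes each $v_t$ injective, so $f=v_t\circ\chi_{C_t}$ forces $\chi_{C_t}$ to induce the same ordered partition of assignments as $f$ (blocks ordered by first lexicographic occurrence, by \propref{LexicographicOrder}); hence $\chi_{C_1}=\chi_{C_2}$, so $C_1=C_2$ by the claim, and then $v_1=v_2$ since $v_t(i)$ must be the common value of $f$ on the $i$-th block. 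Thus $D_1$ and $D_2$ are equal, a fortiori isomorphic.

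The step I expect to be the main obstacle is the recovery lemma: it is where Invariants~\ref{Inv:2a}, \ref{Inv:3}, and~\ref{Inv:4} must be combined with \propref{LexicographicOrder} and the induction hypothesis, and the delicate points are (i) converting ``same induced partition of assignments'' into ``same $\chi$-function,'' so that the induction hypothesis---which is stated for $\chi$-functions---actually applies, and (ii) separately handling the two ways in which distinct $B$-connections leaving a grouping can produce the same residual behavior.
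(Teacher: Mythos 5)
Your proof is correct, and it takes a genuinely different route from the paper's. The paper establishes uniqueness by defining an explicit $\Unfold$ operation from CFLOBDDs to decision trees, pairing it with the deterministic folding procedure of Construction~\ref{Constr:DecisionTreeToCFLOBDD}, and showing via a trace-reversal argument (\propref{UnfoldFoldReversability}) that $\Fold(\Unfold(C))$ is isomorphic to $C$; uniqueness then follows because $\Fold$ is a function of the decision tree alone. You instead work purely semantically: you attach to each proto-CFLOBDD its exit-vertex function $\chi$, and prove by induction on level that $\chi$ determines the proto-CFLOBDD, with the ``recovery lemma'' doing the real work of reconstructing the middle-vertex partition from the residual functions $\chi_C(a_A,\cdot)$. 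Your argument is tighter in one respect---it isolates exactly which invariant blocks which failure mode (Invariants~\ref{Inv:2a} and~\ref{Inv:4} when two middle vertices share a $B$-connection, Invariant~\ref{Inv:3} when they do not, and Invariants~\ref{Inv:1} and~\ref{Inv:2b} entering only through \propref{LexicographicOrder} to pin down the numbering of blocks by first lexicographic occurrence)---and it avoids materializing decision trees and traces altogether. What it does not give you, and what the paper's route produces as a byproduct, is the completeness half (Obligation~2 / Requirement~\ref{Req:DTCompleteness}): the explicit folding construction shows that every decision tree of the right height \emph{has} a CFLOBDD representation, which the paper needs independently. Both proofs lean on \propref{LexicographicOrder} in essentially the same place, namely to convert ``same induced partition of assignments'' into ``same indexed $\chi$-function'' so that the inductive hypothesis applies; your flagging of that conversion as the delicate point is exactly right, and your handling of it is sound. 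Finally, note that your conclusion is equality in the inductive sense of the footnote to Invariant~\ref{Inv:3}, which is the paper's notion of isomorphism, so the deduction of \theoref{Canonicity} from Invariant~\ref{Inv:6} at top level goes through as you describe.
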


To prove this theorem, we make use of \obsref{DecisionTreesRepresentBooleanFunctions}, and argue not in terms of Boolean functions but in terms of \emph{representations} of Boolean functions---specifically, we relate two kinds of Boolean-function representations
\begin{itemize}
  \item
    the decision tree $T_B$ for a Boolean function $B$, using some fixed, but otherwise unspecified, variable ordering Ord, and
  \item
    the CFLOBDD for $B$, again using variable ordering Ord.
\end{itemize}
By \obsref{DecisionTreesRepresentBooleanFunctions}, we use $T_B$ as a stand-in for $B$, thereby avoiding having to talk about $B$ itself.
In particular, we must establish that three properties hold:
\begin{enumerate}
  \item
    \label{It:CFLOBDDRepresentsADecisionTree}
    Every level-$k$ CFLOBDD represents a decision tree with
    $2^{2^k}$ leaves.
  \item
    \label{It:DecisionTreeRepresentedByDecisionTree}
    Every decision tree with $2^{2^k}$ leaves is represented
    by some level-$k$ CFLOBDD.
  \item
    \label{It:NoDecisionTreeRepresentedByMultipleCFLOBDDs}
    No decision tree with $2^{2^k}$ leaves is represented by
    more than one level-$k$ CFLOBDD
    (up to isomorphism).
\end{enumerate}
The proof that CFLOBDDs are a canonical representation of Boolean functions is in Appendix~\sectref{canonical-proof}.

We already showed that Obligation~\ref{Obligation:1} is satisfied in \sectref{CFLOBDDOperationalSemantics}.

Obligation~\ref{Obligation:2} is established by showing that there is a recursive procedure for constructing a level-$k$ CFLOBDD from an arbitrary decision tree with $2^{2^k}$ leaves (i.e., of height $2^k$)---see Construction \ref{Constr:DecisionTreeToCFLOBDD} in Appendix~\sectref{canonical-proof}.
In essence, the construction shows how such a decision tree can be folded together to form a CFLOBDD
that represents the same Boolean function.
The construction ensures that the structural invariants are obeyed.

Obligation~\ref{Obligation:3} is established by showing that (i) unfolding a CFLOBDD $C$ into a decision tree $T$ and then (ii) folding $T$ back to a CFLOBDD yields a CFLOBDD that is isomorphic to $C$.
In particular, the folding-back step applies the same algorithm we use to establish Obligation~\ref{Obligation:2}, namely, Construction \ref{Constr:DecisionTreeToCFLOBDD} from Appendix~\sectref{canonical-proof}.
Construction \ref{Constr:DecisionTreeToCFLOBDD} is a \emph{deterministic algorithm}, and thus the proof establishes that $T$ can only be mapped to a CFLOBDD $C'$ that is isomorphic to $C$.
(See \propref{UnfoldFoldReversability}.)

Note that Obligation~\ref{Obligation:1} and \ref{Obligation:2} are exactly Requirements~(\ref{Req:DTSoundness}) and (\ref{Req:DTCompleteness}) from \sectref{Requirements}, respectively.
Moreover, Obligations~\ref{Obligation:1}--\ref{Obligation:3} together show that
Requirement~(\ref{Req:Canonicity}) from \sectref{Requirements} is met.




\section{Pragmatics}
\label{Se:Pragmatics}

The structure of the groupings in a CFLOBDD is acyclic:
a level-$k$ grouping has calls exclusively to groupings at level $k\text{-}1$;
conversely, a given grouping at level $k\text{-}1$ can be called from multiple groupings, but only ones at level $k$.
This property allows CFLOBDDs to be implemented in a functional style without side-effects.
Moreover, because groupings are acyclic, storage can be managed via smart-pointer-based reference counting.

The remainder of this section discusses pragmatics---namely, how some of the standard techniques for working with a functional data structure apply to CFLOBDDs.
All three of the techniques discussed contribute to an implementation being able to satisfy Requirement (\ref{Req:ComputationalEfficiency}) that operations on a CFLOBDD run in time 
polynomial in the sizes of (i) the input
CFLOBDDs, or (ii) the input CFLOBDDs and the output CFLOBDD.

\subsection{Hash-Consing of Groupings and CFLOBDDs to Create Unique Representatives}
\label{Se:HashConsing}

Hash-consing \cite{Tokyo-TR-74-03:Goto74} enforces the invariant that only a single representative exists for each value constructed from some datatype.
Hash-consing should not be confused with canonicity (\sectref{CanonicalnessHighLevel} and Appendix~\sectref{canonical-proof}).
Canonicity is a semantic property: if two CFLOBDDs $C_1$ and $C_2$ represent the same function, then $C_1$ and $C_2$ are isomorphic.
Hash-consing concerns concrete memory representations: for a given data-structure construction pattern, only a single representative exists in memory, no matter how many times that value arises in a computation.

However, because canonicity holds for CFLOBDDs, an implementation that uses hash-consing\footnote{
  It can also be useful to use hash-consing for the objects of classes {\tt ReturnTuple}, {\tt PairTuple}, and {\tt ValueTuple}.
}
satisfies an even stronger form of equivalence.
In particular, \theoref{Canonicity} can be restated to read ``$\ldots$ then $C_1$ and $C_2$ are identical.''

Because the operations that construct {\tt Grouping\/}s and {\tt CFLOBDD\/}s involve a certain amount of processing before the object being constructed is finally complete, we will assume that two operations, named {\tt RepresentativeGrouping\/} and {\tt RepresentativeCFLOBDD}, are available for explicitly maintaining the tables of representative {\tt Grouping\/}s and {\tt CFLOBDD}s, respectively.
For instance, a call {\tt RepresentativeGrouping(g)\/} checks to see whether a representative for {\tt g\/} is already in the table of representative {\tt Grouping\/}s;
if there is such a representative, say {\tt h}, then {\tt g\/} is discarded and {\tt h} is returned as the result;
if there is no such representative, then {\tt g\/} is installed in the table and returned as the result.
The operations {\tt RepresentativeForkGrouping\/} and {\tt RepresentativeDontCareGrouping\/} return the unique representatives of types {\tt ForkGrouping\/} and {\tt DontCareGrouping}, respectively.

Operations discussed in \sectref{cflobdd-algos} that create {\tt InternalGrouping}s, such as {\tt PairProduct} (\algref{PairProduct}) and {\tt Reduce} (\algref{Reduce}), have the following form:
\begin{center}
{\tt
\begin{minipage}{\columnwidth}
\begin{tabbing}
Op\=eration() \{ \+ \\
    $\ldots$ \\
    Internal\=Grouping g = new InternalGrouping(k); \\
    $\ldots$ \\
    // Operations to fill in the members of g, including g.AConnection and the \\
    // elements of array g.BConnections, with level k-1 Groupings \\
    $\ldots$ \\
    return RepresentativeGrouping(g); \- \\
\}
\end{tabbing}
\end{minipage}
}
\end{center}
The operation {\tt NoDistinctionProtoCFLOBDD} (\algref{NoDistinctionProtoCFLOBDDAlgorithm}), which constructs the members of the family of no-distinction proto-CFLOBDDs depicted in \figref{NoDistinctionProtoCFLOBDD}, also has this form.

\texttt{RepresentativeCFLOBDD} is similar to \texttt{RepresentativeGrouping}, but in addition to a \texttt{Grouping} argument, it also has a value-tuple argument.
The operation {\tt ConstantCFLOBDD} (\algref{ConstantCFLOBDD}) illustrates the use of {\tt RepresentativeCFLOBDD}:
{\tt ConstantCFLOBDD(k,v)} returns a hash-consed CFLOBDD that represents a constant function of the form $\lambda x_0, x_1, \ldots, x_{2^i-1} . v$.

In our implementation, we maintain the invariant that the Groupings that appear in the hash-consing tables are the heads of fully-fledged proto-CFLOBDDs, not mock-proto-CFLOBDDs---i.e., structural invariants (\ref{Inv:1})--(\ref{Inv:4}) of \defref{CFLOBDD} hold.
When a proto-CFLOBDD $p$ is associated with terminal values to create a CFLOBDD $c$, it is necessary to ensure that structural invariant (\ref{Inv:6}) holds.
In particular, if there are any duplicate terminal values, a ``reduction'' step is applied (see \algref{Reduce} of \sectref{BinaryOperationsOnCFLOBDDs}), which may cause smaller versions of some of the groupings in $p$ to be constructed.
The original groupings would be collected if their reference counts go to 0.
However, there is never any issue of the hash-cons tables being polluted by mock-proto-CFLOBDDs that violate the proto-CFLOBDD structural invariants. 

\subsection{Equality Testing for CFLOBDDs and proto-CFLOBDDs}
\label{Se:CFLOBDDEquality}

As discussed in \sectref{HashConsing}, the combined effect of hash-consing and canonicity is that an implementation can maintain the invariant that, at any given time, there is a unique concrete memory representation of a given Boolean function.
Consequently, it is possible to test in unit time---by comparing two pointers---whether two variables of type \texttt{CFLOBDD}
represent
the same Boolean function.
This property is important in user-level applications in which various kinds of data are implemented using class \texttt{CFLOBDD}.
For example, in applications structured as fixed-point-finding loops, this property provides a unit-cost test of whether the fixed-point has been reached.

Again, because of the use of hash-consing, it is also possible to test whether two variables of type {\tt Grouping\/} are equal via a single pointer comparision.
Because each grouping is always the highest-level grouping of some proto-CFLOBDD, the equality test on {\tt Grouping}s is really a test of whether two proto-CFLOBDDs are equal.
The property of being able to test two proto-CFLOBDDs for equality quickly is important because proto-CFLOBDD equality tests are used during the various operations on CFLOBDDs to maintain the structural invariants from \defref{CFLOBDD}.

Finally, the ability to test two proto-CFLOBDDs for equality quickly also allows some functions---typically near the beginning of the function---to identify important special-case values of parameters, which can lead to faster performance.
For instance, in \algref{OperationalSemantics}, \lineref{Interpret:NoDistinction}, we saw how testing whether the argument g is a NoDistinctionProtoCFLOBDD allows further recursive calls to \texttt{InterpretGrouping()} to be short-circuited.

\subsection{Function Caching}
\label{Se:FunctionCaching}

A function cache (or \emph{memo function} \cite{Edinburgh-MIP-R-29:Michie67}) for a function $F$ is an associative-lookup table---typically a hash table---of pairs of the form $[x,F(x)]$, keyed on the value of $x$.
The table is consulted each time $F$ is applied to some argument, and updated after a return value is computed for a never-before-seen argument.
The technique saves the cost of re-performing the computation of $F$ for an argument on which $F$ has previously been called, at the expense of performing a lookup on $F$'s argument at the beginning of each call.
Our implementation of CFLOBDDs uses function caching for a number of the operations described in the remainder of the paper, such as {\tt PairProduct} (\algref{PairProduct}) and {\tt Reduce} (\algref{Reduce}).
To reduce clutter in the pseudo-code that we give, we elide the lines for querying and updating the cache.
The full statement of such a function would have the following form:
\begin{center}
{\tt
\begin{minipage}{\columnwidth}
\begin{tabbing}
F(x)\=\ \{ \+ \\
    $\textbf{if}~\texttt{cache}_F(x) \neq \texttt{NULL}~\textbf{return}~\texttt{cache}_F(x)$; \\
    $\ldots$ \\
    $\texttt{cache}_F(x) = \texttt{retVal}$;  // Update the cache with the return value\\
    return retVal; \- \\
\}
\end{tabbing}
\end{minipage}
}
\end{center}

Function caching involves hashing, and it is necessary to perform equality tests to resolve hash collisions.
Thus, the ability to test two proto-CFLOBDDs for equality in unit time (\sectref{CFLOBDDEquality}) also improves the performance of function caching.


\section{A Denotational Semantics}
\label{Se:ADenotationalSemantics}

In \sectref{CFLOBDDOperationalSemantics}, we gave an operational semantic definition of the function that a CFLOBDD represents.
In this section, we give denotational semantics, which defines the function that a CFLOBDD denotes.
In particular, this semantics associates the $i^{\textit{th}}$ element of a CFLOBDD's valueTuple with the set of Assignments (i.e., language of bit-strings) that map to the $i^{\textit{th}}$ element.
That is, a CFLOBDD $n$ at level $k$ denotes a function
\[
  \sem{n}: [1..|n.\textit{valueTuple}|] \rightarrow \mathcal{P}({\{ 0, 1\}}^{2^k}).
\]
Moreover, the $|n.\textit{valueTuple}|$ range values form a partition of $\{0,1\}^{2^k}$.
(That is, the range values are pairwise disjoint, and $\bigcup_{i=1}^{|\textit{valueTuple}|}\sem{n}[i] = \{0,1\}^{2^k}$.)

The definition of $\sem{n}$ is given in terms of a recursive definition of the semantics of Groupings (really proto-CFLOBDDs).
Consider a Grouping $g$ at level $l$ with $m$ exit vertices.
Suppose that $g.\textit{AConnection}$ has $p$ exits, and $g.\textit{BConnections}[i]$ has $k_i$ exit vertices.
The return map $g.\textit{AReturnTuple}$ is always a 1-1 map, and hence it will not play an explicit role in defining $\sem{g}$, but $g.\textit{BReturnTuples}[i]$---the return edges from $g.\textit{BConnections}[i]$'s exit vertices to $g$'s exit vertices---does play a role.
For convenience, we define $\sem{g}$ to be a vector, of dimension $1 \times m$.
The $m$ entries of the vector form a partition of $\{0,1\}^{2^l}$.
In particular, the vectors for a ForkGrouping $g_{\textit{Fork}}$ and a DontCareGrouping $g_{\textit{DC}}$ are
\[
  \sem{g_{\textit{Fork}}} \eqdef [\{0\},\{1\}]  \qquad\qquad \sem{g_{\textit{DC}}} \eqdef [\{0,1\}].
\]

The semantics of a Grouping $g$ at level $l$ is defined recursively in terms of $g$'s A- and B-connection Groupings at level $l-1$.
To give such a definition, we need to define the meaning of $g.\textit{BReturnTuples}[i]$, the return edges from the exit vertices of a Grouping's $i^{\textit{th}}$ B-connection.
We define $\sem{g.\textit{BReturnTuples}[i]}$ to be a ``permutation matrix'' of size $k_i \times m$.
Each entry of the matrix is either $\emptyset$ or $\{\epsilon\}$, where $\epsilon$ denotes the empty string, with the properties that (i) every row must have exactly one occurrence of $\{\epsilon\}$, and (ii) every column must have at most one occurrence of $\{\epsilon\}$.
For example, if $g.\textit{BReturnTuples}[i]$ maps $g.\textit{BConnections}[i]$'s $3$ exit vertices into $g$'s $5$ exit vertices by $[1 \mapsto 2, 2 \mapsto 4, 3 \mapsto 3]$, then
\[
  \sem{g.\textit{BReturnTuples}[i]} = 
  \begin{bmatrix} 
    \emptyset & \{\epsilon\} & \emptyset     & \emptyset   & \emptyset \\
    \emptyset & \emptyset    & \emptyset   & \{\epsilon\} & \emptyset \\
    \emptyset & \emptyset    & \{\epsilon\} & \emptyset   & \emptyset \\
  \end{bmatrix}_{3\times5}
\]
We now define $\sem{g}$ recursively, where the subscripts denote the dimensions of the vector or matrix, and the two matrix-multiplication primitives are language concatenation and language union:
\[
\begin{array}{@{\hspace{0ex}}l@{\hspace{0ex}}}
  \sem{g}_{1 \times m} = \\
  \qquad \begin{cases}
           [\{0\}, \{1\}]_{1\times2} & \text{if g = \textit{ForkGrouping}} \\
           [\{0, 1\}]_{1\times1} & \text{if g = \textit{DontCareGrouping}} \\
           \begin{array}{@{\hspace{0ex}}l@{\hspace{0ex}}}
             \sem{g.\textit{AConnection}}_{1\times p} \, \times \\
             \qquad \begin{bmatrix}
                      \vdots\\
                      \sem{g.\textit{BConnections}[i]}_{1\times k_i} \times   \sem{g.\textit{BReturnTuples}[i]}_{k_i \times m}\\
                      \vdots
                    \end{bmatrix}_{\scriptsize {\begin{array}{@{\hspace{0ex}}l@{\hspace{0ex}}} {p \times m} \\ i \in \{1..p\} \end{array}}}
           \end{array} & \text{otherwise}
         \end{cases}
\end{array}
\]

Finally, for a CFLOBDD n, $\sem{n}$ is defined as follows:
\[
  \sem{n}_{1 \times |n.\textit{valueTuple}|} \eqdef \sem{n.\textit{grouping}}_{1 \times |n.\textit{grouping}.\textit{numberOfExits}|}
\]

\begin{figure}[tb!]
    \centering
    \includegraphics[scale=0.5]{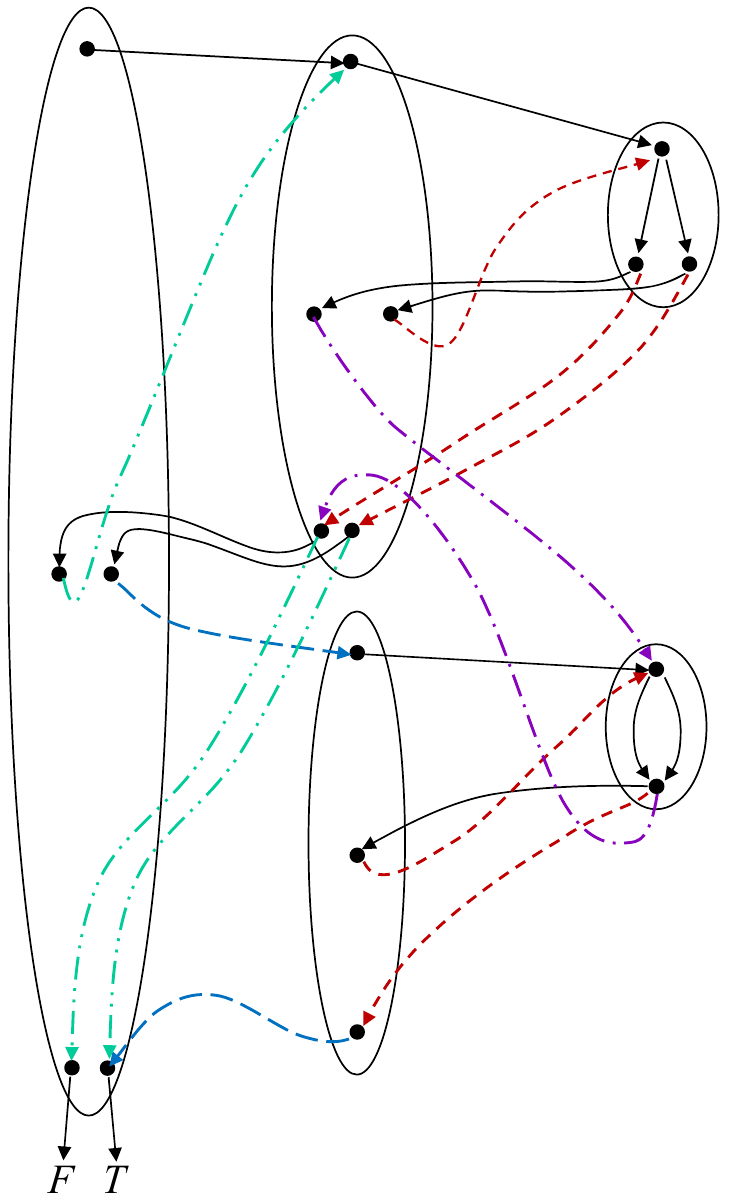}
    \caption{\protect \raggedright 
    CFLOBDD representation of the function $\lambda w,x,y,z. (w \land x) \lor (y \land z)$, with variable ordering $\langle w,x,y,z \rangle$}
    \label{Fi:product4RepFigure}
\end{figure}

\begin{example}\label{Exa:CFLOBDDLanguageSemantics}
For the five proto-CFLOBDDs depicted in \figref{product4RepFigure}, the vectors of languages are as follows (read top-to-bottom by level):
\[
  \begin{array}{lll}
    \multicolumn{1}{c}{\textrm{level 2}} & \multicolumn{1}{c}{\textrm{level 1}} & \multicolumn{1}{c}{\textrm{level 0}} \\
    \hline
    \left[\begin{array}{@{\hspace{0.15ex}}l@{\hspace{0.15ex}}}
             \{ 0000, 0001, 0010, 0100, 0101, 0110, 1000, 1001, 1010 \}, \\
             \{ 1100, 1101, 1110, 1111, 0011, 0111, 1011 \}
    \end{array}\right]\rule{0em}{4.0ex}
    &
    [\{00, 01, 10\}, \{11\}]
    &
    [\{0\}, \{1\}]
    \\
    & [\{00, 01, 10, 11\}] & [\{0, 1\}]
    \\
    \hline
  \end{array}
\]
\end{example}


\section{Algorithms on CFLOBDDs}
\label{Se:cflobdd-algos}

In this section, we describe operations to construct or combine CFLOBDDs.
To aid the reader, \tableref{algo-list-cflobdds} lists the fourteen main operations on CFLOBDDs, together with references to where the algorithm for each operation is presented (and where it is discussed), along with each operation's asymptotic running time and the asymptotic running time of the analogous BDD operation.
Readers familiar with BDDs will find that the algorithms for operations on CFLOBDDs are somewhat more complicated than their BDD counterparts, mainly due to the need to maintain the CFLOBDD structural invariants (\defref{CFLOBDD}).

\begin{sidewaystable}
    \centering
    \resizebox{\textwidth}{!}{
    \begin{tabular}{|l|c|c|c|c|}
        \hline
         \multicolumn{1}{|c|}{\multirow{2}{*}{Operation}} & \multirow{2}{*}{Type Signature} & \multirow{2}{*}{Description} & \multicolumn{2}{|c|}{Time Complexity} \\
         \cline{4-5}
                                                          &                                 &                              &   CFLOBDD & BDD \\
         \hline
         \multirow{2}{*}{Equal} & CFLOBDD $\times$ CFLOBDD  & \multirow{2}{*}{Checks if two CFLOBDDs are equal} & \multirow{2}{*}{$\mathcal{O}(1)$} & \multirow{2}{*}{$\mathcal{O}(1)$} \\
         \multicolumn{1}{|r|}{(\sectref{CFLOBDDEquality})}
         & $\rightarrow$ Boolean & & & \\
         \hline
         ConstantCFLOBDD & \multirow{2}{*}{Int($k$) $\times$ Value$(v)$ $\rightarrow$ CFLOBDD} & Creates a CFLOBDD for a & \multirow{2}{*}{$\mathcal{O}(k)$} & \multirow{2}{*}{$\mathcal{O}(2^k)$}\\
         \multicolumn{1}{|r|}{(\algref{ConstantCFLOBDD}, \sectref{ConstantFunctions})} &  & constant function $\lambda x_0 \dots x_{2^k-1}.v$ &  & \\
         \hline
         FalseCFLOBDD & \multirow{2}{*}{Int ($k$) $\rightarrow$ CFLOBDD}  & Creates a CFLOBDD for & \multirow{2}{*}{$\mathcal{O}(k)$} & \multirow{2}{*}{$\mathcal{O}(2^k)$} \\
         \multicolumn{1}{|r|}{(\algref{FalseCFLOBDD}, \sectref{ConstantFunctions})} &  & the function $\lambda x_0 \dots x_{2^k-1}.F$ & & \\
         \hline
         TrueCFLOBDD & \multirow{2}{*}{Int ($k$) $\rightarrow$ CFLOBDD} & Creates a CFLOBDD for & \multirow{2}{*}{$\mathcal{O}(k)$} & \multirow{2}{*}{$\mathcal{O}(2^k)$} \\
         \multicolumn{1}{|r|}{(\algref{TrueCFLOBDD}, \sectref{ConstantFunctions})} &  & the function $\lambda x_0 \dots x_{2^k-1}.T$ & & \\
         \hline
         NoDistinctionProtoCFLOBDD & \multirow{2}{*}{Int ($k$) $\rightarrow$ Proto-CFLOBDD} & Creates a NoDistinctionProtoCFLOBDD & \multirow{2}{*}{$\mathcal{O}(k)$} & \multirow{2}{*}{N/A} \\
         \multicolumn{1}{|r|}{(\algref{NoDistinctionProtoCFLOBDDAlgorithm}, \sectref{ConstantFunctions})} &  & for $2^k$ variables & & \\
         \hline
         ProjectionCFLOBDD & \multirow{2}{*}{Int($k$) $\times$ Int($i$) $\rightarrow$ CFLOBDD} & Creates a CFLOBDD for & \multirow{2}{*}{$\mathcal{O}(k)$} & \multirow{2}{*}{$\mathcal{O}(2^k)$}\\
         \multicolumn{1}{|r|}{(\algref{ProjectionCFLOBDDAlgorithm}, \sectref{ProjectionFunctions})} &  & the function $\lambda x_0 \dots x_{2^k-1}. x_i$  & & \\
         \hline
         FlipValueTupleCFLOBDD & \multirow{2}{*}{CFLOBDD($c$) $\rightarrow$ CFLOBDD} & Creates a CFLOBDD & \multirow{2}{*}{$\mathcal{O}(1)$} & \multirow{2}{*}{$\mathcal{O}(|c|_B)$} \\
         \multicolumn{1}{|r|}{(\algref{FlipValueTupleCFLOBDDAlgorithm}, \sectref{FlipValueTupleFunction})} &  &  such that the output values are flipped & & \\
         \hline
         ComplementCFLOBDD & \multirow{2}{*}{CFLOBDD($c$) $\rightarrow$ CFLOBDD} & Creates a CFLOBDD such that  & \multirow{2}{*}{$\mathcal{O}(1)$} & \multirow{2}{*}{$\mathcal{O}(|c|_B)$}  \\
         \multicolumn{1}{|r|}{(\algref{FlipValueTupleCFLOBDDAlgorithm}, \sectref{FlipValueTupleFunction})} &  &  the output values are complemented & & \\
         \hline
         ScalarMultiplyCFLOBDD & CFLOBDD ($c$) $\times$ Value ($v$)  & \multirow{2}{*}{\changed{Performs $c' = c \ast v$}} & \multirow{2}{*}{\changed{$\mathcal{O}(|c| \times |c'|)$}} & \multirow{2}{*}{$\mathcal{O}(|c|_B)$} \\
         \multicolumn{1}{|r|}{(\algref{ScalarMultiplyCFLOBDDAlgorithm}, \sectref{ScalarMultiplication})} & $\rightarrow$ CFLOBDD & & & \\
         \hline
         BinaryApplyAndReduce & CFLOBDD ($c_1$) $\times$ CFLOBDD ($c_2$) & \multirow{2}{*}{\changed{Performs $c' = c_1 \,\textit{op}\, c_2$}} & \multirow{2}{*}{\changed{$\mathcal{O}(|c_1| \times |c_2| \times |c'|)$}} & \multirow{2}{*}{$\mathcal{O}(|c_1|_B \times |c_2|_B)$} \\
         \multicolumn{1}{|r|}{(\algref{BinaryApplyAndReduce}, \sectref{BinaryOperationsOnCFLOBDDs})} &  $\times$ Operation $\textit{op}$ $\rightarrow$ CFLOBDD &  &  & \\
         \hline
         PathCounting & \multirow{2}{*}{CFLOBDD($c$) $\rightarrow$ CFLOBDD} & Computes the number of paths to  & \multirow{2}{*}{$\mathcal{O}(|c|)$}  & $\mathcal{O}(|c|_B)$ \\
         \multicolumn{1}{|r|}{(\algref{PathCounting}, \sectref{PathCountingInACFLOBDD})} &  & every exit vertex of every grouping &  & (See \sectref{SamplingInBDDs}.) \\
         \hline
         Sampling & \multirow{2}{*}{CFLOBDD($c$) $\rightarrow$ String} & \multirow{2}{*}{Samples a path from $c$} & \multirow{2}{*}{$\mathcal{O}(\max (\textit{vars}, |c|))$} & $\mathcal{O}(\max (\textit{vars}, |c|_B))$ \\
         \multicolumn{1}{|r|}{(\algref{Sampling}, \sectref{SamplingInACFLOBDD})} &  &  & & (See \sectref{SamplingInBDDs}.) \\
         \hline
         KroneckerProduct & CFLOBDD($c_1$) $\times$ CFLOBDD($c_2$)  & \multirow{2}{*}{Performs $c' = c_1 \tensor c_2$} & \multirow{2}{*}{$\mathcal{O}((|c_1| + |c_2| + c_1.\text{\#exits} \times c_2.\text{\#exits}) \times |c'|)$}  & \multirow{2}{*}{$\mathcal{O}(|c_1|_B)$} \\
         \multicolumn{1}{|r|}{(App.\sectref{kronecker--product} \& \algref{KP4Voc}, \sectref{KroneckerProduct})} & $\rightarrow$ CFLOBDD & & & \\
         \hline
         MatrixMultiply & CFLOBDD($c_1$) $\times$ CFLOBDD($c_2$)  & \changed{Performs $c' = c_1 \times c_2$} & $\mathcal{O}(N^3)$, plus the time for & \multirow{2}{*}{$\mathcal{O}(N^3)$} \\
         \multicolumn{1}{|r|}{(\algrefs{MatrixMult}{MatrixMultGrouping} , \sectref{matrix-mult})} & $\rightarrow$ CFLOBDD & \changed{for matrices of size $N \times N$} & \changed{a final call to \texttt{Reduce}} & \\
         \hline
    \end{tabular}
    }
    \caption{
      List of operations on CFLOBDDs;
      $\textit{vars}$ denotes the number of Boolean variables ($= 2^k$, where $k$ is the number of levels of the CFLOBDD).
      The size measure $|\cdot|$ counts the number of groupings, vertices, and edges---with no double-counting of shared groupings due to hash-consing.
      In the column for the time complexities of BDD operations, an occurrence of $c$ refers to a BDD argument of the operation, and $|c|_B$ denotes the size of BDD $c$ (the number of nodes and edges).
      For quasi-reduced BDDs, the time to construct the analog of NoDistinctionProtoCFLOBDD is $\mathcal{O}(2^k)$.
      Note that the complexity of MatrixMultiply is in terms of the sizes of matrices represented by $c_1$ and $c_2$ and not the sizes of $c_1$ and $c_2$.
    }
    \label{Ta:algo-list-cflobdds}
\end{sidewaystable}

\subsection{Primitive CFLOBDD-Creation Operations}

\subsubsection{Constant Functions}
\label{Se:ConstantFunctions}

\begin{algorithm}[tb!]
\caption{ConstantCFLOBDD\label{Fi:ConstantCFLOBDD}}
\Input{int k, Value v}
\Output{CFLOBDD representation of a function with $2^k$ variables and constant value $v$}
\Begin{
\Return RepresentativeCFLOBDD(NoDistinctionProtoCFLOBDD(k), [v])\;
}
\end{algorithm}

\begin{algorithm}[tb!]
\caption{NoDistinctionProtoCFLOBDD\label{Fi:NoDistinctionProtoCFLOBDDAlgorithm}}
\Input{int k}
\Output{Proto-CFLOBDD representation of a function with $2^k$ variables}
\Begin{
\If{k == 0}{\Return RepresentativeDontCareGrouping\;}
InternalGrouping g = new InternalGrouping(k)\;
g.AConnection = NoDistinctionProtoCFLOBDD(k-1)\;
g.AReturnTuple = [1]\;
g.numberOfBConnections = 1\;
g.BConnections[1] = g.AConnection\;
g.BReturnTuples[1] = [1]\;
g.numberOfExits = 1\;
\Return RepresentativeGrouping(g)\;
}
\end{algorithm}

\begin{algorithm}[tb!]
\caption{FalseCFLOBDD\label{Fi:FalseCFLOBDD}}
\Input{int k}
\Output{CFLOBDD representation of a function with $2^k$ variables and constant value $F$}
\Begin{
\Return ConstantCFLOBDD(k, $F$);
}
\end{algorithm}

\begin{algorithm}[tb!]
\caption{TrueCFLOBDD\label{Fi:TrueCFLOBDD}}
\Input{int k}
\Output{CFLOBDD representation of a function with $2^k$ variables and constant value $T$}
\Begin{
\Return ConstantCFLOBDD(k, $T$);
}
\end{algorithm}



The CFLOBDD-creation operation {\tt ConstantCFLOBDD}, given as \algref{ConstantCFLOBDD}, produces the family of CFLOBDDs that represent functions of the form $\lambda x_0, x_1, \ldots, x_{2^k-1} . v$, where $v$ is some constant value.
\texttt{ConstantCFLOBDD}($k,v$) uses as a subroutine {\tt NoDistinctionProtoCFLOBDD} (\algref{NoDistinctionProtoCFLOBDDAlgorithm}),
which constructs the no-distinction proto-CFLOBDD for a given level $k$ (see also \figref{NoDistinctionProtoCFLOBDD}).
{\tt ConstantCFLOBDD} can be used to construct CFLOBDDs for the constant functions $\lambda x_0, x_1, \ldots, x_{2^k-1} . F$ (\algref{FalseCFLOBDD}) and $\lambda x_0, x_1, \ldots, x_{2^k-1} . T$ (\algref{TrueCFLOBDD}).
\texttt{ConstantCFLOBDD}($k,v$) runs in time $O(k)$ and uses at most $O(k)$ space.

 \subsubsection{Projection Functions}
 \label{Se:ProjectionFunctions}
 
\begin{figure}[ht!]
\begin{center}
\begin{tabular}{c@{\hspace{.33in}}c}
     \includegraphics[height=2.3in]{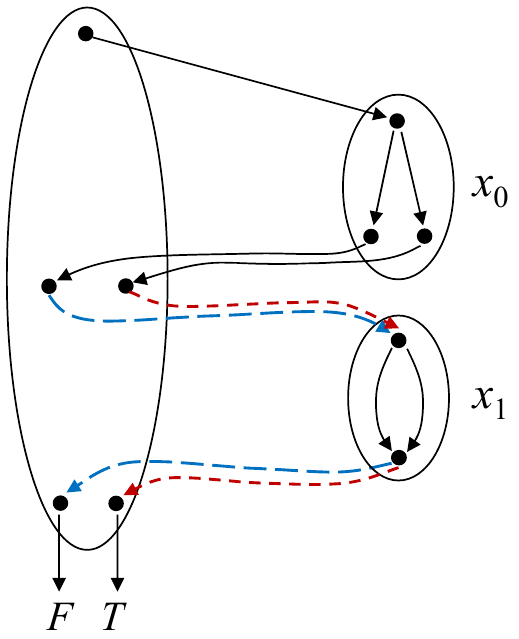}
   & \includegraphics[height=2.3in]{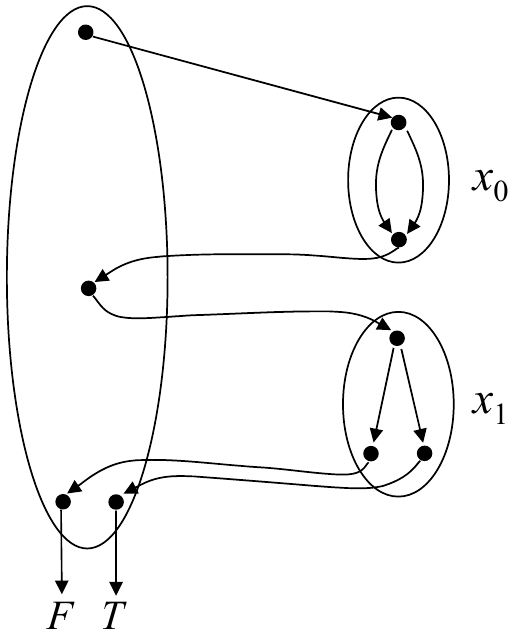}
 \\
     (a) & (b) 
 \\
 \\
     \includegraphics[height=3.5in]{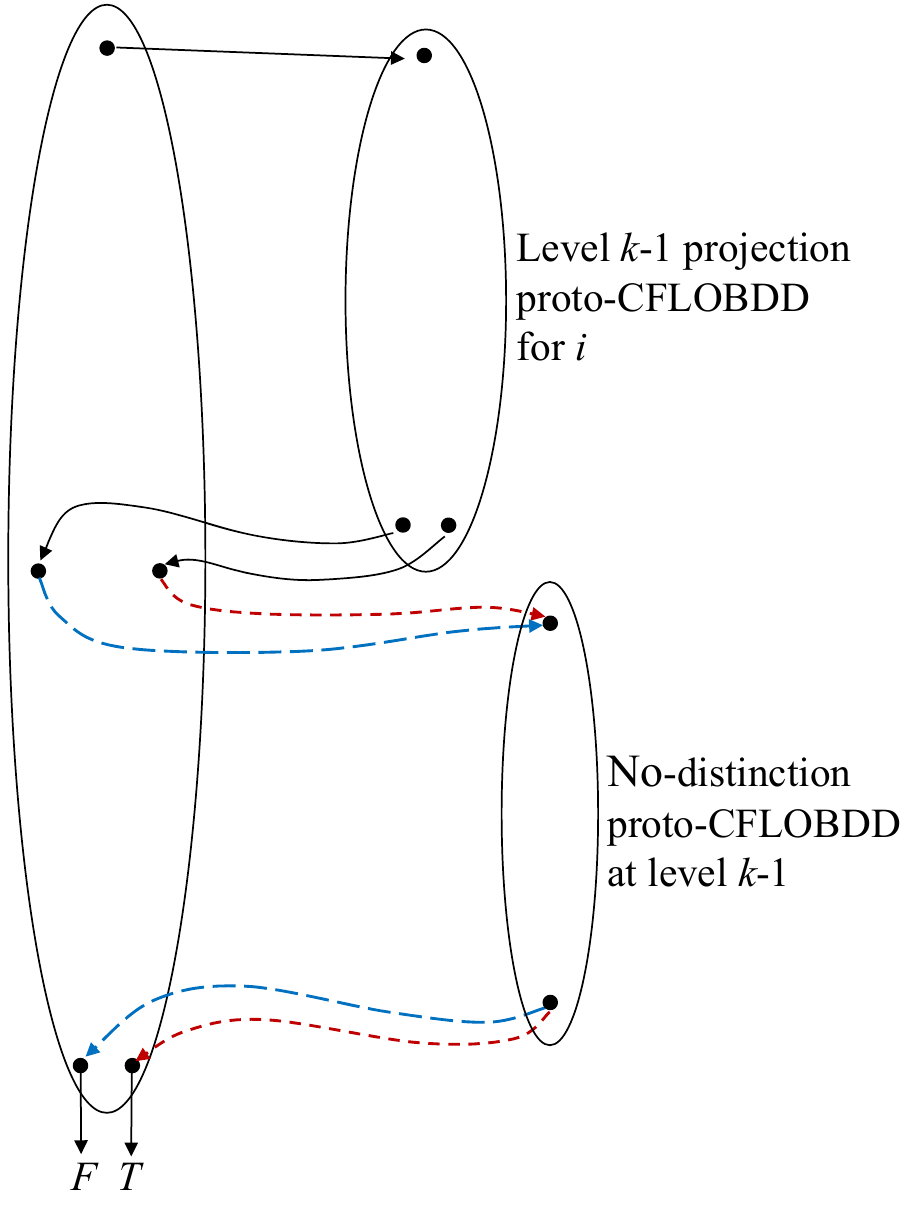}
   & \includegraphics[height=3.5in]{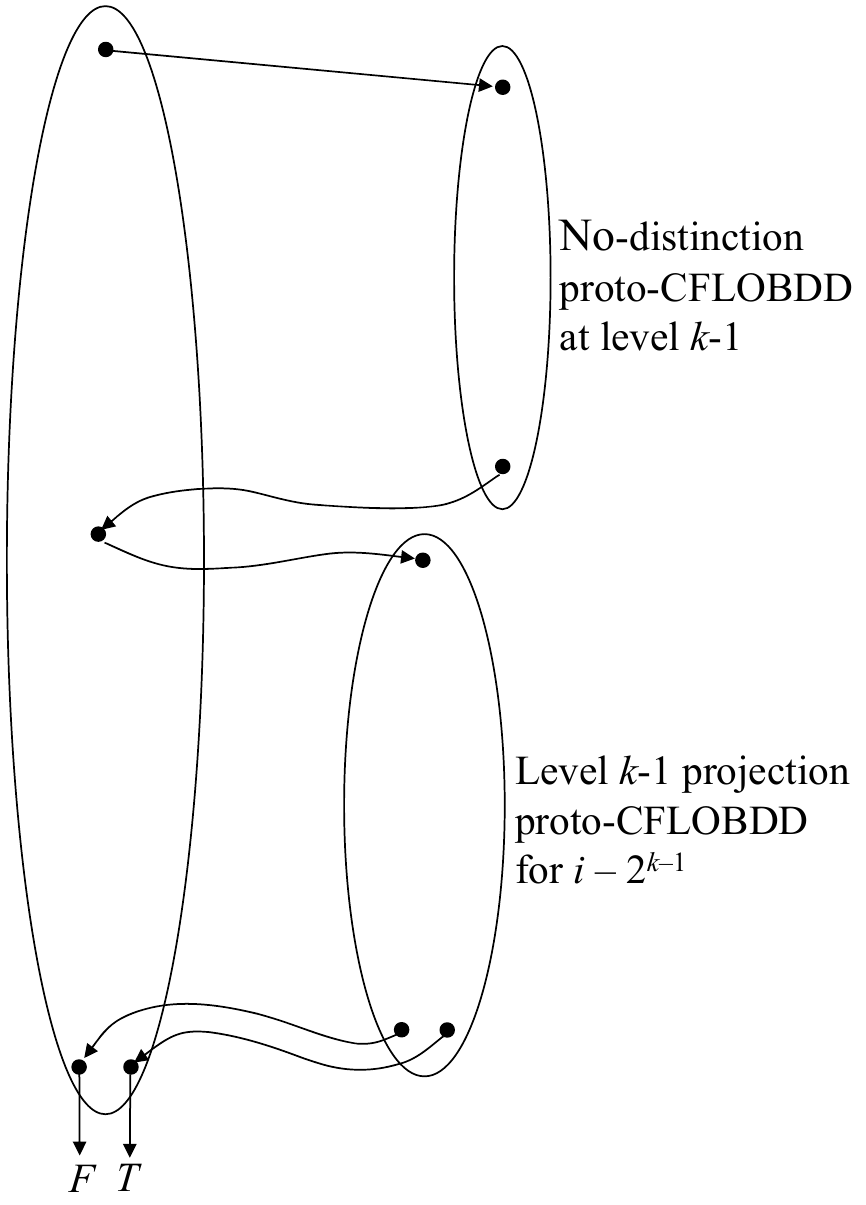}
 \\ 
    (c) & (d) 
\end{tabular}
\end{center}
\caption{
(a) CFLOBDD for $\lambda {x_0}{x_1}.{x_0}$;
(b) CFLOBDD for $\lambda {x_0}{x_1}.{x_1}$;
(c) schematic drawing of CFLOBDDs that represent projection functions of the form $\lambda x_0, x_1, \ldots, x_{2^k-1} . x_i$, when $0 \leq i < 2^{k-1}$;
(d) schematic drawing of CFLOBDDs that represent projection functions of the form $\lambda x_0, x_1, \ldots, x_{2^k-1} . x_i$, when $2^{k-1} \leq i < 2^k$.
}
\label{Fi:ProjectionCFLOBDD}
\end{figure}

\begin{algorithm}[tb!]
\SetKwFunction{ProjectionCFLOBDD}{ProjectionCFLOBDD}
\SetKwFunction{ProjectionProtoCFLOBDD}{ProjectionProtoCFLOBDD}
\SetKwProg{myalg}{Algorithm}{}{end}
  \myalg{\ProjectionCFLOBDD{k, i}}{
  \Input{int k (level), int i (index)}
  \Output{CFLOBDD representing function $\lambda x_0, x_1, \ldots, x_{2^k-1} . x_i$}
  \Begin{
    assert(0 <= i < 2**k)\;
    \Return RepresentativeCFLOBDD(ProjectionProtoCFLOBDD(k,i), [F,T])\;
  }
  }{}
  \setcounter{AlgoLine}{0}
  \SetKwProg{myproc}{SubRoutine}{}{end}
  \myproc{\ProjectionProtoCFLOBDD{k, i}}{
  \Input{int k (level), int i (index)}
  \Output{Grouping g representing function $\lambda x_0, x_1, \ldots, x_{2^k-1} . x_i$}
  \Begin{
    \eIf(\tcp*[f]{i must also be 0}){k == 0}{\Return RepresentativeForkGrouping\;}
    {
        InternalGrouping g = new InternalGrouping(k)\;
        \eIf(\tcp*[f]{i falls in AConnection range}){i < 2$\ast \ast$(k-1)}{
            g.AConnection = ProjectionProtoCFLOBDD(k-1,i)\;
            g.AReturnTuple = [1,2]\;
            g.numBConnections = 2\;
            g.BConnection[1] = NoDistinctionProtoCFLOBDD(k-1)\;
            g.BReturnTuples[2] = [1]\;
            g.BConnections[2] = g.BConnection[1]\;
            g.BReturnTuples[2] = [2]\;
            g.numberOfExits = 2\;
        }
        (\tcp*[f]{i falls in BConnection range}){
            g.AConnection = NoDistinctionProtoCFLOBDD(k-1)\;
            g.AReturnTuple = [1]\;
            g.numBConnections = 1\;
            i = i - 2$\ast \ast$(k-1)\tcp*[r]{Remove high-order bit for recursive call}
            g.BConnections[1] = ProjectionProtoCFLOBDD(k-1,i)\;
            g.BReturnTuples[1] = [1,2]\;
            g.numberOfExits = 2\;
        }
        \Return RepresentativeGrouping(g)\;
    }
  }
  }
  \caption{ProjectionProtoCFLOBDD\label{Fi:ProjectionCFLOBDDAlgorithm}}
\end{algorithm}

A second family of CFLOBDD-creation operations produces the Boolean-valued \emph{(single-variable) projection functions} of the form
$\lambda x_0, x_1, \ldots, x_{2^k-1} . x_i$, where $i$ ranges from $0$ to $2^k-1$.
\figref{ProjectionCFLOBDD} illustrates the structure of the CFLOBDDs that represent these functions.
\algref{ProjectionCFLOBDDAlgorithm} gives pseudo-code for \texttt{ProjectionCFLOBDD}($k, i$), which constructs the $i^{\textit{th}}$ such function.
\texttt{ProjectionCFLOBDD}($k, i$) runs in time $O(k)$ and uses at most $O(k)$ space.


\subsection{Unary Operations on CFLOBDDs}
This section discusses how to perform certain unary operations
on CFLOBDDs:

\begin{algorithm}[tb!]
\caption{ComplementCFLOBDD\label{Fi:FlipValueTupleCFLOBDDAlgorithm}}
\SetKwFunction{ComplementCFLOBDD}{ComplementCFLOBDD}
\SetKwFunction{FlipValueTupleCFLOBDD}{FlipValueTupleCFLOBDD}
\small{
\SetKwProg{myalg}{Algorithm}{}{end}
\myalg{\FlipValueTupleCFLOBDD{c}}{
\Input{CFLOBDD $c$}
\Output{CFLOBDD $c'$ such that the output values are flipped}
\Begin{
assert(|$c$.valueTuple| == 2)\;
return RepresentativeCFLOBDD($c$.grouping, [$c$.valueTuple[2], $c$.valueTuple[1]])\;
}
}
\SetKwProg{myproc}{Algorithm}{}{end}
\myproc{\ComplementCFLOBDD{c}}{
\Input{CFLOBDD $c$}
\Output{CFLOBDD $c'$ such that the output values are complemented}
\Begin{
\If{c == FalseCFLOBDD(c.grouping.level)}{\Return TrueCFLOBDD(\textit{c.grouping.level})\;}
\If{c == TrueCFLOBDD(c.grouping.level)}{\Return FalseCFLOBDD(\textit{c.grouping.level})\;}
\Return FlipValueTupleCFLOBDD($c$)\;
}
}{}
}
\end{algorithm}

\begin{algorithm}[tb!]
\caption{ScalarMultiplyCFLOBDD\label{Fi:ScalarMultiplyCFLOBDDAlgorithm}}
\small{
\Input{CFLOBDD $c$, Value $v$}
\Output{CFLOBDD $c' = c \ast v$}
\Begin{
\changed{
\tcp{Multiply CFLOBDD $c$ by the CFLOBDD for the constant function $\lambda x_0, x_1, \ldots, x_{2^k-1} . v$}
\Return BinaryApplyAndReduce($c$, ConstantCFLOBDD($c$.level, $v$), (op)Times)\tcp*[r]{(See \sectref{BinaryOperationsOnCFLOBDDs})}
}
}
}
\end{algorithm}

\subsubsection{FlipValueTuple Function.}
\label{Se:FlipValueTupleFunction}
    The function \texttt{FlipValueTupleCFLOBDD} applies
    in the special situation in which a CFLOBDD
    maps Boolean-variable-to-Boolean-value assignments to just two
    possible values;
    {\tt FlipValueTupleCFLOBDD\/} flips the two values in the
    CFLOBDD's {\tt valueTuple\/} field and returns the resulting CFLOBDD.
    In the case of Boolean-valued CFLOBDDs, this operation can be used to
    implement the operation {\tt ComplementCFLOBDD}, which forms the
    Boolean complement of its argument, in an efficient manner.
    The pseudocode for these functions is given in \algref{FlipValueTupleCFLOBDDAlgorithm}.
    \texttt{FlipValueTupleCFLOBDD} and \texttt{ComplementCFLOBDD} are constant-time operations.
    
\subsubsection{Scalar Multiplication.}
\label{Se:ScalarMultiplication}
    Function \texttt{ScalarMultiplyCFLOBDD} of
    \algref{ScalarMultiplyCFLOBDDAlgorithm} applies to any
    CFLOBDD that maps Boolean-variable-to-Boolean-value assignments
    to values on which multiplication by a scalar value of type {\tt Value\/}
    is defined.  
    \texttt{ScalarMultiplyCFLOBDD} constructs a CFLOBDD for the constant function $\lambda x_0, x_1, \ldots, x_{2^k-1} . v$, which is multiplied by CFLOBDD $c$ using \texttt{BinaryApplyAndReduce}---the generic operation for binary CFLOBDD operations (discussed in \sectref{BinaryOperationsOnCFLOBDDs})---with the multiplication operator $\text{Times}$ passed as the third argument.
    
\subsection{Binary Operations on CFLOBDDs}
\label{Se:BinaryOperationsOnCFLOBDDs}

\begin{figure}[ht!]
\centering
\begin{tabular}{c@{\hspace{3.5ex}}c@{\hspace{3.5ex}}c}
     \includegraphics[height=1.5in]{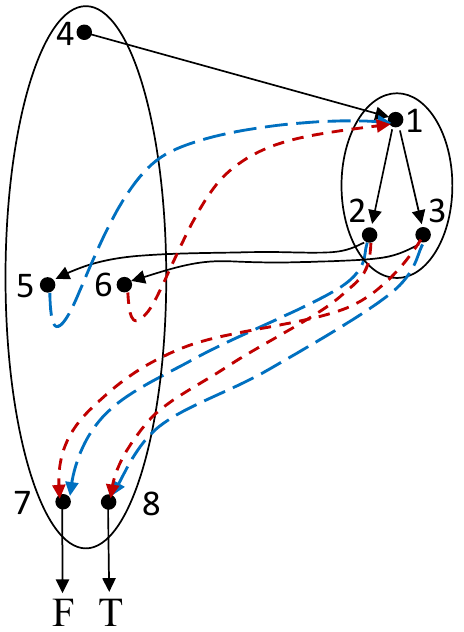}
   & \includegraphics[height=1.5in]{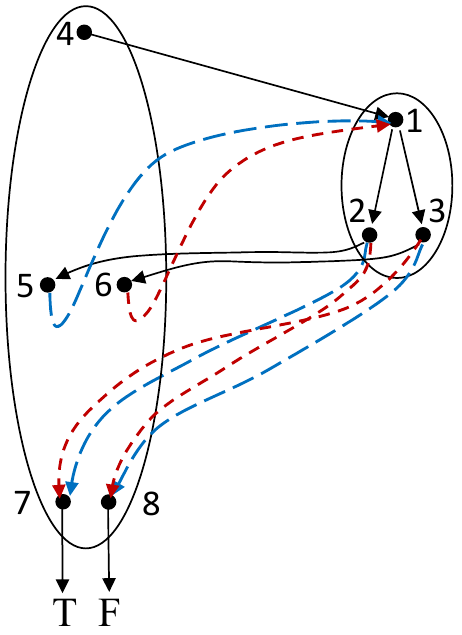}
   & \includegraphics[height=1.5in]{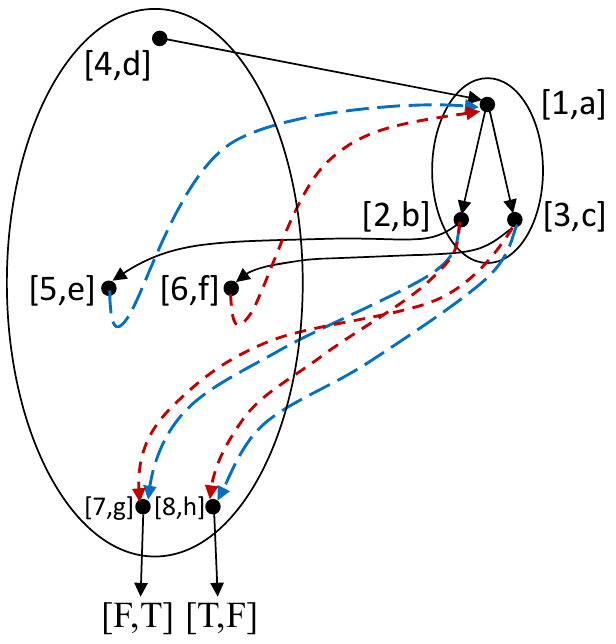}
 \\
     \begin{minipage}{.20\textwidth}
       {\small (a) $\lambda x_0, x_1 . x_0 \oplus x_1$}
     \end{minipage}     
   & \begin{minipage}{.20\textwidth}
       {\small (b) $\lambda x_0, x_1 . x_0 \Leftrightarrow x_1$}
     \end{minipage} 
   & \begin{minipage}{.54\textwidth}
       {\small (c) Result of calling \texttt{PairProduct} on (a) and (b)}
     \end{minipage} 
  \\
  \\
  \\
\end{tabular}
\begin{tabular}{c@{\hspace{3.0ex}}c}
     \includegraphics[height=1.5in]{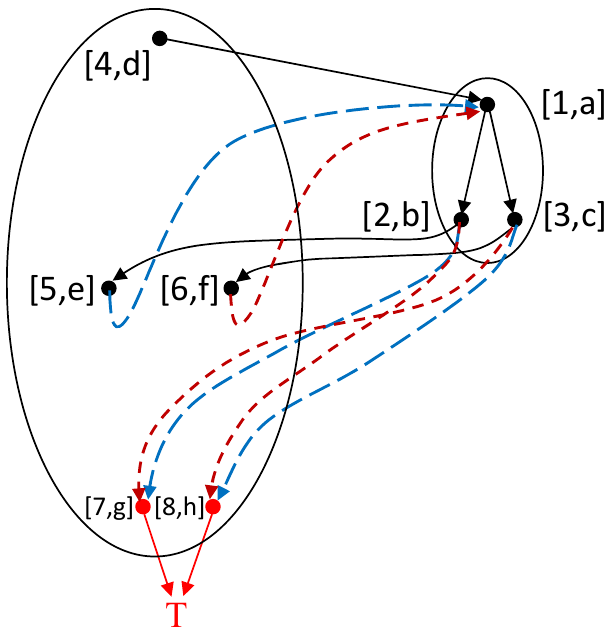}
   & \includegraphics[height=1.5in]{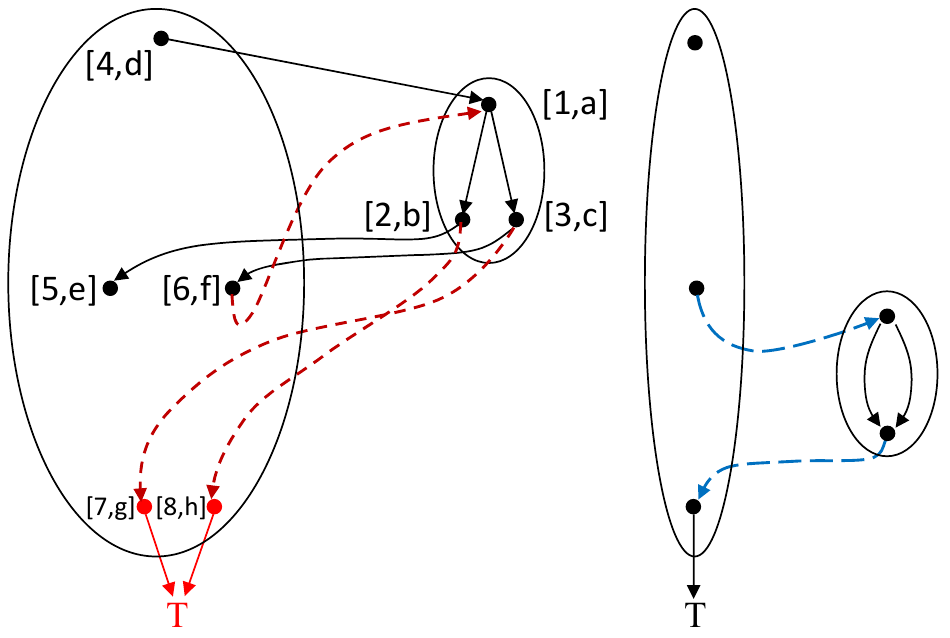}
 \\
     \begin{minipage}{.47\textwidth}
       {\small (d) Result of applying $\lor$ to the values in each of the terminal-value pairs $[F,T]$ and $[T,F]$.
       At this point, it is necessary to perform a reduction that folds together the two exit vertices.}
     \end{minipage}     
   & \begin{minipage}{.47\textwidth}
       {\small (e) Result of calling \texttt{Reduce} on the first B-connection with \texttt{reductionTuple} $[1,1]$.}
     \end{minipage} 
\\
\\
\\
     \includegraphics[height=1.5in]{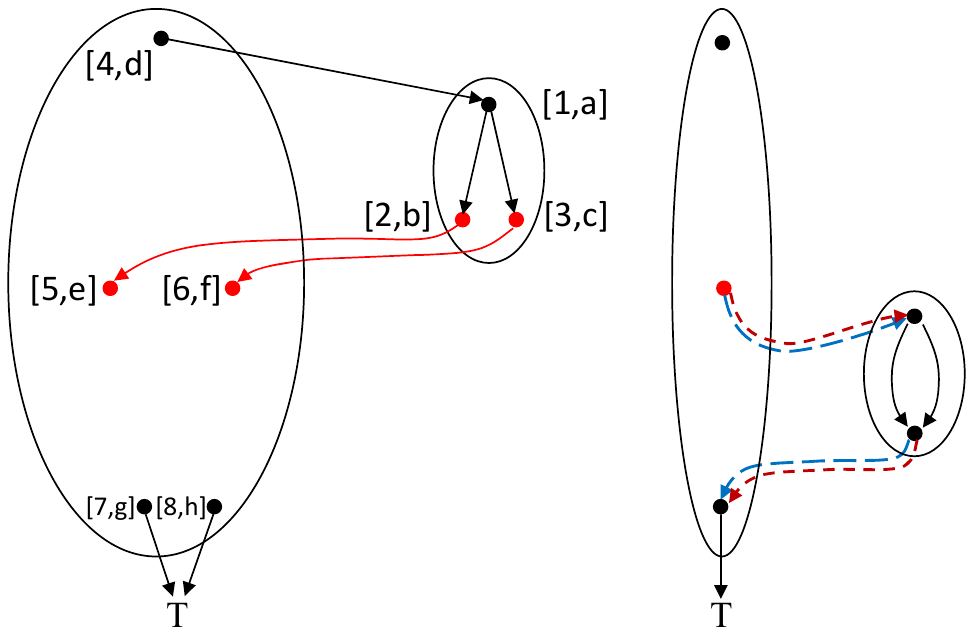}
   & \includegraphics[height=1.5in]{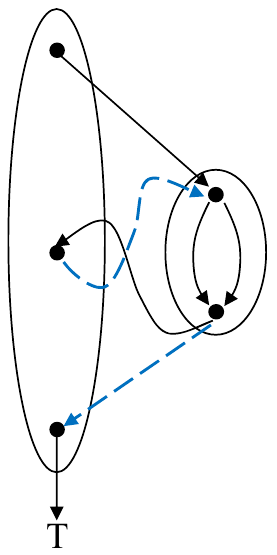}
 \\
     \begin{minipage}{.47\textwidth}
       {\small (f) Result of calling \texttt{Reduce} on the second B-connection with \texttt{reductionTuple} $[1,1]$.
       The two calls on \texttt{Reduce} produce the same B-connection proto-CFLOBDDs with identical return edges---indicated by the coincidence of the blue and red dashed edges in the structure on the right.
       At this point, it is necessary to perform a reduction that folds together the two middle vertices.}
     \end{minipage}     
   & \begin{minipage}{.47\textwidth}
       {\small (g) After calling \texttt{Reduce} on the A-connection with \texttt{reductionTuple} $[1,1]$, the final result is the CFLOBDD for $\lambda x_0, x_1 . T$}.
     \end{minipage} 
\end{tabular}
\caption{
  Illustration of how $(\lambda x_0, x_1 . x_0 \oplus x_1)$ $\lor$ $(\lambda x_0, x_1 . x_0 \Leftrightarrow x_1)$ results in $\lambda x_0, x_1 . T$.
}
\label{Fi:BinaryApplyAndReduceExample}
\end{figure}

This section presents an algorithm for performing binary operations on CFLOBDDs.
The algorithm is parameterized in terms of a binary operation \texttt{op} that is to be applied pointwise to the range values of two CFLOBDDs.
That is, given the CFLOBDDs for two functions $n_1$ and $n_2$ and  binary operation \texttt{op}, the goal of the algorithm is to create the CFLOBDD for $n_1 \,\texttt{op}\, n_2$ where, for each assignment $a$, $(n_1 \,\texttt{op}\, n_2)(a) = n_1(a) \,\texttt{op}\, n_2(a)$.
Operation \texttt{op} could be $+, -, *, /$, etc., or---if the functions are Boolean-valued---$\lor, \land, \xor$, etc.
As with BDDs, such operations on CFLOBDDs can be implemented via a two-step process\footnote{
  The two-step process is conceptual for BDDs: the two steps can be combined in an implementation (e.g., see \cite[\S3.3]{DBLP:conf/ml/FilliatreC06}).
  For CFLOBDDs, it does not appear possible to combine the two steps, at least not easily.
  For more details, see the Remark just after \exref{xor-cross-equiv}.
}
\begin{enumerate}
    \item perform a product construction
    \item perform a reduction step on the result of step one.
\end{enumerate}
Just as there can be multiple occurrences of a given node in a BDD, there can be multiple occurrences of a given grouping in a CFLOBDD.
To avoid a blow-up in costs, binary operations need to avoid making repeated calls on a given pair of groupings $g_1 \in n_1$ and $g_2 \in n_2$.
Assuming that the 
hash-table lookup and insertion
methods used for hash-consing (\sectref{HashConsing}) and function caching (\sectref{FunctionCaching}) run in
(expected) unit-cost time, the time to perform the product construction is asymptotically bounded by the product of the sizes of the two argument CFLOBDDs---i.e., $O(|n_1|\times |n_2|)$.\footnote{
  \label{Footnote:ProductConstructionCost}
  More precisely, let $n {\uparrow} \textit{gr}[k]$ denote the set of groupings at level $k {\in} [0 .. l]$ in CFLOBDD $n$.
  The time to construct the product of $n_1$ and $n_2$ is asymptotically bounded by
  $
    \sum_{k=0}^l \sum \Big\{ |g_1| {\times} |g_2| \mid g_1 {\in}\, n_1  {\uparrow} \textit{gr}[k]~\textrm{and}~g_2 {\in}\, n_2  {\uparrow} \textit{gr}[k] \Big\}.
  $
}
\sectref{CostOfReduce} shows that the time for the reduction step is $O(|n_1|\times |n_2| \times |n'|)$, where $n'$ denotes the CFLOBDD that is the result of $n_1 \,\texttt{op}\, n_2$.
Consequently, binary operations satisfy Requirement (\ref{Req:ComputationalEfficiency});
i.e., they run in (expected) time that is polynomial in the sizes of the input and output CFLOBDDs.

\figref{BinaryApplyAndReduceExample} illustrates this method by showing how the CFLOBDD for $\lambda x_0, x_1 . T$ is
obtained as the result of a Boolean-$\lor$: $(\lambda x_0, x_1 . x_0 \oplus x_1) \lor (\lambda x_0, x_1 . x_0 \Leftrightarrow x_1)$.
\figref{BinaryApplyAndReduceExample}c shows the result of the product construction (\texttt{PairProduct}, \algref{PairProduct}).
\figref{BinaryApplyAndReduceExample}d, e, f, and g illustrate some of the steps of the reduction algorithm (\texttt{Reduce}, \algref{Reduce}).
\figref{BinaryApplyAndReduceExample} is discussed in more detail in \exref{xor-cross-equiv}.

\begin{algorithm}[tb!]
\caption{CollapseClassesLeftmost\label{Fi:InducedTuples}}
\Input{Tuple equivClasses}
\Output{Tuple$\times$Tuple [projectedClasses, renumberedClasses]}
\Begin{
\tcp{Project the tuple equivClasses, preserving left-to-right order, retaining the leftmost instance of each class}
\label{Li:IT:TupleCollapseStart}
Tuple projectedClasses = [equivClasses(i) : i $\in$ [1..|equivClasses|] | i = min\{j $\in$ [1..|equivClasses|] | equivClasses(j) = equivClasses(i)\}]\;
\label{Li:IT:TupleCollapseEnd}
\tcp{Create tuple in which classes in equivClasses are renumbered according to their ordinal position in projectedClasses}
Map orderOfProjectedClasses = \{[x,i]: i $\in$ [1..|projectedClasses|] | x = projectedClasses(i)\}\;
Tuple renumberedClasses = [orderOfProjectedClasses(v) : v $\in$ equivClasses]\;
\Return [projectedClasses, renumberedClasses]\;
}
\end{algorithm}

\begin{algorithm}[tb!]
\Input{CFLOBDDs n1, n2 and Operation op}
\Output{CFLOBDD n = n1 op n2}
\Begin{
\tcp{Perform cross product}
    Grouping$\times$PairTuple [g,pt] = PairProduct(n1.grouping,n2.grouping)\;  \label{Li:BAAR:CallPairProduct}
\tcp{Create tuple of ``leaf'' values}
    ValueTuple deducedValueTuple = [ op(n1.valueTuple[i1],n2.valueTuple[i2])~:~[i1,i2] $\in$ pt ]\;  \label{Li:BAAR:LeafValues}
     \tcp{Collapse duplicate leaf values, folding to the left}
    Tuple$\times$Tuple [inducedValueTuple,inducedReductionTuple] = CollapseClassesLeftmost(deducedValueTuple) \;  \label{Li:BAAR:CollapseLeafValues}
\tcp{Perform corresponding reduction on g, folding g's exit vertices w.r.t.\ inducedReductionTuple}
    Grouping g' = Reduce(g, inducedReductionTuple) \;  \label{Li:BAAR:CallReduce}
    CFLOBDD n = RepresentativeCFLOBDD(g', inducedValueTuple) \;
    \Return n\;
}
\caption{BinaryApplyAndReduce\label{Fi:BinaryApplyAndReduce}}
\end{algorithm}

\begin{algorithm}[tb!]
\caption{PairProduct\label{Fi:PairProduct}}
\Input{Groupings g1, g2}
\Output{Grouping g: product of g1 and g2; PairTuple ptAns: tuple of pairs of exit vertices}
\Begin{
    \lIf {g1 and g2 are both no-distinction proto-CFLOBDDs  \label{Li:PPNoDistinctionStart}}  
    { \Return [ g1, [[1,1]] ]}   
    \lIf {g1 is a no-distinction proto-CFLOBDD}
    {\Return [ g2, [[1,k]~:~k $\in$ [1..g2.numberOfExits]] ]}
    \lIf {g2 is a no-distinction proto-CFLOBDD}
    {\Return [ g1, [[k,1]~:~k $\in$ [1..g1.numberOfExits]] ]}      \label{Li:PPNoDistinctionEnd}
    \lIf {g1 and g2 are both fork groupings  \label{Li:PPBothForkGroupings}} {\Return [ g1, [[1,1],[2,2]] ]} \label{Li:PPBothForkGroupingsEnd}
    \tcp{ Pair the A-connections}
    Grouping$\times$PairTuple [gA,ptA] = PairProduct(g1.AConnection, g2.AConnection)\;
    InternalGrouping g = new InternalGrouping(g1.level)\;
    g.AConnection = gA \;
    g.AReturnTuple = [1..|ptA|]\tcp*[r]{Represents the middle vertices}
    g.numberOfBConnections = |ptA| \;  \label{Li:PPNumberOfBConnections}
\tcp{Continued in \algref{PairProductContinued}}
}
\end{algorithm}

\begin{algorithm}
\caption{PairProduct (cont.)\label{Fi:PairProductContinued}}
\setcounter{AlgoLine}{18}
\SetKwBlock{Begin}{}{end}
\Begin{
\tcp{Pair the B-connections, but only for pairs in ptA }
    \tcp{Descriptor of pairings of exit vertices}
    Tuple ptAns = []\;  \label{Li:ptAnsDeclaration}
    \tcp{Create a B-connection for each middle vertex}
    \For {$j \leftarrow 1$ \KwTo $|ptA|$}{
        Grouping$\times$PairTuple [gB,ptB] = PairProduct(g1.BConnections[ptA(j)(1)], g2.BConnections[ptA(j)(2)])\;   \label{Li:BConnectionPairProduct}
        g.BConnections[j] = gB \;
        \tcp{Now create g.BReturnTuples[j], and augment ptAns as necessary}
        g.BReturnTuples[j] = [] \;        \label{Li:PPExitVertexLoopStart}
        \For{$i \leftarrow 1$ \KwTo $|ptB|$} {
            c1 = g1.BReturnTuples[ptA(j)(1)](ptB(i)(1))\tcp*[r]{an exit vertex of g1}
            c2 = g2.BReturnTuples[ptA(j)(2)](ptB(i)(2)) \tcp*[r]{an exit vertex of g2}
            \eIf(\tcp*[f]{Not a new exit vertex of g}){[c1,c2] $\in$ ptAns}
                {
                index = the k such that ptAns(k) == [c1,c2] \;
                g.BReturnTuples[j] = g.BReturnTuples[j] || index \;
                }
            (\tcp*[f]{Identified a new exit vertex of g}){
                g.numberOfExits = g.numberOfExits + 1 \;
                g.BReturnTuples[j] = g.BReturnTuples[j] || g.numberOfExits \;
                ptAns = ptAns || [c1,c2] \; \label{Li:PPConcatenateExitPair}
            }
        }  \label{Li:PPExitVertexLoopEnd}
    }
    \Return [RepresentativeGrouping(g), ptAns]\; \label{Li:PPTabulateAnswer}
}
\end{algorithm}

\begin{algorithm}
\SetKwFunction{Reduce}{Reduce}
\SetKwFunction{InsertBConnection}{InsertBConnection}
\SetKwProg{myalg}{Algorithm}{}{end}
\Input{Grouping g, ReductionTuple reductionTuple}
\Output{Grouping g' that is "reduced"}
\Begin{
\label{Li:ReduceNoProcessingStart}
\tcp{Test whether any reduction actually needs to be carried out}
\label{Li:ReduceNoProcessingEnd}
\If{reductionTuple == [1..|reductionTuple|]}{\Return g\;}
\label{Li:ReduceNoDistinctionStart}
\tcp{If only one exit vertex, then collapse to no-distinction proto-CFLOBDD}
\label{Li:ReduceNoDistinctionEnd}
\If{|\{x : x $\in$ reductionTuple\}| == 1}{ \Return NoDistinctionProtoCFLOBDD(g.level)\;}
InternalGrouping g' = new InternalGrouping(g.level)\;
g'.numberOfExits = |\{ x : x $\in$ reductionTuple \}|\;
Tuple reductionTupleA = []\;
\For{$i \leftarrow 1$ \KwTo $g.numberOfBConnections$}{
\label{Li:BConnectionCollapseStart}
Tuple deducedReturnClasses = [reductionTuple(v) : v $\in$ g.BReturnTuples[i]]\;
\label{Li:AConnectionProcessingStart}
Tuple$\times$Tuple [inducedReturnTuple, inducedReductionTuple] = CollapseClassesLeftmost(deducedReturnClasses)\;
\label{Li:BConnectionCollapseEnd}
Grouping h = Reduce(g.BConnection[i], inducedReturnTuple)\;
int position = InsertBConnection(g', h, inducedReturnTuple)\;  \label{Li:CallInsertBConnection}
reductionTupleA = reductionTupleA || position\;  \label{Li:ExtendReductionTupleA}
}
Tuple$\times$Tuple [inducedReturnTuple, inducedReductionTuple] = CollapseClassesLeftmost(reductionTupleA)\;  \label{Li:CallCollapseClassesLeftmost}
Grouping h' = Reduce(g.AConnection, inducedReductionTuple)\;
g'.AConnection = h'\;
g'.AReturnTuple = inducedReturnTuple\;  \label{Li:AConnectionProcessingEnd}
\Return RepresentativeGrouping(g')\;
}
\caption{Reduce\label{Fi:Reduce}}
\end{algorithm}

\begin{algorithm}[tb!]
\caption{InsertBConnection\label{Fi:InsertBConnection}}
\small{
\Input{InternalGrouping g, Grouping h, ReturnTuple returnTuple}
\Output{int --  Insert (h, ReturnTuple) as the next B-connection of g, if they are a new combination; otherwise return the index of the existing occurence of (h, ReturnTuple)}
\Begin{
\label{Li:InsertBConnectionStart}
\changed{\lIf{there exists $i \in [1..g.numberOfBConnections]$ such that g.BConnection[i] == h $\&\&$ g.BReturnTuples[i] == returnTuple}{\Return i}
g.numberOfBConnections = g.numberOfBConnections + 1\;
g.BConnections[g.numberOfBConnections] = h\;
g.BReturnTuples[g.numberOfBConnections] = returnTuple\;
\Return g.numberOfBConnections\;}
\label{Li:InsertBConnectionEnd}
}
}
\end{algorithm}

The algorithms involved are given as \algseqref{InducedTuples}{InsertBConnection}.
(In \algrefs{BinaryApplyAndReduce}{PairProduct}, we assume that the \texttt{CFLOBDD} or \texttt{Grouping} arguments are objects whose highest-level groupings are all at the same level.)
\begin{itemize}
  \item
    The operation \texttt{BinaryApplyAndReduce}, given as \algref{BinaryApplyAndReduce}, starts with a call on \texttt{PairProduct} (\lineref{BAAR:CallPairProduct}).
    \texttt{PairProduct}, given as \algrefs{PairProduct}{PairProductContinued}, performs a recursive traversal of the two \texttt{Grouping} arguments, \texttt{g1} and \texttt{g2}, to create a proto-CFLOBDD that represents a kind of cross product.
    \texttt{PairProduct} returns \texttt{g}, the proto-CFLOBDD formed in this way, as well as \texttt{pt}, a descriptor of the exit vertices of \texttt{g} in terms of pairs of exit vertices of the highest-level groupings of \texttt{g1} and \texttt{g2}.
    (See \algref{PairProduct},
    \lineseqref{PPNoDistinctionStart}{PPBothForkGroupingsEnd} and \algref{PairProductContinued}, \lineseqref{ptAnsDeclaration}{PPTabulateAnswer}.)
    
    \hspace*{1.5ex}
    From the semantic perspective, each exit vertex $e_1$ of \texttt{g1} represents a (non-empty) set $A_1$ of variable-to-Boolean-value
    assignments that lead to $e_1$ along a matched path in \texttt{g1};
    similarly, each exit vertex $e_2$ of \texttt{g2} represents a (non-empty) set of variable-to-Boolean-value assignments $A_2$ that lead to $e_2$ along a matched path in \texttt{g2}.
    If \texttt{pt}, the descriptor of \texttt{g}'s exit vertices returned by \texttt{PairProduct}, indicates that exit vertex $e$ of \texttt{g} corresponds to $[e_1,e_2]$, then $e$ represents the (non-empty) set of assignments $A_1 \cap A_2$.

    \hspace*{1.5ex}
    Function caching (\sectref{FunctionCaching}) is performed for \texttt{PairProduct}.
    Consequently, for a given invocation of \texttt{BinaryApplyAndReduce} on CFLOBDDs $n_1$ and $n_2$, for each level $k$, the number of calls on \texttt{PairProduct} for level $k$ is bounded by the product of the numbers of level-$k$ groupings in $n_1$ and $n_2$.
    Moreover, for each call on \texttt{PairProduct}($g_1$, $g_2$),
    the number of exit vertices in grouping $g$ is bounded by the product of the numbers of exit vertices in $g_1$ and $g_2$ (see \lineref{PPConcatenateExitPair}).
    Similarly, the number of middle vertices in $g$ is bounded by the product of the numbers of middle vertices in $g_1$ and $g_2$ (see \lineref{PPNumberOfBConnections}).
    Thus, the size of \texttt{g} is bounded by the product of the sizes of $g_1$ and $g_2$.
    Consequently, the cost of the call on \texttt{PairProduct} in \lineref{BAAR:CallPairProduct} of \algref{BinaryApplyAndReduce} is bounded by the sum over $k \in [0..l]$ of the products of the sizes of the level-$k$ groupings in $n_1$ and $n_2$, and hence polynomial in the sizes of $n_1$ and $n_2$ (see \footnoteref{ProductConstructionCost}).

    \hspace*{1.5ex}
    \Lineseqref{PPNoDistinctionStart}{PPNoDistinctionEnd} of \texttt{PairProduct} perform special-case processing when either argument to \texttt{PairProduct} is a NoDistinctionProtoCFLOBDD.
    At level $0$, these checks---along with \lineref{PPBothForkGroupingsEnd}---implement the base case of \texttt{PairProduct}.
    However, at levels greater than $0$, they allow \texttt{PairProduct} to return immediately, without making any recursive calls to traverse $g_1$ or $g_2$, potentially saving considerable work.

  \item
    \texttt{BinaryApplyAndReduce} then uses \texttt{pt}, together with \texttt{op} and the value tuples from \texttt{CFLOBDD}s \texttt{n1} and \texttt{n2}, to create the tuple \texttt{deducedValueTuple} of leaf values that should be associated with the exit vertices (see \algref{BinaryApplyAndReduce}, \lineref{BAAR:LeafValues}]).

    \hspace*{1.5ex}
    However, \texttt{deducedValueTuple} is a \emph{tentative} value tuple for the constructed \texttt{CFLOBDD};
    because of Structural Invariant~\ref{Inv:6}, this tuple needs to be collapsed if it contains duplicate values.

  \item
    \texttt{BinaryApplyAndReduce} obtains two tuples, \texttt{inducedValueTuple} and  \texttt{inducedReductionTuple}, which describe the collapsing of duplicate leaf values, by calling the subroutine \texttt{CollapseClassesLeftmost} (\algref{InducedTuples}):
    \begin{itemize}
      \item
        Tuple \texttt{inducedValueTuple} serves as the final value tuple for the \texttt{CFLOBDD} constructed by \texttt{BinaryApplyAndReduce}.
        In \texttt{inducedValueTuple}, the leftmost occurrence of a value in \texttt{deducedValueTuple} is retained as the representative for that equivalence class of values.
        For example, if \texttt{deducedValueTuple} is $[2,2,1,1,4,1,1]$, then \texttt{inducedValueTuple} is $[2,1,4]$.

        \hspace*{1.5ex}
        The use of leftward folding is dictated by Structural Invariant~\ref{Inv:2b}.
      \item
        Tuple {\tt inducedReductionTuple\/} describes the collapsing of duplicate
        values that took place in creating {\tt inducedValueTuple\/} from
        {\tt deducedValueTuple\/}:
        {\tt inducedReductionTuple\/} is the same length as {\tt deducedValueTuple},
        but each entry {\tt inducedReductionTuple(i)\/} gives the ordinal position
        of {\tt deducedValueTuple(i)\/} in {\tt inducedValueTuple}.
        For example, if {\tt deducedValueTuple\/} is $[2,2,1,1,4,1,1]$ (and thus
        {\tt inducedValueTuple\/} is $[2,1,4]$), then {\tt inducedReductionTuple\/}
        is $[1,1,2,2,3,2,2]$---meaning that positions 1 and 2 in {\tt deducedValueTuple\/}
        were folded to position 1 in {\tt inducedValueTuple}, positions 3, 4, 6, and 7
        were folded to position 2 in {\tt inducedValueTuple},
        and position 5 was folded to position 3 in {\tt inducedValueTuple}.
    \end{itemize}
    (See \algref{BinaryApplyAndReduce}, \lineref{BAAR:CollapseLeafValues}, as well as \algref{InducedTuples}.)
  \item
    Finally, {\tt BinaryApplyAndReduce\/} performs a corresponding reduction
    on {\tt Grouping g\/}, by calling the subroutine {\tt Reduce}, which
    creates a new {\tt Grouping\/} in which {\tt g\/}'s exit vertices
    are folded together with respect to tuple {\tt inducedReductionTuple} (\algref{BinaryApplyAndReduce}, \lineref{BAAR:CallReduce}).

    \hspace*{1.5ex}
    Procedure {\tt Reduce}, given as \algref{Reduce},
    recursively traverses {\tt Grouping g}, working in the backwards
    direction, first processing each of {\tt g\/}'s $B$-connections in turn,
    and then processing {\tt g\/}'s $A$-connection.
    In both cases, the processing is similar to the (leftward) collapsing
    of duplicate leaf values that is carried out by {\tt BinaryApplyAndReduce\/}:
    \begin{itemize}
      \item
        In the case of each $B$-connection, rather than collapsing with respect
        to a tuple of duplicate final values, {\tt Reduce\/}'s actions are
        controlled by its second argument, {\tt reductionTuple},
        which clients of {\tt Reduce\/}---namely, {\tt BinaryApplyAndReduce\/}
        and {\tt Reduce\/} itself---use to inform {\tt Reduce\/} how
        {\tt g\/}'s exit vertices are to be folded together.
        For instance, the value of {\tt reductionTuple} could be
        $[1,1,2,2,3,2,2]$---meaning that exit vertices 1 and 2 are to be
        folded together to form exit vertex 1, exit vertices 3, 4, 6, and 7
        are to be folded together to form exit vertex 2, and exit vertex 5
        by itself is to form exit vertex 3.

        \hspace*{1.5ex}
        In \algref{Reduce}, \lineref{AConnectionProcessingStart},
        the value of {\tt reductionTuple} is used to create a tuple that
        indicates the equivalence classes of targets of return edges for
        the $B$-connection under consideration (in terms of the new exit
        vertices in the {\tt Grouping\/} that will be created to replace
        {\tt g\/}).
        
        \hspace*{1.5ex}
        Then, by calling the subroutine {\tt CollapseClassesLeftmost},
        {\tt Reduce\/} obtains two tuples,
        \linebreak
        {\tt inducedReturnTuple\/} and 
        {\tt inducedReductionTuple}, that describe the collapsing that needs
        to be carried out on the exit vertices of the $B$-connection under
        consideration (\algref{Reduce}, \lineref{BConnectionCollapseEnd}).

        \hspace*{1.5ex}
        Tuple {\tt inducedReductionTuple} is used to make a recursive call on {\tt Reduce\/} to process the $B$-connection;
        {\tt inducedReturnTuple\/} is used as the return tuple for the {\tt Grouping\/} returned from that call.
        Note how the call on {\tt InsertBConnection\/} (\algref{InsertBConnection}) in \lineref{CallInsertBConnection} of {\tt Reduce\/} enforces structural invariant~\ref{Inv:4} of \defref{CFLOBDD}.\footnote{
          In our implementation, \texttt{InsertBConnection} performs a left-to-right search of \texttt{g.BConnection} and \texttt{g.BReturnTuples}, but it could be implemented as an (expected) unit-time operation using a hashed dictionary, keyed on (Grouping, ReturnTuple) pairs.
        }
      \item
        As the $B$-connections are processed, {\tt Reduce\/} uses the {\tt position\/}
        information returned from
        \linebreak
        {\tt InsertBConnection\/} to build up
        the tuple {\tt reductionTupleA}        
        (\algref{Reduce}, \lineref{ExtendReductionTupleA}).
        This tuple indicates how to reduce the $A$-connection of {\tt g}.
      \item
        Finally, via processing similar to what was done for each $B$-connection,
        two tuples are obtained that describe the collapsing that needs
        to be carried out on the exit vertices of the $A$-connection, and
        an additional call on {\tt Reduce\/} is carried out.
        (See \algref{Reduce}, \lineseqref{CallCollapseClassesLeftmost}{AConnectionProcessingEnd}.)
    \end{itemize}
\end{itemize}
Function caching (\sectref{FunctionCaching}) is performed for \texttt{Reduce}, with respect to both arguments \texttt{g} and \texttt{reductionTuple}.

\sectref{CostOfReduce} shows that the time for a call to $\text{Reduce}(n,\rt)$ with output CFLOBDD $n'$ is asymptotically bounded by $O(|n| \times |n'|)$.
Because the time for \texttt{PairProduct} to perform the product construction of two CFLOBDDs $n_1$ and $n_2$ is asymptotically bounded by the product of their sizes (i.e., $O(|n_1|\times |n_2|)$), the overall time to perform $\text{BinaryApplyAndReduce}(n_1, n_2)$ is $O(|n_1|\times |n_2| \times |n'|)$, which is polynomial in the sizes of the input and output CFLOBDDs.

Recall that a call on {\tt RepresentativeGrouping(g)\/} may have the side effect
of installing {\tt g\/} into the table of memoized {\tt Grouping\/}s.
We do not wish for this table to ever be polluted by non-well-formed
proto-CFLOBDDs.
Thus, there is a subtle point as to why the grouping {\tt g\/} constructed
during a call on {\tt PairProduct\/} meets
structural invariant~\ref{Inv:4}---and hence why it is permissible
to call {\tt RepresentativeGrouping(g)\/} in line~[\ref{Li:PPTabulateAnswer}]
of \algref{PairProductContinued}.
We give this proof in Appendix~\sectref{PairProductProof}.

Lastly, in the case of Boolean-valued CFLOBDDs, there are
$16$ possible binary operations, corresponding to the
$16$ possible two-argument truth tables ($2 \times 2$ matrices with
Boolean entries).
All $16$ possible binary operations are special cases of
{\tt BinaryApplyAndReduce}; these can be performed by
passing {\tt BinaryApplyAndReduce} an appropriate
value for argument {\tt op} (i.e., some $2 \times 2$ Boolean matrix).

\begin{example}\label{Exa:xor-cross-equiv}
  \figref{BinaryApplyAndReduceExample} illustrates how the CFLOBDD for $\lambda x_0, x_1 . T$ is created from the ``or'' ($\lor$) of the CFLOBDDs for $\lambda x_0, x_1 . x_0 \oplus x_1$ and $\lambda x_0, x_1 . x_0 \Leftrightarrow x_1$.
  \figref{BinaryApplyAndReduceExample}c is the result of calling \texttt{PairProduct} on the CFLOBDDs for $\lambda x_0, x_1 . x_0 \oplus x_1$ and $\lambda x_0, x_1 . x_0 \Leftrightarrow x_1$.
  After $\lor$ is applied to the values in each of the terminal-value pairs $[F,T]$ and $[T,F]$, we obtain a mock-CFLOBDD that has two exit vertices associated with terminal value $T$.
  To restore the structural invariants and create a CFLOBDD, the two exit vertices must be folded together, and a reduction performed on each of the two B-connections.
  In each case, \texttt{Reduce} is called with \texttt{reductionTuple} $[1,1]$.
  Because these reductions result in the same B-connection proto-CFLOBDDs with identical return edges (\figref{BinaryApplyAndReduceExample}e and \figref{BinaryApplyAndReduceExample}f), which would be discovered by \texttt{InsertBConnection} (\algref{InsertBConnection}), it is necessary to fold together the two middle vertices and perform a reduction on the A-connection:
  \texttt{Reduce} is called with \texttt{reductionTuple} $[1,1]$,
  This step produces the CFLOBDD for $\lambda x_0, x_1 . T$ (\figref{BinaryApplyAndReduceExample}g). 
\end{example}

For another example that illustrates \texttt{Reduce}, see \exref{ReduceTwoLevel}.

\paragraph{Remark}
For BDDs, the two-step process of ``pair-product-followed-by-reduction'' need only be conceptual.
Binary operations on BDDs can be implemented during a single recursive pass by performing the appropriate value-reduction operation on terminal values, and then, as the recursion unwinds, having the BDD-node constructor perform hash-consing (suppressing the construction of don't-care nodes) so that non-reduced structures are never created \cite[\S3.3]{DBLP:conf/ml/FilliatreC06}.

Such an approach does not seem to be possible with CFLOBDDs because reduction is not obtained as a side-effect of hash-consing.
The flow of control in \texttt{Reduce} (\algref{Reduce}) follows the sequence of elements of a matched path backwards.
\texttt{Reduce} makes recursive calls for the B-connection proto-CFLOBDDs and then a recursive call for the A-connection proto-CFLOBDD (rather than working bottom-up from level-$0$ groupings to level-$k$ groupings, which would be the analogue of the bottom-up construction performed with BDDs.)
Consequently, our CFLOBDD implementation maintains the weaker invariant that the \texttt{Grouping}s that appear in the hash-consing tables are the heads of fully-fledged proto-CFLOBDDs, not mock-proto-CFLOBDDs---i.e., structural invariants (\ref{Inv:1})--(\ref{Inv:4}) of \defref{CFLOBDD} hold.
While such \texttt{Grouping}s may have to be reduced later, there is never any issue of the hash-cons tables being polluted by mock-proto-CFLOBDDs that violate the proto-CFLOBDD structural invariants.

\smallskip
Some unary operations on CFLOBDDs may also need to apply \texttt{Reduce}.
For example, if the terminal values of a CFLOBDD are numeric values, the unary function that squares all terminal values could initially result in a mock-CFLOBDD that has duplicate terminal values.
\texttt{Reduce}, with an appropriate \texttt{ReductionTuple}, would be then applied to create the corresponding CFLOBDD.

In a manner similar to the binary operations on CFLOBDDs, we can perform ternary operations on CFLOBDDs.
More details about how to perform these operations can be found in Appendix \sectref{ternary-op}.
Other operations, such as restriction ($f|_{x_i = v}$) and existential quantification ($\exists x_i.f$) can also be performed on a CFLOBDD;
the corresponding algorithms can be found in Appendices \sectref{restrict-op} and \sectref{existential-op}, respectively.

\subsection{Representing Matrices and Vectors using CFLOBDDs}
Matrices and vectors are important data structures used in quantum simulation (\sectref{ResearchQuestionTwo}).
In this subsection and following subsections, we discuss how to represent Boolean or non-Boolean matrices and vectors using CFLOBDDs,
and how to perform operations on such representations of matrices and vectors.

\subsubsection{Matrix Representation}
\label{Se:matrix-rep}

We represent square matrices using CFLOBDDs by having the Boolean variables correspond to bit positions in the
row and column indices.
That is, suppose that $M$ is a $2^n \times 2^n$ matrix;
$M$ is represented using a CFLOBDD over $2n$ Boolean variables $\{x_0,x_1,\ldots,x_{n-1}\} \cup \{y_0,y_1,\ldots,y_{n-1}\}$, where the variables $\{x_0,x_1,\ldots,x_{n-1}\}$ represent the successive bits of $x$---the first index into $M$---and the variables $\{y_0,y_1,\ldots,y_{n-1}\}$ represent the successive bits of
$y$---the second index into $M$, with $\log{n} + 1$ levels.\footnote{
  Matrices of other sizes, including non-square matrices, can be represented by embedding them within a larger square matrix.
  For matrices with >2 dimensions, there would be a set of Boolean variables for the index-bits of each dimension.
}
The indices of elements of matrices represented in this way
start at 0;
for example, the upper-left corner element of a matrix $M$
is $M(0,0)$.
When $n = 2$, $M(0,0)$ corresponds to the value
associated with the assignment $[x_0 \mapsto 0, x_1 \mapsto 0, y_0
\mapsto 0, y_1 \mapsto 0]$.

It is often convenient to use either the {\em interleaved\/} ordering
i.e., the order of the Boolean variables is
chosen to be $x_0,y_0,x_1,y_1,\ldots,x_{n-1},y_{n-1}$---or the {\em
reverse interleaved\/} ordering---i.e., the order is
$y_{n-1},x_{n-1},y_{n-2},y_{n-2},\ldots,y_0,x_0$.

One nice property of the interleaved-variable ordering is that, as we
work through each pair of variables in an assignment,
the matrix elements that remain ``in play''
represent a sub-block of the full matrix.
For instance, suppose that we have a Boolean matrix whose entries are defined by the function $\lambda {x_0}{y_0}{x_1}{y_1}.({x_0} \land
{y_0}) \lor ({x_1} \land {y_1})$, as shown below:
\[
\begin{array}{ccrrrrr}
          &   & 0 & 0 & 1 & 1 & y_0 \\
          &   & 0 & 1 & 0 & 1 & y_1 \\
                \cline{3-6}
        0 & 0 & \multicolumn{1}{|r}{F} & \multicolumn{1}{r|}{F} & F & \multicolumn{1}{r|}{F} & \\
        0 & 1 & \multicolumn{1}{|r}{F} & \multicolumn{1}{r|}{T} & F & \multicolumn{1}{r|}{T} & \\
                \cline{3-6}
        1 & 0 & \multicolumn{1}{|r}{F} & \multicolumn{1}{r|}{F} & T & \multicolumn{1}{r|}{T} & \\
        1 & 1 & \multicolumn{1}{|r}{F} & \multicolumn{1}{r|}{T} & T & \multicolumn{1}{r|}{T} & \\
                \cline{3-6} 
      x_0 & x_1 & & & & &
\end{array}
\]
If we were to evaluate the 16 possible assignments in lexicographic
order, i.e., in the order
\begin{center}
\begin{tabular}{l}
  $[{x_0} \mapsto 0, {y_0} \mapsto 0, {x_1} \mapsto 0, {y_1} \mapsto 0]$, \\
  $[{x_0} \mapsto 0, {y_0} \mapsto 0, {x_1} \mapsto 0, {y_1} \mapsto 1]$, \\
  $[{x_0} \mapsto 0, {y_0} \mapsto 0, {x_1} \mapsto 1, {y_1} \mapsto 0]$, \\
  $[{x_0} \mapsto 0, {y_0} \mapsto 0, {x_1} \mapsto 1, {y_1} \mapsto 1]$, \\
  $[{x_0} \mapsto 0, {y_0} \mapsto 1, {x_1} \mapsto 0, {y_1} \mapsto 0]$, \\
  \multicolumn{1}{c}{$\vdots$} \\
  $[{x_0} \mapsto 1, {y_0} \mapsto 1, {x_1} \mapsto 1, {y_1} \mapsto 0]$, \\
  $[{x_0} \mapsto 1, {y_0} \mapsto 1, {x_1} \mapsto 1, {y_1} \mapsto 1]$
\end{tabular}
\end{center}
then we would step through the array elements in the order shown
below:
\[
\begin{array}{ccrrrrr}
         &   & 0 & 0 & 1 & 1 & y_0 \\
         &   & 0 & 1 & 0 & 1 & y_1 \\
               \cline{3-6}
       0 & 0 & \multicolumn{1}{|r}{1} & \multicolumn{1}{r|}{2} & 5 & \multicolumn{1}{r|}{6} & \\
       0 & 1 & \multicolumn{1}{|r}{3} & \multicolumn{1}{r|}{4} & 7 & \multicolumn{1}{r|}{8} & \\
               \cline{3-6}
       1 & 0 & \multicolumn{1}{|r}{9} & \multicolumn{1}{r|}{10} & 13 & \multicolumn{1}{r|}{14} & \\
       1 & 1 & \multicolumn{1}{|r}{11} & \multicolumn{1}{r|}{12} & 15 & \multicolumn{1}{r|}{16} & \\
               \cline{3-6}
     x_0 & x_1 & & & & &
\end{array}
\]
If the first two elements of an assignment are $[x_0 \mapsto 0, y_0 \mapsto 1]$, the elements still in play are the ones in the positions labeled 5, 6, 7, and 8 in the upper-right quadrant.

\begin{figure}
    \centering
    \includegraphics[width=.75\linewidth]{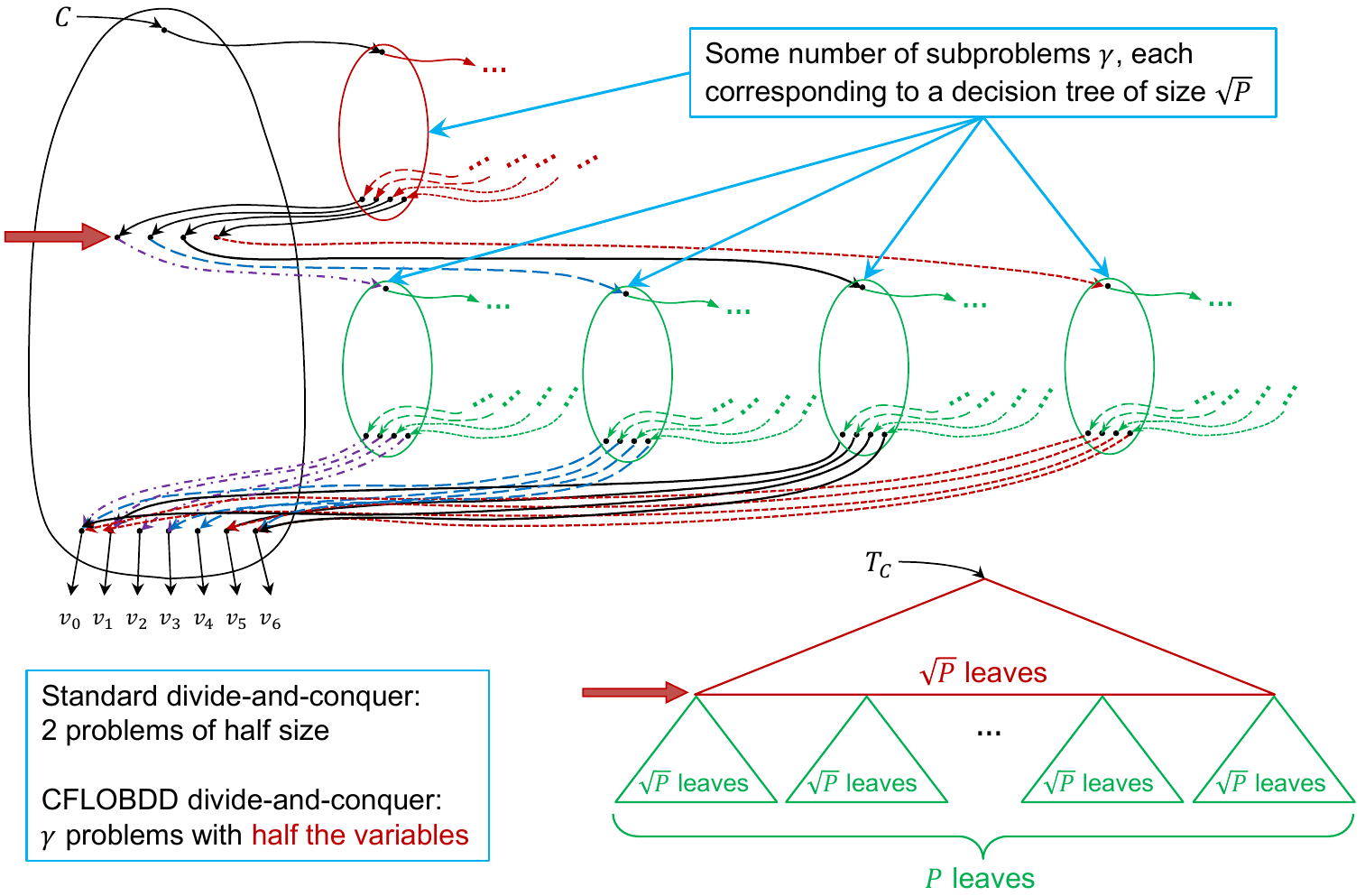}
    \caption{
      Why a $\sqrt{P} \times \sqrt{P}$-decomposition is the natural problem decomposition for divide-and-conquer algorithms on structures represented as CFLOBDDs.
    }
    \label{Fi:RootNDivideAndConquer}
\end{figure}

There is an important, non-standard consequence of using a CFLOBDD to represent a matrix that very likely is not apparent from the discussion above,
having to do with the sizes of subproblems in a divide-and-conquer algorithm.
In fact, the same issue arises in designing a divide-and-conquer algorithm over any data structure represented via a CFLOBDD, as illustrated in \figref{RootNDivideAndConquer}.
Suppose that a CFLOBDD $C$ represents a decision tree $T_C$ that has $\log_2 P$ variables, and thus $P$ leaves.
The A-connection proto-CFLOBDD accounts for half the variables, namely, $\frac{\log_2 P}{2}$, and the B-connection proto-CFLOBDDs account for the remaining half.
The natural way to divide $C$ in a divide-and-conquer algorithm is at the middle vertices of the outermost grouping: process the A-connection proto-CFLOBDD, and then the B-connection proto-CFLOBDDs (or vice versa).
In $T_C$, this division corresponds to the tree partitioning shown in the lower-right corner of \figref{RootNDivideAndConquer}: $C$'s A-connection proto-CFLOBDD corresponds to the {\color{red} red tree} rooted at the apex of $T_C$ (which has $\sqrt{P}$ leaves);
$C$'s B-connection proto-CFLOBDDs correspond to the
$\sqrt{P}$ {\color{SeaGreen} green trees} in the bottom half of $T_C$ (each of which has $\sqrt{P}$ leaves).
In contrast with standard divide-and-conquer algorithms, which often divide a problem into two subproblems of half size, this approach divides the original problem into $\sqrt{P} + 1$ subproblems, each of size $O(\sqrt{P})$.
With CFLOBDDs, in contrast to decision trees, there is the potential for subproblems to be shared among the A-connection and B-connections, so one ends up with some number of subproblems $\gamma$ ($= 1 + \textrm{the number of middle vertices}$), each of size $O(\sqrt{P})$.

Whereas with a decision tree it would be easy to take the conventional approach of dividing a problem into two problems of half \emph{size}---using the left child and right child of the apex, in essence ``peeling off'' the topmost ply, such a division is not convenient for CFLOBDDs because the decision variable for the topmost ply is associated with the level-0 grouping found by following the A-connection of the A-connection of the $\ldots$, etc.
For CFLOBDDs, the natural structure of a divide-and-conquer algorithm lies with the A-connection proto-CFLOBDD and the set of B-connection proto-CFLOBDDs---a division based on dividing \emph{the number of variables} in half.

In certain cases, including matrix multiplication (\sectref{matrix-mult}), the $\gamma \times \sqrt{P}$-decomposition structure forced us to rethink how to perform various algorithms.

Let us now consider how such a decomposition works for an $N \times N$ matrix $M$, assuming the interleaved-variable ordering, where $N = 2^n$.
Thus, $n$ is the number of bits in a row-index (respectively, column-index);
there are $2n$ Boolean variables in total;
and $P = N^2$.
$M$ would be decomposed into $\sqrt{P} = \sqrt{N^2} = N$ sub-matrices, each of size $\sqrt{N} \times \sqrt{N}$.
At top level, the A-connection of the CFLOBDD for $M$ captures commonalities in the $\sqrt{N} \times \sqrt{N}$ block structure of $M$, and the B-connections represent the blocks: sub-matrices of $M$ of size $\sqrt{N} \times \sqrt{N}$.

For instance, when a level-$3$ CFLOBDD is used to represent a matrix, there are $2n = 8 = 2^3$ index variables---i.e., $n = 4$ variables for each dimension---so the matrix is of size $16 \times 16$.
Its natural constituents are level-$2$ proto-CFLOBDDs, which each have $2^2 = 4$ index variables.
Thus, there are $2$ A-connection variables for each dimension of the block structure, and $2$ B-connection variables for each dimension of the sub-matrix for a block.
Consequently, a matrix of size $16 \times 16$ is decomposed into $16$ ($= 4 \times 4 = \sqrt{16} \times \sqrt{16}$) blocks, each of size $4 \times 4 = \sqrt{16} \times \sqrt{16}$, as indicated below:

\noindent
{\small
\begin{equation*}
\begin{array}{ccccrrrrrrrrrrrrrrrrr}
          &   &   &   & 0 & 0 & 0 & 0 & 0 & 0 & 0 & 0 & 1 & 1 & 1 & 1 & 1 & 1 & 1 & 1 & y_0 \\
          &   &   &   & 0 & 0 & 0 & 0 & 1 & 1 & 1 & 1 & 0 & 0 & 0 & 0 & 1 & 1 & 1 & 1 & y_1 \\
          &   &   &   & 0 & 0 & 1 & 1 & 0 & 0 & 1 & 1 & 0 & 0 & 1 & 1 & 0 & 0 & 1 & 1 & y_2 \\
          &   &   &   & 0 & 1 & 0 & 1 & 0 & 1 & 0 & 1 & 0 & 1 & 0 & 1 & 0 & 1 & 0 & 1 & y_3 \\
                \cline{5-20}
        0 & 0 & 0 & 0 & \multicolumn{1}{|r}{F} & F & F & \multicolumn{1}{r|}{F} & F & F & F & \multicolumn{1}{r|}{F} & F & F & F & \multicolumn{1}{r|}{F} & F & F & F & \multicolumn{1}{r|}{F} & \\
        0 & 0 & 0 & 1 & \multicolumn{1}{|r}{F} & T & F & \multicolumn{1}{r|}{T} & F & F & F & \multicolumn{1}{r|}{F} & F & F & F & \multicolumn{1}{r|}{F} & F & F & F & \multicolumn{1}{r|}{F} & \\
        0 & 0 & 1 & 0 & \multicolumn{1}{|r}{F} & F & T & \multicolumn{1}{r|}{T} & F & F & F & \multicolumn{1}{r|}{F} & F & F & F & \multicolumn{1}{r|}{F} & F & F & F & \multicolumn{1}{r|}{F} & \\
        0 & 0 & 1 & 1 & \multicolumn{1}{|r}{F} & T & T & \multicolumn{1}{r|}{T} & F & F & F & \multicolumn{1}{r|}{F} & F & F & F & \multicolumn{1}{r|}{F} & F & F & F & \multicolumn{1}{r|}{F} & \\
                \cline{5-20}
        0 & 1 & 0 & 0 & \multicolumn{1}{|r}{F} & F & F & \multicolumn{1}{r|}{F} & F & F & F & \multicolumn{1}{r|}{F} & F & F & F & \multicolumn{1}{r|}{F} & F & F & F & \multicolumn{1}{r|}{F} & \\
        0 & 1 & 0 & 1 & \multicolumn{1}{|r}{F} & F & F & \multicolumn{1}{r|}{F} & F & T & F & \multicolumn{1}{r|}{T} & F & F & F & \multicolumn{1}{r|}{F} & F & F & F & \multicolumn{1}{r|}{F} & \\
        0 & 1 & 1 & 0 & \multicolumn{1}{|r}{F} & F & F & \multicolumn{1}{r|}{F} & F & F & T & \multicolumn{1}{r|}{T} & F & F & F & \multicolumn{1}{r|}{F} & F & F & F & \multicolumn{1}{r|}{F} & \\
        0 & 1 & 1 & 1 & \multicolumn{1}{|r}{F} & F & F & \multicolumn{1}{r|}{F} & F & T & T & \multicolumn{1}{r|}{T} & F & F & F & \multicolumn{1}{r|}{F} & F & F & F & \multicolumn{1}{r|}{F} & \\
                \cline{5-20}
        1 & 0 & 0 & 0 & \multicolumn{1}{|r}{F} & F & F & \multicolumn{1}{r|}{F} & F & F & F & \multicolumn{1}{r|}{F} & F & F & F & \multicolumn{1}{r|}{F} & F & F & F & \multicolumn{1}{r|}{F} & \\
        1 & 0 & 0 & 1 & \multicolumn{1}{|r}{F} & F & F & \multicolumn{1}{r|}{F} & F & F & F & \multicolumn{1}{r|}{F} & F & T & F & \multicolumn{1}{r|}{T} & F & F & F & \multicolumn{1}{r|}{F} & \\
        1 & 0 & 1 & 0 & \multicolumn{1}{|r}{F} & F & F & \multicolumn{1}{r|}{F} & F & F & F & \multicolumn{1}{r|}{F} & F & F & T & \multicolumn{1}{r|}{T} & F & F & F & \multicolumn{1}{r|}{F} & \\
        1 & 0 & 1 & 1 & \multicolumn{1}{|r}{F} & F & F & \multicolumn{1}{r|}{F} & F & F & F & \multicolumn{1}{r|}{F} & F & T & T & \multicolumn{1}{r|}{T} & F & F & F & \multicolumn{1}{r|}{F} & \\
                \cline{5-20}
        1 & 1 & 0 & 0 & \multicolumn{1}{|r}{F} & F & F & \multicolumn{1}{r|}{F} & F & F & F & \multicolumn{1}{r|}{F} & F & F & F & \multicolumn{1}{r|}{F} & F & F & F & \multicolumn{1}{r|}{F} & \\
        1 & 1 & 0 & 1 & \multicolumn{1}{|r}{F} & F & F & \multicolumn{1}{r|}{F} & F & F & F & \multicolumn{1}{r|}{F} & F & F & F & \multicolumn{1}{r|}{F} & F & T & F & \multicolumn{1}{r|}{T} & \\
        1 & 1 & 1 & 0 & \multicolumn{1}{|r}{F} & F & F & \multicolumn{1}{r|}{F} & F & F & F & \multicolumn{1}{r|}{F} & F & F & F & \multicolumn{1}{r|}{F} & F & F & T & \multicolumn{1}{r|}{T} & \\
        1 & 1 & 1 & 1 & \multicolumn{1}{|r}{F} & F & F & \multicolumn{1}{r|}{F} & F & F & F & \multicolumn{1}{r|}{F} & F & F & F & \multicolumn{1}{r|}{F} & F & T & T & \multicolumn{1}{r|}{T} & \\
                \cline{5-20} 
x_0 & x_1 & x_2 & x_3 &   &   &   &   &   &   &   &   &   &   &   &   &   &   &   &   & \\
\end{array}
\end{equation*}
}
With level-$4$ CFLOBDDs, one has $n = 8$ variables for each dimension in the full-size matrix.
Thus, there are $4$ A-connection variables for each dimension of the block structure, and $4$ B-connection variables for each dimension of the sub-matrix for a block.
Consequently, a matrix of size $256 \times 256$ is decomposed into $256$ ($= 16 \times 16 = \sqrt{256} \times \sqrt{256}$) blocks, each of size $16 \times 16 = \sqrt{256} \times \sqrt{256}$.

In general, an $N \times N$ matrix is decomposed according to its $\sqrt{N} \times \sqrt{N}$ block structure, where each block is of size $\sqrt{N} \times \sqrt{N}$.
With CFLOBDDs, one hopes that many of the blocks are shared among the B-connections (and possibly some blocks are even structurally similar to the block structure itself, represented by the A-connection), so that one ends up with some---hopefully small---number of subproblems $\gamma$, each of size $\sqrt{N} \times \sqrt{N}$.

The CFLOBDD decomposition discussed above is different from (i) the natural decomposition of a matrix represented via a BDD, and (ii) the decomposition used in most divide-and-conquer algorithms on matrices.
Both (i) and (ii) use $\frac{n}{2} \times \frac{n}{2}$-decompositions (and thus decompose a matrix of size $16 \times 16$ into $4$ sub-matrices, each of size $8 \times 8$, and decompose a matrix of size $256 \times 256$ into $4$ sub-matrices, each of size $128 \times 128$).

\subsubsection{Vector Representation}
\label{Se:VectorRepresentation}

A vector can be represented via a CFLOBDD in a manner that is similar to, but simpler than, the way matrices are represented.
A vector of size $2^n \times 1$ can be represented by a CFLOBDD whose highest level is $\log{n}$.
Suppose that $V$ is a $2^n \times 1$ vector;
a CFLOBDD representing $V$ would have $n$ Boolean variables $\{x_0,x_1,\ldots,x_{n-1}\}$ with the variables $\{x_0,x_1,\ldots,x_{n-1}\}$ representing the successive bits of $x$---the index into $V$.\footnote{
  Similar to matrices, vectors of other sizes can be
  represented using CFLOBDDs;
  for instance, they can be embedded within a larger vector whose dimensions are of the form $2^n \times 1$.
}

We typically use either the increasing variable ordering or decreasing variable ordering to represent vectors.
(Similar to matrices, vectors of other sizes can be embedded within a larger vector of the form $2^n \times 1$.)
For example, if
we have a vector whose entries are defined by $\lambda x_0x_1.(x_0 \land x_1)$, the vector would be as follows:
\[
\begin{array}{ccr}
          \cline{3-3}
        0 & 0 & \multicolumn{1}{|r|}{F}\\
        0 & 1 & \multicolumn{1}{|r|}{F}\\
        1 & 0 & \multicolumn{1}{|r|}{F}\\
        1 & 1 & \multicolumn{1}{|r|}{T}\\
                \cline{3-3} 
      x_0 & x_1 & 
\end{array}
\]
This ordering helps in easily converting a vector of size $2^n \times 1$ to a matrix of size $2^n \times 2^n$ with interleaved-variable ordering (with the vector embedded into the matrix as the first column, and the rest of the entries set to zero).

\subsection{Kronecker Product}
\label{Se:KroneckerProduct}

When using CFLOBDDs to represent matrices on which Kronecker products will be performed, we typically use the interleaved-variable ordering.
In this section, we describe two variants of Kronecker product that result in different interleavings of the index variables of the argument matrices.

\subsubsection{Variant 1}
\label{Se:KroneckerProduct:VariantOne}

\begin{figure}
\centering
\begin{tabular}{c@{\hspace{.8in}}c}
  \begin{tabular}[b]{c}
    \includegraphics[height=1.7in]{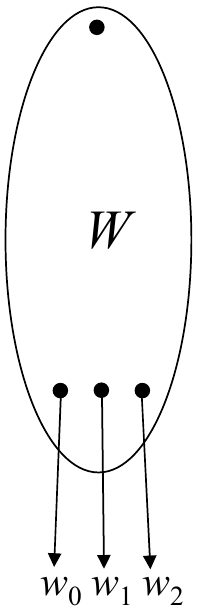}
    \\
    \\
    \includegraphics[height=1.7in]{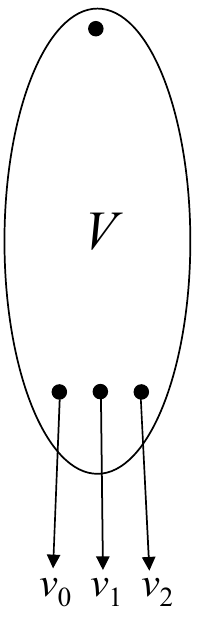}
  \end{tabular}
  &
  \begin{tabular}[b]{c}
    \includegraphics[height=3.7in]{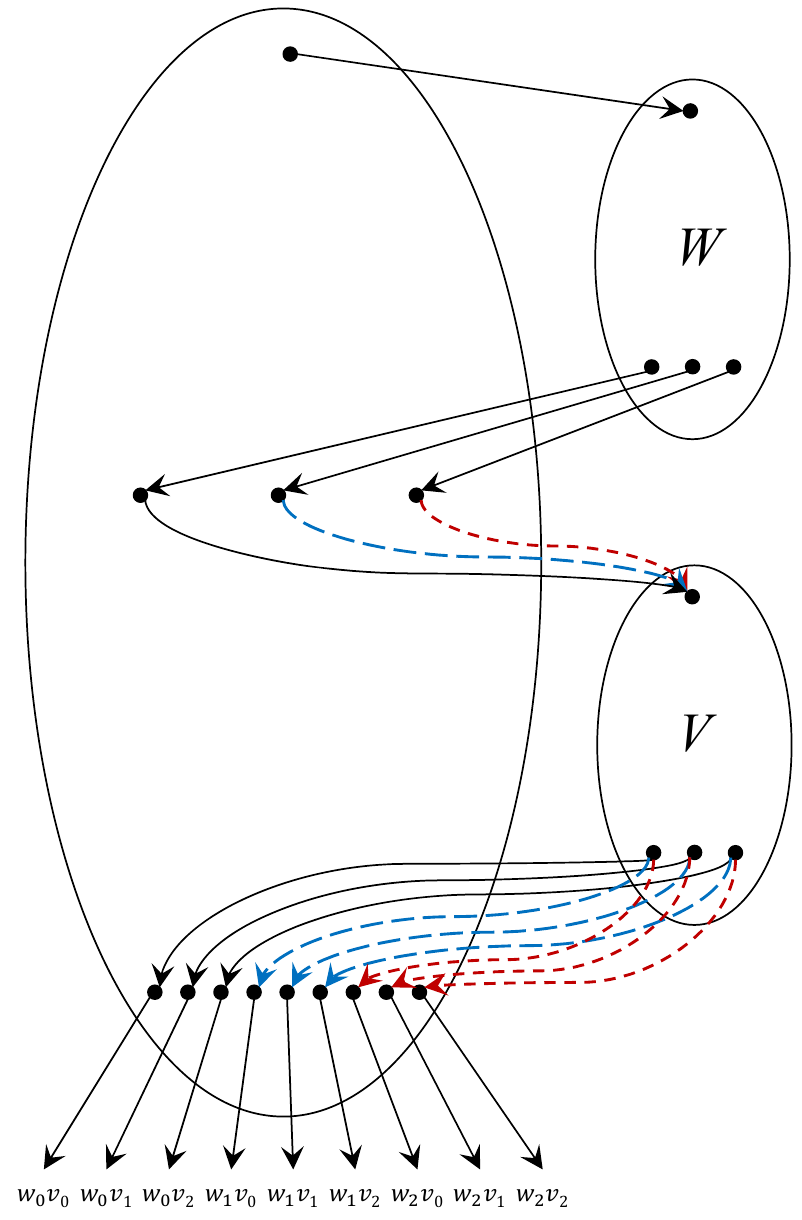}
  \end{tabular}
\end{tabular}
\caption{\protect \raggedright 
  (a) and (b) Level-$k$ CFLOBDDs for arrays $W$ and $V$, respectively;
  (c) level-$k+1$ CFLOBDD for $W \tensor V$.        
}
\label{Fi:KroneckerProduct}
\end{figure}

Suppose that matrices $W$ and $V$ are represented by level-$k$ CFLOBDDs with value tuples $[w_0,\ldots,w_m]$ and $[v_0,\ldots,v_n]$, respectively.
To create the CFLOBDD for $W \tensor V$,
\begin{enumerate}
  \item
    \label{It:ShiftA}
    Create a level $k+1$ grouping that has $m+1$ middle vertices,
    corresponding to the values $[w_0,\ldots,w_m]$, and $(m+1)(n+1)$
    exit vertices, corresponding to the terminal values
    \[
      [{w_i}{v_j} : i \in [0..m], j \in [0..n]],
    \]
    where the terminal values are ordered lexicographically by their $(i,j)$ indexes;
    i.e., $w_0 v_0$, $w_0 v_1$, $\ldots$, $w_m v_{n-1}$, $w_m v_n$.
    The grouping's $A$-connection is the proto-CFLOBDD of $W$, with return edges that map the $i^{\textit{th}}$ exit vertex to middle vertex $w_i$.
  \item
    \label{It:ShiftB}
    For each middle vertex, which corresponds to some value $w_i$,
    $0 \leq i \leq m$, create a $B$-connection to the proto-CFLOBDD of $V$, with return edges that map the $j^{\textit{th}}$ exit vertex to the exit vertex of the level $k+1$ grouping that corresponds to the value ${w_i}{v_j}$.
  \item
    \label{It:ReduceABKroneckerProduct}
    If any of the values in the sequence $[{w_i}{v_j} : i \in [0..m], j \in [0..n]]$ are duplicates, make an appropriate call on {\tt Reduce\/} to fold together the classes of exit vertices that are  associated with the same value, thereby creating a canonical multi-terminal CFLOBDD.
\end{enumerate}
The construction through step (\ref{It:ShiftB}) is illustrated in \figref{KroneckerProduct}.
Pseudo-code for the algorithm in presented in Appendix \sectref{kronecker--product}.

With this algorithm, if $x \bowtie y$ represents the variable ordering of $W$ and $w \bowtie z$ represents the variable ordering of $V$ (where $\bowtie$ denotes the operation to interleave two variable orderings), then $W \tensor V$ has the variable ordering $(x || w) \bowtie (y || z)$ (where $||$ denotes the concatenation of two sequences of variables).

\begin{algorithm}[tb!]
\Input{Grouping g with variable ordering $x \bowtie y$}
\Output{Grouping g' is equal to g with dummy variables $w'$ and $z'$ such that the variable ordering is $((x \bowtie w') \bowtie (y \bowtie z'))$}
\caption{ShiftToAConnectionAtLevelOne\label{Fi:KPA}}
\Begin{
    InternalGrouping g' = new InternalGrouping(g.level + 1)\;
    \eIf{g.level == 1}{
        g'.AConnection = g\;
        g'.AReturnTuple = [1..g.numberOfExits|]\;
        g'.numberOfBConnections = |g'.AReturnTuple|\;
        \For{$i \leftarrow 1$ \KwTo $g'.numberOfBConnections$}{
            g'.BConnection[i] = NoDistinctionProtoCFLOBDD(g.level)\;
            g'.BReturnTuples[i] = [i]\;
        }
        g'.numberOfExits = g.numberOfExits\;
    }
    {
        g'.AConnection = ShiftToAConnectionAtLevelOne(g.AConnection)\;
        g'.AReturnTuple = g.AReturnTuple\;
        g'.numberOfBConnections = |g.AReturnTuple|\;
        \For{$i \leftarrow 1$ \KwTo $g'.numberOfBConnections$}{
            g'.BConnection[i] = ShiftToAConnectionAtLevelOne(g.BConnection[i])\;
            g'.BReturnTuples[i] = g.BReturnTuples[i]\;
        }
        g'.numberOfExits = g.numberOfExits\;
    }
    \Return RepresentativeGrouping(g')\;
}
\end{algorithm}

\begin{algorithm}[tb!]
\Input{Grouping g with variable ordering $w \bowtie z$}
\Output{Grouping g' is equal to g with dummy variables $x'$ and $y'$ such that the variable ordering is $((x' \bowtie w) \bowtie (y' \bowtie z))$}
\caption{ShiftToBConnectionAtLevelOne\label{Fi:KPB}}
\Begin{
    InternalGrouping g' = new InternalGrouping(g.level + 1)\;
    \eIf{g.level == 1}{
        g'.AConnection = NoDistinctionProtoCFLOBDD(g.level)\;
        g'.AReturnTuple = [1]\;
        g'.numberOfBConnections = 1\;
        g'.BConnection[1] = ShiftToBConnectionAtLevelOne(g.level)\;
        g'.BReturnTuples[1] = g.BReturnTuples[1]\;
        g'.numberOfExits = g.numberOfExits\;
    }
    {
        g'.AConnection = ShiftToAConnectionAtLevelOne(g.AConnection)\;
        g'.AReturnTuple = g.AReturnTuple\;
        g'.numberOfBConnections = |g.AReturnTuple|\;
        \For{$i \leftarrow 1$ \KwTo $g'.numberOfBConnections$}{
            g'.BConnection[i] = ShiftToAConnectionAtLevelOne(g.BConnection[i])\;
            g'.BReturnTuples[i] = [1..g.numberOfExits]\;
        }
        g'.numberOfExits = g.numberOfExits\;
    }
    \Return RepresentativeGrouping(g')\;
}
\end{algorithm}

\subsubsection{Variant 2}
\label{Se:KroneckerProduct:VariantTwo}

There is a second way to perform a Kronecker product of $W$ and $V$ that results in a representation of $W \tensor V$ that has the variable ordering $(x \bowtie w) \bowtie (y \bowtie z)$.
(As discussed in~\sectref{simonsalgo}, this version of Kronecker product is useful in Simon's Algorithm.)
The steps for $W \tensor V$ are as follows:
\begin{itemize}
  \item
    For the CFLOBDD that represents matrix $W$, for every grouping of $W$ from the top-level grouping down to level 2, create a copy of the grouping at one level greater.
    At level 1, create a level-$2$ grouping in which (i) the A-connection is the current level-$1$ grouping, and (ii) the B-connections are all the level-$1$    no-distinction proto-CFLOBDD (\figref{NoDistinctionProtoCFLOBDD}(b)).
    In essence, this step adds dummy variables that are proxies for matrix $V$'s variables.
    \algref{KPA} shows the algorithm for this operation.
  \item
    For the CFLOBDD that represents matrix $V$, for every grouping of $V$ from the top-level grouping down to level 2, create a copy of the grouping at one level greater.
    At level 1, create a level-$2$ grouping in which (i) the A-connection is the level-$1$ no-distinction proto-CFLOBDD (\figref{NoDistinctionProtoCFLOBDD}(b)), and (ii) the B-connection is the current level-$1$ grouping.
    In essence, this step adds dummy variables that are proxies for matrix $W$'s variables.
    \algref{KPB} shows the algorithm for this operation.
  \item
    Finally, combine the two newly constructed CFLOBDDs by making a call to BinaryApplyAndReduce (\algref{BinaryApplyAndReduce}), with ``Times'' (multiplication) as the operation to apply to terminal values.
\end{itemize}
Pseudo-code for this algorithm is given as \algref{KP4Voc}.

\begin{algorithm}[tb!]
\Input{CFLOBDDs n1, n2 with variable ordering of n1: $x \bowtie y$ and n2: $w \bowtie z$}
\Output{CFLOBDD n = n1 $\otimes$ n2 with variable ordering of n: $(x || w) \bowtie (y || z)$}
\caption{Kronecker Product \label{Fi:KP4Voc}}
\Begin{
    \tcp{Add dummy variables $\langle w'_1, \cdots, w'_{n} \rangle$, $\langle z'_1, \cdots, z'_{n} \rangle$ such that variable ordering of $g1$ is $(x \bowtie w') \bowtie (y \bowtie z')$}
    CFLOBDD g1 = RepresentativeCFLOBDD(ShiftToAConnectionAtLevelOne(n1.grouping), n1.valueTuple)\;
    \tcp{Add dummy variables $\langle x'_1, \cdots, x'_{n} \rangle$, $\langle y'_1, \cdots, y'_{n} \rangle$ such that variable ordering of $g2$ is $(x' \bowtie w) \bowtie (y' \bowtie z)$}
    CFLOBDD g2 = RepresentativeCFLOBDD(ShiftToBConnectionAtLevelOne(n2.grouping), n2.valueTuple)\;
    CFLOBDD n = BinaryApplyAndReduce(g1, g2, (op)Times)\;
    \Return n\;
}
\end{algorithm}

\subsection{Vector-to-Matrix Conversion}
\label{Se:VectorToMatrixConversion}

\Omit{
In principle, we wouldn't need this specialized algorithm if we had a notion of a row vector.
Suppose column vector $V$ has variable order $v$, and that we can create the row vector $e_{0 \ldots 0}$ with variable order $w$.
Then $V \tensor e_{0 \ldots 0}$ produces a square matrix of size $2^n\times2^n$, which has $V$ in the first column and the remaining columns filled with zeros.
We would have to have the appropriate version of $\tensor$ so that the result has the variable ordering $v \bowtie w$,
}

An important operation that is repeatedly performed in the algorithms summarized in \sectref{QuantumAlgorithms} is vector-matrix multiplication.
Our approach to vector-matrix multiplication is to convert a vector $V$ of size $2^n \times 1$ into a matrix $M$ of size $2^n\times2^n$, where $V$ occupies the first column, and all other entries of $M$ are $0$.
We can then use the matrix-matrix multiplication algorithm discussed in~\sectref{matrix-mult}.\footnote{
  Matrix-vector multiplication is performed similarly.
}
Note that the CFLOBDD representation of $V$ has $n$ variables and its highest level is $\log{n}$, whereas the CFLOBDD for matrix $M$ has $2n$ variables and its highest level is $\log{n} + 1$.
\Omit{
Note that when converting the CFLOBDD that represents $V$ to the CFLOBDD that represent $M$, the number of levels increases by 1.
}

We will denote the variables in the CFLOBDD representation of $V$ as $x = \langle x_1, x_2, \cdots , x_n\rangle$.
The rows of $M$ would use the same $x$ variables. 
To represent the columns of $M$, we introduce an extra set of $n$ variables: $y = \langle y_1, y_2, \cdots, y_n\rangle$. 
As discussed in~\sectref{matrix-rep}, we will use the interleaved ordering of $x$ and $y$ ($x \bowtie y$) to represent $M$;
i.e., the decisions (at level $0$) by A-connection groupings at level 1 are the decisions for $x$ variables, and the decisions (at level $0$) by B-connection groupings at level 1 are those for $y$ variables.

At a high level, the algorithm for vector-to-matrix conversion involves the use of level-$0$ groupings (representing $x$ variables) from $V$'s CFLOBDD representation as the A-connection groupings at level 1 for representing $M$.
The detailed steps of vector-to-matrix conversion are as follows:
\begin{enumerate}
  \item
    \label{It:VectorToMatrix:StepOne}
    Create a new CFLOBDD $c1$ of level $\log{}n + 2$ from $V$'s CFLOBDD representation $c$, by using level-$0$ groupings of $c$ as the A-connection groupings of $c1$'s level-$1$ groupings and adding \emph{Don'tCareGroupings} as B-connection groupings at level $1$ similar to~\algref{KPB}.
    This step creates a CFLOBDD that represents a matrix in which every column is the vector $V$---because all of the column variables correspond to the added \emph{DontCareGroupings}.
  \item
    \label{It:VectorToMatrix:StepTwo}
    Create another CFLOBDD $c2$ of level $\log{}n + 2$ that represents a matrix $Column1Matrix_{n}$ of size $2^n\times2^n$ with only the first column filled with 1s and the rest with 0s. 
    \begin{equation*}
        Column1Matrix_{n} = 
        \begin{bmatrix}
        1 & 0 & \cdots & 0\\
        1 & 0 & \cdots & 0\\
        \vdots & \vdots & \ddots & \vdots\\
        1 & 0 & \cdots & 0\\
        \end{bmatrix}_{2^n\times2^n}\\
    \end{equation*}
    $Column1Matrix_{n}$ can be recursively defined in terms of $Column1Matrix_{n/2}$ matrices of size $2^{n/2}\times2^{n/2}$.
    This property allows $c2$ to both be created and represented efficiently.
    \[
        Column1Matrix_n = 
        \begin{cases}
            \vspace{2ex}
            \begin{bmatrix}
                Column1Matrix_{n/2} & O_{n/2} & \cdots & O_{n/2}\\
                Column1Matrix_{n/2} & O_{n/2} & \cdots & O_{n/2}\\
                \vdots & \vdots & \ddots & \vdots\\
                Column1Matrix_{n/2} & O_{n/2} & \cdots & O_{n/2}\\
            \end{bmatrix}_{2^n\times2^n} & \\
            \vspace{2ex}
             \quad \quad = Column1Matrix_{n/2}\tensor Column1Matrix_{n/2}
            & n > 1\\
            \begin{bmatrix}
                1 & 0\\
                1 & 0
            \end{bmatrix} & n = 1\\
        \end{cases}
    \]
    Here, $O_j$ represents the all-zero matrix of size $2^j \times 2^j$.
    An algorithm for the efficient construction of $Column1Matrix_n$ can be found in Appendix \sectref{Column1Matrix-construction}.
  \item
    \label{It:VectorToMatrix:StepThree}
    Finally multiply $c1$ and $c2$
    pointwise
    by calling BinaryApplyAndReduce.
\end{enumerate}

\begin{algorithm}[tb!]
\Input{CFLOBDD n representing vector $V$}
\Output{CFLOBDD n' representing matrix $M$ obtained by padding $V$ with zeros.}
\caption{Vector-to-Matrix Conversion \label{Fi:vector2matrix}}
\Begin{
    CFLOBDD g1 = RepresentativeCFLOBDD(ShiftToBConnectionAtLevelOne(n.grouping), n.valueTuple)\;
    CFLOBDD g2 = Column1Matrix(n.grouping.level + 1)\; 
    CFLOBDD n = BinaryApplyAndReduce(g1, g2, (op)Times)\;
    \Return n\;
}
\end{algorithm}

Pseudo-code for this algorithm is given as~\algref{vector2matrix}.


\begin{example}
We illustrate the steps of the algorithm using the following example:
Consider the vector $V = \left[\begin{smallmatrix} 2\\ 3\\ 5\\ 0\\ \end{smallmatrix}\right]$ and matrix
$M = \left[\begin{smallmatrix}
         2 & 0 & 0 & 0\\
         3 & 0 & 0 & 0\\
         5 & 0 & 0 & 0\\
         0 & 0 & 0 & 0
     \end{smallmatrix}\right]$.
The goal is to convert $V$ to $M$.
Steps \ref{It:VectorToMatrix:StepOne} and \ref{It:VectorToMatrix:StepTwo} construct intermediate matrices $M_1$ and $M_2$, defined as follows:
$M_1 = \left[\begin{smallmatrix}
         2 & 2 & 2 & 2\\
         3 & 3 & 3 & 3\\
         5 & 5 & 5 & 5\\
         0 & 0 & 0 & 0\\
       \end{smallmatrix}\right]
$
and
$M_2 = \left[\begin{smallmatrix}
         1 & 0 & 0 & 0\\
         1 & 0 & 0 & 0\\
         1 & 0 & 0 & 0\\
         1 & 0 & 0 & 0\\
       \end{smallmatrix}\right]
$.
Finally, the result of converting vector $V$ to a matrix is $M = M_1 \ast M_2$
$= \left[\begin{smallmatrix}
     2 & 2 & 2 & 2\\
     3 & 3 & 3 & 3\\
     5 & 5 & 5 & 5\\
     0 & 0 & 0 & 0\\
   \end{smallmatrix}\right]
   \ast
   \left[\begin{smallmatrix}
     1 & 0 & 0 & 0\\
     1 & 0 & 0 & 0\\
     1 & 0 & 0 & 0\\
     1 & 0 & 0 & 0\\
   \end{smallmatrix}\right]
= \left[\begin{smallmatrix}
     2 & 0 & 0 & 0\\
     3 & 0 & 0 & 0\\
     5 & 0 & 0 & 0\\
     0 & 0 & 0 & 0
   \end{smallmatrix}\right]
$, where ``*'' denotes pointwise matrix multiplication.
\end{example}

\subsection{Matrix Multiplication}
\label{Se:matrix-mult}

Matrix multiplication is one of the most important operations on matrices.
This section discusses how to perform matrix multiplication when the matrices are represented as CFLOBDDs using the interleaved-variable ordering.

The multiplication algorithm for CFLOBDDs presented here is similar to the standard $O(N^3)$ algorithm for multiplying two $N \times N$ matrices.
There is a potential for savings because each of the argument CFLOBDDs may have a large number of shared substructures, and function caching can be used to detect when a sub-problem has already been performed, in which case the proto-CFLOBDD for the answer can be returned immediately.

Our starting point is the observation that when the interleaved-variable ordering is used, at top level the A-connection of a CFLOBDD-represented matrix $M$ captures commonalities in the $\sqrt{N} \times \sqrt{N}$ block structure of $M$, and the B-connections represent sub-matrices of $M$ of size $\sqrt{N} \times \sqrt{N}$.
By analogy with other kinds of multi-terminal CFLOBDDs, at top-level one can think of the A-connection as a multi-terminal CFLOBDD whose value tuple is the sequence of B-connections---roughly, the A-connection is a $\sqrt{N} \times \sqrt{N}$ matrix with $\sqrt{N} \times \sqrt{N}$-matrix-valued leaves.

Suppose that $P$ and $Q$ are two $N \times N$ matrices represented by CFLOBDDs $C_P$ and $C_Q$, respectively.
The respective top-level A-connections, $A_P$ and $A_Q$, are matrices of size $\sqrt{N} \times \sqrt{N}$ with matrix-valued cells of size $\sqrt{N} \times \sqrt{N}$.
To multiply $P$ and $Q$, we first recursively multiply $A_P$ and $A_Q$.
This operation defines which cells of $A_P$ and $A_Q$ get multiplied and added---and the answer is returned as a collection of symbolic expressions (of a form that will be described shortly).
Using this information, we recursively call matrix multiplication and matrix addition on the B-connections, as appropriate.
For the base case of the recursion---namely, level $1$, which represents matrices of size $2 \times 2$---we can enumerate all the individual cases of possible matrix structures (i.e., the patterns of which cells hold equal values), and build the CFLOBDDs that result from a matrix multiplication in each case.

We now describe how the symbolic information mentioned above is organized, and how operations of addition and multiplication are performed on the data type in which the symbolic information is represented.
The challenge that we face is that at all levels below top-level, we do not have access to a \emph{value} for any cell in a matrix.
However, we can use the exit vertices as variables.

\begin{example}\label{Exa:TwoByTwoMatrixSymbolic}
  Suppose that we are multiplying two level-$1$ groupings that, when considered as $2 \times 2$ matrices over their respective exit vertices $[ \ev_1, \ev_2 ]$ and $[ \ev'_1, \ev'_2, \ev'_3 ]$, have the forms shown on the left
  \begin{equation}
    \label{Eq:TwoByTwoMatrixSymbolic}
    \begin{bmatrix}
      \ev_1 & \ev_1 \\
      \ev_2 & \ev_2 
    \end{bmatrix}
    \times
    \begin{bmatrix}
      \ev'_1 & \ev'_2 \\
      \ev'_1 & \ev'_3 
    \end{bmatrix}
    =
    \begin{bmatrix}
      \ev_1 \ev'_1 + \ev_1 \ev'_1 & \ev_1 \ev'_2 + \ev_1 \ev'_3 \\
      \ev_2 \ev'_1 + \ev_2 \ev'_1 & \ev_2 \ev'_2 + \ev_2 \ev'_3 
    \end{bmatrix}
    =
    \begin{bmatrix}
      2\ev_1 \ev'_1 & \ev_1 \ev'_2 + \ev_1 \ev'_3 \\
      2\ev_2 \ev'_1 & \ev_2 \ev'_2 + \ev_2 \ev'_3 
    \end{bmatrix}  
  \end{equation}
Each entry in the right-hand-side matrix can be represented by a set of triples, e.g.,
\[
  \begin{bmatrix}
    \{[(1,1),2]\} & \{[(1,2),1], [(1,3),1]\} \\
    \{[(2,1),2]\} & \{[(2,2),1], [(2,3),1]\} 
  \end{bmatrix}  
\]
and when listed in exit-vertex order for the interleaved-variable order, we have
\begin{equation}
  \label{Eq:MatMultTuple}
  [ \{[(1,1),2]\}, \{[(1,2),1], [(1,3),1]\}, \{[(2,1),2]\}, \{[(2,2),1], [(2,3),1]\} ].
\end{equation}
Now suppose that the two matrices are sub-matrices of level-$2$ groupings connected by ReturnTuples $\rt = [5, 2]$ and $\rt' = [6, 1, 2]$, respectively.
Then applying $\langle \rt, \rt' \rangle$ to \eqref{MatMultTuple} results in
\begin{equation}
  \label{Eq:MatMultTupleAfterMapping}
  [ \{[(5,6),2]\}, \{[(5,1),1], [(5,2),1]\}, \{[(2,6),2]\}, \{[(2,1),1], [(2,2),1]\} ].
\end{equation}
\end{example}
\Omit{
For purposes of presentation, an alternative would be to describe our symbolic expressions as being in terms of, e.g., $2\ev_2 \ev'_1$, and then describe ReturnTuples $\rt = [5, 2]$ and $\rt' = [6, 1, 2]$ from level $i$ to $i+1$ as operating on subscripts, so that $2\ev_{i,2} \ev'_{i,1}$ is mapped to $2\ev_{i+1,\textit{rt}(2)} \ev'_{i+1,\textit{rt}'(1)} = 2\ev_{i+1,2} \ev'_{i+1,6}$
}

We call the objects shown in \eqrefs{MatMultTuple}{MatMultTupleAfterMapping} \emph{MatMultTuples}.
By this device, the answer to a matrix-multiplication sub-problem (whether from A-connections or B-connections, and at any level $\geq 1$) can be treated as a multi-terminal CFLOBDD whose value tuple is a MatMultTuple.

\paragraph{Semantics of MatMultTuples.}
An alternative view of MatMultTuples comes from the right-hand matrix in \eqref{TwoByTwoMatrixSymbolic}:
a MatMultTuple is a sequence of bilinear polynomials over the exit vertices of two groupings.
We will represent a bilinear polynomial $p$ as a map from exit-vertex pairs to the corresponding coefficient.
(The pairs for which the coefficient is nonzero are called the \emph{support} of $p$.
In examples, we show only map entries that are in the support.)
In particular, suppose that $g_1$ and $g_2$ are two groupings at the same level, with exit-vertex sets $\EV$ and $\EV'$. Each entry of a MatMultTuple is of type $\BP_{\EV,\EV'} \eqdef (\EV \times \EV') \rightarrow \mathbb{N}$.
(We will drop the subscripts on $\BP$ if the exit-vertex sets are understood.)

To perform linear arithmetic on bilinear polynomials, we define
\[
  \begin{array}{r@{\hspace{0.5ex}}c@{\hspace{0.5ex}}l@{\hspace{5.0ex}}r@{\hspace{1.0ex}}c@{\hspace{1.0ex}}l}
      \ZeroBP & : &  \BP
    & \ZeroBP & \eqdef & \lambda (\ev,\ev') . 0
    \\
      + & : & \BP \times \BP \rightarrow \BP
    & \bp_1 + \bp_2 & \eqdef & \lambda (\ev,\ev') . \bp_1(\ev,\ev')  + \bp_2(\ev,\ev')
    \\
    * & : & \mathbb{N} \times \BP \rightarrow \BP
    & n * \bp & \eqdef & \lambda (\ev,\ev') . n * \bp(\ev,\ev')
  \end{array}
\]
By considering a ReturnTuple to be a map from one exit-vertex set to another, this notation allows us to give a second account of the transformation from \eqref{MatMultTuple} to \eqref{MatMultTupleAfterMapping}.
For instance, let $\rt = [1 \mapsto 5, 2 \mapsto 2]$ and $\rt' = [1 \mapsto 6, 2 \mapsto 1, 3 \mapsto 2]$.
Consider the second element of \eqref{MatMultTuple}:
$\bp = \{ [(1,2),1], [(1,3),1] \}$ $= [ (1,2) \mapsto 1, (1,3) \mapsto 1 ]$.
Then the transformation of $\bp$ induced by $\rt$ and $\rt'$ can be expressed as follows
(where \eqref{ReturnMapAppliedToMatMultTupleEntry} expresses the general case):
\begin{align}
  \langle \rt, \rt' \rangle(\bp)
      & \eqdef \{ (\rt(\ev), \rt'(\ev')) \mapsto \bp(\ev, \ev')  \mid \ev \in \EV, \ev' \in \EV' \} \label{Eq:ReturnMapAppliedToMatMultTupleEntry} \\
      & = \{ (\rt(1), \rt'(2)) \mapsto \bp(1,2), (\rt(1), \rt'(3)) \mapsto \bp(1,3) \} \notag \\
      & = \{ (5,1) \mapsto 1, (5,2) \mapsto 1 \}  \notag
\end{align}
At top level, we need a similar operation for the value induced by a pair of value tuples $\langle \vt, \vt' \rangle$ (where a value tuple is treated as a map of type $\EV \rightarrow \mathbb{V}$ for a value space $\mathbb{V}$
 that supports + and *):
\begin{equation}
  \label{Eq:ValueTupleAppliedToMatMultTupleEntry}
  \langle \vt, \vt' \rangle(\bp)
  \eqdef
  \sum \{ \bp(\ev, \ev') * \vt(\ev) * \vt'(\ev') \mid  \ev \in \EV, \ev' \in \EV' \}
\end{equation}

\begin{algorithm}[tb!]
\caption{Matrix Multiplication \label{Fi:MatrixMult}}
\Input{CFLOBDDs n1, n2}
\Output{CFLOBDD n = n1 $\times$ n2}
\Begin{
Grouping$\times$MatMultTuple [g,m] = MatrixMultOnGrouping(n1.grouping, n2.grouping)\;
ValueTuple v\_tuple = []\;   \label{Li:CallMatrixMultOnGrouping}
\For{$i \leftarrow 1$ \KwTo $|m|$ }{   \label{Li:ValueComputationStart}
    Value $v = \langle \mathrm{n1.valueTuple}, \mathrm{n2.valueTuple} \rangle(m(i))$\;  \label{Li:ConcreteWeightedDotProduct}
    v\_tuple = v\_tuple $||$ $v$;
}  \label{Li:ValueComputationEnd}
Tuple$\times$Tuple [inducedValueTuple, inducedReductionTuple] = CollapseClassesLeftmost(v\_tuple)\;
g = Reduce(g, inducedReductionTuple)\;
CFLOBDD n = RepresentativeCFLOBDD(g, inducedValueTuple)\;
\Return n\;
}
\end{algorithm}

\begin{algorithm}[tb!]
\caption{MatrixMultOnGrouping \label{Fi:MatrixMultGrouping}}
\Input{Groupings g1, g2}
\Output{Grouping$\times$MatMultTuple [g,m] such that g = g1 $\times$ g2}
\Begin{
\If(\tcp*[f]{Base Case: matrices of size $2\times 2$}){g1.level == 1}{
\tcp{Construct a level $1$ Grouping that reflects which cells of the product hold equal entries in the output MatMultTuple}
}
InternalGrouping g = new InternalGrouping(g1.level)\;
Grouping$\times$MatMultTuple [aa,ma] = MatixMultOnGrouping(g1.AConnection, g2.AConnection)\;  \label{Li:AConnectionRecursion}
g.AConnection = aa; g.AReturnTuple = [1..|ma|]; g.numberOfBConnections = |ma|\;
\tcp{Interpret ma to (symbolically) multiply and add BConnections}
MatMultTuple m = []\;
\For(\tcp*[f]{Interpret $i^{\textit{th}}~\BP$ in ma to create g.BConnections[$i$]}){$i \leftarrow 1$ \KwTo $|ma|$ }{    \label{Li:InterpretationLoopStart}
    \tcp{Set g.BConnections[$i$] to the (symbolic) weighted dot product $\sum_{((k_1,k_2),v) \in \textrm{ma}(i)} v * \textrm{g1.BConnections}[k_1] * \textrm{g2.BConnections}[k_2]$}
    CFLOBDD curr\_cflobdd = ConstantCFLOBDD(g1.level, $[\ZeroBP]$)\;  \label{Li:WeightedDotProductStart}
    \For{$((k_1,k_2),v) \in$ ma($i$)}{   
        Grouping$\times$MatMultTuple [bb,mb] = MatrixMultOnGrouping(g1.BConnections[$k_1$], g2.BConnections[$k_2$])\;  \label{Li:BConnectionRecursion}
        MatMultTuple mc = []\;
        \For{$j \leftarrow 1$ \KwTo $|mb|$ }{
            $\BP$ $\bp = \langle \mathrm{g1.BReturnTuples}[k_1], \mathrm{g2.BReturnTuples}[k_2] \rangle(\mathrm{mb}(j))$\;
            mc = mc $||$ $\bp$;
        }
        Tuple$\times$Tuple [inducedMatMultTuple, inducedReductionTuple] = CollapseClassesLeftmost(mc)\;
        bb = Reduce(bb, inducedReductionTuple)\;
        CFLOBDD n = RepresentativeCFLOBDD(bb, inducedMatMultTuple)\;
        curr\_cflobdd = curr\_cflobdd + $v$ $\ast$ n \tcp*[r]{Accumulate symbolic sum}
    }  
    g.BConnection[$i$] = curr\_cflobdd.grouping\;
    g.BReturnTuples[$i$] = curr\_cflobdd.valueTuple\;
    m = m || curr\_cflobdd.valueTuple\;  \label{Li:WeightedDotProductEnd}
} \label{Li:InterpretationLoopEnd}
g.numberOfExits = |m|\;
Tuple$\times$Tuple [inducedMatMultTuple, inducedReductionTuple] = CollapseClassesLeftmost(m) \;           
g = Reduce(g, inducedReductionTuple)\;
\Return [RepresentativeGrouping(g), m]\;
}
\end{algorithm}

\algrefs{MatrixMult}{MatrixMultGrouping} give pseudo-code for the matrix-multiplication algorithm.
\algref{MatrixMultGrouping} operates on two Groupings g1 and g2 at some level $l$, returning a (Grouping, MatMultTuple) pair (g, $m$).
\algref{MatrixMultGrouping} works symbolically, first considering the symbolic product of g1.AConnection and g2.AConnection (\lineref{AConnectionRecursion}).
The MatMultTuple ma returned from this call is really a list of directives:
each directive is a bilinear polynomial used to create one of the BConnections of g.
The workhorse computation---\lineseqref{WeightedDotProductStart}{WeightedDotProductEnd}---sets g.BConnections[$i$] to the (symbolic) weighted dot product
\[
  \sum_{((k_1,k_2),v) \in \textrm{ma}(i)} v * \textrm{g1.BConnections}[k_1] * \textrm{g2.BConnections}[k_2].
\]
To perform this computation, an auxiliary multi-terminal CFLOBDD curr\_cflobdd is used to accumulate the (symbolic) weighted dot product.
Note that the valueTuple of curr\_cflobdd is a MatMultTuple;
thus, curr\_cflobdd is essentially a matrix, each element of which is a bilinear polynomial---i.e., a directive saying how to compute the value of that element from a set of pairs of (as yet unknown) values.

\algref{MatrixMult} multiplies two matrices, n1 and n2, represented as CFLOBDDs.
It initiates the process at top level by calling MatrixMultGrouping on n1.grouping and n2.grouping (\lineref{CallMatrixMultOnGrouping}).
It then uses the returned MatMultTuple as a list of directives, similar to the way MatMultTuples are used in MatrixMultGrouping.
The difference is that at top level one has access to the actual values of the argument matrices, namely, n1.valueTuple and n2.valueTuple.
Thus, in \algref{MatrixMult} the elements of MatMultTuple m are interpreted as directives to perform a \emph{concrete} weighted dot product (\lineseqref{ValueComputationStart}{ValueComputationEnd}), using \eqref{ValueTupleAppliedToMatMultTupleEntry} in \lineref{ConcreteWeightedDotProduct}, thereby computing the values of the elements of the answer matrix .

\subsection{Path Counting and Sampling}
\label{Se:PathCountingAndSampling}

A CFLOBDD whose terminal values are non-negative numbers can be used to represent a discrete distribution over the set of assignments to the Boolean variables.
An assignment---or equivalently, the corresponding matched path in the CFLOBDD---is considered to be an elementary event.
The ``weight'' of the elementary event is the terminal value.
The probability of a matched path $p$ is the weight of $p$ divided by the total weight of the CFLOBDD---the sum of the weights obtained by following each of the CFLOBDD's matched paths.
Fortunately, it is possible to compute the aforementioned denominator by computing, for each of the terminal values, the number of matched paths that lead to that terminal value (\sectref{PathCountingInACFLOBDD}).
With those numbers in hand, it is then possible to sample an assignment/path according to the distribution that the CFLOBDD represents (\sectref{SamplingInACFLOBDD}).

The same approach can be used for CFLOBDDs whose terminal values are complex numbers, except that the weight of a matched path is the modulus of the terminal value.
This approach is used in the application of CFLOBDDs to quantum simulation (\sectrefs{quantum-algos}{ResearchQuestionTwo}).

\subsubsection{Path Counting}
\label{Se:PathCountingInACFLOBDD}

Recall that
every terminal value is connected to one exit vertex of the top-level grouping of the CFLOBDD.
Every exit vertex of a grouping is, in turn, connected to exit vertices of internal groupings.
Therefore, to compute the number of matched paths for every terminal value,
we need to compute the path-counts from the entry vertex of a grouping to every exit vertex of that grouping, for every grouping in the CFLOBDD.
For each grouping $g$, we would like to compute a vector of path-counts, in which the $i^{\textit{th}}$ element is the number of matched paths from $g$'s entry vertex to the $i^{\textit{th}}$ exit vertex of $g$.
To compute this information, we can break it down into 
(i) computing the number of matched paths from $g$'s entry vertex to $g$'s middle vertices;
(ii) computing the number of matched paths from $g$'s middle vertices to $g$'s exit vertices; and
(iii) combining this information to obtain the number of matched paths from $g$'s entry vertex to $g$'s exit vertices.


Consider a Grouping $g$ at level $l$ with $e$ exit vertices.
Suppose that $g.\textit{AConnection}$ has $p$ exit vertices, $g.\textit{BConnections}[j]$ has $k_j$ exit vertices, and let $g.\textit{BReturnTuples}[j]$ be the return edges from $g.\textit{BConnections}[j]$'s exit vertices to $g$'s exit vertices.
For step (i), we recursively compute the path-counts for $g.\textit{AConnection}$, which yields a vector of path-counts $v_A$ of size $1\times p$.
Step (ii) creates a matrix $M_B$ of size $p\times e$, in which the $j^{\textit{th}}$ row is the vector of path-counts from the $j^{\textit{th}}$ middle vertex of $g$ to $g$'s exit vertices. 
Step (iii) is the vector-matrix multiplication $v_A \times M_B$, which yields $g$'s path-count vector, of size $1\times e$.
The base-case path-count vectors are $[1,1]$ for a \emph{ForkGrouping} and $[2]$ for a \emph{DontCareGrouping}.

Because the exit vertices of $g.\textit{BConnections}[j]$ are connected to $g$'s exit vertices via $g.{\textit{BReturnTuples}}[j]$, the $j^{\textit{th}}$ row of $M_B$ is the product of the path-count vector for $g.\textit{BConnections}[j]$ (of size $1\times k_j$) and a ``permutation matrix'' $\textit{PM\,}^{g.\textit{BReturnTuples}[j]}$ (of size $k_j \times e$).
Each entry of $\textit{PM}$ is either $0$ or $1$;
each row must have exactly one $1$;
and each column must have at most one $1$.

\begin{algorithm}
\caption{CountPaths \label{Fi:PathCounting}}
\Input{Grouping g}
\Begin{
\eIf{g.level == 0}{
    \eIf{g == DontCareGrouping}{
        g.numPathsToExit = [2]\;
    }(\tcp*[h]{g == $\textit{ForkGrouping}$}){
        g.numPathsToExit = [1,1]\;
    }
}
{
    CountPaths(g.AConnection)\;
    \For{$i \leftarrow 1$ \KwTo $g.numberOfBConnections$}{
        CountPaths(g.BConnection[i]);
    }
    g.numPathsToExit = [1..|g.numberOfExits|]\;  \label{Li:path-counting-multiplication-start}
    \For{$i \leftarrow 1$ \KwTo $g.numberOfBConnections$}{
        \For{$j \leftarrow 1$ \KwTo $g.BConnection[i].numberOfExits$}{
            k = BReturnTuples[i](j)\;
            g.numPathsToExit[k] += \\
            \quad g.AConnection.numPathsToExit[i] * g.BConnection[i].numPathsToExit[j]\;
        }
    }   \label{Li:path-counting-multiplication-end}
}
}
\end{algorithm}

\algref{PathCounting} shows pseudo-code for the path-counting algorithm.
The path-counts are stored in a Tuple $\textit{numPathsToExit}$ in the Grouping data structure.
\Lineseqref{path-counting-multiplication-start}{path-counting-multiplication-end} perform the vector-matrix multiplication discussed above.
In practice, because the number of path-counts increases double exponentially in the number of levels, we only store the log-values of the path-counts.
It is also important for the implementation of the algorithm to perform function caching (\sectref{FunctionCaching}) so that each grouping at each level is visited only once.
When function caching is employed, \algref{PathCounting} visits each grouping, and hence each vertex and edge of the CFLOBDD, exactly once;
consequently, the cost of the path-counting operation is bounded by the size of the argument CFLOBDD.

This definition can also be stated equationally, in a form similar to the denotational semantics given in \sectref{ADenotationalSemantics}, where the expression in large brackets represents $M_B$.
\[
\begin{array}{@{\hspace{0ex}}l@{\hspace{0ex}}}
  \textit{numPathsToExit\,}^g_{1 \times e} = \\
  \qquad \begin{cases}
           [1,1]_{1\times2} & \text{if g = \textit{ForkGrouping}} \\
           [2]_{1\times1} & \text{if g = \textit{DontCareGrouping}} \\
           \begin{array}{@{\hspace{0ex}}l@{\hspace{0ex}}}
             \textit{numPathsToExit\,}^{g.\textit{AConnection}}_{1\times p} \, \times \\
             \qquad \begin{bmatrix}
                      \vdots\\
                      \textit{numPathsToExit\,}^{g.{\textit{BConnections}[j]}}_{1\times k_j} \times   \textit{PM\,}^{g.{\textit{BReturnTuples}}[j]}_{k_j \times e}\\
                      \vdots
                    \end{bmatrix}_{\scriptsize {\begin{array}{@{\hspace{0ex}}l@{\hspace{0ex}}} {p \times e} \\ j \in \{1..p\} \end{array}}}
           \end{array} & \text{otherwise}
         \end{cases}
\end{array}
\]

\begin{example}\label{Exa:CFLOBDDPathCounts}
For the five proto-CFLOBDDs depicted in \figref{product4RepFigure}, the vectors of path-counts are computed as follows (read top-to-bottom by level):
\[
  \begin{array}{@{\hspace{0ex}}l@{\hspace{10ex}}l@{\hspace{10ex}}l@{\hspace{0ex}}}
    \multicolumn{1}{c}{\textrm{level 2}} & \multicolumn{1}{c}{\textrm{level 1}} & \multicolumn{1}{c}{\textrm{level 0}} \\
    \hline
    \begin{bmatrix} 9 & 7 \end{bmatrix}
    = 
    \begin{bmatrix} 3 & 1 \end{bmatrix} 
    \times
    \begin{bmatrix}
      3 & 1 \\
      0 & 4
    \end{bmatrix}
    &
    \begin{bmatrix} 3 & 1 \end{bmatrix}
    =
    \begin{bmatrix} 1 & 1 \end{bmatrix}
    \times
    \begin{bmatrix}
      2 & 0 \\
      1 & 1
    \end{bmatrix}
    &
    \begin{bmatrix} 1 & 1 \end{bmatrix}
    \\
    & [4] = [2] \times [2]
    & [2]
    \\
    \hline
  \end{array}
\]
\end{example}

\subsubsection{Sampling an Assignment}
\label{Se:SamplingInACFLOBDD}

Our goal is to sample a matched path from the distribution of matched paths of a given CFLOBDD, and return the corresponding assignment.
As in \sectref{CFLOBDDOperationalSemantics}, we assume that an assignment is an array of Booleans, whose entries---starting at index-position 1---are the values of successive variables.
The concatenation of two such arrays $a_1$ and $a_2$ is denoted by
$a_1 || a_2$.

To explain how to sample from the set of assignments, we argue in terms of the structure of the corresponding matched paths.
If the distribution of matched paths was given as a vector of weights, $W = [w_1 .. w_{2^{2^l}}]$, as one would have in the corresponding decision tree, the probability of selecting the $p^{\textit{th}}$ matched path is given by
\begin{equation}
  \label{Eq:ProbabilityOfAPath}
  \textit{Prob}(p) = \dfrac{w_p}{\sum_{i=1}^{2^{2^l}}w_i}.
\end{equation}
In a CFLOBDD that represents a distribution, we do not have access to $W$ directly.
Suppose that $W' = [w'_1, \ldots, w'_K]$ is the vector of terminal values of the CFLOBDD.
The $K$ values of $W'$ are exactly the $K$ different values that appear in $W$;
however, many matched paths that start at the top-level entry vertex lead to the same terminal value, say $w'_m$.
Fortunately, the path-counting method from \sectref{PathCountingInACFLOBDD} provides us with part of what is needed via $\textit{NumPathsToExit}$ of the top-level grouping.
\[
  \sum_{i=1}^{2^{2^l}} w_i = \sum_{j=1}^{K} w'_j \times \textit{numPathsToExit}[j]
\]
Thus, while \eqref{ProbabilityOfAPath} becomes
\[
    \textit{Prob}(p) = \dfrac{w_p}{\sum_{j=1}^{K}w'_j \times \textit{numPathsToExit}[j]}
\]
this observation gives us no guidance about how to select a matched path $p$ with that probability.

Rather than selecting a single matched path immediately, what we can do instead is to select the entire set of matched paths that reach a given terminal value.
This selection can be done by sampling from the exit vertices of the top-level grouping according to the probability distribution
\begin{equation}
  \label{Eq:TopLevelSampling}
  \textit{Prob}'(\text{Path ends at terminal value } w'_t) = \dfrac{w'_t \times \textit{numPathsToExit}[t]}{\sum_{j=1}^{K}w'_j \times \textit{numPathsToExit}[j]}
\end{equation}
The result of this sampling step is the index of an exit vertex of the top-level grouping, which will be used for further sampling among the (indirectly) ``retrieved'' set of matched paths.
What remains to be done is to uniformly sample a matched path from that set, and return the assignment that corresponds to that matched path.

To achieve this goal, we take advantage of the structure of matched paths to break the assignment/path-sampling problem down to a sequence of smaller assignment/path-sampling problems that can be performed recursively.
At each grouping $g$ visited by the algorithm, the goal is to uniformly sample a matched path from the set of matched paths $P_{g,i}$
(in the proto-CFLOBDD headed by $g$) that lead from $g$'s entry vertex to a specific exit vertex $i$ of $g$.

Consider a grouping $g$ and a given exit vertex $i$.
For each middle vertex $m$ of $g$, there is some number of matched paths---possibly $0$---from the entry vertex of $g$ that pass through $m$ and eventually reach exit vertex $i$.
Those numbers of matched paths, when divided by $|P_{g,i}|$, represent a distribution $D_i$ on the set of $g$'s middle vertices.
Consequently, the first step toward uniformly sampling a matched path from the set $P_{g,i}$ is to sample the index of a middle vertex of $g$ according to distribution $D_i$.
Call the result of that sampling step $m_{\textit{index}}$.
Thus, to sample a matched path from the entry vertex of $g$ to exit vertex $i$, we
(i) sample a middle vertex of $g$ according to $D_i$ to obtain $m_{\textit{index}}$;
(ii) uniformly sample a matched path from $g.\textit{AConnection}$ with respect to the exit vertex of $g.\textit{AConnection}$ that returns to $m_{\textit{index}}$;
(iii) uniformly sample a matched path
from $g.\textit{BConnections}[m_{\textit{index}}]$ with respect to whichever of its exit vertices is connected to the $i^{\textit{th}}$ exit vertex of $g$; and
(iv) concatenate the assignments obtained from steps (ii) and (iii).

Only the B-connections of $g$ whose exit vertices are connected to $i$ (the distinguished exit vertex of $g$) can contribute to the paths leading to $i$,
and hence we need to select a middle vertex from among those for which the B-connection grouping can lead to $i$.
For such an $i$-connected B-connection grouping $k$, let $(g.\textit{BReturnTuples}[k])^{-1}[i]$ denote the exit vertex of $g.\textit{BConnections}[k]$ that leads to $i$;
i.e., $\langle j, i \rangle \in g.\textit{BReturnTuples}[k] \Leftrightarrow (g.\textit{BReturnTuples}[k])^{-1}[i] = j$.

The path-counts for the number of matched paths of $g$'s B-connections (available via the vector $\textit{NumPathsToExit}$ for each of $g$'s B-connections, denoted by, e.g., $\textit{numPathsToExit\,}^{g.\textit{BConnections}[k]}$) only considers matched paths from $g$'s middle vertices to $g$'s exit vertices.
However, to sample $m_{\textit{index}}$ correctly, we need to consider \emph{all} of the matched paths from $g$'s entry vertex to $g$'s exit vertex $i$.
Hence, we multiply the number of matched paths from $g$'s entry vertex to a middle vertex of $g$ (of interest to us because it is connected to a B-connection that is connected to $i$), denoted by, e.g., $\textit{numPathsToExit\,}^{g.\textit{AConnection}}[k]$, to the number of matched paths from that same middle vertex to $g$'s exit vertex $i$.
Thus, the probability associated with a given $m_{\textit{index}}$ is as follows
(where $g.A$ denotes $g.\textit{AConnection}$, $g.B[k]$ denotes $g.\textit{BConnections}[k]$, and $g.\textit{BRT}$ denotes $g.\textit{BReturnTuples}$):
\begin{equation}
  \label{Eq:ProbabilityOfBIndex}
  \textit{Prob}(m_{\textit{index}})
  =
  \dfrac{
        \textit{numPathsToExit\,}^{g.A}[m_{\textit{index}}] \times 
        \textit{numPathsToExit\,}^{g.B[m_{\textit{index}}]}[(g.\textit{BRT}[m_{\textit{index}}])^{-1}[i]]
       }{
         g.\textit{numPathsToExit}[i]
       }
\end{equation}

\begin{example}\label{Exa:CFLOBDDSampling}
Consider the CFLOBDD depicted in \figref{product4RepFigure}, and suppose that the goal is to sample a matched path that leads to terminal value $T$.
From \exref{CFLOBDDPathCounts}, we know that (i) the outermost grouping has $7$ matched paths that lead to $T$, and (ii) $\textit{NumPathsToExit}$ is $[3,1]$ and $[4]$ for the upper and lower level-$1$ groupings, respectively.
Both of the outermost grouping's middle vertices have return edges that lead to $T$;
thus, from \eqref{ProbabilityOfBIndex}, we should sample the middle vertices with probabilities
\begin{equation*}
  \begin{array}{@{\hspace{0ex}}r@{\hspace{1.0ex}}c@{\hspace{1.0ex}}l@{\hspace{5.0ex}}r@{\hspace{1.0ex}}c@{\hspace{1.0ex}}l@{\hspace{0ex}}}
    \textit{Prob}(m_{\textit{index}} = 1) & = & \frac{[3,1][1] \times [3,1][2]}{7} = \frac{3 \times 1}{7} = \frac{3}{7}
    &
    \textit{Prob}(m_{\textit{index}} = 2) & = & \frac{[3,1][2] \times [4][1]}{7} = \frac{1 \times 4}{7} = \frac{4}{7}
  \end{array}
\end{equation*}
\end{example}

Once $m_{\textit{index}}$ has been selected in accordance with
\eqref{ProbabilityOfBIndex}, we recursively sample a matched path---and its assignment $a_A$---from
$g.\textit{AConnection}$ with respect to exit vertex $m_{\textit{index}}$ (step (ii)).
We also recursively sample
a matched path---and its assignment $a_B$---from
$g.\textit{BConnection}[m_{\textit{index}}]$ with respect to the exit vertex $(g.\textit{BReturnTuples}[m_{\textit{index}}])^{-1}[i]$ that leads to $g$'s exit vertex $i$ (step (iii)).
Step (iv) produces the
assignment $a = a_A || a_B$.

As for the base cases of the recursion, for a $\textit{DontCareGrouping}$, we randomly choose one of the paths $0$ or $1$ with probability $0.5$,
returning the assignment ``$0$'' or ``$1$'' accordingly;
for a $\textit{ForkGrouping}$, the designated exit vertex---either $1$ or $2$---specifies a unique assignment: ``$0$'' or ``$1$,'' respectively.

\begin{algorithm}[tb!]
\caption{Sample an Assignment from a CFLOBDD\label{Fi:Sampling}}
\SetKwFunction{SampleAssignment}{SampleAssignment}
\SetKwFunction{SampleOnGroupings}{SampleOnGroupings}
\SetKwProg{myalg}{Algorithm}{}{end}
\myalg{\SampleAssignment{n}}{
\Input{CFLOBDD n}
\Output{
Assignment sampled from n according to n.valueTuple
}
\Begin{
$i \leftarrow $ Sample(n.valueTuple)\tcp*[r]{Sample terminal-value index $i$ via \eqref{TopLevelSampling}}
Assignment a = SampleOnGroupings(n.grouping, $i$)\;
\Return a\;
}
}{}
\setcounter{AlgoLine}{0}
\SetKwProg{myproc}{SubRoutine}{}{end}
\myproc{\SampleOnGroupings{g, i}}{
\Input{Grouping g, Exit index $i$}
\Output{
Assignment sampled from g, corresponding to one of the paths leading to exit $i$
}
\Begin{
\If{g.level == 0}{
\eIf{g == $\textit{DontCareGrouping}$}{
    \Return (random() \% 2) ? "1" : "0"\;
}(\tcp*[h]{g == $\textit{ForkGrouping}$ so $i \in [1,2]$}){
    \Return ($i$ == 1) ? "0" : "1"
}
}
{
Tuple PathsLeadingToI = []\;
\For(\tcp*[f]{Build path-count tuple from which to sample}){$j \leftarrow 1$ \KwTo $g.\textit{numberOfBConnections}$}{
    \If(\tcp*[f]{if $j^{th}$ B-connection leads to $i$})
    {$i \in g.\textit{BReturnTuples}[j]$}{
    PathsLeadingToI = PathsLeadingToI||(g.AConnection.numPathsToExit[$j$] * g.BConnections[$j$].numPathsToExit[$k$]), where $i$ = BReturnTuples[$j$]($k$)\;
    }
}
$m_{\textit{index}} \leftarrow$ Sample(PathsLeadingToI) \tcp*[r]{Sample middle-vertex index $m_{\textit{index}}$}
Assignment a = SampleOnGroupings(g.AConnection, $m_{\textit{index}}$) || SampleOnGroupings(g.BConnection[$m_{\textit{index}}$], $k$), where $i$ = BReturnTuples[$m_{\textit{index}}$]($k$)\;
\Return a\;
}
}
}
\end{algorithm}

\algref{Sampling} gives pseudo-code for the algorithm for sampling an assignment
from a CFLOBDD.

For a CFLOBDD at level $l$, the sampling operation involves constructing
an assignment
of size $2^l$.
Hence, the cost of sampling is at least as large as the size of the sampled
assignment.
However, the size of the argument CFLOBDD also influences the cost of sampling;
although not every grouping of the CFLOBDD is necessarily visited when sampling
an assignment,
we can say that the cost of the sampling operation is bounded by
$\mathcal{O}(\max(2^l, \text{size of argument CFLOBDD}))$.

\section{Relations efficiently represented by CFLOBDDs}
\label{Se:efficient-relations}

In this section, we prove that there exists an exponential separation between CFLOBDDs and BDDs.
We establish this result using three relations that can be efficiently represented by CFLOBDDs.
Note that we do not assume any specific variable ordering when discussing the sizes of BDDs for the functions used to prove the separation.
We use node counts in BDDs, and vertex counts and edge counts in CFLOBDDs as a proxy for memory.
(Recall from \footnoteref{NodeGroupingVertexTerminology} that we use the term ``node'' solely for BDDs, whereas ``groupings'' and ``vertices'' (depicted as the dots inside groupings) refer to CFLOBDDs.)

\paragraph{Remark}
Recently, Zhi and Reps \cite{UNPUB:ZR24} obtained a characterization of relative sizes in the opposite direction (i.e., a bound on CFLOBDD size as a function of BDD size, for all BDDs).
They showed that for every BDD of size $n$, there is a corresponding CFLOBDD, which uses the same variable ordering, of size $O(n^3)$.

\subsection{The Equality Relation $\EQ_n$}
\label{Se:Separation:EqualityRelation}

\begin{figure}[ht!]
    \centering
    \begin{subfigure}[t]{0.45\linewidth}
    \centering
    \includegraphics[scale=0.5]{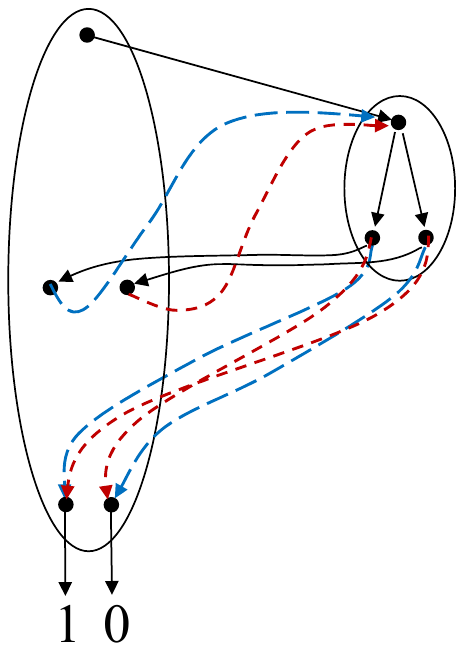}
    \caption{CFLOBDD for $\EQ_2$}
    \label{Fi:eq1}
    \end{subfigure}
    \begin{subfigure}[t]{0.45\linewidth}
    \centering
    \includegraphics[scale=0.4]{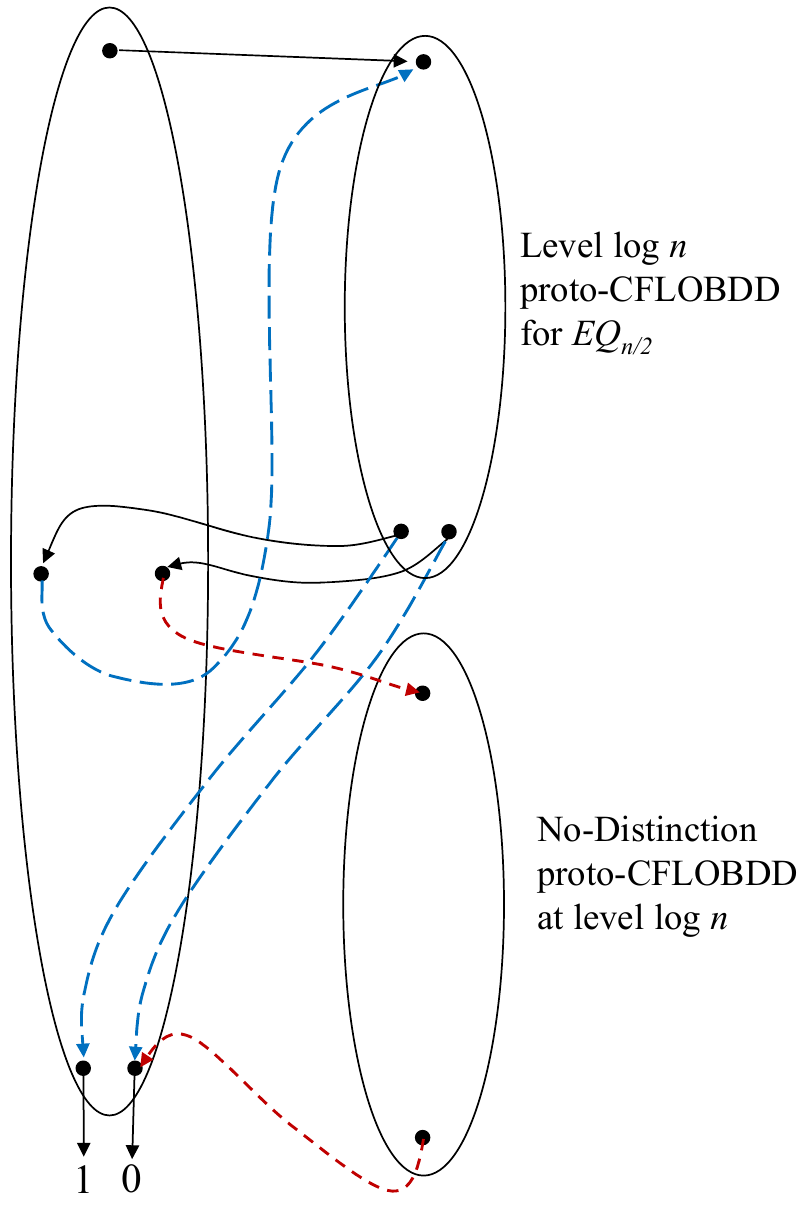}
    \caption{CFLOBDD for $\EQ_n$, $n \geq 2$
    }
    \label{Fi:eqK}
    \end{subfigure}
    \caption{}
    \label{Fi:CFLOBDDInterleavedForEQ}
\end{figure}

\begin{definition}\label{De:EqualityRelation}
The equality relation $\EQ_n : \{0,1\}^{n/2} \times \{0,1\}^{n/2} \rightarrow \{0,1\}$ on variables ($x_0\cdots x_{n/2-1}$) and ($y_0\cdots y_{n/2-1}$) is the relation
$
    \EQ_n(X,Y) \eqdef \Pi_{i=0}^{n/2-1} (x_i \Leftrightarrow y_i)
              =  \Pi_{i=0}^{n/2-1} (\bar{x_i}\bar{y_i} \lor x_i y_i)
$.
\end{definition}

\begin{theorem}[Exponential separation for the equality relation]\label{The:EQSeparation}
For $n = 2^l$, where $l \geq 1$,
$\EQ_n$ can be represented by a CFLOBDD with $\bigO(\log{n})$ vertices and edges.
In contrast, a BDD that represents $\EQ_n$ requires $\bigOmega(n)$ nodes.
\end{theorem}

\begin{proof}
\,\newline

\noindent
\textit{CFLOBDD Claim.}
We claim that with the interleaved-variable ordering $\langle x_0, y_0, \ldots, x_{n/2-1}, y_{n/2-1} \rangle$, the CFLOBDD representation of $\EQ_n$ uses $\bigO (\log{n})$ groupings, each of constant size (and hence uses $\bigO (\log{n})$ vertices and edges in total).
Diagrams that illustrate the CFLOBDD representation are shown in \figref{CFLOBDDInterleavedForEQ}.\
For $\EQ_2$ (i.e., $l = 1$),
the representation is shown in \figref{eq1}.
This CFLOBDD has two groupings, a fork grouping at level $0$ and a level-$1$ grouping.
In total, it has eight vertices and eleven edges.
Note that the ``success'' and ``failure'' exits at level $1$ are the left and right exits, respectively.

For $\EQ_n$ with $n > 2$ (i.e., $l > 1$),
the representation is defined inductively, following the pattern shown in \figref{eqK}.
For all $i$, $1 \leq i \leq l = \log n$, a level-$(i\textrm{+}1)$ grouping of the form shown at the outermost level of \figref{eqK} has an A-connection and---from the leftmost middle vertex---a B-connection to the proto-CFLOBDD for $\EQ_{2^{i-1}}$.
The leftmost exit vertex of the proto-CFLOBDD is the ``success exit'' for testing equality on $2^{i-1}$ pairs of variables.
\begin{itemize}
  \item
    When called via the level-$(i\textrm{+}1)$ grouping's A-connection, the proto-CFLOBDD tests equality on the variable pairs
    $\{ (x_0, y_0), \ldots, (x_{2^{i-1}-1}, y_{2^{i-1}-1}) \}$.
  \item
    When called from the level-$(i\textrm{+}1)$ grouping's leftmost middle vertex, the proto-CFLOBDD tests equality on the variable pairs
    $\{ (x_{2^{i-1}}, y_{2^{i-1}}), \ldots, (x_{{2^i}-1}, y_{{2^i}-1}) \}$.
\end{itemize}
The right exit vertex of the proto-CFLOBDD is the ``failure exit.''
When the proto-CFLOBDD has been called from the level-$(i\textrm{+}1)$ grouping's A-connection, the return edge is the matching solid edge to the rightmost middle vertex of the level-$(i\textrm{+}1)$ grouping.
This vertex signifies that there has been an equality mismatch on some variable pair in $\{ (x_0, y_0), \ldots, (x_{2^{i-1}-1}, y_{2^{i-1}-1}) \}$, and so its B-connection is to the no-distinction proto-CFLOBDD of level $i$, whose one exit vertex returns to the rightmost (``failure'') exit vertex of the level-$(i\textrm{+}1)$ grouping.

The level-$(i\textrm{+}1)$ grouping has five vertices and eight edges, and the level-$i$ grouping of the no-distinction proto-CFLOBDD at level $i$ has three vertices and four edges (see \figref{NoDistinctionProtoCFLOBDD}d).
Consequently, each of the $\log n + 1 $ levels of the CFLOBDD for $\EQ_n$ contributes a constant number of vertices and edges, independent of $i$, and thus the total number of vertices and edges is $\bigO (\log{n})$.

\smallskip
\noindent
\textit{BDD Claim.}
Now consider a BDD representation for $\EQ_n$.
We claim that regardless of the variable order used, a BDD requires at least $n$ nodes, one node for each argument variable. 

We prove this claim by contradiction.
Suppose that there is some BDD $B$ for $\EQ_n$ that does not need at least one node for each variable.
Let $\mathcal{T}$ denote the ``all-true'' assignment of variables;
i.e., $\mathcal{T} \eqdef \forall k \in \{0..n/2-1\}, x_k \mapsto T, y_k \mapsto T$.
There are three possible situations:
\begin{itemize}
  \item Case 1:
    $B$ does not have variable $y_k$, for some $k \in \{ 0..n/2-1 \}$.
    Now, consider two assignments of variables:
    $A_1 \eqdef \mathcal{T}$ and $A_2 \eqdef \mathcal{T}[y_k \mapsto F]$ (i.e., $A_2$ is $A_1$ with $y_k$ updated to $F$).
    Because $B$ does not depend on $y_k$, the function represented by $B$ maps both $A_1$ and $A_2$ to the same value (either $0$ or $1$), which violates the definition of the equality relation $\EQ_n$ (i.e., $\EQ_n[A_1] = 1$ and $\EQ_n[A_2] = 0$).
    Consequently, none of the $y$ variables can be dropped individually.
  \item
    Case 2: Using an argument completely analogous to Case 1, we can show that none of the $x$ variables can be dropped individually.
  \item Case 3:
    $B$ depends on neither $x_k$ nor $y_k$.
    Consider the following four assignments:
    $A_1 = \mathcal{T}$, $A_2 = \mathcal{T}[y_k \mapsto F]$, $A_3 = \mathcal{T}[x_k \mapsto F]$, and $A_4 = \mathcal{T}[x_k \mapsto F][y_k \mapsto F]$.
    Because $B$ does not depend on either $x_k$ or $y_k$, the function represented by $B$ maps all four assignments to the same value (either $0$ or $1$), which violates the definition of the equality relation $\EQ_n$ (i.e., $\EQ_n[A_1] = \EQ_n[A_4] = 1$ and $\EQ_n[A_2] = \EQ_n[A_3] = 0$).
\end{itemize}
Because (i) no $y_k$ can be dropped individually, (ii) no $x_k$ can be dropped individually, and (iii) no $(x_k, y_k)$ pair can be dropped together, $B$---and hence any  BDD representation for $\EQ_n$---requires $\bigOmega(n)$ nodes.
\end{proof}

\subsection{The Hadamard Relation $H_n$}
\label{Se:Separation:HadamardRelation}

The Hadamard Relation represents the family of Hadamard Matrices discussed in~\sectrefs{Preliminaries}{EncodingHFour}.
The Hadamard matrices play a role in many quantum algorithms, including the seven that are used in \sectref{ResearchQuestionTwo} to evaluate the effectiveness of CFLOBDDs for simulating quantum circuits (namely, GHZ, BV, DJ, Simon's algorithm, QFT, Shor's algorithm, and Grover's algorithm).
See \sectrefs{QuantumAlgorithms}{ResearchQuestionTwo}.

\begin{theorem}[Exponential separation for the Hadamard relation]\label{The:HadamardSeparation}
The Hadamard Relation $H_n:\{0,1\}^{n/2} \times \{0,1\}^{n/2} \rightarrow \{1,-1\}$ between variable sets ($x_0\cdots x_{n/2}$) and ($y_0\cdots y_{n/2}$), where $n = 2^l$, can be represented by a CFLOBDD with $\bigO(\log{n})$ vertices and edges.
In contrast, a BDD that represents $H_n$ requires $\bigOmega(n)$ nodes.
\end{theorem}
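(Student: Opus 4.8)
The plan is to mirror the structure of the proof of \theoref{EQSeparation}: establish the CFLOBDD upper bound by appealing to the inductive Hadamard construction already developed in the paper, and establish the BDD lower bound by an order-independent ``essential-dependence'' argument.

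\emph{CFLOBDD claim.} First I would observe that the theorem's $H_n$ with $n = 2^l$ is precisely the member $H_{2^l}$ of the family $\HadamardFamily$ discussed in \sectref{EncodingHFour} (it uses $2^{l-1}$ row variables and $2^{l-1}$ column variables, hence $n = 2^l$ Boolean variables in total, with the interleaved ordering $\langle x_0,y_0,\ldots,x_{n/2-1},y_{n/2-1}\rangle$). I would then re-state the inductive construction of \figref{walshKGeneralCase}: the base case is the three-grouping CFLOBDD for $H_2$ from \figref{walsh1}a (a level-$1$ grouping plus a fork and a don't-care grouping, with value tuple $[1,-1]$), and from the level-$i$ CFLOBDD for $H_{2^i}$ one obtains $H_{2^{i+1}} = H_{2^i}\tensor H_{2^i}$ by adding a single new outermost grouping with five vertices and nine edges whose A-connection and both B-connections ``call'' the level-$i$ grouping, per the Sequential-Invocation Principle. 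Summing the constant per-level contribution over the $l+1$ levels gives $O(l) = O(\log n)$ vertices and edges. To make this airtight I would check that the added groupings satisfy the structural invariants of \defref{CFLOBDD} (the relevant return tuples are $[1,2]$, $[2,1]$, $[1]$, $[1,2,\ldots]$, which are exactly the ones arising in the general condition for asymptotic best-case growth, with $G$ the no-distinction proto-CFLOBDD family), so that the construction really yields a canonical CFLOBDD of the claimed size.

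\emph{BDD claim.} I would write $H_n(X,Y) = (-1)^{\bigoplus_{i=0}^{n/2-1} x_i \land y_i}$ (the inner-product-mod-$2$ predicate, with output coded in $\{1,-1\}$) and argue, exactly as in \theoref{EQSeparation}, that any BDD for $H_n$ must have $\Omega(n)$ nodes \emph{regardless of variable order}. The key point is that $H_n$ essentially depends on each of its $n$ arguments: for a fixed index $k$, take the assignment $\alpha$ that sets $y_k \mapsto T$ and every other variable to $F$; then $H_n[\alpha] = 1$ while $H_n[\alpha[x_k \mapsto T]] = -1$, so $H_n$ depends on $x_k$, and the symmetric choice (set $x_k \mapsto T$, all else $F$, then flip $y_k$) shows it depends on $y_k$. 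Since a reduced OBDD in any order must contain at least one branching node for every variable on which the represented function essentially depends (and a quasi-reduced OBDD contains a node for each of the $n$ variables on every root-to-sink path), every BDD for $H_n$ has at least $n$ internal nodes. If I wanted to replicate the exact style of the $\EQ_n$ proof, I would instead phrase this as a proof by contradiction with three cases — a putative BDD lacking some $y_k$, lacking some $x_k$, or lacking both $x_k$ and $y_k$ — exhibiting in each case a small family of assignments that $H_n$ separates but the BDD cannot. Combining the two claims gives the exponential separation $O(\log n)$ versus $\Omega(n)$; I would also remark (cf.\ \figref{HadamardMatrices}) that the $\Omega(n)$ bound is tight, since the interleaved ordering yields a BDD of size $\Theta(n)$ for $H_n$.

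\emph{Anticipated difficulty.} There is no deep obstacle here — both halves are essentially bookkeeping — so the ``hard part'' is mostly a matter of care. The two places where I would slow down are: (i) reconciling the paper's singly-superscripted indexing for $\HadamardFamily$ with the ``$H_n$'' notation of the theorem, so that ``$H_n$ with $n=2^l$'' is unambiguously identified with the level-$l$ CFLOBDD built in \sectref{EncodingHFour}; and (ii) stating the essential-dependence argument uniformly over all variable orderings and for whichever BDD variant is intended (reduced vs.\ quasi-reduced), so that the $\Omega(n)$ lower bound is genuinely order-independent, in the same spirit as \theoref{EQSeparation}.
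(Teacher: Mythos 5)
Your proposal is correct, and the CFLOBDD half coincides with the paper's (the paper simply points back to the $O(\log n)$ construction of \sectref{EncodingHFour}, which you spell out). For the BDD lower bound, however, you take a genuinely different---and cleaner---route than the paper. You first identify $H_n(X,Y)$ with the inner-product-mod-2 function, $H_n(X,Y) = (-1)^{\bigoplus_i x_i \land y_i}$, and then exhibit sensitivity to each $x_k$ and each $y_k$ at the near-all-false assignment (all variables $F$ except the partner variable of index $k$), where the exponent is $0$ or $1$ and the two function values $1$ and $-1$ can be read off immediately. The paper instead works from the all-true assignment $\mathcal{T}$ and must prove, by an explicit induction on the level using $H_{2^{m+1}} = H_{2^m}\tensor H_{2^m}$ and a case split on which half of the assignment contains the flipped bit, that flipping any single bit of $\mathcal{T}$ negates the value; it then runs three cases (drop $y_k$, drop $x_k$, drop the pair), the third of which is logically redundant once individual essential dependence is established---a redundancy you correctly observe you need not replicate. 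The two arguments carry the same inductive content (your closed-form identity is itself a one-line induction over the Kronecker recursion, so you should state that induction explicitly rather than assert the formula), but packaging it as a closed form lets you invoke the standard order-independent fact that an OBDD must contain a node for every variable on which the function essentially depends, yielding $\Omega(n)$ with less case analysis. Your added remarks on reconciling the $H_n$ versus $H_{2^l}$ indexing and on the tightness of the $\Omega(n)$ bound go beyond what the paper records but are consistent with it.
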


\begin{proof}
\,\newline

\smallskip
\noindent
\textit{CFLOBDD Claim.}
As shown in $\sectref{EncodingHFour}$, each matrix $H_n \in \HadamardFamily$, where $n = 2^l$ can be represented by a CFLOBDD with $\bigO (l)$ vertices and edges---i.e., with $\bigO (\log n)$ space. 

\smallskip
\noindent
\textit{BDD Claim.}
We claim that regardless of the variable ordering, the BDD representation for $H_n$ requires at least $n$ nodes, one node for each variable in the argument.
The proof strategy follows a similar structure to the 
\emph{BDD Claim} proof in \theoref{EQSeparation}. 
We prove the claim by contradiction.
Suppose that there is some BDD $B$ for $H_n$ that does not need at least one node for each variable.
In that case, the $H_n$ function represented by $B$ does not depend on that particular variable.
Let $\mathcal{T}$ denote the ``all-true'' assignment of variables;
i.e., $\mathcal{T} \eqdef \forall k \in \{0..n/2-1\}, x_k \mapsto T, y_k \mapsto T$.
There are three possible situations:

\begin{itemize}
    \item Case 1: $B$ does not depend on variable $y_k$, for some $k \in \{0..n/2-1\}$. 
    Consider two variable assignments:
    $A_1 \eqdef \mathcal{T}$ and $A_2 \eqdef \mathcal{T}[y_k \mapsto F]$ (i.e.,
    $A_2$ is $A_1$ with $y_k$ updated to $F$).

    \hspace{1.5ex}
    $A_1$ and $A_2$ yield the same value for the function represented by $B$,
    but they yield different values for the Hadamard relation. 
    That is, if $H_n[A1] = v$ (where $v$ is either 1 or -1), then $H_n[A2] = -v$.
    We prove this claim by induction on level.
    \begin{proof}
    
    \smallskip
    \noindent
    Base Case: 
    \begin{itemize}
        \item $n = 2$. $H_2[A_1]$ is the lower-right corner of \figref{walsh1_dd}, which is -1, and
        $H_2[A_2]$ is the value of the path $[x_0 \mapsto T, y_0 \mapsto F]$, which yields 1.
        \item $n = 4$. $A_1$ is the path to the rightmost ($16^{\textit{th}}$) leaf in \figref{walsh2_dd}, which yields a value of 1.
        For $k = 0$, $A_2$ ends up at the $12^{\textit{th}}$ leaf, which is -1; if $k = 1$, $A_2$ ends up at the $15^{\textit{th}}$ leaf, which is also -1.
    \end{itemize}
    Induction Step: Let us extend the notation for $A_1$ and $A_2$ by adding level information.
    $A_1^m$ denotes the ``all-true'' assignment for $2^m$ variables, and $A_2^m = A_1^m[y_k \mapsto F]$.
    Let us assume that the claim is true for $H_{2^m}$, i.e., $H_{2^m}[A_1] = v$ (could be 1 or -1) and $H_{2^m}[A_2] = -v$.
    We must show that the claim holds true for $H_{2^{m+1}}$.

    \hspace{1.5ex}
    We know that $H_{2^{m+1}} = H_{2^m} \tensor H_{2^m}$.
    Thus, $H_{2^{m+1}}[A_1^{m+1}] = H_{2^m}[A_1^{m}] \ast H_{2^m}[A_1^{m}]$, where $A_1^{m + 1} = A_1^m || A_1^m$.
    Thus, $H_{2^{m+1}}[A_1^{m+1}]$ must have the value $v^2$ ($= v \ast v$).
    A recursive relation can similarly be written for assignment $A_2$, depending on where the bit-flip for $y$ occurs. There are two possible cases:
    \begin{enumerate}
        \item $k$ occurs in the first half; $A_2^{m + 1} = A_2^m || A_1^m$ and therefore, 
        $H_{2^{m+1}}[A_2^{m+1}] = H_{2^m}[A_2^{m}] \ast H_{2^m}[A_1^{m}]$, which leads to a value of $-v^2$ ($= -v \ast v$).
        \item $k$ occurs in the second half; $A_2^{m + 1} = A_1^m || A_2^m$ and therefore, 
        $H_{2^{m+1}}[A_2^{m+1}] = H_{2^m}[A_1^{m}] \ast H_{2^m}[A_2^{m}]$ which leads to a value of $-v^2$ ($= v \ast -v$).
    \end{enumerate}
    In both cases, the values obtained with a bit-flip do not match the value for an ``all-true'' assignment.
    \end{proof}
    We conclude that none of the $y_k$ variables can be dropped individually.

    \item Case 2: None of the $x$ variables can be dropped individually, using a completely analogous argument to Case 1. 
    \item Case 3: $B$ does not depend on either $x_k$ or $y_k$.
    The assignments
    $A_1 = \mathcal{T} = [...,{x_k}\mapsto T, {y_k}\mapsto T,...]$, 
    $A_2 = \mathcal{T}[y_k \mapsto F] = [...,{x_k}\mapsto T, {y_k}\mapsto F,...]$, 
    $A_3 = \mathcal{T}[x_k \mapsto F] = [...,{x_k}\mapsto F, {y_k}\mapsto T,...]$ and 
    $A_4 = \mathcal{T}[x_k \mapsto F, y_k \mapsto F] = [...,{x_k}\mapsto F, {y_k}\mapsto F,...]$
    must be mapped to different values by the function represented by $B$, which violates the definition of the Hadamard relation $H_n$.
    (More precisely, for $n \geq 4$, $H_n[A_1] = 1$, but $H_n[A_2] = H_n[A_3] = H_n[A_4] = -1$, which can be proved using an inductive argument similar to Case 1.)
    Consequently, $(x_k, y_k)$ cannot be dropped as a pair.
\end{itemize}

Because (i) no $y_k$ can be dropped individually, (ii) no $x_k$ can be dropped individually and (iii) no ($x_k, y_k$) pair can be dropped together, $B$---and hence any BDD representation for $H_n$, requires $\Omega(n)$ nodes.
\end{proof}

\subsection{The Addition Relation $\ADD_n$}
\label{Se:Separation:AdditionRelation}

\begin{definition}\label{De:AdditionRelation}
The addition relation $\ADD_n : \{0,1\}^{n/3} \times \{0,1\}^{n/3} \times \{0,1\}^{n/3} \rightarrow \{0,1\}$ on variables ($x_0\cdots x_{n/3-1}$), ($y_0\cdots y_{n/3-1}$), and ($z_0\cdots z_{n/3-1}$) is the relation
$
  \ADD_n(X,Y,Z) \eqdef Z = (X + Y \mod 2^{n/3})
$.
\end{definition}

\begin{theorem}[Exponential separation for the addition relation]
For $n =  3 \cdot 2^l$, where $l \geq 0$,
$\ADD_n$ can be represented by a CFLOBDD with $\bigO(\log{n})$ vertices and edges.
In contrast, a BDD that represents $\ADD_n$ requires $\bigOmega(n)$ nodes.
\end{theorem}

\begin{proof}
\,\newline

\begin{figure}[tb!]
    \centering
    \includegraphics[scale=0.30]{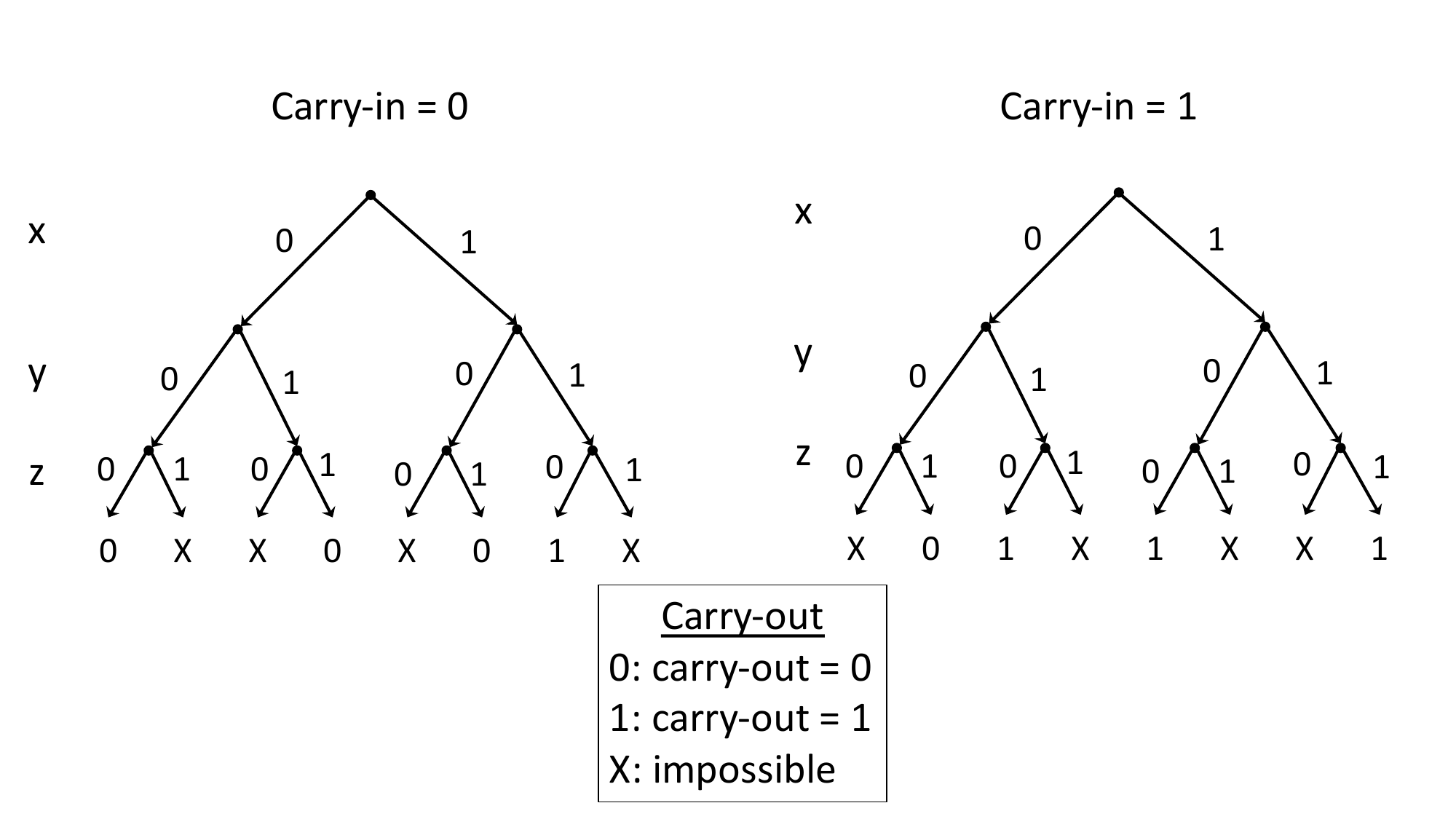}
    \caption{Decision diagrams for the $\textit{carry-in} = 0$ and $\textit{carry-in} = 1$ cases of $\ADD_n$}
    \label{Fi:add_dd}
\end{figure}

\smallskip
\noindent
\textit{CFLOBDD Claim.}
For a given bit position $i$, the representation has to distinguish between two cases for the carry-bit value coming from bit position $i-1$: $\textit{carry-in} = 0$ and $\textit{carry-in} = 1$.
\figref{add_dd} shows decision trees for the $\textit{carry-in} = 0$ and $\textit{carry-in} = 1$ cases.
The terminal values $0$ and $1$ indicate the carry-out value to be passed to bit position $i+1$. Terminal value $X$ represents a failure case: with the given carry-in value $c_i$, and the given values for $x_i$, $y_i$, and $z_i$, $z_i \neq c_i \xor x_i \xor y_1$.

\begin{figure}[tb!]
    \centering
    \includegraphics[scale=0.42]{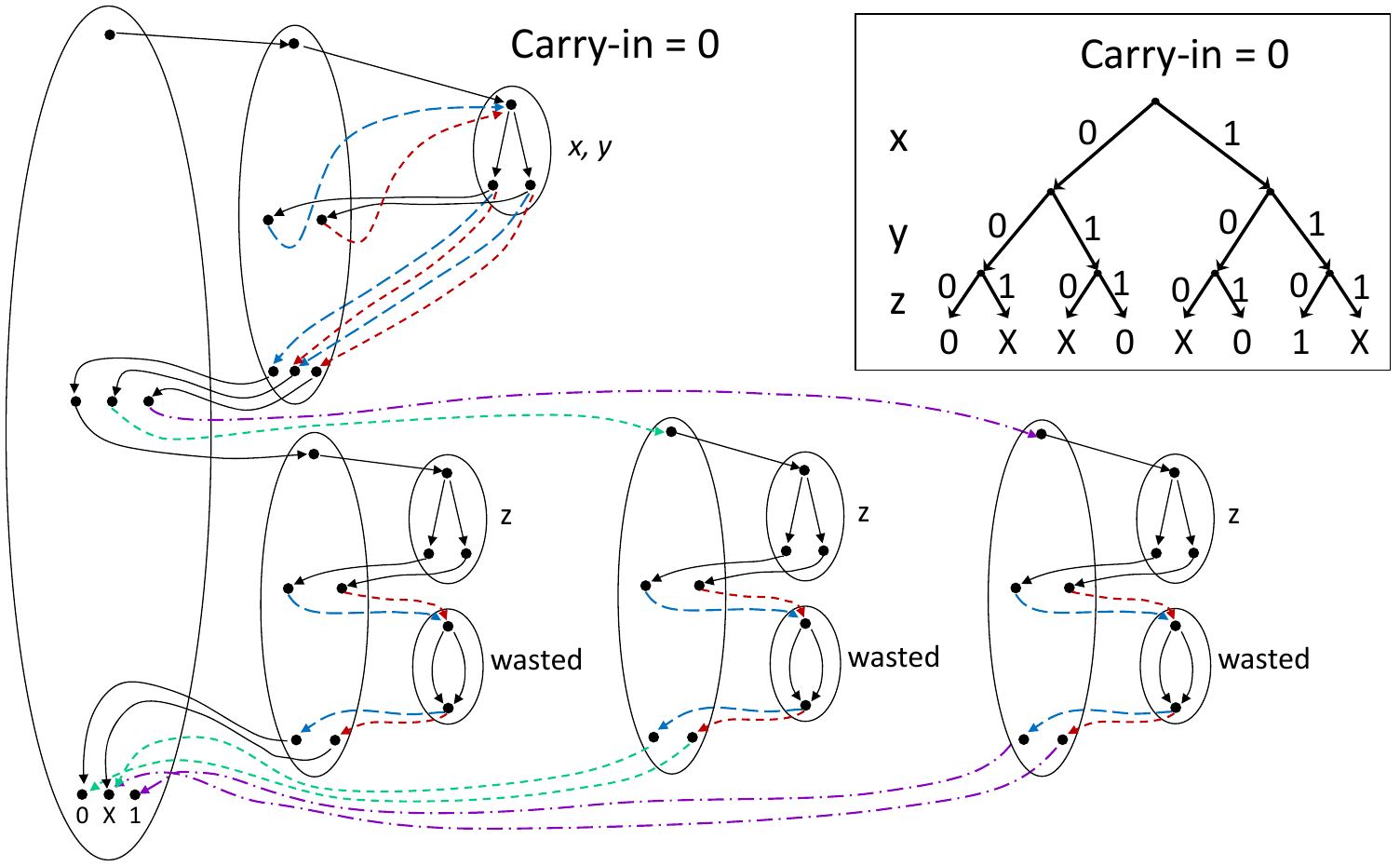}
    \caption{CFLOBDD representation for the $\textit{carry-in} = 0$ case of $\ADD_n$}
    \label{Fi:carry_in_0}
\end{figure}

\begin{figure}[tb!]
    \centering
    \includegraphics[scale=0.42]{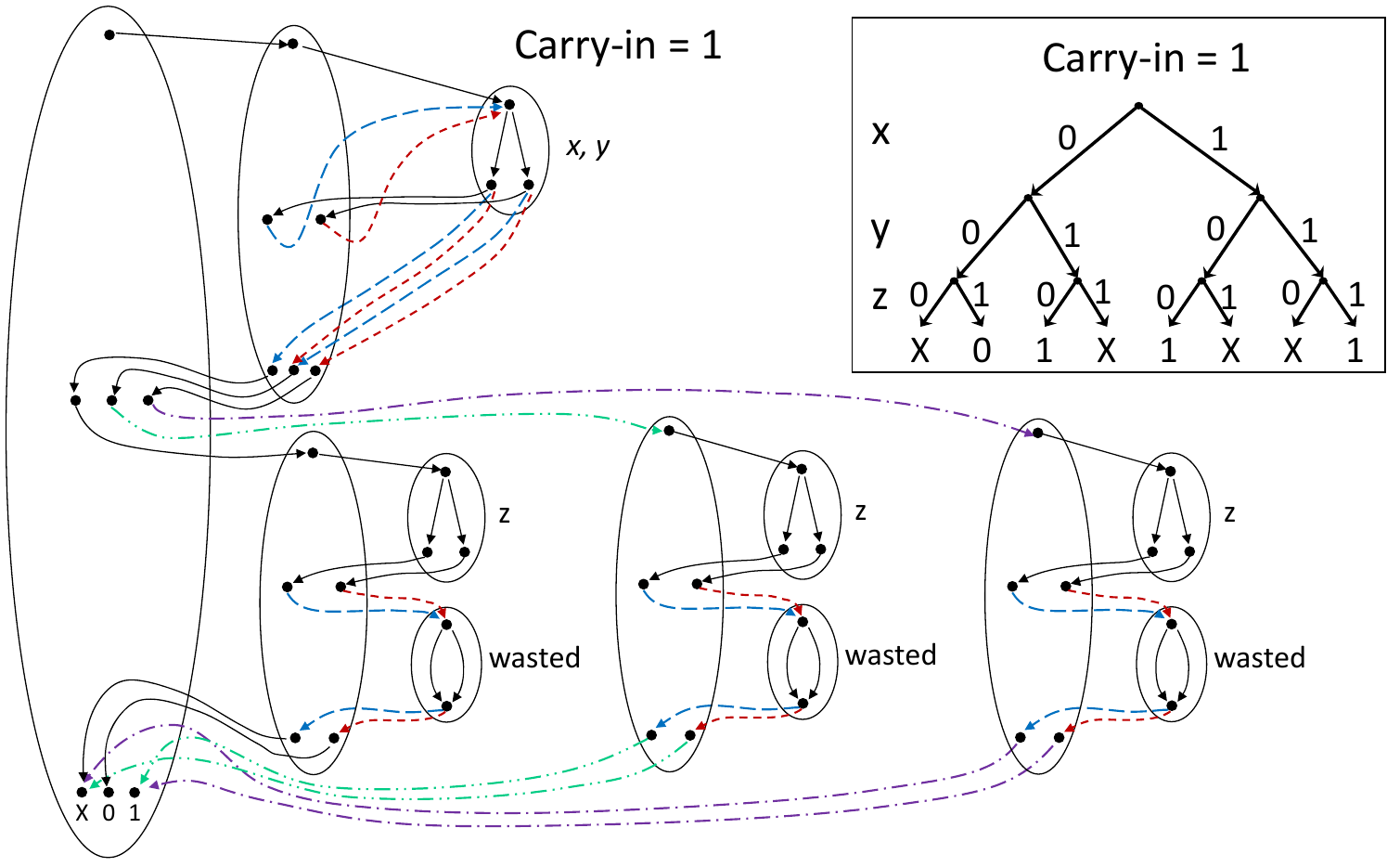}
    \caption{CFLOBDD representation for the $\textit{carry-in} = 1$ case of $\ADD_n$}
    \label{Fi:carry_in_1}
\end{figure}

\begin{figure}[tb!]
    \centering
    \begin{tabular}{cc}
      \includegraphics[scale=0.42]{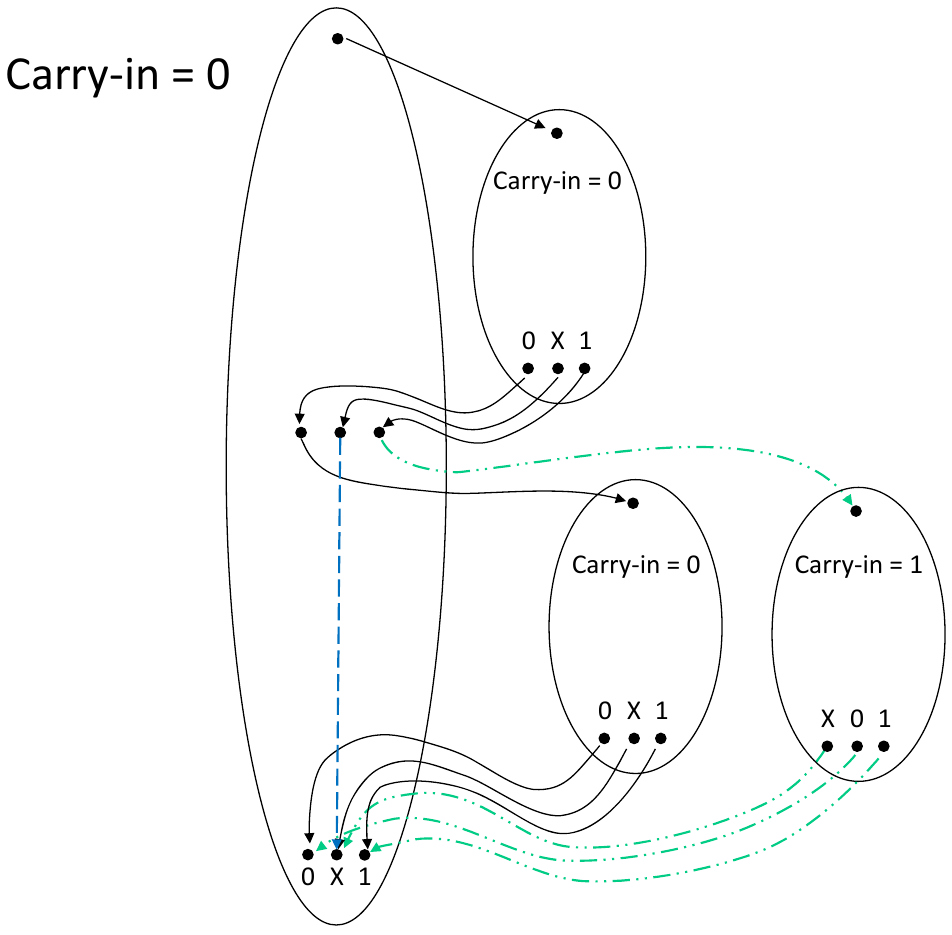}
      &
      \includegraphics[scale=0.42]{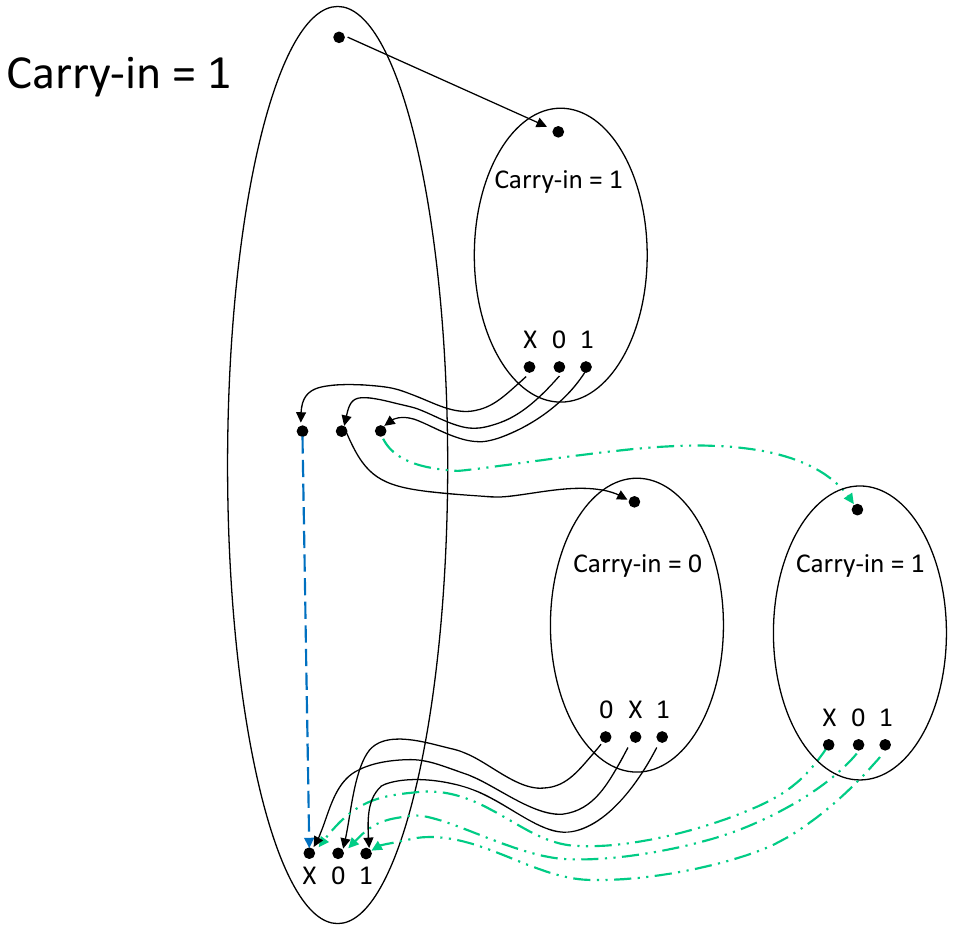}
      \\
      {\small (a)} & {\small (b)}
    \end{tabular}
    \caption{\protect \raggedright 
    Recursive CFLOBDD structures for the (a) $\textit{carry-in} = 0$ and (b) $\textit{carry-in} = 1$ cases of $\ADD_n$.
    To reduce clutter, a B-connection to a NoDistinctionProtoCFLOBDD is depicted as a straight dashed line to the $X$ exit vertex.
    }
    \label{Fi:carry_in_01_rec}
\end{figure}

\begin{figure}[tb!]
    \centering
    \begin{tabular}{cc}
      \multicolumn{2}{c}{\includegraphics[scale=0.45]{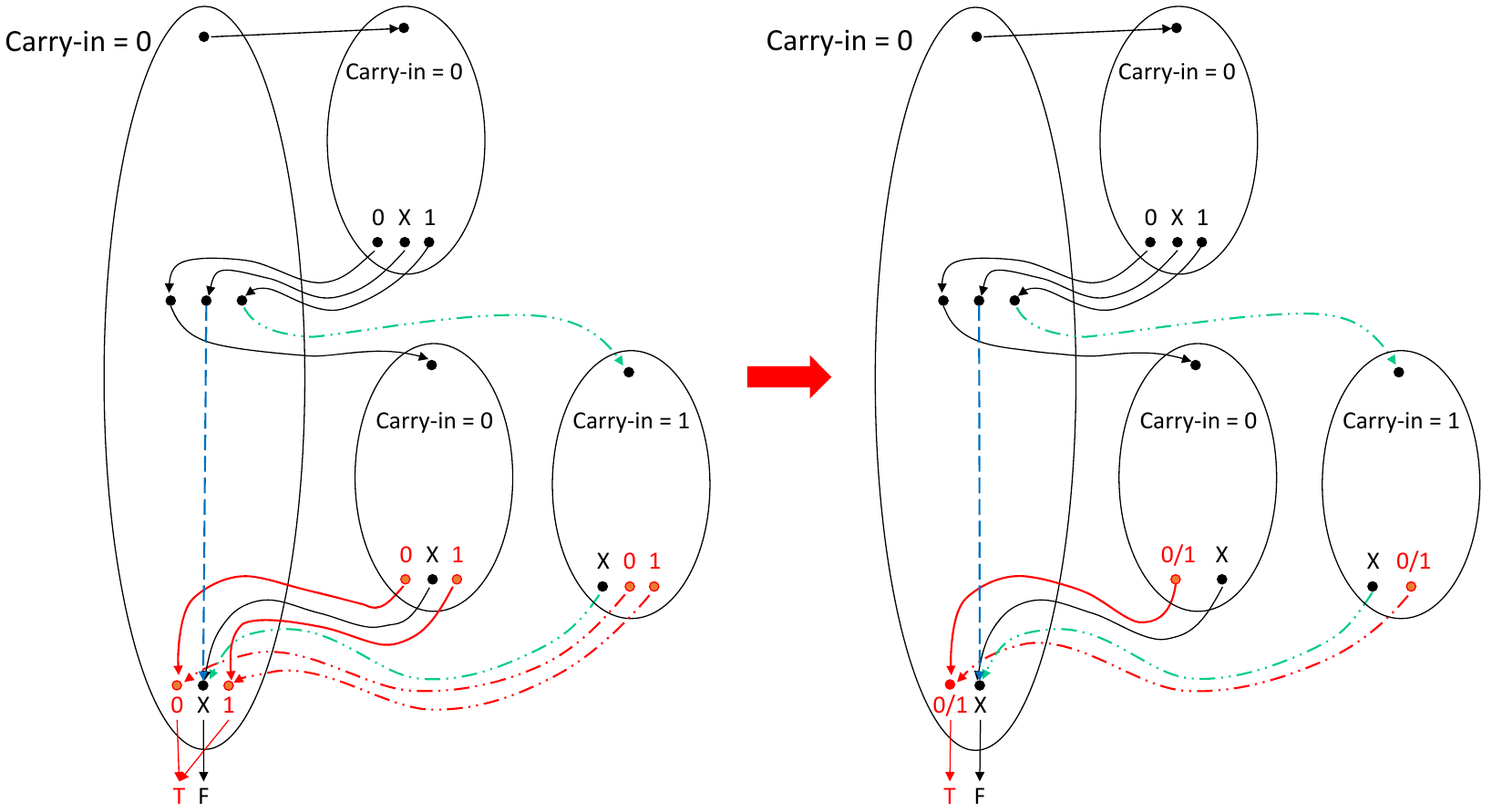}}
      \\
      \qquad\qquad\qquad {\small (a)}\label{Fi:top_level_add_without_collapse} 
      & \qquad\qquad\qquad {\small (b)}\label{Fi:top_level_add_with_collapse}
    \end{tabular}
    \caption{\protect \raggedright 
    (a) The recursive CFLOBDD structure of the $\ADD_n$ relation.
    Because the $0$ and $1$ exit vertices of the top-level grouping are associated with the single terminal $T$ (while $X$ is mapped to $F$), the $0$ and $1$ exits must be coalesced (indicated in red), which propagates to the internal groupings of the CFLOBDD.
    (b) The propagation of coalesced exit vertices.
    }
    \label{Fi:top_level_add}
\end{figure}

The CFLOBDD representation for $\ADD_n$ using the interleaved-variable ordering is shown in~\figseqref{carry_in_0}{top_level_add}.
Our claim is that $\ADD_n$ has $\bigO(\log{n})$ vertices and edges.
To keep triples of variables $x_i$, $y_i$, and $z_i$ aligned with the power-of-two nature of CFLOBDDs, our representation of $\ADD_n$ uses an additional set of $n/3$ dummy variables (which will always be ``routed'' through a DontCareGrouping, so they have no effect on the relation over $X$, $Y$, and $Z$).

\figref{carry_in_0} shows the proto-CFLOBDD at level $2$ for the $\textit{carry-in} = 0$ case for some triple of variables $x_i$, $y_i$, and $z_i$ in the $\ADD_n$ CFLOBDD.
The DontCareGroupings labeled ``wasted'' in \figref{carry_in_0} are for the associated $i^{\textit{th}}$ dummy variable.
\figref{carry_in_0} is one of two ``leaf'' building blocks that handles a triple of variables;
the other is shown in \figref{carry_in_1}, which depicts the CFLOBDD representation for the $\textit{carry-in} = 1$ case for $x_i$, $y_i$, and $z_i$.

\figref{carry_in_01_rec} shows the two recursive structures used at all higher levels (> 2) of the CFLOBDD for the $\textit{carry-in} = 0$ and  $\textit{carry-in} = 1$ cases.
Note that at all levels, including the base case of \figref{carry_in_0}, the sequence for the exit vertices of $\textit{carry-in} = 0$ proto-CFLOBDDs is $[0, X, 1]$.
Similarly, at all levels the sequence for the exit vertices of $\textit{carry-in} = 1$ proto-CFLOBDDs is $[X, 0, 1]$.
Consequently, at each level, we can construct the $\textit{carry-in} = 0$ and $\textit{carry-in} = 1$ proto-CFLOBDDs by adding just one grouping for each case, and each of the groupings contains a fixed number of vertices and edges.

At the outermost level, we use the $\textit{carry-in} = 0$ proto-CFLOBDD (which has an A-connection to the $\textit{carry-in} = 0$ proto-CFLOBDD at the next lower level, but B-connections to both the $\textit{carry-in} = 0$ and $\textit{carry-in} = 1$ variants).
The exit vertices labeled $0$ and $1$ are connected to terminal value $T$, and exit vertex $X$ to terminal value $F$.
\figref{top_level_add}a shows this structure.

To obey structural invariant \ref{Inv:6} of \defref{CFLOBDD}, it is necessary to collapse the $0$ and $1$ exit vertices at the outermost level, because both lead to the terminal value $T$.
This collapsing process propagates to the different levels of internal groupings of the CFLOBDD.
What remains to be established is that the collapsing step does not cause the CFLOBDD as a whole to blow up in size.\footnote{
  The reason a ``collapsing'' step could cause a size blow-up is because occurrences of previously shared identical substructures could turn into multiple substructures, each slightly different.
}

Fortunately, as indicated in \figref{top_level_add}b, the propagation only takes place in groupings in B-connections (and B-connections of B-connections, etc.), and does not propagate to A-Connection groupings.
As we prove below, the collapsing step only produces two more kinds of proto-CFLOBDDs at each level:
(i) a $\textit{carry-in} = 0$ variant in which the exit-vertex sequence is
$[0/1, X]$, and
(ii) a $\textit{carry-in} = 1$ variant in which the exit-vertex sequence is
$[X, 0/1]$ (where $0/1$ denotes the coalescing of the $0$ and $1$ exit vertices).
Because there are now at most four grouping patterns that arise at each level, the final CFLOBDD at most doubles in size.
The overall size of the CFLOBDD is proportional to the outermost level, and hence the CFLOBDD representation of $\ADD_n$ achieves double-exponential compression with respect to the size of the decision tree for $\ADD_n$ (i.e., $\bigO(\log{n})$ in vertices and edges, and $\bigO(l)$ in the number of levels ($l = \log{n})$).

More formally, let $R(l)$ and $P(l)$ denote the number of vertices and edges at or below level $l$ in the proto-CFLOBDD representation of $\ADD_n$ for $\textit{carry-in} = 0$ and $\textit{carry-in} = 1$, respectively, \emph{without} collapsing the $0$ and $1$ exits.
Also, let $A(l)$ and $B(l)$ denote the number of vertices and edges at or below level $l$ in the proto-CFLOBDD representation of $\ADD_n$ for $\textit{carry-in} = 0$ and $\textit{carry-in} = 1$, respectively, \emph{with} $0$ and $1$ exits collapsed.
We can give mutually recursive recurrence equations for $R(l)$ and $P(l)$.
For $R(l)$, we have
\begin{equation}
    \label{Eq:CarryIn0WithoutCollapse}
    R(l) = R(l-1) + P(l-1) + 7 + 13,
\end{equation}
which defines $R(l)$ in terms of $R(l-1)$ and $P(l-1)$ according to the pattern given in \figref{carry_in_01_rec}a.
Although \figref{carry_in_01_rec}a shows two $\textit{carry-in} = 0$ proto-CFLOBDDs at level $l-1$, they are actually shared data structures, and so $R(l-1)$ only counts once in \eqref{CarryIn0WithoutCollapse}.
We call this property the ``\emph{same-structure sharing}'' property:
stated another way, all non-zero coefficients of $R$, $P$, $A$, and $B$ terms can be replaced by $1$.
The 7 in \eqref{CarryIn0WithoutCollapse} represents the number of vertices at level $l$, and $13$ refers to the edge count between groupings at level $l$ and level $l-1$.
Similarly, the recurrence equation for $P(l)$, following the pattern in \figref{carry_in_01_rec}b, is
\begin{equation}
    \label{Eq:CarryIn1WithoutCollapse}
    P(l) = P(l-1) + R(l-1) + 7 + 13.
\end{equation}

Combining~\eqrefs{CarryIn0WithoutCollapse}{CarryIn1WithoutCollapse}, we obtain
    \begin{align}
    \label{Eq:CombiningRAndP}
        R(l) + P(l) &= (R(l-1) + P(l-1) + 7 + 13) + (P(l-1) + R(l-1) + 7 + 13) \notag\\
        &= 2R(l-1) + 2P(l-1) + 2*(7 + 13) \tag*{Collecting terms}\\
        &= R(l-1) + P(l-1) + 40 \tag*{Same-structure sharing}\\
        &= R(l-2) + P(l-2) + 2*40 \tag*{Substitution}\\
        & \hspace{12ex} \vdots \notag\\
        &= R(l-k) + P(l-k) + k*40 \tag*{For a general $k$}\\
        & \hspace{12ex} \vdots \notag\\
        &= R(2) + P(2) + (l-2)*40 \notag\\
        &= \bigO(1) + \bigO(l) \tag*{From \figrefs{carry_in_0}{carry_in_1}}\\
        &= \bigO(l)
    \end{align}
Substituting~\eqref{CombiningRAndP} in~\eqrefs{CarryIn0WithoutCollapse}{CarryIn1WithoutCollapse}, we obtain
\begin{equation}
\label{Eq:Req}
\begin{split}
    R(l) &= \bigO(l-1) + 20\\
        &= \bigO(l-1) + \bigO(1)\\
        &= \bigO(l)
\end{split}
\end{equation}
\begin{equation}
\label{Eq:Peq}
        \begin{split}
    P(l) &= \bigO(l-1) + 20\\
        &= \bigO(l-1) + \bigO(1)\\
        &= \bigO(l)
\end{split}
\end{equation}

The argument for $A(l)$ and $B(l)$ is similar.
We define $A(l)$ in terms of $R(l-1)$, $A(l-1)$, and $B(l-1)$ following the pattern in \figref{top_level_add}b.
\begin{equation}
    \label{Eq:CarryIn0WithCollapse}
    A(l) = R(l-1) + A(l-1) + B(l-1) + 6 + 12,
\end{equation}
where the numbers of vertices and edges added at each level are 6 and 12, respectively.
Similarly, the recurrence equation for $B(l)$ is
\begin{equation}
    \label{Eq:CarryIn1WithCollapse}
    B(l) = P(l-1) + A(l-1) + B(l-1) + 6 + 12
\end{equation}

Combining ~\eqrefs{CarryIn0WithCollapse}{CarryIn1WithCollapse}, we obtain
\begin{align}
    \label{Eq:CombiningAAndB}
        A(l) + B(l) &= (R(l-1) + A(l-1) + B(l-1) + 6 + 12) + (P(l-1) + A(l-1) + B(l-1) + 6 + 12) \notag\\
        &= R(l-1) + P(l-1) + 2A(l-1) + 2B(l-1) + 2*(6 + 12) \tag*{Collecting terms}\\
        &= R(l-1) + P(l-1) + A(l-1) + B(l-1) + 36 \tag*{Same-structure sharing}\\
        &= (R(l-2) + P(l-2) + 20) + (P(l-2) + R(l-2) + 20) \\
        &+ (R(l-2) + A(l-2) + B(l-2) + 6 + 12) \\
        &+ (P(l-2) + A(l-2) + B(l-2) + 6 + 12) + 36 \tag*{Substitution}\\
        &= 3R(l-2) + 3P(l-2) + 2A(l-2) + 2B(l-2) + 40 + 36 + 36 \tag*{Collecting terms}\\
        &= R(l-2) + P(l-2) + A(l-2) + B(l-2) + 40 + 2*36 \tag*{Same-structure sharing}\\
        & \hspace{12ex} \vdots \notag\\
        &= R(l-k) + P(l-k) + A(l-k) + B(l-k) + (k-1)*40 + k*36 \tag*{For a general \mbox{$k$}}\\
        & \hspace{12ex} \vdots \notag\\
        &= R(2) + P(2) + A(2) + B(2) + (l-3)*40 + (l-2)*36 \notag\\
        &= \bigO(1) + \bigO(l) \tag*{From \figrefs{carry_in_0}{carry_in_1}}\\
        &= \bigO(l)
\end{align}

Using~\eqref{CombiningAAndB}, we can rewrite $A(l)$ and $B(l)$ as follows:
\begin{equation}
\label{Eq:Aeq}
\begin{split}
    A(l) &= \bigO(l-1) + \bigO(l-1) + 18\\
        &= \bigO(l-1) + \bigO(1)\\
        &= \bigO(l)
\end{split}
\end{equation}
\begin{equation}
\label{Eq:Beq}
        \begin{split}
    B(l) &= \bigO(l-1) + \bigO(l-1) + 18\\
        &= \bigO(l-1) + \bigO(1)\\
        &= \bigO(l)
\end{split}
\end{equation}

\eqrefspp{Req}{Peq}{Aeq}{Beq} show that $R(l)$, $P(l)$, $A(l)$, and $B(l)$ are all linear in the number of levels---$\bigO(l)$---and logarithmic in the number of vertices and edges---$\bigO(\log{n})$.
Thus, for the interleaved-variable ordering for (vectors of) variables $X$, $Y$, and $Z$, the vertices and edges count for the CFLOBDD for $\ADD_n$ is $\bigO(\log{n})$.

\smallskip
\noindent
\textit{BDD Claim.}
To represent the addition relation $\ADD_n$ as a BDD, we claim that we require at least $n$ nodes, one node for each variable in the argument, regardless of the variable ordering. 

We will prove this claim by contradiction, similar to the \emph{BDD Claim} proofs for $\EQ_n$ and $H_n$.
We assume that a BDD representation $B$ of $\ADD_n$ does not need at least one node for each variable, and therefore the BDD representation of $\ADD_n$ does not depend on that particular variable. 
Let us define $\mathcal{F}$ as an ``all-false'' assignment of variables, i.e., 
$\mathcal{F} \eqdef \forall k \in \{0...n/3-1\}, x_k \mapsto F, y_k \mapsto F, z_k \mapsto F.$
There are seven possible cases:
\begin{itemize}
    \item Case 1: $B$ does not depend on variable $y_k$, for some $k \in \{ 0..n/3-1 \} $. 
    Let us consider two different assignments of variables:
    $A_1 \eqdef \mathcal{F}$ and $A_2 \eqdef \mathcal{F}[y_k \mapsto T]$;
    i.e., $A_2$ is $A_1$ with $y_k$ updated to $T$.
    (Note that $A_1 = [...,{x_k}\mapsto F, {y_k}\mapsto F, {z_k} \mapsto F,...]$ and 
    $A_2 = [...,{x_k}\mapsto F, {y_k}\mapsto T, {z_k} \mapsto F,...]$.)
    Because $B$ does not depend on $y_k$, $B[A_1]$ must equal $B[A_2]$, which violates the definition of addition relation $\ADD_n$;
    in particular, $\ADD_n[A_1] = 1$ and $\ADD_n[A_2] = 0$.
    Let us show how the violation occurs by using the example of 
    $n = 24$ and $k = 3$ $(\in \{0..7\})$.
    We have $\ADD_n[A_1] = 1$ because the following triple of numbers is a correct instance of an addition problem:
    \[
      \begin{array}{rcccccccc}
           & 0 & 0 & 0 & 0 & 0 & 0 & 0 & 0 \\
         + & 0 & 0 & 0 & 0 & 0 & 0 & 0 & 0 \\
         \hline
           & 0 & 0 & 0 & 0 & 0 & 0 & 0 & 0 \\
      \end{array}
    \]
    However, $\ADD_n[A_2] = 0$ because the following triple is an incorrect instance of addition:
    \[
      \begin{array}{rcccccccc}
           & 0 & 0 & 0 & 0 & 0 & 0 & 0 & 0 \\
         + & 0 & 0 & 0 & 0 & 1 & 0 & 0 & 0 \\
         \hline
           & 0 & 0 & 0 & 0 & 0 & 0 & 0 & 0 \\
      \end{array}
    \]
    We conclude that none of the $y_k$ variables can be dropped individually.
    The argument for arbitrary $n$ and $k$ is completely analogous.
    \item Case 2: $B$ does not depend on the pair $(x_k, y_k)$.
    We use a similar proof strategy as Case 1.
    Consider two assignments $A_1 \eqdef \mathcal{F}$ and $A_2 = \eqdef A_1[x_k \mapsto T][y_k \mapsto T]$.
    The assignments must produce the same value for the function represented by $B$, but they yield different values for the $\ADD_n$ relation.
    Again, let us show how the violation occurs by using the example of $n = 24$ and $k = 3$.
    We have $\ADD_n[A_1] = 1$ because the following triple is a correct instance of an addition problem:
    \[
      \begin{array}{rcccccccc}
           & 0 & 0 & 0 & 0 & 0 & 0 & 0 & 0 \\
         + & 0 & 0 & 0 & 0 & 0 & 0 & 0 & 0 \\
         \hline
           & 0 & 0 & 0 & 0 & 0 & 0 & 0 & 0 \\
      \end{array}
    \]
    However, $\ADD_n[A_2] = 0$ because the following triple is an incorrect instance of addition:
    \[
      \begin{array}{rcccccccc}
           & 0 & 0 & 0 & 0 & 1 & 0 & 0 & 0 \\
         + & 0 & 0 & 0 & 0 & 1 & 0 & 0 & 0 \\
         \hline
           & 0 & 0 & 0 & 0 & 0 & 0 & 0 & 0 \\
      \end{array}
    \]
    We conclude that $(x_k, y_k)$ cannot be dropped as a pair.
\end{itemize}
We can make arguments similar to the ones above for other combinations of variables, such as
(i) $B$ does not depend on variable $x_k$ individually, 
(ii) $B$ does not depend on variable $z_k$ individually, 
(iv) $B$ does not depend on either $y_k$ or $z_k$ variables as a pair,
(v) $B$ does not depend on either $x_k$ or $z_k$ variables as a pair, 
(vi) $B$ does not depend on all three variables $x_k, y_k, z_k$ together.
Consequently, we conclude that $B$---and hence any BDD representation for $\ADD_n$ requires $\Omega(n)$ nodes.
\end{proof}


\section{Applications to Quantum Algorithms}
\label{Se:quantum-algos}

For certain problems, algorithms run on quantum computers achieve polynomial to exponential speed-ups over their classical counterparts.
However, to date, there are no practical large-scale implementations of quantum computers.
Hence, simulating quantum circuits on classical computers can provide insight on how quantum algorithms perform and scale.
In this section and \sectref{evaluation}, we explore the potential of CFLOBDDs for simulating quantum circuits.\footnote{
  No knowledge of quantum algorithms is assumed.
  Everything can be understood in terms of some simple linear algebra.
  The only subtle concept is that some of the $2^n \times 2^n$ matrices are best thought of in terms of their effect on the \emph{indices} of positions in vectors of size $2^n \times 1$.
  For instance, the matrix
  $I \tensor \begin{bmatrix}
               0 & 1 \\
               1 & 0 \\
             \end{bmatrix}$
  $~= \begin{bNiceArray}{cccc}[first-col,first-row]
          & 00 & 01 & 10 & 11 \\
       00 &  0 &  1 &  0 &  0 \\
       01 &  1 &  0 &  0 &  0 \\
       10 &  0 &  0 &  0 &  1 \\
       11 &  0 &  0 &  1 &  0
     \end{bNiceArray}$
  maps the unit vectors
  $e_{00} = \begin{bNiceArray}{c}[first-col]
              00 & 1 \\
              01 & 0 \\
              10 & 0 \\
              11 & 0
           \end{bNiceArray}$,
  $e_{01} = \begin{bNiceArray}{c}[first-col]
              00 & 0 \\
              01 & 1 \\
              10 & 0 \\
              11 & 0
           \end{bNiceArray}$,
  $e_{10} = \begin{bNiceArray}{c}[first-col]
              00 & 0 \\
              01 & 0 \\
              10 & 1 \\
              11 & 0
           \end{bNiceArray}$, and
  $e_{11} = \begin{bNiceArray}{c}[first-col]
              00 & 0 \\
              01 & 0 \\
              10 & 0 \\
              11 & 1
           \end{bNiceArray}$
  to $e_{01}$, $e_{00}$, $e_{11}$, and $e_{10}$, respectively.
  In other words, on a unit vector, the effect is to negate the final bit of the index that specifies the position of the 1.
}

A single-qubit quantum state can be represented by a pair of complex numbers (i.e., a vector of size $2 \times 1$).
A quantum state of $n$ qubits can be represented by a complex-valued vector of size $2^n \times 1$;
hence, the information capacity increases exponentially with the number of qubits.

A \emph{quantum circuit} takes as input an initial quantum-state vector, and applies a sequence of \emph{quantum gates}, which are each length-preserving transformations, and can be expressed as unitary matrices.
Thus, quantum-circuit simulation requires a way to perform linear algebra with vectors of size $2^n$ and matrices of size $2^n \times 2^n$, where $n$ is the number of qubits involved.
Examples of gates that operate on single qubits are
$I = \left[\begin{smallmatrix}
        1 & 0 \\ 0 & 1  
      \end{smallmatrix}\right]$,
$H = \frac{1}{\sqrt{2}}\left[\begin{smallmatrix}
        1 & 1 \\ 1 & -1
                        \end{smallmatrix}\right]$, and
$X = \left[\begin{smallmatrix}
        0 & 1 \\ 1 & 0
      \end{smallmatrix}\right]$.
$I$ leaves a quantum state as is;
the Hadamard gate $H$ sends a basis state to a state in ``superposition'' (i.e., a state that is a non-trivial linear combination of basis states);\footnote{
  \label{Footnote:HadamardGate}
  A Hadamard gate that operates on a single qubit is the Hadamard matrix $H_2$ from \figref{HadamardMatrices} (\sectref{Preliminaries}), scaled  by $\frac{1}{\sqrt{2}}$ so that it is a unitary matrix.
}
$X$ complements the indices of a qubit's basis states, and thus flips the positions of the amplitudes, sending $\begin{bNiceArray}{cc}[first-col,small,code-for-first-col = \scriptscriptstyle]
              0 & \alpha_0 \\
              1 & \alpha_1
           \end{bNiceArray}$
to $\begin{bNiceArray}{cc}[first-col,small,code-for-first-col = \scriptscriptstyle]
              0 & \alpha_1 \\
              1 & \alpha_0
           \end{bNiceArray}$.
Let $M^{\tensor k}$ denote the $k$-fold Kronecker product of $M$ with itself: $M^{\tensor k} = \overbrace{M \tensor M \tensor \ldots \tensor M}^{k~\textrm{occurrences of}~M}$.
The quantum gate that, e.g., applies $H$ to the $j^{\textit{th}}$ qubit of an $n$-qubit quantum state is $I^{\tensor (j-1)} \tensor H \tensor I^{\tensor (n-j)}$.

A quantum state could be encoded with a decision tree of height $n$, but such a representation would be inefficient.
The potential of CFLOBDDs is for providing (up to) double-exponential compression in the sizes of the vectors and matrices that arise during quantum simulation,
using $\log{n}$ and $\log{n} + 1$ levels, respectively.
Because many quantum gates can be described using Kronecker products, there is great potential for them to have a compact representation as a CFLOBDD.

The evaluation of CFLOBDDs for quantum simulation in \sectref{ResearchQuestionTwo} uses the CFLOBDD representations of gate matrices and state vectors presented in this section,
namely, multi-terminal CFLOBDDs with a semiring value at each terminal value ({\`a} la ADDs \cite{iccad:BFGHMPS93}).
To support functions of type ${\{ 0,1 \}}^n \rightarrow \mathbb{C}$, we implemented a semiring of multi-precision-floating-point \cite{fousse2007mpfr} complex numbers.

\paragraph{Notation.}
Generally, when we wish to emphasize the dimensions of objects, a state vector with $n$ qubits (of size $2^n\times 1$) is denoted using a subscript $n$, and a gate matrix acting on $n$ qubits (of size $2^n\times 2^n$) is denoted using a subscript $2n$.
Although the subscripts differ, a vector $V_n$ and matrix $M_{2n}$ have compatible sizes in the matrix-vector product $M_{2n} V_n$.

A completely different subscript convention is used to denote basis vectors:
$e_s$ denotes the vector with a one in position $s$ (where $s$ is interpreted in binary) and zeros elsewhere.
To be concise, we sometimes use ``exponent notation'' from formal-language theory to express $s$.
For instance, $e_{0^n}$ and $e_{1^n}$ denote the basis vectors
  $e_{\underbrace{0 \ldots 0}_{n~\textit{copies}}}$ and
  $e_{\underbrace{1 \ldots 1}_{n~\textit{copies}}}$, respectively.

Sometimes both conventions come into play.
For example, $H_{2n} e_{0^n}$ involves a matrix that acts on $n$ qubits applied to a basis vector with $n$ qubits.

\paragraph{Organization.}
In the present section, we articulate some advantages of quantum simulation (\sectref{AdvantagesOfSimulation}),
and then discuss the application of CFLOBDDs to this domain.
In \sectref{SpecialMatrices}, we give constructions of CFLOBDDs for some important families of vectors and matrices used in quantum algorithms.
In \sectref{QuantumAlgorithms}, we briefly summarize some of the quantum algorithms used in the experiments in \sectref{ResearchQuestionTwo}.

To aid the reader, the following table indicates where to find details about the CFLOBDD operations needed to simulate a quantum circuit:

\medskip
\qquad\begin{tabular}{r||r|l}
  \hline
  \multicolumn{1}{l||}{\emph{State construction}}  &   &  \\
  Standard-basis vector  & \sectref{StandardBasisVectors} & \algref{StandardBasisVectorAlgo} \\
  \multicolumn{1}{l||}{\emph{Gate construction}}   &   &  \\
  Identity gate         & \sectref{IdentityMatrix} & \algref{IdAlgo} \\
  Hadamard gate         & \figref{walshKGeneralCase} and \sectref{HadamardMatrix} & \algref{HAlgo} \\
  Not gate              & \sectref{NotMatrix} & Variant of \algref{IdAlgo} \\
  $\CNOT$ gate          & \sectref{CNOTMatrix} & Appendix \sectref{CNOTConstructionAlgo} \\
  \multicolumn{1}{l||}{\emph{Operations}}         &   &  \\
  Kronecker product     & \sectref{KroneckerProduct} & \algref{KP4Voc} \\
  \qquad Matrix-matrix multiplication & \sectref{matrix-mult} & \algref{MatrixMult} \\
  Vector-matrix and matrix-vector multiplication & \sectrefs{VectorToMatrixConversion}{matrix-mult} & \algrefs{vector2matrix}{MatrixMult} \\
  Application of QFT    & \sectref{QuantumFourierTransform} & Appendix \sectref{QuantumGates} \\
  \multicolumn{1}{l||}{\emph{Measurement}} & \sectref{PathCountingAndSampling} & \algref{Sampling} \\
  \hline
\end{tabular}

\subsection{Advantages of Simulation}
\label{Se:AdvantagesOfSimulation}

Simulation of a quantum circuit can have advantages compared to actually running a circuit on a quantum computer:
\begin{enumerate}
  \item
    \label{It:DeviateFromQuantumModel}
    A simulation can deviate from certain requirements of the quantum-computation model and perform the simulation in a way that no quantum device could.
    \begin{enumerate}
      \item
        \label{It:Deviate:RepeatedSquaring}
        Some quantum algorithms perform multiple iterations of a particular quantum operator $\Op$ (e.g., $k$ iterations, where $k$ is some power of $2$).
        \emph{A simulation can operate on $\Op$ itself}
        \cite[Ch.\ 6]{Book:ZW2020},
        using repeated squaring to create the sequence of derived operators $\Op^2$, $\Op^4$, $\Op^8$, $\ldots$, $\Op^{2^{\log{k}}} = \Op^k$,
        which can be accomplished in $\log{k}$ iterations.
        The final answer is then obtained using $\Op^k$.
        A physical quantum computer can only \emph{apply $\Op$ sequentially}, and thus must perform $k$ applications of $\Op$.
        This approach is particularly useful in simulating Grover's algorithm (\sectref{groveralgo}).
      \item
        \label{It:Deviate:GoOutsideQuantumModel}
        The quantum-computation model requires the use of a limited repertoire of operations:
        every operation is a multiplication by a unitary matrix, and all results (and all intermediate values) must be produced in this way.
        In contrast, it is acceptable for a simulation to create some intermediate results by alternative pathways.
        In some cases, our simulation of a quantum algorithm directly creates a CFLOBDD that represents an intermediate value, thereby avoiding a sequence of potentially more costly computational steps that stay within the quantum model.
        (See ``A Special-Case Construction'' in \sectref{CNOTMatrix}, which is used in \sectref{groveralgo}.)
      \item
        \label{It:Deviate:RepeatedMeasurements}
        In many quantum algorithms, the final state needs to be measured multiple times.
        When running on a physical quantum computer, part or all of the quantum state is destroyed after each measurement of the state, and thus the quantum steps must be re-performed before each successive measurement.
        In contrast, in a quantum simulation the quantum steps need only be performed once, and then \emph{multiple measurements can be made because a (simulated) measurement does not cause any part of the simulated quantum state to be lost}.
    \end{enumerate}
    \emph{Quantum supremacy} refers to a computing problem and a problem size beyond which the problem can be solved efficiently on a quantum computer, but not on a classical computer.
    Quantum simulation is at one of the borders between classical computing and quantum computing:
    with quantum simulation, a classical computer performs the computation in roughly the same manner as a quantum computer, but can take advantage of shortcuts of the kind listed above.
    In principle, a more efficient quantum-simulation technique has the potential to change the threshold for quantum supremacy.
  \item
    \label{It:NoQuantumErrors}
    Current quantum computers are error-prone and can lead to incorrect results, which is not the case with a simulation (modulo bugs in the implementation of the simulation).
  \item
    \label{It:SimulationForTesting}
    Quantum simulation has a role in testing quantum computers.
    In particular, simulation can be used to create test suites for checking the correctness of the output states and measurements obtained from physical hardware.
\end{enumerate}


\subsection{Special Matrices}
\label{Se:SpecialMatrices}

In this section, we discuss how matrices and vectors used in quantum algorithms can be efficiently represented using CFLOBDDs.

\begin{algorithm}[tb!]
\caption{Hadamard Matrix \label{Fi:HAlgo}}
\SetKwFunction{HadamardMatrixCFLOBDD}{HadamardMatrixCFLOBDD}
\SetKwFunction{HadamardMatrixGrouping}{HadamardMatrixGrouping}
\SetKwProg{myalg}{Algorithm}{}{end}
\myalg{\HadamardMatrixCFLOBDD{l}}{
\Input{int $l$ -- level of the CFLOBDD to be constructed; \#variables $= 2^l$}
\Output{The CFLOBDD that represents $H_{2^l}$, of size $2^{2^{l-1}} \times 2^{2^{l-1}}$}
\Begin{
Grouping g = HadamardMatrixGrouping(l)\;
\Return RepresentativeCFLOBDD(g, [1,-1])\;  \label{Li:Hadamard:TopLevelValueTuple}
}
}{}
\setcounter{AlgoLine}{0}
\SetKwProg{myproc}{SubRoutine}{}{end}
\myproc{\HadamardMatrixGrouping{l}}{
\Input{int $l$ -- the level of the proto-CFLOBDD to be constructed}
\Output{Grouping g representing a proto-CFLOBDD for $H_{2^l}$}
\Begin{
InternalGrouping g = new InternalGrouping(l)\;
\eIf{i == 1}{
g.AConnection = ForkGrouping\;
g.AReturnTuple = [1,2]\;
g.numberOfBConnections = 2\;
g.BConnection[1] = DontCareGrouping\;
g.ReturnTuples[1] = [1]\;
g.BConnection[2] = ForkGrouping\;
g.BReturnTuples[2] = [1,2]\;
}
{
Grouping g' = HadamardMatrixGrouping(l-1)\;
g.AConnection = g'\;
g.AReturnTuple = [1,2]\;
g.numberOfBConnections = 2\;
g.BConnection[1] = g'\;
g.BReturnTuples[1] = [1,2]\;
g.BConnection[2] = g'\;
g.BReturnTuples[2] = [2,1]\;
}
g.numberOfExits = 2\;
\Return RepresentativeGrouping(g)\;
}
}
\end{algorithm}

\subsubsection{Hadamard Gate}
\label{Se:HadamardMatrix}

As we saw in \sectref{Overview}, a Hadamard matrix can be efficiently represented by a CFLOBDD.
Hadamard matrices can be recursively defined as $H_{2^{i+1}} = H_{2^i}\tensor H_{2^i}$;
thus, the Hadamard matrix at level $l+1$ can be constructed using a Kronecker product of level-$l$ Kronecker-product matrices (\sectref{KroneckerProduct}).
Alternatively, it is possible to bypass such Kronecker-product operations and directly construct the Hadamard matrix for a given level.
Pseudo-code for the latter approach is given as \algref{HAlgo}.
The algorithm takes as input the max-level $l$, and returns the CFLOBDD of level $l$ that represents the square matrix $H_{2^l}$ of size $2^{2^{l-1}} \times 2^{2^{l-1}}$.
The base case, for $l = 1$, returns the representation of
\[H_2 = \begin{bmatrix}
        1 & 1\\
        1 & -1\\
\end{bmatrix}
\]
In \algref{HAlgo}, \lineref{Hadamard:TopLevelValueTuple}, the value tuple is always $[1, -1]$ (and never $[-1, 1]$) because the $[0, 0]$ entry of every Hadamard matrix is $1$.

As noted in \footnoteref{HadamardGate}, a Hadamard gate is a Hadamard matrix scaled by a power of $\frac{1}{\sqrt{2}}$.
In particular, the scale factor for $H_{2^l}$ is $\left(\frac{1}{\sqrt{2}}\right)^l$.
In our summaries of quantum algorithms in \sectref{QuantumAlgorithms}, we omit such scale factors to reduce clutter.

\begin{algorithm}[tb!]
\caption{Identity Matrix \label{Fi:IdAlgo}}
\SetKwFunction{IdentityMatrixCFLOBDD}{IdentityMatrixCFLOBDD}
\SetKwFunction{IdentityMatrixGrouping}{IdentityMatrixGrouping}
\SetKwProg{myalg}{Algorithm}{}{end}
\myalg{\IdentityMatrixCFLOBDD{l}}{
\Input{int $l$ -- level of the CFLOBDD to be constructed; \#variables $= 2^l$}
\Output{The CFLOBDD that represents $I_{2^{2^{l-1}}}$, of size $2^{2^{l-1}} \times 2^{2^{l-1}}$}
\Begin{
Grouping g = IdentityMatrixGrouping(l)\;
\Return RepresentativeCFLOBDD(g, [1,0])\;
}
}{}
\setcounter{AlgoLine}{0}
\SetKwProg{myproc}{SubRoutine}{}{end}
\myproc{\IdentityMatrixGrouping{l}}{
\Input{int $l$ -- the level of the proto-CFLOBDD to be constructed}
\Output{Grouping g representing a proto-CFLOBDD for $I_{2^{2^{l-1}}}$}
\Begin{
InternalGrouping g = new InternalGrouping(l)\;
\eIf{i == 1}{
g.AConnection = ForkGrouping\;
g.AReturnTuple = [1,2]\;
g.numberOfBConnections = 2\;
g.BConnection[1] = ForkGrouping\;
g.ReturnTuples[1] = [1,2]\;
g.BConnection[2] = ForkGrouping\;
g.BReturnTuples[2] = [2,1]\;
}
{
Grouping g' = IdentityMatrixGrouping(l-1)\;
g.AConnection = g'\;
g.AReturnTuple = [1,2]\;
g.numberOfBConnections = 2\;
g.BConnection[1] = g'\;
g.BReturnTuples[1] = [1,2]\;
g.BConnection[2] = NoDistinctionProtoCFLOBDD(l-1)\;
g.BReturnTuples[2] = [2]\;
}
g.numberOfExits = 2\;
\Return RepresentativeGrouping(g)\;
}
}
\end{algorithm}

\subsubsection{Identity Gate}
\label{Se:IdentityMatrix}

The identity matrix, which is the same as the equality relation $EQ_N$, can be efficiently represented by a CFLOBDD, as shown in \sectref{Separation:EqualityRelation}.
The identity matrix at level $l+1$ can be recursively computed from identity matrices at level $l$ as $I_{2^{l+1}} = I_{2^l} \tensor I_{2^l}$.
The CFLOBDD for the identity matrix at level $l+1$ can either be constructed using a Kronecker product of level-$l$ identity-matrix CFLOBDDs (\sectref{KroneckerProduct}), or it can be constructed directly.
Pseudo-code for the latter approach is given as \algref{IdAlgo}.
The algorithm takes as input the max-level $l$, and returns the CFLOBDD of level $l$ that represents $I_{2^l}$ (of size $2^{2^{l-1}} \times 2^{2^{l-1}}$).
The base case, for $l = 1$, returns the representation of
\[
I_2 = \begin{bmatrix}
        1 & 0\\
        0 & 1\\
      \end{bmatrix}
\]

\Omit{
\begin{algorithm}[tb!]
\caption{Swap Matrix \label{Fi:XMatrix}}
\SetKwFunction{SwapMatrixCFLOBDD}{SwapMatrixCFLOBDD}
\SetKwFunction{SwapMatrixGrouping}{SwapMatrixGrouping}
\SetKwProg{myalg}{Algorithm}{}{end}
\myalg{\SwapMatrixCFLOBDD{}}{
\Input{int $l$ - level of the CFLOBDD = $\log{n}$, $n$ - number of bits}
\Output{The CFLOBDD that represents $X_{2^n}$}
\Begin{
Grouping g = SwapMatrixGrouping(l)\;
\Return RepresentativeCFLOBDD(g, [0,1])\;
}
}{}
\setcounter{AlgoLine}{0}
\SetKwProg{myproc}{SubRoutine}{}{end}
\myproc{\SwapMatrixGrouping{}}{
\Input{int $l$ - level of the CFLOBDD = $\log{n}$, $n$ - number of bits}
\Output{Grouping g representing proto - $X_{2^n}$}
\Begin{
InternalGrouping g = new InternalGrouping(l)\;
\eIf{i == 1}{
g.AConnection = ForkGrouping\;
g.AReturnTuple = [1,2]\;
g.numberOfBConnections = 2\;
g.BConnection[1] = ForkGrouping\;
g.ReturnTuples[1] = [1,2]\;
g.BConnection[2] = ForkGrouping\;
g.BReturnTuples[2] = [2,1]\;
}
{
Grouping g' = SwapMatrixGrouping(l-1)\;
g.AConnection = g'\;
g.AReturnTuple = [1,2]\;
g.numberOfBConnections = 2\;
g.BConnection[1] = NoDistinctionProtoCFLOBDD(l-1)\;
g.BReturnTuples[1] = [1]\;
g.BConnection[2] = g'\;
g.BReturnTuples[2] = [1,2]\;
}
g.numberOfExits = 2\;
\Return RepresentativeGrouping(g)\;
}
}
\end{algorithm}
}
\subsubsection{Not Gate}
\label{Se:NotMatrix}

The not matrix, denoted by $X_2$, flips the elements of the vector to which it is applied.
$X_2$ is defined as follows:
\Omit{
The matrix $X_{2^n}$ for $n$ bits is defined as $X_{2^n} = X_{2^{n-1}} \tensor X_{2^{n-1}}$. The pseudo-code shown in \algref{XMatrix} describes a way to efficiently build a CFLOBDD representing $X_{2^n}$.
The base case, for $l = 1$, returns the representation of
}
\[
  X_2 = \begin{bmatrix}
          0 & 1\\
          1 & 0\\
        \end{bmatrix}.
\]
$X_2$ is similar to $I_2$, but with 0 and 1 swapped.
Hence, the representation of $X_2$ uses the same level-$1$ proto-CFLOBDD used in $I_2$, but has terminal values $[0,1]$ (whereas $I_2$ has $[1,0]$).

\subsubsection{Standard-Basis Vectors}
\label{Se:StandardBasisVectors}
\begin{figure}[tb!]
    \centering
    \includegraphics[scale=0.6]{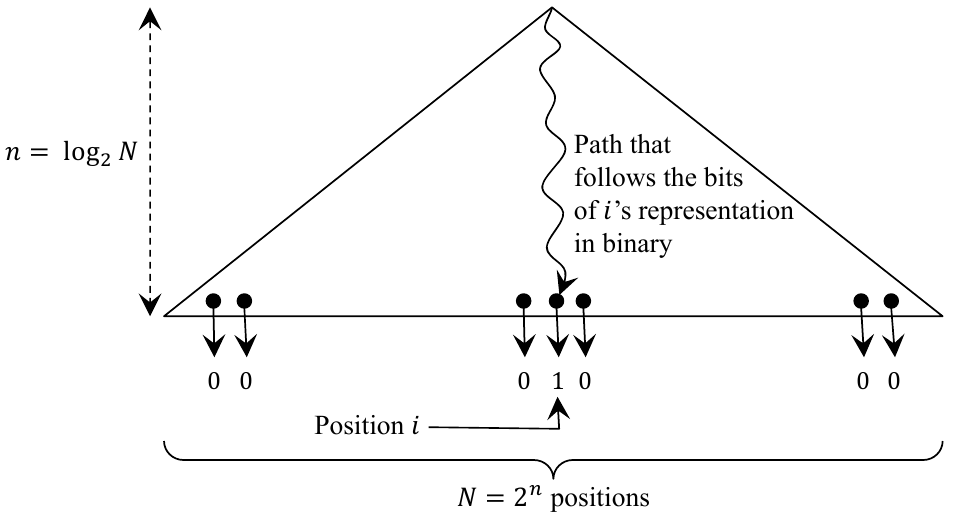}
    \caption{\protect \raggedright 
    One-hot vector as the yield of a decision tree.
    The single occurrence of $1$ is at the leaf indexed by $i$---i.e., at the end of the path from the root that follows the bits of $i$'s representation in binary.
    }
    \label{Fi:StandardBasisVectorAsDecisionTree}
\end{figure}

\begin{algorithm}[bt!]
\caption{Standard Basis Vector \label{Fi:StandardBasisVectorAlgo}}
\SetKwFunction{StandardBasisVectorCFLOBDD}{StandardBasisVectorCFLOBDD}
\SetKwFunction{StandardBasisVectorGrouping}{StandardBasisVectorGrouping}
\SetKwProg{myalg}{Algorithm}{}{end}
\myalg{\StandardBasisVectorCFLOBDD{l, i}}{
\Input{int $l$ - level of the CFLOBDD = $\log{n}$, where $n =$ number of bits; int $i$ - index}
\Output{The CFLOBDD that represents $e_x$ where $x$ = the binary representation of $i$ in $n$ bits}
\Begin{
Grouping g = StandardBasisVectorGrouping(l)\;
ValueTuple valueTuple = (i == 0) ? [1,0] : [0,1] \tcp*[f]{$1^{\textit{st}}$ elem.\ is 0 unless $i$ is 0}\;
\Return RepresentativeCFLOBDD(g, valueTuple)\;
}
}
\setcounter{AlgoLine}{0}
\SetKwProg{myproc}{SubRoutine}{}{end}
\myproc{\StandardBasisVectorGrouping{l, i}}{
\Input{int $l$ - level of the CFLOBDD = $\log{n}$, where $n =$ number of bits; int $i$ - index}
\Output{Grouping g that represents $e_x$ where $x$ = the binary representation of $i$ in $n$ bits. (Exit vertex 2 corresponds to $e_x$ unless $x = 0 \ldots 0$, in which case exit vertex 1 corresponds to $e_x$.)}
\Begin{
\If{l == 0}{
\Return RepresentativeForkGrouping\;
}
InternalGrouping g = new InternalGrouping(l)\;
int higherOrderIndex = i >> (1 << (l - 1))  \tcp*[f]{First half of $x$}\;  \label{Li:HigherOrderIndex}
g.AConnection = StandardBasisVectorGrouping(l-1, higherOrderIndex)\;  \label{Li:SBV:RecursionOne}
g.AReturnTuple = [1,2]\;
g.numberOfBConnections = 2\;
int lowerOrderIndex = i $\&$ ((1 << (1 << (l - 1))) - 1)  \tcp*[f]{Second half of $x$}\;  \label{Li:LowerOrderIndex}
Grouping g' = StandardBasisVectorGrouping(l-1, lowerOrderIndex)\;  \label{Li:SBV:RecursionTwo}
\eIf{higherOrderIndex == 0}{
g.BConnection[1] = g'  \tcp*[f]{Connection 1 = ``on path'' for $x$}\;
g.BReturnTuples[1] = [1,2]\;
g.BConnection[2] = NoDistinctionProtoCFLOBDD(l-1)\;
g.BReturnTuples[2] = (lowerOrderIndex == 0) ? [2] : [1]\;
}{
g.BConnection[1] = NoDistinctionProtoCFLOBDD(l-1)\;
g.BReturnTuples[1] = [1]\;
g.BConnection[2] = g'  \tcp*[f]{Connection 2 = ``on path'' for $x$}\;
g.BReturnTuples[2] = (lowerOrderIndex == 0) ? [2,1] : [1,2]\;
}
g.numberOfExits = 2\;
\Return RepresentativeGrouping(g)\;
}
}
\end{algorithm}

The family of standard-basis vectors are an important set of vectors in quantum algorithms because they represent the basis states.
A standard-basis vector is a one-hot vector---a vector all of whose elements are $0$, except for a single $1$.
In the terminology of formal-language theory, the vector to picture is the \emph{yield} of the decision tree obtained by unfolding the CFLOBDD (see \figref{StandardBasisVectorAsDecisionTree}).
Let a standard-basis vector of size $2^n \times 1$ with its single occurrence of $1$ at position $i$ be denoted by $e_x$, where $x$ is the binary representation of $i$ using $n$ bits.
(The vector $e_{0 \ldots 0}$ is the initial state in many quantum algorithms.)

A representation as a CFLOBDD of a standard-basis vector for one-hot position $i$ ($= x$) can be created in $n$ steps via the pseudo-code given as \algref{StandardBasisVectorAlgo}.
The code is relatively straightforward, with the small complication that $x = 0 \ldots 0$ is a special case that has to be accounted for at every level of recursion of auxiliary function StandardBasisVectorGrouping (as the bit-string $x$ becomes of half-size, quarter-size, etc.---see \lineseqref{HigherOrderIndex}{SBV:RecursionOne} and
\lineseqref{LowerOrderIndex}{SBV:RecursionTwo}).
The invariant for procedure StandardBasisVectorGrouping on input $i$ ($= x$) is that exit vertex 2 always corresponds to $e_x$---unless $x = 0 \ldots 0$, in which case exit vertex 1 corresponds to $e_x$.

The base case is for a $2\times1$ vector, either $e_0 = {[1, 0]}^t$ or $e_1 = {[0, 1]}^t$, depending on the current value of $x$.
$e_0$ would be represented by a ForkGrouping (with distinguished exit vertex $1$, because $x = 0$);
$e_1$ would be represented by a ForkGrouping (with distinguished exit vertex $2$, because $x = 1$).

\subsubsection{Controlled-NOT Gate}
\label{Se:CNOTMatrix}

\begin{figure}[tb!]
    \centering
    \begin{subfigure}[t]{0.3\linewidth}
      \centering
      \includegraphics[width=\linewidth]{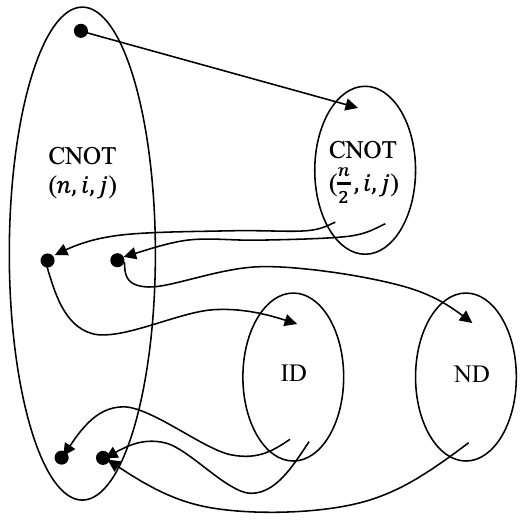}
      \caption{Case 1: Both $i$ and $j$ fall in A}
    \end{subfigure}
    \hspace{2ex}
    \begin{subfigure}[t]{0.3\linewidth}
      \centering
      \includegraphics[width=\linewidth]{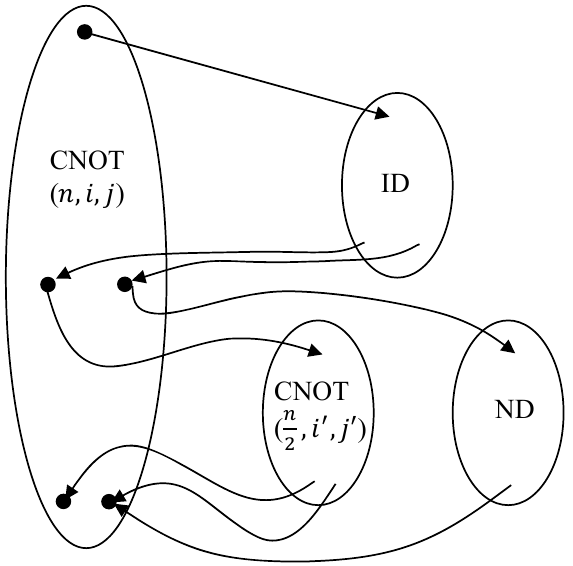}
      \caption{Case 2: Both $i$ and $j$ fall in B}
    \end{subfigure}
    \hspace{2ex}
    \begin{subfigure}[t]{0.3\linewidth}
      \centering
      \includegraphics[width=\linewidth]{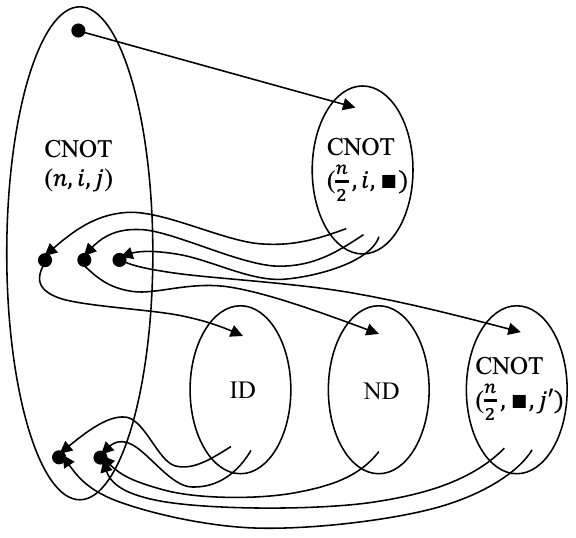}
      \caption{\protect \raggedright Case 3: $i$ in A and $j$ in B}
    \end{subfigure}
    \begin{subfigure}[t]{0.3\linewidth}
      \centering
      \includegraphics[width=\linewidth]{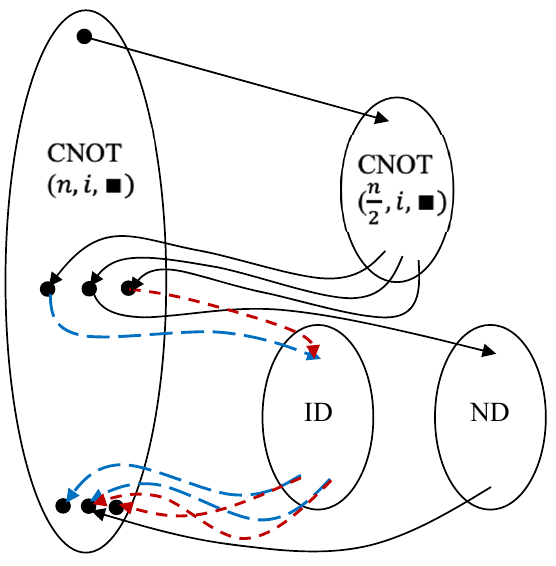}
      \caption{\protect \raggedright Case 4: $i$ in A and $j$ not in current grouping's range.}
    \end{subfigure}
    \hspace{2ex}
    \begin{subfigure}[t]{0.3\linewidth}
      \centering
      \includegraphics[width=\linewidth]{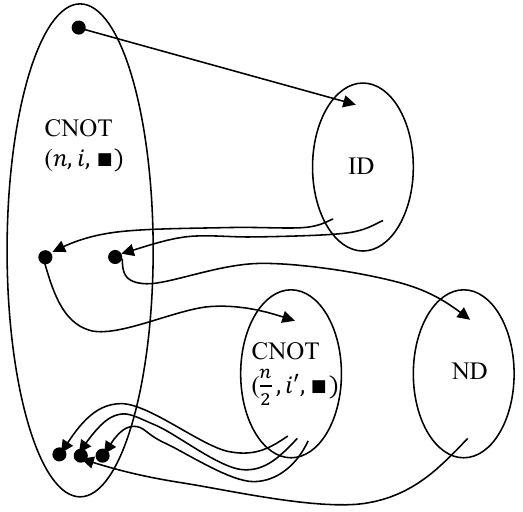}
      \caption{\protect \raggedright Case 5: $i$ in B and $j$ not in current grouping's range.}
    \end{subfigure}
    \hspace{2ex}
    \begin{subfigure}[t]{0.3\linewidth}
      \centering
      \includegraphics[width=\linewidth]{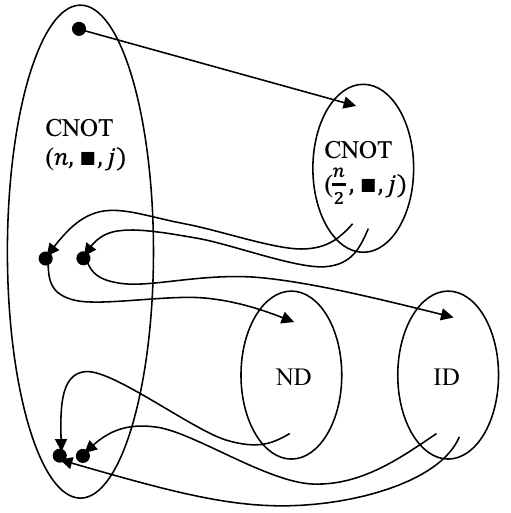}
      \caption{\protect \raggedright Case 6: $j$ in A and $i$ not in current grouping's range.}
    \end{subfigure}
    \begin{subfigure}[t]{0.3\linewidth}
      \centering
      \includegraphics[width=\linewidth]{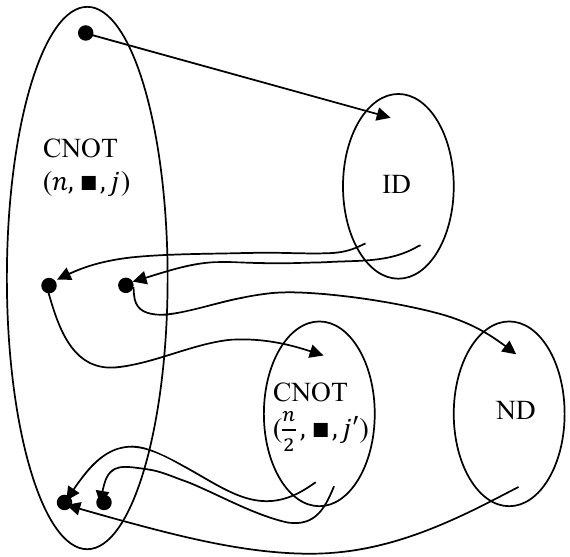}
      \caption{\protect \raggedright Case 7: $j$ in B and $i$ not in current grouping's range.}
    \end{subfigure}
    \hspace{3ex}
    \begin{subfigure}[t]{0.3\linewidth}
      \centering
      \includegraphics[width=.68\linewidth]{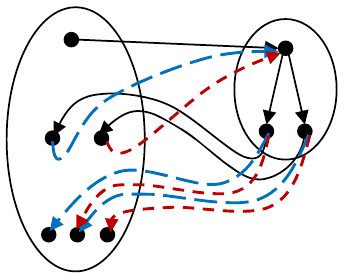}
      \caption{\protect \raggedright Base case 1: $\CNOT(n, 1, \blacksquare)$ at level $1$ interprets the control-bit.
      }
    \end{subfigure}
    \hspace{3ex}
    \begin{subfigure}[t]{0.3\linewidth}
      \centering
      \includegraphics[width=.59\linewidth]{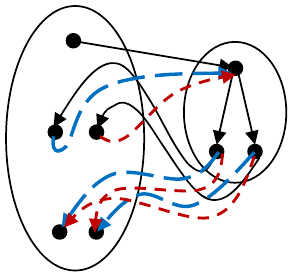}
      \caption{\protect \raggedright Base case 2: $\CNOT(n, \blacksquare, 1)$ at level $1$ interprets the controlled-bit.}
    \end{subfigure}
    \caption{\protect \raggedright 
    The different cases of the CNOT construction.
    The text in each grouping denotes the function represented by the grouping.
    ID denotes {\tt IdentityMatrixGrouping};
    ND denotes a {\tt NoDistinctionProtoCFLOBDD} (used here for an all-zero matrix).
    CNOT takes 3 arguments: $n$ for the number of bits in this proto-CFLOBDD;
    $i$ for the control-bit, and $j$ for the controlled-bit, where $0 \leq i < j < n$.
    $i'$ and $j'$ denote bit indices adjusted according to the level $l$: $i' = i - 2^{l-1}$; $j' = j - 2^{l-1}$.
    A black square indicates that a particular index is outside the grouping's index range.
    Figures (h) and (i) show the two base cases at level $1$, for $\CNOT(n, 1, \blacksquare)$ and $\CNOT(n, \blacksquare, 1)$, respectively.
    }
    \label{Fi:CNOT}
\end{figure}

A Controlled-NOT (CNOT) is an operation involving two index bits:
one bit is the control-bit and the other is the controlled-bit;
in the output, the value of the controlled-bit is flipped if the
control-bit is `1'.
The matrix that implements this behavior for two bits, denoted by $\CNOT_2$, where the first bit is the control-bit and the second bit is the controlled-bit, is as follows:
\begin{equation}
  \label{Eq:CNOT:BaseCase}
  \CNOT_2 \; = \begin{bNiceArray}{cccc}[first-col,first-row]
                 & 00 & 01 & 10 & 11 \\
              00 &  1 &  0 &  0 &  0 \\
              01 &  0 &  1 &  0 &  0 \\
              10 &  0 &  0 &  0 &  1 \\
              11 &  0 &  0 &  1 &  0 \\
            \end{bNiceArray}            
          = \begin{bmatrix}
                  I_2 & O_2 \\
                  O_2 & X_2
                \end{bmatrix}
\end{equation}
Note that when the first bit is $1$ and the second bit is flipped, the $2 \times 2$ block in the lower right is the not matrix $X_2$.
($O_2$ denotes the $2 \times 2$ matrix of zeros.)

For larger numbers of bits, the situation is more complex.
With four bits, and the second bit controlling the third, we have

\[
\CNOT(4,2,3) =
\begin{bmatrix}
I_2 & O_2 & O_2 & O_2 & O_2 & O_2 & O_2 & O_2 \\
O_2 & I_2 & O_2 & O_2 & O_2 & O_2 & O_2 & O_2 \\
O_2 & O_2 & O_2 & I_2 & O_2 & O_2 & O_2 & O_2 \\
O_2 & O_2 & I_2 & O_2 & O_2 & O_2 & O_2 & O_2 \\
O_2 & O_2 & O_2 & O_2 & I_2 & O_2 & O_2 & O_2 \\
O_2 & O_2 & O_2 & O_2 & O_2 & I_2 & O_2 & O_2 \\
O_2 & O_2 & O_2 & O_2 & O_2 & O_2 & O_2 & I_2 \\
O_2 & O_2 & O_2 & O_2 & O_2 & O_2 & I_2 & O_2 
\end{bmatrix}
= I_2 \tensor \CNOT_2 \tensor I_2.
\]
However, with four bits and the first bit controlling the third, we have

\newpage
~\newpage
~\newpage

\[
\CNOT(4,1,3) =
\begin{bmatrix}
I_2 & O_2 & O_2 & O_2 & O_2 & O_2 & O_2 & O_2 \\
O_2 & I_2 & O_2 & O_2 & O_2 & O_2 & O_2 & O_2 \\
O_2 & O_2 & I_2 & O_2 & O_2 & O_2 & O_2 & O_2 \\
O_2 & O_2 & O_2 & I_2 & O_2 & O_2 & O_2 & O_2 \\
O_2 & O_2 & O_2 & O_2 & O_2 & X_2 & O_2 & O_2 \\
O_2 & O_2 & O_2 & O_2 & X_2 & O_2 & O_2 & O_2 \\
O_2 & O_2 & O_2 & O_2 & O_2 & O_2 & O_2 & X_2 \\
O_2 & O_2 & O_2 & O_2 & O_2 & O_2 & X_2 & O_2 
\end{bmatrix}
\]
for which there is no clean expression in terms of $\tensor$.

\Omit{
\begin{multline*}
\CNOT^{1,3}_4 = \\
\kbordermatrix{
     & 0000 & 0001 & 0010 & 0011 & 0100 & 0101 & 0110 & 0111 & 1000 & 1001 & 1010 & 1011 & 1100 & 1101 & 1110 & 1111 \\
0000 &    1 &    0 &    0 &    0 &    0 &    0 &    0 &    0 &    0 &    0 &    0 &    0 &    0 &    0 &    0 &    0\\
0001 &    0 &    1 &    0 &    0 &    0 &    0 &    0 &    0 &    0 &    0 &    0 &    0 &    0 &    0 &    0 &    0\\
0010 &    0 &    0 &    1 &    0 &    0 &    0 &    0 &    0 &    0 &    0 &    0 &    0 &    0 &    0 &    0 &    0\\
0011 &    0 &    0 &    0 &    1 &    0 &    0 &    0 &    0 &    0 &    0 &    0 &    0 &    0 &    0 &    0 &    0\\
0100 &    0 &    0 &    0 &    0 &    1 &    0 &    0 &    0 &    0 &    0 &    0 &    0 &    0 &    0 &    0 &    0\\
0101 &    0 &    0 &    0 &    0 &    0 &    1 &    0 &    0 &    0 &    0 &    0 &    0 &    0 &    0 &    0 &    0\\
0110 &    0 &    0 &    0 &    0 &    0 &    0 &    1 &    0 &    0 &    0 &    0 &    0 &    0 &    0 &    0 &    0\\
0111 &    0 &    0 &    0 &    0 &    0 &    0 &    0 &    1 &    0 &    0 &    0 &    0 &    0 &    0 &    0 &    0\\
1000 &    0 &    0 &    0 &    0 &    0 &    0 &    0 &    0 &    0 &    0 &    0 &    1 &    0 &    0 &    0 &    0\\
1001 &    0 &    0 &    0 &    0 &    0 &    0 &    0 &    0 &    0 &    0 &    1 &    0 &    0 &    0 &    0 &    0\\
1010 &    0 &    0 &    0 &    0 &    0 &    0 &    0 &    0 &    0 &    1 &    0 &    0 &    0 &    0 &    0 &    0\\
1011 &    0 &    0 &    0 &    0 &    0 &    0 &    0 &    0 &    1 &    0 &    0 &    0 &    0 &    0 &    0 &    0\\
1100 &    0 &    0 &    0 &    0 &    0 &    0 &    0 &    0 &    0 &    0 &    0 &    0 &    0 &    0 &    0 &    1\\
1101 &    0 &    0 &    0 &    0 &    0 &    0 &    0 &    0 &    0 &    0 &    0 &    0 &    0 &    0 &    1 &    0\\
1110 &    0 &    0 &    0 &    0 &    0 &    0 &    0 &    0 &    0 &    0 &    0 &    0 &    0 &    1 &    0 &    0\\
1111 &    0 &    0 &    0 &    0 &    0 &    0 &    0 &    0 &    0 &    0 &    0 &    0 &    1 &    0 &    0 &    0\\
} \\
 = \begin{bmatrix}
I_2 & O_2 & O_2 & O_2 & O_2 & O_2 & O_2 & O_2 \\
O_2 & I_2 & O_2 & O_2 & O_2 & O_2 & O_2 & O_2 \\
O_2 & O_2 & I_2 & O_2 & O_2 & O_2 & O_2 & O_2 \\
O_2 & O_2 & O_2 & I_2 & O_2 & O_2 & O_2 & O_2 \\
O_2 & O_2 & O_2 & O_2 & O_2 & X_2 & O_2 & O_2 \\
O_2 & O_2 & O_2 & O_2 & X_2 & O_2 & O_2 & O_2 \\
O_2 & O_2 & O_2 & O_2 & O_2 & O_2 & O_2 & X_2 \\
O_2 & O_2 & O_2 & O_2 & O_2 & O_2 & X_2 & O_2 
\end{bmatrix}
\end{multline*}
}

Fortunately, we can create a double-exponentially compressed CFLOBDD representation of $\CNOT(n, i, j)$, $1 \leq i, j \leq n$, $i \neq j$.
Suppose that the control-bit is bit $i$, and the controlled-bit is bit $j$;
then for a grouping $g$, there are eight cases to consider (here we discuss the cases for $i < j$),
\footnote{
  The general case---$\CNOT(n, i, j)$, $1 \leq i, j \leq n$, $i \neq j$, where $j$ is allowed to be less than $i$---would be similar, but with some additional cases.
}
which are depicted in \figref{CNOT}.
The key to understanding \figref{CNOT} is that the construction
maintains the invariant shown in \eqref{CNOTExitVertexInvariant} on the exit vertices of the different kinds of groupings.

\begin{equation}
  \label{Eq:CNOTExitVertexInvariant}
  \begin{array}{r|c|ccc}
              & \multicolumn{1}{c|}{\multirow{2}{*}{\text{Role}}} & \multicolumn{3}{c}{\text{Significance of exit vertex}}  \\
    \cline{3-5}
              &             & 1 & 2 & 3 \\
    \hline
    \multirow{4}{*}{Proto-CFLOBDD} & \CNOT(n,i,j)            & \text{on-path}  & \text{off-path} & N/A \\
                                   & \CNOT(n,i,\blacksquare) & \text{on-path}  & \text{off-path} & \text{Controlled-bit is to be flipped} \\
                                   & \CNOT(n,\blacksquare,j) & \text{off-path} & \text{on-path}  & N/A \\
                                   & \ID                     & \text{on-path}  & \text{off-path} & N/A \\
    \hline
    CFLOBDD                        & \text{Top level}        & 1          & 0          & N/A \\
    \hline
  \end{array}
\end{equation}

Here, ``on-path'' means that the exit occurs on a matched-path that can be continued to the top-level terminal value $1$;
``off-path'' means that it will only be used to reach the top-level terminal value $0$.
For instance, in \figref{CNOT}(a)--(g), each occurrence of ND (a NoDistinctionProtoCFLOBDD) is attached to the middle vertex of a grouping $g$ that is reached from an A-connection's off-path exit.
The NoDistinctionProtoCFLOBDD has one exit vertex for which the return edge connects it to the off-path exit for $g$.
\begin{enumerate}
  \item
    Both $i$ and $j$ fall in the A-connection of $g$.
    \figref{CNOT}(a) shows the structural representation for this scenario.
    The bits in $g$'s A-connection handle the CNOT operation.
    The bits in $g$'s B-connection are not involved, and hence the ``on-path'' middle vertex is connected to Identity, and the ``off-path'' middle vertex is connected to a NoDistinctionProtoCFLOBDD (for an all-zero sub-matrix).
  \item
    Both $i$ and $j$ fall in $g$'s B-connection range.
    This scenario is depicted in \figref{CNOT}(b).
    The bits in $g$'s A-connection do not affect the CNOT operation, and hence we have an Identity matrix in the A-connection.
    The B-connection handles the relationship between the control-bit and controlled-bit through a CNOT structure of the form $\CNOT(\frac{n}{2}, i', j')$.
    $i'$ and $j'$ denote bit indices adjusted according to the level $l$: $i' = i - 2^{l-1}$; $j' = j - 2^{l-1}$.
  \item
    $i$ lies in $g$'s A-connection range and $j$ in $g$'s B-connection range.
    As shown in \figref{CNOT}(c), $g$'s A-connection handles the control-bit part, and hence $g$ has 3 middle vertices.
    The first is the ``on-path'' case when the control-bit is $0$, so no interpretation of the controlled-bit is needed;
    the second is the ``off-path'' case;
    and the third represents the information ``the control-bit was activated.''
    Consequently, the B-connection groupings correspond to the respective three cases:
    the first has the Identity matrix;
    the second a NoDistinctionProtoCFLOBDD (for an all-zero sub-matrix);
    and the third is a CNOT structure of the form $\CNOT(\frac{n}{2}, \blacksquare, j')$.
  \item
    $i$ lies in $g$'s A-connection range, but $j$ does not fall in $g$'s range (\figref{CNOT}(d)).
    This case is similar to \figref{CNOT}(c), in that $g$'s A-connection handles the control-bit part, and hence $g$ has 3 middle vertices.
    The difference comes in the B-connections, which propagate the ``states'' of the middle vertices to $g$'s 3 exit vertices (using the Identity matrix for both ``on-path'' and ``the control-bit was activated'' and a NoDistinctionProtoCFLOBDD for ``off-path'').
  \item
    $i$ lies in $g$'s B-connection range, but $j$ does not fall in $g$'s range.
    As shown in \figref{CNOT}(e), the bits in $g$'s A-connection are not involved, and hence the A-connection is the Identity matrix.
    For the ``on-path'' middle vertex, $g$'s B-connection is to a CNOT structure of the form $\CNOT(\frac{n}{2}, i', \blacksquare)$.
    For the ``off-path'' middle vertex, $g$'s B-connection is to a
    NoDistinctionProtoCFLOBDD.
  \item
    $j$ lies in $g$'s A-connection range, but $i$ does not fall in $g$'s range (\figref{CNOT}(f)).
    $g$'s A-connection handles the part of the CNOT operation for flipping the controlled-bit and has 2 exit vertices.
    The bits of $g$'s B-connection are not involved;
    hence $g$'s ``off-path'' middle vertex is a NoDistinctionProtoCFLOBDD (for an all-zero sub-matrix) and the ``on-path'' middle vertex is the Identity matrix.
  \item
    $j$ lies in $g$'s B-connection range, but $i$ does not fall in $g$'s range (\figref{CNOT}(g)).
    This case is similar to \figref{CNOT}(f), except that the part of the CNOT operation for flipping the controlled-bit is displaced to the ``on-path'' B-connection (``on-path'' from the Identity matrix in $g$'s A-connection).
    The ``off-path'' B-connection is a NoDistinctionProtoCFLOBDD (for an all-zero sub-matrix).
  \item
    There are two base cases of the recursive construction:
    \begin{itemize}
      \item
        Calls of the form $\CNOT(n, 1, \blacksquare)$ at level $1$ (\figref{CNOT}(h)).
        The grouping $g$ for this case represents a sub-matrix similar to the identity matrix $I_2$, except that the second return edge of the second B-connection of $g$ maps to a third (new) exit vertex of $g$ (instead of mapping to the first exit vertex of $g$, as is the case for $I_2$).
        The third exit represents the information ``the control-bit was activated'' (i.e., the control-bit was $1$), and hence, as indicated in \eqref{CNOTExitVertexInvariant}, for a matched-path to represent an element of the CNOT relation, there is an obligation to flip the value of the controlled-bit (elsewhere in the CFLOBDD).
      \item
        Calls of the form $\CNOT(n, \blacksquare, 1)$ at level $1$ (\figref{CNOT}(i)).
        Note from \figref{CNOT}(c) that the initial construction of the form $\CNOT(n, \blacksquare, j)$ (at some higher level) is attached to the third middle vertex of the parent grouping, which, in turn, has a return edge from the third exit of the parent grouping's A-connection $\CNOT(n, i, \blacksquare)$.
        Thus, $\CNOT(n, \blacksquare, j)$ only occurs in a path context in which it is known that the control-bit was activated.

        \hspace{1.5ex}
        The $\CNOT(n, \blacksquare, 1)$ grouping at level 1 interprets the controlled-bit.
        The proto-CFLOBDD used in this case is the same one found in both $I_2$ and $X_2$.
        Inspection of \figref{CNOT}(c), (f), and (g) reveals that such a proto-CFLOBDD is always connected to a level-$2$ grouping with return edges to middle or exit vertices that represent the ``state-tuple'' $[\text{off-path},\text{on-path}]$.
        Consequently, each occurrence of a level-$1$ $\CNOT(n, \blacksquare, 1)$ proto-CFLOBDD acts like the not matrix $X_2$.
    \end{itemize}
\end{enumerate}
Pseudo-code for the full algorithm is given in Appendix~\sectref{CNOTConstructionAlgo}.

\paragraph{Complexity}
Every CFLOBDD that represents a CNOT relation consists of only a constant number of kinds of groupings:
as shown in \figref{CNOT}, the representation of CNOT uses at most two kinds of CNOT groupings at level 1 and seven kinds of CNOT groupings at levels $\geq 2$, as well as proto-CFLOBDDs obtained from IdentityMatrixGrouping and NoDistinctionProtoCFLOBDD.
Because at each level there are a constant number of groupings, each of constant size, the CFLOBDD representation of CNOT exhibits double-exponential compression compared to the size of the decision tree.\footnote{
  \label{Footnote:CNOTGeneralCase}
  The general case---$\CNOT(n, i, j)$, $1 \leq i, j \leq n$, $i \neq j$, where $j$ is allowed to be less than $i$---would still have at each level a constant number of kinds of groupings, each of constant size.
  Consequently, the general case also exhibits double-exponential compression compared to the size of the decision tree.
}

\paragraph{A Special-Case Construction}
In some quantum algorithms, such as Simon's Algorithm, the bits are in two groups, $x_i$ and $y_i$, $1 \leq i \leq n$, and the CNOT operation is performed for each pair $(x_i,y_i)$.
The same net result can be produced by creating (and then applying) a representation for the compound operation $\Pi_{i=1}^{n} \CNOT(2n, i, n+i)$.
This expression involves $n$ matrix-multiplication operations.
Unfortunately, if the bit ordering is $\langle x_1, \cdots, x_{n}, y_1, \cdots, y_{n} \rangle$, the size of the resulting CFLOBDD is exponential in $n$.
In essence, the CFLOBDD's top-level A-connection has to ``memorize'' the exponential number of different possible combinations of $x_i$ values so that the information can be correlated with the $y_i$ values in the B-connection.

\begin{figure}[tb!]
  \centering
  \begin{tabular}{cc}
    \includegraphics[width=.55\linewidth]{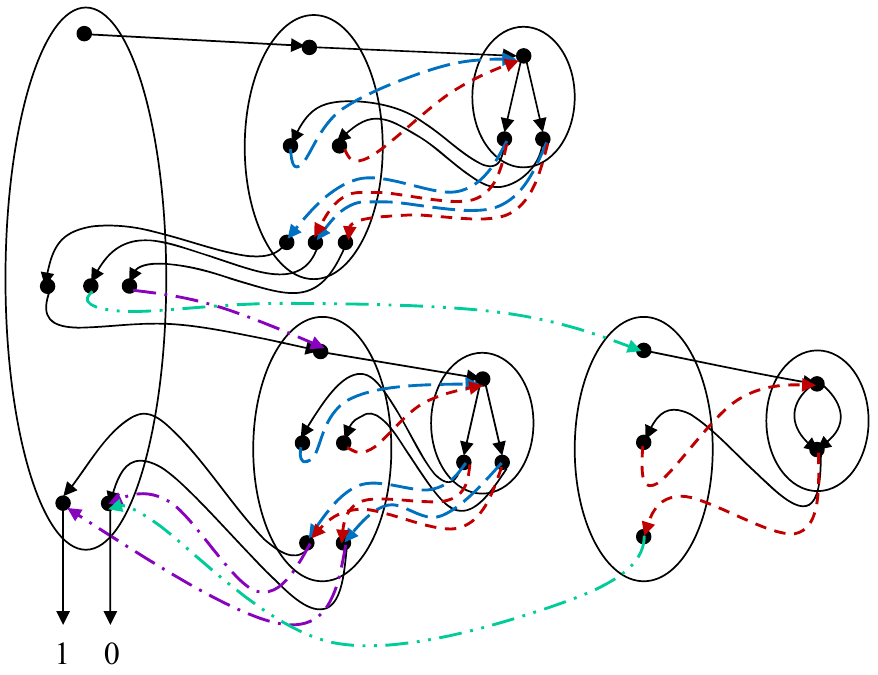}
    &
    \includegraphics[width=.45\linewidth]{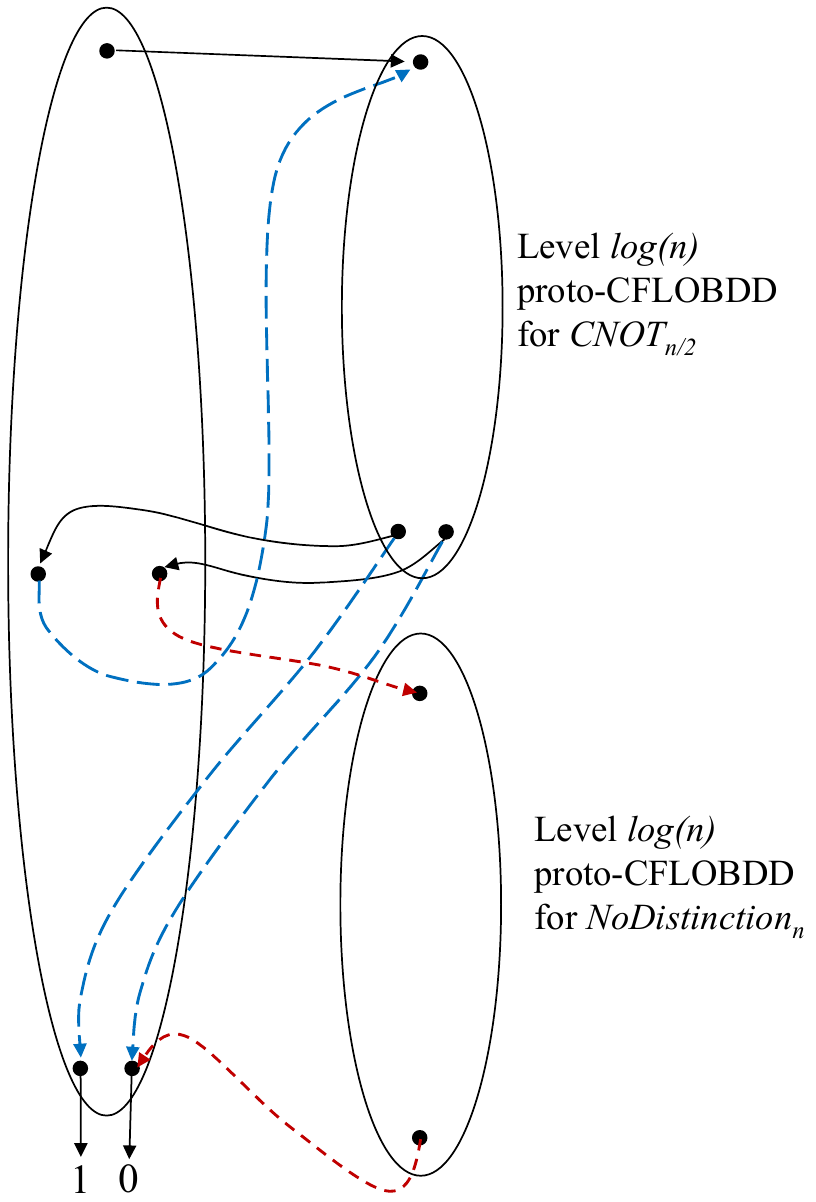}
    \\
    (a) & (b)
  \end{tabular}
  \caption{\protect \raggedright 
    CFLOBDD representation of $\CNOT_n$ = $\CNOT_2^{\tensor n}$ with variable ordering $\langle x_1, y_1, x_2, y_2, \cdots, x_{n}, y_{n} \rangle$.
    (a) Base case (level $2$): $\CNOT_2$ for 2 bits, with the first bit as the control-bit and second bit as the controlled-bit.
    (b) Grouping structure used for levels greater than $2$.  
  }
  \label{Fi:CNOT_K}
\end{figure}

Fortunately, changing to the interleaved-variable ordering $\langle x_1, y_1, x_2, y_2, \cdots, x_{n}, y_{n} \rangle$ leads to an efficient CFLOBDD representation of $\Pi_{i=1}^{n}(\CNOT(2n, i, n+i)$.
With the interleaved-variable ordering, after reassigning the index numbers $[1 \ldots 2n]$ to the variables in the new order, $\CNOT_n$ changes to the simpler expression
\begin{equation}
  \label{Eq:CNOT_n_1}
  \CNOT_n = \Pi_{\underset{i+=2}{i=1}}^{2n}\CNOT(2n, i, i+1).
\end{equation}
In \eqref{CNOT_n_1}, each CNOT operation is between \emph{adjacent} bits, and hence \eqref{CNOT_n_1} can be rewritten as
\begin{equation}
  \label{Eq:CNOT_n_2}
    \CNOT_n = \bigotimes_{{i=1}}^{n}\CNOT_2 = \CNOT_2^{\tensor n}
\end{equation}
Moreover, we do not even have to perform the $n$ explicit Kronecker-product operations of \eqref{CNOT_n_2} (or even $\log{n}$ Kronecker products) because $\CNOT_n$ can be constructed directly (cf.\ item \ref{It:Deviate:GoOutsideQuantumModel} in \sectref{AdvantagesOfSimulation}).
\figref{CNOT_K} depicts the CFLOBDD representation of $\CNOT_n$ = $\CNOT_2^{\tensor n}$, which has $\log{n} + 1$ levels and $2n$ Boolean variables, and the pseudo-code for the construction algorithm can be found in~\sectref{CNOTConstructionAlgo}.

\subsubsection{Quantum Fourier Transform}
\label{Se:QuantumFourierTransform}

The Quantum Fourier Transform (QFT) is a linear transformation that is somewhat similar to the discrete Fourier transform.
In matrix form, it looks like\footnote{
  In the description of QFT, we vary from the subscript convention explained at the beginning of \sectref{quantum-algos}.
  Here, a gate matrix acting on $n$ qubits (which is of size $2^n\times 2^n$) is denoted using a subscript $N$, where $N$ denotes $2^n$.
}
\[
  \QFT_N =
  \frac{1}{\sqrt{N}}
  \left[\begin{array}{cccccc}
    1      &   1          &   1             &   1             & \ldots &   1 \\
    1      & \omega       & \omega^2        & \omega^3        & \ldots & \omega^{N-1} \\
    1      & \omega^2     & \omega^4        & \omega^6        & \ldots & \omega^{2(N-1)} \\
    1      & \omega^3     & \omega^6        & \omega^9        & \ldots & \omega^{3(N-1)} \\
    \vdots & \vdots       & \vdots          & \vdots          &        & \vdots    \\
    1      & \omega^{N-1} & \omega^{2(N-1)} & \omega^{3(N-1)} & \ldots & \omega^{(N-1)(N-1)}
  \end{array}\right]
\]
where $\omega = e^{{2\pi i}/{N}}$.

We use the algorithm presented in~\cite{nielsen2002quantum} for the application of $\QFT$ to a state vector. More formally, we use the following steps to apply $\QFT$ for $n$ qubit vector:
\begin{enumerate}
    \item Start with a random vector $e_x$. (We use a randomly chosen basis vector as input in our experiments).
    \item Apply the Hadamard matrix $H_2$ to the $i^{th}$ qubit, where $i \rightarrow \{n \dots 1\}$.
    \item For every $i$, apply a Controlled-Phase Gate $CP$ from qubit $j \rightarrow \{1..i-1\}$ to $i$ with a phase $= \dfrac{\pi}{2^{i-j}}$.
    \item Finally, apply $n/2$ Swap Gates between $i$ and $n-i$, where $i \rightarrow \{1..n/2\}$.
\end{enumerate}

The construction of Controlled-Phase Gate and Swap Gate is given in Appendix~\sectref{QuantumGates}.

\subsection{Quantum Algorithms}
\label{Se:QuantumAlgorithms}

In this section, we briefly describe the quantum algorithms that are used in the experiments in \sectref{ResearchQuestionTwo}.
The ingredients of the algorithms are qubits, quantum gates, and measurements of qubits.
Each of the states or operations can be viewed as algebraic operations on vectors and matrices.
The mapping between the various aspects of the quantum algorithms and their corresponding algebraic operations is as follows:

\begin{itemize}
  \item
    Sequence of $n$ qubits: Unit vector of dimension $2^n \times 1$ (called a state vector)
  \item
    Quantum gate: Unitary matrix
  \item
    Augmenting a sequence of qubits with additional qubits: Kronecker product of vectors
  \item
    Application of a quantum gate to a sequence of qubits: Matrix-vector multiplication
  \item
    Measurement of qubits: Sampling from a distribution obtained from the vector
\end{itemize}

We now discuss quantum algorithms as algebraic operations.

\subsubsection{GHZ Algorithm}
\label{Se:GHZAlgorithm}

The Greenberger–Horne–Zeilinger (GHZ) state is the following (``entangled'')  state vector for $3$ qubits (i.e., a unit vector of size $8 \times 1$):
\[
  \GHZ_3 = \dfrac{e_{000} + e_{111}}{\sqrt{2}}
\]
We extend the concept to $n$ qubits by defining\footnote{
  Recall that we use
  $e_{0^n}$ and $e_{1^n}$ denote the standard-basis vectors
  $e_{\underbrace{0 \ldots 0}_{n~\textit{copies}}}$ and
  $e_{\underbrace{1 \ldots 1}_{n~\textit{copies}}}$, respectively.
}
\[
  \GHZ_n = \dfrac{e_{0^n} + e_{1^n}}{\sqrt{2}}
\]

We have used the algorithm given by Yu and Palsberg \cite{yu2021quantum} for obtaining the GHZ state for $n$ bits, which is as follows:
\begin{enumerate}
  \item
    \label{It:GHZ:StepOne}
    Prepare an initial state $e_{0^{n+1}}$ (the standard-basis vector of $n+1$ bits with a $1$ in the first position and $0$s elsewhere).
  \item
    \label{It:GHZ:StepTwo}
    Apply the Hadamard matrix $H_{2n}$ on the first $n$ bits and the Not matrix $X_2$ on the $(n+1)^{\textit{st}}$ bit.
  \item
    \label{It:GHZ:StepThree}
    Apply $n$ CNOTs of the form $\CNOT(n+1, i, n+1)$, where $i \rightarrow \{1..n\}$.
  \item
    \label{It:GHZ:StepFour}
    Apply the Hadamard matrix $H_{2(n+1)}$ ($= H_{2n} \tensor H_{2}$) on all $n+1$ bits.
  \item
    \label{It:GHZ:StepFive}
    Measure the first $n$ bits to obtain a string of $n$ $0$s or $n$ $1$s.
\end{enumerate}

Steps (\ref{It:GHZ:StepOne})--(\ref{It:GHZ:StepFour}) can be expressed in matrix-vector notation as follows:
\begin{equation}
  \label{Eq:GHZ}
  (H_{2n} \tensor H_{2}) (\Pi_{i=1}^{n} \CNOT(n+1, i, n+1)) (H_{2n} \tensor X_{2})(e_{0^{n+1}})
\end{equation}

\eqref{GHZ} is expressed in a way that uses vectors indexed by $n+1$ Boolean variables, and matrices indexed by $2n+2$ Boolean variables.
Whereas a BDD implementation can directly encode \eqref{GHZ}, CFLOBDDs are a representation of functions in which the number of Boolean variables must be a power of $2$.
For this reason, in a CFLOBDD implemenation of $\GHZ$, vectors and matrices are augmented so that vectors have $n-1$ dummy index variables and matrices have $2n-2$ dummy index variables.
Taking into account these dummy variables, what is actually computed is the following:
\begin{equation}
  \label{Eq:GHZForCFLOBDDs}
  (H_{2n} \tensor H_{2} \tensor I^{\tensor {n-1}}) (\Pi_{i=1}^{n} \CNOT(n+1, i, n+1)) (H_{2n} \tensor X_{2} \tensor I^{\tensor {n-1}})(e_{0^{2n}})
\end{equation}

By properties of Kronecker product,
\[
  (H_{2n} \tensor X_{2} \tensor I^{\tensor {n-1}}) (e_{0^{2n}})
  =
  (H_{2n} \times e_{0^n}) \tensor (X_{2} \times e_{0}) \tensor (I^{\tensor {n-1}} \times e_{0^{n-1}}).
\]
Because the matrix-vector product $H_{2^n} \times e_{0^n}$ results in a vector consisting of all ones, we can avoid performing an explicit multiplication and instead directly create a CFLOBDD whose top-level grouping is a {\tt NoDistinctionProtoCFLOBDD} and whose terminal value is $\frac{1}{\sqrt{2^n}}$.
In the implementation of these algorithms, a matrix-vector multiplication is implemented by first converting the vector $2^k\times 1$ to a matrix of size $2^k\times 2^k$ by padding with zeros (\sectref{VectorToMatrixConversion}), and then performing matrix multiplication (\sectref{matrix-mult}).

In step (\ref{It:GHZ:StepFive}), the terminal values (which represent amplitudes) are squared, and then the bit-strings of indices are sampled from the CFLOBDD based on the values of the squared amplitudes (which represent probabilities), as explained in \sectref{PathCountingAndSampling}.

\subsubsection{Bernstein-Vazirani Algorithm}
\label{Se:BVAlgorithm}

The problem that the Bernstein-Vazirani (BV) algorithm solves is as follows:
\begin{quote}
    Given an oracle that implements a function $f: \{0,1\}^n \rightarrow \{0,1\}$ in which $f(x)$ is promised to be the dot product, mod $2$, between $x$ and a secret string $s \in \{0,1\}^n$---i.e., $f(x) = x_1\cdot s_1 \oplus x_2\cdot s_2 \oplus \cdots \oplus x_n \cdot s_n$---find $s$.
\end{quote}

Given an oracle $U_{2(n+1)}$ that implements $f$, such that $U(xy) = x(f(x)\xor y)$, the BV algorithm is as follows:
\begin{enumerate}
  \item
    \label{It:BV:StepOne}
    Prepare an initial state $e_{0^{n+1}}$ (the standard-basis vector of $n+1$ bits with a $1$ in the first position and $0$s elsewhere).
  \item
    \label{It:BV:StepTwo}
    Apply the Hadamard matrix $H_{2n}$ on the first $n$ bits.
  \item
    \label{It:BV:StepThree}
    Apply the oracle $U_{2(n+1)}$ to the current state.
  \item
    \label{It:BV:StepFour}
    Finally, apply the Hadamard matrix $H_{2n}$ on the first $n$ bits.
  \item
    \label{It:BV:StepFive}
    Measure the first $n$ bits to obtain the string $s$.
\end{enumerate}

Steps (\ref{It:BV:StepOne})--(\ref{It:BV:StepFour}) of the algorithm can be expressed as
\begin{equation}
  \label{Eq:BV}
  (H_{2n} \tensor I_{2})(U_{2(n+1)})(H_{2n} \tensor I_{2})(e_{0^{n+1}})
\end{equation}

As in \sectref{GHZAlgorithm}, to implement \eqref{BV} with CFLOBDDs, we introduce dummy index variables so that the total number of index variables is a power of $2$.
Also, in the term $(H_{2n} \tensor I_{2})(e_{0^{n+1}})$ $= (H_{2n} \times e_{0^n}) \tensor (I_{2} \times e_0)$, the multiplication $(H_{2n} \times e_{0^n})$ can be avoided by directly creating a CFLOBDD whose top-level grouping is a {\tt NoDistinctionProtoCFLOBDD} and whose terminal value is $\frac{1}{\sqrt{2^n}}$.

\subsubsection{Deutsch–Jozsa algorithm}
\label{Se:DJAlgorithm}

The algorithm solves the following problem:
\begin{quote}
    Given an oracle that implements a function $f: \{0,1\}^n \rightarrow \{0,1\}$, where $f$ is promised to be either a constant function (0 on all inputs or 1 on all inputs) or a balanced function (returns 1 for half of the input domain and 0 for the other half), determine if $f$ is balanced or constant.
\end{quote}
The oracle takes the form of a matrix $U_{2(n+1)}$, for which $U(xy) = x(f(x) \xor y)$.
The steps of the Deutsch–Jozsa (DJ) algorithm are as follows:
\begin{enumerate}
    \item 
    \label{It:DJ:StepOne}
    Prepare an initial state $e_{0^n 1}$ of $n+1$ bits (the standard-basis vector of $n+1$ bits with a $1$ in the second position and $0$s elsewhere).
    \item 
    \label{It:DJ:StepTwo}
    Apply the Hadamard matrix $H_{2n}\tensor H_2$ on first $n+1$ bits.
    \item 
    \label{It:DJ:StepThree}
    Apply the oracle $U_{2(n+1)}$ to the current state.
    \item 
    \label{It:DJ:StepFour}
    Finally, apply the Hadamard matrix $H_{2n}$ to the first $n$ bits.
    \item 
    \label{It:DJ:StepFive}
    Measure the first $n$ bits to obtain the string $s$. If $s = e_{0^n}$, then $f$ is constant; otherwise, $f$ is balanced.
\end{enumerate}

Steps (\ref{It:DJ:StepOne})--(\ref{It:DJ:StepFour}) can be expressed as:
\[
   (H_{2n} \tensor I_{2})(U_{2(n+1)})(H_{2n} \tensor H_{2})(e_{0^{n}} \tensor e_{1}).
\]

\subsubsection{Simon's Algorithm}
\label{Se:simonsalgo}

The problem that Simon's algorithm addresses is as follows:

\begin{quote}
    One is given a function $f:\{0,1\}^n \rightarrow \{0,1\}^n$, where $f$ is promised to satisfy the property that there is a ``hidden vector'' $s \in \{0,1\}^n$ such that, for all $x$ and $y$, $f(x) = f(y)$ if and only if $x = y \xor s$.
    The goal is to find the hidden vector $s$.
\end{quote}

Given an oracle $U_{2{n}}$ such that $U(x) = f(x)$, satisfying the property $\exists s \in \{0,1\}^n$ such that $\forall x, y$ $U(x) = U(y)$ if and only if $x = y \xor s$, the algorithm for finding $s$ is as follows:
\begin{enumerate}
  \item
    \label{It:Simon:StepOne}
    Initialize the set of equations $E$ to the empty set
  \item
    \label{It:Simon:StepTwo}
    Prepare an initial state $e_{0^{2n}}$ of $2n$ bits (the standard-basis vector of $2n$ bits with a $1$ in the first position and $0$s elsewhere).
  \item
    \label{It:Simon:StepThree}
    Apply the Hadamard matrix $H_{2n}$ on the first $n$ bits.
  \item
    \label{It:Simon:StepFour}
    Apply the oracle $U_{2n}$ on the first $n$ bits.
  \item
    \label{It:Simon:StepFive}
    Successively apply the matrices $\CNOT(2n, i, n+i)$, for $i \rightarrow \{1..n\}$.
  \item
    \label{It:Simon:StepSix}
    Apply the oracle $U_{2n}^{*}$, which is the conjugate of $U_{2n}$, on the first $n$ bits.
  \item
    \label{It:Simon:StepSeven}
    Apply the Hadamard matrix $H_{2n}$ on the first $n$ bits.
  \item
    \label{It:Simon:StepEight}
    Measure the first $n$ bits to obtain $x$.
  \item
    \label{It:Simon:StepNine}
    Add the equation $x \cdot s = 0$ to equation set $E$.
  \item
    \label{It:Simon:StepTen}
    Repeat steps (\ref{It:Simon:StepTwo})--(\ref{It:Simon:StepNine}) to obtain $O(n)$ equations.
  \item
    \label{It:Simon:StepEleven}
    With high probability, the solution to the set of equations $E$ is $s$.
\end{enumerate}

Steps (\ref{It:Simon:StepTwo})--(\ref{It:Simon:StepTen}) operate on quantum states of $2n$ qubits.
Call the first $n$ bits the $x$ bits, and the second $n$ bits the $y$ bits.
Conventionally, one considers the bits to be ordered as follows:
$\langle x_1, \cdots, x_{n}, y_1, \cdots, y_{n} \rangle$.

Steps (\ref{It:Simon:StepTwo})--(\ref{It:Simon:StepSeven}) can be written as follows:
\[
    (H_{2n}\tensor I_{2n})(U^{*}_{2n}\tensor I_{2n})(\Pi_{i=1}^{n}(\CNOT(2n, i, n+i)))((U_{2n}\tensor I_{2n}))(H_{2n}\tensor I_{2n})(e_{0^{2n}})
\]
Using~\eqref{CNOT_n_1}, and the interleaved Kronecker product ($\tensor_i$) discussed in \sectref{KroneckerProduct:VariantTwo} (\algref{KP4Voc}), we can rewrite the above expression as follows:
\begin{equation}
\label{Eq:SimonsAlgoExpression}
    (H_{2n}\tensor_i I_{2n})(U^{*}_{2n}\tensor_i I_{2n})(\CNOT_n)(U_{2n}\tensor_i I_{2n})(H_{2n}\tensor_i I_{2n})(e_{0^{2n}})
\end{equation}
Note that in \eqref{SimonsAlgoExpression}, we have changed to the bit ordering to $\langle x_1, y_1, x_2, y_2, \cdots, x_{n}, y_{n} \rangle$.

Now consider the second, third, and fourth multiplicands in \eqref{SimonsAlgoExpression}:
\[
  (U^{*}_{2n}\tensor_i I_{2n})(\CNOT_n)(U_{2n}\tensor_i I_{2n}).
\]
Reading these terms right to left,
\begin{itemize}
  \item
    first, the oracle $U_{2n}$ is applied on the $x$ bits, which yields $f(x)$;
  \item
    second, $\CNOT_n$ copies the value in the $x$ bits to the $y$ bits, leading to the $y$ bits also representing $f(x)$;
  \item
    third, $U^*_{2n}$ is applied to the $x$ bits, to turn the values in the $x$ bits from $f(x)$ back to the original value of $x$.
\end{itemize}
However, we can achieve the same result by first copying the value of the $x$ bits to the $y$ bits by applying $\CNOT_n$, and then applying $U_{2n}$ directly on the $y$ bits, which can be expressed as
$
  (I_{2n} \tensor_i U_{2n})(\CNOT_n).
$
Thus, \eqref{SimonsAlgoExpression} can be re-written as
\begin{equation}
\label{Eq:SimonsAlgoExpressionOptimized}
    (H_{2n}\tensor_i U_{2n})(\CNOT_n)(H_{2n}\tensor_i I_{2n})(e_{0^{2n}})
\end{equation}
Formally, to show that \eqrefs{SimonsAlgoExpression}{SimonsAlgoExpressionOptimized} are equal, we use two properties of Kronecker product:
\begin{enumerate}
    \item $A \tensor (B + C) = (A \tensor B) + (A \tensor C)$
    \item $(A \tensor B)(C \tensor D) = (AC \tensor BD)$
\end{enumerate}
Starting from \eqref{SimonsAlgoExpression}, we have
\begin{align}
        &(H_{2n}\tensor_i I_{2n})(U^{*}_{2n}\tensor_i I_{2n})(\CNOT_n)(U_{2n}\tensor_i I_{2n})(H_{2n}\tensor_i I_{2n})(e_{0^{2n}}) \tag{\ref{Eq:SimonsAlgoExpression}} \\
        =~ & (H_{2n}\tensor_i I_{2n})(U^{*}_{2n}\tensor_i I_{2n})(\CNOT_n)(U_{2n}\tensor_i I_{2n})(H_{2n}\tensor_i I_{2n})(e_{0^{n}} \tensor_i e_{0^n}) \notag\\
        =~ & (H_{2n}\tensor_i I_{2n})(U^{*}_{2n}\tensor_i I_{2n})(\CNOT_n)(U_{2n}\tensor_i I_{2n})(H_{2n}e_{0^n}\tensor_i I_{2n}e_{0^n}) \notag\\
        =~ & (H_{2n}\tensor_i I_{2n})(U^{*}_{2n}\tensor_i I_{2n})(\CNOT_n)(U_{2n}\tensor_i I_{2n})(\Sigma_{x}(e_x \tensor_i e_{0^n}))\notag\\
        =~ & (H_{2n}\tensor_i I_{2n})(U^{*}_{2n}\tensor_i I_{2n})(\CNOT_n)(\Sigma_{x}(U_{2n}e_x \tensor_i I_{2n}e_{0^n}))\notag\\
        =~ & (H_{2n}\tensor_i I_{2n})(U^{*}_{2n}\tensor_i I_{2n})(\CNOT_n)(\Sigma_{x}(e_{f(x)} \tensor_i e_{0^n}))\notag\\
        =~ & (H_{2n}\tensor_i I_{2n})(U^{*}_{2n}\tensor_i I_{2n})(\Sigma_{x}(e_{f(x)} \tensor_i
        e_{f(x)}))\notag\\
        =~ & (H_{2n}\tensor_i I_{2n})(\Sigma_{x}(U^*_{2n}e_{f(x)} \tensor_i I_{2n}e_{f(x)}))\notag\\
        =~ & (H_{2n}\tensor_i I_{2n})(\Sigma_{x}(e_{x} \tensor_i e_{f(x)})) \label{Eq:SimonsExp1}
\end{align}
Starting from \eqref{SimonsAlgoExpressionOptimized}, we have
\begin{align}
    &(H_{2n}\tensor_i U_{2n})(\CNOT_n)(H_{2n}\tensor_i I_{2n})(e_{0^{2n}}) \tag{\ref{Eq:SimonsAlgoExpressionOptimized}} \\
    =~ & (H_{2n}\tensor_i U_{2n})(\CNOT_n)(H_{2n}\tensor_i I_{2n})(e_{0^n} \tensor_i e_{0^n})\notag\\
    =~ & (H_{2n}\tensor_i U_{2n})(\CNOT_n)(H_{2n}e_{0^n} \tensor_i I_{2n}e_{0^n})\notag\\
    =~ & (H_{2n}\tensor_i U_{2n})(\CNOT_n)(\Sigma_x(e_x \tensor_i e_{0^n}))\notag\\
    =~ & (H_{2n}\tensor_i U_{2n})(\Sigma_x(e_x \tensor_i e_x))\notag\\
    =~ & (H_{2n}\tensor_i I_{2n})(I_{2n} \tensor_i U_{2n})(\Sigma_x(e_x \tensor_i e_x))\notag\\
    =~ & (H_{2n}\tensor_i I_{2n})(\Sigma_x(I_{2n}e_x \tensor_i U_{2n}e_x))\notag\\
    =~ & (H_{2n}\tensor_i I_{2n})(\Sigma_x(e_x \tensor_i e_{f(x)}))      \label{Eq:SimonsExp2}
\end{align}
\eqrefs{SimonsExp1}{SimonsExp2} are identical, 
and hence~\eqref{SimonsAlgoExpressionOptimized} can be used in place of~\eqref{SimonsAlgoExpression}.

For step (\ref{It:Simon:StepEleven}), our implementation uses the algorithm presented in~\cite[\S{2.1}]{elder2014abstract} for solving the set $E$ of Boolean linear equations.

The implementation also illustrates two of the advantages of simulation discussed in \sectref{AdvantagesOfSimulation}:
\begin{itemize}
  \item
    The discussion above about choosing to interleave the bits of $x$ and $y$, and replacing $\Pi_{i=1}^{n}(\CNOT(2n, i, n+i))$ with $\CNOT_n$ is in the spirit of \itemref{Deviate:GoOutsideQuantumModel}.
  \item
    In accordance with \itemref{Deviate:RepeatedMeasurements}, the implementation does not have to perform steps (\ref{It:Simon:StepTwo})--(\ref{It:Simon:StepSeven}) for each sampling step.
    Instead, it performs steps (\ref{It:Simon:StepTwo})--(\ref{It:Simon:StepSeven}) \emph{once} to build up an appropriate CFLOBDD, and then samples the desired number of times from that structure.
\end{itemize}

\subsubsection{Shor's Algorithm}
\label{Se:ShorsAlgorithm}

Shor's Algorithm aims to find prime factors of a given integer $N$. We use the following algorithm:
\begin{enumerate}
  \item
    \label{It:Shor:StepOne}
    Pick a random number $1<a<N$.
  \item
    \label{It:Shor:StepTwo}
    Compute $K=\gcd(a,N)$. If $K \neq 1$, then $K$ is a nontrivial factor of $N$, so we are done.
  \item
    \label{It:Shor:StepThree}
    Otherwise, use the quantum period-finding algorithm (see below) to find $r$, which denotes the period of the function $f(x)=a^{x}{\bmod {N}}$.
    Equivalently, $r$ is the smallest positive integer that satisfies $a^{r}\equiv 1{\bmod {N}}$.
  \item
    \label{It:Shor:StepFour}
    If $r$ is odd, then go back to step (\ref{It:Shor:StepOne}).
  \item
    \label{It:Shor:StepFive}
    If $a^{r/2}=-1{\bmod {N}}$, then go back to step (\ref{It:Shor:StepOne}).
  \item
    \label{It:Shor:StepSix}
    Otherwise, both $\gcd(a^{r/2}+1,N)$ and $\gcd(a^{r/2}-1,N)$ are nontrivial factors of $N$, so we are done.
\end{enumerate}

The quantum period-finding algorithm is as follows:
\begin{enumerate}
    \item Start with $3n$ qubits with the first $2n$ qubits in state $e_{0^{2n}}$ and the last $n$ qubits in state $e_{0^{n-1}1}$.
    \item For each qubit $i \rightarrow \{N..1\}$, apply the controlled gate $U_a^{2^{(N-i-1)}}$ on the last $n$ qubits, where $U_a(x) = ax{ \bmod N}$.
    \item Apply Inverse QFT on the first $2n$ qubits.
    \item Measure the first $2n$ qubits to find $r$ as discussed in~\cite[\S11.6]{lipton2014quantum}.
\end{enumerate}

\subsubsection{Grover's Algorithm}
\label{Se:groveralgo}
The algorithm solves the problem of finding a needle in a haystack. More formally,
\begin{quote}
    Given a function $f:\{0,1,\cdots,N-1\} \rightarrow \{0,1\}$, the algorithm finds $x$ such that $f(x) = 1$.
\end{quote}

Here, we assume that there is only one $x$ such that $f(x)=1$.
The algorithm uses an oracle $U_{2n}$, which implements $f$ for the ``needle'' $w$ as
\[
U_w(x) = \begin{cases}
            x & \text{if } x = w\\
           -x & \text{otherwise}
         \end{cases}
\]

Given this oracle, the algorithm for finding $s$ is as follows:
\begin{enumerate}
  \item
    \label{It:Grover:StepOne}
    Prepare an initial state $e_{0^n}$ (the standard-basis vector of $n$ bits with a $1$ in the first position and $0$s elsewhere).
  \item
    \label{It:Grover:StepTwo}
    Apply the Hadamard matrix $H_{2n}$ on the initial state.
  \item
    \label{It:Grover:StepThree}
    Apply the oracle $U_w$ to the current state.
  \item
    \label{It:Grover:StepFour}
    Apply the Grover diffusion operator
    $U_s = H_{2n}(2e_{0^n}\tensor_{\textit{outer}}e_{0^n} - I_{2^n})H_{2n}$,
    where $\tensor_{\textit{outer}}$ denotes the outer product of two vectors.
  \item
    \label{It:Grover:StepFive}
    Repeat steps (\ref{It:Grover:StepThree})--(\ref{It:Grover:StepFour}) $\lceil{\frac{\pi}{4}\sqrt{N}}\rceil$ times.
  \item
    \label{It:Grover:StepSix}
    Measure the $n$ qubits to obtain the string $s$.
\end{enumerate}

Steps (\ref{It:Grover:StepOne})--(\ref{It:Grover:StepFive}) can be written concisely as follows:
\[
    \left( \Pi_{i=1}^{\lceil{\frac{\pi}{4}\sqrt{N}}\rceil}U_sU_w \right) (H_{2n}(e_{0^{n}}))
\]

As mentioned at the beginning of \sectref{quantum-algos}, one of the advantages of simulation is that we can re-associate the matrix multiplications in steps (\ref{It:Grover:StepOne})--(\ref{It:Grover:StepFour}) to compute $\left( \Pi_{i=1}^{\lceil{\frac{\pi}{4}\sqrt{N}}\rceil}U_sU_w \right)$ as an explicit quantity, which can be done using repeated squaring instead of as a sequence of  multiplications.
This approach provides a more efficient approach to steps (\ref{It:Grover:StepThree})--(\ref{It:Grover:StepFour}).
We can also use optimizations like those discussed earlier, e.g., performing (\ref{It:Grover:StepOne})--(\ref{It:Grover:StepTwo}) by means of a direct construction of the desired vector of all-$1$s.
We can also create a representation of the diffusion operator $U_s$ directly, rather than performing the computation given in step (\ref{It:Grover:StepFour}).
In particular, the construction is almost the same as the identity-matrix construction in \algref{IdAlgo}:
one would use subroutine {\tt IdentityMatrixGrouping}, but at top level the value tuple would be $[\frac{2}{N}-1, \frac{2}{N}]$ (instead of $[1,0]$).
\Omit{
\twr{This construction follows \href{https://www.scottaaronson.com/qclec/22.pdf}{https://www.scottaaronson.com/qclec/22.pdf}, page 3. Meghana: you had a $\sqrt{N}$, but Aaronson does not.}
}


\section{Evaluation}
\label{Se:evaluation}

In this section, we explain our experimental setup and describe the experiments we carried out, which were designed to address the following research questions:
\begin{description}
  \item [RQ1:]
    Do theoretical guarantees of \emph{double-exponential compression} by CFLOBDDs allow them to represent substantially larger Boolean functions than BDDs?
  \item [RQ2:]
    Do CFLOBDDs outperform BDDs when used for quantum simulation (in terms of time and space)?
\end{description}

\subsection{Experimental Setup}
\label{Se:ExperimentalSetup}

We compared our implementation of CFLOBDDs\footnote{
  The implementation is available at \url{https://github.com/trishullab/cflobdd}.
} against a widely used BDD package, CUDD~\cite{somenzi2012cudd} (version
3.0.0),using CUDD's C++ interface.
The metrics are (i) execution time, and (ii) space (node counts in the case of BDDs; vertex counts + edge counts in the case of CFLOBDDs).
We ran all experiments on AWS machines: t2.xlarge machines with 4 vCPUs, 16GB of memory, and a stack size of 8192KB, running on Ubuntu OS.

For RQ1 (\sectref{ResearchQuestionOne}), we used a collection of synthetic benchmarks, and compared the performance of CFLOBDDs against
(i) CUDD with a static variable ordering (similar to the one used in the CFLOBDDs),
(ii) CUDD with dynamic variable reordering, and
(iii) Sentential Decision Diagrams (SDDs)~\cite{darwiche2011sdd} (which can also be exponentially more succinct than BDDs), using Python package PySDD~\cite{wannes_meert_2018_1202374} (version 0.1). 

For RQ2 (\sectref{ResearchQuestionTwo}), we used a set of quantum-simulation benchmarks, and again compared the performance of CFLOBDDs against CUDD (version 3.0.0).
For the quantum benchmarks, we did not enable dynamic variable reordering for BDDs because we could not retrieve the correct order of the output bits for a sampled string.

Five of the quantum benchmarks---BV, DJ, Simon's algorithm, Shor's algorithm, and Grover's algorithm---use oracles for that either directly or indirectly incorporate the answer sought.
Our methodology is standard for quantum-simulation experiments.
Each benchmark uses a pre-processing step to create the CFLOBDD/BDD that represents the oracle.
In each run, an answer is first generated randomly, and then the CFLOBDD/BDD that represents the oracle is constructed.
Knowledge about the answer is used only during oracle construction.
Thereafter, the quantum algorithm proper is simulated;
these steps have no access to the pre-chosen answer (other than the ability to perform operations on the oracle, treated as a unitary matrix).
The final step of running the benchmark is to check that the quantum algorithm obtained the correct answer.

We could not run the quantum benchmarks with SDDs because SDDs do not support multi-terminal values.
However, we ran the quantum benchmarks using Quimb~\cite{gray2018quimb}, a quantum-simulation library that uses tensor networks.

\paragraph{Extensions to CUDD}
For the RQ2 experiments, we had to extend CUDD in two ways to be able to simulate quantum circuits using CUDD data structures:
\begin{enumerate}
  \item
    CUDD supports \emph{algebraic decision diagrams} (ADDs), which are multi-terminal BDDs with a value from a semiring at each terminal node.
    We had to extend CUDD with a semiring that was not part of the standard CUDD distribution (\sectref{ADDsWithComplexNumberLeaves}).
    For the corresponding experiments with CFLOBDDs, we used the same semiring for the terminal values of CFLOBDDs.
    \Omit{
    To support functions of type ${\{ 0,1 \}}^n \rightarrow \mathbb{C}$, we implemented a semiring of multi-precision-floating-point \cite{fousse2007mpfr} complex numbers
    (\sectref{ADDsWithComplexNumberLeaves}).
    }
  \item
    To allow quantum measurements to be carried out,
    we extended ADDs to support path sampling (i.e., selection of a path, where the probability of returning a given path is proportional to a function of the path's terminal value).
\end{enumerate}

\subsubsection{Algebraic Decision Diagrams with Complex-Number Leaves}
\label{Se:ADDsWithComplexNumberLeaves}

To use CUDD on most of the quantum benchmarks, we modified the ADD datatype and related functions to use multi-precision-floating-point complex numbers \cite{fousse2007mpfr}.

For the ``Quantum Fourier Transform'' and ``Shor's Algorithm'' benchmarks, one only needs to represent a known set of complex roots of unity, so we used a different custom ADD datatype, $\textit{dtype}$, defined as follows:

\begin{center}
\begin{tabular}{c}
\lstset{language=C++}
\begin{lstlisting}
typedef struct dtype {
    int val;
    int size;
    mpfr_t real;
    mpfr_t imag;
} dtype;
typedef dtype CUDD_VALUE_TYPE;
\end{lstlisting}
\end{tabular}
\end{center}

\noindent
Here, ``size'' represents the number of roots of unity, and a value $i$ for ``val'' represents the $i^{\textit{th}}$ root of unity with respect to the current ``size.'' 
For example, if size = 4, then val $\in$ $\{ 0,1,2,3 \}$.
Multiplying two $\textit{dtype}$s---$t_1$ and $t_2$ with size = 4, where val of $t_1$ = 2 and val of $t_2$ = 3---produces $\textit{dtype}$ $t_3$ with size = 4 and val = (2 + 3) \% 4 = 1.
This representation encodes $\omega^2 \times \omega ^ 3 = \omega$, where $\omega^k = e^{\frac{2k\pi}{4}}$ is a primitive $4^{\textit{th}}$ root of unity.
We used this representation to compute the entire quantum Fourier transform, and then filled in the values for real and imag at the last step, where real = $\cos(\frac{2\pi * val}{size})$ and imag = $\sin(\frac{2\pi * val}{size})$.

\subsubsection{Sampling in BDDs.}
\label{Se:SamplingInBDDs}

An important step in quantum simulation is measurement of the output bits,
which is equivalent to sampling a path from the BDD structure according to a probability distribution obtained from the terminal values.
In particular, the terminal values represent so-called ``amplitudes,'' and the corresponding probability distribution is based on the squares of the amplitudes.

\paragraph{Path Counting}
Suppose that the BDD has $k$ terminals.
To sample a path based on the squares of the amplitudes, we need to know, for each node $n$ in the BDD, and each terminal position $t_i$, $1 \leq i \leq k$, the number of paths that lead from $n$ to $t_i$.
This information can be computed by generalizing the method of Ball and Larus for counting the number of paths in a DAG \cite{micro:BL96} from the case of a single exit-node to $k$ exit-nodes (while also accounting for ply-skipping in a BDD):
\begin{itemize}
  \item
    Each ``\emph{path-count}'' is a k-dimensional vector $c$.
    (For convenience, we index the components of $c$ as $c_1, \ldots, c_k$.)
  \item
    Path-counts are computed bottom-up, starting from the terminal positions.
    The path-count at terminal position $i$ has a $1$ at index-position $i$ and zeros elsewhere, signifying that (i) there is exactly $1$ path from terminal node $i$ to itself, and (ii) there are no paths from terminal node $i$ to any of the other terminal nodes.
  \item
    When no plies are skipped in going from a node $n$ to each of its two children, the path-count at $n$ is the vector addition of the left-child and right-child path-counts.
  \item
    If $\Delta\textit{left}$ plies are skipped in going from $n$ to the left child (with path-count $c_l$), and $\Delta\textit{right}$ plies are skipped in going to the right child (with path-count $c_r$), the path-count at $n$ is $2^{\Delta\textit{left}}{c_l} + 2^{\Delta\textit{right}}{c_r}$.
  \item
    If $\Delta$ plies have been skipped at the apex of the DAG (with path-count $c$), the BDD's overall path-count is $2^{\Delta}{c}$.
\end{itemize}

\Omit{
To compute this information, we store the following information for every node.
\begin{center}
\begin{tabular}{c}
\lstset{language=C++}
    \begin{lstlisting}
    typedef struct path_info {
        CUDD_VALUE_TYPE weight;
        mpfr_t path_count;
        long int index;
        long int l_index;
        mpfr_t l_path_count;
        long int r_index;
        mpfr_t r_path_count;
    } path_info;
    
    typedef struct total_path_info {
        path_info* paths; // sorted list
        unsigned int size;
    } total_path_info;
    \end{lstlisting}
\end{tabular}
\end{center}

\noindent
We store a sorted list of unique paths -- uniqueness corresponding to the weight value (which are usually the probability values).
We store the number of paths ``path\_count'' from the current node to the node with terminal value of ``weight.''
The ``index'' notes the index of this information in the sorted list.
The ``l\_index'' and ``l\_path\_count'' are the index and path counts of this weight value in the left-child node.
If there is no path from the left-child node to this weight value, we store a -1 for the l\_index.
Similarly, for the right-child node.
This information will be later used in sampling.
The path\_count information is computed in a bottom-up fashion.
The node connected to a terminal value node will have path\_count of 1 for that particular weight.
Note that nodes might skip plies and path\_count of 1 does not always hold true.
We need to do additional bookkeeping to track this ply-skipping information.
Thus, for any node at ply $i$, the path count for weight $w$ is the sum of paths for $w$ from the left-child node and right-child node at ply $(i+1)$.
This information is computed for every node and stored in a sorted order.
The sorted order is needed for sampling.
Note that the ply-skipping information needs to be included when computing paths for $w$ if the left- or right-child nodes are not at the immediate next ply ($i+1$).
}

\paragraph{Sampling}
Once path-counts are in hand, we sample a path as follows:
\begin{itemize}
  \item
    Let $n.\textit{pc}_i$ denote the path-count at a node $n$.
    Let $n.\textit{lchild}$ and $n.\textit{rchild}$ denote the two children of $n$, and let $n.\Delta\textit{left}$ and $n.\Delta\textit{right}$ denote the number of plies skipped in going to the left child and right child of $n$, respectively.
  \item
    For convenience, if $\Delta$ plies have been skipped at the apex $a$ of the DAG, we assume that a new root node $r$ is added whose left-child and right-child both point to $a$.
    $r.\textit{pc}$ is set to $2^{\Delta}{a.\textit{pc}}$, and both $r.\Delta\textit{left}$ and $r.\Delta\textit{right}$ are set to $\Delta-1$.
    Otherwise, the root $r$ is $a$.
  \item
    Let $p = \langle p_1, \ldots, p_k \rangle$, denote the vector of squared amplitudes at the $k$ terminal positions.
    The probability of choosing a path from $r$ to terminal position $i$ is $p_i * r.\textit{pc}_i$.
  \item
    Randomly choose a value $i$ based on the probabilities $\langle p_1 * r.\textit{pc}_1, \ldots, p_k * r.\textit{pc}_k \rangle$.
    Henceforth, we only consult the $i^{\textit{th}}$ components of path-counts of the BDD's nodes.
  \item
    We work down the BDD from $r$ to terminal position $i$, accumulating a path-string in $\pi$, which is initially set to $\epsilon$.
    Starting at $r$, and then at each subsequently visited node $n$ until terminal position $i$ is reached, take the following steps:
    \begin{itemize}
      \item
        If both $n.\textit{lchild}.\textit{pc}_i \neq 0$ and $n.\textit{rchild}.\textit{pc}_i \neq 0$, randomly choose one of the children in relative proportion to
        $2^{n.\Delta\textit{left}}{n.\textit{lchild}.\textit{pc}_i}$ and
        $2^{n.\Delta\textit{right}}{n.\textit{rchild}.\textit{pc}_i}$.
        If the left child is chosen, append ``0'' followed by a string chosen uniformly from $\{0,1\}^{{n.\Delta\textit{left}}}$ to $\pi$;
        otherwise, append ``1'' followed by a string chosen uniformly from $\{0,1\}^{{n.\Delta\textit{right}}}$ to $\pi$.
      \item
        If $n.\textit{lchild}.\textit{pc}_i = 0$, append ``1'' followed by a string chosen uniformly from $\{0,1\}^{{n.\Delta\textit{right}}}$ to $\pi$.
      \item
        If $n.\textit{rchild}.\textit{pc}_i = 0$, append ``0'' followed by a string chosen uniformly from $\{0,1\}^{{n.\Delta\textit{left}}}$ to $\pi$.
    \end{itemize}
    Set $n$ to the child chosen above.
\end{itemize}

\Omit{
With path information, we now present a technique for sampling a path.
Starting at the top ply, the probability of choosing a particular weight $w$ as $p_w$ = $w * path\_count$.
We randomly choose a weight based on the probability distribution $p_w$.
If both the left child and right child have paths that lead to node with value $w$, we randomly choose one of them based on l\_path\_count and r\_path\_count.
If the left child is chosen, we add ``0'' to the sampled path;
otherwise, we add ``1.''
If one of the children does not have a path to node with value $w$, we choose the other child with probability 1.
We repeat this process at every ply till we reach the terminal value node.
In case of ply-skips, we randomly choose ``0'' or ``1'' with a probability of 0.5.
}

\subsection{Benchmarks and Experimental Results}
\label{Se:BenchmarksAndExperimentalResults}

\subsubsection{RQ1:
    Do theoretical guarantees of \emph{double-exponential compression} by CFLOBDDs allow them to represent substantially larger Boolean functions than BDDs?
}
\label{Se:ResearchQuestionOne}

We used the following three benchmarks to compare the execution time and memory usage (as vertex count + edge count) of CFLOBDDs against BDDs and SDDs.

\begin{itemize}
  \item
    $\XOR_n = \bigoplus_{i=1}^{n} x_i$
  \item
    $\textit{MatMult}_n = (H_{n}I_n + X_nH_n + I_{n}X_n)$, where $H_{n}$ is the Hadamard matrix, $I_{n}$ is the Identity matrix, and $X_n$ is the NOT matrix of size $2^{{n-1}}$ x $2^{{n-1}}$.
    (The aim of the benchmark is to test the performance of the matrix-multiplication and addition operations.)
  \item
  $\ADD_n(X,Y,Z) \eqdef Z = (X + Y \mod 2^{n/4})$, where $X$, $Y$, and $Z$ are $n/4$-bit integers.
\end{itemize}

\begin{table}[tb!]
    \centering
    \begin{adjustbox}{width=1\textwidth}
    \begin{tabular}{|c|c|c|c|c|c|c|c|c|c|c|c|}
    \hline
        \multirow{3}*{Benchmark} &
        \multirow{3}{*}{
            \begin{tabular}{@{}c@{}}
                 \#Boolean  \\
                 Variables ($n$)
            \end{tabular}
        } & \multicolumn{4}{|c|}{CFLOBDD} & \multicolumn{2}{|c|}{BDD} & \multicolumn{2}{|c|}{BDD (reorder)} & \multicolumn{2}{|c|}{SDD}\\
        \cline{3-12}
        & & \multirow{2}*{\#Vertices}  & \multirow{2}*{\#Edges}  & \multirow{2}*{Total}  & Time  & \multirow{2}*{\#Nodes}  & Time  & \multirow{2}*{\#Nodes}  & Time  & \multirow{2}*{\#Nodes}  & Time  \\
        & &  &  &                       & (sec) &  & (sec) &  & (sec) &  & (sec) \\
        \hline
        \multirow{6}*{$\XOR_n$} & $2^{15}$ &  16 & 96 & 112 & \textbf{0.99} & 32769 & 551.27 & 32769 & 587.11 & 131066	& 3.91\\
        \cline{2-12}
         & $2^{16}$ &  17 & 102 & 119 & \textbf{2.18} & \multicolumn{4}{c|}{\multirow{5}{*}{Timeout (15min)}} & 262138	& 8.57\\
        \cline{2-6}\cline{11-12}
         & $2^{17}$ &  18 & 108 & 126 & \textbf{5} & \multicolumn{4}{c|}{} & 524282	& 18.71\\
        \cline{2-6}\cline{11-12}
         & $2^{18}$ &  19 & 114 & 133 & \textbf{12.75} & \multicolumn{4}{c|}{} & 1048570	& 38.63\\
        \cline{2-6}\cline{11-12}
         & $2^{19}$ &  20 & 120 & 140 & \textbf{36.06} & \multicolumn{4}{c|}{} & 2097146	& 82.03\\
        \cline{2-6}\cline{11-12}
         & $2^{20}$ &  21 & 126 & 147 & \textbf{122.97} & \multicolumn{4}{c|}{} & 4194298 & 191.68\\ 
        \hline
        \multirow{11}{*}{$\textit{MatMult}_n$} & $2^{15}$ & 84 & 1053 & 1137 & \textbf{0.002} & 294890 & 57.33 & 294890 & 156.42 & \multicolumn{2}{c|}{\multirow{11}{*}{Not Applicable}}\\
        \cline{2-10}
         & $2^{16}$  & 90 & 1125 & 1137 & \textbf{0.004} & 589802 & 186.27 & 593122 & 446.19 & \multicolumn{2}{c|}{}\\
        \cline{2-10}
         & $2^{17}$ & 96 & 1197 & 1293 & \textbf{0.007} & 1179626 & 739.66 & \multicolumn{2}{c|}{Timeout (15min)} & \multicolumn{2}{c|}{}\\
        \cline{2-10}
         & $2^{18}$ & 102 & 1269 & 1371 & \textbf{0.017} & \multicolumn{4}{c|}{\multirow{8}{*}{Timeout (15min)}} & \multicolumn{2}{c|}{}\\
        \cline{2-6}
        & $2^{19}$  & 108 & 1341 & 1449 & \textbf{0.043} & \multicolumn{4}{c|}{} & \multicolumn{2}{c|}{}\\
        \cline{2-6}
        & $2^{20}$ & 114 & 1413 & 1527 & \textbf{0.118} & \multicolumn{4}{c|}{} & \multicolumn{2}{c|}{}\\
        \cline{2-6}
        & $2^{21}$ & 120 & 1485 & 1605 & \textbf{0.343} & \multicolumn{4}{c|}{} & \multicolumn{2}{c|}{}\\
        \cline{2-6}
        & $2^{22}$ & 126 & 1557 & 1683 & \textbf{1.238} & \multicolumn{4}{c|}{} & \multicolumn{2}{c|}{}\\
        \cline{2-6}
        & $2^{23}$ & 132 & 1629 & 1761 & \textbf{4.936} & \multicolumn{4}{c|}{} & \multicolumn{2}{c|}{}\\
        \cline{2-6}
        & $2^{24}$ & 138 & 1701 & 1839 & \textbf{19.37} & \multicolumn{4}{c|}{} & \multicolumn{2}{c|}{}\\
        \cline{2-6}
        & $2^{25}$ & 144 & 1773 & 1917 & \textbf{78.98} & \multicolumn{4}{c|}{} & \multicolumn{2}{c|}{}\\
        \cline{2-6}
        & $2^{26}$ & 150 & 1845 & 1995 & \textbf{317.27} & \multicolumn{4}{c|}{} & \multicolumn{2}{c|}{}\\
        \cline{2-6}
        & $2^{27}$ & \multicolumn{4}{c|}{Timeout (15min)} & \multicolumn{4}{c|}{} & \multicolumn{2}{c|}{}\\
        \hline
        \multirow{13}{*}{$\ADD_n$} & $2^{17}$ & 80 & 574 & 654 & \textbf{<0.001} & 131073 & 0.035 & 132405 & 80.24 & 393152 & 7.72\\
        \cline{2-12}
        & $2^{18}$ & 85 & 610 & 695 & \textbf{0.001} & 262145 & 0.065 & 263477 & 280.79 & 786364 & 13.82\\
        \cline{2-12}
         & $2^{19}$ & 90 & 646 & 736 & \textbf{0.001} & 524289 & 0.148 & \multicolumn{2}{c|}{\multirow{5}{*}{Timeout (15min)}} & 1572792 & 29.72\\
        \cline{2-8}\cline{11-12}
        & $2^{20}$ & 95 & 682 & 777 & \textbf{0.001} & 1048577 & 0.293 & \multicolumn{2}{c|}{} & 3145652 & 66.26\\
        \cline{2-8}\cline{11-12}
        & $2^{21}$ & 100 & 718 & 818 & \textbf{0.001} & 2097153 & 1.368 & \multicolumn{2}{c|}{} & 6291376 & 138.40\\
        \cline{2-8}\cline{11-12}
        & $2^{22}$ & 105 & 754 & 859 & \textbf{0.001} & 4194305 & 1.155 & \multicolumn{2}{c|}{} & 12582828 & 359.26\\
        \cline{2-8}\cline{11-12}
        & $2^{23}$ & 110 & 790 & 900 & \textbf{0.002} & 8388609 & 3.316 & \multicolumn{2}{c|}{} & \multicolumn{2}{c|}{\multirow{3}{*}{Out of Memory}}\\
        \cline{2-10}
        & $2^{24}$ & 115 & 826 & 941 & \textbf{0.003} & \multicolumn{4}{c|}{\multirow{2}{*}{Out of Memory}} & \multicolumn{2}{c|}{}\\
        \cline{2-6}
        & $2^{25}$ & 120 & 862 & 982 & \textbf{0.003} & \multicolumn{4}{c|}{} & \multicolumn{2}{c|}{}\\
        \cline{2-12}
        & $\vdots$ & $\vdots$ & $\vdots$ & $\vdots$ & $\vdots$ & \multicolumn{6}{c|}{\multirow{4}{*}{Out of Memory}}\\
        \cline{2-6}
        & $2^{2^{21}}$ & 10485755 & 75497434 & 85983189 & \textbf{113.99} & \multicolumn{6}{c|}{}\\
        \cline{2-6}
        & $2^{2^{22}}$ & 20971515 & 150994906 & 171966421 & \textbf{385.75} & \multicolumn{6}{c|}{} \\
        \cline{2-6}
        & $2^{2^{23}}$ & \multicolumn{4}{c|}{Timeout (15min)} & \multicolumn{6}{c|}{} \\
        \hline
    \end{tabular}
    \end{adjustbox}
    \caption{Performance of CFLOBDDs against BDDs, BDDs with dynamic reordering, and SDDs on the synthetic benchmarks for different
    numbers of Boolean variables.
    (For the two kinds of BDD experiments and the SDD experiments, we used a stack size of 1GB.)
    }
    \label{Ta:micro-benchmarks-table}
\end{table}

\tableref{micro-benchmarks-table} shows the performance of CFLOBDDs, BDDs (with and without dynamic reordering enabled), and SDDs
within the 15-minute timeout threshold.
For the two kinds of BDD experiments and the SDD experiments, we used a stack size of 1GB.
For the $\ADD$ benchmark, BDDs (both with and without dynamic reordering) and SDDs ran out of memory within the 15-minute timeout threshold for problems with sufficiently many variables, even with such a large stack. (BDDs with dynamic reordering produced out-of-memory errors for \#variables $\geq 2^{24}$: the first step in the computation is to allocate the variables, which by itself leads to memory exhaustion for $2^{24}$ variables and beyond.)
Note that, for SDDs, benchmark
$\textit{MatMult}_n$
is not applicable because SDDs do not handle non-Boolean values.

Because of a slightly technical alignment issue, our CFLOBDD representations of $\ADD_n$ deliberately waste one-quarter of the Boolean variables (as \emph{dummy variables}).
To make a fair comparison, our BDD and SDD encodings of $\ADD_n$ use only three-quarters of the Boolean variables indicated in column two of
\tableref{micro-benchmarks-table}.

\begin{figure}[bt!]
    \centering
    \begin{tabular}{@{\hspace{0ex}}l@{\hspace{0ex}}}
      \begin{subfigure}{1.0\linewidth}
        \includegraphics[width=0.495\linewidth]{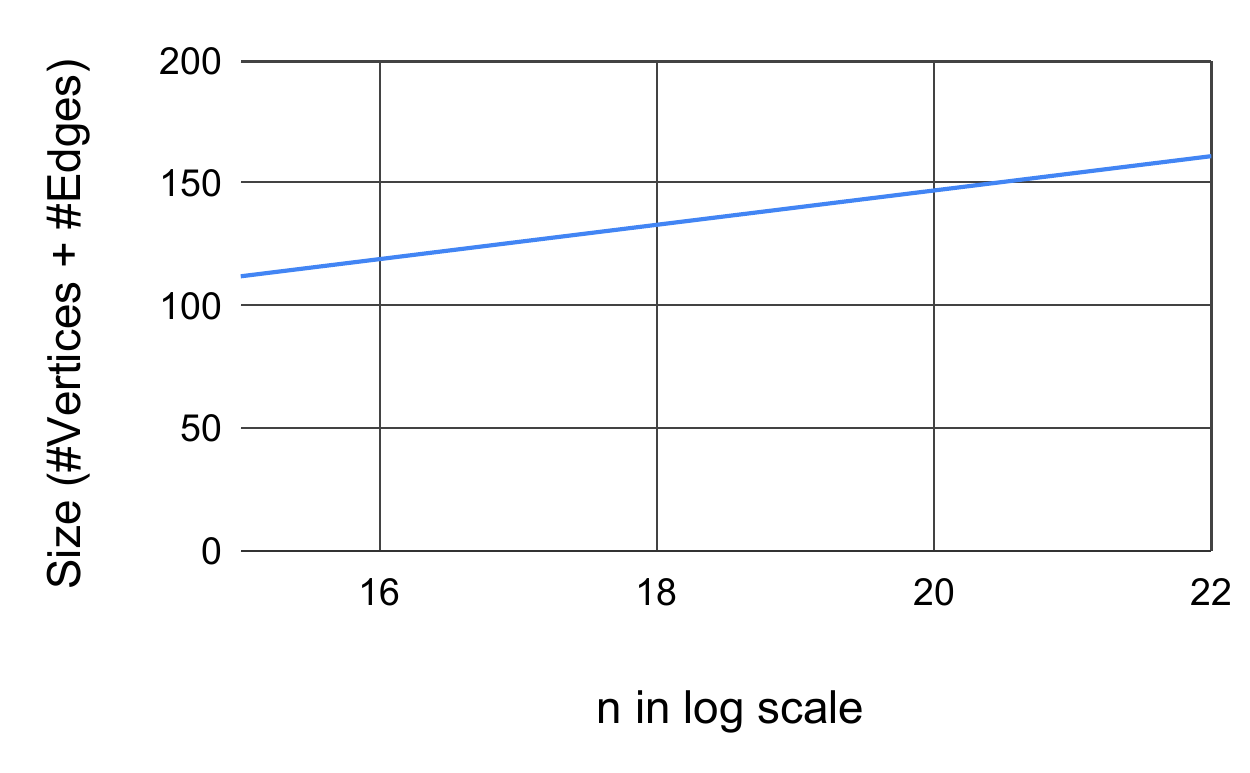}
        \includegraphics[width=0.495\linewidth]{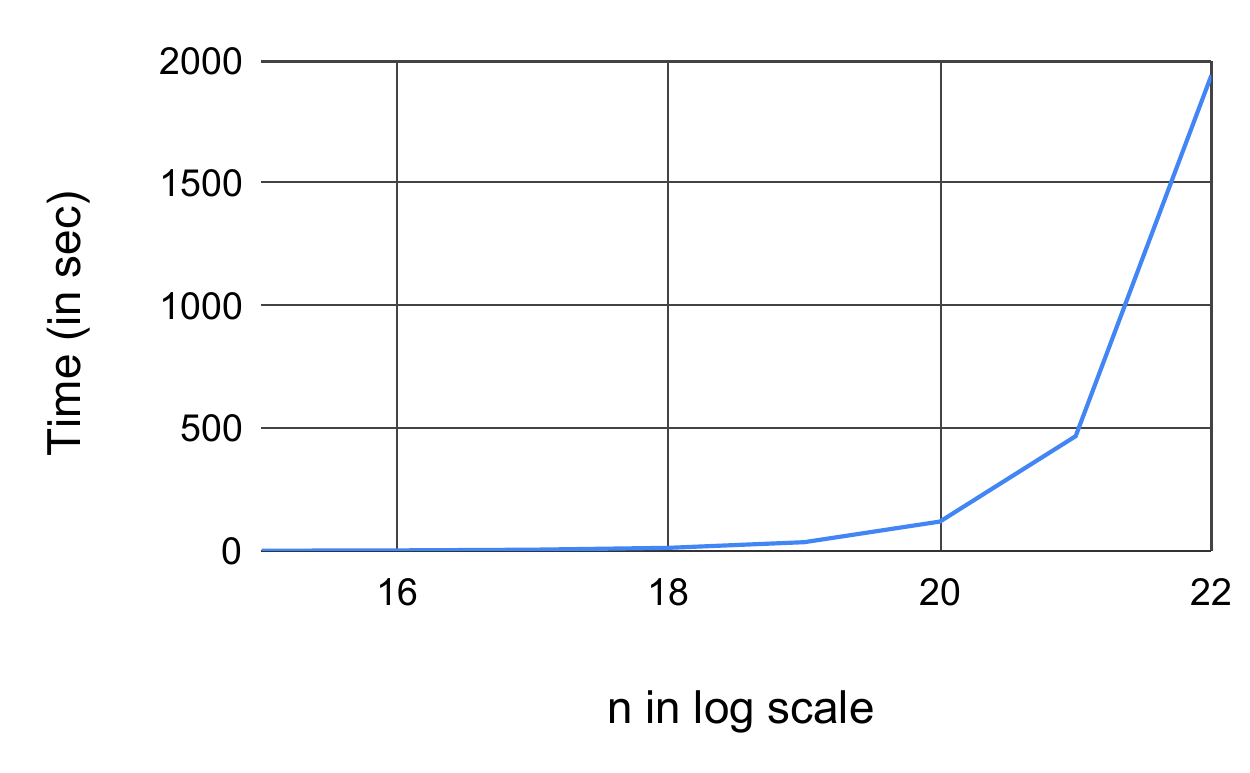}
        \caption{$\XOR_n$: Size and time vs.\ $n$ (on a $\log$ scale)}
        \vspace{3ex}
      \end{subfigure}
      \\
      \begin{subfigure}{1.0\linewidth}
        \includegraphics[width=0.495\linewidth]{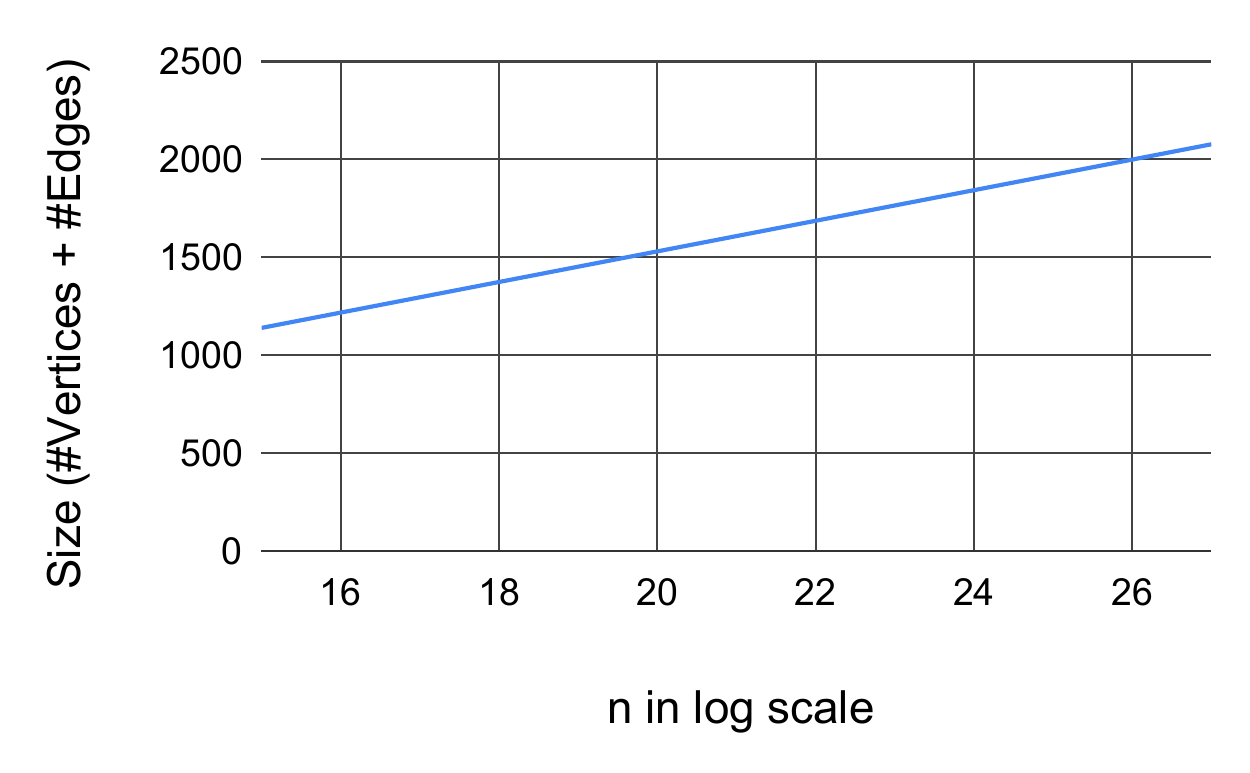}
        \includegraphics[width=0.495\linewidth]{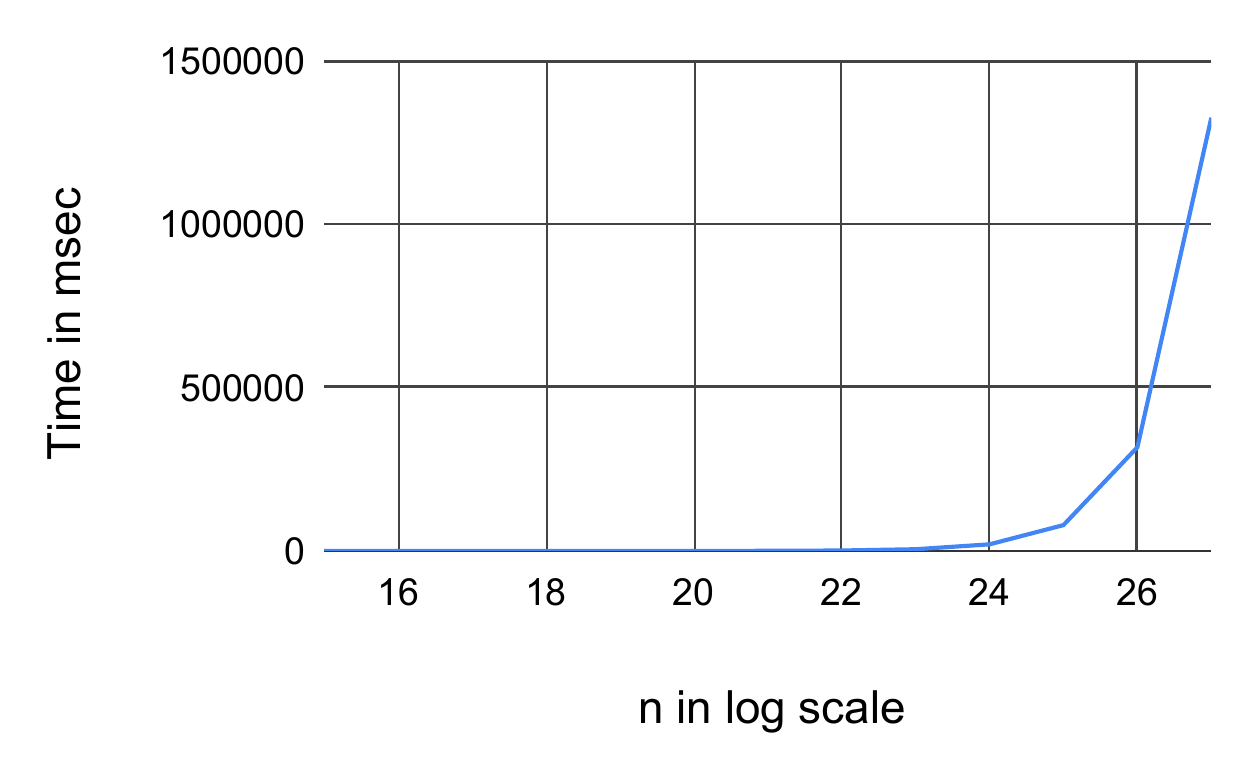}
        \caption{$\textit{MatMult}_n$: Size and time vs.\ $n$ (on a $\log$ scale)}
        \vspace{3ex}
      \end{subfigure}
      \\
      \begin{subfigure}{1.0\linewidth}
        \includegraphics[width=0.495\linewidth]{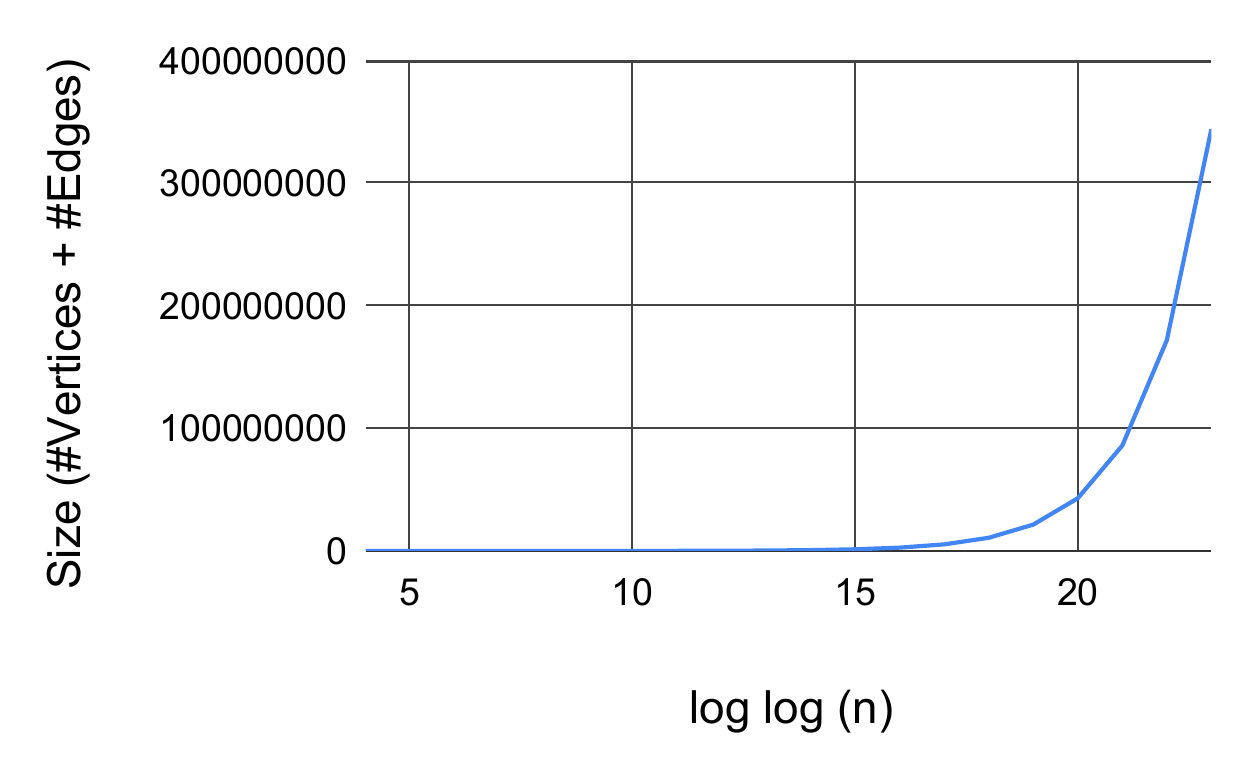}
        \includegraphics[width=0.495\linewidth]{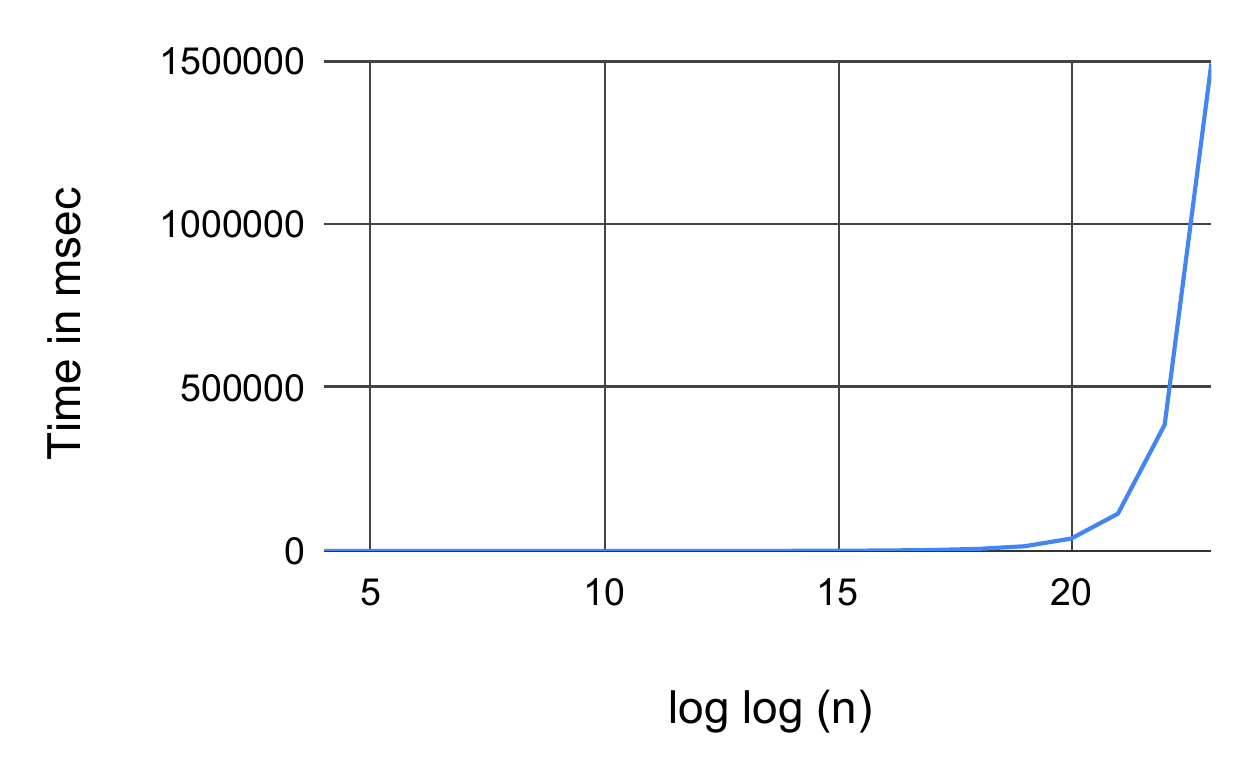}
        \caption{$\ADD_n$: Size and time vs.\ $n$ (on a $\log\log$ scale)}
      \end{subfigure}
    \end{tabular}
    \caption{
    CFLOBDD performance with a timeout of ninety minutes.
    Note that in (c)
    the number of Boolean variables is on a $\log\log$ scale.
    }
    \label{Fi:micro-benchmarks-plots}
\end{figure}

To understand how large a Boolean function could be created using CFLOBDDs (as a function of the number of Boolean variables),\footnote{
  The stack size was increased to 1GB for the runs with more than $2^{2^{15}}$ Boolean variables.
}
we also measured the performance of the CFLOBDD implementation on the micro-benchmarks using a timeout of ninety minutes.

\figref{micro-benchmarks-plots} shows graphs of size (\#vertices + \#edges) and time versus the number of Boolean variables for the three benchmarks.\footnote{
  \tableref{micro-benchmarks-table} shows the comparison of CFLOBDDs, BDDs, and SDDs for examples with a 15-minute timeout. In contrast, \figref{micro-benchmarks-plots} shows the results of the stress test that we performed, where we gave the CFLOBDD implementation a 90-minute timeout.
}
\figref{micro-benchmarks-plots}a shows the graphs for $\XOR_n$.
In these graphs, time is in seconds, and the number of Boolean variables is on a $\log$ scale.
We were able to construct $\XOR_n$ with up to $2^{22} = \textrm{4,194,304}$ variables.
\figref{micro-benchmarks-plots}b and \figref{micro-benchmarks-plots}c show the graphs for $\textit{MatMult}_n$ and $\ADD_n$, respectively.
In these graphs, time is in milliseconds, and the number of Boolean variables is on a $\log$ scale for $\textit{MatMult}_n$ and a $\log\log$ scale for $\ADD_n$.
We were able to construct $\textit{MatMult}_n$ with up to $2^{27} = \textrm{134,217,728}$ variables and $\ADD_n$ with up to $2^{2^{23}}$ $= 2^{\textrm{8,388,608}}$ $\cong 4.27 \times 10^{\textrm{2,525,222}}$ variables, which comes to $0.75 \times 2^{8,388,608} \cong 3.12 \times 10^{\textrm{2,525,222}}$ after removing dummy variables.

\smallskip
\begin{mdframed}[innerleftmargin = 3pt, innerrightmargin = 3pt, skipbelow=-0.0em]
\textbf{Findings.}
CFLOBDDs performed better than BDDs and SDDs, both in terms of time and memory.
For the benchmarks with more than $2^{18}$ Boolean variables, BDDs had memory issues.
Using CFLOBDDs, it was also possible to construct representations of the benchmark functions with astounding numbers of Boolean variables:
$2^{22} = \textrm{4,194,304}$ for $\XOR_n$;
$2^{27}$ $= \textrm{134,217,728}$ for $\textit{MatMult}_n$; and
$0.75 \times 2^{8,388,608} \cong 3.12 \times 10^{\textrm{2,525,222}}$ for $\ADD_n$.
These results support the claim that CFLOBDDs can provide substantially better compression of Boolean functions than BDDs.
\end{mdframed}

\smallskip
\noindent

\subsubsection{RQ2: Do CFLOBDDs outperform BDDs when used for quantum simulation (in terms of time and space)?}
\label{Se:ResearchQuestionTwo}

\begin{table}[!tb]
    \centering
    \resizebox{0.91\textwidth}{!}{
    \begin{tabular}{|c|c|c|c|c|c|c|c|c|}
    \hline
        \multirow{2}*{Benchmark} & \multirow{2}*{\#Qubits} & 
        \multirow{2}{*}{
        \begin{tabular}{@{}c@{}}
            \#Boolean\\
            Variables
        \end{tabular}
        } 
        & \multicolumn{4}{c|}{CFLOBDD} & \multicolumn{2}{c|}{BDD}\\
        \cline{4-9}
        & & & \#Vertices & \#Edges & Total & Time (sec) & \#Nodes & Time (sec)\\
        \hline
        \multirow{14}*{GHZ}
        & 16 & 32 & 35 & 207 & 242 & 0.005 & \textbf{36} & \textbf{0.003}\\
        \cline{2-9}
        & 32 &64& 43 & 255 & 298 & \textbf{0.007} & \textbf{68} & 0.008\\
        \cline{2-9}
        & 64 &128& 51 & 303 & 354 & \textbf{0.010} & \textbf{131} & 0.031 \\
        \cline{2-9}
        & 128 &256& 59 & 351 & 410 & \textbf{0.015} & \textbf{259} & 0.143\\
        \cline{2-9}
        & 256 &512& 67 & 399 & \textbf{466} & \textbf{0.027} & 515 & 4.9\\
        \cline{2-9}
        & 512 &1024& 75 & 447 & \textbf{522} & \textbf{0.051} & 1028 & 44\\
        \cline{2-9}
        & 1024 &2048& 83 & 495 & \textbf{578} & \textbf{0.107} & \multicolumn{2}{c|}{\multirow{8}*{Timeout (15 min)}}\\
        \cline{2-7}
        & 2048 &4096& 91 & 543 & \textbf{638} & \textbf{0.216} & \multicolumn{2}{c|}{}\\
        \cline{2-7}
        & 4096 &8192& 99 & 591 & \textbf{690} & \textbf{0.442} & \multicolumn{2}{c|}{}\\
        \cline{2-7}
        & 8192 &16384& 107 & 639 & \textbf{746} & \textbf{0.631} & \multicolumn{2}{c|}{}\\
        \cline{2-7}
        & 16384 &32768& 115 & 687 & \textbf{802} & \textbf{1.35} & \multicolumn{2}{c|}{}\\
        \cline{2-7}
        & 32768 &65536& 123 & 735 & \textbf{858} & \textbf{2.92} & \multicolumn{2}{c|}{}\\
        \cline{2-7}
        & 65536 &131072& 131 & 783 & \textbf{914} & \textbf{6.49} & \multicolumn{2}{c|}{}\\
        \cline{2-7}
        & 131072 &262144& \multicolumn{4}{c|}{Timeout (15 min)} & \multicolumn{2}{c|}{}\\
        \hline
        \multirow{17}*{BV} & 16 &32& 29 & 172 & 201 & 0.005 & \textbf{31} & \textbf{0.002}\\
        \cline{2-9}
        & 32 &64& 39 & 233 & 272 & 0.006 & \textbf{63} & \textbf{0.004}\\
        \cline{2-9}
        & 64 &128& 54 & 322 & 376 & \textbf{0.007} & \textbf{127} & 0.011\\
        \cline{2-9}
        & 128 &256& 76 & 456 & 532 & \textbf{0.010} & \textbf{255} & 0.040\\
        \cline{2-9}
        & 256 &512& 111 & 668 & \textbf{779} & \textbf{0.014} & 799 & 0.757\\
        \cline{2-9}
        & 512 &1024& 173 & 1039 & 1212 & \textbf{0.025} & \textbf{1027} & 39\\
        \cline{2-9}
        & 1024 &2048& 283 & 1701 & \textbf{1984} & \textbf{0.038} & \multicolumn{2}{c|}{\multirow{10}*{Timeout (15 min)}}\\
        \cline{2-7}
        & 2048 &4096& 476 & 2854 & \textbf{3330} & \textbf{0.067} & \multicolumn{2}{c|}{}\\
        \cline{2-7}
        & 4096 &8192& 794 & 4762 & \textbf{5556} & \textbf{0.120} & \multicolumn{2}{c|}{}\\
        \cline{2-7}
        & 8192 &16384& 1337 & 8024 & \textbf{9361} & \textbf{0.335} & \multicolumn{2}{c|}{}\\
        \cline{2-7}
        & 16384 &32768& 2363 & 14177 & \textbf{16540} & \textbf{0.673 }& \multicolumn{2}{c|}{}\\
        \cline{2-7}
        & 32768 &65536& 4391 & 26346 & \textbf{30737} & \textbf{1.42} & \multicolumn{2}{c|}{}\\
        \cline{2-7}
        & 65536 &131072& 8395 & 50372 & \textbf{58767} & \textbf{3.23} & \multicolumn{2}{c|}{}\\
        \cline{2-7}
        & 131072 &262144& 16220 & 97318 & \textbf{113538} & \textbf{8.46} & \multicolumn{2}{c|}{}\\
        \cline{2-7}
        & 262144 &524288& 31209 & 187251 & \textbf{218460} & \textbf{24.44} & \multicolumn{2}{c|}{}\\
        \cline{2-7}
        & 524288 &1048576& 58901 & 353404 & \textbf{412305} & \textbf{75.80} & \multicolumn{2}{c|}{}\\
        \cline{2-7}
        & 1048576 &2097152& \multicolumn{4}{c|}{Timeout (15 min)} & \multicolumn{2}{c|}{}\\
        \hline
        \multirow{20}*{DJ} & 16 &32& 18 & 90 & 108 & 0.006 & \textbf{18} & \textbf{0.001}\\
        \cline{2-9}
        & 32 &64& 21 & 107 & 128 & 0.008 & \textbf{34} & \textbf{0.002}\\
        \cline{2-9}
        & 64 &128& 24 & 123 & 147 & \textbf{0.008} & \textbf{66} & 0.038\\
        \cline{2-9}
        & 128 &256& 27 & 139 & 166 & \textbf{0.009} & \textbf{130} & 0.272\\
        \cline{2-9}
        & 256 &512& 30 & 154 & \textbf{184} & \textbf{0.01} & 258 & 2.1\\
        \cline{2-9}
        & 512 &1024& 33 & 170 & \textbf{203} &\textbf{ 0.011} & 516 & 795.5\\
        \cline{2-9}
        & 1024 &2048& 36 & 186 & \textbf{222} & \textbf{0.014} & \multicolumn{2}{c|}{\multirow{14}*{Timeout (15 min)}}\\
        \cline{2-7}
        & 2048 &4096& 39 & 202 & \textbf{241} & \textbf{0.019} & \multicolumn{2}{c|}{}\\
        \cline{2-7}
        & 4096 &8192& 42 & 218 & \textbf{260} & \textbf{0.028} & \multicolumn{2}{c|}{}\\
        \cline{2-7}
        & 8192 &16384& 45 & 234 & \textbf{279} & \textbf{0.048} & \multicolumn{2}{c|}{}\\
        \cline{2-7}
        & 16384 &32768& 48 & 250 & \textbf{298} & \textbf{0.09} & \multicolumn{2}{c|}{}\\
        \cline{2-7}
        & 32768 &65536& 51 & 266 & \textbf{317} & \textbf{0.182} & \multicolumn{2}{c|}{}\\
        \cline{2-7}
        & 65536 &131072& 54 & 282 & \textbf{336} & \textbf{0.418} & \multicolumn{2}{c|}{}\\
        \cline{2-7}
        & 131072 &262144& 57 & 298 & \textbf{355} & \textbf{0.956} & \multicolumn{2}{c|}{}\\
        \cline{2-7}
        & 262144 &524288& 60 & 314 & \textbf{374} & \textbf{2.57} & \multicolumn{2}{c|}{}\\
        \cline{2-7}
        & 524288 &1048576& 63 & 330 & \textbf{393} & \textbf{7.8} & \multicolumn{2}{c|}{}\\
        \cline{2-7}
        & 1048576 &2097152& 66 & 346 & \textbf{412} & \textbf{26.15} & \multicolumn{2}{c|}{}\\
        \cline{2-7}
        & 2097152 &4194304& 69 & 362 & \textbf{431} & \textbf{95.57} & \multicolumn{2}{c|}{}\\
        \cline{2-7}
        & 4194304 &8388608& 72 & 378 & \textbf{450} & \textbf{180.33} & \multicolumn{2}{c|}{}\\
        \cline{2-7}
        & 8388608 &16777216& \multicolumn{4}{c|}{Timeout (15 min)} & \multicolumn{2}{c|}{}\\
        \hline
    \end{tabular}
    }
    \caption{The performance of CFLOBDDs against BDDs for increasing numbers of qubits.
    }
    \label{Ta:quantum-table-detailed}
\end{table}

\begin{table}[!tb]
    \centering
    \begin{adjustbox}{width=1\textwidth}
    \begin{tabular}{|c|c|c|c|c|c|c|c|c|}
    \hline
        \multirow{2}*{Benchmark} & \multirow{2}*{\#Qubits} & 
        \multirow{2}{*}{
        \begin{tabular}{@{}c@{}}
            \#Boolean\\
            Variables
        \end{tabular}
        } 
        & \multicolumn{4}{c|}{CFLOBDD} & \multicolumn{2}{c|}{BDD}\\
        \cline{4-9}
        & & & \#Vertices & \#Edges & Total & Time (sec) & \#Nodes & Time (sec)\\
        \hline
        \multirow{3}*{Simon's Alg.} & 16 &64& 583 & 16335 & 16918 & 0.71 & \textbf{5512} & \textbf{0.275}\\
        \cline{2-9}
        & 32 &128& 123611 & 14096110 & 14219721 & 443.09 & \textbf{80243} & \textbf{3.31}\\
        \cline{2-9}
        & 64 &256& \multicolumn{4}{c|}{Timeout (90 min)} & \multicolumn{2}{c|}{Timeout (90 min)}\\
        \hline
        \multirow{3}*{QFT} & 4 &8& 7 & 73 & 80 & 0.001 & \textbf{31} & \textbf{0.0001}\\
        \cline{2-9}
        & 8 &16& 9  & 572 & 581 & 0.034 & \textbf{255} & \textbf{0.001}\\
        \cline{2-9}
        & 16 &32& 15 & 17868 & \textbf{17883} & 0.128 & 65535 & \textbf{0.098} \\
        \cline{2-9}
        & 32 &64& \multicolumn{4}{c|}{Timeout (15 min)} & \multicolumn{2}{c|}{Timeout (15 min)} \\
        \hline
        Shor's Alg. $(N,a) = (15,2)$ & 4 & 16 & 38 & 338 & 376 & 0.09 & \textbf{69} & \textbf{0.04}\\
        \hline
        Shor's Alg. $(N,a) = (21,2)$ & 5  &16& 72 & 877 & 949 & 2.13 & \textbf{136} & \textbf{0.72}\\
        \hline
        Shor's Alg. $(N,a) = (39,2)$ & 6  &16& 111 & 2443 & 2554 & \textbf{12.6} & \textbf{187} & 12.96\\
        \hline
        Shor's Alg. $(N,a) = (69,4)$ & 7  &16& 176 & 4331 & 4487 & 53.47 & \textbf{605} & \textbf{30.38}\\
        \hline
        Shor's Alg. $(N,a) = (95,8)$ & 7  &16& 216 & 4928 & 5144 & 53.47 & \textbf{974} & \textbf{41.47}\\
        \hline
        Shor's Alg. $(N,a) = (119,2)$ & 7  &16& 220 & 7533 & 7753 & 53.47 & \textbf{3606} & \textbf{44.95}\\
        \hline
        Shor's Alg. $(N,a) = (323,2)$ & 9  &32& \multicolumn{4}{c|}{Timeout (15min)} & \multicolumn{2}{c|}{Timeout (15min)}\\
        \hline
        \multirow{10}*{Grover's Alg.} & 16 &32& 17 & 91 & 108 & \textbf{0.009} & \textbf{47} & 0.214\\
        \cline{2-9}
        & 32 &64& 25 & 138 & 163 & \textbf{0.012} & \textbf{66} & 4.84\\
        \cline{2-9}
        & 64 &128& 38 & 212 & \textbf{250} & \textbf{0.018} & \multicolumn{2}{c|}{\multirow{8}*{Timeout (15 min)}} \\
        \cline{2-7}
        & 128 &256& 58 & 333 & \textbf{391} & \textbf{0.030} & \multicolumn{2}{c|}{}\\
        \cline{2-7}
        & 256 &512& 91 & 531 & \textbf{622} & \textbf{0.080} & \multicolumn{2}{c|}{}\\
        \cline{2-7}
        & 512 &1024& 151 & 886 & \textbf{1037} & \textbf{0.292} & \multicolumn{2}{c|}{}\\
        \cline{2-7}
        & 1024 &2048& 259 & 1535 & \textbf{1794} & \textbf{14.11} & \multicolumn{2}{c|}{}\\
        \cline{2-7}
        & 2048 &4096& 450 & 2674 & \textbf{3124} & \textbf{64.85} & \multicolumn{2}{c|}{}\\
        \cline{2-7}
        & 4096 & 8192 & 766 & 4569 & \textbf{5335} & \textbf{909.86} & \multicolumn{2}{c|}{}\\
        \cline{2-7}
        & 8192 & 16384 & \multicolumn{4}{c|}{Timeout (15 min)} & \multicolumn{2}{c|}{}\\
        \hline
    \end{tabular}
    \end{adjustbox}
    \caption{Table (cont.) of the performance of CFLOBDDs against BDDs for increasing numbers of qubits.
    }
    \label{Ta:quantum-table-detailed-contd}
\end{table}

\tablerefs{quantum-table-detailed}{quantum-table-detailed-contd} show the performance of CFLOBDDs and BDDs when simulating several well-known quantum algorithms, which are discussed in \sectref{quantum-algos}.
In each case, for both CFLOBDDs and BDDs, we used the interleaved-variable ordering.

For GHZ, the algorithms do not depend on an input;
the output is solely a function of the number of qubits used.
For BV, DJ, QFT, Simon's algorithm, Shor's algorithm, and Grover's algorithm, we ran each algorithm with 50 different randomly selected inputs, for each of the indicated number of qubits.
\tablerefs{quantum-table-detailed}{quantum-table-detailed-contd} report the average vertex and average edge counts (for CFLOBDDs), average node count (for BDDs), and average time taken.
In the case of Simon's algorithm, CFLOBDDs timed-out on 9 of the 50 test cases, whereas BDDs timed-out on 28 of the 50 test cases;
we report the average counts and average times for the test cases that did not time out.
BV, DJ, Simon's algorithm, Shor's algorithm, and Grover's algorithm make use of oracles created during a pre-processing step (see also \sectref{ExperimentalSetup});
we do not include the time for oracle construction in the execution time, but we do include it as part of the 15-minute/90-minute timeout threshold.
For the case of QFT, the input is one of the basis vectors selected randomly.
For 16 qubits and a timeout threshold of 15 minutes, QFT ran to completion in 11 of the 50 runs.
The numbers reported in \tableref{quantum-table-detailed-contd} are the averages for the 11 successful runs.
In the entries for Shor's algorithm, $N$ is the number being factored, and $a$ is the value used in the associated ``order-finding problem.''\footnote{
  Given $a$, such that $1 < a < N$, the order-finding problem is to find the smallest positive integer $r$ such that $a^r \equiv 1 (mod N)$.
    }


In several cases, the problem sizes that completed successfully using CFLOBDDs were dramatically larger than the sizes that completed successfully using BDDs.
With a 15-minute timeout, the number of qubits that CFLOBDDs can handle are 65,536 for GHZ, 524,288 for BV; 4,194,304 for DJ; and 4,096 for Grover’s Algorithm, besting BDDs by factors of
128$\times$, 1,024$\times$, 8,192$\times$, and 128$\times$, respectively.

\begin{figure}[bt!]
    \centering
    \begin{subfigure}[t]{0.495\linewidth}
    \includegraphics[width=\linewidth]{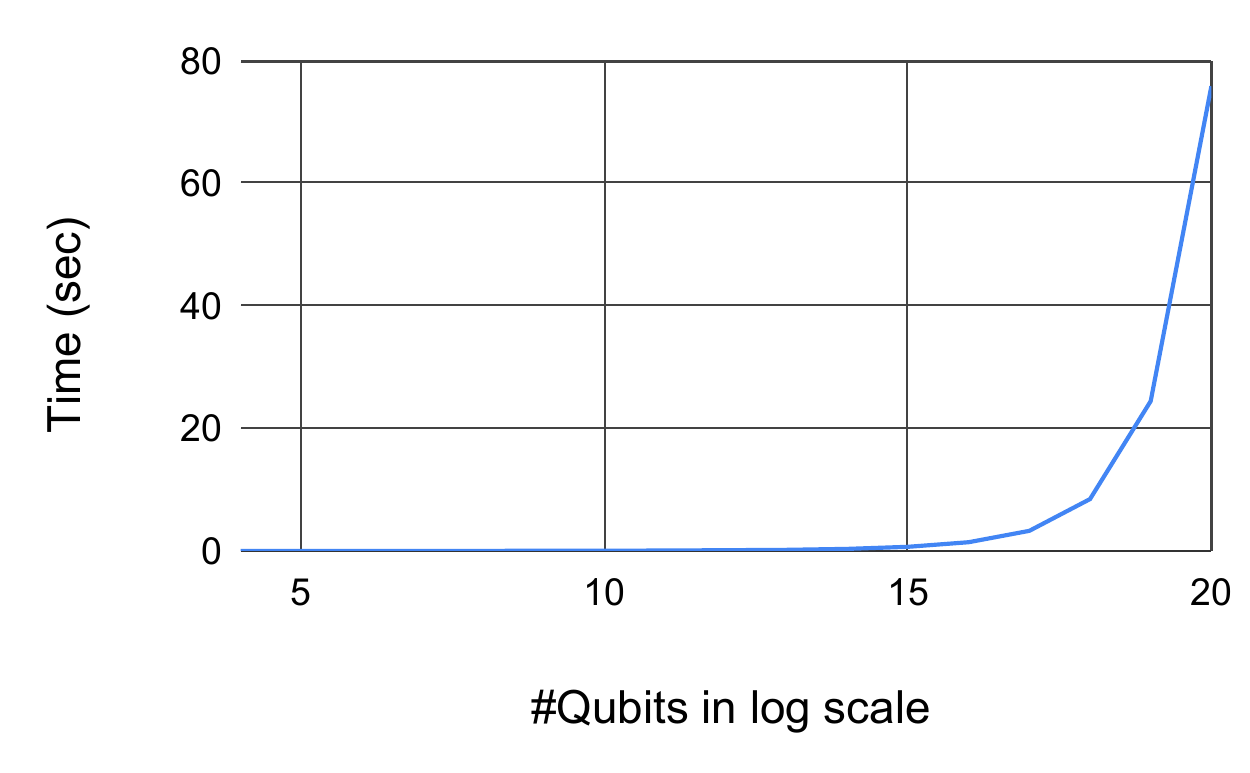}
    \caption{BV Algorithm}
    \end{subfigure}
    \begin{subfigure}[t]{0.495\linewidth}
    \includegraphics[width=\linewidth]{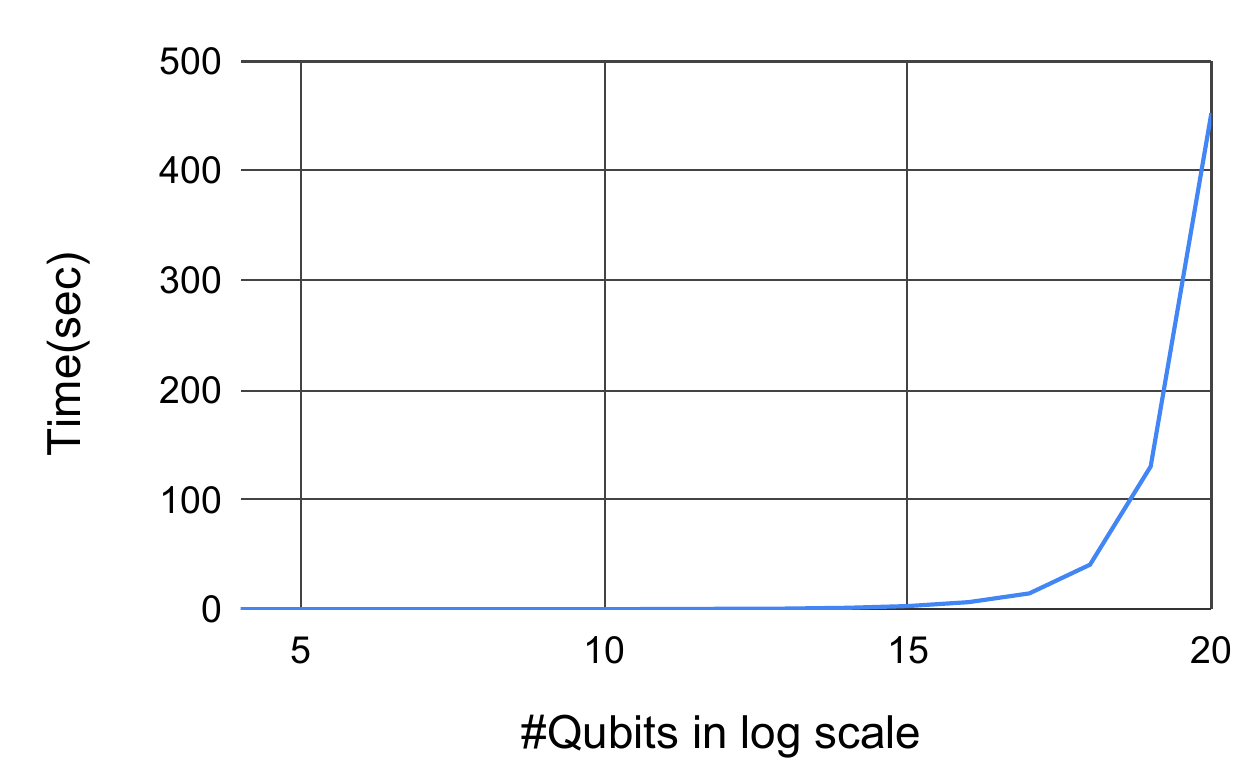}
    \caption{GHZ Algorithm}
    \end{subfigure}
    \begin{subfigure}[t]{0.495\linewidth}
    \includegraphics[width=\linewidth]{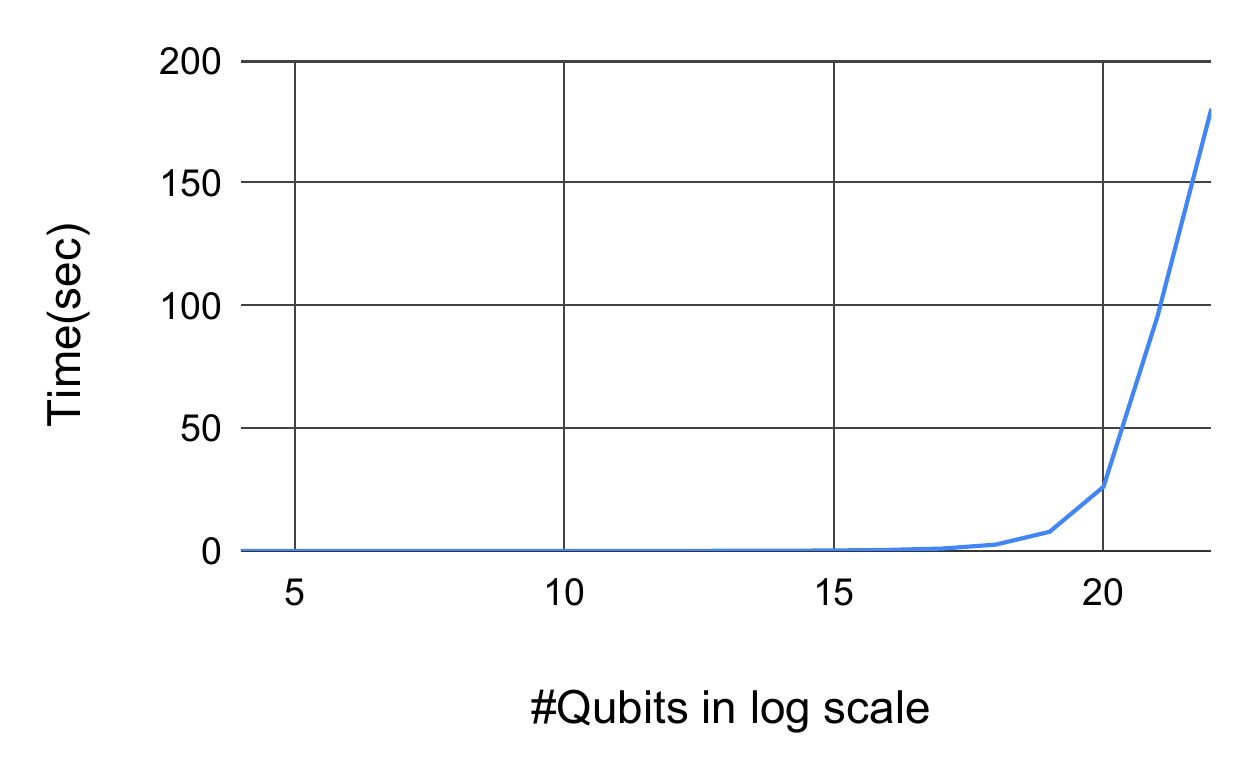}
    \caption{DJ Algorithm}
    \end{subfigure}
    \caption{\protect \raggedright
    Execution time (in seconds) vs.\ number of qubits (on a log scale) for three of the benchmarks.
    }
    \label{Fi:scale-cflobdds}
\end{figure}

We also ran the CFLOBDD simulations with a 90-minute timeout, both to understand how execution time scales, as a function of number of qubits, and to see how large a problem instance can be handled.
\figref{scale-cflobdds} shows the time taken (in seconds), with increasing numbers of qubits, for BV, GHZ, and DJ.
With a 90-minute timeout, the BV and GHZ algorithms ran to completion with $2^{20} = \textrm{1,048,576}$ qubits, and the DJ algorithm ran to completion with $2^{21} = \textrm{2,097,152}$ qubits.

For both CFLOBDDs and BDDs, the transition from a problem size that completes successfully to a problem size that fails is rather abrupt.
For all of the problems, the time reported for the CFLOBDD run with the largest number of qubits that completes successfully is well under 15 minutes.
Unfortunately, for the next larger run, oracle construction timed out after 15 minutes for the BV and DJ algorithms, and as a result we terminated the entire algorithm.
For Grover's algorithm, the number of bits for the floating-point representation is 100 for all runs, except for those with $\textrm{2,048}$, $\textrm{4,096}$, and $\textrm{8,192}$ qubits, for which we used 500, 750, and $\textrm{1,000}$ bits, respectively.
The increased cost of floating-point operations slows down matrix multiplications in Grover's algorithm, causing the $\textrm{8,192}$-qubit run to exceed 15 minutes.

\begin{figure}[bt!]
  \centering
  \begin{tabular}{@{\hspace{0ex}}l@{\hspace{0ex}}}
    \begin{subfigure}{1.0\linewidth}
      \includegraphics[width=0.495\linewidth]{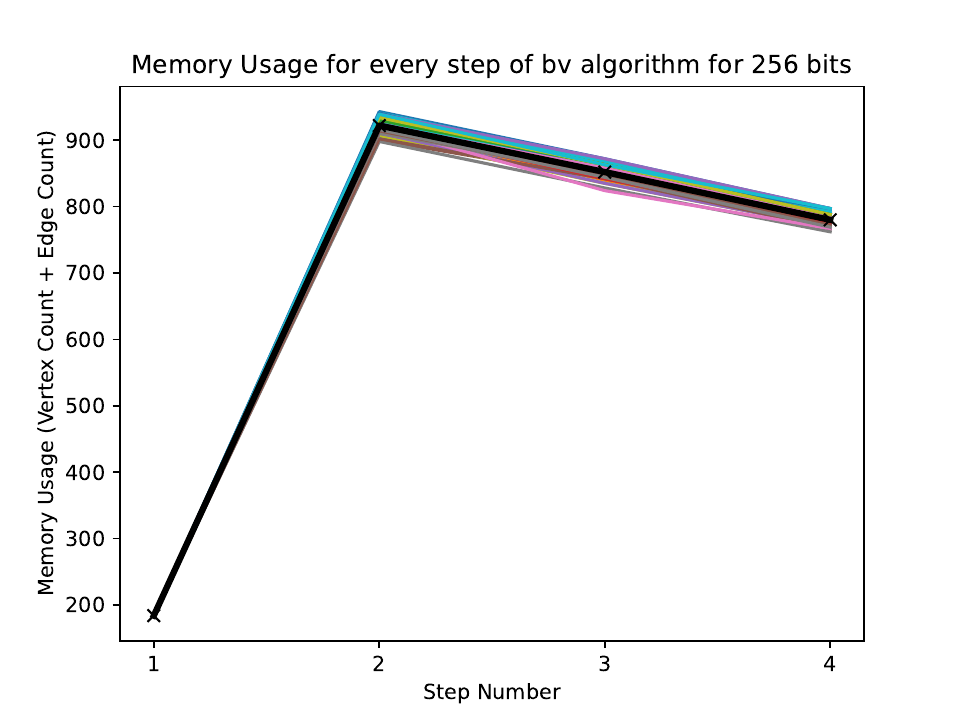}
      \includegraphics[width=0.495\linewidth]{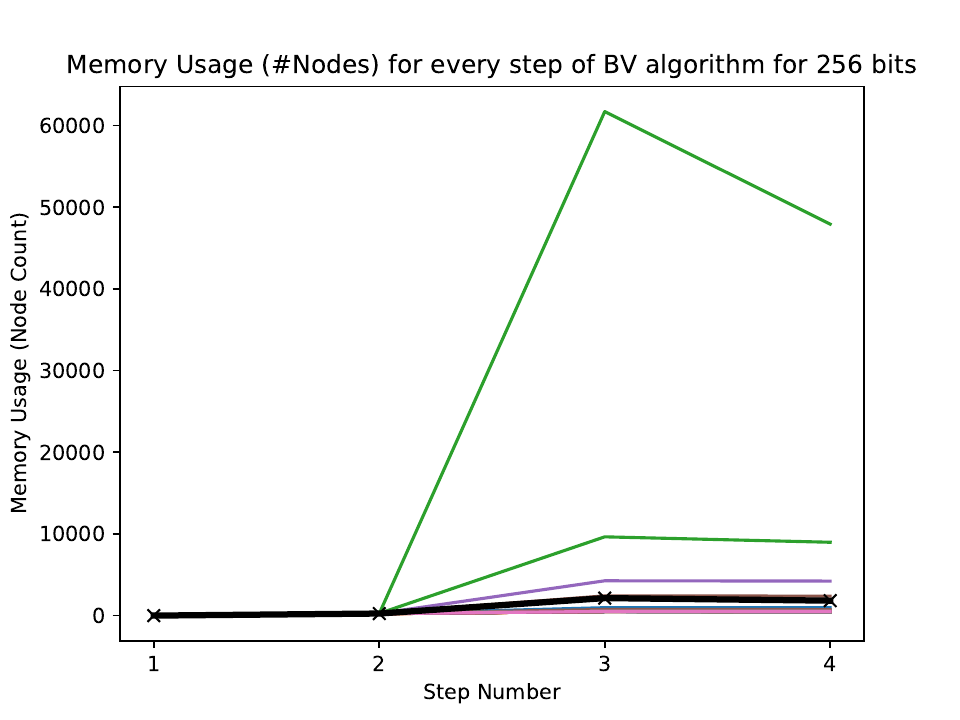}
      \caption{BV algorithm 256 qubits}
    \end{subfigure}
    \\
    \begin{subfigure}{1.0\linewidth}
      \includegraphics[width=0.495\linewidth]{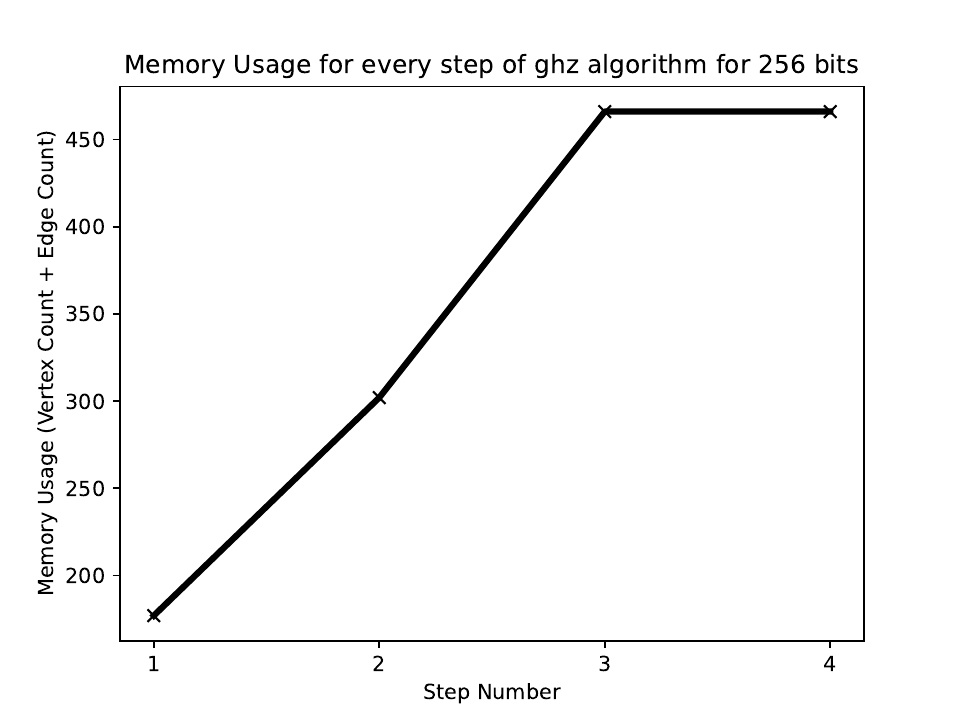}
      \includegraphics[width=0.495\linewidth]{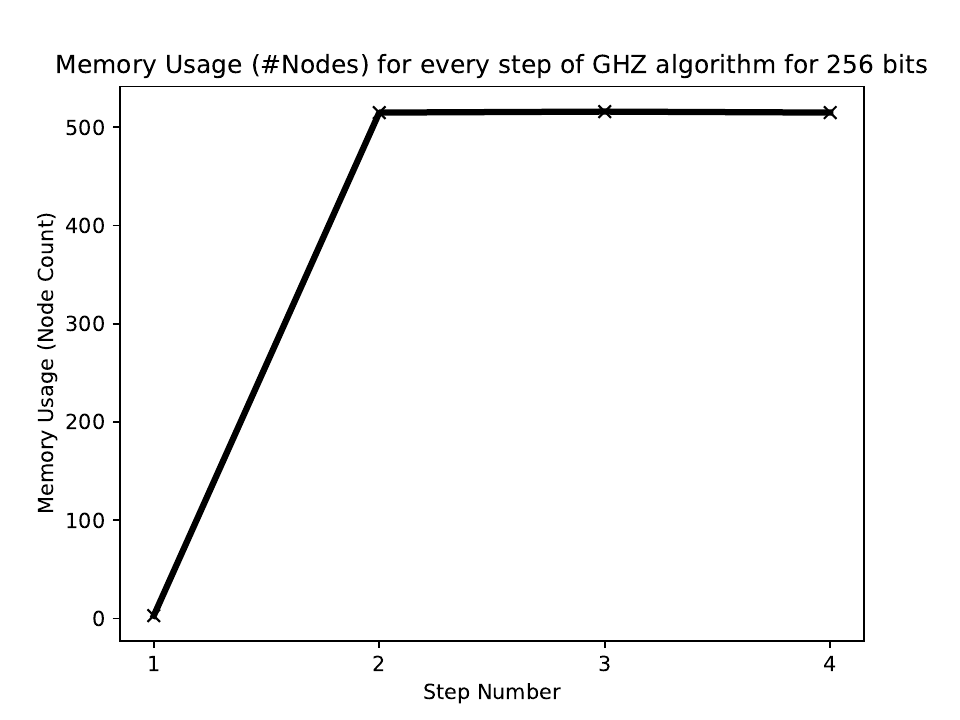}
      \caption{GHZ algorithm 256 qubits}
    \end{subfigure}
    \\
    \begin{subfigure}{1.0\linewidth}
      \includegraphics[width=0.495\linewidth]{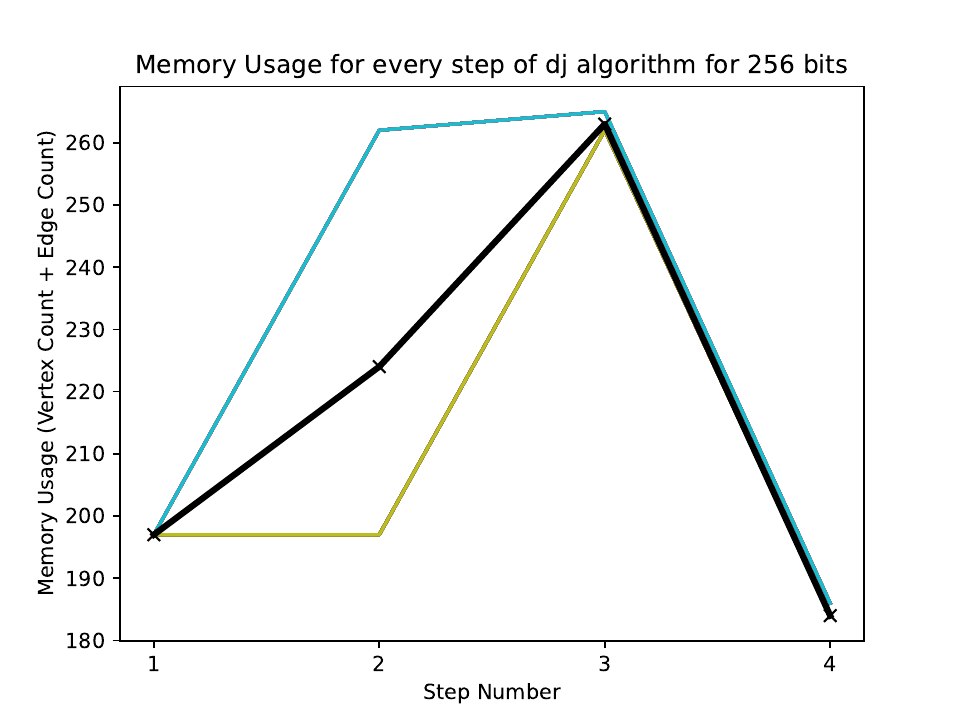}
      \includegraphics[width=0.495\linewidth]{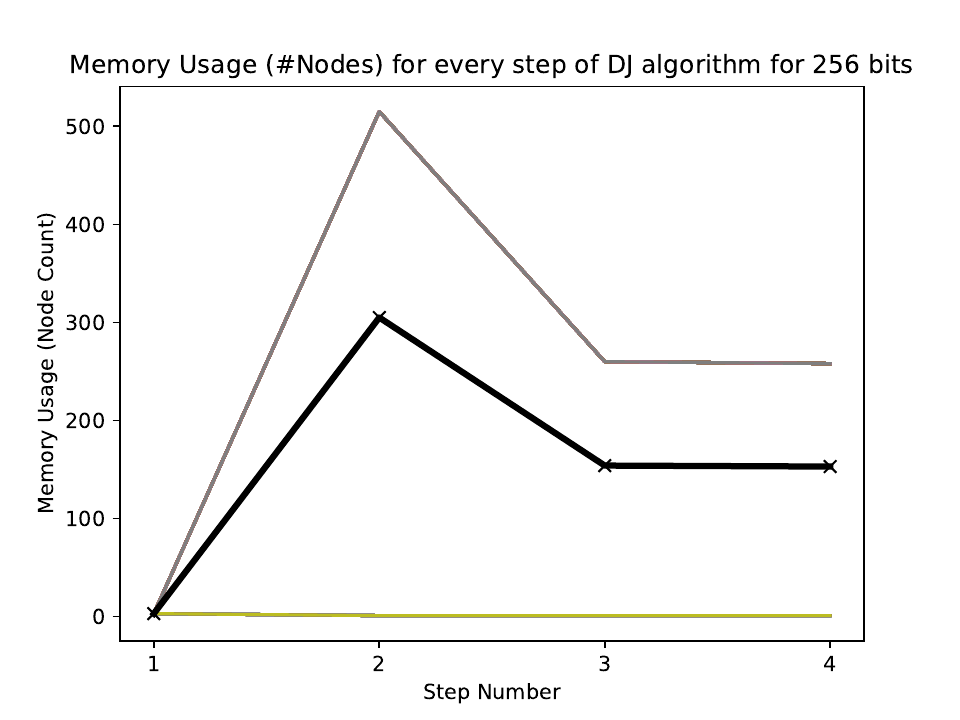}
      \caption{DJ algorithm 256 qubits}
    \end{subfigure}
    \\
  \end{tabular}
  \caption{\protect \raggedright 
    Evolution of size through the steps of the indicated algorithms. (Left: CFLOBDD-based simulation; right: BDD-based simulation.)
  }
  \label{Fi:inter-swell}
\end{figure}

\begin{figure}[bt!]
  \centering
  \begin{tabular}{@{\hspace{0ex}}l@{\hspace{0ex}}}
    \begin{subfigure}{1.0\linewidth}
      \includegraphics[width=0.495\linewidth]{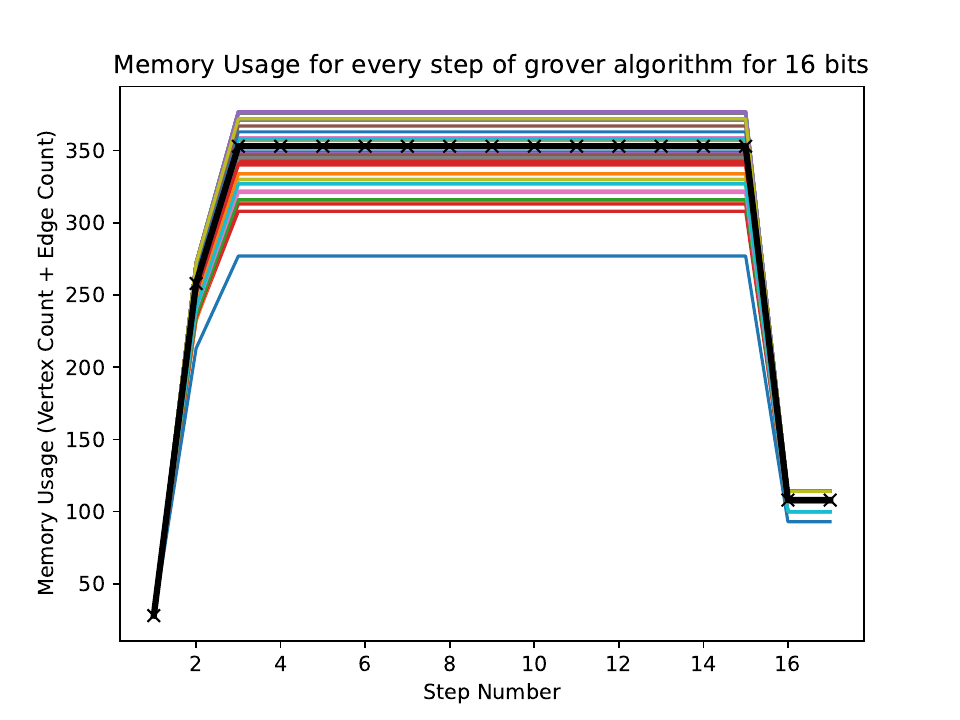}
      \includegraphics[width=0.495\linewidth]{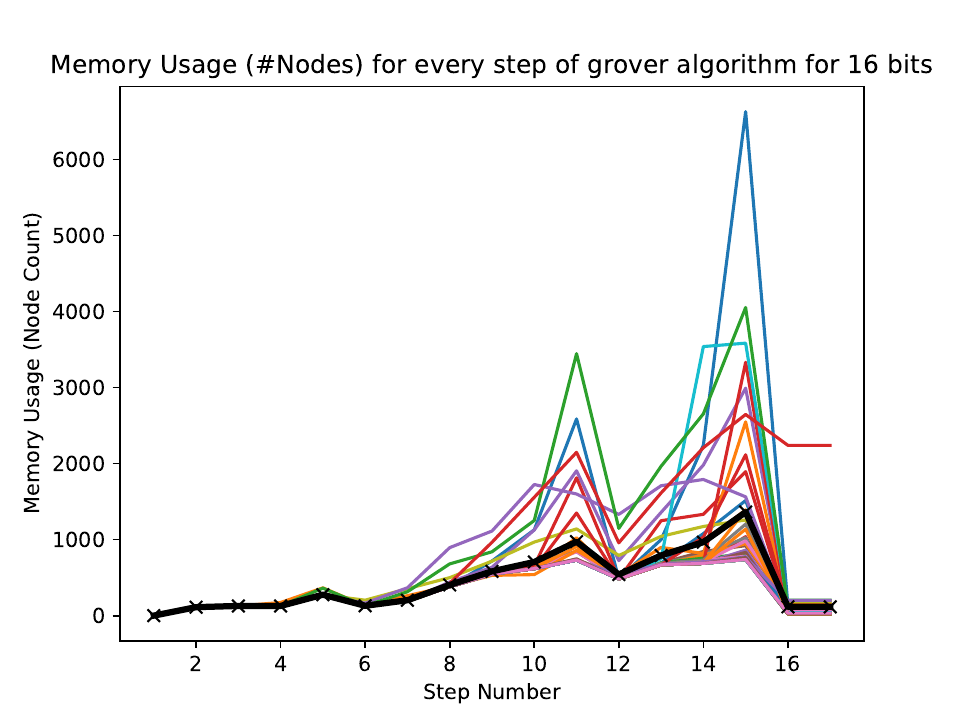}
      \caption{Grover's algorithm 16 qubits}
    \end{subfigure}
    \\
    \begin{subfigure}{1.0\linewidth}
      \centering
      \includegraphics[width=0.495\linewidth]{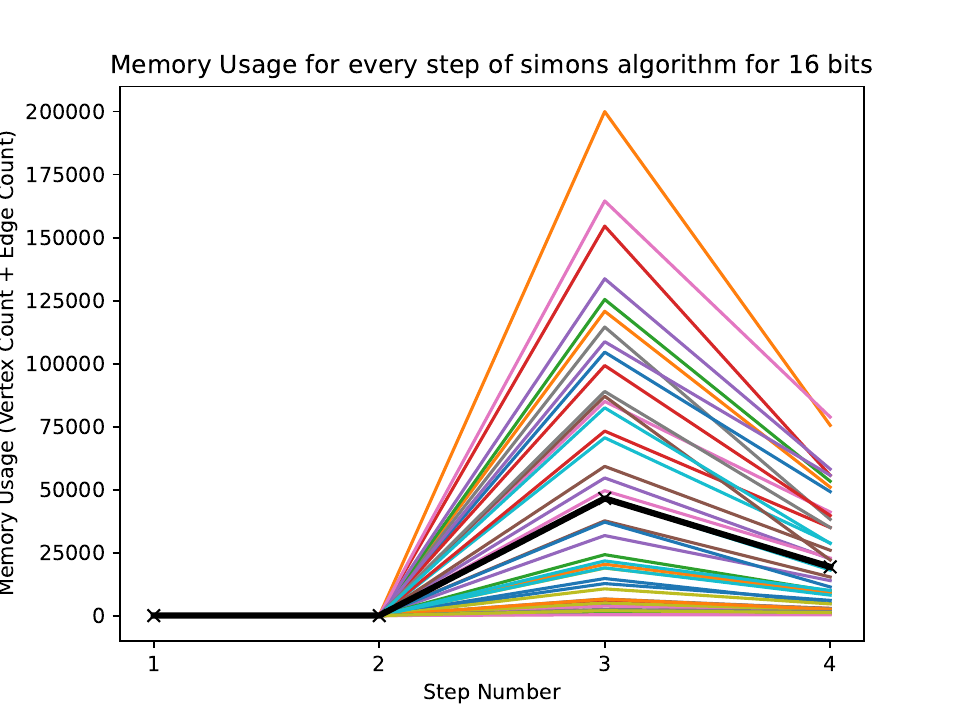}
      \includegraphics[width=0.495\linewidth]{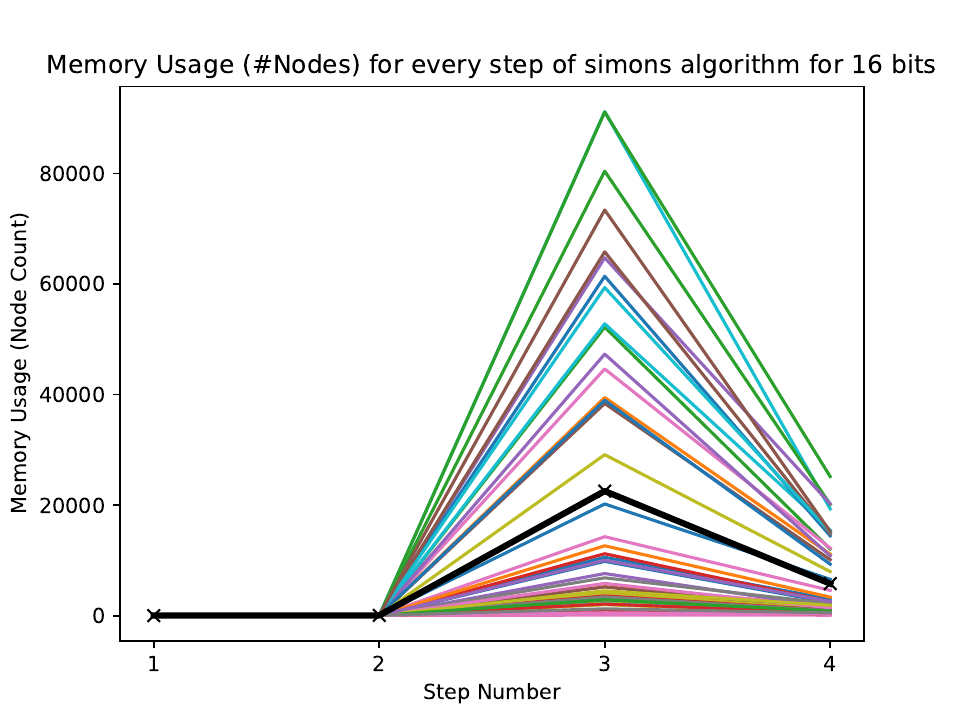}
      \caption{Simon's algorithm 16 qubits}
    \end{subfigure}
  \end{tabular}
  \caption{\protect \raggedright 
    Evolution of size through the steps of the indicated algorithms. (Left: CFLOBDD-based simulation; right: BDD-based simulation.)
  }
  \label{Fi:inter-swell-continued}
\end{figure}

\smallskip
\begin{mdframed}[innerleftmargin = 3pt, innerrightmargin = 3pt, skipbelow=-0.0em]
\textbf{Findings.}
For smaller numbers of qubits, the more-complex nature of the data structures used in CFLOBDDs resulted in slower execution times than with BDDs.
However, CFLOBDDs scaled much better than BDDs as the number of qubits increased, both in terms of memory (i.e., vertices + edges for CFLOBDDs, nodes for BDDs) and execution time.
In some cases, the problem sizes that completed successfully using CFLOBDDs were dramatically larger than the sizes that completed successfully using BDDs.
In particular, the number of qubits that could be handled using CFLOBDDs was larger---compared to BDDs---by a factor of
128$\times$ for GHZ; 1,024$\times$ for BV; 8,192$\times$ for DJ; and 128$\times$ for Grover's algorithm.
\end{mdframed}

\paragraph{Intermediate swell.}
In many of the algorithms, the initial and final CFLOBDD and BDD structures are of reasonable size, but there is an intermediate swell in size as the algorithm runs.
\figrefs{inter-swell}{inter-swell-continued} show how size evolves in the various steps of five of the algorithms during the CFLOBDD-based and BDD-based simulations.
The figures show how size evolves for all 50 runs, along with the average value at every step (highlighted in black).
\figref{inter-swell-continued}a shows that the CFLOBDD simulation of Grover's algorithm uses constant space from steps 3 to 15.
The explanation is that, although the state vector changes at each step, the size of the CFLOBDD representation of the state vector does not change.

\paragraph{Comparison with Tensor Networks.}
We also compared the performance of CFLOBDDs with
Quimb \cite{gray2018quimb}, a state-of-the-art quantum simulator.
\tableref{tensor-network-comp} shows the performance of our CFLOBDD implementation and Quimb on the previously discussed quantum benchmarks.
For the Quimb-based simulations of GHZ, BV, DJ, and Grover's algorithm, we used Matrix Product States (MPSs) \cite{vidal2003efficient,banuls2009matrix} and Matrix Product Operators (MPOs) \cite{verstraete2004matrix} in algorithms modeled after the ones described in \cite{woolfe2015matrix}.
For Simon's algorithm, we noticed that directly creating a circuit and performing contraction using Quimb led to better scalability than using MPS/MPOs.
For QFT, we tried both the standard circuit~\cite{nielsen2001quantum}
and the nearest-neighbor circuit mentioned in~\cite{fowler2004implementation}.
We found that the Quimb-based simulation results for both circuits are very similar, and only the former are reported here.
For Shor's algorithm, we use the $2n+3$ circuit from~\cite{beauregard2002circuit}, but the internal gates are created directly, as mentioned in \cite{woolfe2015matrix} (and hence the circuit only has $2n+1$ qubits).
For Grover's algorithm, we found that the maximum number of qubits that can be simulated using Quimb with a 15-minute timeout is 29 qubits.\footnote{
  With 32 qubits, Quimb takes 1496.6sec $\approx$ 25min.
}

These experiments show that, on some of the benchmarks, CFLOBDDs scale to much larger problem sizes than the Quimb tensor-network package, but on other benchmarks Quimb performs much better than CFLOBDDs.

\begin{table}[tb!]
    \centering
    \begin{adjustbox}{width=0.69\textwidth}
    \begin{tabular}{|c|c|c|c|}
    \hline
    Benchmark & \#Qubits & CFLOBDD (Time in sec) & Quimb (Time in sec)\\
    \hline
    \multirow{10}*{GHZ} & 16 & \textbf{0.005} & 0.222\\
    \cline{2-4}
    & 32 & \textbf{0.007} & 0.644\\
    \cline{2-4}
    & 64 & \textbf{0.010} & 2.29\\
    \cline{2-4}
    & 128 & \textbf{0.015} & 9.23\\
    \cline{2-4}
    & 256 & \textbf{0.027} & 40.31\\
    \cline{2-4}
    & 512 & \textbf{0.051} & 191.77\\
    \cline{2-4}
    & 1024 & \textbf{0.107} & \multirow{4}{*}{Timeout (15 min)}\\
    \cline{2-3}
    & \vdots & \vdots & \\ 
    \cline{2-3}
    & 65536 & \textbf{6.49} &\\
    \cline{2-3}
    & 131072 & Timeout (15 min) &\\
    \hline
    \multirow{10}*{BV} & 16 & \textbf{0.005} & 0.264\\
    \cline{2-4}
    & 32 & \textbf{0.006} & 0.773 \\
    \cline{2-4}
    & 64 & \textbf{0.007} & 2.75 \\
    \cline{2-4}
    & 128 & \textbf{0.010} & 11.08 \\
    \cline{2-4}
    & 256 & \textbf{0.014} & 49.49\\
    \cline{2-4}
    & 512 & \textbf{0.025} & 243.69\\
    \cline{2-4}
    & 1024 & \textbf{0.038} & \multirow{4}{*}{Timeout (15 min)}\\
    \cline{2-3}
    & \vdots & \vdots & \\
    \cline{2-3}
    & 524288 & \textbf{75.80} & \\
    \cline{2-3}
    & 1048576 & Timeout (15 min) & \\
    \hline
    \multirow{10}*{DJ} & 16 & \textbf{0.006} & 0.256\\
    \cline{2-4}
    & 32 & \textbf{0.008} & 0.761 \\
    \cline{2-4}
    & 64 & \textbf{0.008} & 2.75 \\
    \cline{2-4}
    & 128 & \textbf{0.009} & 11.18 \\
    \cline{2-4}
    & 256 & \textbf{0.010} & 49.33 \\
    \cline{2-4}
    & 512 & \textbf{0.011} & 243.01 \\
    \cline{2-4}
    & 1024 & \textbf{0.014} & \multirow{4}{*}{Timeout (15 min)} \\
    \cline{2-3}
    & \vdots & \vdots &\\
    \cline{2-3}
    & 4194304 & \textbf{180.33} & \\
    \cline{2-3}
    & 8388608 & Timeout (15 min) & \\
    \hline
    \multirow{4}*{Simon's Alg.} & 16 & \textbf{0.71} & 2.56 \\
    \cline{2-4}
    & 32 & 443.09 & \textbf{17.34} \\
    \cline{2-4}
    & 64 & \multirow{2}{*}{Timeout (15 min)} & \textbf{267} \\
    \cline{2-2}\cline{4-4}
    & 128 &  & Timeout (15min) \\
    \hline
    \multirow{8}*{QFT} & 4 & \textbf{0.001} & 0.023\\
    \cline{2-4}
    & 8 & \textbf{0.034} & 0.035 \\
    \cline{2-4}
    & 16 & 0.128 & \textbf{0.074} \\
    \cline{2-4}
    & 32 & \multirow{5}{*}{Timeout (15 min)} & \textbf{0.231} \\
    \cline{2-2}\cline{4-4}
    & 64 &  & \textbf{1.64} \\
    \cline{2-2}\cline{4-4}
    & 128 &  & \textbf{10.32} \\
    \cline{2-2}\cline{4-4}
    & 256 &  & \textbf{103.65} \\
    \cline{2-2}\cline{4-4}
    & 512 &  & Timeout (15min) \\
    \hline
    Shor's Alg. $(N,a) = (15,2)$ & 4 & 0.09 & \textbf{0.08} \\
    \hline
    Shor's Alg. $(N,a) = (21,2)$ & 5 & 2.13 & \textbf{0.1} \\
    \hline
    Shor's Alg. $(N,a) = (39,2)$ & 6 & 12.6 & \textbf{0.11} \\
    \hline
    Shor's Alg. $(N,a) = (69,4)$ & 7 & 53.47 & \textbf{0.12} \\
    \hline
    Shor's Alg. $(N,a) = (95,8)$ & 7 & 42.8 & \textbf{0.12} \\
    \hline
    Shor's Alg. $(N,a) = (119,2)$ & 7 & 64.8 & \textbf{0.12} \\
    \hline
    Shor's Alg. $(N,a) = (323,2)$ & 9 & \multirow{4}{*}{Timeout (15 min)} & \textbf{0.27} \\
    \cline{1-2}\cline{4-4}
    \vdots & \vdots &  & \vdots\\
    \cline{1-2}\cline{4-4}
    Shor's Alg. $(N,a) = (6085,8)$ & 12 &  & \textbf{107.28} \\
    \cline{1-2}\cline{4-4}
    Shor's Alg. $(N,a) = (11611,2)$ & 13 &  & Out of Memory \\
    \hline
    \multirow{5}{*}{Grover's Alg.} & 16 & \textbf{0.009} & 3.26 \\
    \cline{2-4}
    & 32 & \textbf{0.012} & \multirow{4}{*}{Timeout (15 min)}\\
    \cline{2-3}
    & \vdots & \vdots & \\
    \cline{2-3}
    & 4096 & \textbf{909.86} & \\
    \cline{2-3}
    & 8192 & Timeout (15 min) & \\
    \hline
    \end{tabular}
    \end{adjustbox}
    \caption{Performance of CFLOBDDs against Quimb on quantum benchmarks for different numbers of qubits.
    }
    \label{Ta:tensor-network-comp}
\end{table}

\paragraph{What allows CFLOBDDs to perform so well on Grover's algorithm?}
In each run of the CFLOBDD simulation of Grover's algorithm,
a random 4096-bit string $s$ is chosen, then the Grover oracle matrix is constructed, along with the Grover diffusion operation, which are then multiplied together.
A version of Grover's algorithm based on repeated squaring of the product matrix is carried out (via operations that use the cumulative-product matrix---which depends on $s$---but the operations are oblivious to the value of $s$ itself);
the algorithm’s answer $s'$ is retrieved;
and finally $s$ and $s'$ are compared to make sure that the computed result is correct.

The reason that this process is space-efficient is that the Grover oracle is basically a ``-1 hot encoding'' of $s$, and thus can be constructed by an algorithm that is a mixture of the principles used in the algorithms for constructing the representations of (i) projection functions (\sectref{ProjectionFunctions}), and (ii) the identity matrix (\sectref{IdentityMatrix}).
In the largest cases of Grover's algorithm that completed successfully within 15 minutes, the matrix has dimensions $2^{4096} \times 2^{4096}$;
all off-diagonal entries are 0;
and all diagonal entries are 1 except for the $(s,s)$ entry, which is -1.
To represent this matrix, one needs 8,192 = $2^{13}$ Boolean variables:
4,096 for the row-index and 4,096 for the column-index.
There is a CFLOBDD representation of this matrix whose highest-level grouping is at level 13---thus, there are 14 levels in total, counting level 0.
Moreover, the CFLOBDD has only a constant number of groupings at each of the 14 levels, so the matrix is one for which the CFLOBDD representation exhibits double-exponential compression.

Although multiplication of matrices represented by CFLOBDDs is not particularly efficient (see the last row of \tableref{algo-list-cflobdds}), there is little or no infill caused by the repeated-squaring operations, and so the matrix representation has only a limited amount of intermediate swell.
(See the left-hand graph in \figref{inter-swell-continued}(a) for a plot of memory usage for the CFLOBDD implementation of Grover's algorithm for 16 qubits.)

\section{Related Work}
\label{Se:RelatedWork}

\paragraph{Some History}

\Supplemental{
CFLOBDDs were devised in the late 1990s;
however, except for a rejected US patent application posted on the USPTO site in 2002 \cite{PatentApplication:Reps02}, nothing about them has ever been published.
The idea and the preliminary implementation were set aside in about 2002 due to not having found an application on which CFLOBDDs performed better than BDDs---other than some of the recursively defined spectral transforms, such as the Reed-Muller, inverse Reed-Muller, Hadamard (\sectref{Separation:AdditionRelation}), and Boolean Haar wavelet transforms \cite{Book:HMM85}.

In late 2013, Reps read Lipton's blog post from 2009, ``BDD's and factoring'' \cite{Blog:PEqNP:06:16:2009}.
In it, Lipton sketched a proposal for what he called \emph{Pushdown BDDs}, a BDD-like structure based on nondeterministic pushdown automata. 
He speculated that even if the multiplication relation were of inherently exponential complexity for Pushdown BDDs, if the constant was small enough they could represent a threat to factoring-based cryptography.
The realization that CFLOBDDs were closely related to a deterministic version of Pushdown BDDs, caused Reps to re-examine what relations could be expressed efficiently by a CFLOBDD, and to discover that $\ADD_n$ (\sectref{Separation:AdditionRelation}) was such a relation.\footnote{
  As were a few others, such as Gray codes, with the interleaved-variable order.
}
Reps presented these results to Lipton in 2017, who suggested applying CFLOBDDs to quantum simulation.
That effort was taken up by Sistla in summer 2018.
This paper combines the heretofore unpublished material from the patent application with Sistla's results on CFLOBDD-based quantum simulation.
}{
CFLOBDDs were devised in the late 1990s;
however, except for a rejected US patent application posted on the USPTO site in 2002 [citation omitted to preserve anonymity], nothing about them has ever been published.
The idea and the preliminary implementation were set aside in about 2002 due to not having found an application on which CFLOBDDs performed better than BDDs---other than some of the recursively defined spectral transforms, such as the Reed-Muller, inverse Reed-Muller, Hadamard (\sectref{Separation:AdditionRelation}), and Boolean Haar wavelet transforms \cite{Book:HMM85}.

In 2009, Lipton published a blog post \cite{Blog:PEqNP:06:16:2009} in which he sketched a proposal for what he called \emph{Pushdown BDDs}, a BDD-like structure based on nondeterministic pushdown automata.
He speculated that even if the multiplication relation were of inherently exponential complexity for Pushdown BDDs, if the constant was small enough they could represent a threat to factoring-based cryptography.
The realization that CFLOBDDs were closely related to a deterministic version of Pushdown BDDs, caused us to re-examine what relations could be expressed efficiently by a CFLOBDD, and to discover that $\ADD_n$ (\sectref{Separation:AdditionRelation}) was such a relation.\footnote{
  As were a few others, such as Gray codes, with the interleaved-variable order.
}
In a discussion of these results with Lipton in 2017, he suggested applying CFLOBDDs to quantum simulation.
That effort was taken up in summer 2018.
This paper combines the heretofore unpublished material from the patent application with our recent results on CFLOBDD-based quantum simulation.
}

In the best case, a CFLOBDD for a function $f$ can be double-exponentially smaller than the decision tree for $f$.
There is a sense in which ROBDDs are incapable of such a degree of compression.
\emph{Quasi-reduced BDDs} are the version of BDDs in which variable ordering is respected, but don't-care nodes are \emph{not} removed (i.e., plies are not skipped), and thus all paths from the root to a leaf have length $n$, where $n$ is the number of variables.
The size of a quasi-reduced BDD is at most a factor of $n+1$ larger than the size of the corresponding ROBDD \cite[Thm.\ 3.2.3]{Book:Wegener00}.
Thus, although ROBDDs can give better-than-exponential compression compared to decision trees, what one has is not double-exponential compression:
at best, it is linear compression of exponential compression.
Moreover, in \sectref{efficient-relations}, we showed that the CFLOBDD for a function $g$ can be exponentially smaller than any ROBDD for $g$.

\paragraph{Pre-History: Interprocedural Path Profiling}

The idea behind CFLOBDDs was inspired by an obstacle encountered in interprocedural path profiling.
In path profiling, a program is instrumented with code that counts the number of times particular path fragments of the program are executed.
Ball and Larus \cite{micro:BL96} devised a clever technique that allowed \emph{intra}procedural path profiling to be carried out with low run-time overhead.
Melski and Reps \cite{cc:MR99} generalized their technique to support profiling of path fragments that cross procedure boundaries.
The Melski-Reps scheme performs an interprocedural analog of a transformation used in the Ball-Larus scheme:
\begin{itemize}
  \item 
    The Ball-Larus cycle-elimination transformation is performed on the CFG of each procedure.
  \item
    The interprocedural CFG is further transformed to ``short-circuit'' paths through recursive call sites, in effect, cutting each cycle that is present because of recursive procedure calls.
\end{itemize}
Interestingly, these two transformations do not leave one with an acyclic interprocedural CFG;
the graph still contains cyclic paths.
For example, via different call-sites, a path can exit a procedure and re-enter it from a distinct call-site.
However---and this is the crucial fact on which this approach to interprocedural path profiling rests---there are only a \emph{finite} number of ''realizable paths,'' in which each exit-to-return-site edge taken to a return site matches a preceding call-to-start edge from the corresponding call site.
(Only the realizable paths represent ``observable path fragments'' that are to be reported during an execution run.)

The drawback of the Melski-Reps scheme is that there can be an enormous number of realizable paths in the transformed interprocedural CFG.
In fact, the number of realizable paths can be \emph{doubly exponential} in the size of the interprocedural CFG.
For path profiling this situation is terrible:
it means that the edge numbers that the instrumentation code has to manipulate can be enormous.
For instance, Melski \cite{PersonalComm:Melski98} found that one 20,000-line program had $2^{\textrm{400,000}}$ such paths---{which would require the path-profiling instrumentation code for the program to manipulate 400,000-bit numbers!
}
Nevertheless, these drawbacks were the genesis of the ideas used in CFLOBDDs:
\begin{itemize}
  \item 
    They showed that certain classes of cyclic structures, which contain an  \emph{infinite} number of paths, could still be interpreted as containing only a \emph{finite} collection of realizable paths (defined by a suitable ``matched-path'' condition).
  \item
    Whereas one can sometimes obtain exponential compression when a collection of paths are represented as a DAG rather than as a tree, by passing to cyclic graphs augmented with a matched-path condition, one can sometimes obtain \emph{doubly exponential compression} of a tree.
\end{itemize}
These properties suggested the basic structure of CFLOBDDs:
\begin{itemize}
  \item
    Each level-$i$ grouping in a CFLOBDD can be thought of as a loop-free procedure containing calls to level-$(i\textrm{--}1)$ groupings.
  \item
    Because a given grouping makes ``calls'' only on groupings at a strictly lower level, the ``program'' is non-recursive.
  \item
    Returns from procedure calls are somewhat non-standard in that ``control'' is not always returned to the ``call-site.''
    Instead, returns in CFLOBDDs resemble \emph{exit splitting} \cite{PLDI:BGS97}, where control can be returned to one of several return points in the caller.
  \item
    Such an interprocedural CFG is potentially cyclic, but contains only a finite number of realizable paths.
    In the \emph{best case} (which corresponds to the \emph{worst case} for path profiling), the number of realizable paths is doubly exponential in the size of the interprocedural CFG.
\end{itemize}
From here, it was only a short distance to CFLOBDDs---how to interpret such graphs as representations of Boolean functions; how to implement the standard BDD operations; how to maintain canonicalness; etc.
The feature that threatened to sink the path-profiling scheme---doubly exponential \emph{explosion}---became the linchpin of a doubly exponential \emph{compressed} representation of Boolean functions.
Metaphorically, a frog was turned into a prince.

\paragraph{Comparison with BDD variants.}
Over the years, many variants of BDDs have been proposed \cite{Collection:SF96}. 
These data structures can be broadly divided into three families:
ones that make use of weights on edges,
ones that do not use edge weights, and
ones that allow the underlying graph to have cycles.

Examples of (acyclic) edge-weighted BDD variants include 
EVBDDs \cite{dac:LS92} and FEVBDDs \cite{tafertshofer1997factored}. 
If the weights are allowed to be unboundedly large, a polynomial-sized data structure of this sort can be used to encode a decision tree that is double-exponentially larger. However, to the best of our knowledge, such double-exponential compression is impossible when the weights are required to use a constant number of bits. 

Unweighted BDD variants include Multi-Terminal BDDs \cite{dac:CMZFY93,CMU:CS-95-160}, Algebraic Decision Diagrams \cite{iccad:BFGHMPS93}, Free Binary Decision Diagrams (FBDDs) \cite[\S6]{Book:Wegener00}, Binary Moment Diagrams (BMDs) \cite{dac:BC95}, Hybrid Decision Diagrams (HDDs) \cite{iccad:CFZ95}, Differential BDDs \cite{LNCS:AMU95}, and Indexed BDDs (IBDDs) \cite{toc:JBAAF97}. 
Several of these  BDD variants offers exponential compression over classical BDDs.
However, because FBDDs, BMDs, HDDs, and IBDDs that encode, e.g., the identity function, need to examine each variable, the exponential-separation argument for CFLOBDDs from \sectref{efficient-relations} carries over for all of these variants.

Cyclic BDD variants include Linear/Exponentially Inductive Functions (LIFs/EIFs) \cite{iccad:GF93,Thesis:Gupta94} and Cyclic BDDs (CBDDs) \cite{acsc:R99}.
Because they allow cycles, CFLOBDDs are closer to the structures in this category.
\Omit{
As discussed in \sectref{Introduction}, it was easiest in our theoretical arguments to make size comparisons between CFLOBDDs and quasi-reduced OBDDs.
In the best case, the latter exhibit an \emph{exponential} reduction in size, compared to the size of the decision tree for a Boolean function, whereas---again, in the best case---a CFLOBDD exhibits a double-exponential reduction in size.

In the best case, an ROBDD also yields a better-than-exponential compression in the size of the decision tree;
however, the principle by which this extra compression is achieved is somewhat \emph{ad hoc}, and its effect tends to dissipate as ROBDDs are combined to build up representations of more complicated functions.  
}
\Omit{
A number of generalizations of BDDs have been proposed \cite{Collection:SF96}, including Multi-Terminal BDDs \cite{dac:CMZFY93,CMU:CS-95-160}, Algebraic Decision Diagrams (ADDs) \cite{iccad:BFGHMPS93} (used in~\sectref{evaluation}), Binary Moment Diagrams (BMDs) \cite{dac:BC95}, Hybrid Decision Diagrams (HDDs) \cite{iccad:CFZ95}, and Differential BDDs \cite{LNCS:AMU95}.
A number of these also achieve various kinds of exponential improvement over BDDs on some examples.

CFLOBDDs are unlike these structures in that they are all based on acyclic graphs, whereas CFLOBDDs use \emph{cyclic} graphs.
The key innovation behind CFLOBDDs is the combination of cyclic graphs with the matched-path principle (discussed in \sectref{MatchedPaths}).
}
\Omit{
The matched-path principle lets us give the correct interpretation of
a certain class of cyclic graphs as representations of functions
over Boolean-valued arguments.
It also allows us to perform operations on functions
represented as CFLOBDDs via algorithms that are not much more
complicated than their BDD counterparts.
Finally, the matched-path principle is also what allows a CFLOBDD to
be, in the best case, exponentially smaller than the corresponding
BDD.
}
The differences between CFLOBDDs and these representations can be characterized as follows:
\begin{itemize}
  \item
    The aforementioned representations all make use of numeric/arithmetic annotations on the edges of the graphs used to represent functions over Boolean arguments, rather than the matched-path principle that is the basis of CFLOBDDs.
    Matched paths can be characterized in terms of a context-free language of matched parentheses, rather than in terms of numbers and arithmetic (see \eqref{MatchedPathGrammar}).
  \item
    An essential part of the design of LIFs and EIFs is that the BDD-like subgraphs in them are connected in very restricted ways.
    In contrast, in CFLOBDDs, different groupings at the same level (or different levels) can have very different kinds of connections in them.
  \item
    Similarly, CBDDs require that there be some fixed BDD pattern that is repeated over and over in the structure;
    a given function uses only a few such patterns.  With CFLOBDDs, there can be many reused patterns (i.e., in the lower-level groupings in CFLOBDDs).
  \item
    CBDDs are not canonical representations of Boolean functions, which complicates the algorithms for performing certain operations on them, such as the operation to determine whether two CBDDs represent the same function.
  \item
    The layering in CFLOBDDs serves a different purpose than the layering found in LIFs/EIFs and CBDDs.
    In the latter representations, a connection from one layer to another serves as a jump from one BDD-like fragment to another BDD-like fragment.
    In CFLOBDDs, only the lowest layer (i.e., the collection of level-0 groupings) consists of BDD-like fragments (and just two very simple ones at that);
    it is only at level 0 that the values of variables are interpreted.
    As one follows a matched path through a CFLOBDD, the connections between the groupings at levels above level 0 serve to encode which variable is to be interpreted next.
\end{itemize}
LIFs/EIFs/CBDDs could be generalized by replacing BDD-like subgraphs in them with CFLOBDDs.
\Omit{
Similarly, other variations on BDDs \cite{Collection:SF96}, such as EVBDDs \cite{dac:LS92}, BMDs \cite{dac:BC95}, *BMDs \cite{dac:BC95}, and HDDs \cite{iccad:CFZ95}, which are all based on DAGs, could be generalized to use cyclic data structures and matched paths, along the lines of CFLOBDDs, and we believe that there could be great benefit to doing so.
}

Other data structures that generalize BDDs are representations like Sentential Decision Diagrams (SDDs)~\cite{darwiche2011sdd} and Variable Shift SDDs \cite{nakamura2020variable}. These data structures generalize BDDs by assuming a tree-shaped ordering over variables, and there are functions for which these data structures offer double-exponential compression over decision trees and an exponential compression over BDDs.
In CFLOBDDs, a grouping $g$ can have multiple middle vertices that reuse the same B-connection grouping $b$, as long as the return edges for the different invocations of $b$ use different mappings to $g$'s exit vertices.
This ``contextual rewiring'' gives CFLOBDDs greater ability to reuse substructures than SDDs and VS-SDDs.
(Moreover, $b$ can also be used as the A-connection grouping of $g$.)
SDDs and VS-SDDs (and their quantitative generalizations, such as Probabilistic SDDs~\cite{kisa2014probabilistic}) have not, so far, been used in matrix computations, and implementations of operations such as Kronecker product and matrix multiplication based on these structures are unknown, 
which meant that we could not use them in our quantum-simulation experiments.
We did compare CFLOBDDs against SDDs
for two of the micro-benchmarks, and found that CFLOBDDs were much faster (\tableref{micro-benchmarks-table}).
However, the relationship between these representations and CFLOBDDs merits future study.

\paragraph{Relationship to PDSs, NWAs, etc.}
As noted in \sectref{Introduction}, CFLOBDDs can be seen as a generalization of BDDs in which a restricted form of procedure call is permitted.
As such, CFLOBDDs are similar to various structures used in the model-checking community, namely, Hierarchical FSMs (HFSMs) \cite{TOPLAS:ABEGRY05}, Push Down Systems (PDSs) \cite{CONCUR:BEM97,ENTCS:fww97}, and Nested-Word Automata (NWA) \cite{DLT:AM2006}, which all have the same flavor of ``graph plus procedure calls.''
\begin{itemize}
  \item
    As discussed in Appendix \sectref{NWADefinition}, there is a formal sense in which a CFLOBDD is an NWA.
  \item
    HFSMs, PDSs, and NWAs have mainly been investigated for modeling infinite-state systems.
    What CFLOBDDs demonstrate is that there are advantages to considering finite restrictions of such structures.
    In particular, the advantage of introducing a procedure-call-like mechanism in a finite model is that one can obtain an exponential-factor advantage compared to the original finite models without procedure calls.
\end{itemize}
We believe that there is great potential for exploring other combinations of the idea of ``BDDs plus procedure calls'' that use ideas from HFSMs, PDSs, and NWAs in ways that are different than those found in CFLOBDDs.

CFLOBDDs can be considered to be a special case of Visibly Pushdown Automata (VPAs)~\cite{alur2004visibly} where A-connections and B-connections correspond to call transitions in a visibly pushdown language (VPL), return edges correspond to return transitions in the VPL, and edges at level-0 correspond to internal transitions of the VPL.
There are other classes of VPAs, such as $k$-module single-entry VPAs ($k$-SEVPAs) \cite{alur2005congruences} and Modular VPAs \cite{kumar2006minimization}, that have minimal canonical representations.
Such VPAs have modular components that are similar to the groupings in CFLOBDDs. However these classes of VPAs assume that the modular decomposition is fixed ahead of time.
For instance, each $k$-SEVPA has exactly $k+1$ modules.
In contrast, CFLOBDDs use a decomposition that is based on a fixed number of levels, but the specific number of groupings for a function is a by-product of the  structural invariants (\sectref{StructuralRestrictions} and Construction \ref{Constr:DecisionTreeToCFLOBDD} in Appendix~\sectref{canonical-proof}).


\paragraph{Prior Approaches to Quantum Simulation.}
Also related are prior methods for quantum simulation. Such simulation can be exact or approximate; our focus here is on exact simulation
(modulo floating-point round-off).
Decision diagrams used for such simulation include QMDDs \cite{miller2006qmdd,DBLP:journals/tcad/ZulehnerW19} and TDDs \cite{hong2020tensor}.
Both of these are weighted BDD representations, and hence cannot be compared in an apples-to-apples way with CFLOBDDs, 
which are unweighted representations (i.e., the edges of a CFLOBDD do not have associated weights).
However, to understand the potential of CFLOBDDs, we mention here how our experimental results with CFLOBDDs compare with the published data for QMDDs:
CFLOBDDs perform better than the best published numbers on some algorithms (GHZ, BV, DJ, Grover) and worse on others (QFT, Shor) \cite[Tab.\ 5.1]{Book:ZW2020}.\footnote{
  Note that the number of qubits for Shor's algorithm reported in \cite[Tab.\ 5.1]{Book:ZW2020} is the number of qubits of the circuit, whereas in \tablerefs{quantum-table-detailed-contd}{tensor-network-comp}, \#Qubits is the number of bits of the number $N$ being factored, where $\#\textit{Qubits}\text{-of-circuit} = 2 * \#\text{bits-of-}N$.
}
We also compared our approach to tensor networks, a widely used approach to quantum simulation that is not based on decision diagrams.
As shown in~\tableref{tensor-network-comp}, CFLOBDDs perform better than tensor networks on some algorithms (GHZ, BV, DJ, Grover) and worse on others (Simon, QFT, Shor).

Similar to the well-known quantum algorithms discussed in this paper, variational quantum algorithms, which include a noise channel, can also be simulated using CFLOBDDs.
Huang et al.\ \cite{huang2021logical} simulate variational quantum algorithms using knowledge-compilation techniques.
In their approach, the noise component is modeled as an additional operator whose action is represented as a matrix.
The noise matrix can be represented as a CFLOBDD, and hence CFLOBDDs can also be used for simulating variational quantum algorithms.

\paragraph{Compression of Programs and Compression Principles}
A CFLOBDD can compactly represent many finite paths.
This property is akin to a statement that the use of nonrecursive procedures in programs can enable small programs to have many execution paths, and is the essence of the aforementioned observation by Melski and Reps that an acyclic, non-recursive, interprocedural control-flow graph of size $k$ could have $2^{2^k}$ matched paths.
Although not formulated as a theorem, this observation was stated in Melski's Ph.D.\ thesis \cite[\S3.5.4]{Thesis:Melski02}.
Melski uses Yannakakis's notion of $L$-reachability \cite{kn:Yann90} (i.e., a path from node $s$ to node $t$ only counts as a valid $s$-$t$ connection if the path's labels form a word in $L$), and defines the notion of a ``finite-path graph'' with respect to some language $L$: there are only a finite number of $L$-paths from, e.g., program entry to program exit \cite[\S3.4]{Thesis:Melski02}.
He then defines an interprocedural control-flow graph, denoted by $G^*_{\textit{fin}}$.
One of the languages of interest is the language of unbalanced-left paths \cite[\S2.1]{cc:MR99}, in which each return-edge is matched with the closest preceding unmatched call-edge, but there can be zero or more umatched call-edges.
(The unbalanced-left language is typically the language of interest for context-sensitive interprocedural dataflow analysis \cite{kn:RHS95,kn:R97}.)
Melski observes, ``... the number of [unbalanced-left] paths through $G^*_{\textit{fin}}$ can be doubly exponential in the size of $G^*_{\textit{fin}}$.''\ \footnote{
  Unfortunately, the aforementioned work with the 20,000-line program that had $2^{400,000}$ paths (which is what prompted Melski and Reps to realize that they were facing double-exponential explosion) was carried out after their CC '99 paper had been published \cite{cc:MR99}.
  The latter paper states, incorrectly, ``In the worst case, the number of paths through a program is exponential in the number of branch statements $b$ $\ldots$'' \cite[\S5]{cc:MR99}.
  (This kind of mistake seems to be common among authors working with structures that are DAG-like, but are really based on acyclic hyper-graphs:
  they erroneously think that they are dealing with DAGs and conclude that there is exponential explosion/compression, whereas the true state of affairs is that they have \emph{double}-exponential explosion/compression.
  Examples are found in the literature on E-graphs \cite{OOPSLA:NWZWSASGT21,SIGPLAN:Blog:egg} and version-space algebras \cite{OOPSLA:PG15,DBLP:journals/ftpl/GulwaniPS17,DBLP:journals/corr/abs-2107-12568}.)
}

These results are tantamount to the statement (proposed by one of the referees) that ``there is a family of programs $P_n$, written with non-recursive procedures, that each would be exponentially larger if written without non-recursive procedures.''
In the 1970s, the literature on program schematology \cite{DBLP:conf/pmac/PatersonH70} explored the relative power of various programming constructs, beginning with results showing that recursive procedure calls are more expressive than iteration (in particular, there are recursive program schemes such that, for some interpretation of the function and predicate symbols, any flowchart scheme will produce results that are different from those obtained with the recursive scheme \cite{DBLP:conf/pmac/PatersonH70,DBLP:journals/mst/LynchB79}).
Thus, it would have been natural for the schematology literature to contain a result of the form stated above.
However, we were unable to find a paper with such a result;
when procedures are allowed, the main interest seems to be in recursive procedures and how such programs compare with programs written in a language without procedure calls, but other features, such as arrays, stacks, or counters \cite{DBLP:journals/siamcomp/ConstableG72,DBLP:journals/jcss/GarlandL73}.

The compression abilities of CFLOBDDs are based what might be
called ``multiplicative amplification'': calls to procedures
$P$ and $Q$, when performed in sequence, result in a structure
in which the number of $L(\textit{matched})$-paths is equal
to the product of the numbers of $L(\textit{matched})$-paths through
$P$ and $Q$.
Multiplicative amplification leads to the repeated
squaring we see in counting the number of paths from the entry
vertex to the (one) exit vertex of a no-distinction proto-CFLOBDD with $k$ levels:
\[
  P(0) = 1 \qquad\qquad P(n+1) = {P(n)}^2.
\]

A more powerful compression principle---again based on an ``amplification'' step repeated some number of times---is found in Mairson's rational reconstruction of a proof of Statman's \cite{DBLP:journals/tcs/Mairson92}.
As with CFLOBDDs, there are a finite number of stratification levels and no recursion, but instead of ``multiplicative amplification,'' Mairson uses ``powerset amplification.''
He is interested in representing all values of the stratified types defined by
\[
  \mathcal{D}_0 = \{\textrm{true}, \textrm{false}\}  \qquad\qquad   \mathcal{D}_n = \textit{powerset}(D_{n-1}).
\]
Mairson observes that one can use linked lists to represent the elements of each of the $\mathcal{D}_i$.
To represent them concisely, he defines a powerset-combinator $\texttt{powerset}$ that takes a list $l_1$ as input, and returns a list $l_2$ that contains the powerset of the elements of $l_1$ (a simple exercise in functional programming).
He can then represent $\mathcal{D}_n$ with a $\lambda$-calculus term $D_n$ that applies $\texttt{powerset}$ $n$ times to the list $\{\textrm{true}, \textrm{false}\}$.
Considered as a member of the family of terms $D_0$, $D_1 = \texttt{powerset}(D_0)$, $\ldots$, $D_n = \texttt{powerset}^n(D_0)$, $\ldots$, the size of the term $D_n$ is $\Omega(n)$.
In contrast, the size of the set that is represented by $D_n$ is described by the following recurrence relation:
\[
  S(0) = 1 \qquad\qquad S(n+1) = 2^{S(n)},
\]
whose solution is non-elementary: $S(n)$ is an exponential tower of 2s,
\stackinset{r}{-8pt}{b}{-11pt}{\tiny\rotatebox{35}{$\underbrace{\kern25pt}_n$}}{$2^{2^{2^{\rdots^2}}}$},
of height $n$.

\section{Conclusions and Future Work}
\label{Se:conclusion}

This paper described a new data structure---CFLOBDDs---for representing functions, matrices, relations, and other discrete structures.
CFLOBDDs are a plug-compatible replacement for BDDs, and can represent Boolean functions in a more compressed fashion than BDDs---exponentially smaller in the best case---and, again in the best case, double-exponentially smaller than the size of a Boolean function's decision tree.
The results presented in the paper include the following:
\begin{itemize}
  \item
    We demonstrated the exponential separation between CFLOBDDs and BDDs both theoretically and experimentally.
  \item
    Because CFLOBDDs can be used as a replacement for BDDs, we gave algorithms for various operations that can be performed on representations of Boolean functions, such as Boolean and arithmetic operations, Kronecker product, matrix multiplication, etc.
  \item
    We also gave a set of structural invariants that ensure the canonicity of CFLOBDDs.
    When CFLOBDDs are implemented via hash-consing, this property ensures that the test of whether two CFLOBDDs represent equal functions can be performed merely by comparing the values of two pointers.
  \item
    We provided both an operational semantics and a denotational semantics of CFLOBDDs.
  \item
    We investigated the use of CFLOBDDs for representing the matrices and vectors involved in quantum simulation, and gave algorithms for the required operations.
\end{itemize}

In our experiments, we compared the time and space usage of CFLOBDDs and BDDs on two types of benchmarks: (i) micro-benchmarks, and (ii) quantum-simulation benchmarks.
We found that the improvement in scalability with CFLOBDDs is quite dramatic.
\begin{itemize}
  \item
    For the micro-benchmarks, CFLOBDDs were able to represent versions of the benchmark functions with up to $1.5 \times 2^{8,388,609} \cong 1.2795 \times 10^{\textrm{2,525,223}}$ Boolean variables.
    In contrast, the largest BDD that we could construct successfully for any of the micro-benchmarks had about ${328,000} = 3.28 \times 10^5$ Boolean variables.
  \item
    For the quantum-simulation benchmarks, the number of qubits that could be handled using CFLOBDDs was larger---compared to BDDs---by a factor of
    128$\times$ for GHZ; 1,024$\times$ for BV; 8,192$\times$ for DJ; and 128$\times$ for Grover's algorithm.
\end{itemize}
These results support the conclusion that, for at least some applications, CFLOBDDs provide a much more compressed representation of discrete structures than is possible with BDDs, thereby permitting much larger problem instances to be handled than heretofore.

\paragraph{Future Work}
The work on CFLOBDDs opens up many avenues for future work.
For instance, CFLOBDDs are based on the following decomposition pattern (or ``pattern of calls''):
\[
  \begin{array}{rcl}
   \Matched_k & \rightarrow & \Matched_{k-1} \quad \Matched_{k-1} \\
   \Matched_0 & \rightarrow & \epsilon 
  \end{array}
\]
Other decomposition schemes are possible;
for instance, in \sectref{Separation:AdditionRelation}, it would have been useful to be able to define the $\ADD_n$ relation using the decomposition
\[
  \begin{array}{rcl}
   \Matched_k & \rightarrow & \Matched_{k-1} \quad \Matched_{k-1} \quad \Matched_{k-1} \\
   \Matched_0 & \rightarrow & \epsilon 
  \end{array}
\]
which would have avoided having to ``waste'' one-quarter of the variables, as in \figrefs{carry_in_0}{carry_in_1}.

Another possibility would be to permit a level-$i$ grouping to have connections to level-$i\text{-}j$ groupings, where $j \ge 1$).
For instance, one could have a Fibonacci-like decomposition
\[
  \begin{array}{rcl}
   \Matched_k & \rightarrow & \Matched_{k-1} \quad \Matched_{k-2} \\
   \Matched_1 & \rightarrow & \epsilon \\
   \Matched_0 & \rightarrow & \epsilon 
  \end{array}
\]
As long as there is a fixed structural-decomposition pattern to how the levels connect, it should still possible to keep the same properties---i.e., the ability to perform operations implicitly, without having to instantiate the decision tree; canonicalness; etc.---as enjoyed by CFLOBDDs as defined in this paper (\defrefs{MockCFLOBDD}{CFLOBDD}).
We leave the exploration of such generalized CFLOBDDs for future work, along with such questions as
\begin{itemize}
  \item
    How does one go about choosing a good structural-decomposition pattern for a given family of Boolean functions?
  \item
    Is it feasible for a CFLOBDD package to support the ability to change the structural-decomposition pattern dynamically (similar to the way that some BDD packages support the ability to change the variable ordering dynamically).
\end{itemize}

As mentioned in \sectref{RelatedWork}, there are other existing data structures \cite{dac:BC95, iccad:CFZ95, dac:LS92, tafertshofer1997factored} that are exponentially more succinct than BDDs for some functions.
However, the strategies used in the design of these data structures (for example, the use of a tree structure over variables in SDDs) are different than our strategy of exploiting the $\MPP$.
Bringing together these strategies and ours is an interesting avenue for future research.

Another natural direction is the study of \emph{weighted} CFLOBDDs
\twrchanged{
\cite{DBLP:journals/corr/abs-2305-13610}.
}
Also, while quantum simulation was the primary practical application studied in this paper, it is by no means the only one.
For example, CFLOBDDs may also have natural applications in settings like probabilistic programming, model checking, or circuit synthesis in which decision diagram representations have been successfully used in the past. 

Finally, the idea of using programs with procedure calls as succinct data structures can be applied beyond our present setting of decision diagrams.
In particular, automata-based representations of version spaces are a natural candidate for such additional compression.
The exploration of these ideas in other settings is a rich and wholly open direction.


\Supplemental{
\section{Acknowledgments}
We thank Richard Lipton for the suggestion to apply CFLOBDDs to quantum simulation,
Patrick Emonts for advice about the proper way to perform quantum-circuit simulation with tensor networks,
and the referees of Sistla et al.\ \cite{TOPLAS:SCR24} for suggestions of how to clarify several items in the presentation.
We are indebted to Alfons Laarman for helping us realize that the local-reduction property of Reduce (\lemref{SizeOfGrouping}) does not imply that a global-reduction property holds (\footnoteref{LocalGlobalReduceProperty}), and for suggesting that we look for a polynomial-time bound in terms of the sizes of Reduce's input and output proto-CFLOBDDs (\sectref{CostOfReduce}). 

The work was supported, in part,
by a gift from Rajiv and Ritu Batra;
by the John Simon Guggenheim Memorial Foundation;
by Facebook under a Probability and Programming Research Award;
by NSF under grants CCR-9986308 and CCF-2212559;
by ONR under contracts N00014-00-1-0607 and N00014-19-1-2318;
by the MDA under SBIR contract DASG60-01-P-0048 to GrammaTech, Inc.,
and by an S.N.\ Bose Scholarship to Meghana Aparna Sistla.
Thomas Reps has an ownership interest in GrammaTech, Inc., which has licensed elements of the technology reported in this publication.
}{}

\bibliography{PatentApplication/df.bib,PatentApplication/logic.bib,PatentApplication/mab.bib,refs.bib}
\bibliographystyle{plain} 
\pagebreak
\appendix

\section{Details of Notation for CFLOBDDs and their Components}
\label{Se:additional-notation}
A few words are in order about the notation used in the pseudo-code:
\begin{itemize}
  \item
    {\sloppy
    A Java-like semantics is assumed.  For example, an object or
    field that is declared to be of type {\tt InternalGrouping\/} is really a
    pointer to a piece of heap-allocated storage.
    A variable of type {\tt InternalGrouping\/} is declared and initialized
    to a new {\tt InternalGrouping\/} object of level $k$ by the declaration
    \begin{center}
      {\tt InternalGrouping g = new InternalGrouping(k) }
    \end{center}
    }
  \item
    Procedures can return multiple objects by returning tuples
    of objects, where tupling is denoted by square brackets.
    For instance, if {\tt f\/} is a procedure that returns a pair
    of {\tt int\/}s---and, in particular, if {\tt f(3)\/} returns
    a pair consisting of the values 4 and 5---then {\tt int\/}
    variables {\tt a\/} and {\tt b\/} would be assigned 4 and 5
    by the following initialized declaration:
    \begin{center}
      {\tt int$\times$int [a,b] = f(3) }
    \end{center}
  \item
    The indices of array elements start at 1.
  \item
    Arrays are allocated with an initial length (which is allowed to
    be 0); however, arrays are assumed to lengthen automatically to
    accommodate assignments at index positions beyond the current
    length.
  \item
    We assume that a call on the constructor {\tt InternalGrouping(k)\/}
    returns an {\tt InternalGrouping\/} in which the
    members have been initialized as follows:
    \begin{center}
    {\tt
    \begin{minipage}{\columnwidth}
    \begin{tabbing}
      level = k \\
      AConnection = NULL \\
      AReturnTuple = NULL \\
      numberOfBConnections = 0 \\
      BConnections = new array[0] of Grouping \\
      BReturnTuples = new array[0] of ReturnTuple \\
      numberOfExits = 0
    \end{tabbing}
    \end{minipage}
    }
    \end{center}

    Similarly, we assume that a call on the constructor
    {\tt CFLOBDD(g,vt)\/} returns a {\tt CFLOBDD\/} in which the
    members have been initialized as follows:
    \begin{center}
    {\tt
    \begin{tabular}{@{\hspace{.0in}}l@{\hspace{.0in}}}
      grouping = g \\
      valueTuple = vt
    \end{tabular}
    }
    \end{center}

\end{itemize}

The class definitions of \figref{ClassDefinitions}, as well as
the algorithms for the core CFLOBDD operations make use of the following auxiliary
classes:
\begin{itemize}
  \item
    A {\tt ReturnTuple\/} is a finite tuple of positive integers.
  \item
    A {\tt PairTuple\/} is a sequence of ordered pairs.
  \item
    A {\tt TripleTuple\/} is a sequence of ordered triples.
  \item
    A {\tt ValueTuple\/} is a finite tuple of whatever values
    the multi-terminal CFLOBDD is defined over.
\end{itemize}

\section{Lexicographic-Order Proposition}
\label{Se:prop-LexicographicOrder}

\begin{SPro}{LexicographicOrder} (\textsc{Lexicographic-Order Proposition}).
Let $ex_C$ be the sequence of exit vertices of proto-CFLOBDD $C$.
Let $ex_L$ be the sequence of exit vertices reached by traversing $C$ on each possible
Boolean-variable-to-Boolean-value assignment, generated in lexicographic
order of assignments.
Let $s$ be the subsequence of $ex_L$ that retains just the leftmost occurrences
of members of $ex_L$ (arranged in order as they first appear in $ex_L$).
Then $ex_C = s$.
\newline

\begin{Proof}
We argue by induction over levels:
\newline

\begin{BaseCase}
The proposition follows immediately for level-$0$ proto-CFLOBDDs.
\end{BaseCase}
\newline

\begin{InductionStep}
The induction hypothesis is that the proposition holds for
every level-$k$ proto-CFLOBDD.

Let $C$ be an arbitrary level-$k\textrm{+}1$ proto-CFLOBDD, with $s$ and $ex_C$
as defined above.
Without loss of generality, we will refer to the exit vertices by ordinal
position; i.e., we will consider $ex_C$ to be the sequence $[1, 2, \ldots, |ex_C|]$.
Let $C_A$ denote the $A$-connection of $C$, and let $C_{B_n}$ denote $C$'s
$n^{\textit{th}}$ $B$-connection.
Note that $C_A$ and each of the $C_{B_n}$ are level-$k$ proto-CFLOBDDs,
and hence, by the induction hypothesis, the proposition holds for them.

We argue by contradiction:
Suppose, for the sake of argument, that the proposition does not hold for
$C$, and that $j$ is the leftmost exit vertex in $ex_C$ for which the proposition
is violated (i.e., $s(j) \neq j$).
Let $i$ be the exit vertex that appears in the $j^{th}$ position of $s$
(i.e., $s(j) = i$).
It must be that $j < i$.

Let $\alpha_j$ and $\alpha_i$ be the earliest assignments
in lexicographic order (denoted by $\prec$) that lead to
exit vertices $j$ and $i$, respectively.
Because $i$ comes before $j$ in $s$, it must be that $\alpha_i \prec \alpha_j$.

Let $\alpha^1_j$ and $\alpha^2_j$ denote the first and second halves
of $\alpha_j$, respectively;
let $\alpha^1_i$ and $\alpha^2_i$ denote the first and second halves
of $\alpha_i$, respectively.
Let $+$ denote the concatentation of assignments
(e.g., $\alpha_j = \alpha^1_j + \alpha^2_j$).

\begin{figure}
    \begin{subfigure}[t]{0.495\linewidth}
    \includegraphics[width=\linewidth]{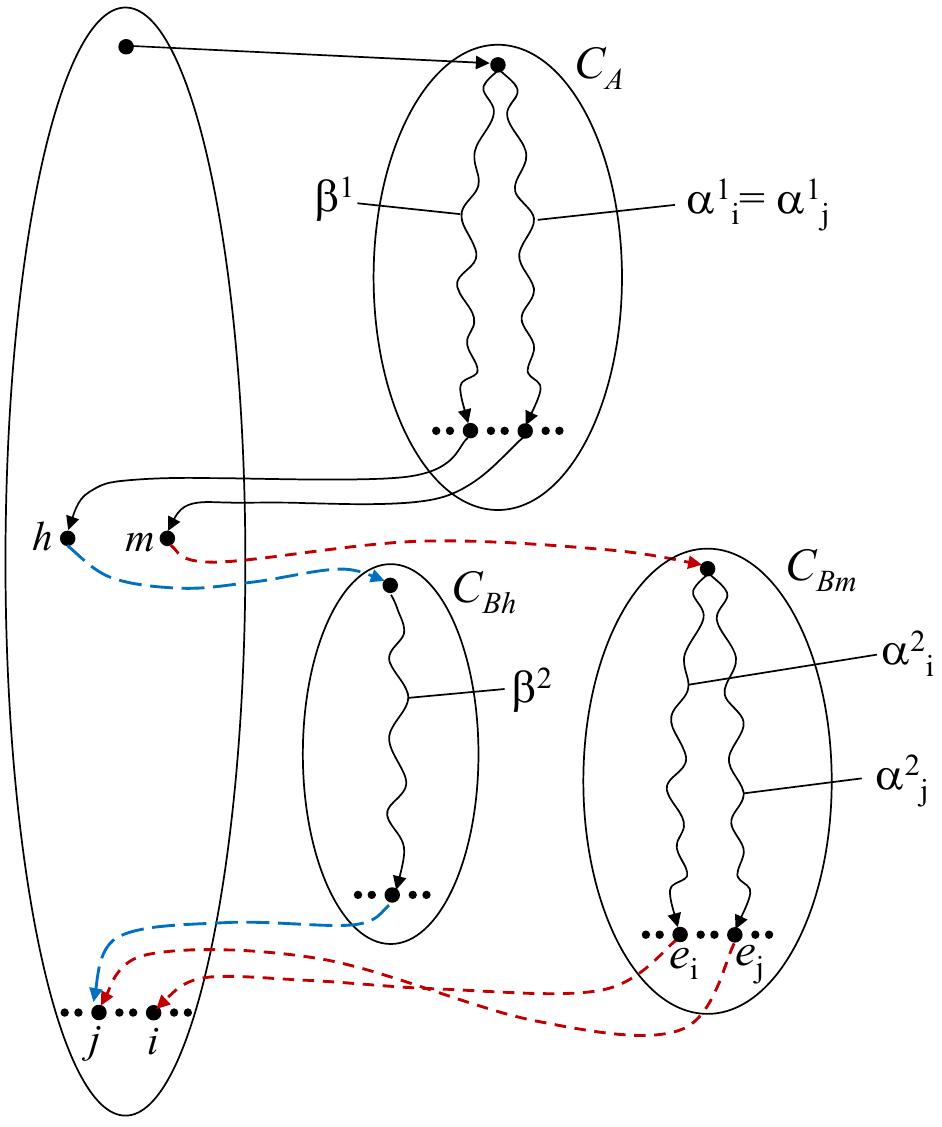}
    \caption{Case 1.A of Proposition~\ref{Prop:LexicographicOrder}}
    \vspace{2ex}
    \end{subfigure}
    \begin{subfigure}[t]{0.495\linewidth}
    \includegraphics[width=\linewidth]{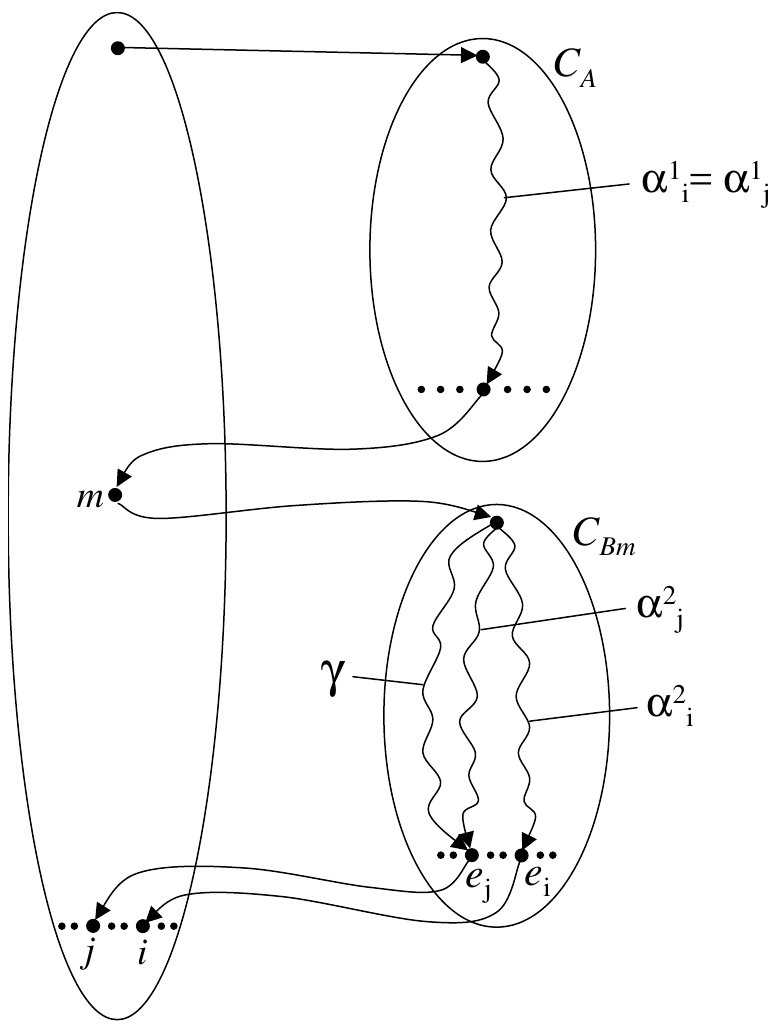}
    \caption{Case 1.B of Proposition~\ref{Prop:LexicographicOrder}.}
    \vspace{2ex}
    \end{subfigure}
    \begin{subfigure}[t]{0.495\linewidth}
    \includegraphics[width=\linewidth]{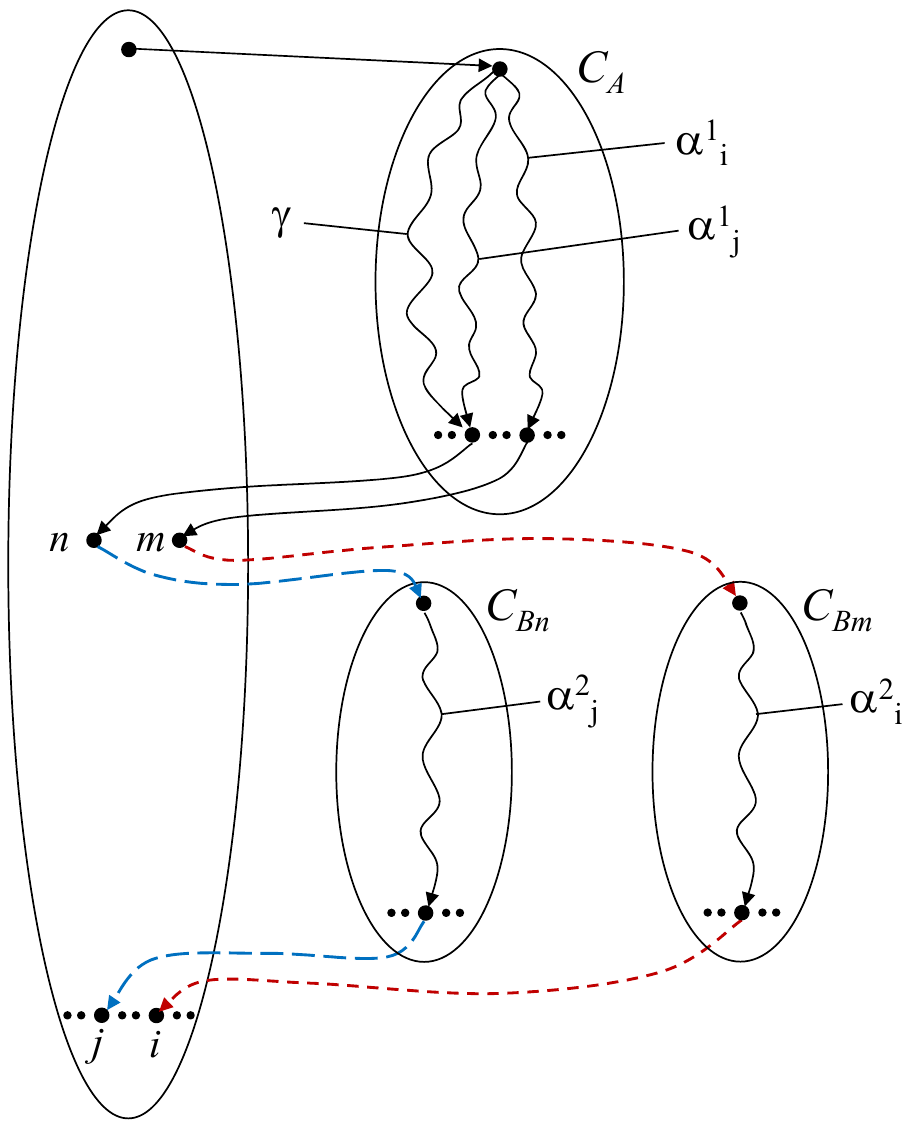}
    \caption{Case 2.A of Proposition~\ref{Prop:LexicographicOrder}.}
    \end{subfigure}
    \begin{subfigure}[t]{0.495\linewidth}
    \includegraphics[width=\linewidth]{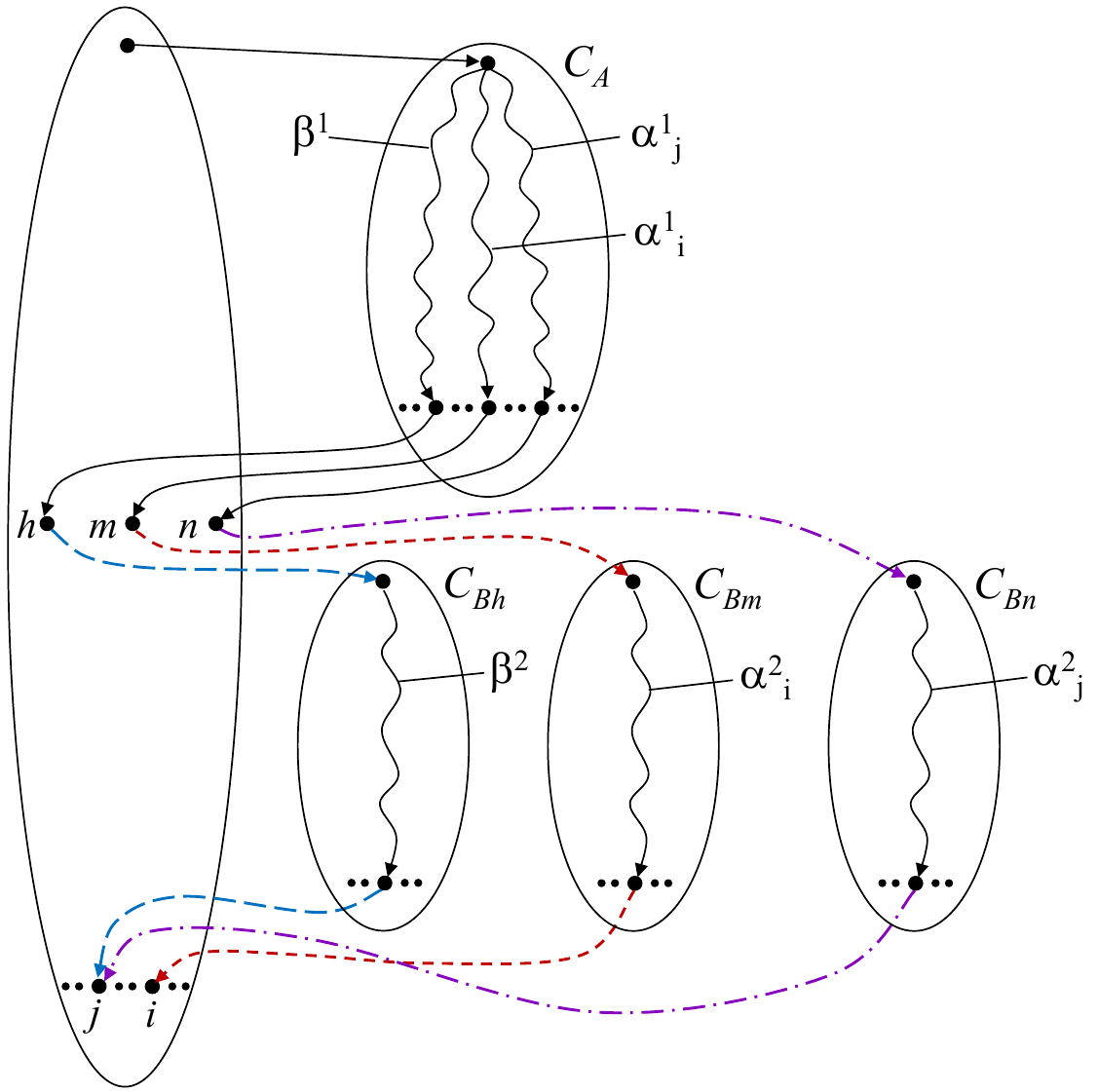}
    \caption{Case 2.B of Proposition~\ref{Prop:LexicographicOrder}.}
    \end{subfigure}
    \caption{}
    \label{Fi:LexProof}
\end{figure}


There are two cases to consider.

{\em Case 1\/}:
$\alpha^1_i = \alpha^1_j$ and $\alpha^2_i \prec \alpha^2_j$.

Because $\alpha^1_i = \alpha^1_j$, the first halves of the matched
path followed during the interpretations of assignments
$\alpha_i$ and $\alpha_j$ through $C_A$ are identical, and bring us
to some middle vertex, say $m$, of $C$;
both paths then proceed through $C_{B_m}$.
Let $e_i$ and $e_j$ be the two exit vertices of $C_{B_m}$
reached by following matched paths during the interpretations
of $\alpha^2_i$ and $\alpha^2_j$, respectively.
There are now two cases to consider:

{\em Case 1.A\/}:
Suppose that $e_i < e_j$ in $C_{B_m}$ (see \figref{LexProof}a).
In this case, the return edges $e_i \rightarrow i$
and $e_j \rightarrow j$ ``cross''.
By Structural Invariant~\ref{Inv:2b}, this can only happen if
\begin{itemize}
  \item
    There is a matched path corresponding to some assignment $\beta^1$
    through $C_A$ that leads to a middle vertex $h$, where $h < m$.
  \item
    There is a matched path from $h$ corresponding to some
    assignment $\beta^2$ through $C_{B_h}$ (where $C_{B_h}$ could be $C_{B_m}$).
  \item
    There is a return edge from the exit vertex reached by $\beta^2$ in
    $C_{B_h}$ to exit vertex $j$ of $C$.
\end{itemize}
In this case, by the induction hypothesis applied to $C_A$,
and the fact that $h < m$, it must be the case that we can choose
$\beta^1$ so that $\beta^1 \prec \alpha^1_j$.

Consequently, $\beta^1 + \beta^2 \prec \alpha^1_j + \alpha^2_j$, which
contradicts the assumption that $\alpha_j = \alpha^1_j + \alpha^2_j$
is the least assignment in lexicographic order that leads to $j$.

{\em Case 1.B\/}:
Suppose that $e_j < e_i$ in $C_{B_m}$ (see \figref{LexProof}b).
Because $\alpha^2_i \prec \alpha^2_j$, the induction hypothesis applied to
$C_{B_m}$ implies that there must exist an assignment
$\gamma \prec \alpha^2_i \prec \alpha^2_j$ that leads to $e_j$.
In this case, we have that 
$\alpha^1_j + \gamma \prec \alpha^1_j + \alpha^2_j$, which again
contradicts the assumption that $\alpha_j = \alpha^1_j + \alpha^2_j$
is the least assignment in lexicographic order that leads to $j$.

{\em Case 2\/}:
$\alpha^1_i \prec \alpha^1_j$.

Because $\alpha^1_i \prec \alpha^1_j$, the first halves of the matched
paths followed during the interpretations of assignments
$\alpha_i$ and $\alpha_j$ through $C_A$ bring us
to two different middle vertices of $C$, say $m$ and $n$, respectively.
The two paths then proceed through $C_{B_m}$ and $C_{B_n}$
(where it could be the case that $C_{B_m} = C_{B_n}$),
and return to $i$ and $j$, respectively, where $j < i$.
Again, there are two cases to consider:

{\em Case 2.A\/}:
Suppose that $n < m$ (see \figref{LexProof}c.)
The argument is similar to Case 1.B above:
By Structural Invariant~\ref{Inv:1}, $n < m$ means that the exit vertex
reached by $\alpha^1_j$ in $C_A$ comes before the exit vertex
reached by $\alpha^1_i$ in $C_A$.
By the induction hypothesis applied to $C_A$, there must exist
an assignment $\gamma \prec \alpha^1_i \prec \alpha^1_j$ that leads
to the exit vertex reached by $\alpha^1_j$ in $C_A$.
In this case, we have that
$\gamma + \alpha^2_j \prec \alpha^1_j + \alpha^2_j$, which
contradicts the assumption that $\alpha_j = \alpha^1_j + \alpha^2_j$
is the least assignment in lexicographic order that leads to $j$.

{\em Case 2.B\/}:
Suppose that $m < n$ (see \figref{LexProof}d.)
The argument is similar to Case 1.A above:
By Structural Invariant~\ref{Inv:2},
we can only have $m < n$ and $j < i$ if
\begin{itemize}
  \item
    There is a matched path corresponding to some assignment $\beta^1$
    through $C_A$ that leads to a middle vertex $h$, where $h < m$.
  \item
    There is a matched path from $h$ corresponding to some
    assignment $\beta^2$ through $C_{B_h}$ (where $C_{B_h}$ could be $C_{B_m}$
    or $C_{B_n}$).
  \item
    There is a return edge from the exit vertex reached by $\beta^2$ in
    $C_{B_h}$ to exit vertex $j$ of $C$.
\end{itemize}
In this case, by the induction hypothesis applied to $C_A$,
and the fact that $h < m < n$, it must be the case that we can choose
$\beta^1$ so that $\beta^1 \prec \alpha^1_j$.

Consequently, $\beta^1 + \beta^2 \prec \alpha^1_j + \alpha^2_j$, which
contradicts the assumption that $\alpha_j = \alpha^1_j + \alpha^2_j$
is the least assignment in lexicographic order that leads to $j$.

In each of the cases above, we are able to derive a contradiction to the
assumption that $\alpha_j$ is the least assignment in lexicographic
order that leads to $j$.
Thus, the supposition that the proposition does not hold for $C$
cannot be true.
\end{InductionStep}
\end{Proof}
\end{SPro}

\section{Proof of the Canonicalness of CFLOBDDs}
\label{Se:canonical-proof}

To show that CFLOBDDs are a canonical representation of functions over Boolean arguments, we must establish that three properties hold:
\begin{enumerate}
  \item
    \label{Obligation:1}
    Every level-$k$ CFLOBDD represents a decision tree with
    $2^{2^k}$ leaves.
  \item
    \label{Obligation:2}
    Every decision tree with $2^{2^k}$ leaves is represented
    by some level-$k$ CFLOBDD.
  \item
    \label{Obligation:3}
    No decision tree with $2^{2^k}$ leaves is represented by
    more than one level-$k$ CFLOBDD
    (up to isomorphism).
\end{enumerate}

As described earlier, following a matched path (of length $O(2^k)$)
from the level-$k$ entry vertex of a level-$k$ CFLOBDD to a final value
provides an interpretation of a Boolean assignment on $2^k$ variables.
Thus, the CFLOBDD represents a decision tree with $2^{2^k}$ leaves (and
Obligation~\ref{Obligation:1} is satisfied).

To show that Obligation~\ref{Obligation:2} holds, we describe
a recursive procedure for constructing a level-$k$ CFLOBDD from an
arbitrary decision tree with $2^{2^k}$ leaves (i.e., of height
$2^k$).
In essence, the construction shows how such a decision tree
can be folded together to form a multi-terminal CFLOBDD.

The construction makes use of a set of auxiliary tables, one for each
level, in which a unique representative for each class of equal
proto-CFLOBDDs that arises is tabulated.
We assume that the level-0 table is already seeded with a
representative fork grouping and a representative don't-care grouping.

\begin{Constr}\label{Constr:DecisionTreeToCFLOBDD}{\bf [Decision Tree to Multi-Terminal CFLOBDD]\/}
\begin{enumerate}
  \item
    \label{Construct:EquivClasses}
    The leaves of the decision tree are partitioned into some number
    of equivalence classes $e$ according to the values that label the
    leaves.  The equivalence classes are numbered 1 to $e$ according
    to the relative position of the first occurrence of a value in a
    left-to-right sweep over the leaves of the decision tree.

    \hspace{1.5ex}
    For Boolean-valued CFLOBDDs, when the procedure is applied
    at topmost level, there are at most
    two equivalence classes of leaves, for the values $F$ and $T$.
    However, in general, when the procedure is applied recursively, more
    than two equivalence classes can arise.
    
    \hspace{1.5ex}
    For the general case of multi-terminal CFLOBDDs, the number of
    equivalence classes corresponds to the number of different values
    that label leaves of the decision tree.
  \item
    \label{Construct:BaseCases}
    ({\bf Base cases\/})
    If $k = 0$ and $e = 1$, construct a CFLOBDD consisting of
    the representative don't-care grouping,
    with a value tuple that binds the exit vertex to the value
    that labels both leaves of the decision tree.

    \hspace{1.5ex}
    If $k = 0$ and $e = 2$, construct a CFLOBDD consisting of
    the representative fork grouping,
    with a value tuple that binds the two exit vertices
    to the first and second values, respectively, that label
    the leaves of the decision tree.

    \hspace{1.5ex}
    If either condition applies, return the CFLOBDD so constructed
    as the result of this invocation; otherwise, continue on to
    the next step.
  \item
    \label{Construct:Recursive1}
    Construct---via recursive applications of the procedure---$2^{2^{k-1}}$
    level-$k\textrm{--}1$ multi-terminal CFLOBDDs for the
    $2^{2^{k-1}}$ decision trees of height $2^{k-1}$ in the lower half
    of the decision tree.

    \hspace{1.5ex}
    These are then partitioned into some number $e'$ of
    equivalence classes of equal multi-terminal CFLOBDDs;
    a representative of each class is retained, and the others discarded.
    Each of the $2^{2^{k-1}}$ ``leaves'' of the upper half of the decision
    tree is labeled with the appropriate equivalence-class
    representative for the subtree of the lower half that begins
    there.  These representatives serve as the ``values'' on the
    leaves of the upper half of the decision tree when the construction
    process is applied recursively to the upper half in
    step~\ref{Construct:Recursive2}.

    \hspace{1.5ex}
    The equivalence-class representatives are also numbered 1 to $e'$
    according to the relative position of their first occurrence in a
    left-to-right sweep over the leaves of the upper half of the
    decision tree.
  \item
    \label{Construct:Recursive2}
    Construct---via a recursive application of the procedure---a
    level-$k\textrm{--}1$ multi-terminal CFLOBDD for the upper half of the decision
    tree.
  \item
    \label{Construct:Grouping}
    Construct a level-$k$ multi-terminal proto-CFLOBDD from the
    level-$k\textrm{--}1$ multi-terminal CFLOBDDs created
    in steps~\ref{Construct:Recursive1} and~\ref{Construct:Recursive2}.
    The level-$k$ grouping is constructed as follows:
    \begin{enumerate}
      \item
        \label{Construct:NewAConnection}
        The $A$-connection points to the proto-CFLOBDD of the object
        constructed in step~\ref{Construct:Recursive2}.
      \item
        \label{Construct:NewMiddleVertices}
        The middle vertices correspond to the equivalence classes
        formed in step~\ref{Construct:Recursive1}, in the order
        $1 \ldots e'$.
      \item
        \label{Construct:NewAReturnTuple}
        The $A$-connection return tuple is the identity map back to
        the middle vertices (i.e., the tuple $[1..e']$).
      \item
        \label{Construct:NewBConnections}
        The $B$-connections point to the proto-CFLOBDDs of the $e'$
        equivalence-class representatives constructed in
        step~\ref{Construct:Recursive1}, in the order $1 \ldots e'$.
      \item
        \label{Construct:NewExitVertices}
        The exit vertices correspond to the initial equivalence
        classes described in step~\ref{Construct:EquivClasses}, in the
        order $1 \ldots e$.
      \item
        \label{Construct:NewBReturnTuples}
        The $B$-connection return tuples connect the exit vertices
        of the highest-level groupings of the equivalence-class
        representatives retained from step~\ref{Construct:Recursive1}
        to the exit vertices created in step~\ref{Construct:NewExitVertices}.
        In each of the equivalence-class representatives retained from
        step~\ref{Construct:Recursive1}, the value tuple associates
        each exit vertex $x$ with some value $v$, where $1 \leq v \leq e$;
        $x$ is now connected to the exit vertex created in
        step~\ref{Construct:NewExitVertices} that is associated with
        the same value $v$.
      \item
        \label{Construct:CheckForDuplicates}
        Consult a table of all previously constructed level-$k$
        groupings to determine whether the grouping constructed by
        steps~\ref{Construct:NewAConnection}--\ref{Construct:NewBReturnTuples}
        duplicate a previously constructed grouping.  If
        so, discard the present grouping and switch to the previously
        constructed one; if not, enter the present grouping into the
        table.
    \end{enumerate}
  \item
    \label{Construct:ValueTuple}
    Return a multi-terminal CFLOBDD created from the proto-CFLOBDD constructed in step~\ref{Construct:Grouping} by attaching a value tuple that connects (in order) the exit vertices of the proto-CFLOBDD to the $e$ values from step~\ref{Construct:EquivClasses}.
\end{enumerate}
\end{Constr}

\begin{figure}
    \centering
    \begin{subfigure}[t]{0.495\linewidth}
    \includegraphics[width=\linewidth]{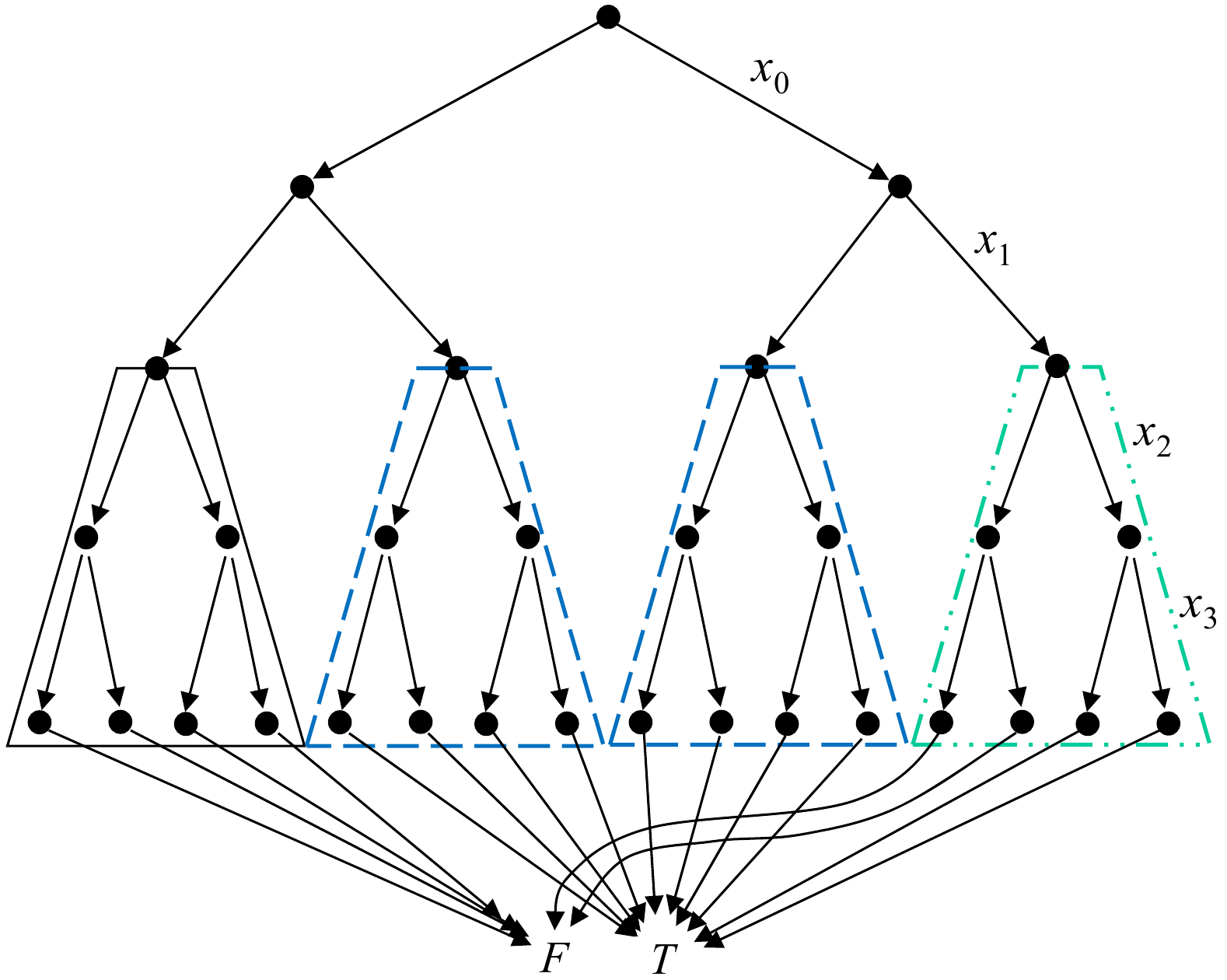}
    \caption{Decision tree}
    \vspace{4ex}
    \end{subfigure}
    \begin{subfigure}[t]{0.495\linewidth}
    \includegraphics[width=\linewidth]{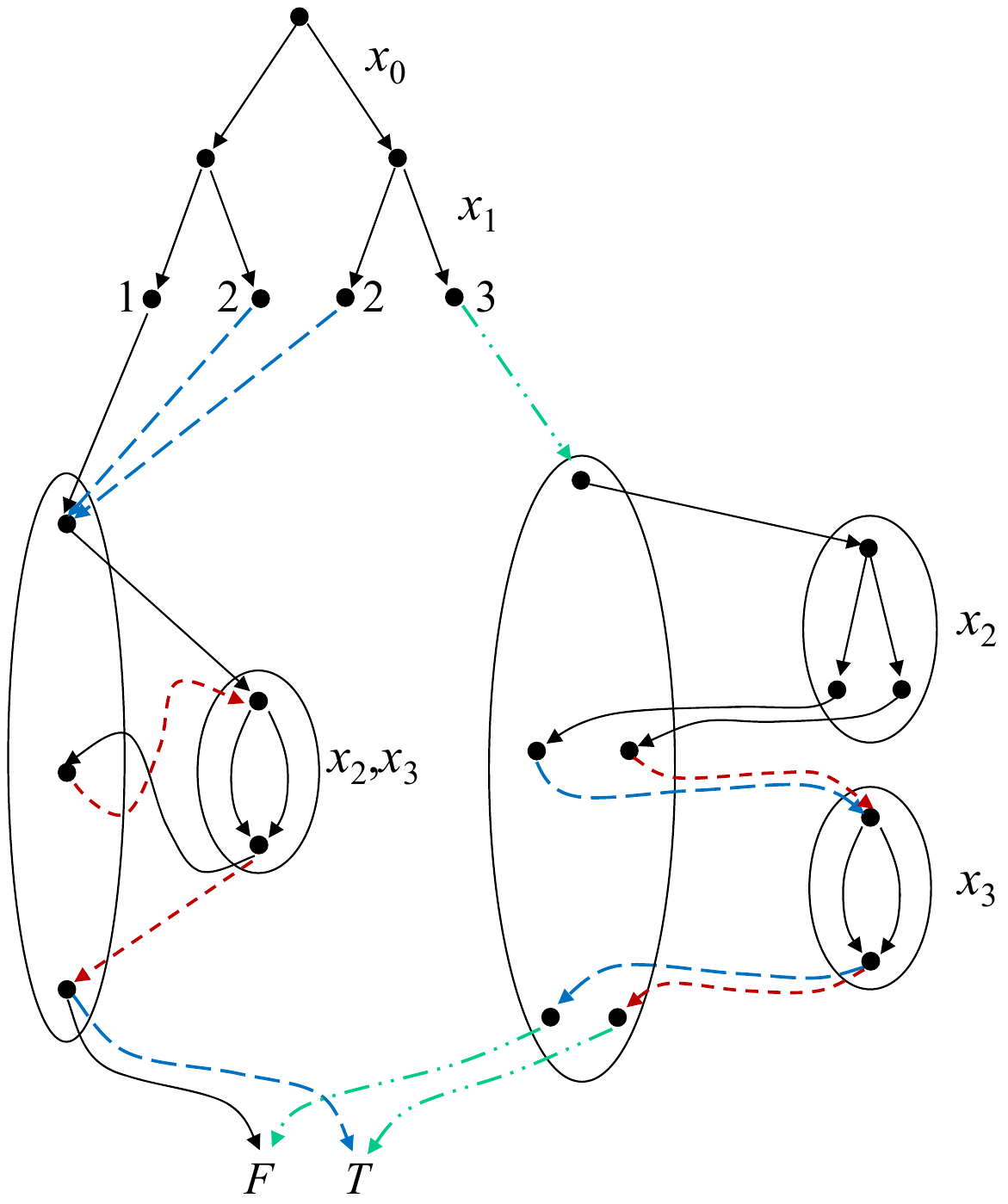}
    \caption{\protect \raggedright 
      Hybrid of decision tree for $x_0$ and $x_1$, and CFLOBDDs for $x_2$ and $x_3$.
      The solid, dashed, and dashed-double-dotted edges from the four vertices labeled 1, 2, 2, and 3, respectively, correspond to the solid, dashed, and dashed-double-dotted trapezoids in (a).
    }
    \vspace{4ex}
    \end{subfigure}
    \begin{subfigure}[t]{\linewidth}
    \centering
    \includegraphics[width=0.5\linewidth]{figures/complicated-cropped.pdf}
    \caption{CFLOBDD (repeated from \figref{MultipleMiddleVertices}).
      For clarity, some of the level-0 groupings have been
      duplicated.
    }
    \end{subfigure}
    \caption{Representations of the Boolean function
         $\lambda {x_0}{x_1}{x_2}{x_3} . (x_0 \xor x_1) \lor (x_0 \land x_1 \land x_2)$.
    }
    \label{Fi:Complicated}
\end{figure}

\figref{Complicated}a shows the decision tree for the function
$\lambda {x_0}{x_1}{x_2}{x_3} . (x_0 \xor x_1) \lor (x_0 \land x_1 \land x_2)$.
\figref{Complicated}b shows the state of things after
step~\ref{Construct:Recursive1} of Construction \ref{Constr:DecisionTreeToCFLOBDD}.
Note that even though the level-1 CFLOBDDs for the first
three leaves of the top half of the decision tree have equal
proto-CFLOBDDs,\footnote{
  The equality of the proto-CFLOBDDs is detected in
  step~\ref{Construct:CheckForDuplicates}.
}
the leftmost proto-CFLOBDD maps its exit vertex to $F$, whereas the
exit vertex is mapped to $T$ in the second and third proto-CFLOBDDs.
Thus, in this case, the recursive call for the upper half of the
decision tree (step~\ref{Construct:Recursive2}) involves three
equivalence classes of values.

It is not hard to see that the structures created by
Construction \ref{Constr:DecisionTreeToCFLOBDD} obey the structural
invariants that are required of CFLOBDDs:
\begin{itemize}
  \item
    Structural Invariant~\ref{Inv:1} holds because the $A$-connection
    return tuple created in step~\ref{Construct:NewAReturnTuple} of
    Construction \ref{Constr:DecisionTreeToCFLOBDD} is the identity map.
  \item
    Structural Invariant~\ref{Inv:2} holds because in
    steps~\ref{Construct:EquivClasses} and~\ref{Construct:Recursive1}
    of Construction \ref{Constr:DecisionTreeToCFLOBDD}, the equivalence classes
    are numbered in increasing order according to the relative position
    of a value's first occurrence in a left-to-right sweep.
    In particular, this order is preserved in the exit vertices of each grouping constructed during an invocation of Construction \ref{Constr:DecisionTreeToCFLOBDD} (cf.~step~\ref{Construct:NewBReturnTuples}), which ensures that the ``compact extension'' property of Structural Invariant~\ref{Inv:2b} holds at each level of recursion in Construction \ref{Constr:DecisionTreeToCFLOBDD}.
  \item
    Structural Invariant~\ref{Inv:3} holds because
    Construction \ref{Constr:DecisionTreeToCFLOBDD} reuses the
    representative don't-care grouping and the representative fork grouping
    in step~\ref{Construct:BaseCases}, and checks for the construction
    of duplicate groupings---and hence duplicate proto-CFLOBDDs---in
    step~\ref{Construct:CheckForDuplicates}.
  \item
    Structural Invariant~\ref{Inv:4} holds because of
    steps~\ref{Construct:Recursive1}, \ref{Construct:NewBConnections},
    and~\ref {Construct:NewBReturnTuples}.
    On recursive calls to Construction \ref{Constr:DecisionTreeToCFLOBDD},
    step~\ref{Construct:Recursive1} partitions the CFLOBDDs constructed
    for the lower half of the decision tree into equivalence classes
    of CFLOBDD values (i.e., taking into account both the proto-CFLOBDDs
    and the value tuples associated with their exit vertices).
    Therefore, in steps~\ref{Construct:NewBConnections}
    and~\ref{Construct:NewBReturnTuples}, duplicate
    $B$-connection/return-tuple pairs can never arise.
  \item
    Structural Invariant~\ref{Inv:5} holds because step~\ref{Construct:ValueTuple} uses the proto-CFLOBDD constructed in step~\ref{Construct:Grouping}.
  \item
    Structural Invariant~\ref{Inv:6}
    holds because step~\ref{Construct:EquivClasses} of Construction \ref{Constr:DecisionTreeToCFLOBDD} constructs equivalence classes of values (ordered in increasing order according to the relative position of a value's first occurrence in a left-to-right sweep over the leaves of the decision tree).
\end{itemize}

Moreover, Construction \ref{Constr:DecisionTreeToCFLOBDD} preserves interpretation under assignments:
Suppose that $C_T$ is the level-$k$ CFLOBDD constructed
by Construction \ref{Constr:DecisionTreeToCFLOBDD} for decision tree $T$;
it is easy to show by induction on $k$ that for every
assignment $\alpha$ on the $2^k$ Boolean variables $x_0, \ldots, x_{2^k-1}$,
the value obtained from $C_T$ by following the corresponding matched
path from the entry vertex of $C_T$'s highest-level grouping
is the same as the value obtained for $\alpha$ from $T$.
(The first half of $\alpha$ is used to follow a path through the
$A$-connection of $C_T$, which was constructed from the top half of
$T$.
The second half of $\alpha$ is used to follow a path through one of the
$B$-connections of $C_T$, which was constructed from an equivalence
class of bottom-half subtrees of $T$;
that equivalence class includes the subtree rooted at the vertex
of $T$ that is reached by following the first half of $\alpha$.)
Thus, every decision tree with $2^{2^k}$ leaves is represented by some
level-$k$ CFLOBDD in which meaning (interpretation under assignments)
has been preserved;
consequently, Obligation~\ref{Obligation:2} is satisfied.

We now come to Obligation~\ref{Obligation:3}
(no decision tree with $2^{2^k}$ leaves is represented by
more than one level-$k$ CFLOBDD).
The way we prove this property is to define an unfolding process,
called $\Unfold$, that starts with a multi-terminal
CFLOBDD and works in the opposite direction to
Construction \ref{Constr:DecisionTreeToCFLOBDD} to construct a decision tree;
that is, $\Unfold$ (recursively) unfolds the $A$-connection, and then
(recursively) unfolds each of the $B$-connections.
For instance, for the example shown in \figref{Complicated},
$\Unfold$ would proceed from
\figref{Complicated}c, to \figref{Complicated}b, and
then to the decision tree for the function $\lambda
{x_0}{x_1}{x_2}{x_3} . (x_0 \xor x_1) \lor (x_0 \land x_1 \land x_2)$
shown in \figref{Complicated}a.

$\Unfold$ also preserves interpretation under assignments:
Suppose that $T_C$ is the decision tree constructed
by $\Unfold$ for level-$k$ CFLOBDD $C$;
it is easy to show by induction on $k$ that for every
assignment $\alpha$ on the $2^k$ Boolean variables $x_0, \ldots, x_{2^k-1}$,
the value obtained from $C$ by following the corresponding matched
path from the entry vertex of $C$'s highest-level grouping
is the same as the value obtained for $\alpha$ from $T_C$.
(The first half of $\alpha$ is used to follow a path through the
$A$-connection of $C$, which $\Unfold$ unfolds into the top half of
$T_C$.
The second half of $\alpha$ is used to follow a path through one of the
$B$-connections of $C$, which $\Unfold$ unfolds into one or more
instances of bottom-half subtrees of $T_C$;
that set of bottom-half subtrees includes the subtree rooted at the vertex
of $T$ that is reached by following the first half of $\alpha$.)

\Omit{
\begin{figure*}[tb!]
\begin{center}
\begin{tabular}{c@{\hspace{.1in}}c@{\hspace{.1in}}c}
    \includegraphics[height=.975in]{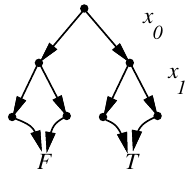}
  & \includegraphics[height=.975in]{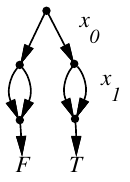}
  & \includegraphics[height=.975in]{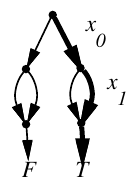}
 \\
    (a) Decision tree for $\lambda {x_0}{x_1}.{x_0}$
  & (b) Folded form (OBDD)
  & (c) Path corresponding to the
 \\ & & assignment $[{x_0} \mapsto T, {x_1} \mapsto T]$
 \\ & &
 \\
    \includegraphics[height=1.75in]{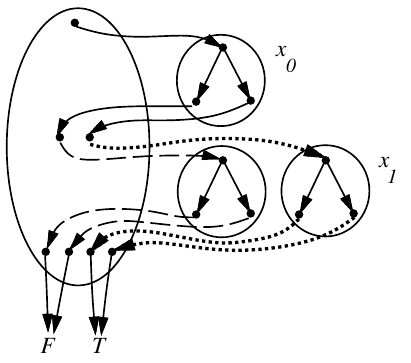}
  & \includegraphics[height=1.75in]{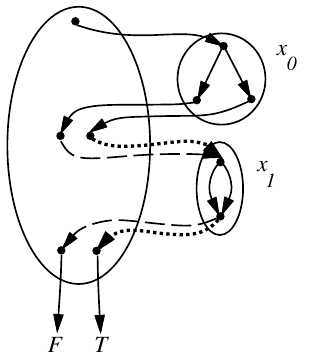}
  & \includegraphics[height=1.75in]{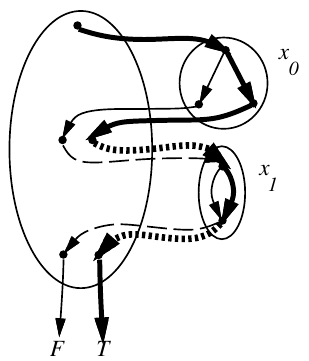}
 \\ & &
 \\
    d Fully expanded form
  & (e) Folded form (CFLOBDD)
  & (f) Path corresponding to the
 \\ & & assignment $[{x_0} \mapsto T, {x_1} \mapsto T]$
\end{tabular}
\end{center}
\caption{}
\label{Fi:projection0Figure}
\end{figure*}
}

\begin{figure*}[tb!]
\begin{center}
\begin{tabular}{c@{\hspace{.1in}}c@{\hspace{.1in}}c}
    \includegraphics[height=.975in]{PatentApplication/figures/cropped-projection0Tree.idraw.pdf}
  &  \includegraphics[height=1.75in]{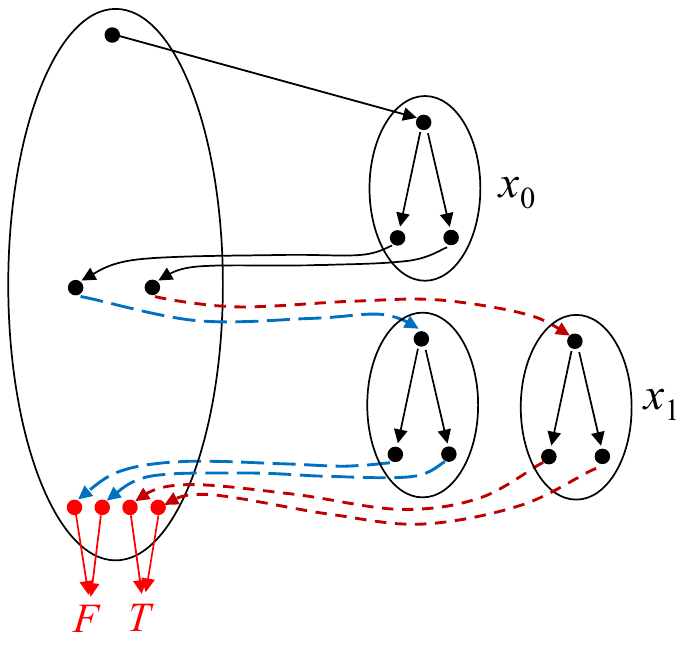}
  & \includegraphics[height=1.75in]{figures/projection0-cropped.pdf}
  \\
  {\small (a)} & {\small (b)} & {\small (c)}
\end{tabular}
\end{center}
\caption{(a) Decision tree for $\lambda {x_0}{x_1}.{x_0}$;
  (b) fully expanded form of the CFLOBDD;
  (c) CFLOBDD.
}
\label{Fi:projection0Figure}
\end{figure*}

\begin{figure*}[tb!]
\begin{center}
\begin{tabular}[t]{|c@{\hspace{.05in}}|@{\hspace{.05in}}c@{\hspace{.05in}}|@{\hspace{.05in}}c@{\hspace{.05in}}|@{\hspace{.05in}}c@{\hspace{.05in}}|@{\hspace{.05in}}c@{\hspace{.05in}}|}
    \hline
    \multicolumn{5}{|@{\hspace{.05in}}c@{\hspace{.05in}}|}{
      Trace for entire tree
    }
 \\
    \hline
    Tree & The lower-half trees & Hybrid & Upper-half tree & CFLOBDD
 \\
    \hline
    & & & &
 \\
    \includegraphics[scale=0.4,valign=m]{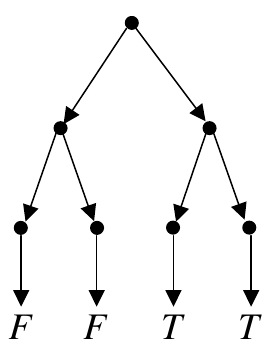}
  & \adjustbox{valign=t}{
      \begin{tabular}[t]{@{\hspace{0.0in}}c@{\hspace{.05in}}c@{\hspace{0.0in}}}
         \begin{tabular}{|@{\hspace{.02in}}c@{\hspace{.02in}}|@{\hspace{.02in}}c@{\hspace{.02in}}|}
             \hline
             \includegraphics[scale=0.4]{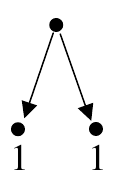}
           & \includegraphics[scale=0.4]{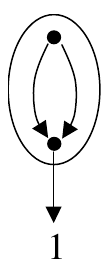}
          \\ \hline
         \end{tabular}
      & 
         \begin{tabular}{|@{\hspace{.02in}}c@{\hspace{.02in}}|@{\hspace{.02in}}c@{\hspace{.02in}}|}
             \hline
             \includegraphics[scale=0.4]{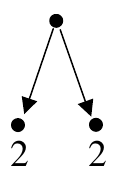}
           & \includegraphics[scale=0.4]{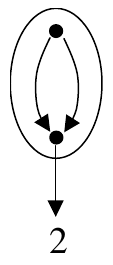}
          \\ \hline
         \end{tabular}
      \end{tabular}
    }
  & \includegraphics[scale=0.25,valign=m]{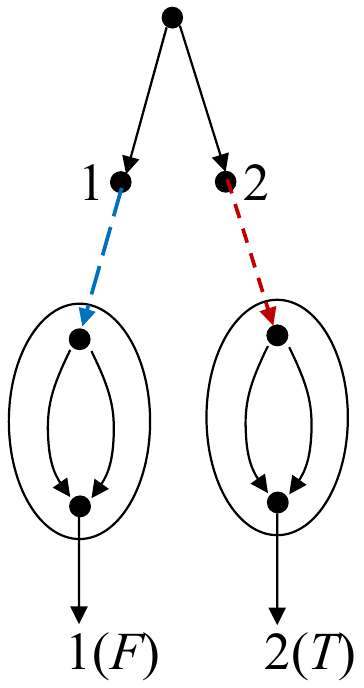}
  & \adjustbox{valign=b}{
      \begin{tabular}{|@{\hspace{.02in}}c@{\hspace{.02in}}|@{\hspace{.02in}}c@{\hspace{.02in}}|}
            \hline
            \includegraphics[scale=0.4]{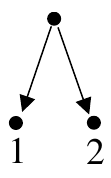}
          & \includegraphics[scale=0.4]{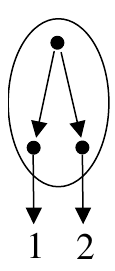}
         \\ \hline
      \end{tabular}
    }
  & \includegraphics[scale=0.36,valign=m]{figures/projection0-cropped.pdf}
  \\* [0.8in]
  \hline
\end{tabular}
\end{center}
\caption{\protect \raggedright 
The $\Fold$ trace generated by the application of Construction \ref{Constr:DecisionTreeToCFLOBDD} to the decision tree shown in \figref{projection0Figure}a to create the CFLOBDD shown in \figref{projection0Figure}c.
}
\label{Fi:FoldTrace}
\end{figure*}

\begin{figure*}[tb!]
\begin{center}
\begin{tabular}[t]{|@{\hspace{.05in}}c@{\hspace{.05in}}|@{\hspace{.05in}}c@{\hspace{.05in}}|@{\hspace{.05in}}c@{\hspace{.05in}}|@{\hspace{.05in}}c@{\hspace{.05in}}|c@{\hspace{.05in}}|}
    \hline
    \multicolumn{5}{|@{\hspace{.05in}}c@{\hspace{.05in}}|}{
      Trace for entire CFLOBDD
    }
 \\
    \hline
    CFLOBDD & $A$-connection & Hybrid & Traces for the $B$-connections & Tree
 \\
    \hline
    & & & &
 \\
    \includegraphics[scale=0.36,valign=m]{figures/projection0-cropped.pdf}
  & \adjustbox{valign=b}{
      \begin{tabular}{|@{\hspace{.02in}}c@{\hspace{.02in}}|@{\hspace{.02in}}c@{\hspace{.02in}}|}
              \hline
              \includegraphics[scale=0.4]{PatentApplication/figures/cropped-trace8.eps.pdf}
            & \includegraphics[scale=0.4]{PatentApplication/figures/cropped-trace7.eps.pdf}
           \\ \hline
      \end{tabular}
    }
  & \includegraphics[scale=0.25,valign=m]{figures/trace6-cropped.pdf}
  & \adjustbox{valign=t}{
      \begin{tabular}{@{\hspace{.02in}}c@{\hspace{.02in}}c@{\hspace{.02in}}}
         \begin{tabular}{|@{\hspace{.02in}}c@{\hspace{.02in}}|@{\hspace{.02in}}c@{\hspace{.02in}}|}
             \hline
             \includegraphics[scale=0.4]{PatentApplication/figures/cropped-trace5.eps.pdf}
           & \includegraphics[scale=0.4]{PatentApplication/figures/cropped-trace4.eps.pdf}
          \\ \hline
         \end{tabular}
      & 
         \begin{tabular}{|@{\hspace{.02in}}c@{\hspace{.02in}}|@{\hspace{.02in}}c@{\hspace{.02in}}|}
             \hline
             \includegraphics[scale=0.4]{PatentApplication/figures/cropped-trace3.eps.pdf}
           & \includegraphics[scale=0.4]{PatentApplication/figures/cropped-trace2.eps.pdf}
          \\ \hline
         \end{tabular}
      \end{tabular}
    } 
  & \includegraphics[scale=0.4,valign=m]{PatentApplication/figures/cropped-trace1.eps.pdf}
 \\* [.4in]
    \hline
\end{tabular}
\end{center}
\caption{\protect \raggedright 
The $\Unfold$ trace generated by the application of $\Unfold$ to the CFLOBDD shown in \figref{projection0Figure}c to create the decision tree shown in \figref{projection0Figure}a.
}
\label{Fi:UnfoldTrace}
\end{figure*}

Obligation~\ref{Obligation:3} is satisfied if we can show that, for every CFLOBDD $C$, Construction \ref{Constr:DecisionTreeToCFLOBDD} applied to the decision tree produced by $\Unfold(C)$
yields a CFLOBDD that is isomorphic to $C$.
To establish that this property holds, we will define two kinds of \emph{traces}:
\begin{itemize}
  \item
    A {\em $\Fold$ trace\/} records the steps of
    Construction \ref{Constr:DecisionTreeToCFLOBDD}:
    \begin{itemize}
      \item
        At step~\ref{Construct:EquivClasses} of Construction \ref{Constr:DecisionTreeToCFLOBDD},
        the decision tree is appended to the trace.
      \item
        At the end of step~\ref{Construct:BaseCases}
        (if either of the conditions listed in step~\ref{Construct:BaseCases} holds),
        the level-$0$ CFLOBDD being returned is appended to the trace
        (and Construction \ref{Constr:DecisionTreeToCFLOBDD} returns).
      \item
        During step~\ref{Construct:Recursive1}, the trace
        is extended according to the actions carried out by
        the folding process as it is applied recursively to each of the
        lower-half decision trees.
        (For purposes of settling Obligation~\ref{Obligation:3},
        we will assume that the lower-half decision trees are
        processed by Construction \ref{Constr:DecisionTreeToCFLOBDD}
        in {\em left-to-right\/} order.)
      \item
        At the end of step~\ref{Construct:Recursive1},
        a hybrid decision-tree/CFLOBDD object (\`{a} la
        \figref{Complicated}b) is appended to the trace.
      \item
        During step~\ref{Construct:Recursive2}, the trace
        is extended according to the actions carried out by
        the folding process as it is applied recursively
        to the upper half of the decision tree.
      \item
        At the end of step~\ref{Construct:ValueTuple}, the
        CFLOBDD being returned is appended to the trace.
    \end{itemize}
    For instance, \figref{FoldTrace} shows the $\Fold$ trace generated
    by the application of Construction \ref{Constr:DecisionTreeToCFLOBDD}
    to the decision tree shown in \figref{projection0Figure}a
    to create the CFLOBDD shown in \figref{projection0Figure}c.
  \item
    An {\em $\Unfold$ trace\/} records the steps of $\Unfold(C)$:   
    \begin{itemize}
      \item
        CFLOBDD $C$ is appended to the trace.
      \item
        If $C$ is a level-$0$ CFLOBDD, then
        a binary tree of height $1$---with the leaves
        labeled according to $C$'s value tuple---is
        appended to the trace (and the $\Unfold$ algorithm returns).
      \item
        The trace is extended according to the actions carried out by
        $\Unfold$ as it is applied recursively to the $A$-connection of $C$.
      \item
        A hybrid decision-tree/CFLOBDD object (\`{a} la
        \figref{Complicated}b) is appended to the trace.
      \item
        The trace is extended according to the actions carried out by
        $\Unfold$ as it is applied recursively to instances of $B$-connections
        of $C$.
        (For purposes of settling Obligation~\ref{Obligation:3},
        we will assume that $\Unfold$ processes a separate instance
        of a $B$-connection for each leaf of the hybrid object's
        upper-half decision tree, and that the $B$-connections
        are processed in {\em right-to-left\/} order of the
        upper-half decision tree's leaves.)
      \item
        Finally, the decision tree returned by $\Unfold$ is appended to
        the trace.
    \end{itemize}
    For instance, \figref{UnfoldTrace} shows the $\Unfold$ trace generated by the application of $\Unfold$ to the CFLOBDD shown in \figref{projection0Figure}c to create the decision tree shown in \figref{projection0Figure}a.
\end{itemize}

Note how the $\Unfold$ trace shown in \figref{UnfoldTrace}
is the reversal of the $\Fold$ trace shown in \figref{FoldTrace}.
We now argue that this property holds generally.
(Technically, the argument given below in Proposition~\ref{Prop:UnfoldFoldReversability} shows that each element of an $\Unfold$ trace is
\emph{isomorphic} to the corresponding object in the $\Fold$ trace, which suffices to imply that that Obligation~\ref{Obligation:3} is satisfied, in the sense that a decision tree is represented by exactly one isomorphism class of CFLOBDDs.)

\begin{Pro}\label{Prop:UnfoldFoldReversability}
Suppose that $C$ is a multi-terminal CFLOBDD, and that $\Unfold(C)$ results in
$\Unfold$ trace $UT$ and decision tree $T_0$.
Let $C'$ be the multi-terminal CFLOBDD produced by applying
Construction \ref{Constr:DecisionTreeToCFLOBDD} to $T_0$,
and $FT$ be the $\Fold$ trace produced during this process.
Then
\begin{description}
  \item[{\it (i)}]
    $FT$ is the reversal of $UT$.
  \item[{\it (ii)}]
    $C$ and $C'$ are isomorphic.
\end{description}
\vspace{.1in}

\begin{Proof}
Because $C'$ appears at the end of $FT$, and $C$ appears at the beginning of $UT$,
clause~(i) implies~(ii).
We show clause (i) by the following inductive argument:
\newline

\begin{BaseCase}
    The proposition is trivially true of level-0 CFLOBDDs.
    Given any pair of values $v_1$ and $v_2$ (such as $F$ and $T$),
    there are exactly four possible level-0 CFLOBDDs:
    two constructed using a don't-care grouping---one in which the
    exit vertex is mapped to $v_1$, and one in which it is mapped to
    $v_2$---and two constructed using a fork grouping---one
    in which the two exit vertices are
    mapped to $v_1$ and $v_2$, respectively, and one in which
    they are mapped to $v_2$ and $v_1$, respectively.
    These unfold to the four decision trees that have $2^{2^0} = 2$
    leaves and leaf-labels drawn from $\{v_1,v_2\}$,
    and the application of Construction \ref{Constr:DecisionTreeToCFLOBDD}
    to these decision trees yields the same level-0 CFLOBDD that
    we started with.
    (See step~\ref{Construct:BaseCases} of
    Construction \ref{Constr:DecisionTreeToCFLOBDD}.)
    Consequently, the $\Fold$ trace $FT$ and the $\Unfold$ trace $UT$
    are reversals of each other.
\end{BaseCase}
\newline

\begin{InductionStep}
The induction hypothesis is that the proposition holds for every level-$k$ multi-terminal CFLOBDD.
We need to argue that the proposition extends to level-$k\textrm{+}1$ multi-terminal CFLOBDDs.

First, note that the induction hypothesis implies that each decision tree with $2^{2^k}$ leaves is represented by exactly one
level-$k$ CFLOBDD isomorphism class.
We will refer to this as the {\em corollary to the induction hypothesis\/}.

$\Unfold$ trace $UT$ can be divided into five segments:
\begin{description}
  \item[{\rm (u1)\/}]
    $C$ itself
  \item[{\rm (u2)\/}]
    the $\Unfold$ trace for $C$'s $A$-connection
  \item[{\rm (u3)\/}]
    a hybrid decision-tree/CFLOBDD object (call this object $D$)
  \item[{\rm (u4)\/}]
    the $\Unfold$ trace for $C$'s $B$-connections
  \item[{\rm (u5)\/}]
    $T_0$.
\end{description}
$\Fold$ trace $FT$ can also be divided into five segments:
\begin{description}
  \item[{\rm (f1)\/}]
    $T_0$
  \item[{\rm (f2)\/}]
    the $\Fold$ trace for $T_0$'s lower-half trees
  \item[{\rm (f3)\/}]
    a hybrid decision-tree/CFLOBDD object (call this object $D'$)
  \item[{\rm (f4)\/}]
    the $\Fold$ trace for $T_0$'s upper-half
  \item[{\rm (f5)\/}]
    $C'$.
\end{description}
Because both (f1) and (u5) are $T_0$, (u5) is obviously equal to (f1).
Our goal, therefore, is to show that
\begin{itemize}
  \item (u2) is the reversal of (f4);
  \item (u3) is equal to (f3);
  \item (u4) is the reversal of (f2); and
  \item
      (u1) is equal to (f5).
\end{itemize}

\begin{description}
  \item[{\bf (u3) is equal to (f3)\/}]
    Consider the hybrid decision-tree/CFLOBDD object $D$ obtained
    after $\Unfold$ has finished unfolding $C$'s $A$-connection.\footnote{
      The $A$-connection is actually a proto-CFLOBDD, whereas
      $\Unfold$ works on multi-terminal CFLOBDDs.  However,
      the $A$-connection return tuple (with the indices of the
      middle vertices as the value space) serves as the value tuple
      whenever we wish to consider the $A$-connection as a
      multi-terminal CFLOBDD.
    }
    The upper part of $D$ (the decision-tree part) came from
    the recursive invocation of $\Unfold$, which produced
    a decision tree for the first half of the Boolean variables,
    in which each leaf is labeled with the index of a middle vertex
    from the level-$k\textrm{+}1$ grouping of $C$ (e.g., see \figref{Complicated}b).

    \hspace{1.5ex}
    As a consequence of \propref{LexicographicOrder}, together with the fact that $\Unfold$ preserves interpretation under assignments, the relative position of the first occurrence of a label in a left-to-right sweep over the leaves of this decision tree reflects the order of the level-$k\textrm{+}1$ grouping's middle vertices.\footnote{
      This step is where the argument would break down if we attempt to apply the same argument to \figref{StructuralInvariantsIllustrated}a:
      In that case, the labels on the leaves of $D$, in left-to-right order, would be 2 and 1---whereas the sequence of middle vertices in \figref{StructuralInvariantsIllustrated}a is [1,2].
    }
    However, each middle vertex has an associated $B$-connection, and by Structural Invariants~\ref{Inv:2}, \ref{Inv:4}, and~\ref{Inv:6}, the middle vertices can be thought of as representatives for a set of pairwise non-equal CFLOBDDs (that themselves represent lower-half decision trees).

    \hspace{1.5ex}
    $\Fold$ trace $FT$ also has a hybrid decision-tree/CFLOBDD object,
    namely $D'$.  The crucial point is that
    the action of partitioning $T_0$'s lower-half CFLOBDDs that is
    carried out in step~\ref{Construct:Recursive1} of
    Construction \ref{Constr:DecisionTreeToCFLOBDD} also results in a
    labeling of each leaf of the upper-half's decision tree with a
    representative of an equivalence class of CFLOBDDs that represent
    the lower half of the decision tree starting at that point.

    \hspace{1.5ex}
    By the corollary to the induction hypothesis,
    the $2^{2^k}$ bottom-half trees of $T_0$ are represented uniquely
    (up to isomorphism)
    by the respective CFLOBDDs in $D'$.
    Similarly, by the corollary to the induction hypothesis, the $2^{2^k}$ CFLOBDDs used as labels in $D$
    represent uniquely (up to isomorphism)
    the respective bottom-half trees of $T_0$.
    Thus, the labelings on $D$ and $D'$ must be
    isomorphic.
  \item[{\bf (u2) is the reversal of (f4); (u4) is the reversal of (f2)\/}]
    Given the observation that
    $D$ and $D'$ are isomorphic, these properties
    follow in a straightforward fashion from the inductive hypothesis (applied to the $A$-connection and the $B$-connections of $C$).
  \item[{\bf (u1) is equal to (f5)\/}]
    Because (u2)~is the reversal of~(f4) and (u4)~is the reversal of~(f2), we know that the level-$k$ proto-CFLOBDDs out of which the level-$k\textrm{+}1$ grouping of $C'$ is constructed are
    isomorphic to the respective
    level-$k$ proto-CFLOBDDs that make up the $A$-connection and $B$-connections of $C$.

    \hspace{1.5ex}
    We already argued that steps~\ref{Construct:Grouping}     and~\ref{Construct:ValueTuple} of Construction \ref{Constr:DecisionTreeToCFLOBDD} lead to CFLOBDDs that obey the six structural invariants required of CFLOBDDs.
    Moreover, there is only one way for Construction \ref{Constr:DecisionTreeToCFLOBDD} to construct the level-$k\textrm{+}1$ grouping of $C'$ so that Structural Invariants~\ref{Inv:2}, \ref{Inv:3}, and~\ref{Inv:4} are satisfied.
    Therefore, $C$ is isomorphic to $C'$.
\end{description}
Consequently, $FT$ is the reversal of $UT$, as was to be shown.
\end{InductionStep}
\end{Proof}
\end{Pro}

In summary, we have now shown that Obligations~\ref{Obligation:1}, \ref{Obligation:2}, and~\ref{Obligation:3} are all satisifed.
These properties imply that, for a given ordering of Boolean variables, if two level-$k$ CFLOBDDs $C_1$ and $C_2$ represent the same decision tree with $2^{2^k}$ leaves, then $C_1$ and $C_2$ are isomorphic---i.e., CFLOBDDs are a canonical representation of functions over Boolean arguments:

\begin{theorem*}[\ref{The:Canonicity}](\textsc{Canonicity}).
If $C_1$ and $C_2$ are level-$k$ CFLOBDDs for the same Boolean function over $2^k$ Boolean variables, and $C_1$ and $C_2$ use the same variable ordering, then $C_1$ and $C_2$ are
isomorphic.
\end{theorem*}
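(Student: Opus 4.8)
The plan is to reduce \theoref{Canonicity} to three obligations, exactly as one does for ROBDDs: (i) every level-$k$ CFLOBDD represents a decision tree with $2^{2^k}$ leaves; (ii) every such decision tree is represented by \emph{some} level-$k$ CFLOBDD; and (iii) no such decision tree is represented by more than one level-$k$ CFLOBDD up to isomorphism. Together these give the theorem: if $C_1$ and $C_2$ compute the same Boolean function on $2^k$ variables with the same variable ordering, then by (i) they unfold to the same decision tree $T$, and by (iii) that $T$ determines a single isomorphism class, so $C_1$ and $C_2$ are isomorphic. Obligation (i) is already in hand from \sectref{CFLOBDDOperationalSemantics}: the recurrence $D(k)=2\,D(k-1)$, $D(0)=1$ has solution $D(k)=2^k$, so each matched path from the head grouping touches exactly $2^k$ level-$0$ entry vertices and thus interprets an assignment to $2^k$ Boolean variables.

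For Obligation (ii) I would exhibit an explicit recursive folding procedure (Construction \ref{Constr:DecisionTreeToCFLOBDD}) that, given a decision tree of height $2^k$: (a) partitions its leaves into value-equivalence classes numbered by first left-to-right occurrence; (b) recursively folds the $2^{2^{k-1}}$ bottom-half subtrees, deduplicates them, and labels each leaf of the top half with its bottom-half equivalence-class representative; (c) recursively folds the top half; and (d) assembles a level-$k$ grouping whose $A$-connection is the folded top half, whose middle vertices are the bottom-half classes, and whose $B$-connections are the class representatives, wiring the return tuples so the ``compact-extension'' ordering is respected. One then checks, invariant by invariant, that the output obeys Structural Invariants~\ref{Inv:1}--\ref{Inv:6} of \defref{CFLOBDD} (the identity $A$-return tuple gives~\ref{Inv:1}; first-occurrence numbering gives~\ref{Inv:2} and~\ref{Inv:6}; reuse of representatives plus a duplicate-grouping table gives~\ref{Inv:3}; partitioning the bottom-half \emph{CFLOBDDs}, i.e.\ proto-CFLOBDD together with value tuple, gives~\ref{Inv:4}), and that folding preserves interpretation under assignments by a routine induction on $k$.

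For Obligation (iii) I would introduce the reverse operation $\Unfold$, which recursively unfolds the $A$-connection into the top half of a decision tree and then unfolds a separate instance of each $B$-connection into the matching bottom-half subtrees, and show that $\Unfold$ also preserves interpretation under assignments. The crux is \propref{UnfoldFoldReversability}: applying Construction \ref{Constr:DecisionTreeToCFLOBDD} to $\Unfold(C)$ returns a CFLOBDD isomorphic to $C$. I would prove this by the ``trace reversal'' argument: define $\Fold$-traces and $\Unfold$-traces recording the intermediate decision-tree / hybrid / CFLOBDD objects, and show by induction on level that the $\Fold$-trace of $\Unfold(C)$ is the reversal of the $\Unfold$-trace of $C$, with each corresponding pair of objects isomorphic. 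Since $C$ is at the start of one trace and the re-folded CFLOBDD $C'$ is at the end of the other, reversal forces $C\cong C'$, and Obligation~(iii) follows because Construction \ref{Constr:DecisionTreeToCFLOBDD} is deterministic.

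The main obstacle is the inductive step of \propref{UnfoldFoldReversability}: showing the hybrid decision-tree/CFLOBDD object produced midway through $\Unfold(C)$ is isomorphic to the hybrid object produced midway through re-folding. This is where all six structural invariants must do real work. I need that the left-to-right order in which middle-vertex labels \emph{first appear} on the leaves of the unfolded top-half tree coincides with the intrinsic order of the middle vertices of $C$'s head grouping; this follows from \propref{LexicographicOrder} (the leftmost-occurrence / lexicographic-order property) together with the fact that $\Unfold$ preserves interpretation. Invariants~\ref{Inv:2}, \ref{Inv:4}, and~\ref{Inv:6} then guarantee the middle vertices are representatives of pairwise-distinct bottom-half CFLOBDDs, so the re-folding's deduplication step recovers exactly the same labeling up to isomorphism. (A mock-proto-CFLOBDD violating, say, invariant~\ref{Inv:2b} would break this alignment, as in \figref{StructuralInvariantsIllustrated}a, which is precisely why the invariants are imposed.) Given that the two hybrid objects match, ``(u2) reverses (f4)'' and ``(u4) reverses (f2)'' fall out of the induction hypothesis applied to the $A$- and $B$-connections, and ``(u1) equals (f5)'' follows because, from isomorphic constituent proto-CFLOBDDs, Construction \ref{Constr:DecisionTreeToCFLOBDD} has only one way to assemble the level-$k$ grouping consistent with invariants~\ref{Inv:2}--\ref{Inv:4}.
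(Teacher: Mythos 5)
Your proposal is correct and follows essentially the same route as the paper's own proof: the same three obligations, the same folding construction for Obligation (ii) with the same invariant-by-invariant verification, and the same $\Unfold$/trace-reversal argument via \propref{UnfoldFoldReversability} and \propref{LexicographicOrder} for Obligation (iii). No substantive differences to report.
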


\section{Pair Product Canonicity Proof}
\label{Se:PairProductProof}

We prove that the grouping {\tt g\/} constructed
during a call on {\tt PairProduct\/} meets
Structural Invariant~\ref{Inv:4}---and hence it is permissible
to call {\tt RepresentativeGrouping(g)\/} in line~[\ref{Li:PPTabulateAnswer}]
of \algref{PairProductContinued}.

In particular, suppose that (i) $B_1$ and $B'_1$ are $B$-connections of a grouping $g_1$ (with associated return tuples $rt_1$ and $rt'_1$, respectively),
(ii) $B_2$ and $B'_2$ are $B$-connections of a grouping $g_2$ (with associated return tuples $rt_2$ and $rt'_2$, respectively), and
(iii) at least one of the following two conditions hold:
\begin{enumerate}
  \item $\langle B_1, rt_1 \rangle \neq \langle B'_1, rt'_1 \rangle$
  \item $\langle B_2, rt_2 \rangle \neq \langle B'_2, rt'_2 \rangle$
\end{enumerate}
In addition, suppose that the recursive calls on {\tt PairProduct\/} produce
\[
  [D,pt] = \mathtt{PairProduct}(B_1,B_2)
  \qquad\mathrm{and}\qquad
  [D',pt'] = \mathtt{PairProduct}(B'_1,B'_2),
\]
Let $rt$ and $rt'$ be the return tuples that the outer
calls on {\tt PairProduct\/} in \lineref{BConnectionPairProduct} of \algref{PairProductContinued} create for $D$ and $D'$:
$pt$, $rt_1$, and $rt_2$ are used to create $rt$;
$pt'$, $rt'_1$, and $rt'_2$ are used to create $rt'$.

The question that we need to answer is whether it is ever possible for both $D = D'$ and $rt = rt'$ to hold.
This question is of concern because the hypothesized condition would violate Structural Invariant~\ref{Inv:4}:
if the condition were to hold, then the first entry of the pair returned by {\tt PairProduct\/} would not be a well-formed proto-CFLOBDD.
The following proposition shows that, in fact, this situation cannot ever occur:

\begin{Pro}
The first entry of the pair returned by {\tt PairProduct\/} is always a well-formed proto-CFLOBDD.
\newline

\begin{Proof}
We argue by induction:
\newline

\begin{BaseCase}
When $g_1$ and $g_2$ are level-0 groupings, there are four cases to consider.
In each case, it is immediate from lines~[\ref{Li:PPNoDistinctionStart}]--[\ref{Li:PPBothForkGroupings}] of \algref{PairProduct} that the first entry of the pair returned by {\tt PairProduct\/} is a well-formed proto-CFLOBDD.
\end{BaseCase}
\newline

\begin{InductionStep}
The induction hypothesis is that the first entry of the pair returned by {\tt PairProduct\/} is a well-formed proto-CFLOBDD whenever the arguments to {\tt PairProduct\/} are level-$k$ proto-CFLOBDDs.

Let $g_1$ and $g_2$ be two arbitrary well-formed level-$k\textrm{+}1$ proto-CFLOBDDs.
We argue by contradiction:
suppose, for the sake of argument, that $D$, $D'$, $rt$, and $rt'$ are as defined above, and that both $D = D'$ and $rt = rt'$ hold.
\begin{itemize}
  \item
    By the inductive hypothesis, we know that $D$ and $D'$ are each well-formed level-$k$ proto-CFLOBDDs.
    In particular, we can think of $D$ and $rt$ as corresponding to a decision tree $T_0$, labeled with the exit vertices of $g$ to which the decision tree's leaves are mapped.
    However, because of the search that is carried out in lines~[\ref{Li:PPExitVertexLoopStart}]--[\ref{Li:PPExitVertexLoopEnd}] of {\tt PairProduct\/} (\algref{PairProductContinued}), each exit vertex of $g$ corresponds to a unique pair, $\B{c_1,c_2}$, where $c_1$ and $c_2$ are exit vertices of $g_1$ and $g_2$, respectively.
    Thus, a leaf in $T_0$ can be thought of as being labeled with a pair $\B{c_1,c_2}$.

    \hspace{1.5ex}
    Furthermore, because $D = D'$ and $rt = rt'$, $D'$ and $rt'$ also 
    correspond to decision tree $T_0$.
  \item
    When $T_0$ is considered to be the decision tree associated with $D$ and $rt$, we can read off (a) the decision tree that corresponds to $B_1$ with exit vertices of $g_1$ labeling the leaves (call this tree $T_1$), and (b) the decision tree that corresponds to $B_2$ with exit vertices of $g_2$ labeling the leaves ($T_2$).
    Similarly, when $T_0$ is considered to be the decision tree associated with $D'$ and $rt'$, we can read off (c) the decision tree that corresponds to $B'_1$ with exit vertices of $g_1$ labeling the leaves ($T'_1$), and (d) the decision tree that corresponds to $B'_2$ with exit vertices of $g_2$ labeling the leaves ($T'_2$).
    (We use the first entry of each $\B{c_1,c_2}$ pair for $B_1$ and $B'_1$, and the second entry of each $\B{c_1,c_2}$ pair for $B_2$ and $B'_2$.)
    This process gives us four trees, $T_1$, $T'_1$, $T_2$, and $T'_2$, where---from the supposition that $D = D'$ and $rt = rt'$---we must have $T_1 = T'_1$ and $T_2 = T'_2$.
  \item
    By assumption, $g_1$ and $g_2$ are well-formed proto-CFLOBDDs;
    thus, by Structural Invariant~\ref{Inv:2}, all return tuples for the $B$-connections of $g_1$ and $g_2$ must represent 1-to-1 maps.
    Moreover, $B_1$, $B_2$, $B'_1$, and $B'_2$ are also well-formed proto-CFLOBDDs, which means that, in $g_1$, $B_1$ together with $rt_1$ must be the unique representative of occurrences of $T_1$ in $g_1$'s decision tree, while $B'_1$ together with $rt'_1$ must be the unique representative of occurrences of $T'_1$.
    Similarly, in $g_2$, $B_2$ together with $rt_2$ must be the unique representative of occurrences of $T_2$ in $g_2$'s decision tree, while $B'_2$ together with $rt'_2$ must be the unique representative of occurrences of $T'_2$.

    Therefore, in $g_1$, we have
    \begin{itemize}
      \item
        $B_1 = B'_1$ and $rt_1 = rt'_1$,
    \end{itemize}
    while in $g_2$, we have
    \begin{itemize}
      \item
        $B_2 = B'_2$ and $rt_2 = rt'_2$.
    \end{itemize}
    However, these conditions are a violation of Structural Invariant~\ref{Inv:4}, which, in turn, contradicts the assumption that $g_1$ and $g_2$ are well-formed level-$k\textrm{+}1$ proto-CFLOBDDs.
    Consequently, the assumption that $D = D'$ and $rt = rt'$ cannot be true.
\end{itemize}
\end{InductionStep}
\end{Proof}
\end{Pro}

\section{Additional Operations on CFLOBDDs}
\label{Se:cflobdd-additional-algos}

\subsection{Ternary Operations on CFLOBDDs}
\label{Se:ternary-op}

This section discusses how ternary operations (i.e., three-argument operations) on CFLOBDDs are performed.
\algrefs{TernaryApplyAndReduce}{TripleProduct} present the two algorithms needed to implement ternary operations on multi-terminal CFLOBDDs.
As in \sectref{BinaryOperationsOnCFLOBDDs}, we assume that the {\tt CFLOBDD\/} or {\tt Grouping\/} arguments of the operations described below are objects whose highest-level groupings are all at the same level.

\begin{algorithm}[tb!]
\caption{TernaryApplyAndReduce\label{Fi:TernaryApplyAndReduce}}
\Input{CFLOBDDs n1, n2, n3, Op op}
\Output{CFLOBDD n = op(n1, n2, n3)}
\Begin{
assert(n1.grouping.level == n2.grouping.level $\&\&$ n2.grouping.level == n3.grouping.level)\;
\tcp{Perform triple cross product}
Grouping$\times$TripleTuple [g,tt] = TripleProduct(n1.grouping, n2.grouping, n3.grouping)\;  \label{Li:TAAR:CallTripleProduct}
\tcp{Create tuple of "leaf" values}
ValueTuple deducedValueTuple = [op(n1.valueTuple[i1], n2.ValueTuple[i2], n3.ValueTuple[i3]) : [i1,i2,i3]$\in$ tt]\; \label{Li:TAAR:LeafValues}
\tcp{Collapse duplicate leaf values, folding to the left}
Tuple$\times$Tuple [inducedValueTuple,inducedReturnTuple] = CollapseClassesLeftmost(deducedValueTuple)\; \label{Li:TAAR:CollapseLeafValues}
\tcp{Perform corresponding reduction on g, folding g's exit vertices w.r.t. inducedReductionTuple}
Grouping g' = Reduce(g, inducedReductionTuple)\;  \label{Li:TAAR:Reduce}
\Return RepresentativeCFLOBDD(g', inducedValueTuple)\;
}
\end{algorithm}

\begin{algorithm}[tb!]
\SetKwBlock{Begin}{begin}{}
\caption{TripleProduct}
\label{Fi:TripleProduct}
\Input{Groupings g1, g2, g3}
\Output{Grouping g: product of g1, g2, g3; TripleTuple ttAns: tuple of triples of exit vertices}
\Begin{
\label{Li:TPNoDistinctionStart}
\If{g1, g2, g3 are all no-distinction proto-CFLOBDDs}{\Return [g1, [[1,1,1]]]\;}
\If{g1 and g2 no-distinction proto-CFLOBDDs}{\Return [g3, [[1,1,k]: k $\in$ [1..g3.numberOfExits]]\;}
\If{g1 and g3 no-distinction proto-CFLOBDDs}{\Return [g2, [[1,k,1]: k $\in$ [1..g2.numberOfExits]]\;}
\If{g2 and g3 no-distinction proto-CFLOBDDs}{\Return [g1, [[k,1,1]: k $\in$ [1..g1.numberOfExits]]\;}
\If{g1 is a no-distinction proto-CFLOBDD}{
    Grouping$\times$PairTuple [g,pt] = PairProduct(g2,g3)\;
    \Return [g,[[1,j,k]: [j,k]$\in$ pt]]\;
}
\If{g2 is a no-distinction proto-CFLOBDD}{
    Grouping$\times$PairTuple [g,pt] = PairProduct(g1,g3)\;
    \Return [g,[[j,1,k]: [j,k]$\in$ pt]]\;
}
\If{g3 is a no-distinction proto-CFLOBDD}{
    Grouping$\times$PairTuple [g,pt] = PairProduct(g1,g2)\;
    \Return [g,[[j,k,1]: [j,k]$\in$ pt]]\;
}
\label{Li:TPNoDistinctionEnd}
\If{g1,g2,g3 are all fork groupings}{\Return [g1, [[1,1,1],[2,2,2]]\;}
\tcp{Combine the A-Connections}
Grouping$\times$TripleTuple [gA, ttA] = TripleProduct(g1.AConnection, g2.AConnection, g3.AConnection)\;
InternalGrouping g = new InternalGrouping(g1.level)\;
g.AConnection = gA\;
g.AReturnTuple = [1..|ttA|]\tcp*[r]{Represents the middle vertices}
g.numberOfBConnections = |ttA|\;
}
\end{algorithm}

\begin{algorithm}[tb!]
\caption{TripleProduct Contd.}
\SetKwBlock{Begin}{}{end}
\setcounter{AlgoLine}{34}
\Begin{
\tcp{Combine the B-connections, but only for triples in ttA}
\tcp{Descriptor of triples of exit vertices}
Tuple ttAns = []\;
\tcp{Create a B-Connection for each middle vertex}
\For{$j \leftarrow 1$ \KwTo $|ttA|$}{
    Grouping$\times$TripleTuple [gB,ttB] = TripleProduct(g1.BConnections[ttA(j)(1)], g2.BConnections[ttA(j)(2)], g3.BConnections[ttA(j)(3)])]\;
    g.BConnections[j] = gB\;
    g.BReturnTuples[j] = []\;
    \For{$i \leftarrow 1$ \KwTo $|ttB|$}{
        c1 = g1.BReturnTuples[ttA(j)(1)](ttB(i)(1))\;
        c2 = g2.BReturnTuples[ttA(j)(1)](ttB(i)(2))\;
        c3 = g3.BReturnTuples[ttA(j)(1)](ttB(i)(3))\;
        \tcp{Not a new exit vertex of g}
        \eIf{[c1,c2,c3] $\in$ ttAns}{
            index = the k such that ttAns(k) == [c1,c2,c3]\;
            g.BReturnTuples[j] = g.BReturnTuples[j] || index\;
        }{
            g.numberOfExits = g.numberOfExits + 1\;
            g.BReturnTuples[j] = g.BReturnTuples[j] || g.numberOfExits\;
            ttAns = ttAns || [c1,c2,c3]\;
        }
    }
}
\Return [RepresentativeGrouping(g), ttAns]\;  \label{Li:TPTabulateAnswer}
}
\end{algorithm}

\begin{itemize}
  \item
    The operation {\tt TernaryApplyAndReduce\/} given in
    ~\algref{TernaryApplyAndReduce} is very much like the
    operation
    \linebreak
    {\tt BinaryApplyAndReduce\/} of
    ~\algref{BinaryApplyAndReduce}, except that it starts with a
    call on {\tt TripleProduct\/} instead of {\tt PairProduct} (\lineref{TAAR:CallTripleProduct}).
  \item
    The operation {\tt TripleProduct}, which is given in
    ~\algref{TripleProduct}, is very much like the operation
    {\tt PairProduct\/} of ~\algref{PairProduct}, except
    that {\tt TripleProduct\/} has a third {\tt Grouping\/} argument,
    and performs a three-way---rather than two-way---cross product of
    the three {\tt Grouping\/} arguments: {\tt g1}, {\tt g2}, and {\tt
    g3}.  {\tt TripleProduct} returns the proto-CFLOBDD {\tt g\/}
    formed in this way, as well as a discriptor of the exit vertices
    of {\tt g\/} in terms of triples of exit vertices of the highest-level
    groupings of {\tt g1}, {\tt g2}, and {\tt g3}.

    \hspace{1.5ex}
    (By an argument similar to the one given for {\tt PairProduct},
    it is possible to show that the grouping {\tt g\/} constructed
    during a call on {\tt TripleProduct\/} is always a well-formed
    proto-CFLOBDD---and hence it is permissible to call
    {\tt RepresentativeGrouping(g)\/} in line~[\ref{Li:TPTabulateAnswer}]
    of ~\algref{TripleProduct}.)

  \item
    {\tt TernaryApplyAndReduce\/} then uses the triples describing the
    exit vertices to determine the tuple of leaf values that should be
    associated with the exit vertices (i.e., a tentative value tuple) (\lineref{TAAR:LeafValues}).
  \item
    Finally, {\tt TernaryApplyAndReduce\/} proceeds in the same
    manner as {\tt BinaryApplyAndReduce\/}:
    \begin{itemize}
      \item
        Two tuples that describe the collapsing of duplicate leaf
        values---assuming folding to the left---are created via a call
        to {\tt CollapseClassesLeftmost} (\lineref{TAAR:CollapseLeafValues}).
      \item
        The corresponding reduction is performed on {\tt Grouping\/}
        {\tt g\/}, by calling {\tt Reduce\/} to fold {\tt g\/}'s exit
        vertices with respect to variable {\tt inducedReductionTuple\/}
        (one of the tuples returned by the call on {\tt CollapseClassesLeftmost}) (\lineref{TAAR:Reduce}).
    \end{itemize}
\end{itemize}

Lastly, in the case of Boolean-valued CFLOBDDs, there are
$256$ possible ternary operations, corresponding to the
$256$ possible three-argument truth tables ($2 \times 2 \times 2$ matrices with
Boolean entries).
All $256$ possible ternary operations are special cases of
{\tt TernaryApplyAndReduce}; these can be performed by
passing {\tt TernaryApplyAndReduce} an appropriate
value for argument {\tt op} (i.e., some $2 \times 2 \times 2$
Boolean matrix.

One of the $256$ ternary operations is the operation called
$\ITE$~\cite{dac:BRB90} (for ``If-Then-Else''), which is defined
as follows:
\[
  \ITE(a,b,c) = (a \land b) \lor (\neg a \land c).
\]
Appendix~\sectref{ITE} shows how the ternary $\ITE$
operation can be used to implement all $16$ of the binary operations on
Boolean-valued CFLOBDDs~\cite{dac:BRB90}.

\subsection{Restrict}
\label{Se:restrict-op}

We now discuss the restriction operation.
Given a value $v$ to which variable $x_i$ is to be bound (e.g., by giving either the assignment $[{x_i} \mapsto T]$ or the assignment $[{x_i} \mapsto F]$), the Restrict operation applies the assignment to the CFLOBDD that represents a function $f$, and returns the CFLOBDD that represents $f |_{x_i = v}$.
\algrefs{RestrictCFLOBDDAlgorithm}{RestrictGroupingAlgorithm} gives pseudo-code for the algorithm.
At each level, the algorithm checks if index $i$ belongs to the \emph{A-Connection} or \emph{B-Connections} at every level, and calls Restrict recursively, as appropriate, on lower levels with an adjusted index value $i$.
The rest of the groupings are kept as is, except that some groupings are eliminated, and the positions of the remaining ones shifts (\algref{RestrictGroupingAlgorithm}, \linerefs{RestrictGrouping:DontCareReturn}{RestrictGrouping:CompactBConnections}).

\begin{algorithm}[t]
\caption{RestrictCFLOBDD\label{Fi:RestrictCFLOBDDAlgorithm}}
\Input{CFLOBDD c representing $f$, int i -- index, bool v -- $T/F$}
\Output{CFLOBDD c' representing $f' = f |_{x_i = v}$}
\Begin{
Grouping$\times$ReturnTuple [g, rt] = RestrictGrouping(c.grouping, i, v)\;
\Return RepresentativeCFLOBDD(g, [g.valueTuple[rt[i]] | $i \in [1..|rt|]$)\;
}
\end{algorithm}

\begin{algorithm}[t]
\caption{RestrictGrouping\label{Fi:RestrictGroupingAlgorithm}}
\Input{Grouping g, int i, bool v}
\Output{Grouping$\times$ReturnTuple [g', grt]}
\Begin{
\If{g == ForkGrouping}{
\Return [DontCareGrouping, (v == False) ? [1] : [2]]\;  \label{Li:RestrictGrouping:DontCareReturn}
}
\If{g == DontCareGrouping || g == NoDistinctionProtoCFLOBDD(g.level)}{
\Return [g, [1]]\;
}
InternalGrouping g' = new InternalGrouping(g.level)\;
\eIf(\tcp*[f]{i falls in AConnection range}){i < 2$\ast\ast$(g.level-1)}{
    Grouping$\times$ReturnTuple [aa, apt] = RestrictGrouping(g.AConnection, i, v)\;
    g'.AConnection = aa\;
    g'.AReturnTuple = [1..|apt|]\;
    g'.numberOfBConnections = |apt|\;
    \For{$j \leftarrow 1$ \KwTo $|apt|$}{
        g'.BConnections[j] = g.BConnections[apt[j]]\;  \label{Li:RestrictGrouping:CompactBConnections}
        \tcp{Fill in g'.BReturnTuples[j] from g.BReturnTuples[apt[j]] appropriately, along with populating grt   
        }
        \tcp{Keep track of number of exits of g'}
    }
}(\tcp*[f]{i falls in BConnections range}){
\For{$j \leftarrow 1$ \KwTo $g.numberOfBConnections$}{
Grouping$\times$ReturnTuple [bb, bpt] = RestrictGrouping(g.BConnections[j], i-(1 << (g.level-1)), v)\;
g'.BConnections[j] = bb\;
\tcp{Fill in g'.BReturnTuples[j] from bpt appropriately, along with populating grt  
}
\tcp{Keep track of number of exits of g'}
}
g'.AConnection = g.AConnection\;
g'.AReturnTuples = [1..|distinct g'.BConnections|]\;
g'.numberOfBConnections = |distinct g'.BConnections|\;
}
g'.numberOfExits = number of exits tracked so far\;
\Return [RepresentativeGrouping(g'), grt]\;
}
\end{algorithm}

\subsection{Existential Quantification}
\label{Se:existential-op}

For a CFLOBDD that represents a Boolean function $f$, existential quantification with respect to the Boolean variable at index $i$ yields $f |_{x_i == T} \lor f |_{x_i == F}$.
This operation can be implemented using two calls to Restrict, followed by a final call on {\tt BinaryApplyAndReduce} to perform the ``or.''

\begin{figure}
\begin{center}
\begin{tabular}{|c|c|c|}
  \hline
  Truth Table & Defining Expression & Definition Using $\ITE$ \\
  \hline
  {\scriptsize $\BoolOp{a}{b}{F}{F}{F}{F}$} & $\lambda a,b. F$ & $\lambda a,b. F$ \\* [3ex]
  \hline
  {\scriptsize $\BoolOp{a}{b}{F}{F}{F}{T}$} & $\lambda a,b. a \land b$ & $\lambda a,b. \ITE(a,b,F)$ \\* [3ex]
  \hline
  {\scriptsize $\BoolOp{a}{b}{F}{F}{T}{F}$} & $\lambda a,b. a \land \neg b$ & $\lambda a,b. \ITE(a,\neg b,F)$ \\* [3ex] 
  \hline
  {\scriptsize $\BoolOp{a}{b}{F}{F}{T}{T}$} & $\lambda a,b. a$ & $\lambda a,b. a$ \\* [3ex] 
  \hline
  {\scriptsize $\BoolOp{a}{b}{F}{T}{F}{F}$} & $\lambda a,b. \neg a \land b$ & $\lambda a,b. \ITE(a,F,b)$ \\* [3ex] 
  \hline
  {\scriptsize $\BoolOp{a}{b}{F}{T}{F}{T}$} & $\lambda a,b. b$ & $\lambda a,b. b$ \\* [3ex]
  \hline
  {\scriptsize $\BoolOp{a}{b}{F}{T}{T}{F}$} & $\lambda a,b. a \xor b$ & $\lambda a,b. \ITE(a,\neg b,b)$ \\* [3ex] 
  \hline
  {\scriptsize $\BoolOp{a}{b}{F}{T}{T}{T}$} & $\lambda a,b. a \lor b$ & $\lambda a,b. \ITE(a,T,b)$ \\* [3ex]
  \hline
  {\scriptsize $\BoolOp{a}{b}{T}{F}{F}{F}$} & $\lambda a,b. \neg(a \lor b)$ & $\lambda a,b. \ITE(a,F,\neg b)$ \\* [3ex] 
  \hline
  {\scriptsize $\BoolOp{a}{b}{T}{F}{F}{T}$} & $\lambda a,b. \neg(a \xor b)$ & $\lambda a,b. \ITE(a,b,\neg b)$ \\* [3ex] 
  \hline
  {\scriptsize $\BoolOp{a}{b}{T}{F}{T}{F}$} & $\lambda a,b. \neg b$ & $\lambda a,b. \ITE(b,F,T)$ \\* [3ex] 
  \hline
  {\scriptsize $\BoolOp{a}{b}{T}{F}{T}{T}$} & $\lambda a,b. a \lor \neg b$ & $\lambda a,b. \ITE(a,T,\neg b)$ \\* [3ex] 
  \hline
  {\scriptsize $\BoolOp{a}{b}{T}{T}{F}{F}$} & $\lambda a,b. \neg a$ & $\lambda a,b. \ITE(a,F,T)$ \\* [3ex] 
  \hline
  {\scriptsize $\BoolOp{a}{b}{T}{T}{F}{T}$} & $\lambda a,b. \neg a \lor b$ & $\lambda a,b. \ITE(a,b,T)$ \\* [3ex] 
  \hline
  {\scriptsize $\BoolOp{a}{b}{T}{T}{T}{F}$} & $\lambda a,b. \neg(a \land b)$ & $\lambda a,b. \ITE(a,\neg b,T)$ \\* [3ex]  \hline
  {\scriptsize $\BoolOp{a}{b}{T}{T}{T}{T}$} & $\lambda a,b. T$ & $\lambda a,b. T$ \\* [3ex] 
  \hline
\end{tabular}
\end{center}
\caption{}
\label{Fi:BinaryOpsViaITE}
\end{figure}

\section{Boolean Operations via ITE}
\label{Se:ITE}

The ternary operation $\ITE$~\cite{dac:BRB90} (for ``If-Then-Else''), is defined
as follows:
\[
  \ITE(a,b,c) = (a \land b) \lor (\neg a \land c).
\]
\figref{BinaryOpsViaITE} shows how the ternary $\ITE$ operation can be used to implement all $16$ of the binary operations on Boolean-valued CFLOBDDs~\cite{dac:BRB90}.

\section{Kronecker Product}
\label{Se:kronecker--product}

This section presents and discusses pseudo-code for the variant of Kronecker product sketched in \sectref{KroneckerProduct:VariantOne}.
Given CFLOBDDs for matrices $W$ and $V$, with the interleaved-variable orderings $x \bowtie y$ and $w \bowtie z$, respectively, the goal is to create a CFLOBDD for $W \tensor V$ with variable ordering $(x || w) \bowtie (y || z)$.\footnote{
  $\bowtie$ denotes the interleaving of two sequences of variables;
  $||$ denotes concatenation.
}

\begin{algorithm}
\Input{CFLOBDDs n1, n2 with variable ordering of n1: $x \bowtie y$ and n2: $w \bowtie z$}
\Output{CFLOBDD n = n1 $\otimes$ n2 with variable ordering of n: $(x || w) \bowtie (y || z)$}
\caption{Kronecker Product \label{Fi:KroneckerProductAlgo}}
\Begin{
\tcp{Create a CFLOBDD of size level(n1) + 1 with n1 as the AConnection}
    CFLOBDD g1 = RepresentativeCFLOBDD(ShiftToAConnection(n1.grouping), n1.valueTuple)\;
\tcp{Create a CFLOBDD of size level(n2) + 1 with n2 as the BConnection}
    CFLOBDD g2 = RepresentativeCFLOBDD(ShiftToBConnection(n2.grouping), n2.valueTuple)\;
    CFLOBDD n = BinaryApplyAndReduce(g1, g2, (op)Times)\;  \label{Li:ReduceKroneckerProduct}
    \Return n\;
}
\end{algorithm}

\begin{algorithm}
\Input{Grouping g}
\Output{Grouping g' such that AConnection of g' = g}
\caption{ShiftToAConnection\label{Fi:ShiftToAConnection}}
\Begin{
    InternalGrouping g' = new InternalGrouping(g.level + 1)\;
    g'.AConnection = g\;
    g'.AReturnTuple = [1..|g.numberOfExits|]\;
    g'.numberOfBConnections = |g.numberOfExits|\;
    \For{$j \leftarrow 1$ \KwTo $g.numberOfExits$}{
        g'.BConnection[j] = NoDistinctionProtoCFLOBDD(g.level)\;
        g'.BReturnTuples[j] = [j]\;
}
    g'.numberOfExits = |g.numberOfExits|\;
    \Return RepresentativeGrouping(g')\;
}
\end{algorithm}

\begin{algorithm}
\caption{ShiftToBConnection\label{Fi:ShiftToBConnection}}
\Input{Grouping g}
\Output{Grouping g' such that BConnection of g' = g}
\BlankLine
\Begin{
    InternalGrouping g' = new InternalGrouping(g.level + 1)\;
    g'.AConnection = NoDistinctionProtoCFLOBDD(g.level)\;
    g'.AReturnTuple = [1]\;
    g'.numberOfBConnections = 1\;
    g'.BConnection[1] = g\;
    g'.numberOfExits = |g.numberOfExits|\;
    \Return RepresentativeGrouping(g')\;
}
\end{algorithm}

\figref{KroneckerProduct}a shows a level-$k$ CFLOBDD for some array $W$, where $W$'s value tuple is $[w_0, w_1, w_2]$;
\figref{KroneckerProduct}b shows a level-$k$ CFLOBDD for some array $V$, where $V$'s value tuple is $[v_0, v_1, v_2]$.
($W$ and $V$ could have been embedded into level-$k\textrm{+}1$ CFLOBDDs;
for the sake of clarity, we have not depicted such structures.)
\figref{KroneckerProduct}c shows the level-$k\textrm{+}1$ CFLOBDD that would be constructed by \algref{KroneckerProductAlgo} before any collapsing of the value tuple via {\tt Reduce} in the call to {\tt BinaryApplyAndReduce} in \lineref{ReduceKroneckerProduct}.

Under the variable order $(x || w) \bowtie (y || z)$, as we work through the CFLOBDD shown in \figref{KroneckerProduct}c for a given assignment, the values of the first $2^k$ Boolean variables lead us to a middle vertex of the level-$k\textrm{+}1$ grouping.
This path will be continued according to the values of the next $2^k$ variables.
Call these two paths $p_A$ and $p_B$, respectively.
Under the interleaved-variable ordering, $p_A$ takes us to a particular block of the matrix that \figref{KroneckerProduct}c represents, and $p_B$ takes us to a particular element of that block.

The  path $p_A$ uses the $2^k$ variables in $x \bowtie y$, and thus can also be thought of as taking us to an exit vertex $e_w$ that in matrix $W$ is associated with some terminal value $w_i$.
The path $p_B$ uses the $2^k$ variables in $w \bowtie z$, and thus can be thought of as taking us to an exit vertex $e_v$ that in matrix $V$ is associated with some terminal value $v_j$.
In the CFLOBDD shown in \figref{KroneckerProduct}c, the terminal value at the end of the path $p_A || p_B$ is $w_i v_j$.
This value is exactly what is required of the matrix $W \tensor V$.
After {\tt Reduce} is called on \figref{KroneckerProduct}c, by the canonicity property, the multi-terminal CFLOBDD that results must be the unique representation of $W \tensor V$ under the variable ordering $(x || w) \bowtie (y || z)$.

\section{Efficient Construction of $\textit{Column1Matrix}_n$}
\label{Se:Column1Matrix-construction}

\begin{figure}[bt!]
    \centering
    \begin{subfigure}[t]{0.495\linewidth}
    \centering
    \includegraphics[align=c,width=.5\linewidth]{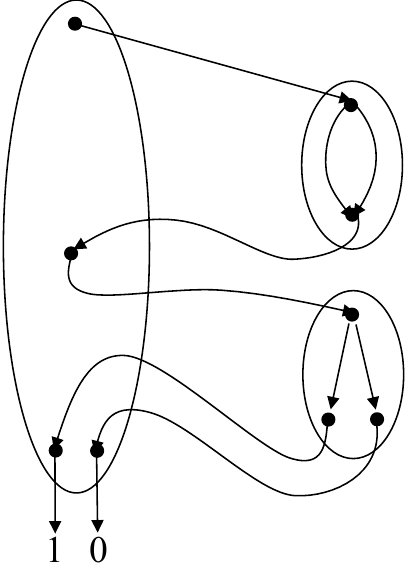}
    \caption{CFLOBDD representation of $\textit{Column1Matrix}_2$}
    \end{subfigure}
    \begin{subfigure}[t]{0.495\linewidth}
    \includegraphics[align=c,width=\linewidth]{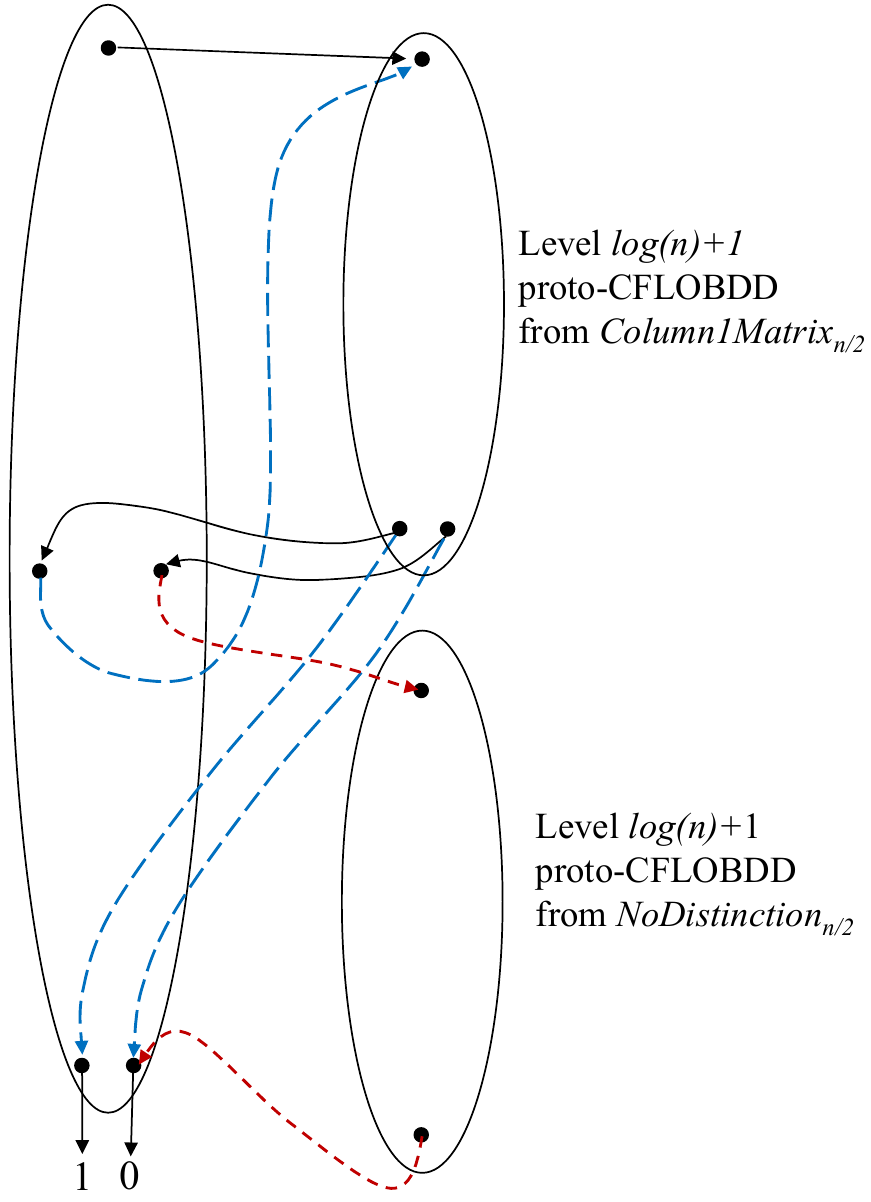}
    \caption{CFLOBDD representation of $\textit{Column1Matrix}_n$}
    \end{subfigure}
    \caption{(a) Base case; (b) the recursive structure for general $n$.
    }
    \label{Fi:column1K}
\end{figure}

\begin{algorithm}
\caption{Column1Matrix \label{Fi:column1Algo}}
\SetKwFunction{ColumnMatrixCFLOBDD}{ColumnMatrixCFLOBDD}
\SetKwFunction{ColumnMatrixGrouping}{ColumnMatrixGrouping}
\SetKwProg{myalg}{Algorithm}{}{end}
\myalg{\ColumnMatrixCFLOBDD{l}}{
\Input{int $l$ - level of the CFLOBDD = $(\log{n}) + 1$, where $2n =$ the number of variables}
\Output{CFLOBDD c representing $\textit{Column1Matrix}_{n}$}
\Begin{
Grouping g = Column1MatrixGrouping(l)\;
\Return RepresentativeCFLOBDD(g, [1,0])\;
}
}{}
\setcounter{AlgoLine}{0}
\SetKwProg{myproc}{SubRoutine}{}{end}
\myproc{\ColumnMatrixGrouping{l}}{
\Input{int $l$ - level of the CFLOBDD = $(\log{n}) + 1$, where $2n =$ the number of variables}
\Output{Grouping g representing proto - $\textit{Column1Matrix}_{n}$}
\Begin{
InternalGrouping g = new InternalGrouping(l)\;
\eIf{l == 1}{
g.AConnection = ForkGrouping\;  \label{Li:BaseCaseColumnMatrixStart}
g.AReturnTuples = [1,2]\;
g.numberOfBConnections = 2\;
g.BConnection[1] = DontCareGrouping\;
g.ReturnTuples[1] = [1]\;
g.BConnection[2] = DontCareGrouping\;
g.BReturnTuples[2] = [2]\;   \label{Li:BaseCaseColumnMatrixEnd}
}
{
Grouping g' = ColumnMatrixGrouping(l-1)\;  \label{Li:GeneralCaseColumnMatrixStart}
g.AConnection = g'\;     \label{Li:FirstRecursiveReuse}
g.AReturnTuples = [1,2]\;
g.numberOfBConnections = 2\;
g.BConnection[1] = g'\;    \label{Li:SecondRecursiveReuse}
g.BReturnTuples[1] = [1,2]\;
g.BConnection[2] = NoDistinctionProtoCFLOBDD(l-1)\;
g.BReturnTuples[2] = [2]\;  \label{Li:GeneralCaseColumnMatrixEnd}
}
g.numberOfExits = 2\;
\Return RepresentativeGrouping(g)\;
}
}
\end{algorithm}

$\textit{Column1Matrix}_n$ is a square matrix of size $2^n\times2^n$ in which the first column is filled with 1s, and all other entries are 0. 
\begin{equation*}
    \textit{Column1Matrix}_{n} = 
    \begin{bmatrix}
    1 & 0 & \cdots & 0\\
    1 & 0 & \cdots & 0\\
    \vdots & \vdots & \ddots & \vdots\\
    1 & 0 & \cdots & 0\\
    \end{bmatrix}_{2^n\times2^n}\\
\end{equation*}
$\textit{Column1Matrix}_{n}$ can be recursively defined in terms of the matrices $\textit{Column1Matrix}_{n/2}$ (of size $2^{n/2}\times2^{n/2}$) and $O_{n/2}$ (the all-zero matrix of size $2^n\times 2^n$).
\[
    \textit{Column1Matrix}_n = 
    \begin{cases}
        \vspace{2ex}
        \begin{bmatrix}
            \textit{Column1Matrix}_{n/2} & O_{n/2} & \cdots & O_{n/2}\\
            \textit{Column1Matrix}_{n/2} & O_{n/2} & \cdots & O_{n/2}\\
            \vdots & \vdots & \ddots & \vdots\\
            \textit{Column1Matrix}_{n/2} & O_{n/2} & \cdots & O_{n/2}\\
        \end{bmatrix}_{2^n\times2^n} & \\
        \vspace{2ex}
         \quad \quad = \textit{Column1Matrix}_{n/2}\tensor \textit{Column1Matrix}_{n/2}
            & n > 1\\
        \begin{bmatrix}
            1 & 0\\
            1 & 0
        \end{bmatrix} & n = 1\\
    \end{cases}
\]

\paragraph{Base Case.}
The CFLOBDD representation for the base case of $n = 1$, $\textit{Column1Matrix}_1$, is shown in \figref{column1K}a.
The base-case matrix requires two Boolean variables $x_0$ and $y_0$:
$x_0$ specifies the row, and $y_0$ specifies the column;
hence, the CFLOBDD that represents $\textit{Column1Matrix}_1$ has two levels, $0$ and $1$.
The rows of $\textit{Column1Matrix}_1$ are identical, so the \emph{A-Connection} grouping at level $1$ (for $x_0$) is a \emph{DontCareGrouping}.
In contrast, the columns of $\textit{Column1Matrix}_1$ are not identical, so the \emph{B-Connection} grouping at level $1$ (for $y_0$) is a \emph{ForkGrouping}.
See \figref{column1K}a and \lineseqref{BaseCaseColumnMatrixStart}{BaseCaseColumnMatrixEnd} of {\tt ColumnMatrixGrouping} in \algref{column1Algo}.
Note that the left exit vertex of the level-$1$ proto-CFLOBDD represents the first-column entries of the matrix (which are to have the value $1$), and the right exit vertex represents the matrix entries that are to have the value $0$.

\paragraph{General Case ($n > 1$).}
The steps to create $\textit{Column1Matrix}_n$, for $n > 1$, are shown in \lineseqref{GeneralCaseColumnMatrixStart}{GeneralCaseColumnMatrixEnd} of {\tt ColumnMatrixGrouping} in \algref{column1Algo}.
The number of levels in the CFLOBDD that represents $\textit{Column1Matrix}_n$ is $\log{n}$ + 2, to provide for the needed $2n$ Boolean variables.
(The leaves are at level $0$, so the outermost level is $l = \log{n} + 1$.)
The \emph{A-Connection} of the level-$l$ grouping represents a function involving the most-significant $\frac{n}{2}$ row and $\frac{n}{2}$ column variables.
From the recursive definition of $\textit{Column1Matrix}_n = \textit{Column1Matrix}_{n/2} \tensor \textit{Column1Matrix}_{n/2}$, the function for the first $\frac{n}{2}$ variables is obtained via a recursive call on $\textit{Column1Matrix}_n$ with half of the variables. 
Hence, the \emph{A-Connection} grouping at level-$l$ is a proto-CFLOBDD that represents $\textit{Column1Matrix}_{n/2}$ (\figref{column1K}b and \lineref{FirstRecursiveReuse} in \algref{column1Algo}).
The number of exits of this proto-CFLOBDD is two (for which the invariant is maintained that the left exit vertex of the level-$l\textrm{--}1$ proto-CFLOBDD is associated with $1$, and the right exit vertex is associated with $0$).
The grouping at level $l$ therefore has two \emph{B-Connection} groupings, the first one being a second use of the proto-CFLOBDD for $\textit{Column1Matrix}_{n/2}$
(\figref{column1K}b and \lineref{SecondRecursiveReuse} in \algref{column1Algo}).
The second \emph{B-Connection} grouping at level-$l$ is a proto-CFLOBDD for NoDistinction$_{n/2}$, representing $O_{n/2}$.
This recursive structure is shown in \figref{column1K}b.

At top level, the level-$l$ grouping has two exit vertices---the first maps to the value $1$ and the second maps to $0$.

\section{Algorithm for Constructing the CNOT Matrix}
\label{Se:CNOTConstructionAlgo}

Pseudo-code for the algorithm for constructing the CFLOBDD that represents the Controlled-NOT matrix is given in \algseqref{cnot2voc}{cnot2vocFinalPart}.
The algorithm for the special-case construction of CNOT$_n$ discussed in \sectref{CNOTMatrix} is given in~\algref{cnot4voc}.

\begin{algorithm}[bt]
\caption{Controlled-NOT matrix \label{Fi:cnot2voc}}
\SetKwFunction{CNOTCFLOBDD}{CNOTCFLOBDD}
\SetKwFunction{CNOTGrouping}{CNOTGrouping}
\SetKwProg{myalg}{Algorithm}{}{end}
\myalg{\CNOTCFLOBDD{n, i, j}}{
\Input{int $n$, where $2n =$ \#Variables of the CFLOBDD, int $i$ - control-bit; int $j$ - controlled-bit}
\Output{CFLOBDD c representing CNOT($n$, $i$, $j$)}
\Begin{
Grouping g = CNOTGrouping($\log{}n + 1$, $i$, $j$)\;
\Return RepresentativeCFLOBDD(g, [1,0])\;
}
}
\setcounter{AlgoLine}{0}
\SetKwProg{myproc}{SubRoutine}{}{}
\myproc{\CNOTGrouping{l, i, j}}{
\Input{int $l$ - grouping level, int $i$ - control-bit; int $j$ - controlled-bit}
\Output{Grouping $g$ representing CNOT($l$, $i$, $j$) with $n = 2^l$ bits}
\SetKwBlock{Begin}{begin}{}
\Begin{
\If(\tcp*[f]{Base Case}){l == 2}{
InternalGrouping g = new InternalGrouping($2$)\;
InternalGrouping g' = new InternalGrouping(1)\tcp*[r]{Level 1}
g'.AConnection = ForkGrouping\;
g'.AReturnTuple = [1,2]\;
g'.numberOfBConnections = 2\;
g'.BConnection[1] = ForkGrouping\;
g'.BReturnTuples[1] = [1,2]\;
g'.BConnection[2] = ForkGrouping\;
g'.BReturnTuples[2] = [2,3]\;
g'.numberOfExits = 3\;

g.AConnection = g'\;
g.AReturnTuple = [1,2,3]\;
g.numberOfBConnections = 3\;
g.BConnection[1] = IdentityMatrixGrouping(1)\;
g.BReturnTuples[1] = [1,2]\;
g.BConnection[2] = NoDistinctionProtoCFLOBDD(1)\;
g.BReturnTuples[2] = [2]\;
g.BConnection[3] = IdentityMatrixGrouping(1)\;
g.BReturnTuples[3] = [2,1]\;
g.numberOfExits = 2\;
\Return RepresentativeGrouping(g)\;
}
}
}
\end{algorithm}

\begin{algorithm}[tb]
\setcounter{AlgoLine}{26}
\SetKwProg{myproc}{}{}{}
\SetKwBlock{Begin}{}{}
\myproc{}{
\Begin{
\If(\tcp*[f]{Case 1}){$i$ and $j$ fall in A-connection range}{
InternalGrouping g = new InternalGrouping($l + 1$)\;
g.AConnection = CNOTGrouping(l-1, i, j)\;
g.AReturnTuple = [1,2]\;
g.numberOfBConnections = 2\;
g.BConnection[1] = IdentityMatrixGrouping(g.level - 1)\;
g.BReturnTuples[1] = [1,2]\;
g.BConnection[2] = NoDistinctionProtoCFLOBDD(g.level - 1)\;
g.BReturnTuples[2] = [2]\;
g.numberOfExits = 2\;
\Return RepresentativeGrouping(g)\;
}
\If(\tcp*[f]{Case 2}){$i$ and $j$ fall in B-connection range}{
InternalGrouping g = new InternalGrouping($l + 1$)\;
g.AConnection = IdentityMatrixGrouping(g.level - 1)\;
g.AReturnTuple = [1,2]\;
g.numberOfBConnections = 2\;
g.BConnection[1] = CNOTGrouping(l-1, i', j')\tcp*[r]{$i' = i - 2^{l-1}$, $j' = j - 2^{l-1}$}
g.BReturnTuples[1] = [1,2]\;
g.BConnection[2] = NoDistinctionProtoCFLOBDD(g.level - 1)\;
g.BReturnTuples[2] = [2]\;
g.numberOfExits = 2\;
\Return RepresentativeGrouping(g)\;
}
\If(\tcp*[f]{Case 3}){$i$ in A-connection range and $j$ in B-connection range}{
InternalGrouping g = new InternalGrouping($l + 1$)\;
g.AConnection = CNOTGrouping(l-1, i, -1)\;
g.AReturnTuple = [1,2,3]\;
g.numberOfBConnections = 3\;
g.BConnection[1] = IdentityMatrixGrouping(g.level-1)\;
g.BReturnTuples[1] = [1,2]\;
g.BConnection[2] = NoDistinctionProtoCFLOBDD(g.level - 1)\;
g.BReturnTuples[2] = [2]\;
g.BConnection[3] = CNOTGrouping(l-1, -1, j')\tcp*[r]{$j' = j - 2^{l-1}$}
g.BReturnTuples[3] = [2,1]\;
g.numberOfExits = 2\;
\Return RepresentativeGrouping(g)\;
}
}
}
\caption{CNOT contd.}
\end{algorithm}

\begin{algorithm}[tb]
\setcounter{AlgoLine}{67}
\SetKwProg{myproc}{}{}{}
\SetKwBlock{Begin}{}{}
\myproc{}{
\Begin{
\If(\tcp*[f]{Case 4}){$i$ in A-connection range but $j$ is not in this range}{
InternalGrouping g = new InternalGrouping($l + 1$)\;
g.AConnection = CNOTGrouping(l-1, i, -1)\;
g.AReturnTuple = [1,2,3]\;
g.numberOfBConnections = 3\;
g.BConnection[1] = IdentityMatrixGrouping(g.level-1)\;
g.BReturnTuples[1] = [1,2]\;
g.BConnection[2] = NoDistinctionProtoCFLOBDD(g.level - 1)\;
g.BReturnTuples[2] = [2]\;
g.BConnection[3] = IdentityMatrixGrouping(g.level-1)\;
g.BReturnTuples[3] = [3,2] \;
g.numberOfExits = 3\;
\Return RepresentativeGrouping(g)\;
}
\If(\tcp*[f]{Case 5}){$i$ in B-connection range but $j$ is not in this range}{
InternalGrouping g = new InternalGrouping($l + 1$)\;
g.AConnection = IdentityMatrixGrouping(g.level-1)\;
g.AReturnTuple = [1,2]\;
g.numberOfBConnections = 2\;
g.BConnection[1] = CNOTGrouping(l-1, i', -1)\tcp*[r]{$i' = i - 2^{l-1}$}
g.BReturnTuples[1] = [1,2,3]\;
g.BConnection[2] = NoDistinctionProtoCFLOBDD(g.level - 1)\;
g.BReturnTuples[2] = [2]\;
g.numberOfExits = 3\;
\Return RepresentativeGrouping(g)\;
}
\If(\tcp*[f]{Case 6}){$j$ in A-connection range but $i$ is not in this range}{
InternalGrouping g = new InternalGrouping($l + 1$)\;
g.AConnection = CNOTGrouping(l-1, -1, j)\;
g.AReturnTuple = [1,2]\;
g.numberOfBConnections = 2\;
g.BConnection[1] = NoDistinctionProtoCFLOBDD(g.level-1)\;
g.BReturnTuples[1] = [1,2]\;
g.BConnection[2] = IdentityMatrixGrouping(g.level - 1)\;
g.BReturnTuples[2] = [1]\;
g.numberOfExits = 2\;
\Return RepresentativeGrouping(g)\;
}
}
}
\caption{CNOT contd.}
\end{algorithm}

\begin{algorithm}[tb]
\setcounter{AlgoLine}{108}
\SetKwProg{myproc}{}{}{end}
\SetKwBlock{Begin}{}{end}
\myproc{}{
\Begin{
\If(\tcp*[f]{Case 7}){$j$ in B-connection range but $i$ is not in this range}{
InternalGrouping g = new InternalGrouping($l + 1$)\;
g.AConnection = IdentityMatrixGrouping(g.level-1)\;
g.AReturnTuple = [1,2]\;
g.numberOfBConnections = 2\;
g.BConnection[1] = CNOTGrouping(l-1, -1, j')\tcp*[r]{$j' = j - 2^{l-1}$}
g.BReturnTuples[1] = [1,2]\;
g.BConnection[2] = NoDistinctionProtoCFLOBDD(g.level - 1)\;
g.BReturnTuples[2] = [1]\;
g.numberOfExits = 2\;
\Return RepresentativeGrouping(g)\;
}
}
}
\caption{CNOT contd. \label{Fi:cnot2vocFinalPart}}
\end{algorithm}

\begin{algorithm}[tb]
\caption{Algorithm for constructing $\CNOT_n$ \label{Fi:cnot4voc}}
\SetKwFunction{CNOTInterleavedMatrixCFLOBDD}{CNOTInterleavedMatrixCFLOBDD}
\SetKwFunction{CNOTInterleavedMatrixGrouping}{CNOTInterleavedMatrixGrouping}
\SetKwProg{myalg}{Algorithm}{}{end}
\myalg{\CNOTInterleavedMatrixCFLOBDD{l}}{
\Input{int $l$ - level of the CFLOBDD = $\log{2n} + 1$, where $2n=$ number of bits}
\Output{The CFLOBDD that represents $\CNOT_n$}
\Begin{
Grouping g = CNOTInterleavedMatrixGrouping(l)\;
\Return RepresentativeCFLOBDD(g, [1,0])\;
}
}{}
\setcounter{AlgoLine}{0}
\SetKwProg{myproc}{SubRoutine}{}{end}
\myproc{\CNOTInterleavedMatrixGrouping{l}}{
\Input{int $l$ - level of the CFLOBDD = $\log{n}$, where $2n =$ number of bits}
\Output{Grouping g representing the proto-CFLOBDD for $\CNOT_n$}
\Begin{
InternalGrouping g = new InternalGrouping(l)\;
\eIf{l == 2}{
\Return CNOTGrouping(2, 1, 2)\tcp*[r]{The base case is $\CNOT_2$}
}
{
Grouping g' = CNOTInterleavedMatrixGrouping(l-1)\;
g.AConnection = g'\;
g.AReturnTuple = [1,2]\;
g.numberOfBConnections = 2\;
g.BConnection[1] = g'\;
g.BReturnTuples[1] = [1,2]\;
g.BConnection[2] = NoDistinctionProtoCFLOBDD(l-1)\;
g.BReturnTuples[2] = [2]\;
}
g.numberOfExits = 2\;
\Return RepresentativeGrouping(g)\;
}
}
\end{algorithm}

\section{Construction of Additional Quantum Gates}
\label{Se:QuantumGates}

In this section, we discuss the construction of two quantum gates:
the Controlled Phase gate and the Swap gate.

\subsection{Controlled-Phase Gate}
\label{Se:CPMatrix}

A Controlled-Phase gate ($\textit{CP}$) for two qubits with angle $\theta$ has the following matrix:
\[
\textit{CP}(\theta)~~= 
\begin{bNiceArray}{cccc}[first-col,first-row]
& 00 & 01 & 10 & 11\\
00 & 1 & 0 & 0 & 0\\
01 & 0 & 1 & 0 & 0\\
10 & 0 & 0 & 1 & 0\\
11 & 0 & 0 & 0 & e^{i \theta}\\
\end{bNiceArray}
\]
More generally, the controlled-phase gate with control qubit $i$ and controlled qubit $j$ adds the phase angle $\theta$ to the $j^{\textit{th}}$ qubit when both qubits have the value 1.
The construction of the CFLOBDD that represents a $\textit{CP}$ gate's matrix for $n$ qubits is similar to the construction of the CFLOBDD for the $\textit{CNOT}$ matrix.
We do not give pseudo-code, but \figref{CP} depicts the different cases of the construction of the proto-CFLOBDD for $\textit{CP}$.
The CFLOBDD for a $\textit{CP}$ gate attaches the three terminal values $[1, 0, e^{i\theta}]$.

\begin{figure}[tb!]
    \centering
    \begin{subfigure}[t]{0.3\linewidth}
      \centering
      \includegraphics[width=\linewidth]{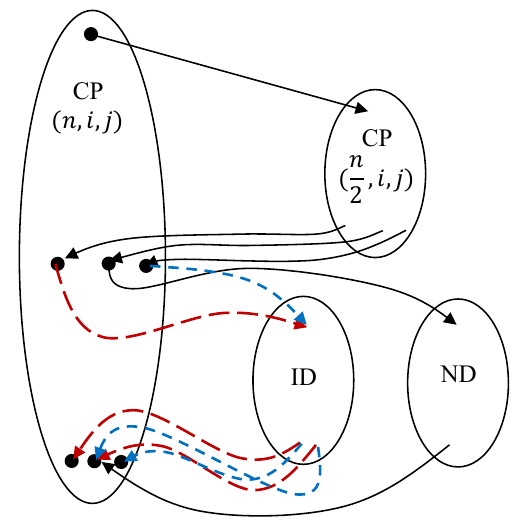}
      \caption{Case 1: Both $i$ and $j$ fall in A.}
    \end{subfigure}
    \hspace{2ex}
    \begin{subfigure}[t]{0.3\linewidth}
      \centering
      \includegraphics[width=\linewidth]{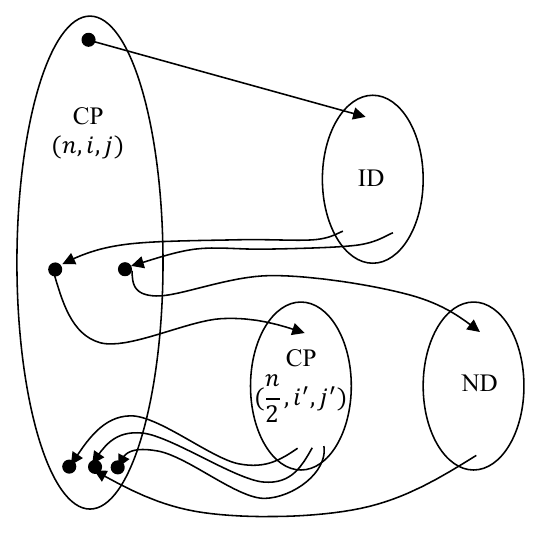}
      \caption{Case 2: Both $i$ and $j$ fall in B.}
    \end{subfigure}
    \hspace{2ex}
    \begin{subfigure}[t]{0.31\linewidth}
      \centering
      \includegraphics[width=\linewidth]{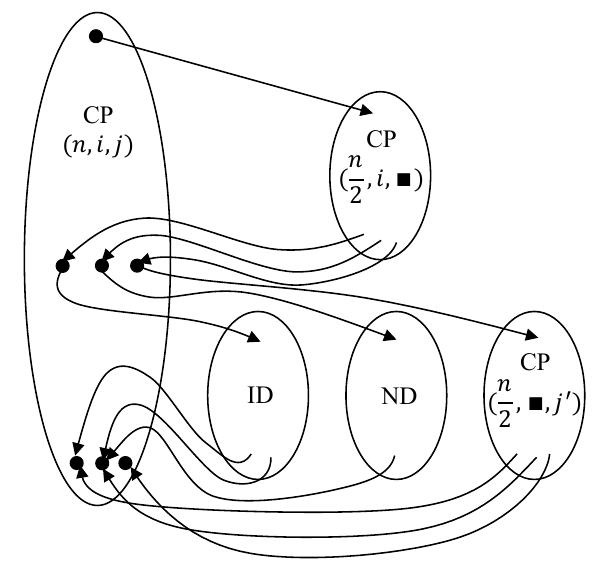}
      \caption{\protect \raggedright Case 3: $i$ in A and $j$ in B.}
    \end{subfigure}
    \begin{subfigure}[t]{0.3\linewidth}
      \centering
      \includegraphics[width=\linewidth]{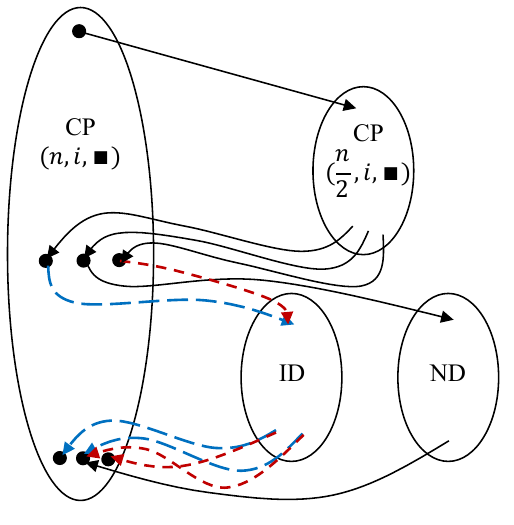}
      \caption{\protect \raggedright Case 4: $i$ in A and $j$ not in current grouping's range.}
    \end{subfigure}
    \hspace{2ex}
    \begin{subfigure}[t]{0.31\linewidth}
      \centering
      \includegraphics[width=\linewidth]{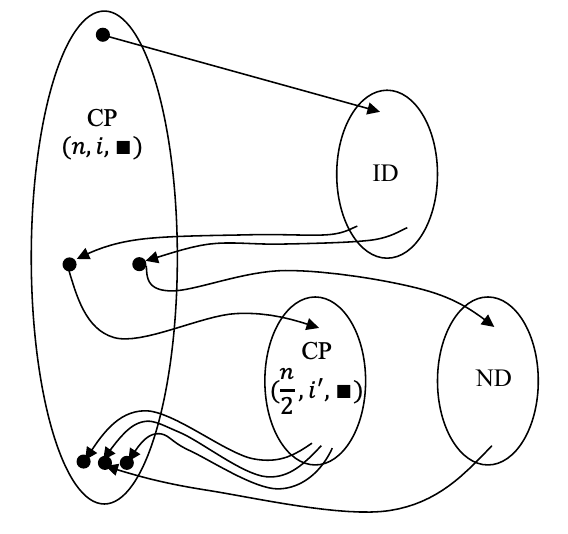}
      \caption{\protect \raggedright Case 5: $i$ in B and $j$ not in current grouping's range.}
    \end{subfigure}
    \hspace{2ex}
    \begin{subfigure}[t]{0.3\linewidth}
      \centering
      \includegraphics[width=\linewidth]{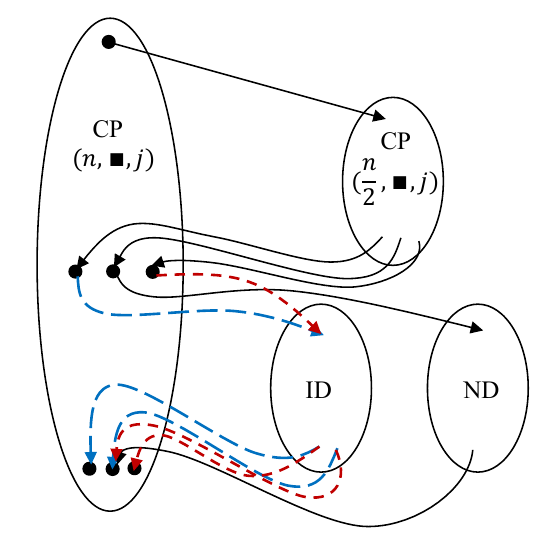}
      \caption{\protect \raggedright Case 6: $j$ in A and $i$ not in current grouping's range.}
    \end{subfigure}
    \begin{subfigure}[t]{0.33\linewidth}
      \centering
      \includegraphics[width=\linewidth]{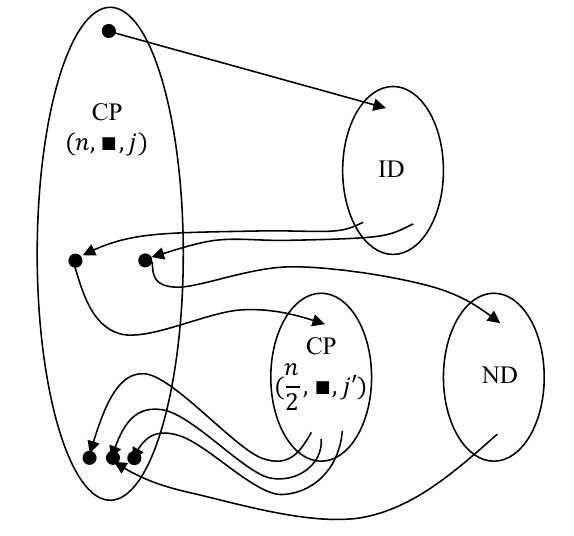}
      \caption{\protect \raggedright Case 7: $j$ in B and $i$ not in current grouping's range.}
    \end{subfigure}
    \hspace{3ex}
    \begin{subfigure}[t]{0.45\linewidth}
      \centering
      \includegraphics[width=.45\linewidth]{figures/CNOT_1_i_square-cropped.pdf}
      \caption{\protect \raggedright
      Base case: the same proto-CFLOBDD is used for both $CP(n, 1, \blacksquare)$ (to interpret the control-bit) and $CP(n, \blacksquare, 1)$ (to interpret the controlled-bit).
      }
    \end{subfigure}
    \caption{\protect \raggedright 
    The different cases of the $\textit{CP}$ construction.
    The text in each grouping denotes the function represented by the grouping.
    ID denotes {\tt IdentityMatrixGrouping}, and
    ND denotes a {\tt NoDistinctionProtoCFLOBDD} (used here for an all-zero matrix).
    $\textit{CP}$ takes 4 arguments:
    $n$ for the number of bits in this proto-CFLOBDD;
    $i$ for the control-bit, $j$ for the controlled-bit, where $0 \leq i < j < n$; and $\theta$ (which is only used at top level in the CFLOBDD's value tuple).
    $i'$ and $j'$ denote bit indices adjusted to the index range of the current level: $i' = i - n/2$; $j' = j - n/2$.
    A black square indicates that a particular index is outside the grouping's index range.
    Figure (h) shows the base case at level $1$;
    the same proto-CFLOBDD is used for both $CP(n, 1, \blacksquare)$ and $CP(n, \blacksquare, 1)$.
    }
    \label{Fi:CP}
\end{figure}

$\textit{CP}$ takes 4 arguments:
$n$ for the number of bits in this proto-CFLOBDD;
$i$ for the control-bit, $j$ for the controlled-bit, where $0 \leq i < j < n$; and the phase $\theta$.
The key to understanding \figref{CP} is that the construction
maintains the invariant shown in \eqref{CPExitVertexInvariant} on the exit vertices of the different kinds of groupings.

\begin{equation}
  \label{Eq:CPExitVertexInvariant}
  \begin{array}{r|c|ccc}
              & \multicolumn{1}{c|}{\multirow{2}{*}{\text{Role}}} & \multicolumn{3}{c}{\text{Significance of exit vertex}}  \\
    \cline{3-5}
              &             & 1 & 2 & 3 \\
    \hline
    \multirow{4}{*}{Proto-CFLOBDD} & \textit{CP}(n,i,j)                      & \text{on-path}    & \text{off-path} & \text{phase-path} \\
                                   & \textit{CP}(n,i,\blacksquare)           & \text{on-path}    & \text{off-path} & \text{phase-path} \\
                                   & \textit{CP}(n,\blacksquare,j)           & \text{on-path}   & \text{off-path}  & \text{phase-path} \\
                                   & \ID~\textrm{called from middle vertex 1} & \text{on-path}    & \text{off-path} & N/A \\
                                   & \ID~\textrm{called from middle vertex 3} & \text{phase-path} & \text{off-path} & N/A \\
    \hline
    CFLOBDD                        & \text{Top level}        & 1          & 0          & e^{i\theta} \\
    \hline
  \end{array}
\end{equation}
Here, ``on-path'' means that the exit occurs on a matched-path that can be continued to the top-level terminal value $1$;
``off-path'' means that it will only be used to reach the top-level terminal value $0$; and
``phase-path'' means that the exit occurs on a matched-path that can be continued by \texttt{IdentityMatrixGrouping}s to the top-level terminal values $\theta$ and $0$.

\subsection{Swap Gate}
\label{Se:SwapMatrix}

A Swap gate is a matrix that swaps two bits (or variables).
The Swap matrix for 2 bits (i.e., for 2 input variables and 2 output variables) is

\newpage

\begin{equation}
\label{Eq:SwapGateTwoByTwo}
\textit{SwapGate}~~= 
\begin{bNiceArray}{cccc}[first-row,first-col]
& 00 & 01 & 10 & 11\\
00 & 1 & 0 & 0 & 0\\
01 & 0 & 0 & 1 & 0\\
10 & 0 & 1 & 0 & 0\\
11 & 0 & 0 & 0 & 1\\
\end{bNiceArray}
\end{equation}
A matrix entry is 1 in exactly the positions where swapping the bits of the row index yields the bits of the column index.
The matrix can be divided into four quadrants that have different values:
\begin{equation*}
\begin{bmatrix}
1 & 0\\
0 & 0\\
\end{bmatrix}, 
\begin{bmatrix}
0 & 0\\
1 & 0\\
\end{bmatrix},
\begin{bmatrix}
0 & 1\\
0 & 0\\
\end{bmatrix},
~\textrm{and}~
\begin{bmatrix}
0 & 0\\
0 & 1\\
\end{bmatrix}.
\end{equation*}
This observation comes in handy in the construction of the generalized Swap matrix for higher numbers of bits $n$.

The construction of the $\textit{Swap}$ matrix is similar to the construction of the $\textit{CNOT}$ matrix.
Again, we do not give pseudo-code, but give a graphical depiction of the different cases of the construction.
The CFLOBDD for a $\textit{Swap}$ gate always has $[1,0]$ as the
value tuple for the terminal values.
\figrefsp{Swap}{SwapContd}{SwapContd2} depict the different cases involved in the construction of the proto-CFLOBDD for $\textit{Swap}$, given $n$ and the bits $i$ and $j$ to swap (where $i < j$).
The construction also uses an additional parameter, called ``$\textit{State}$,'' which takes the values $\{0..4\}$.
$\textit{State}$ is initially $0$, and remains $0$ until the Boolean variable corresponding to bit $i$ is encountered.
At this point, the construction creates representations of sub-matrices, where $\textit{State}$ takes the values 1, 2, 3, and 4, corresponding to the different $2\times2$ sub-matrices discussed above.
This situation is depicted in the base case shown in~\figref{SwapContd2}d.
These states are propagated further through the proto-CFLOBDD (see \figref{Swap}e, \figref{Swap}f, \figref{SwapContd}a, \figref{SwapContd}b, \figref{SwapContd2}b, and \figref{SwapContd2}c) until the base cases that interpret the controlled-bit are encountered (see \figref{SwapContd2}e, \figref{SwapContd2}f, \figref{SwapContd2}g, and \figref{SwapContd2}h).

The key to understanding \figrefsp{Swap}{SwapContd}{SwapContd2} is that the construction maintains the invariant shown in \eqref{SwapExitVertexInvariant} on the exit vertices of the different kinds of groupings.
(``On-path'' means that the exit occurs on a matched-path that can be continued to the top-level terminal value $1$;
``off-path'' means that it will only be used to reach the top-level terminal value $0$; and
``$\textit{cd-bit}$'' abbreviates ``controlled-bit.'')

\begin{equation}
  \label{Eq:SwapExitVertexInvariant}
  \begin{array}{@{\hspace{0ex}}r@{\hspace{0.95ex}}|@{\hspace{0.5ex}}c@{\hspace{0.95ex}}|@{\hspace{0.55ex}}c@{\hspace{0.95ex}}c@{\hspace{0.95ex}}c@{\hspace{0.95ex}}c@{\hspace{0.95ex}}c@{\hspace{0ex}}}
              & \multicolumn{1}{c@{\hspace{0.95ex}}|@{\hspace{0.55ex}}}{\multirow{2}{*}{\text{Role}}} & \multicolumn{5}{c}{\text{Significance of exit vertex}}  \\
    \cline{3-7}
              &             & 1 & 2 & 3 & 4 & 5\\
    \hline
    \multirow{4}{*}{Proto-CFLOBDD}
        & \textit{SWAP}(n,i,j)                      & \text{on-path}    & \text{off-path} & N/A & N/A & N/A \\
        & \textit{SWAP}(n,i,\blacksquare), \textit{cd-bit $=$ $\frac{n}{2}$-1} & \textit{State}\text{\,$=$\,}1    & \textit{State}\text{\,$=$\,}2 & \textit{State}\text{\,$=$\,}3 & \textit{State}\text{\,$=$\,}4 & \text{off-path} \\
        & \textit{SWAP}(n,i,\blacksquare), \textit{cd-bit $\neq \frac{n}{2}$-1} & \textit{State}\text{\,$=$\,}1    & \text{off-path}  & \textit{State}\text{\,$=$\,}2 & \textit{State}\text{\,$=$\,}3 & \textit{State}\text{\,$=$\,}4  \\
        & \textit{SWAP}(n,\blacksquare,j),~\textit{State}\text{\,$=$\,}1 & \text{on-path}   & \text{off-path}  & N/A & N/A & N/A \\
        & \textit{SWAP}(n,\blacksquare,j),~\textit{State}\text{\,$=$\,}2 & \text{off-path}   & \text{on-path}  & N/A & N/A & N/A \\
        & \textit{SWAP}(n,\blacksquare,j),~\textit{State}\text{\,$=$\,}3 & \text{off-path}   & \text{on-path}  & N/A & N/A & N/A \\
        & \textit{SWAP}(n,\blacksquare,j),~\textit{State}\text{\,$=$\,}4 & \text{off-path}   & \text{on-path}  & N/A & N/A & N/A \\
        & \ID,~\text{called from}~\textit{State}\text{\,$=$\,}1 & \text{\textit{State}\text{\,$=$\,}1}   & \text{off-path}  & N/A & N/A & N/A \\
        & \ID,~\text{called from}~\textit{State}\text{\,$=$\,}2 & \text{\textit{State}\text{\,$=$\,}2}   & \text{off-path}  & N/A & N/A & N/A\\
        & \ID,~\text{called from}~\textit{State}\text{\,$=$\,}3 & \text{\textit{State}\text{\,$=$\,}3}   & \text{off-path}  & N/A & N/A & N/A\\
        & \ID,~\text{called from}~\textit{State}\text{\,$=$\,}4 & \text{\textit{State}\text{\,$=$\,}4}   & \text{off-path}  & N/A & N/A & N/A \\
    \hline
    CFLOBDD & \text{Top level}        & 1          & 0          & N/A & N/A & N/A \\
    \hline
  \end{array}
\end{equation}

\begin{figure}[tb!]
    \centering
    \begin{subfigure}[t]{0.4\linewidth}
      \centering
      \includegraphics[width=0.75\linewidth]{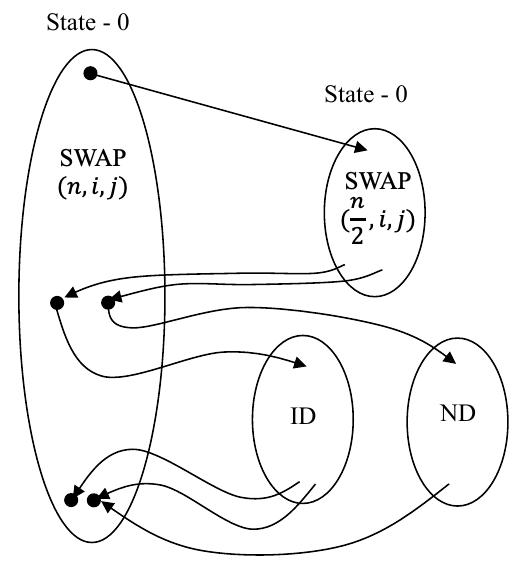}
      \vspace{-3pt}
      \caption{Case 1: Both $i$ and $j$ fall in A}
    \end{subfigure}
    \hspace{2ex}
    \begin{subfigure}[t]{0.4\linewidth}
      \centering
      \includegraphics[width=0.75\linewidth]{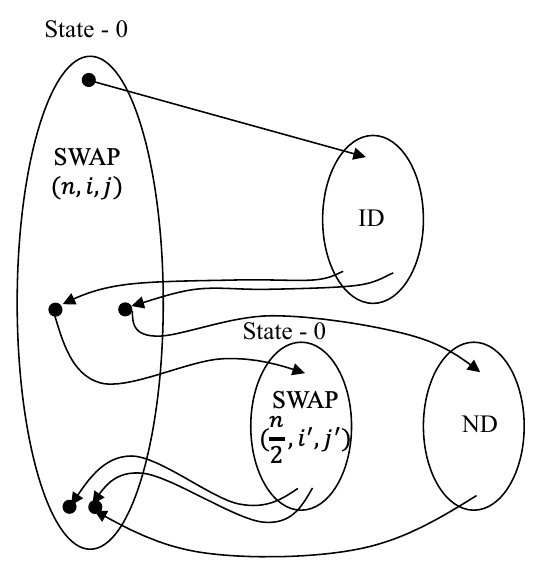}
      \vspace{-3pt}
      \caption{Case 2: Both $i$ and $j$ fall in B}
    \end{subfigure}
    \begin{subfigure}[t]{0.44\linewidth}
      \centering
      \includegraphics[width=\linewidth]{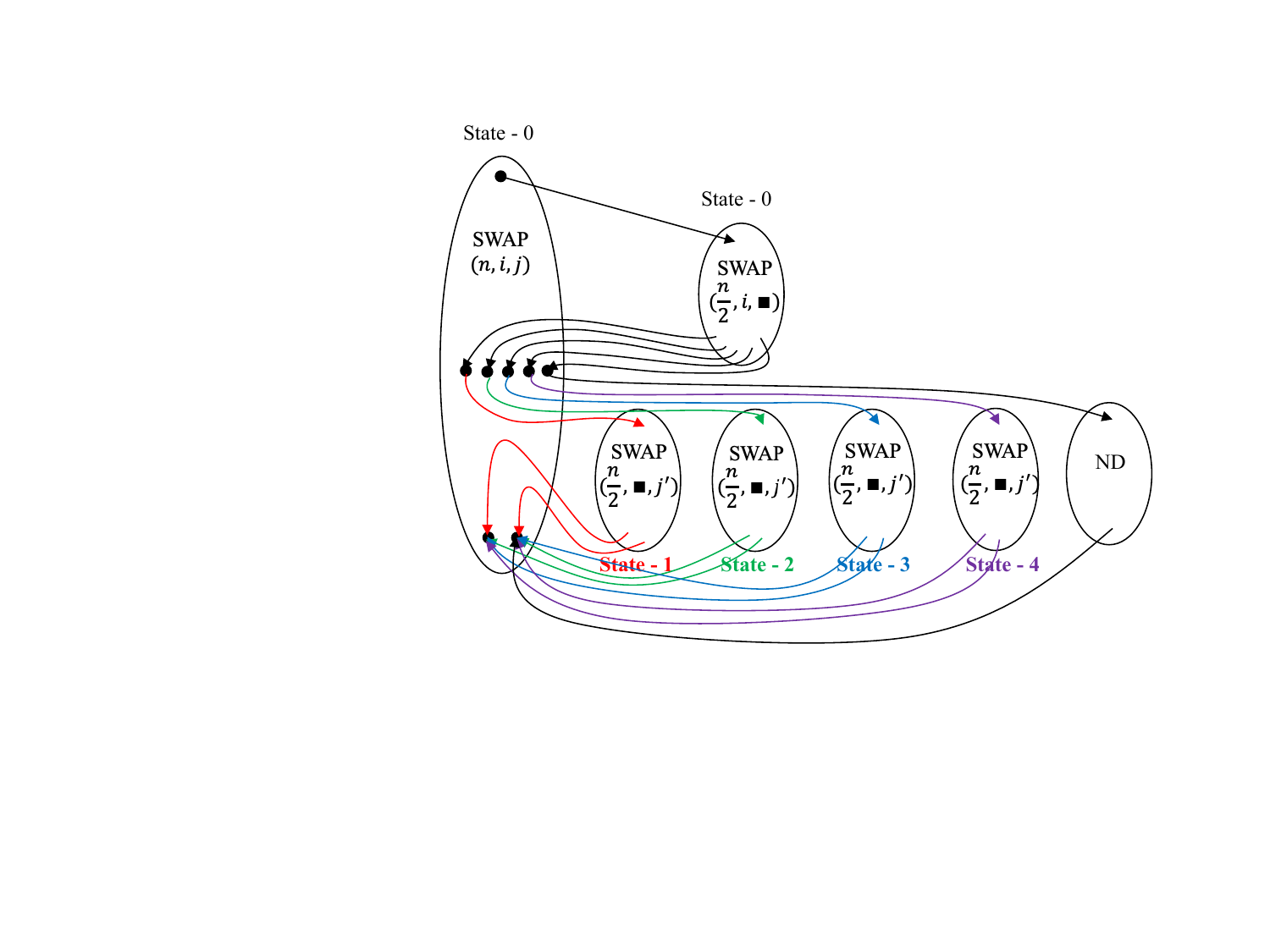}
      \vspace{-16pt}
      \caption[Case 3(a): $i$ in A and $j$ in B; controlled-bit == n/2-1]
        {\tabular[t]{@{}l@{}}Case 3(a): $i$ in A and $j$ in B; \\ controlled-bit = $n/2$-1
        \endtabular
      }
    \end{subfigure}
    \hspace{2ex}
    \begin{subfigure}[t]{0.44\linewidth}
      \centering
      \includegraphics[width=\linewidth]{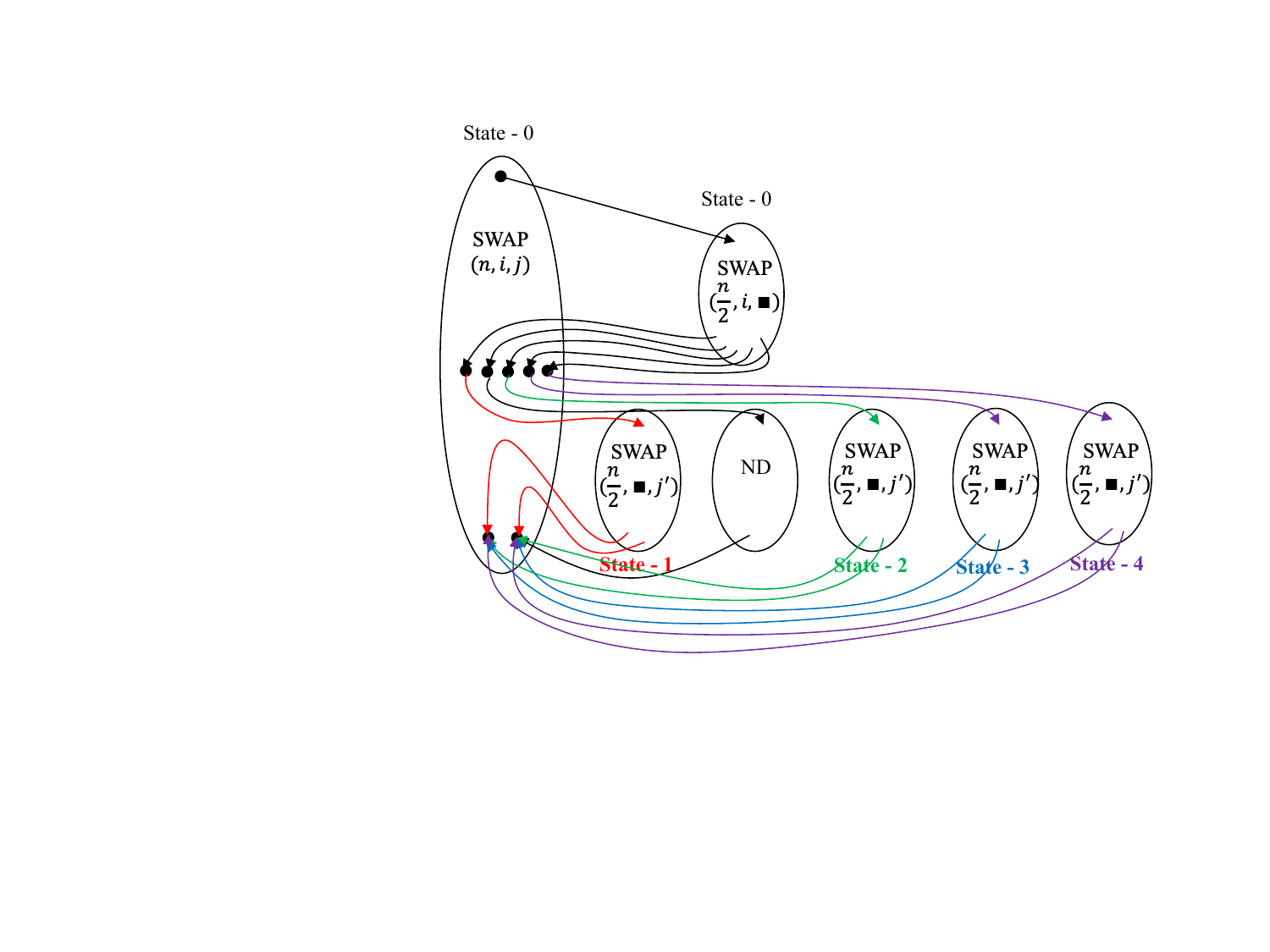}
      \vspace{-16pt}
      \caption[Case 3(b): $i$ in A and $j$ in B; controlled-bit != n/2-1]
        {\tabular[t]{@{}l@{}}Case 3(b): $i$ in A and $j$ in B; \\ controlled-bit $\neq$ $n/2$-1\endtabular}
    \end{subfigure}
    \begin{subfigure}[t]{0.4\linewidth}
      \centering
      \includegraphics[width=0.65\linewidth]{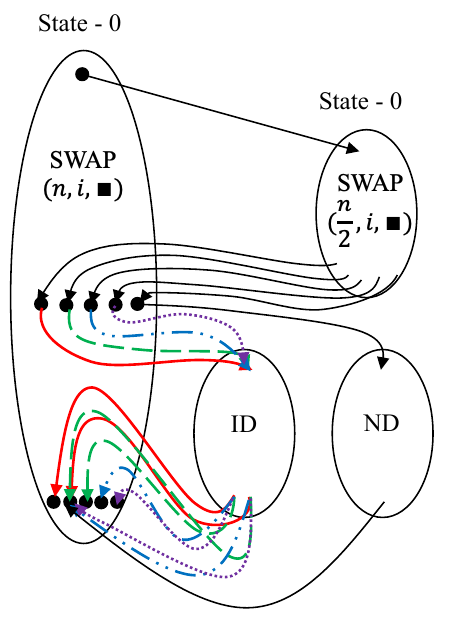}
      \vspace{-3pt}    
      \caption[Case 4(a): $i$ in A and $j$ not in current grouping's range; controlled-bit == n/2-1]
        {\tabular[t]{@{}l@{}}Case 4(a): $i$ in A and $j$ not in current range; \\ controlled-bit = $n/2$-1
        \endtabular
        }
    \end{subfigure}
    \hspace{4ex}
    \begin{subfigure}[t]{0.4\linewidth}
      \centering
      \includegraphics[width=0.65\linewidth]{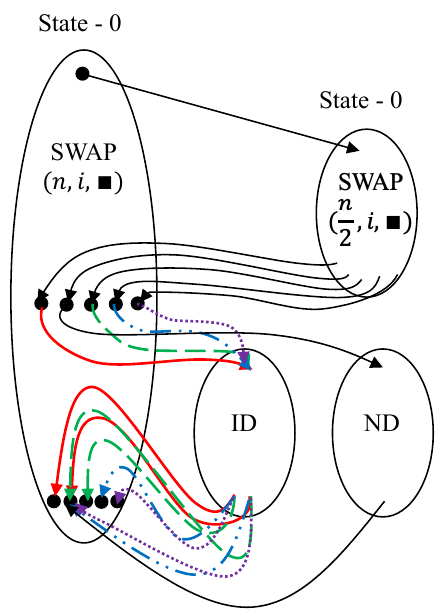}
      \vspace{-3pt}
      \caption[Case 4(b): $i$ in A and $j$ not in current grouping's range; controlled-bit != n/2-1]
        {\tabular[t]{@{}l@{}}Case 4(b): $i$ in A and $j$ not in current range; \\ controlled-bit $\neq$ $n/2$-1
        \endtabular}
    \end{subfigure}
    \caption{\protect \raggedright 
    The different cases of the \textit{SWAP} matrix construction.
    The text in each grouping denotes the function represented by the grouping.
    ID denotes {\tt IdentityMatrixGrouping}, and ND denotes a {\tt NoDistinctionProtoCFLOBDD} (used here for an all-zero matrix).
    \textit{SWAP} takes 4 arguments: $n$ for the number of bits (number of variables will be $2n$) in this proto-CFLOBDD;
    $i$ for the control-bit, $j$ for the controlled-bit, and $\textit{State} \in \{ 0,1,2,3,4 \}$ to indicate the current mode of the construction.
    $i'$ and $j'$ denote bit indices adjusted to the index range of the current level:
    $i' = i - n/2$; $j' = j - n/2$.
    A black square indicates that a particular index is outside the grouping's index range.
    }
    \label{Fi:Swap}
\end{figure}

\begin{figure}[tb!]
    \begin{subfigure}[t]{0.45\linewidth}
      \centering
      \includegraphics[width=0.8\linewidth]{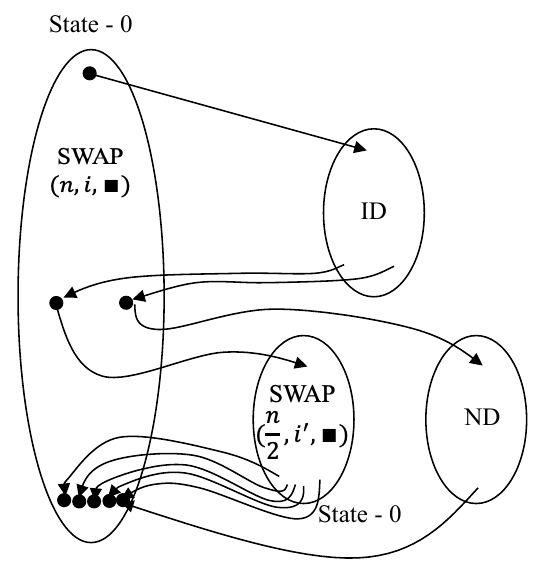}
      \vspace{-8pt}
      \caption[Case 5(a): $i$ in B and $j$ not in current range; controlled-bit = $n$-1]
        {\tabular[t]{@{}l@{}}Case 5(a): $i$ in B and $j$ not in current range; \\ control-bit = $n$-1
        \endtabular}
    \end{subfigure}
    \begin{subfigure}[t]{0.45\linewidth}
      \centering
      \includegraphics[width=0.8\linewidth]{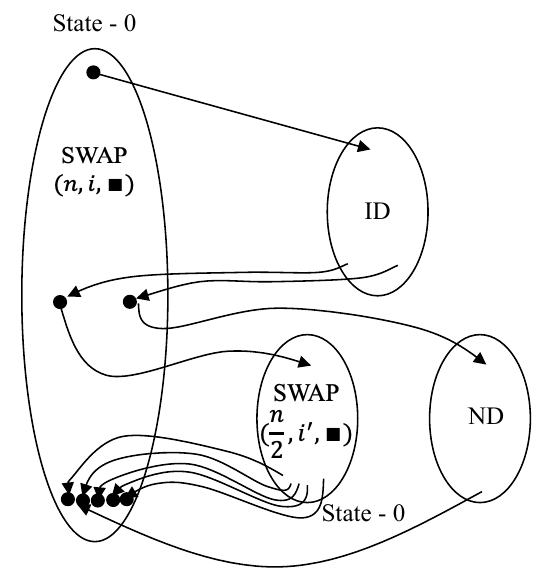}
      \vspace{-8pt}
      \caption[Case 5(b): $i$ in B and $j$ not in current range; controlled-bit $\neq$ $n$-1]
        {\tabular[t]{@{}l@{}}Case 5(b): $i$ in B and $j$ not in current range; \\ control-bit $\ne$ $n$-1\endtabular}
    \end{subfigure}
    \begin{subfigure}[t]{0.45\linewidth}
      \centering
      \includegraphics[width=0.8\linewidth]{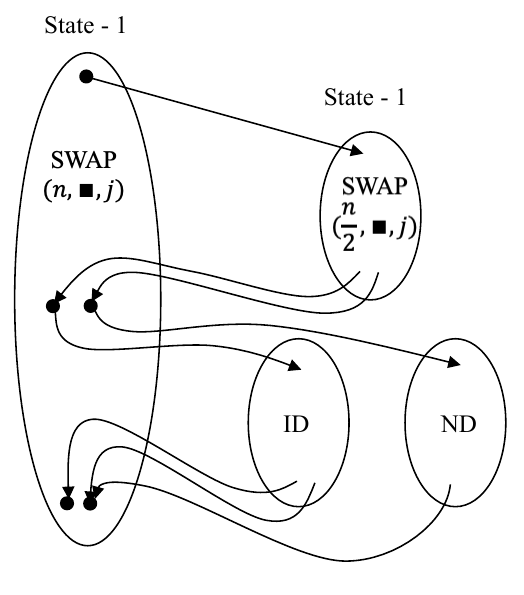}
      \vspace{-8pt}
      \caption[Case 6(a): $j$ in A and $i$ not in current range; $\textit{State}$ = 1]
        {\tabular[t]{@{}l@{}}Case 6(a): $j$ in A and $i$ not in current range;\\ $\textit{State}$ = 1\endtabular}
    \end{subfigure}
    \begin{subfigure}[t]{0.45\linewidth}
      \centering
      \includegraphics[width=0.8\linewidth]{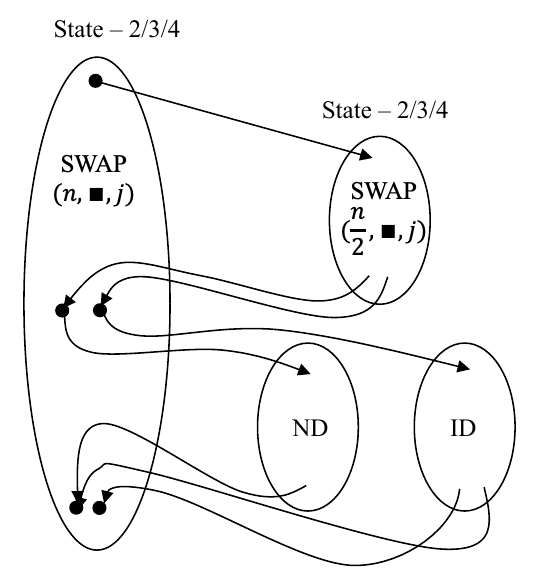}
      \vspace{-8pt}
      \caption[Case 6(b): $j$ in A and $i$ not in current range; $\textit{State}$ = 2,3, or 4]
        {\tabular[t]{@{}l@{}}Case 6(b): $j$ in A and $i$ not in current range;\\ $\textit{State}$ = 2, 3, or 4\endtabular}
    \end{subfigure}
    \begin{subfigure}[t]{0.45\linewidth}
      \centering
      \includegraphics[width=0.8\linewidth]{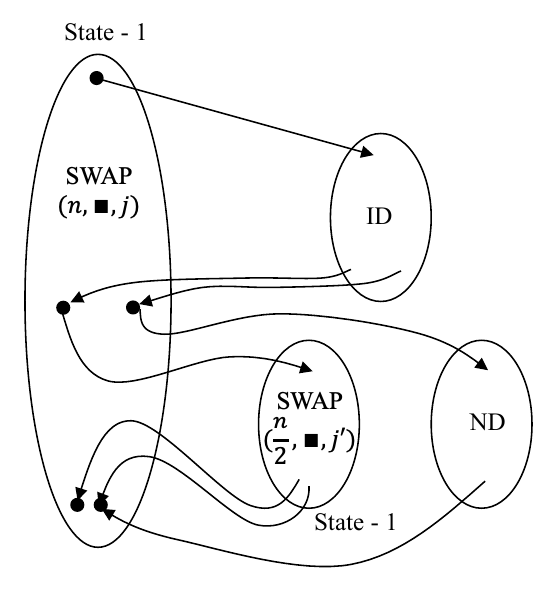}
      \vspace{-8pt}
      \caption[Case 7(a): $j$ in B and $i$ not in current range; $\textit{State}$ = 1]
        {\tabular[t]{@{}l@{}}Case 7(a): $j$ in B and $i$ not in current range;\\ $\textit{State}$ = 1\endtabular}
    \end{subfigure}
    \begin{subfigure}[t]{0.45\linewidth}
      \centering
      \includegraphics[width=0.8\linewidth]{figures/swap_Xb_0.pdf}
      \vspace{-8pt}
      \caption[Case 7(b): $j$ in B and $i$ not in current range; $\textit{State}$ = 2,3, or 4]
        {\tabular[t]{@{}l@{}}Case 7(b): $j$ in B and $i$ not in current range;\\ $\textit{State}$ = 2, 3, or 4\endtabular}
    \end{subfigure}
    \caption{The different cases of the \textit{SWAP} matrix construction, continued.}
    \label{Fi:SwapContd}
\end{figure}

\begin{figure}[tb!]
        \begin{subfigure}[t]{0.51\linewidth}
      \centering
      \includegraphics[width=\linewidth]{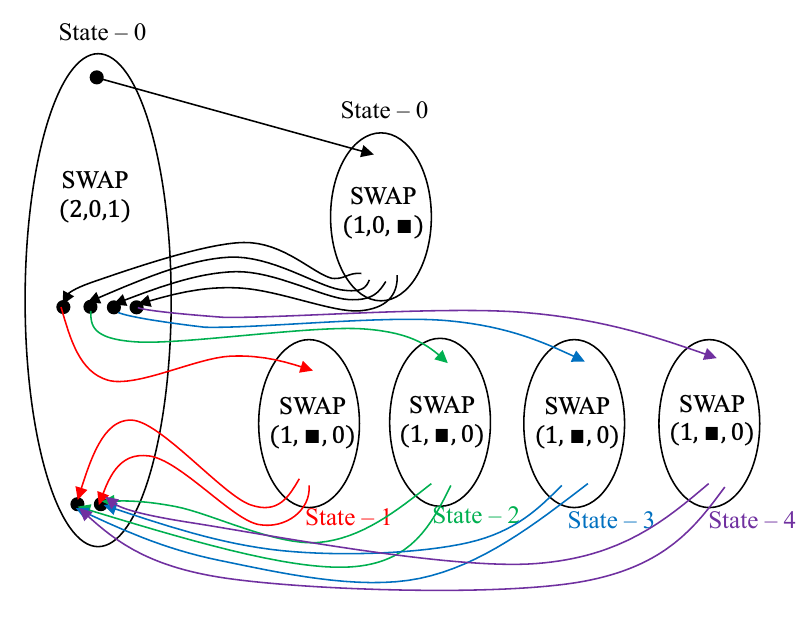}
      \vspace{-18pt}
      \caption{Base Case: $\textit{SWAP}(2,0,1)$ at level 2 (see \eqref{SwapGateTwoByTwo})}
    \end{subfigure}
    \begin{subfigure}[t]{0.45\linewidth}
      \centering
      \includegraphics[width=0.7\linewidth]{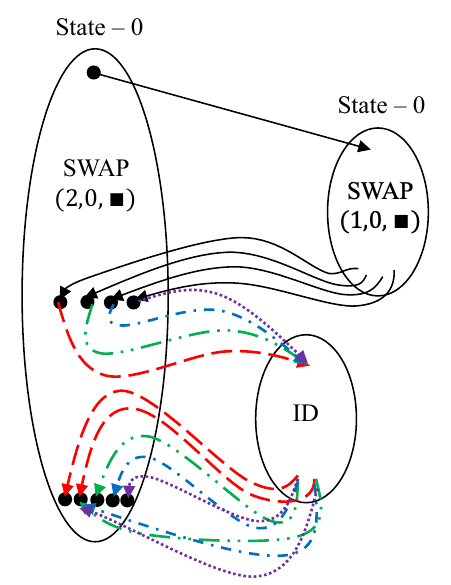}
      \vspace{-3pt}
      \caption{Base Case: $\textit{SWAP}(2,0,\blacksquare)$ at level 2
      }
    \end{subfigure}
    \begin{subfigure}[t]{0.45\linewidth}
      \centering
      \includegraphics[width=0.7\linewidth]{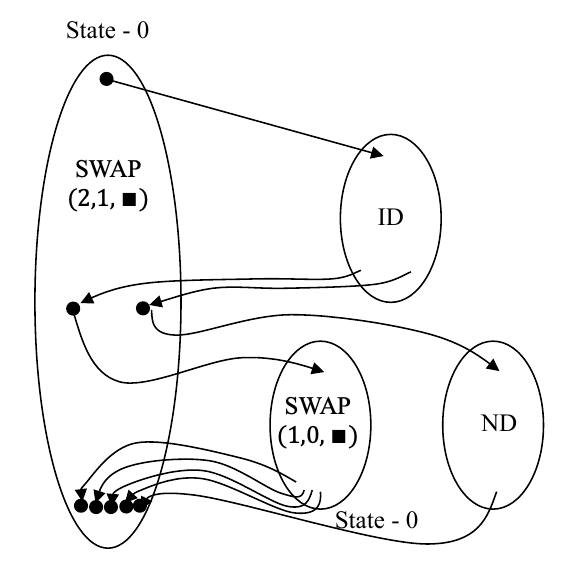}
      \vspace{-8pt}
      \caption{Base Case: $\textit{SWAP}(2,1,\blacksquare)$ at level 2}
    \end{subfigure}
    \begin{subfigure}[t]{0.45\linewidth}
      \centering
      \includegraphics[width=0.6\linewidth]{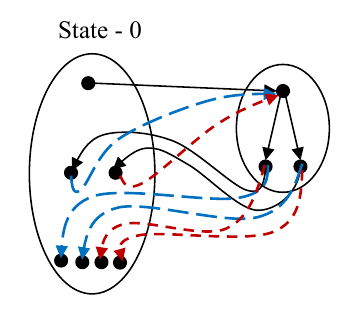}
      \vspace{-8pt}
      \caption{Base Case: $\textit{SWAP}(1,0,\blacksquare)$ at level 1; interprets control-bit ($\textit{State} = 0$)}
    \end{subfigure}
    \begin{subfigure}[t]{0.4\linewidth}
      \centering
      \includegraphics[width=0.8\linewidth]{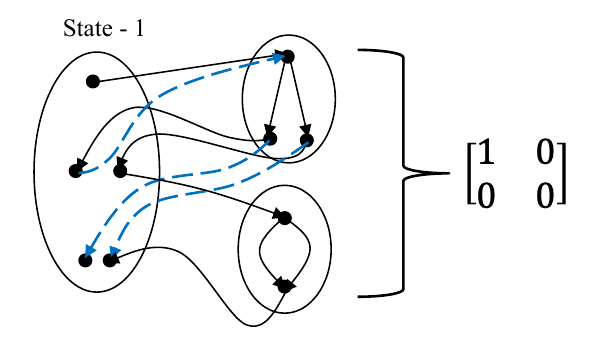}
      \vspace{-8pt}
      \caption{Base Case: $\textit{SWAP}(1,\blacksquare, 0)$ at level 1; interprets controlled-bit ($\textit{State} = 1$)}
    \end{subfigure}
    \hspace{3ex}
    \begin{subfigure}[t]{0.4\linewidth}
      \centering
      \includegraphics[width=0.8\linewidth]{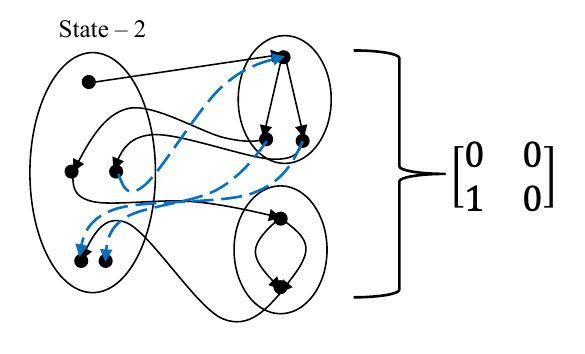}
      \vspace{-8pt}
      \caption{Base Case: $\textit{SWAP}(1,\blacksquare, 0)$ at level 1; interprets controlled-bit ($\textit{State} = 2$)}
    \end{subfigure}
    \begin{subfigure}[t]{0.4\linewidth}
      \centering
      \includegraphics[width=0.8\linewidth]{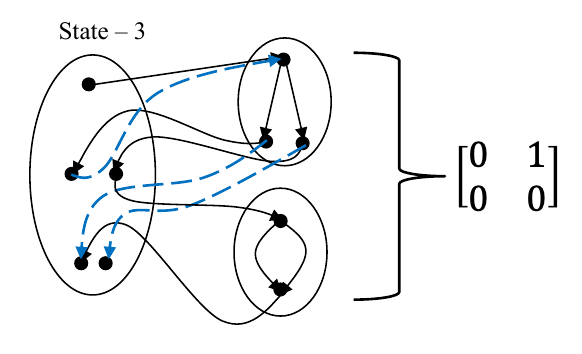}
      \vspace{-8pt}
      \caption{Base Case: $\textit{SWAP}(1,\blacksquare, 0)$ at level 1; interprets controlled-bit ($\textit{State} = 3$)}
    \end{subfigure}
    \hspace{3ex}
    \begin{subfigure}[t]{0.4\linewidth}
      \centering
      \includegraphics[width=0.8\linewidth]{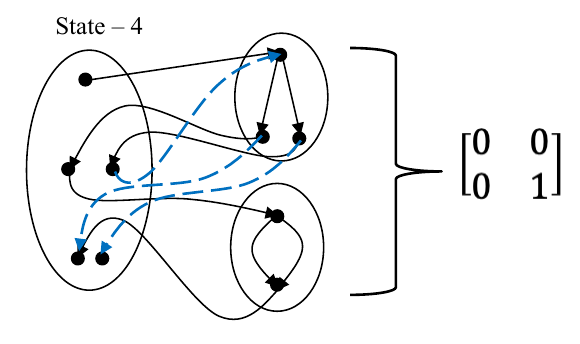}
      \vspace{-8pt}
      \caption{Base Case: $\textit{SWAP}(1,\blacksquare, 0)$ at level 1; interprets controlled-bit ($\textit{State} = 4$)}
    \end{subfigure}
    \caption{\protect \raggedright
    Base cases for the construction of the \textit{SWAP} matrix. There are three base cases at level 2 with 2 bits (i.e., 4 Boolean variables) and five base cases at level 1 with 1 bit (i.e., 2 Boolean variables).
    }
    \label{Fi:SwapContd2}
\end{figure}

\begin{figure}[tb!]
  \centering
  \begin{tabular}{c@{\hspace{8.0ex}}c}
    \includegraphics[align=c,scale=0.45]{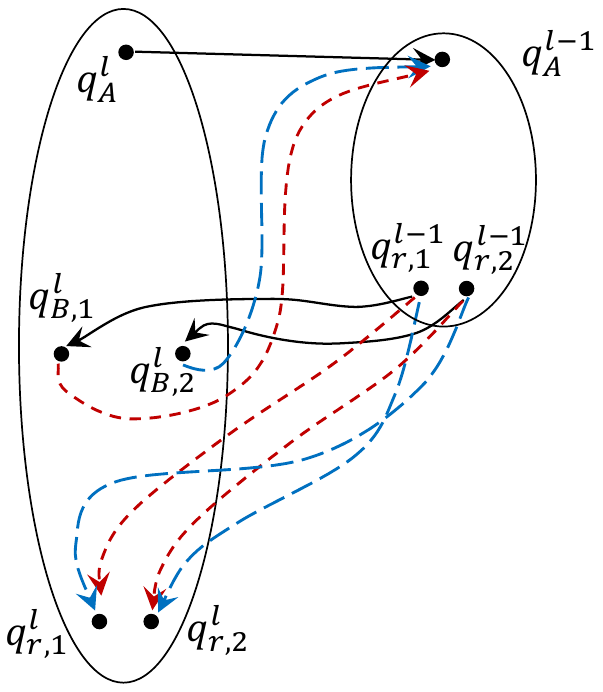}
    &
    \begin{tabular}{c}
      \includegraphics[scale=0.45]{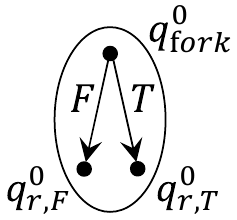}
      \\
      (b)
      \\
      \\
      \\
      \includegraphics[scale=0.45]{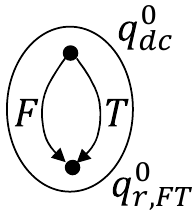}
      \\
      (c)
    \end{tabular}
    \\
    (a) &
    \\
    \\
    \multicolumn{2}{c}{
    \begin{tabular}{l|ll}
      \hline\hline
      \multirow{9}{*}{(a)} & \multirow{3}{*}{call transitions}     & $(q_A^l,\epsilon,q_A^{l-1}) \in \delta_c$ \\
                           &                                       & $(q_{B,1}^l,\epsilon,q_A^{l-1}) \in \delta_c$ \\
                           &                                       & $(q_{B,2}^l,\epsilon,q_A^{l-1}) \in \delta_c$ \\
                           \cline{2-3}
                           & \multirow{6}{*}{return transitions}   & $(q_{r,1}^{l-1},q_A^l,\epsilon,q_{B,1}^l) \in \delta_r$ \\
                           &                                       & $(q_{r,2}^{l-1},q_A^l,\epsilon,q_{B,2}^l) \in \delta_r$ \\
                           &                                       & $(q_{r,1}^{l-1},q_{B,1}^l,\epsilon,q_{r,1}^l) \in \delta_r$ \\
                           &                                       & $(q_{r,2}^{l-1},q_{B,1}^l,\epsilon,q_{r,2}^l) \in \delta_r$ \\
                           &                                       & $(q_{r,1}^{l-1},q_{B,2}^l,\epsilon,q_{r,2}^l) \in \delta_r$ \\
                           &                                       & $(q_{r,2}^{l-1},q_{B,2}^l,\epsilon,q_{r,1}^l) \in \delta_r$ \\
      \hline
      \multirow{2}{*}{(b)} & \multirow{2}{*}{internal transitions} & $(q^0_{\textrm{fork}}, F, q^0_{r,F}) \in \delta_i$ \\
                           &                      & $(q^0_{\textrm{fork}}, T, q^0_{r,T}) \in \delta_i$ \\
      \hline
      \multirow{2}{*}{(c)} & \multirow{2}{*}{internal transitions} & $(q^0_{\textrm{dc}}, F, q^0_{r,FT}) \in \delta_i$  \\
                           &                      & $(q^0_{\textrm{dc}}, T, q^0_{r,FT}) \in \delta_i$  \\
      \hline\hline
    \end{tabular}
    }
  \end{tabular}
  \caption{\protect \raggedright 
  (a) Encoding of a grouping's A-connection and B-connections as call transitions, and its return edges as return transitions of an NWA.
  The grouping is the one used to encode the family of Hadamard matrices $\HadamardFamily$.
  (b) and (c) Encoding of the two kinds of level-$0$ groupings as internal transitions of an NWA.}
  \label{Fi:CFLOBDDAsNWA}
\end{figure}

\section{Nested Words and Nested Word Automata}
\label{Se:NWADefinition}

\begin{definition}[\cite{JACM:AM09}]
\label{De:NestedWord}
A \textbf{nested word} $(w,\rightsquigarrow)$ over alphabet $\Sigma$
is an ordinary word $w \in \Sigma^*$, together with a \textbf{nesting relation}
$\rightsquigarrow$ of length $|w|$. $\rightsquigarrow$ is a collection of edges
(over the positions in $w$) that do not cross.
A nesting relation of length $l \geq 0$ is a subset of
$\{-\infty, 1, 2, \ldots, l \} \times \{1, 2, \ldots, l, +\infty\}$ such that
\begin{itemize}
  \item
    Nesting edges only go forwards: if $i \rightsquigarrow j$ then $i < j$.
  \item
    No two edges share a position: for $1 \leq i \leq l$,
    $|\{ j \mid i \rightsquigarrow j \}| \leq 1$
    and $|\{ j \mid j \rightsquigarrow i \}| \leq 1$.
  \item
    Edges do not cross: if $i \rightsquigarrow j$ and $i' \rightsquigarrow j'$,
    then one cannot have $i < i' \leq j < j'$.
\end{itemize}
When $i \rightsquigarrow j$ holds, for $1 \leq i \leq l$,
$i$ is called a \textbf{call} position; if $i \rightsquigarrow +\infty$,
then $i$ is a \textbf{pending call}; otherwise $i$ is a \textbf{matched call},
and the unique position $j$ such that $i \rightsquigarrow j$ is called
its \textbf{return successor}.
Similarly, when $i \rightsquigarrow j$ holds, for $1 \leq j \leq l$,
$j$ is a \textbf{return} position; if $-\infty \rightsquigarrow j$,
then $j$ is a \textbf{pending return}, otherwise $j$ is a \textbf{matched return},
and the unique position $i$ such that $i \rightsquigarrow j$ is called
its \textbf{call predecessor}.
A position $1 \leq i \leq l$ that is neither a call nor a return
is an \textbf{internal} position.

$\textbf{MatchedNW}$ denotes the set of nested words that have no
pending calls or returns.
$\textbf{NWPrefix}$ denotes the set of nested words that have no
pending returns.

A \textbf{nested word automaton} (NWA) $A$ is a 5-tuple $(Q, \Sigma,
q_0, \delta, F)$,
where $Q$ is a finite set of states, $\Sigma$ is a
finite alphabet, $q_0 \in Q$ is the initial state, $F \subseteq Q$ is a set
of final states, and $\delta$ is a transition relation. The transition
relation $\delta$ consists of three components, $(\delta_c, \delta_i,
\delta_r)$, where
\begin{itemize}
  \item
    $\delta_i \subseteq Q \times \Sigma \times Q$ is the transition
    relation for internal positions.
  \item
    $\delta_c \subseteq Q \times \Sigma \times Q$ is the transition
    relation for call positions.
  \item
    $\delta_r \subseteq Q \times Q \times \Sigma \times Q$ is the
    transition relation for return positions.
\end{itemize}

Starting from $q_0$, an NWA $A$ reads a nested word
$\textit{nw} = (w,\rightsquigarrow)$ from left to
right, and performs transitions (possibly non-deterministically)
according to the input symbol and $\rightsquigarrow$.
If $A$ is in state $q$ when reading input symbol $\sigma$ at position $i$ in $w$,
and $i$ is an internal or call position, $A$ makes a transition to $q'$
using $(q,\sigma,q') \in \delta_i$ or $(q,\sigma,q') \in \delta_c$, respectively.
If $i$ is a return position, let $k$ be the call predecessor of $i$,
and $q_c$ be the state $A$ was in just before the transition it made on the
$k^{\text{th}}$ symbol;
$A$ uses $(q,q_c,\sigma,q') \in \delta_r$ to make a transition to $q'$.
If, after reading $\textit{nw}$, $A$ is in a state $q \in F$,
then $A$ \textbf{accepts} $\textit{nw}$.
\end{definition}

\figref{CFLOBDDAsNWA} illustrates a schema by which a CFLOBDD can be translated to an NWA $M$.
Each matched path through the CFLOBDD corresponds to a nested word in MatchedNW for $M$. 
The matched-path principle is obeyed because of the ability of an NWA to ``peek'' at the state of the most-recent ``call'' to match a return edge with the appropriate preceding A-connection or B-connection.
All transitions taken at a level $\geq 1$ are $\epsilon$-transitions (\figref{CFLOBDDAsNWA}a).
The only transitions that consume an alphabet symbol are the $F$ and $T$ transitions of the level-$0$ fork grouping (\figref{CFLOBDDAsNWA}b) and the $F$ and $T$ transitions of the level-$0$ don't-care grouping (\figref{CFLOBDDAsNWA}c).

\begin{figure}[t!]
    \centering
    \begin{tabular}{c}
     \includegraphics[scale=0.4]{figures/walsh2_path-cropped.pdf} \\
    (a) \\
    \\
    \includegraphics[scale=0.46]{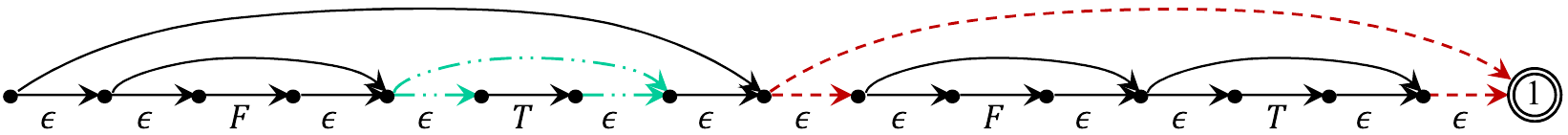} \\
    (b)
    \end{tabular}
    \vspace{-2ex}
    \caption{\protect \raggedright 
    (a) The CFLOBDD for the Hadamard matrix $H_4$, with the variable ordering $\langle x_0, y_0, x_1, y_1 \rangle$ (repeated from \figref{walsh4_0_3_path}).
    The matched path for $[x_0 \mapsto F, y_0 \mapsto T, x_1 \mapsto F, y_1 \mapsto T]$, which corresponds to $H_4[0,3]$ (with value $1$), is shown in bold.
    (b) The nested word for the path for  $[x_0 \mapsto F, y_0 \mapsto T, x_1 \mapsto F, y_1 \mapsto T]$.
    }
    \label{Fi:NestedWordExample}
\end{figure}

\figref{NestedWordExample} shows the nested word that corresponds to the path for the assignment  $[x_0 \mapsto F, y_0 \mapsto T, x_1 \mapsto F, y_1 \mapsto T]$ in the CFLOBDD for the Hadamard matrix $H_4$, with the variable ordering $\langle x_0, y_0, x_1, y_1 \rangle$.
\section{Time Complexity of Reduce}
\label{Se:CostOfReduce}

In this section, we give a bound on the time complexity of the call on Reduce (\algref{Reduce}) in \lineref{BAAR:CallReduce} of BinaryApplyAndReduce (\algref{BinaryApplyAndReduce}).
Let $C$ be the level-$l$ proto-CFLOBDD on which Reduce is invoked, and $C'$ be the level-$l$ proto-CFLOBDD that is returned.
The accounting is somewhat subtle because of three factors
\begin{itemize}
  \item
    hash-consing of groupings
  \item
    function caching of calls to Reduce and other functions
  \item
    for $C' = \textit{Reduce}(C, \red)$ (where $\red$ is some reduction tuple), for their respective top-level groupings, $g'$ and $g$, it is always the case that $|g'| \leq |g|$, yet $|C'|$ and $|C|$ have no fixed relationship:
    $|C'| < |C|$, $|C'| = |C|$ and $|C'| > |C|$ are all possible.\footnote{
\changed{
      \label{Footnote:LocalGlobalReduceProperty}
      We refer to the property that $|g'| \leq |g|$ as the \emph{local-reduction property}, in contradistinction to the absence of a \emph{global-reduction property} for $|C'|$ and $|C|$.
}
    }
\end{itemize}
The size measure $|\cdot|$ counts vertices and edges (and, for proto-CFLOBDDs, groupings---with no double-counting of shared groupings due to hash-consing).

In this section, we show that the time complexity of Reduce is bounded by $O(|C| \times {|C'|})$, where when counting the time for operations, we consider the cost of function-caching operations (lookup and update) to be O(1).

We illustrate the point about there being no fixed relationship between $C'$ and $C$ with the following example, which shows that when Reduce is called on a proto-CFLOBDD, it can lead to both
(i) less sharing of proto-CFLOBDDs in the resultant proto-CFLOBDD, and
(ii) more sharing of proto-CFLOBDDs than in the input proto-CFLOBDD.

\begin{figure}[tb!]
    \centering
    \begin{subfigure}[t]{0.35\linewidth}
    \centering
        \includegraphics[width=0.9\linewidth]{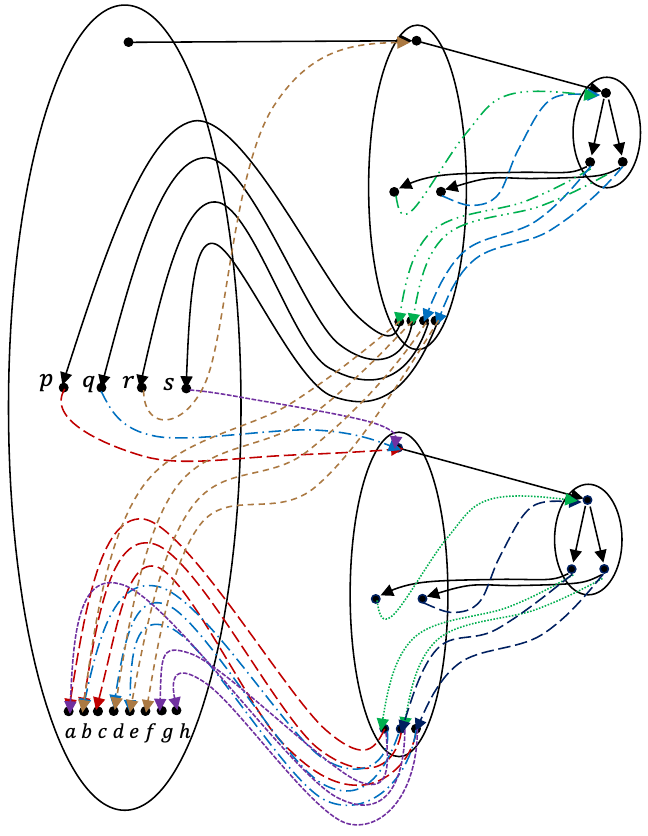}
    \caption{Example $C$}
    \label{Fi:reduce-c}
    \end{subfigure}
    \begin{subfigure}[t]{0.55\linewidth}
    \centering
        \includegraphics[width=0.9\linewidth]{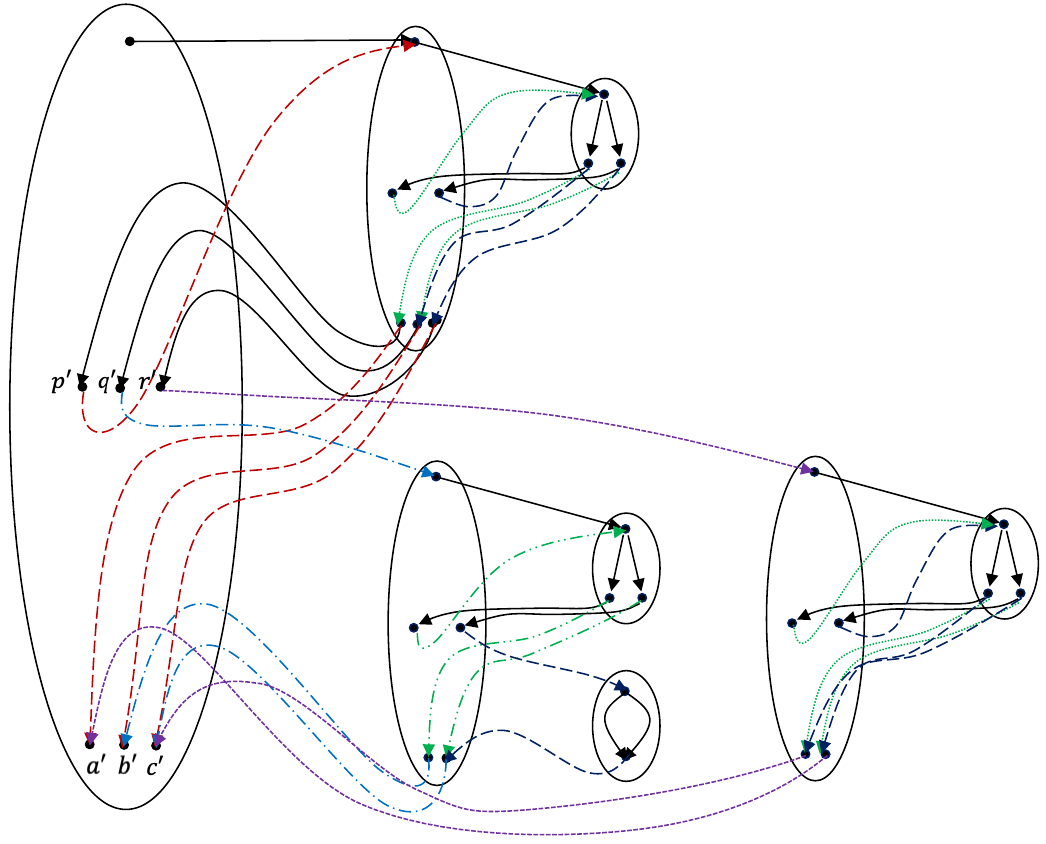}
    \caption{Example $C'$}
    \label{Fi:reduce-c-prime}
    \end{subfigure}
    \caption{
    $C' = \text{Reduce}(C, [1,2,3,3,3,3,3,3])$.
    The colors of the edges to proto-CFLOBDDs in $C'$ correspond to the edges to the originating proto-CFLOBDDs in $C$.
    }
    \label{Fi:ReduceTwoLevel}
\end{figure}

\begin{Exa}\label{Exa:ReduceTwoLevel}
Consider the level-$2$ proto-CFLOBDD $C$ shown in~\figref{reduce-c}, which has four middle vertices ($p$, $q$, $r$, $s$) and eight exit vertices ($a$, $b$, $c$, $d$, $e$, $f$, $g$, $h$).
The A-connection of $C$ ($C_A$) is a proto-CFLOBDD at level $1$.
$C_A$ partitions the strings $\{0,1\}^2$ into $P1$ = $[\{00\}, \{01\},\{10\},\{11\}]$,
i.e., $C_A$ has four exit vertices and thus $C$ has four middle vertices and four B-connections ($C_{B1}$, $C_{B2}$, $C_{B3}$, $C_{B4}$).
$C_{B1}$ partitions the strings $\{0,1\}^2$ into $P2$ = $[\{00\}, \{01, 10\}, \{11\}]$ and its exit vertices are connected to the exit vertices of $C$ in the order ($a, b, c$).
$C_{B2}$ and $C_{B4}$ both equal $C_{B1}$, but for $C_{B2}$ and $C_{B4}$ the three exit vertices are connected to the exit vertices of $C$ in the orders ($b, d, e$), and ($g, a, h$), respectively.
Finally, $C_{B3}$ equals $C_A$, but for $C_{B3}$ the four exit vertices are connected to $C$'s exit vertices in the order ($b, d, e, f$).
Consequently, $C$ partitions the strings $\{0,1\}^4$ into $[\{0000,1101,1110\}$, $\{0001,0010,0100,1000\}$,$\{0011\}$,$\{0101,0110,1001\}$,$\{0111,1010\}$,$\{1011\}$,$\{1100\}$,$\{1111\}]$.

Let $C' = \text{Reduce}(C, [1,2,3,3,3,3,3,3])$, i.e., the vertices $c,d,e,f,g,h$ are all mapped to exit vertex $c$.
$C'$ is shown in~\figref{reduce-c-prime}.
$C'$ has only three exit vertices ($a'$, $b'$, $c'$).
Consider how $C$ is ``reduced'' to $C'$, which partitions the strings $\{0,1\}^4$ into $[\{0000$,$1101$,$1110\}$, $\{0001$,$0010$,$0100$,$1000\}$,$\{0011$,$0101$,$0110$,$1001$,$0111$,$1010$,$1011$,$1100$,$1111\}]$.
\begin{itemize}
  \item
    $C_{B1}$'s exit vertices are mapped to ($a,b,c$), which leads to the call $\text{Reduce}(C_{B1},[1,2,3])$ and thus $C_{B1}$ does not change;
    that is, the first B-connection of $C'$, $C'_{B1}$, is equal to $C_{B1}$.
    Its exit vertices are connected to the exit vertices ($a'$, $b'$, $c'$) of $C'$.
    (As we will see below, $C'_{B1}$ is also equal to $C'_{A}$, the A-connection of $C'$.)
  \item
    $C_{B2}$'s exit vertices are mapped to ($b, c, c$), which leads to the call $\text{Reduce}(C_{B2},[1,2,2])$.
    Therefore, the second and third exit vertices are folded together, and this collapse affects the structure of the level-$0$ groupings as well, thereby creating a new proto-CFLOBDD, $C'_{B2}$, which partitions the strings $\{0,1\}^2$ into $[\{00\},\{01,10,11\}]$.
    The exit vertices of $C'_{B2}$ are mapped to exit vertices ($b',c'$) of $C'$.
  \item
    $C_{B3}$'s exit vertices are mapped to ($b,c,c,c$), which leads to the call $\text{Reduce}(C_{B3},[1,2,2,2])$.
    Thus, the exit vertices of $C_{B3}$ are collapsed to only two exit vertices, and the resulting proto-CFLOBDD partitions the strings $\{0,1\}^2$ into $[\{00\},\{01,10,11\}]$, which are mapped to the exit vertices $(b',c')$.
    This result is identical to the result from $\text{Reduce}(C_{B2},[1,2,2])$, and thus $C'$ has only one copy of $C'_{B2}$ with its exit vertices mapped to exit vertices ($b',c'$) of $C'$.
  \item
    $C_{B4}$'s exit vertices are mapped to ($c,a,c$), which leads to the call $\text{Reduce}(C_{B4},[1,2,1])$---folding together the first and third exit vertices.
    This call creates yet another new proto-CFLOBDD, $C'_{B3}$, which partitions the strings $\{0,1\}^2$ into $[\{00,11\},\{01,10\}]$.
    The exit vertices of $C'_{B3}$ are mapped to exit vertices ($c', a'$) of $C'$.
  \item
    Because the calls $\text{Reduce}(C_{B2},[1,2,2])$ and $\text{Reduce}(C_{B3},[1,2,2,2])$ produce the same proto-CFLOBDD with the same return edges in $C'$---and because the calls on $\text{Reduce}$ arose in the B-connection of the same grouping in $C$---middle vertices ($q$, $r$) of $C$ are folded together.
    This collapsing is propagated to the A-connection of $C$ by the call $\text{Reduce}(C_A,[1,2,2,3])$.
    The resulting proto-CFLOBDD has three exit vertices that partition the strings $\{0,1\}^2$ into $[\{00\},\{01,10\},\{11\}]$.
    This proto-CFLOBDD is identical to $C'_{B1}$---although their exit vertices are mapped to different vertices of $C'$:
    the exit vertices of $C'_{B1}$ are connected to exit vertices ($a'$, $b'$, $c'$) of $C'$, whereas the exit vertices of $C'_{A}$
    are connected to middle vertices ($p'$, $q'$, $r'$) of $C'$.
\end{itemize}

We see from this example that a call $C' = \text{Reduce}(C,\red)$ can cause entirely new proto-CFLOBDDs to be created in $C'$;
proto-CFLOBDDs that occur in $C$ to occur in entirely different places in $C'$;
proto-CFLOBDDs that occur in $C$ to not occur in $C'$; and
two or more proto-CFLOBDDs with identical sets of return edges to be combined into just a single occurrence when they arise in the B-connection of the same enclosing grouping.
This example highlights the challenges for establishing a bound on the time complexity of Reduce---namely, both expansion and compaction of proto-CFLOBDDs can occur.
\end{Exa}

Because of the effects illustrated in \exref{ReduceTwoLevel}, the cost-bound argument we give is slightly indirect.
At a high-level, it is structured as follows:
we establish a relationship between $\text{Reduce}(C,\red)$ and that of a certain call on $\text{PairProduct}$ (\theoref{LanguagePartitions}).
This approach is beneficial because we already know a time bound on $\text{PairProduct}$ in terms of the product of the sizes of $\text{PairProduct}$'s arguments (which is expressed more precisely in \footnoteref{ProductConstructionCost}).
\theoref{ReduceCostBound} uses that bound to give an asymptotic bound on the time to perform $\text{Reduce}(C,\red)$ in terms of the product of the sizes of its input and output CFLOBDDs.

\begin{theorem}\label{The:LanguagePartitions}
  Let $C$ and $C'$ be two proto-CFLOBDDs such that $C' = \textit{Reduce}(C,\red)$ for some reduction tuple $\red$.
  Then $C = \text{PairProduct}(C, C')$.\footnote{
    To reduce clutter, we ignore the tuple of pairs of exit vertices that is returned by $\text{PairProduct}$ (\algref{PairProduct}), except for two places in \theoref{ReduceCostBound}.
  }
\end{theorem}
\begin{Proof}
We know that each proto-CFLOBDD at level $k$ with $m$ exit vertices partitions the space of strings $\{0,1\}^{2^k}$ into $m$ groups (see \sectref{ADenotationalSemantics}).
We make use of the properties of Reduce and PairProduct with respect to such partitions:
\begin{enumerate}
  \item 
    \label{It:ReduceCoarsensPartitions}
    For every proto-CFLOBDD $X$ and reduction tuple $\red$, Reduce(X, \red) produces a coarser partition of the exit languages of $X$ defined by the mapping of $\red$ to $X$'s exit vertices.
  \item
    \label{It:PairProductAsLeastCoarsestRefinement}
    For every pair of proto-CFLOBDDs $X$ and $Y$, $\PairProduct(X,Y)$ produces the coarsest partition that refines both of the partitions corresponding to $X$ and $Y$.
\end{enumerate}
In particular, we consider the two-statement sequence
\begin{align}
  C' & = \text{Reduce}(C,\red);    \label{Eq:TwoStatementReduce} \\
  \tilde{C} & = \text{PairProduct}(C, C');  \label{Eq:TwoStatementPairProduct}
\end{align}
$C'$ created in \eqref{TwoStatementReduce} represents a coarser partition of the strings in $\{ 0, 1 \}^n$ than $C$'s partition.
Because $C'$ represents a coarser partition than $C$, the proto-CFLOBDD $\tilde{C}$ created in \eqref{TwoStatementReduce} represents the same partition as $C$, and thus $\tilde{C}$ and $C$ are equal by canonicity.  $~~\QED$
\end{Proof}

\medskip
In essence, \theoref{LanguagePartitions} shows that $\text{PairProduct}(C, C')$ ``undoes'' all of the actions taken during $\text{Reduce}(C,\red)$.

\begin{Exa}\label{Exa:PairProductOfReduce}
Consider the result $\tilde{C}$ = PairProduct($C$, $C'$) for $C$ and $C'$ from \exref{ReduceTwoLevel}.
PairProduct($C$, $C'$) is first called on the A-connections of the respective outermost groupings, followed by calls on B-connections.
\begin{itemize}
  \item
    PairProduct($C_A$, $C'_A$) produces a proto-CFLOBDD whose exit vertices represent the coarsest partition of $\{0,1\}^2$ that refines both of the partitions corresponding to the exit vertices of $C_A$ and $C'_A$ (i.e., $[\{00\},\{01\},\{10\},\{11\}]$ and $[\{00\},\{01,10\},\{11\}]$, respectively). Hence, the new proto-CFLOBDD $\tilde{C}_A$ is constructed such that the exit vertices of $\tilde{C}_A$ represent the partition $[\{00\},\{01\},\{10\},\{11\}]$. PairProduct also returns a tuple of index-pairs indicating the B-connections on which PairProduct needs to be called. In this case, the returned tuple is $[[1,1],[2,2],[3,2],[4,3]]$. Mapping this result to the middle vertices of $C$ and $C'$, we obtain $[[p,p'],[q,q'],[r,q'],[s,r']]$. These pairs are processed left-to-right, generating calls to PairProduct on B-connections.
  \item
    PairProduct($C_{B1}$, $C'_{B1}$) (corresponding to the pair $[p,p']$) creates proto-CFLOBDD $\tilde{C}_{B1}$ with three exit vertices corresponding to the partition $[\{00\},\{01,10\},\{11\}]$, returning the tuple $[[1,1],[2,2],[3,3]]$. Mapping this result to the exit vertices of $C$ and $C'$, the initial (as-yet incomplete) sequence of exit vertices of $\tilde{C}$ would be $[[a,a'],[b,b'],[c,c']]$.
  \item
    PairProduct($C_{B2}$, $C'_{B2}$) (corresponding to the pair $[q,q']$) creates proto-CFLOBDD $\tilde{C}_{B2}$ with three exit vertices corresponding to the partition $[\{00\},\{01,10\},\{11\}]$ (the same as $\tilde{C}_{B1}$), returning the tuple $[[1,1],[2,2],[3,2]]$.
    Mapping this result to the exit vertices of $C$ and $C'$,
    the exit vertices of $\tilde{C}$ would be extended to be $[[a,a'], [b,b'], [c,c'],[d,c'],[e,c']]$, and the exit vertices of $\tilde{C}_{B2}$ would be connected to $[b,b']$, $[d,c']$, and $[e,c']$. 
  \item
    PairProduct($C_{B3}$, $C'_{B2}$) (corresponding to the pair $[r,q']$) creates proto-CFLOBDD $\tilde{C}_{B3}$ with four exit vertices corresponding to the partition $[\{00\},\{01\},\{10\},\{11\}]$ (the same as $\tilde{C}_A$), returning the tuple $[[1,1],[2,2],[3,2],[4,2]]$.
    Mapping this result to the exit vertices of $C$ and $C'$, the exit vertices of $\tilde{C}$ would be extended to be $[[a,a'], [b,b'], [c,c'],[d,c'],[e,c'],[f,c']]$, and the exit vertices of $\tilde{C}_{B3}$ would be connected to $[b,b']$, $[d,c']$, $[e,c']$, and $[f,c']$.
  \item
    PairProduct($C_{B4}$, $C'_{B3}$) (corresponding to the pair $[s,r']$) creates proto-CFLOBDD $\tilde{C}_{B4}$ with three exit vertices corresponding to the partition $[\{00\},\{01, 10\},\{11\}]$ (again, the same as $\tilde{C}_{B1}$), returning the tuple $[[1,1],[2,2],[3,1]]$. Mapping this result to the exit vertices of $C$ and $C'$, the final sequence of exit vertices of $\tilde{C}$ would be set to $[[a,a'], [b,b'], [c,c'],[d,c'],[e,c'],[f,c'],[g,c'],[h,c']]$, and the exit vertices of $\tilde{C}_{B4}$ would be connected to $[g,c']$, $[a,a']$, and $[h,c']$.
\end{itemize}
$\tilde{C}$ has eight exit vertices, four middle vertices, and each of the A-connections and B-connections of $\tilde{C}$ and $C$ are connected to isomorphic proto-CFLOBDDs.
Consequently, $\tilde{C}$ = $C$ up to isomorphism.
Because hash-consing enforces that the members of each isomorphism class have a unique representation in memory, $\text{PairProduct}(C,C')$ would return a pointer to $C$.
%
\end{Exa}

\begin{lemma}\label{Lem:SizeOfGrouping}(Local-Reduction Property).
    Let $C$ and $C'$ be two proto-CFLOBDDs such that $C' = \textit{Reduce}(C,\red)$ for some reduction tuple $\red$, and let $g$ and $g'$ be their respective outermost groupings.
    Then $|g'| \leq |g|$.
\end{lemma}
\begin{Proof}
    The size of a grouping is equal to the number of entry, middle, and exit vertices, plus the number of A-connection and B-connection edges and return edges.
    Because $g'$ is obtained by reducing $g$ with respect to $\red$, the number of exit vertices in $g'$ can be no more than the number in $g$.
    Moreover, $\textit{Reduce}$ can never cause there to be more B-connections in $g'$ than in $g$, but it can cause some B-connections of $g$ to be folded together in $g'$;
    thus, the number of middle vertices in $g'$ can be no more than the number in $g$.
    Similarly, for the A-connection of $g'$ and all the B-connections of $g'$, the number of return edges can be no more than the number of return edges in the corresponding A-/B-connections in $g$.
    Consequently, $|g'| \leq |g|$.   $~~\QED$
\end{Proof}

\begin{Exa}\label{Exa:LocalReduceTwoLevel}
  Consider the proto-CFLOBDDs $C$ and $C'$ from \figref{ReduceTwoLevel}.
  The size of the level-$2$ grouping $g$ equals 1 (entry-vertex) + 4 (middle vertices) + (1 + 3) ($1^{\textit{st}}$ B-connection) + (1 + 3)($2^{\textit{nd}}$ B-connection) + (1 + 4) ($3^{\textit{rd}}$ B-connection) + (1 + 3)($4^{\textit{th}}$ B-connection) + 8 (exit vertices) = 30.

  The size of $g'$ equals 1 (entry-vertex) + 3 (middle vertices) + (1 + 3) ($1^{\textit{st}}$ B-connection) + (1 + 2)($2^{\textit{nd}}$ B-connection) + (1 + 2) ($3^{\textit{rd}}$ B-connection) + 3 (exit vertices) = 17.

  Thus, $|g'| \leq |g|$, whereas $68 = |C'| > |C| = 66$.
\end{Exa}

We now turn to the question of bounding the time complexity of $\text{Reduce}$.
Whereas \theoref{LanguagePartitions} showed that $\text{PairProduct}(C, C')$ ``undoes'' all of the actions taken during $\text{Reduce}(C,\red)$, \theoref{ReduceCostBound} shows that 
for every action in $\text{Reduce}(C,\red)$, there is an action of at least the same cost in $\text{PairProduct}(C, C')$.
Consequently, the time to perform $\text{Reduce}(C)$ is bounded by the time that it would take to perform $\text{PairProduct}(C, C')$, which is $O(|C| \times |C'|)$.

\begin{theorem}\label{The:ReduceCostBound}
  Let $C$ and $C'$ be two proto-CFLOBDDs such that $C' = \textit{Reduce}(C,\red)$ for some reduction tuple $\red$. 
  Let $\Cost(\text{Reduce}(C))$ and $\Cost(PP(C, C'))$ denote the costs of $\text{Reduce}(C,\red)$ and $\PairProduct(C,C')$, respectively.
  Then $\Cost(Reduce(C)) \leq \Cost(PP(C, C'))$.
\end{theorem}
\begin{Proof}
    The proof is by induction on the level $k$ of proto-CFLOBDDs $C$ and $C'$.

\begin{BaseCase} ($k = 0$)
        Consider the following table,
        \[
        \begin{array}{c|c|c|c}
             C & \red & C' = \text{Reduce}(C,\red) & \text{PairProduct}(C, C')\\
             \hline
             \text{ForkGouping} & [1,1] & \text{DontCareGrouping} & [\text{ForkGrouping}, ([1, 1], [2, 1])]\\
             \text{DontCareGrouping} & [1]  & \text{DontCareGrouping} & [\text{DontCareGrouping}, ([1, 1])]\\
             \text{ForkGouping} & [1,2]  & \text{ForkGouping} & [\text{ForkGouping}, ([1, 1], [2, 2])]\\
             \text{DontCareGrouping} & --  & \text{ForkGouping} & \text{Not Applicable}\\
        \end{array}
        \]
        The last line in the table cannot arise because there is no reduction tuple that can be used to reduce a DontCareGrouping to a Fork Grouping.        
        In each of the other three cases in the table, PairProduct($C, C')$ returns a tuple that has $C$ as the first component.

        \hspace{1.5ex}
        Moreover, the results produced by Reduce($C$) and PairProduct($C, C'$) are of constant size, and could be implemented by table lookup.
        The return value from PairProduct($C, C'$) is larger than the return value from Reduce($C,\red$), which justifies saying that $\Cost(Reduce(C)) \leq \Cost(PP(C, C'))$.
\end{BaseCase}

\begin{InductionStep}  

\textit{Induction Hypothesis:}
Assume that for all level-$k$ proto-CFLOBDDs $C'_k$ and $C_k$ for which $C'_k = \textit{Reduce}(C_k,\red)$, for some reduction tuple $\red$,
$\Cost(Reduce(C_k)) \leq \Cost(PP(C_k, C'_k))$.

\hspace{1.5ex}
Consider two level-$k\text{+}1$ proto-CFLOBDDs, $C'_{k+1}$ and $C_{k+1}$, such that  $C'_{k+1} = \text{Reduce}(C_{k+1}, \red)$.
The proof breaks down into the following three cases:
\begin{description}

  \item [(i) A-connections.]
    PairProduct is first called recursively on $C_{k+1}.A$ and $C'_{k+1}.A$---i.e., the level-$k$ A-connections of $C_{k+1}$ and $C'_{k+1}$, respectively.
    By the construction of $C'_{k+1}$ from $C_{k+1}$, we know that $C'_{k+1}.A = \text{Reduce}(C_{k+1}.A, \red_A)$ for some reduction tuple $\red_A$.
    Thus, by the induction hypothesis,
    \begin{equation}
      \label{Eq:CostInequalityA}
      \Cost(Reduce(C_{k+1}.A)) \leq \Cost(PP(C_{k+1}.A, C'_{k+1}.A)).
    \end{equation}

  \item [(ii) B-connections.]
    The return value from the call on $\text{PairProduct}(C_{k+1}.A, C'_{k+1}.A)$ considered in the previous case is actually a tuple $[\tilde{C}_{k+1}.A, \textit{midVertexPairs}]$.
    By the construction of $C'_{k+1}$ from $C_{k+1}$, we know that $C'_{k+1}.A = \text{Reduce}(C_{k+1}.A, \red_A)$ for some reduction tuple $\red_A$, and thus by \theoref{LanguagePartitions}, $\tilde{C}_{k+1}.A = C_{k+1}.A$.

    \hspace{1.5ex}
    For every $(i,j) \in \textit{midVertexPairs}$, PairProduct is called recursively on $C_{k+1}.B[i]$ and $C'_{k+1}.B[j]$, which are level-$k$ proto-CFLOBDDs.
    To be able to invoke the induction hypothesis, we must establish that $C'_{k+1}.B[j] = \textit{Reduce}(C_{k+1}.B[i], \red_{B[i]})$, for some $\red_{B[i]}$.

    \hspace{1.5ex}
    We now consider the meaning of the pairs $(i,j) \in \textit{midVertexPairs}$ from the standpoint of the language partitions used in \theoref{LanguagePartitions}.
    Because $C_{k+1}$ represents a finer partition of the strings in $\{0, 1\}^{2^{k+1}}$ than $C'_{k+1}$, the $i^{\textit{th}}$ exit vertex of $C_{k+1}.A$ represents a finer partition of the strings in $\{0, 1\}^{2^k}$ than the $j^{\textit{th}}$ exit vertex of $C'_{k+1}.A$.
    Thus, in general, there can be multiple exit vertices $i_1, i_2, \ldots, i_p$ of $C_{k+1}.A$ whose language partitions were combined to create the language partition of the $j^{\textit{th}}$ exit vertex of $C'_{k+1}.A$.

    \hspace{1.5ex}
    Because these vertices are exit vertices of A-connections, we can equivalently refer to the set $\{ i_1, i_2, \ldots, i_p \}$ of middle vertices of $C_{k+1}$ and the $j^{\textit{th}}$ middle vertex of $C'_{k+1}$.
    The reason this combining of languages took place during $\text{Reduce}(C_{k+1},\red)$ can only be because there were calls on $\text{Reduce}(C_{k+1}.B[i_1],\red_1)$, $\text{Reduce}(C_{k+1}.B[i_2],\red_2)$, $\ldots$, $\text{Reduce}(C_{k+1}.B[i_p],\red_p)$, for which the results were all equal to $C'_{k+1}.B[j]$.
    (The fifth bullet point of \exref{ReduceTwoLevel} illustrates how calls to $\text{Reduce}$ on two different B-connections in the same grouping yield the same result, which folds together two middle vertices of the grouping---thereby unioning their language partitions in the proto-CFLOBDD returned by $\text{Reduce}$.)
    Consequently, by the induction hypothesis,
    \begin{equation}
      \label{Eq:CostInequalities}
      \begin{split}      
      \Cost(Reduce(C_{k+1}.B[i_1])) & \leq \Cost(PP(C_{k+1}.B[i_1], C'_{k+1}.B[j])) \\
      \Cost(Reduce(C_{k+1}.B[i_2])) & \leq \Cost(PP(C_{k+1}.B[i_2], C'_{k+1}.B[j])) \\
      \ldots \\
      \Cost(Reduce(C_{k+1}.B[i_p])) & \leq \Cost(PP(C_{k+1}.B[i_p], C'_{k+1}.B[j]))
      \end{split}
    \end{equation}
    Let $e_A$ denote the number of exit vertices of $C_{k+1}.A$ (which is also the number of middle vertices of $C_{k+1}$).
    These inequalities can be expressed more succinctly by observing that for each index $i$, $1 \leq i \leq e_A$ on the left-hand side (corresponding to an A-connection language-partition of $C_{k+1}.A$), there is a unique $j$ to use on the right-hand side of the inequality.
    (Index $j$ corresponds to the coarsened A-connection language-partition of $C'_{k+1}.A$.)
    Let \emph{$\text{reductum}$} denote this index map: i.e., $j = \text{reductum}(i)$.
    We can now rewrite \eqref{CostInequalities} as
    \begin{equation}
      \label{Eq:CostInequalityB}
      \Cost(Reduce(C_{k+1}.B[i])) \leq \Cost(PP(C_{k+1}.B[i], C'_{k+1}.B[\text{reductum}(i)]))
    \end{equation}

  \item [(iii) Overall cost.]
    Let $g'$ and $g$ denote the outermost groupings (at level $k+1$) of $C'_{k+1}$ and $C_{k+1}$, respectively.
    Reduce and PairProduct each make a call on \texttt{RepresentativeGrouping} at the end of their computations to hash-cons the outermost grouping that has been constructed.
    The time complexity of a call on \texttt{RepresentativeGrouping} is dominated by the cost of computing the grouping's hash value, and thus the costs in Reduce and PairProduct are linear in $|g'|$ and $|g|$, respectively.
    By \lemref{SizeOfGrouping}, we know that $|g'| \leq |g|$, and thus the cost of the call on \texttt{RepresentativeGrouping} in Reduce is no more than the cost of the call in PairProduct.

    \hspace{1.5ex}
    Finally, using \lemref{SizeOfGrouping} and \eqrefs{CostInequalityA}{CostInequalityB}, we obtain the desired result:
    \begin{align*}
      \Cost(Reduce(C_{k+1}))
          & = |g'| + \sum_{i=1}^{e_A} \Cost(Reduce(C_{k+1}.B[i])) + \Cost(Reduce(C_{k+1}.A)) \\
          & \leq |g| + \sum_{i=1}^{e_A} \Cost(Reduce(C_{k+1}.B[i])) + \Cost(Reduce(C_{k+1}.A)) \\
          & = |g| + \sum_{i=1}^{e_A} \Cost(PP(C_{k+1}.B[i], C'_{k+1}.B[\text{reductum}(i)])) + \Cost(PP(C_{k+1}.A, C'_{k+1}.A)) \\
          & = \Cost(PP(C_{k+1}, C'_{k+1})).
    \end{align*}
\end{description}
$~~\QED$
\end{InductionStep}
\end{Proof}

\end{document}